\definecolor{linkcolor}{rgb}{0.7,0.8,0.6}
\let\oldlambda\lambda
\renewcommand\lambda{\m{\oldlambda}}
\let\oldalpha\alpha
\renewcommand\alpha{\m\oldalpha}
\let\oldbeta\beta
\renewcommand\beta{\m\oldbeta}
\let\oldmu\mu
\renewcommand\mu{\m{\oldmu}}
\let\oldomega\omega
\renewcommand\omega{\m\oldomega}
\let\oldtau\tau
\renewcommand\tau{\m\oldtau}
\let\oldSigma\Sigma
\renewcommand\Sigma{\m\oldSigma}
\let\oldDelta\Delta
\renewcommand\Delta{\m\oldDelta}
\let\oldbot\bot
\renewcommand\bot{\m\oldbot}
\let\oldiota\iota
\renewcommand\iota{\m\oldiota}
\let\oldtop\top
\renewcommand\top{\m\oldtop}
\let\oldpi\pi
\renewcommand\pi{\m\oldpi}
\let\paragraphsign\S
\let\S\@undefined
\let\oldnot\not
\renewcommand\not[1]{\m{\oldnot{#1}}}
\let\oldsim\sim
\renewcommand\sim{\m{\oldsim}}
\let\oldvec\vec
\renewcommand\vec[1]{\m{\oldvec{#1}}}
\let\oldcheck\check
\renewcommand\check[1]{\m{\oldcheck{#1}}}
\let\oldCheck\Check
\renewcommand\Check[1]{\m{\oldCheck{#1}}}
\let\oldbar\bar
\renewcommand\bar[1]{\m{\oldbar{#1}}}
\let\oldwidehat\widehat
\renewcommand\widehat[1]{\m{\oldwidehat{#1}}}
\let\oldHat\Hat
\renewcommand\Hat[1]{\m{\oldHat{#1}}}
\let\oldBox\Box
\renewcommand\Box{\m{\oldBox}}
\let\olddots\dots
\renewcommand\dots{\m{\olddots}}
\let\oldvdots\vdots
\renewcommand\vdots{\m{\oldvdots}}
\let\oldddots\ddots
\renewcommand\ddots{\m{\oldddots}}
\let\oldmathcal\mathcal
\renewcommand\mathcal[1]{\m{\oldmathcal{#1}}}
\let\oldboldsymbol\boldsymbol
\renewcommand\boldsymbol[1]{\m{\oldboldsymbol{#1}}}
\let\originalleft\left
\let\originalright\right
\renewcommand{\left}{\mathopen{}\mathclose\bgroup\originalleft}
\renewcommand{\right}{\aftergroup\egroup\originalright}
\theoremstyle{definition}
\newtheorem{theorem}{Theorem}[section]
\newtheorem{lemma}[theorem]{Lemma}
\newtheorem{corollary}[theorem]{Corollary}
\newtheorem{proposition}[theorem]{Proposition}
\newtheorem{conjecture}[theorem]{Conjecture}
\newtheorem{para}[theorem]{\paragraphsign}
\newtheorem{example}[theorem]{Example}
\newtheorem{definition}[theorem]{Definition}
\newtheorem{notation}[theorem]{Notation}
\newtheorem{terminology}[theorem]{Terminology}
\newtheorem{remark}[theorem]{Remark}
\newcommand\sortname[1]{}
\newcommand\ifempty[3]{\ifthenelse{\equal{#1}{}}{#2}{#3}}
\DeclareRobustCommand\plaintext[2]{\BeginAccSupp{ActualText=#1}#2\EndAccSupp{}}
\DeclareRobustCommand\plainhext[2]{\BeginAccSupp{method=hex,unicode,ActualText=#1}#2\EndAccSupp{}}
\newcommand\m[1]{\ensuremath{#1}\texorpdfstring{\xspace}{}}
\newcommand\mrel[1]{\m{\mathrel{#1}}}
\newcommand\fun[1]{\m{\mathrm{#1}}} 
\newcommand\funsymbol[1]{\m{\mathsf{#1}}} 
\newcommand\newvar[2]{\newcommand{#1}{\m{#2}}}
\newcommand\renewvar[2]{\renewcommand{#1}{\m{#2}}}
\newcommand\deflabel[2]{\newcommand{#1}{\m{\mathrm{#2}}}}
\newcommand\newunary[2]{\newcommand{#1}[1]{\m{{#2}\protect\ifempty{##1}{}{\left(##1\right)}}}}
\newcommand\newunaryi[2]{\newcommand{#1}[2]{\m{{#2}_{##1}\protect\ifempty{##2}{}{\left(##2\right)}}}}
\newcommand\newunaryij[2]{\newcommand{#1}[3]{\m{{#2}_{##1,##2}\protect\ifempty{##3}{}{\left(##3\right)}}}}
\newcommand\newfunction[2]{\newunary{#1}{\fun{#2}}}
\newcommand\newfunctioni[2]{\newunaryi{#1}{\fun{#2}}}
\newcommand\newfunctionij[2]{\newunaryij{#1}{\fun{#2}}}
\newcommand\newfunsymbol[2]{\newcommand{#1}{\funsymbol{#2}}} 
\newcommand\newkeyword[2]{\newcommand{#1}{\m{\mathsf{#2}}}}
\newcommand\newsemantics[3]{\newcommand{#1}[1]{\m{#2\protect\ifempty{##1}{\cdot}{##1}#3}}}
\newcommand\newsemanticsi[3]{\newcommand{#1}[2]{\m{#2\protect\ifempty{##2}{\cdot}{##2}#3_{##1}}}}
\renewcommand\emph[1]{{\em #1}}
\newcommand\fig[1]{\includegraphics[scale=0.8]{figs/{{#1}}}}
\newcommand\graph[2]{\vcentered{\m{#1}:~~} \vcentered{\fig{#2}}\hfill}
\newcommand\vcentered[1]{\raisebox{-0.5\height}{#1}}
\newcommand\multilinebox[1]{%
		\begin{tabular}{@{}l@{}}%
			#1%
		\end{tabular}%
	}
\newcommand\eqAnnotation[1]{\text{\small #1}}
\newcommand\sideCondition[1]{\eqAnnotation{(#1)}}
\newcommand\mlSideCondition[1]{\text{\small\multilinebox{(#1)}}}
\newcommand\Tag[1]{\text{(#1)}}
\newcommand\hsep\hfill
\newenvironment{hspread}{%
	\par\vspace{.4ex}\noindent\mbox{}\hfill%
		\newcommand&\hfill%
		\catcode`\&=\active%
		\let\oldnl\\%
		\renewcommand\\{\hfill\mbox{}\oldnl[3ex]\mbox{}\hfill}}
		{\hfill\mbox{}\vspace{1ex}\newline}
\newcommand\proofsystem[1]{\framebox{\begin{minipage}{\linewidth-0.7em}\vspace{0.3em}\begin{hspread}#1\end{hspread}\end{minipage}}}
\newenvironment{bprooftree}{\leavevmode\hbox\bgroup}{\DisplayProof\egroup}
\newcommand\axiom[1]{\AxiomC{\m{#1}}}
\newcommand\emptyAxiom{\axiom{\phantom{(I)}}}
\newcommand\unaryInf[1]{\UnaryInfC{\m{#1}}}
\newcommand\binaryInf[1]{\BinaryInfC{\m{#1}}}
\newcommand\ternaryInf[1]{\TrinaryInfC{\m{#1}}}
\newcommand\infLabel[1]{\RightLabel{\m{#1}}}
\newcommand\proofpar{\vspace{-1ex}\\\par}
\def\namedlabel#1#2{\begingroup
    #2
    \def\@currentlabel{#2}
    \phantomsection\label{#1}\endgroup
}
\newcommand\commentOut[1]{}
\newcommand\graphsize[1]{\m{\length{#1}}}
\newcommand\termsize[1]{\m{\length{#1}}}
\renewcommand\implies{\m{\Rightarrow}}
\newcommand\tuple[1]{\m{\left\langle{#1}\right\rangle}}
\newcommand\pair[2]{\m{\tuple{#1, #2}}}
\newcommand\triple[3]{\m{\tuple{#1, #2, #3}}}
\newcommand\quadruple[4]{\m{\tuple{#1, #2, #3, #4}}}
\newfunction\domain{dom}
\newfunction\range{ran}
\newfunction\image{im}
\newcommand\defd[1]{\m{{#1}{\downarrow}}} 
\newfunction\freevars{FV}
\newfunction\bigO{O}
\newcommand\length[1]{\m{\left|{#1}\right|}}
\newfunction\vs{vs} 
\newfunction\Ter{Ter}
\newfunction\iTer{{Ter}^\infty}
\newcommand\states{\m{S}}
\newcommand\scoll{\m{\text{\small\textbar}\hspace{-0.73ex}\downarrow}} 
\newcommand\coll[1]{\m{{#1}\hspace{0.17ex}{\scoll}}} 
\newcommand\lambdacal{\boldsymbol{λ}}
\newcommand\lambdaletreccal{\m{\boldsymbol{λ}_\letrec}}
\newcommand\lambdabh{\m{λ_\bh}}
\newcommand\lambdabhcal{\m{\lambdacal_\bh}}
\newcommand\lambdaletrecprefixcal{\m{\smash{\boldsymbol{(\lambdaletrec)}}}}
\newcommand\lambdaprefixcal{\boldsymbol{(λ)}}
\newcommand\lambdaprefixposcal{\m{\boldsymbol{(λ)_\spos}}}
\newcommand\lambdaletrec{\m{λ_\letrec}}
\newcommand\adbmal{\m{\scalebox{-1}[1]{λ}}}
\let\iNoDot\i
\let\a\@undefined
\let\i\@undefined
\let\j\@undefined
\let\o\@undefined
\let\r\@undefined
\let\v\@undefined
\let\u\@undefined
\let\w\@undefined
\newvar\A{A}
\newvar\B{B}
\renewvar\C{C}
\newvar\a{a}
\newvar\e{e}
\newvar\h{h} 
\newvar\i{i} 
\newvar\j{i} 
\newvar\f{f} 
\newvar\g{g} 
\newvar\n{n}
\newvar\o{o}
\newvar\p{p}
\newvar\q{q}
\newvar\r{r} 
\newvar\v{v} 
\newvar\w{w} 
\newvar\u{u} 
\newvar\x{x} 
\newvar\y{y} 
\renewvar\L{L} 
\newvar\M{M} 
\newvar\N{N} 
\renewvar\O{O} 
\renewvar\P{P} 
\renewvar\G{G} 
\newvar\V{V} 
\newvar\E{E} 
\newcommand\recprefix[1]{\m{\left({#1}\right)}} 
\newcommand\absprefix[1]{\m{\recprefix{\protect\ifempty{#1}{}{λ{#1}}}}} 
\newcommand\posprefix[3]{\m{\absprefix{#1}_{#2}^{#3}}} 
\newcommand\prefixsep{\;} 
\newcommand\prefixed[2]{\m{\absprefix{#1}\prefixsep{#2}}} 
\newcommand\recprefixed[2]{\m{\recprefix{#1}\prefixsep{#2}}} 
\newcommand\posprefixed[4]{\m{\posprefix{#1}{#2}{#3}\prefixsep{#4}}} 
\newcommand\annprefixed[3]{\m{\prefixed{#1}{#2\,:\,#3}}} 
\newcommand\labprefixed[3]{\m{#1\,:\,\prefixed{#2}{#3}}} 
\newfunction\arity{ar}
\newcommand\afunsym{\m{f}}
\newfunction\stopsymb{\m{\bullet}}
\newcommand\aconstname{\m{\mathsf{c}}}
\newcommand\CRSabs[2]{\m{[{#1}]\hspace*{1pt}{#2}}}
\newcommand\KahrsCRSabs[2]{\m{\{{#1}\}\hspace*{1pt}{#2}}}
\newcommand\arule{\m{\rho}}
\newcommand\brule{\m{\sigma}}
\newcommand\aCRS{\mathcal{C}}
\newcommand\acxthole{\m{\Box}}
\newcommand\sacxt{\m{C}}
\newcommand\acxt[1]{\m{\sacxt[#1]}}
\newfunsymbol\sabsCRS{abs}
\newcommand\absCRS[2]{\m{{\sabsCRS}\left({\CRSabs{#1}{#2}}\right)}}
\newfunsymbol\sappCRS{app}
\newcommand\appCRS[2]{\m{{\sappCRS}\left({#1,\,#2}\right)}}
\newcommand\appCRSbreak[2]{\m{\sappCRS\left(\begin{array}{l}{#1},\\\indent{#2}\end{array}\right)}}
\newcommand\sPrefixedCRS[1]{\m{\funsymbol{pre}_{#1}}}
\newcommand\prefixedCRS[3]{\m{\sPrefixedCRS{#1}\left(\ifempty{#2}{#3}{\CRSabs{#2}{#3}}\right)}}
\newcommand\unfCRS{\m{\textit{\textbf R}_{\hspace*{-0.5pt}\boldsymbol{\unfold}}}}
\newcommand\unfbhCRS{\m{\textit{\textbf R}_{\hspace*{-0.5pt}\boldsymbol{\unfoldbh}}}}
\newcommand\unfOne{\m{{\unfold_{\hspace{-0.8pt}\mathbf{1}}}}}
\newcommand\unfTwo{\m{{\unfold_{\hspace{-0.8pt}\mathbf{2}}}}}
\newcommand\sRegCRS{\m{\textit{\textbf{Reg}}}}
\newcommand\RegCRS{\m{\sRegCRS}}
\newcommand\stRegCRS{\m{\sRegCRS^{\boldsymbol{+}}}}
\newcommand\RegzeroCRS{\m{\sRegCRS^{\boldsymbol{-}}}}
\newcommand\RegposCRS{\m{\sRegCRS_{\textit{\textbf{pos}}}}}
\newcommand\stRegposCRS{\m{\sRegCRS_{\textit{\textbf{pos}}}^{\boldsymbol{+}}}}
\newcommand\RegposzeroCRS{\m{\sRegCRS_{\textit{\textbf{pos}}}^{\boldsymbol{-}}}}
\newcommand\RegARS{\m{\textit{Reg}}}
\newcommand\stRegARS{\m{\textit{Reg}^{+}}}
\newcommand\RegzeroARS{\m{\textit{Reg}^{-}}}
\newcommand\stRegARSlab{\m{\widehat{\stRegARS}}}
\newcommand\RegposARS{\m{\textit{Reg}_{\textit{pos}}}}
\newcommand\stRegposARS{\m{\textit{Reg}_{\textit{pos}}^+}}
\newcommand\RegposzeroARS{\m{\textit{Reg}_{\textit{pos}}^-}}
\newcommand\stParseCRS{\m{\textit{\textbf{Parse}}^{\boldsymbol{+}}}}
\newcommand\stParseUnfCRS{\m{\textit{\textbf{Parse}}_\unfold^{\boldsymbol{+}}}}
\newcommand\ParseCRS{\m{\textit{\textbf{Parse}}}}
\newcommand\RegletrecCRS{\m{\sRegCRS_\letrec}}
\newcommand\stRegletrecCRS{\m{\sRegCRS^{+}_\letrec}}
\newcommand\RegletrecARS{\m{\textit{Reg\/}_\letrec}}
\newcommand\stRegletrecARS{\m{\textit{Reg\/}^{+}_\letrec}}
\newcommand\wPrefix{\smash{(λ)}} 
\newcommand\twolines[2]{\m{\begin{array}{r}#1\\#2\end{array}}}
\newcommand\sig{\m{\Sigma}}
\newcommand\sigCRS{\m{\sig^λ}}
\newcommand\sigCRSbh{\m{\sig^λ_\bh}}
\newcommand\sigCRSletrec{\m{\sig^λ_\letrec}}
\newcommand\sigCRSPrefixed{\m{\sig^{\wPrefix}}}
\newcommand\sigCRSLetrecPrefixed{\m{\sig^{\wPrefix}_{\letrec}}}
\newcommand\sigCRSParse{\m{\sig^{\wPrefix}_{\sparse}}}
\newcommand\sigTG{\m{\sig^λ}}
\newcommand\sigTGi[1]{\m{\sig^λ_{\0_{#1}}}}
\newcommand\sigTGij[2]{\m{\sig^λ_{\0_{#1},\S_{#2}}}}
\newcommand\sigTGbh{\m{\sig^λ_\bh}}
\newcommand\sigTGSbh{\m{\sig^λ_{\S,\bh}}}
\newcommand\GeneratedSubARS[1]{\succsoford{#1}{\mred}}
\newcommand\GeneratedSubARSi[2]{\succsoford{#2}{\mred_{#1}}}
\newcommand\steps{\m{\Phi}}
\newfunction\src{src}
\newfunctioni\srci{src}
\newfunction\tgt{tgt}
\newfunctioni\tgti{tgt}
\newcommand\aARS{\mathcal{A}}
\newcommand\astep{\m{\phi}}
\newcommand\wDepth[2]{\m{{#1}_{#2}}}
\newcommand\soutgoingst[1]{\m{{#1}_{\text{out}}}}
\newcommand\sincomingst[1]{\m{{#1}_{\text{in}}}}
\newcommand\outgoingst[2]{\m{\soutgoingst{#1}\left({#2}\right)}}
\newcommand\incomingst[2]{\m{\sincomingst{#1}\left({#2}\right)}}
\newcommand\alabelling{\m{\mathbb{L}}}
\newcommand\aARSbisim{\mrel{\mathbb{B}}}
\newcommand\apath{\m{\pi}}
\newfunctioni\ST{ST}
\newcommand\arewseq{\m{\tau}}
\newcommand\brewseq{\m{\xi}}
\newcommand\crewseq{\m{\pi}}
\newcommand\red{\mrel{\to}}
\newcommand\mred{\mrel{\twoheadrightarrow}}
\newcommand\convred{\mrel{\leftarrow}}
\newcommand\convmred{\mrel{\twoheadleftarrow}}
\newcommand\parred{\mrel{\hspace{1ex}||\hspace{-2.8ex}\longrightarrow}}
\newcommand\thsp{-1.46ex}
\newcommand\threeheadrightarrow{\m{\plainhext{219200BB}{\twoheadrightarrow\hspace*\thsp→}}}
\newcommand\threeheadleftarrow{\m{\plainhext{00AB2190}{←\hspace*\thsp\twoheadleftarrow}}}
\newcommand\infred{\mrel{\threeheadrightarrow}}
\newcommand\convinfred{\mrel{\threeheadleftarrow}}
\newcommand\convomeganfred{\m{\convred^{!\,\omega}}}
\newcommand\eqred{\m{\red^=}}
\newcommand\morestepred{\m{\red^+}}
\newcommand\omegared{\m{\red^{\omega}}}
\newcommand\omeganfred{\m{\red^{!\,\omega}}}
\newcommand\nfred{\m{\red^{\scriptstyle !}}}
\newcommand\infnfred{\m{\infred^{!}}}
\newcommand\rewritingSequence[1]{\begin{alignat*}{2}#1\end{alignat*}}
\newcommand\rednull{& &~}
\newcommand\redcell[1]{& #1 {} &~}
\newcommand\termcell[1]{ & #1 \\}
\newcommand\step[2]{\redcell{#1} \termcell{#2}}
\newcommand\init[1]{\rednull \termcell{#1}}
\newcommand\unfold{\m{\triangledown}}
\newcommand\unfoldbh{\m{{{\triangledown\hspace{-.35ex}\raisebox{-0.6\height}{\bh}}\hspace{-0.75ex}}}}
\newcommand\decompose{\m{\Delta}} 
\newcommand\unfoldpre[1]{\m{\unfold\hspace{-.1em}.#1}}
\newcommand\rulep[1]{\m{\plainhext{03F1}{\varrho}^{#1}}}
\newcommand\rulebp[2]{\m{\plainhext{03F1}{\varrho}_{#1}^{#2}}}
\newcommand\unfrule[1]{\rulebp{\unfold}{#1}}
\newcommand\decrule[1]{\rulebp{\decompose}{#1}}
\newcommand\rulepos[1]{\m{\varrho_{pos}^{#1}}}
\newcommand\ruleref[1]{\m{(#1)}} 
\deflabel\reg{reg}
\deflabel\streg{reg^+}
\deflabel\regzero{reg^-}
\deflabel\nil{nil}
\deflabel\rec{rec}
\deflabel\merg{letrec}
\deflabel\tighten{tighten}
\deflabel\reduce{red}
\deflabel\del{del}
\deflabel\sscope{sc}
\deflabel\sparse{parse^+} 
\deflabel\sparseunf{parse^+_\unfold} 
\newcommand\extscope{scope\m{^+}}
\newcommand\extscopes{scopes\m{^+}}
\deflabel\eager{eag}
\deflabel\lazy{lazy}
\deflabel\fullybacklinked{fbl}
\newcommand\astrat{\m{\mathbb{S}}}
\newcommand\astratplus{\m{\mathbb{S}^+}}
\newcommand\eagStrat{\m{\astrat_{\eager}}}
\newcommand\lazyStrat{\m{\astrat_{\lazy}}}
\newcommand\eagStratPlus{\m{\astratplus_{\eager}}}
\newcommand\lazyStratPlus{\m{\astratplus_{\lazy}}}
\newcommand\ainst{\m{\iota}}
\newcommand\binst{\m{\kappa}}
\newcommand\Reg{\m{\mathbf{Reg}}}
\newcommand\stReg{\m{\mathbf{Reg}^{\boldsymbol{+}}}}
\newcommand\stRegzero{\m{\mathbf{Reg}_{\boldsymbol{0}}^{\boldsymbol{+}}}}
\newcommand\annstRegzero{\m{\mathbf{ann-Reg}_{\boldsymbol{0}}^{\boldsymbol{+}}}}
\newcommand\stRegeq{\m{\mathbf{Reg}^{\boldsymbol{+}}_{\boldsymbol{=}}}}
\newcommand\stReglab{\m{\widehat{\mathbf{Reg}^{\boldsymbol{+}}}}}
\newcommand\Regletrec{\m{\mathbf{Reg}_{\letrec}}}
\newcommand\stRegletrec{\m{\mathbf{Reg}^{\boldsymbol{+}}_{\letrec}}}
\newcommand\ugRegletrec{\m{\mathbf{ug-Reg}_{\letrec}}}
\newcommand\ugstRegletrec{\m{\mathbf{ug-Reg}^{\boldsymbol{+}}_{\letrec}}}
\newcommand\annstRegletrec{\m{\mathbf{ann-Reg}^{\boldsymbol{+}}_{\letrec}}}
\newcommand\siCRSPretermAlpha[1]{\m{\mathbf{A}^{\hspace{-.2em}\boldsymbol{\infty}}_{#1}}}
\newcommand\iCRSPretermAlpha[2]{\m{\mathbf{A}^{\hspace{-.2em}\boldsymbol{\infty}}_{#1}\left(#2\right)}}
\newcommand\Schroer{\mathcal{S}}
\newcommand\Kahrs{\mathcal{K}}
\newcommand\EqTer{\m{{\mathbf{EQ}}^{\boldsymbol{\infty}}}}
\newcommand\AlphaPreTer{\m{\mathbf{EQ}_{\boldsymbol{α}}^{\boldsymbol{\infty}}}}
\deflabel\FIX{FIX}
\deflabel\FIXletrec{\FIX_\letrec}
\deflabel\FIXletrecmin{\FIX^-_\letrec}
\newcommand\presup[2]{\m{\prescript{#1}{}{#2}}}
\newcommand\derivablein[2]{\m{\vdash_{#1}{#2}}}
\newcommand\sinfderivable{\presup{\infty}{\vdash}}
\newcommand\infderivablein[2]{\m{\sinfderivable_{#1}\hspace*{1.5pt}{#2}}}
\newcommand\ahotg{\mathcal{G}}
\newcommand\altg{\m{G}}
\newcommand\Deriv{\mathcal{D}}
\newcommand\Derivlab{\m{{\Deriv}^{\smash{(\mathrm{lb})}}}}
\newcommand\Derivann{\m{\hat{\Deriv}}}
\newfunsymbol\S{S}
\newcommand\indir{\m{\text{\small\textbar}}}
\newcommand\nodef{\m{\text{?}}}
\newcommand\parsei[1]{\m{\funsymbol{parse}^\fun{+}_{#1}}} 
\newcommand\parse[2]{\m{\parsei{#1}\left(#2\right)}} 
\newkeyword\Let{let}
\newkeyword\In{in}
\newkeyword\letrec{letrec}
\newcommand\letin[2]{\m{\Let\;{#1}\;\In\;{#2}}}
\newcommand\eqsep{,\,} 
\newcommand\abs[2]{\m{λ{#1}\protect\ifempty{#2}{}{.\,{#2}}}}
\newcommand\simpleabs[2]{\m{λ{#1}{.\,{#2}}}}
\newcommand\app[2]{\m{{#1}\;{#2}}}
\newcommand\appbreak[2]{\m{\begin{array}{l}\left(#1\right)\\~~\left(#2\right)\end{array}}}
\newcommand\infDeriv{\m{\Deriv^{\infty}}}
\newcommand\depth[1]{\m{\left|{#1}\right|}}
\newcommand\amarker{\m{u}}
\newcommand\bmarker{\m{v}}
\newcommand\cmarker{\m{w}}
\newcommand\mcdots{\m{\hspace*{1pt}\cdots\hspace*{2pt}}}
\newcommand\alignbreak{\m{\displaybreak[0]\\}}
\newcommand\aproj{\m{\pi}}
\newcommand\eqcl[2]{\m{{[{#1}]}_{#2}}}
\newcommand\eqclin[3]{\m{{[{#1}]}_{#2}^{#3}}}
\newcommand\srestrictto[2]{\m{{#1}\!\mid_{\,#2}}}
\newfunsymbol\LetCRS{let}
\newcommand\InCRS[1]{\m{\In_{#1}}}
\newcommand\letCRS[3]{\m{\LetCRS\left(\ifempty{#2}{\InCRS{#1}\left(#3\right)}{\CRSabs{#2}{\InCRS{#1}\left(#3\right)}}\right)}}
\newcommand\converse[1]{\m{{#1}^{\smile}}}
\newcommand\emptyword{\m{\epsilon}}
\newcommand\prefixcon{\m{\,}} 
\newcommand\fs[1]{\set{#1}}
\newcommand\starfs[1]{\m{\star\fs{#1}}}
\newcommand\stdprefix{\m{x_0^{v_0}\fs{\vec{f}_0} \dots x_n^{v_n}\fs{\vec{f}_n}}}
\newfunctioni\idon{id} 
\newcommand\pcup{\mrel{\vec{\cup}}}
\newfunction\lb{lb}
\newcommand\setcompr[2]{\m{\left\{{#1}~\middle|~{#2}\right\}}}
\DeclareMathOperator*{\comma}{\scalerel*{,}{\sum}}
\newcommand\listcompr[2]{\m{\comma_{#1}{#2}}}
\newcommand\arel{\m{R}}
\newcommand\brel{\m{S}}
\newcommand\sequence[2]{\m{\{{#1}\}_{#2}}}
\newcommand\enumsequence[1]{\m{\langle{#1}\rangle}}
\newcommand\slub{\m{\bigsqcup}}
\newcommand\sglb{\m{\bigsqcap}}
\newcommand\lub[1]{\m{\slub {#1}}}
\newcommand\glb[1]{\m{\sglb {#1}}}
\newcommand\subst[3]{\m{{#1}[{#2}:={#3}]}}
\newcommand\vecsub[2]{\m{\vec{#1}_{#2}}}
\newcommand\nmvec[3]{\m{{#1}_{#2} \dots {#1}_{#3}}}
\newcommand\nvec[2]{\m{\nmvec{#1}{1}{#2}}}
\newcommand\vecOneToN[1]{\m{\nvec{#1}{n}}}
\newcommand\sep[1]{\m{&#1&}}
\newcommand\Lbracket{\plainhext{27E6}{\llbracket}}
\newcommand\Rbracket{\plainhext{27E7}{\rrbracket}}
\newsemantics\unfsem{\Lbracket}{\Rbracket_{\hspace*{-0.1pt}{\lambdacal}}}
\newsemantics\unfbhsem{\Lbracket}{\Rbracket_{\hspace*{-0.1pt}{\lambdabhcal}}}
\newsemantics\graphsem{\Lbracket}{\Rbracket}
\newsemanticsi\graphsemC{\Lbracket}{\Rbracket}
\deflabel\noSsh{min}
\newsemanticsi\graphsemCmin{\Lbracket}{\Rbracket^\noSsh}
\newcommand\rulestranslambdaletreccaltolhotgs{\mathcal{R}}
\newcommand\rulestranslambdaletreccaltoltgs{\m{\mathcal{R}_{\S}}}
\newcommand\alphaequiv{\m{\equiv_{α}}}
\newcommand\SilentLTS[2]{\aLTS_{{#1},{#2}}}
\newcommand\sltg{\m{\mathcal{G}}} 
\newcommand\ltg[2]{\m{\sltg_{#1}\left(#2\right)}}
\newcommand\SilentLTG[3]{\m{\sltg_{{#1},{#2}}\left({#3}\right)}}
\newcommand\aLTS{\mathcal{L}}
\newcommand\atg{\m{G}}
\newcommand\alab{\m{a}}
\newcommand\binds{\mrel{\leftspoon}}
\newcommand\bindseq{\m{\mrel\leftspoon^{=}}}
\newcommand\captures{\mrel{\dashleftarrow}}
\newcommand\capturedby{\mrel{\dashrightarrow}}
\deflabel\spos{pos}
\newfunction\positions{Pos}
\newcommand\vecpositions{\m{\vec{ℕ}^*}}
\newcommand\set[1]{\m{\left\{{#1}\right\}}}
\newcommand\rootpos{\m{\epsilon}}
\newcommand\apreter{\m{s}}
\newcommand\bpreter{\m{t}}
\newcommand\succsoford[2]{\m{\left(#1 #2\right)}}
\newcommand\succsofordin[3]{\m{\left({#1}{#2}\right)^{#3}}}
\newfunction\bp{bp} 
\newfunction\readback{\mathit{rb}}
\newfunction\vertsof{V}
\newfunctioni\vertsiof{V}
\newfunction\lab{lab}
\newfunctioni\labi{lab}
\newfunction\args{args}
\newfunctioni\argsi{args}
\newcommand\bh{\m{\bullet}}
\newcommand\tgsucc{\mrel{\rightarrowtail}}
\newcommand\tgsuccstar{\m{\mrel{\rightarrowtail}^*}}
\newcommand\tgsuccis[2]{\m{\mrel{\presup{\smash{#2}}{\tgsucc}}_{#1}}}
\newcommand\tgsuccisstar[2]{\m{\left(\tgsuccis{#1}{#2}\right)^*}}
\newcommand\pathto{\m{\rightarrowtail^*}}
\newcommand\iso{\mrel{\sim}}
\newcommand\bisim{\mrel{
	\setbox0=\hbox{\kern-.1ex{\m{\leftrightarrow}}\kern-.1ex}
	\setbox1=\vbox{\hbox{\raise .1ex \box0}\hrule}
	\hbox{\kern.1ex\box1\kern.1ex}
}}
\newcommand\invfunbisim{\mrel{
	\setbox0=\hbox{\kern-.1ex{\m{\leftarrow}}\kern-.1ex}
	\setbox1=\vbox{\hbox{\raise .1ex \box0}\hrule}
	\hbox{\kern.1ex\box1\kern.1ex}
}}
\newcommand\funbisim{\mrel{
	\setbox0=\hbox{\kern-.1ex{\m{\rightarrow}}\kern-.1ex}
	\setbox1=\vbox{\hbox{\raise .1ex \box0}\hrule}
	\hbox{\kern.1ex\box1\kern.1ex}
}}
\newcommand\convfunbisim{\mrel{
	\setbox0=\hbox{\kern-.1ex{\m{\leftarrow}}\kern-.1ex}
	\setbox1=\vbox{\hbox{\raise .1ex \box0}\hrule}
	\hbox{\kern.1ex\box1\kern.1ex}
}}
\newcommand\scollC[1]{\m{{{\text{\small\textbar}}\hspace{-0.73ex}\downarrow}_{#1}}}
\newcommand\collC[2]{\m{{\scollC{#1}}\left({#2}\right)}}
\newcommand\abisim{\m{R}} 
\newfunction\scope{Sc}
\newfunction\scopemin{{Sc}^{-}}
\newfunctioni\scopei{Sc}
\newfunctioni\scopeof{scope}
\newfunctioni\extscopeof{scope^+\hspace{-1ex}}
\newfunction\binders{bds}
\newfunction\abspre{P} 
\newfunctioni\absprei{P}
\newcommand\aaphotg{\mathcal{G}}
\newfunction\aphotg{\aaphotg}
\newcommand\aaphotgiso{\boldsymbol{\mathcal{G}}}
\newcommand\atgiso{\m{\textbf{\textit{G}}}}
\newcommand\classlhotgs{\mathcal{H}}
\newcommand\wBPrefix{\smash{\boldsymbol{(λ)}}}
\newcommand\classltgs{\mathcal{T}} 
\newcommand\classltgsi[1]{\m{\mathcal{T}_{#1}^{λ}}}
\newcommand\classltgsij[2]{\m{\mathcal{T}_{#1,#2}^{λ}}}
\newcommand\classlhotgsi[1]{\m{\mathcal{H}^{λ}_{#1}}}
\newcommand\classaphotgsi[1]{\m{\mathcal{H}_{#1}^{\wPrefix}}}
\newcommand\classlhotgsisoi[1]{\m{{\boldsymbol{\mathcal{H}}}^{\boldsymbol{λ}}_{#1}}}
\newcommand\classaphotgsisoi[1]{\m{{\boldsymbol{\mathcal{H}}}_{{#1}}^{\wBPrefix}}}
\newcommand\classltgsisoij[2]{\m{{\boldsymbol{\mathcal{T}}}_{{#1},{#2}}^{\wBPrefix}\!}}
\newcommand\eag[1]{\presup{\smash{\eager}}{#1}}
\newcommand\fbl[1]{\presup{\smash{\fullybacklinked}}{#1}}
\newcommand\classtgssiglambdai[1]{\m{{\mathcal{T}}_{#1}}} 
\newcommand\classtgssiglambdaij[2]{\m{{\mathcal{T}}_{{#1},{#2}}}} 
\newfunctioni\lhotgstoaphotgsi{A}
\newfunctioni\aphotgstolhotgsi{B}
\newcommand\slhotgsisotoaphotgsisoi[1]{\m{{\textit{\textbf{A}}}_{#1}}}
\newcommand\saphotgsisotolhotgsisoi[1]{\m{{\textit{\textbf{B}}}_{#1}}}
\newfunction\lhotgstoltgs{\mathcal{HT}}
\newfunction\ltgstolhotgs{\mathcal{TH}}
\newfunctionij\aphotgstoltgsij{G} 
\newfunctionij\ltgstoaphotgsij{\mathcal{G}} 
\newfunctionij\aphotgsisotoltgsisoij{\textbf{G}}
\newfunctionij\ltgsisotoaphotgsisoij{\boldsymbol{\mathcal{G}}}
\newcommand\apre{\m{p}}
\newcommand\bpre{\m{q}}
\newcommand\cpre{\m{r}}
\newcommand\dpre{\m{s}}
\newcommand\aclass{\mathcal{K}}
\newfunction\tgsminover{TG^{-}}
\newfunction\tgsover{TG}
\newenvironment{linespacing}[1]{%
		\par%
		\edef\oldlineskip{\the\lineskip}%
		\edef\oldlineskiplimit{\the\lineskiplimit}%
		\setlength{\global\lineskip}{#1}%
		\setlength{\global\lineskiplimit}{#1}%
	}{%
		\par%
		\setlength{\global\lineskip}{\oldlineskip}%
		\setlength{\global\lineskiplimit}{\oldlineskiplimit}%
	}
\tikzset{row sep=9mm, column sep=9mm, >=stealth}
\tikzstyle{->>>} =
\tikzset{
  funbisim/.style={
    decoration={funbisim, amplitude=0.25ex},
    decorate,
    funbisim options/.style={#1}    
  }}
\tikzset{
  bisim/.style={
    decoration={bisim, amplitude=0.25ex},
    decorate,
    bisim options/.style={#1}    
  }}
\newcommand\transpicture[1]{\vcentered{\begin{tikzpicture}[node distance=4mm] #1 \end{tikzpicture}}}
\newcommand\emptynode[2][]{\node[inner sep=1pt, #1](#2){}}
\newcommand\ltgnode[3][]{\node[#1,draw,shape=circle,inner sep=0.1mm,minimum size=5mm](#2){\large #3}}
\newcommand{\addPrefix}[3][]{\node[node distance=1mm,#1,left=of #2.north]{\m{\scriptstyle(#3)}}}
\newcommand{\addPos}[2]{\node[node distance=1.2mm,right=of #1.north]{\m{\scriptstyle #2}}}
\newcommand{\addPrefixswphantom}[3][]{\node[node distance=1mm,#1,right=of #2.south]{\m{\scriptstyle\phantom{(#3)}}}}
\newcommand{\translation}[3]{\translationif{#1}{#2}{#3}{}{}}
\newcommand{\rbedge}[4][]{\Edge[style={->,thin,#1},label={#2}](#3)(#4)}
\newcommand{\transname}[1]{\noindent\vcentered{#1:}}
\newcommand\transspace{\hspace{2ex}}
\newcommand{\translationif}[5]{\transname{#1}\transspace\transpicture{#2}\transspace\vcentered{$\overset{#4}{\underset{#5}{\implies}}$}\transspace\transpicture{#3}}
\newcommand\translationv[3]{\transname{#1}\m{\begin{array}{c}\transpicture{#2}\\[6ex]\m{\Downarrow}\\[3ex]\transpicture{#3}\end{array}}}
\newcommand\translationvif[4]{\transname{#1}\m{\begin{array}{c}\transpicture{#2}\\[6ex]\m{\Downarrow~~\sideCondition{#3}}\\[3ex]\transpicture{#4}\end{array}}}
\newenvironment{transenv}{\begin{linespacing}{5ex}\noindent\mbox{}\hfill}{\hfill\mbox{}\end{linespacing}}
\newcommand\transinit[1]{\transpicture{#1}}
\newcommand\transstep[2]{\hfill\mbox{} \mbox{}\nobreak\hfill\vcentered{\m{\implies_{#1}}}\nobreak\hfill\transpicture{#2}}
\newcommand\fusionDownarrow{\vcentered{\m{~\Downarrow}}}
\newcommand\fusionstepsimple[2]{\transpicture{#1}\vcentered{\m{\Longleftarrow}}\transpicture{#2}}
\newcommand\fusionstep[3]{\m{\begin{array}{c}\begin{tikzpicture}[node distance=4mm]{#1}\end{tikzpicture}\\[1ex]\fusionDownarrow\\[3ex]\begin{tikzpicture}[node distance=4mm]{#2}\end{tikzpicture}\end{array}\vcentered{\m{\Longleftarrow}}\transpicture{#3}}}
\newcommand\fusionstepclap[3]{\m{\begin{array}{c}\begin{tikzpicture}[node distance=4mm]{#1}\end{tikzpicture}\\[1ex]\fusionDownarrow\\[3ex]\begin{tikzpicture}[node distance=4mm]{#2}\end{tikzpicture}\end{array}\vcentered{\m{\mathclap{\Longleftarrow}}}\transpicture{#3}}}
\newcommand\fusiontwostep[4]{\m{\begin{array}{c}\begin{tikzpicture}[node distance=4mm]{#1}\end{tikzpicture}\\[1ex]\fusionDownarrow\\[3ex]\begin{tikzpicture}[node distance=4mm]{#2}\end{tikzpicture}\\[1ex]\fusionDownarrow\\[3ex]\begin{tikzpicture}[node distance=4mm]{#3}\end{tikzpicture}\end{array}\vcentered{\m{\Longleftarrow}}\transpicture{#4}}}
\newcommand\fusionthreestep[5]{\m{\begin{array}{c}\begin{tikzpicture}[node distance=4mm]{#1}\end{tikzpicture}\\[1ex]\fusionDownarrow\\[3ex]\begin{tikzpicture}[node distance=4mm]{#2}\end{tikzpicture}\\[1ex]\fusionDownarrow\\[3ex]\begin{tikzpicture}[node distance=4mm]{#3}\end{tikzpicture}\\[0ex]\fusionDownarrow\\[3ex]\begin{tikzpicture}[node distance=4mm]{#4}\end{tikzpicture}\end{array}\transspace\vcentered{\m{\Longleftarrow}}\transspace\transpicture{#5}}}
\crefname{enumi}{}{}
\crefname{property}{property}{properties}
\crefname{Rb}{}{}
\newcommand\fs@myfs{\def\@fs@cfont{\bfseries}\let\@fs@capt\floatc@ruled
	\def\@fs@pre{}%
	\def\@fs@post{\kern2pt\hrule height.8pt depth0pt \relax}%
	\def\@fs@mid{\vspace{0.5\abovecaptionskip}\hrule\vspace{0.4\abovecaptionskip}\relax}%
	\let\@fs@iftopcapt\iffalse}
\let\xx@thm\@thm
\begin{document}

\copypagestyle{chapter}{plain}
\makeoddfoot{plain}
  {}
  {}
  {}

\frontmatter

{
\setlength{\parindent}{0cm}
\setlength{\parskip}{6ex}
\thispagestyle{empty}

\pretitle{\begin{center}Dissertation:\\\LARGE}
\title{\textsc{Unfolding Semantics of the Untyped\\λ-Calculus with \letrec}}
\author{Jan Rochel \texttt{<\href{mailto:jan@rochel.info}{jan@rochel.info}>}}
\date{defended on June 20, 2016\\online version, revision: \today}

\maketitle
\thispagestyle{empty}

\vfill
\begin{center}
\begin{minipage}{9cm}
Dedication: \textit{Gewidmet meiner ehemaligen Klavierlehrerin Eva-Maria Rieckert, die mir Musik auf eine Weise vermittelte, die mich bis heute über das Musizieren hinaus prägt.}
\end{minipage}
\end{center}




\newpage
\thispagestyle{empty}

\begin{center}
\begin{minipage}{10.9cm}
\begin{tabular}{lcl}
	Promotoren: & & Copromotor: \\\\
	Prof.dr.\ S.\ D.\ Swierstra& ~~~~~~~~~~~~~~~~~~~~~~~~~~~ & Dr.\ C.\ Grabmayer \\
   Prof.dr.\ V.\ van Oostrom
\end{tabular}
\end{minipage}

\textbf{Unfolding Semantics of the Untyped λ-Calculus with \letrec}
\\[3ex]
\textbf{Ontvouwingssemantiek van de ongetypeerde\\λ-calculus met \letrec}
\\(met een samenvatting in het Nederlands)

Proefschrift

\begin{minipage}{10cm}
ter verkrijging van de graad van doctor aan de Universiteit Utrecht op gezag van de rector magnificus, prof.dr.\ G.J.\ van der Zwaan, ingevolge het besluit van het college voor promoties in het openbaar te verdedigen op maandag
20 juni 2016 des middags te 2.30 uur
\end{minipage}

door

\textbf{Jan Rochel}

geboren op 20 juli 1984\\
te Bretten, Duitsland

\vfill

\noindent
\begin{minipage}{10.9cm}
Dit proefschrift werd (mede) mogelijk gemaakt met financiële steun van de Nederlandse Organisatie voor Wetenschappelijk Onderzoek (NWO).
\end{minipage}
\end{center}
}

\newpage

\tableofcontents

\chapter{Preface}

This thesis documents research which was carried out as part of an NWO\footnote{Nederlandse Organisatie voor Wetenschappelijk Onderzoek} research project under my promotors Vincent van Oostrom and Doaitse Swierstra titled `Realising Optimal Sharing'. The objective of the project was to investigate whether the theory of `optimal evaluation of the λ-calculus' could be used in practise to increase the execution efficiency for programs written in functional programming languages. As regards the original project goal my research was unsuccessful. This is not due to a lack of results but due to a distraction: when approaching the subject of `optimal sharing' -- which is `dynamic' in the sense that it concerns sharing that an evaluator maintains at run time -- more fundamental, still unresolved questions emerged concerning `static' sharing -- which is the sharing inherent in a given program definition. Therefore the presented results are not about optimal and therefore `dynamic' sharing but solely about `static' sharing.


The research was carried out in close collaboration with Clemens Grabmayer. All of the fundamental ideas and results are due to joint efforts and fruitful -- if sometimes fierce -- debate. We complemented each other splendidly: I profited much from Clemens Grabmayer's expertise with formal systems, while I myself could contribute my proficiency with functional programming languages and compiler construction; also I have worked out an implementation of our methods. The substance of this thesis is from three research papers \cite{grab:roch:expressibility,grab:roch:representations,grab:roch:maxsharing} published in the context of my doctoral research with Grabmayer and Rochel as authors, presented here in a more coherent narrative with supplementary explanations and examples. In order to do justice to the Clemens Grabmayer's substantial contribution, authors who wish to cite this thesis in their work are kindly advised to cite at least one of the papers alongside.

This document is structured according to these three papers: \cref{chap:introduction} introduces the \lambdaletrec-formalism and our rewriting system for unfolding \lambdaletrec-terms. There we also pose the problems that are resolved in the following three chapters, \cref{chap:expressibility}, \cref{chap:representations}, and \cref{chap:maxsharing}, each of which corresponds to one of these papers. Note, that the formalisms in this thesis deviate to varying degrees from their original form in the papers. The changes were required for the formalims to be consistent throughout the chapters.

\section{Abstract}
	In this thesis we investigate the relationship between finite terms in \lambdaletreccal, the λ-calculus with letrec, and the infinite λ-terms they express. We say that a \lambdaletrec-term \emph{expresses} a λ-term if the latter can be obtained as an infinite unfolding of the former. Unfolding is the process of substituting occurrences of function variables by the right-hand side of their definition. We consider the following questions:
\begin{enumerate}[(i)]
	\item How can we characterise those infinite λ-terms that are \lambdaletrec-expressible?
	\item given two \lambdaletrec-terms, how can we determine whether they have the same unfolding?
	\item given a \lambdaletrec-term, can we find a more compact version of the term with the same unfolding?
\end{enumerate}
To tackle these questions we introduce and study the following formalisms:
\begin{itemize}
	\item a rewriting system for unfolding \lambdaletrec-terms into λ-terms
	\item a rewriting system for `observing' λ-terms by dissecting their term structure
	\item higher-order and first-order graph formalisms together with translations between them as well as translations from and to \lambdaletreccal
\end{itemize}
We identify a first-order term graph formalism on which bisimulation preserves and reflects the unfolding semantics of \lambdaletreccal and which is closed under functional bisimulation. From this we derive efficient methods to determine whether two terms are equivalent under infinite unfolding and to compute the maximally shared form of a given \lambdaletrec-term.

\mainmatter

\setcounter{chapter}{-1}


\chapter{\lambdaletrec{} and Unfolding}\label{chap:introduction}

\begin{para}[abstract]
	This thesis concerns itself with terms in the λ-calculus with \letrec, specifically with \emph{unfolding} these terms. Unfolding refers to the process of substituting occurrences of \Let-bound variables by their definition. We define unfolding by means of a rewriting system. We study the properties of that rewriting system and build various formal systems on top of it to derive further results. These results include:
	\begin{description}
		\item[\cref{chap:expressibility}:] a characterisation of the infinite λ-terms that can be expressed finitely as \lambdaletrec-terms.
		\item[\cref{chap:representations}:] a graph representation for \lambdaletrec-expressible λ-terms.
		\item[\cref{chap:maxsharing}:] practical and efficient methods for transforming a \lambdaletrec-term into a maximally compact form; and deciding whether two \lambdaletrec-terms have the same unfolding.
	\end{description}
\end{para}

\begin{para}[required background]
	The reader is expected to have some proficiency with functional programming languages based on the λ-calculus \cite{chur:thec41,barendregt1984lambda} (preferably Haskell), which is most likely required to understand the presented results and their relevance. Furthermore, the formal systems used for reasoning in this thesis are mostly rewriting systems. Therefore, at least a basic background in term rewriting \cite{terese:2003} is assumed.
\end{para}

\begin{para}[chapter overview]
	This chapter gives an overview over \lambdaletrec and outlines the perspective from which we will study this calculus. We provide definitions and basic properties of the rewriting system by which we unfold \lambdaletrec-terms. Then we will pose the questions and problems to contextualise the results presented in the following chapters.
\end{para}

\section{Introduction}


\begin{para}[\lambdaletrec as an abstraction of functional programming languages]\label{lambdaletrec-as-an-abstraction}
	The λ-calculus is a formal system in computer science and logic for expressing computation. It is the model of computation at the core of functional programming languages. In this thesis we will look at one specific instance of the λ-calculus, namely the untyped λ-calculus extended by the \letrec-construct, or in short \lambdaletrec. In that form, it serves well as a minimalistic abstraction of functional programming languages like Haskell. While Haskell is a typed language, it is typically translated into a simplified form during the compilation process in which type information is discarded (\emph{type erasure}). Thus types can be regarded as auxiliary means for the programmer, and can be neglected when looking at the evaluation semantics of a type-checked program.
\end{para}

\begin{para}[the \letrec-construct]\label{the-letrec-construct}
	The \letrec-construct serves a number of purposes. By allowing us to bind subterms to variables it adds to the simple, untyped λ-calculus the means for:
	\begin{description}
		\item[modularisation:] Subterms with a specific purpose can be given a descriptive name and more easily be treated as entities of their own.
		\item[sharing:] Instead of repeating identical subterms at different locations, a subterm can be defined once and be referenced by its function definition multiple times.
		\item[cyclicity:] Function definitions can contain references to themselves, allowing for cyclic (and mutually cyclic) bindings.
	\end{description}
	While all the above can also be achieved by a collection of top-level bindings, the \letrec has one distinguishing characteristic, which is that function bindings can be defined at any position in the term. The scope of thus defined function bindings does not range over the entire program but can only be used underneath the position of their definition. This locality of definitions allows for a more structured approach to programming. Also, it can refer to λ-variables bound outside of the binding which results in fewer β-reduction steps during evaluation.
\end{para}

Before we concern ourselves with formal definitions let us first fix some notation and terminology.

\begin{notation}[\Let = \letrec]
	Throughout this thesis we write \Let to denote the \letrec-construct, as is done in Haskell.
\end{notation}

\begin{terminology}[\Let-expression]
	A term that starts with a \Let, thus a term that has the form \letin{B}{L}, is called a \emph{\Let-expression}.
\end{terminology}

\begin{terminology}[binding group]
	We call the collection of bindings \m{B} defined in a \Let-expression \letin{B}{L} a \emph{binding group}.
\end{terminology}

\begin{terminology}[function binding, function variable, \Let-bound variable]
	We call the equations of a \Let-expression \emph{function bindings} or simply \emph{bindings}. We call the variable on the left-hand side of a binding a \emph{\Let-bound variable}, or a \emph{function variable}.\footnote{While a term bound by \Let to a variable may well be constant (i.e.\ not a λ-abstraction) we still call such a binding a function binding and the \Let-bound variable a function variable.}
\end{terminology}

\begin{terminology}[body]
	The part \m{L} of the expression \letin{B}{L} is called the \emph{body} of the \Let-expression.
\end{terminology}

\begin{example}\label{ex:let-terminology}
	Consider the \lambdaletrec-term \letin{\fun{fix} = \abs{f}{\app{f}{(\app{\fun{fix}}{f})}}}{\fun{fix}}. The binding group of the \Let-expression consists of a single function binding \m{\fun{fix} = \abs{f}{\app{f}{(\app{\fun{fix}}{f})}}} which binds the term \abs{f}{\app{f}{(\app{\fun{fix}}{f})}} to the function variable \fun{fix}. The body of the \Let-expression consists of an occurrence of the function variable \fun{fix}.
\end{example}

\begin{remark}[Turing completeness, well-typedness, termination]
	Typed λ-calculi (with finite types) are strongly normalising, which means that every computation in such a calculus terminates. Therefore such calculi are not Turing complete. However, strong normalisation does not hold for typed λ-calculi that allow for cyclic definitions, such as \lambdaletrec. Therefore the \letrec can also be seen as a way to restore Turing completeness for a typed λ-calculus. Other approaches would be fixed-point combinators or top-level bindings. The \letrec offers the most convenience from a programmer's point of view. 
\end{remark}

\begin{para}[infinite unfolding]
	A \lambdaletrec-term \L can be seen as a finite representation of a (possibly) infinite λ-term \M, which we obtain by repeatedly substituting every occurrence of a function variable by the right-hand side of the corresponding binding. We call \M the \emph{infinite unfolding} of \L, and we write \m{\unfsem{\L} = \M}. A definition for \unfsem{} is given later (\cref{def:unf-mapping}). 
\end{para}

\begin{example}[infinite unfolding of \app{\fun{fix}}{\fun{id}}]\label{ex:unf_fix_id}
	Let us consider a naive implementation of the \fun{fix}-function applied to \abs{x}{x}, the identity function:
	\begin{gather*}
		\unfsem{\app{(\letin{\fun{fix} = \abs{f}{\app{f}{(\app{\fun{fix}}{f})}}}{\fun{fix}})}{(\abs{x}{x})}}\\ = \app{(\abs{f}{\app{f}{(\app{(\abs{f}{\app{f}{(\app{\dots}{f})}})}{f})}})}{(\abs{x}{x})}
	\end{gather*}
\end{example}

\begin{notation}[ellipsis: \dots]
	Note that (as in the example above) we will be using the ellipsis as an informal notation for `and so on' extensively. It occurs both in infinite terms and infinite rewriting sequences. Instead of providing a first-order formalisation of the ellipsis, we trust that the reader will find the context always sufficient to infer the shape of the entire term or rewriting sequence.
\end{notation}

\begin{para}[evaluation and unfolding]
	While semantics of \lambdaletrec-based programming languages can be defined via infinite unfolding, evaluation of programs on a computer cannot operate on infinite terms but must rely on a finite representation. The evaluation of such programs requires in addition to β-reduction a mechanism to unfold \Let-expressions. This is best modelled in a rewriting system that extends the λ-calculus by unfolding rules.
\end{para}

\begin{example}[leftmost-outermost evaluation of \app{\fun{fix}}{\fun{id}}]\label{ex:eval_fix_id}
	Let us evaluate a small example program to see how unfolding comes into play in the course of evaluating \lambdaletrec-terms:
	\[\app{(\letin{\fun{fix} = \abs{f}{\app{f}{(\app{\fun{fix}}{f})}}}{\fun{fix}})}{(\abs{x}{x})}\]
	This term has no (visible) β-redex as the \abs{f}{\dots}-abstraction is `blocked' by the surrounding \Let. In order to turn it into a proper β-redex we need to unfold its definition, which means essentially substituting the right-hand side of the binding for both occurrences of the function variable \fun{fix}:
	\[\mred_\unfold \quad \app{(\abs{f}{\app{f}{(\app{\letin{\fun{fix} = \abs{f}{\app{f}{(\app{\fun{fix}}{f})}}}{\fun{fix}})}{f}}})}{(\abs{x}{x})}\]
	In the formalisation of unfolding as a rewriting system (see \cref{sec:unfolding_informal}) this requires actually a number of steps, hence the many-step reduction \m{\mred_\unfold}. Now that we have a visible β-redex to contract, we can substitute \abs{x}{x} for \f:
	\[\red_β \quad \app{(\abs{x}{x})}{(\app{(\letin{\fun{fix} = \abs{f}{\app{f}{(\app{\fun{fix}}{f})}}}{\fun{fix}})}{(\abs{x}{x})})}\]
	Next, we apply the identity function and we arrive at the initial term and evaluation continues as above and will thus never terminate.
	\[\red_β \quad \app{(\letin{\fun{fix} = \abs{f}{\app{f}{(\app{\fun{fix}}{f})}}}{\fun{fix}})}{(\abs{x}{x})} \; \mred_\unfold \; \dots\]
\end{example}

\begin{remark}[\unfold]\label{rem:triangle}
	Using the triangle as a symbol for unfolding is inspired by graph rewriting systems like Lambdascope \cite{oost:looi:zwit:2004} where sharing is indicated by an explicit sharing node in the shape of a triangle, with multiple incoming edges at the top (shared occurrences) and one outgoing edge at the bottom (to the shared subgraph). An unsharing step would then `unzip' the shared subgraph node by node, with the triangle acting as the zipper foot.
\end{remark}

\begin{para}[mixing unfolding and β-reduction]
	A simple interpreter for \lambdaletrec proceeds along the lines of \cref{ex:eval_fix_id}, i.e.\ by interspersing β-reduction with unfolding steps in a combined rewriting system. Most of the scientific works around unfolding \lambdaletrec-terms are works that study evalutors that include unfolding rules with β-reduction (and possibly α-reduction) in a rewriting system.
\end{para}

\begin{para}[unfolding as a subject of study]
	This thesis however focusses entirely on the unfolding portion of the semantics of \lambdaletrec; β-reduction will henceforth play a marginal role at most.\footnote{This vaguely suggests possible future research: it could be a promising venture to develop modular semantics for \lambdaletreccal, i.e.\ `β-reduction modulo unfolding', and express existing evaluators in terms of this semantis in a modular manner.}
\end{para}

\begin{para}[outlook]
	In the following two sections we will define the term language for \lambdaletrec-terms as well as a rewriting system for unfolding terms in \lambdaletrec. First we give an informal account to introduce the notation that we will be using for examples. Afterwards we provide sound formalisations.
\end{para}

\section{λ-terms and \lambdaletrec-terms -- informal}\label{sec:terms_informal}

\begin{para}[overview]
	This section provides first-order notations for terms in the λ-calculus and the \lambdaletrec-calculus. Mind, that these are not formal definitions and are only used to convey an intuition for the issue at hand. Only in \cref{sec:terms_formal} the definitions are formalised in the CRS (Combinatory Reduction System) framework.
\end{para}

\begin{para}[set of λ-terms]
	Let \V be a set of variable names. The set of λ-terms is `coinductively' defined by the following grammar, where \m{x ∈ V}:
	\begin{equation*}
		\begin{array}{lllll}
			\Tag{term} & \L \sep{::=} \abs{x}{\L} & \Tag{abstraction} \\
			           &         \sep{ | } \app{\L}{\L} & \Tag{application} \\
			           &         \sep{ | } x                  & \Tag{variable} \\
		\end{array}
	\end{equation*}
\end{para}

\begin{para}[infinite terms]
	Mind, that we interpret this grammar coinductively (i.e.\ as a final coalgebra). Therefore finite as well as infinite terms arise from it. Since unfoldings of \lambdaletrec-terms are typically infinite, in this thesis we will more often than not deal with infinite λ-terms. \lambdaletrec-terms on the other hand will always be finite.
\end{para}

\begin{example}[a simple infinite λ-term]
	See \cref{ex:unf_fix_id}.
\end{example}

Adding a production for the \letrec-construct to the above grammar, we obtain a grammar for \lambdaletrec.

\begin{para}[set of \lambdaletrec-terms]
	Let \V be a set of variable names. The set of \lambdaletrec-terms is inductively defined by the following grammar, where \m{x, f_1,\dots,f_n ∈ V}:
	\begin{equation*}
		\begin{array}{lllll}
			\Tag{term} & \L \sep{::=} \abs{x}{\L}   & \Tag{abstraction} \\
			              &         \sep{ | } \app{\L}{\L} & \Tag{application} \\
			              &         \sep{ | } x                  & \Tag{variable}    \\
			              &   \sep{ | } \letin{B}{\L}                          & \Tag{letrec}    \\
			\Tag{binding group} & B \sep{::=} f_1=\L \eqsep \dots \eqsep f_n=\L & \Tag{bindings} \\
			                       &   \sep{   } (f_1,\dots,f_n~\text{all distinct})   \\
		\end{array}
	\end{equation*}
\end{para}


\section{Unfolding \lambdaletrec-terms -- informal}\label{sec:unfolding_informal}

\newcommand\rulearray[1]{\[\begin{array}{lll} #1 \end{array}\]}
On this grammar we will now develop a rewriting system to describe unfolding of \lambdaletrec-terms in an informal notation.

\begin{para}[substitution of function variables]
	First we need a rule to perform the actual unfolding, i.e.\ the substitution of a function variable occurrence by the right-hand side of its definition.
	\rulearray{
		\letin{B_1 \eqsep f = \L \eqsep B_2}{f} \sep{\red_\rec} \letin{B_1 \eqsep f = \L \eqsep B_2}{\L}
	}
	This rule is only applicable to a function binding with a function variable as its body.
\end{para}

\begin{para}[distributing function bindings]
	In case of a more complex body, we distribute the function binding over its constituents. The following two rules distribute function bindings over applications and abstractions.
	\rulearray{
		\letin{B}{\app{\L_0}{\L_1}} \sep{\red_@} \app{(\letin{B}{\L_0})}{(\letin{B}{\L_1})}
		\\[1ex]
		\letin{B}{\abs{x}{\L}} \sep{\red_λ} \abs{x}{\letin{B}{\L}}
	}
\end{para}

\begin{para}[merging function bindings]
	Two nested function bindings can be merged into one:
	\rulearray{
		\letin{B_0}{\letin{B_1}{\L}} \sep{\red_\merg} \letin{B_0,B_1}{\L}
	}
\end{para}

\begin{para}[name clashes, α-renaming]
	Note, that the rules above are all in informal notation. The actual definitions (\cref{sec:unfolding_formal}) are CRS rewriting rules. Thus, name clashes (as for instance two functions of the same name being defined in \m{B_0} as well as in \m{B_1} in the above rewriting rule) are not a problem that we need to concern ourselves with, as they are dealt with by the CRS formalism. When using first-order notation we will rename variables whenever necessary (or convenient).
\end{para}

\begin{para}[garbage collection]
	The above rules would suffice to distribute the function bindings to the corresponding function variable occurrences and unfold them. In order to obtain an unfolded λ-term without any residual function bindings, we include these garbage-collection rules:
	\rulearray{
		\letin{f_1 = \L_1 \dots f_n = \L_n}{\L} \sep{\red_\reduce} \letin{f_{j_1} = \L_{j_1} \dots f_{j_{n'}} = \L_{j_{n'}}}{\L}
		\\ & & \hspace*{-4.9cm}\sideCondition{if \m{f_{j_1},\dots,f_{j_{n'}}} are the function variables reachable from \L}
		\\[1ex]
		\letin{}{\L} \sep{\red_\nil} \L
	}
	The latter discards empty function bindings, while the former removes all function bindings from a function binding that are not `reachable'. We consider a function binding to be `reachable' if the corresponding function variable either occurs in the body of the \Let-expression or in any other of the function bindings that is `reachable'. The side condition, which ensures that only superfluous bindings are removed from the binding group is non-trivial and requires a reachability analysis because there might be mutually recursive unused function bindings.
\end{para}

The above rules define a rewriting system for unfolding \lambdaletrec-terms to (possibly infinite) λ-terms.

\begin{example}[unfolding derivation of \m{\fun{fix} = \abs{f}{\letin{r=\app{f}{r}}{r}}}]\label{ex:unfolding_deriv_fix}
	\rewritingSequence{
		\init              {\abs{f}{\letin{r=\app{f}{r}}{r}}}
		\step{\red_\rec}   {\abs{f}{\letin{r=\app{f}{r}}{\app{f}{r}}}}
		\step{\red_@}      {\abs{f}{\app{(\letin{r=\app{f}{r}}{f})}{(\letin{r=\app{f}{r}}{r})}}}
		\step{\red_\reduce}{\abs{f}{\app{(\letin{}{f})}{(\letin{r=\app{f}{r}}{r})}}}
		\step{\red_\nil}   {\abs{f}{\app{f}{(\letin{r=\app{f}{r}}{r})}}}
	}
	and therefore:
	\rewritingSequence{
		\init                 {\abs{f}{\letin{r=\app{f}{r}}{r}}}
		\step{\mred_\unfold}  {\abs{f}{\app{f}{(\letin{r=\app{f}{r}}{r})}}}
		\step{\mred_\unfold}  {\abs{f}{\app{f}{(\app{f}{(\letin{r=\app{f}{r}}{r})})}}}
		\step{\infred_\unfold}{\abs{f}{\app{f}{(\app{f}{(\app{f}{\dots})})}}}
	}
	We say that \fun{fix} \emph{unfolds to} \m{~\abs{f}{\app{f}{(\app{f}{(\app{f}{\dots})})}}~} and write
	\[\unfsem{\fun{fix}} = \abs{f}{\app{f}{(\app{f}{(\app{f}{\dots})})}}\]
\end{example}

\begin{para}[meaningless bindings]\label{meaningless_bindings}
	However, not every \lambdaletrec-term represents a λ-term. For instance the \lambdaletrec-term \L defined as \[\abs{x}{\letin{f = f}{\app{f}{x}}}\] has a meaningless function binding \m{f = f} that does not unfold to a λ-term. The rewriting rules above admit only the cyclic rewriting sequence \m{\L \red_\rec \L}. Therefore \m{\L ∉ \domain{\unfsem{}}}, which means that the unfolding semantics \unfsem{} based on these rules can only be partial. In order to obtain a total unfolding semantics, we include a constant symbol to signified that the term is undefined at the point of its occurrence. As is customary since \cite{ario:klop:1996} we use the `black hole' symbol \bh, which we include in an extended version of the grammars for λ-terms and \lambdaletrec-terms. The unfolding semantics of \L will then be \abs{x}{\app{\bh}{x}}. On the extended grammar we define two additional rules for turning meaningless bindings into black holes:
	\rulearray{
		\letin{B_1, f = g, B_2}{\L} \sep{\red_\tighten} \letin{\subst{B_1}{f}{g}, \subst{B_2}{f}{g}}{\subst{\L}{f}{g}}
		\\ && \hspace{-4cm} \sideCondition{if \g is defined in \m{B_1} or \m{B_2}} \\
		\letin{B_1, f = f, B_2}{\L} \sep{\red_\bh} \letin{\subst{B_1}{f}{\bh}, \subst{B_2}{f}{\bh}}{\subst{\L}{f}{\bh}}
	}
	The former rule inlines alias functions, which simplifies meaningless bindings to the form \m{f = f}, such that they can be turned into a black hole by the latter rule. Thus we obtain an extended unfolding semantics \unfbhsem{} which (in contrast to \unfsem{}) is defined for all \lambdaletrec-terms.
\end{para}

\begin{example}[\lambdaletrec-term with a meaningless binding]
	The rules for handling meaningless bindings allow us to reduce \L from above to a normal form, which means that \m{\L ∈ \domain{\unfbhsem{}}}, or in particular \m{\unfbhsem{\abs{x}{\letin{f = f}{\app{f}{x}}}} = \abs{x}{\app{\bh}{x}}}\ \ as is witnessed by the following rewriting sequence:
	\rewritingSequence{
		\init              {\abs{x}{\letin{f = f}{\app{f}{x}}}}
		\step{\red_@}      {\abs{x}{\app{(\letin{f = f}{f})}{(\letin{f = f}{x})}}}
		\step{\red_\reduce}{\abs{x}{\app{(\letin{f = f}{f})}{(\letin{}{x})}}}
		\step{\red_\nil}{\abs{x}{\app{(\letin{f = f}{f})}{x}}}
		\step{\red_\bh}    {\abs{x}{\app{\letin{}{\bh}}{x}}}
		\step{\red_\nil}   {\abs{x}{\app{\bh}{x}}}
	}
\end{example}

\begin{example}[meaninglessness due to mutual recursion]
	\letin{f = g \eqsep g = f}{f} is meaningless due to mutually recursive functions. With the aid of \m{\red_\tighten} the term can be reduced to a normal form:
	\rewritingSequence{
		\init               {\letin{f = g \eqsep g = f}{f}}
		\step{\red_\tighten}{\letin{f = g \eqsep g = g}{g}}
		\step{\red_\reduce} {\letin{g = g}{g}}
		\step{\red_\bh}     {\letin{}{\bh}}
		\step{\red_\nil}    {\bh}
	}
\end{example}

\begin{example}[meaninglessness due to nested mutual recursion]\label{ex:non-unfoldable}
	Let us consider another \lambdaletrec-term \L defined as \letin{f=\letin{g=f}{g}}{f}, which illustrates that meaninglessness is not always tied to a simple pattern. \m{\L ∉ \domain{\unfsem{}}}, but \m{\L ∈ \domain{\unfbhsem{}}} as is witnessed by the following rewriting sequence:
	\rewritingSequence{
		\init               {\letin{f = \letin{g = f}{g}}{f}}
		\step{\red_\rec}    {\letin{f = \letin{g = f}{f}}{f}}
		\step{\red_\reduce} {\letin{f = \letin{}{f}}{f}}
		\step{\red_\nil}    {\letin{f = f}{f}}
		\step{\red_\bh}     {\letin{}{\bh}}
		\step{\red_\nil}    {\bh}
	}
\end{example}

\begin{para}[\m{\red_\unfold} and \m{\red_\unfoldbh}]
	We define rewriting relations \m{\red_\unfold} and \m{\red_\unfoldbh} for unfolding \lambdaletrec-terms, where \m{\red_\tighten} and \m{\red_\bh} are included only in \m{\red_\unfoldbh}:
	\begin{align*}
		&\red_\unfold   &=~~~& \bigcup\setcompr{\red_\arule}{\arule ∈ \set{@,λ,\merg,\reduce,\nil}} \\
		&\red_\unfoldbh &=~~~& \bigcup\setcompr{\red_\arule}{\arule ∈ \set{@,λ,\merg,\reduce,\nil,\tighten,\bh}}
	\end{align*}
\end{para}


\begin{para}[necessity of \m{\red_\reduce} and \m{\red_\nil}]\label{rem:reduce:nil:motivation}
	The purpose of \m{\red_\reduce} together with \m{\red_\nil} is to prevent unbounded growth of binding groups during unfolding. Consider for instance the outermost rewrite sequence on the term \m{{\letin{f=\letin{g={\app{f}{g}}}{g}}{f}}} shown in \cref{fig:unbounded:growth:bindgroup}, where after the fourth rewriting step \m{g'} (the left one) becomes unreachable and could be removed by a \m{\red_\reduce}-step.
	\begin{figure}
		\begin{align*}
			&&& \letin{f=\letin{g={\app{f}{g}}}{g}}{f}
			\\
			& \m{\red_\rec} && \letin{f=\letin{g={\app{f}{g}}}{g}}{\letin{g'={\app{f}{g'}}}{g'}}
			\\
			& \m{\red_\merg} && \letin{
			\begin{array}{l}
				f=\letin{g={\app{f}{g}}}{g}
				\\
				g'={\app{f}{g'}}
			\end{array}
			}{g'}
			\\
			& \m{\red_\rec} && \letin{
			\begin{array}{l}
				f=\letin{g={\app{f}{g}}}{g}
				\\
				g'={\app{f}{g'}}
			\end{array}
			}{{\app{f}{g'}}}
			\\
			& \m{\red_@} && \app{\left(\letin{
			\begin{array}{l}
				f=\letin{g={\app{f}{g}}}{g}
				\\
				g'={\app{f}{g'}}
			\end{array}
			}{f}\right)}
			{\left(\letin{
			\begin{array}{l}
				f=\letin{g={\app{f}{g}}}{g}
				\\
				g'={\app{f}{g'}}
			\end{array}
			}{g'}\right)}
			\\
			& \m{\red_\rec} &&
			\appbreak{\letin{\begin{array}{l}
				f=\letin{g={\app{f}{g}}}{g}
				\\
				g'={\app{f}{g'}}
			\end{array}}
			{\letin{g''={\app{f}{g''}}}{g''}}}
			{\letin{
			\begin{array}{l}
				f=\letin{g={\app{f}{g}}}{g}
				\\
				g'={\app{f}{g'}}
			\end{array}
			}{g'}}
			\\
			& \m{\red_\merg} &&
			\app
			{\Biggl(\letin{
			\begin{array}{l}
				f=\letin{g={\app{f}{g}}}{g}
				\\
				g'={\app{f}{g'}}
				\\
				g''={\app{f}{g''}}
			\end{array}
			}{g''}\Biggr)}
			{\left(\letin{
			\begin{array}{l}
				f=\letin{g={\app{f}{g}}}{g}
				\\
				g'={\app{f}{g'}}
			\end{array}
			}{g'}\right)}
		\end{align*}
		\caption{Unbounded growth of binding groups indicated by the initial segment of an infinite \m{\red_\unfold}-rewrite-sequence without \m{\red_\reduce}-steps.}
		\label{fig:unbounded:growth:bindgroup}
	\end{figure}
	\par While restricting the size of binding groups during unfolding is a sensible constraint on the unfolding process, it is not strictly necessary to define the unfolding of a \lambdaletrec-term. Alternatively, we could employ a rule as follows:
	\newcommand\free{\text{free}}
	\[
		\letin{f_1 = \L_1 \dots f_n = \L_n}{\L}
		\red_\free \L \hspace{3ex} \sideCondition{if \m{f_1,\dots,f_n} do not occur in \L}
	\]
	Note that a \m{\red_\free}-step can be simulated by a \m{\red_\reduce}-step followed by a \m{\red_\nil}-step.
	We will, however, at a later point embed the unfolding rules into other rewriting systems of which we wish to perform unfolding in a lazy way such that the number of derivable subterms is bounded. Using the \m{\red_\free}-rule instead of \m{\red_\reduce} and \m{\red_\nil} the size of the bindings groups in \cref{fig:unbounded:growth:bindgroup} keeps growing and we obtain an infinite number of different subterms.
\end{para}

\begin{para}[informal notation]\label{para:informal_notation}
	We will be using the informal notation as above throughout most of the thesis. For reasoning however, we lean on the theory of higher-order rewriting. In this way we can avoid the ado of an explicit substitution calculus, which would be required for sound reasoning in a first-order formulation.
\end{para}

\section{Preliminaries}\label{sec:prelims}

\begin{notation}[ℕ, natural numbers]
	By ℕ we denote the natural numbers including zero. We let \m{ℕ = \set{0,1,\dots}}.
\end{notation}

\begin{notation}[functions, domain, image]
	For a total function \m{f : A → B} we denote by \domain{f} the \emph{domain} \A, and by \image{f} the \emph{image} \B of \f. \par For a partial function \m{f : A ⇀ B}, and \m{a ∈ A} we denote by \m{\defd{f(a)}} that \f is defined for \a. The \emph{domain} of \f is the set \m{\domain{f} := \setcompr{a ∈ A}{\defd{f(a)}}}.
\end{notation}

\begin{notation}
	We denote by \srestrictto{f}{D} the restriction of function \f to domain \m{D}.
\end{notation}

\begin{notation}[rewrite relations]
	Let \m{\red\ \subseteq A \times A} be a rewrite relation. We denote by \mred the \emph{many-step} rewrite relation induced by \red, by which we mean the reflexive and transitive closure of \red. By \morestepred we denote the \emph{one-or-more-step} rewrite relation of \red, the transitive closure of \red. By \eqred we mean the \emph{zero-or-one-step} rewrite relation of \red, the reflexive closure of \red. By \infred we denote the infinite rewrite relation of finitely of infinitely many \red-steps (see \cref{infred} for more details).
	By a normal form of \red we mean an \m{a ∈ A} such that there is no \m{a' ∈ A} with \m{a \red a'}. By \nfred we mean the \emph{reduction to normal form} rewrite relation induced by \red. It is equivalent to the restriction of \mred to a relation with the normal forms of \red as codomain: \m{\nfred\ = \setcompr{\pair{a}{a'}}{a \mred a', \text{\m{a'} normal form of \red}}}.
\end{notation}

\section{λ-terms and \lambdaletrec-terms -- CRS formalisation}\label{sec:terms_formal}

\begin{para}[Combinatory Reduction Systems]
	Many of the formalisations we introduce are based on the framework of Combinatory Reduction Systems (CRSs) \cite{klop:1980}, \cite{klop:oost:raam:1993} \cite[Section~11.3]{terese:2003}, and, in particular, on infinitary Combinatory Reduction Systems (iCRSs) \cite{kete:simo:2011}.
	CRSs are a higher-order term rewriting framework tailor-made for formalising and manipulating expressions in higher-order languages (i.e.\ languages with binding constructs like λ-abstractions and function bindings). They provide a sound basis for defining our language and for reasoning with \letrec-expressions. By formalising a system of unfolding rules as a CRS we conveniently externalise issues like name capturing and α-renaming, which otherwise would have to be handled by a calculus of explicit substitution. Also, we can lean on the rewriting theory of CRSs for the proofs.
\end{para}

\begin{remark}[infinitary rewriting]
	We rely on CRSs (and not for instance Higher-Order Rewriting Systems (HRSs) \cite{terese:2003}) as a rewriting framework since to date infinitary rewriting theory has only been developed for CRSs \cite{kete:simo:2009, kete:simo:2010, kete:simo:2011}.
\end{remark}


\begin{para}[the `calculi' \lambdacal, \lambdabhcal, and \lambdaletreccal]
	We will use the symbols \lambdacal, \lambdabhcal, and \lambdaletreccal to refer to the λ-calculus, the λ-calculus with black holes, and the \lambdaletrec-calculus with \letrec. However, we consider the former two calculi not to have any rewriting rules at all, since we concern ourselves only with unfolding, not with β-reduction.
\end{para}

For formulating the rules from \cref{sec:unfolding_informal} as a CRS, we provide CRS signatures for \lambdacal, \lambdabhcal and \lambdaletreccal.

\begin{definition}[CRS signatures for \lambdacal, \lambdabhcal, and \lambdaletreccal]\label{def:sigs:lambdacal:lambdaletrec:CRS}
	The CRS signature for \lambdacal consists of the set \m{\sigCRS = \set{\sappCRS, \sabsCRS}} where \sappCRS is a binary (\m{\arity{\sappCRS} = 2}) and \sabsCRS is a unary (\m{\arity{\sabsCRS} = 1}) function symbol. The CRS signature for \lambdabhcal is \m{\sigCRSbh = \sigCRS ∪ \set{\bh}} where \bh is a nullary (\m{\arity{\bh} = 0}) function symbol. The CRS signature \sigCRSletrec consists of the countably infinite set \m{\sigCRSletrec = \sigCRSbh ∪ \set{\LetCRS} ∪ \setcompr{\InCRS{n}}{n ∈ ℕ}} of function symbols, with \m{\arity{\LetCRS}=1} and \m{\arity{\InCRS{n}}=n+1} for all \m{n ∈ ℕ}.
\end{definition}

\begin{definition}[set of λ-terms]\label{def:set_of_lambda_terms}
	By \iTer{\lambdacal} we denote the set of closed iCRS terms (terms in an infinitary CRS \cite{kete:simo:2011}) over \sigCRS. Likewise, we denote by \iTer{\lambdabhcal} the set of closed iCRS terms over \sigCRSbh.
	Note that the set includes finite as well as infinite terms, thus whenever we speak of λ-terms we refer to λ-terms that are either finite or infinite.
	\par We will look more closely at iCRS-terms in \cref{iCRS-terms}
\end{definition}

\begin{definition}[set of \lambdaletrec-terms]\label{def:set_of_lambdaletrec_terms}
	By \Ter{\lambdaletreccal} we denote the set of closed CRS terms over \sigCRSletrec, with the restrictions
	\begin{itemize}
		\item that there is no occurrence of the \bh-symbol
		\item that \LetCRS and \InCRS{n} can only occur as patterns of the form \[\letCRS{n}{f_1 \dots f_n}{\dots}\]
		\item and that otherwise a CRS abstraction can only occur directly beneath an \sabsCRS-symbol.
	\end{itemize}
\end{definition}

\begin{example}[\app{\fun{fix}}{\fun{id}}]\label{ex:crs_fix_id}
	The naive version of \fun{fix} applied to the identity function as in \cref{ex:unf_fix_id} in CRS notation:
	\[\appCRS{\letCRS{1}{\fun{fix}}{\absCRS{f}{\appCRS{f}{\appCRS{\fun{fix}}{f}}}, \fun{fix}}}{\absCRS{x}{x}}\]
	The unfolding of \fun{fix} in CRS notation:
	\[\absCRS{f}{\appCRS{f}{\appCRS{\absCRS{f}{\appCRS{f}{\appCRS{\dots}{f}}}}{f}}}\]
\end{example}

\begin{example}[\fun{fix}]\label{ex:crs-notation:fix}
	The (not so naively implemented) \fun{fix}-function from \cref{ex:unfolding_deriv_fix} in CRS notation:
	\[\absCRS{f}{(\letCRS{1}{r}{\appCRS{f}{r}, r})}\]
\end{example}


\section{Unfolding Rules -- CRS formalisation}\label{sec:unfolding_formal}

Here we give a CRS formalisation of the rules for unfolding \lambdaletrec-terms, corresponding to unfolding as described informally in \cref{sec:unfolding_informal}.

\begin{definition}[CRSs \unfCRS and \unfbhCRS for unfolding \lambdaletrec-terms]\label{def:unfCRS}
	\unfCRS and \unfbhCRS for unfolding \lambdaletrec-terms are CRSs over the signature \sigCRSletrec. The rules of \unfbhCRS consist of all the rule schemes below, while in \unfCRS the last two rules schemes are excluded (\unfrule{\tighten} and \unfrule{\bh}).
	We use vector notation to denote sequences of CRS abstractions (\vec{f} instead of \m{f_1,\dots,f_n}) and metaterms (\m{\vec{B}(\vec{f})} instead of \m{B_1(\vec{f}),\dots,B_n(\vec{f})}).

	\newcommand\rulename[1]{&\unfrule{#1}:~}
	\newcommand\rewritestoBreak{\\&\hspace{5mm}→}
	\newcommand\rewritesto{\red}
	\newcommand\nextRule{\\[0.5em]}
	\newcommand\annotation[1]{\\&\hspace{1cm}{#1}}
	\newcommand\annotationHigher[1]{\\[-1em]&\hspace{1cm}{#1}}

	\begin{align*}
		\rulename{λ}
			\letCRS{n}{\vec{f}}{\vec{B}(\vec{f}), \absCRS{x}{M(\vec{f},x)}}
			\rewritestoBreak
			\absCRS{x}{\letCRS{n}{\vec{f}}{\vec{B}(\vec{f}), M(\vec{f},x)}}
			\nextRule
		\rulename{@}
			\letCRS{n}{\vec{f}}{\vec{B}(\vec{f}), \appCRS{M(\vec{f})}{N(\vec{f})}}
			\rewritestoBreak
			\appCRSbreak{\letCRS{n}{\vec{f}}{\vec{B}(\vec{f}),M(\vec{f})}}{\letCRS{n}{\vec{f}}{\vec{B}(\vec{f}),N(\vec{f})}}
			\nextRule
		\rulename{\merg}
			\letCRS{n}{\vec{f}}{\vec{B}(\vec{f}), \letCRS{m}{\vec{g}}{\vec{C}(\vec{f},\vec{g}), M(\vec{f},\vec{g})}}
			\rewritestoBreak
			\letCRS{n+m}{\vec{f}\vec{g}}{\vec{B}(\vec{f}), \vec{C}(\vec{f},\vec{g}), M(\vec{f},\vec{g})}
			\nextRule
		\rulename{\rec}
			\letCRS{n}{\vec{f}}{\vec{B}(\vec{f}), f_i}
			\rewritesto
			\letCRS{n}{\vec{f}}{\vec{B}(\vec{f}), {B}_i(\vec{f})}
			\nextRule
		\rulename{\nil}
			\letCRS{0}{}{M}
			\rewritesto
			M
			\nextRule
		\rulename{\reduce}
		\letCRS{n}{\vec{f}}{\listcompr{i ∈ \set{1,\dots,n}}{\hspace{-2.5ex}B_i(\vec{f}_i)}~,~~M(\vec{f'})}
			\rewritestoBreak
			\letCRS{|I|}{\vec{f}'}{\listcompr{i∈I}{B_i(\vec{f}')}~,~~M(\vec{f'})}
			\annotation{\text{for some \m{I \subset \set{1,\dots,n}}}}
			\annotationHigher{\text{with } \m{\vec{f}' = \listcompr{i ∈ I}{f_i}} \text{~~~and } \m{\vec{f}_i = \begin{cases}\vec{f}' & i∈I \\ \vec{f} & i∉I\end{cases}}}
			\nextRule
		\rulename{\tighten}
			\letCRS{n}{\vec{f}}{\vec{B}(\vec{f}), M(\vec{f})} \text{ where } B_i(\vec{f}) = f_j
			\rewritestoBreak
			\letCRS{n-1}{\vec{g}}{
				\begin{array}{l}
					B_1(\vec{g}'),\dots,{B_{i-1}}(\vec{g}'),\\
					\indent B_{i+1}(\vec{g}'),\dots,{B_n}(\vec{g}'), M(\vec{g}')
				\end{array}
			}
			\annotation{
				\begin{array}{l}
					\text{for some } i,j ∈ \set{1,\dots,n} \text{ with } i ≠ j\\
					\text{and } \vec{g}' = \tuple{g_1,\dots,g_{i-1},g_k,g_{i+1},\dots,g_{n-1}}\\
					\text{where \m{k=j} if \m{j<i} and \m{k=j-1} if \m{j>i}}
				\end{array}
			}
			\nextRule
		\rulename{\bh}
			\letCRS{n}{\vec{f}}{\vec{B}(\vec{f}), M(\vec{f})} \text{ where } B_i(\vec{f}) = f_i
			\rewritestoBreak
			\letCRS{n-1}{\vec{g}}{
				\begin{array}{l}
					B_1(\vec{g}'),\dots,{B_{i-1}}(\vec{g}'),\\
					\indent B_{i+1}(\vec{g}'),\dots,{B_n}(\vec{g}'), M(\vec{g}')
				\end{array}
			}
			\annotation{
				\begin{array}{l}
					\text{for some } i ∈ \set{1,\dots,n}\\
					\text{and } \vec{g}' = \tuple{g_1,\dots,g_{i-1},\bh,g_{i+1},\dots,g_{n-1}}
				\end{array}
			}
	\end{align*}

	The first four rule schemes require little explanation. The rules are generated from having \n and \m{m} range over ℕ. The length of the vectors \vec{f}, \vec{g}, \vec{B}, and \vec{C} are stipulated by the subscript of the \Let-symbol in accordance to \cref{def:set_of_lambdaletrec_terms}. \unfrule{\nil} is actually a rule, not a rule scheme.
	\par The formulation of \unfrule{\reduce} uses, as an abbreviation, an ad-hoc list-builder notation, which works just as customary mathematical notation for, say, the union by indexing over an ordered set. Compare:
	\begin{align*}
		\bigcup_{i∈\set{i_1,\dots,i_n}} \hspace{-2.5ex} A(i)    &= A(i_1) ∪ \dots ∪ A(i_n) \\[-1.3em]
		&&& \text{where } i_1 < \dots < i_n \hfill \\[-0.3em]
		\listcompr{i∈\set{i_1,\dots,i_n}}{\hspace{-2.5ex} A(i)} &= A(i_1), \dots, A(i_n)
	\end{align*}
	Moreover, in the rule scheme \unfrule{\reduce} not only does \n range over ℕ; also the index set \m{I} ranges over all subsets of \set{1,\dots,n}. The purpose of the rule scheme is to remove all bindings that are not \emph{required}. A binding is considered required if it is used directly or indirectly by \M. \m{I} is (by formulation of the rule scheme) a superset of all required bindings, or to be more precise: for a given term, only if \m{I} is chosen as a superset of the required bindings, the rule scheme yields an CRS rule applicable to the term. From the set of `valid' choices we consider the single minimal choice for \m{I} in order to remove \emph{all} unrequired bindings at once. An implementation of this rule scheme would entail a reachability analysis, which is implicit here.
\end{definition}

\begin{example}[\unfrule{\reduce}]
	To understand the rule scheme \unfrule{\reduce}, consider the term \m{\L=\letin{f_1=f_2 \eqsep f_2=f_1}{f_1}} or in CRS notation \m{\L=\letCRS{2}{f_1 f_2}{f_2, f_1, f_1}}. Considering only the four possibilities we have for \m{I}, \unfrule{\reduce} induces the following four CRS rules:
	\providecommand\CRSline[3]{\set{#1}:~~ & #2 \\ & ~~~~ → ~~ #3 \\}
	\begin{align*}
		\CRSline{   }{\letCRS{2}{f_1 f_2}{B_1(f_1,f_2), B_2(f_1,f_2), M()}}        {\letCRS{0}{}{M()}}
		\CRSline{ 1 }{\letCRS{2}{f_1 f_2}{B_1(f_1), B_2(f_1,f_2), M(f_1)}}         {\letCRS{1}{f_1}{B_1(f_1), M(f_1)}}
		\CRSline{ 2 }{\letCRS{2}{f_1 f_2}{B_1(f_1,f_2), B_2(f_2), M(f_2)}}         {\letCRS{1}{f_2}{B_2(f_2), M(f_2)}}
		\CRSline{1,2}{\letCRS{2}{f_1 f_2}{B_1(f_1,f_2), B_2(f_1,f_2), M(f_1,f_2)}} {\letCRS{2}{f_1,f_2}{B_2(f_1,f_2), M(f_1,f_2)}}
	\end{align*}
	The first and the third rule are not applicable to \L, since \M has an occurrence of \m{f_1}. The rules induced by \m{I=\set{1}} and \m{I=\set{2}} are not applicable to \L, because \m{B_1} has an occurrence of \m{f_2} and \m{B_2} has an occurrence of \m{f_1}. All the bindings here are used by \M, therefore the only applicable rule is the last one, which does not alter \L at all.
	\par Now let us consider the term \m{\L'=\letin{f_1=f_2 \eqsep f_2=f_1}{x}} or in CRS notation \m{\L'=\letCRS{2}{f_1 f_2}{f_2, f_1, x}}. As before the rules induced by \m{I=\set{1}} and \m{I=\set{2}} are not applicable. However this time not only the last but also the first rule is applicable, indicating that none of the bindings are used by \M and thus all of them can be removed in one \m{\red_\reduce}-step.
\end{example}

\begin{definition}[garbage free]\label{def:garbage-free}
	We call a \lambdaletrec-term that is a normal form w.r.t.\ to the rules \unfrule{\reduce} and \unfrule{\nil} \emph{garbage free}.
\end{definition}

\begin{notation}[\rulep{\arule}, \rulebp{s}{\arule}]\label{notation:rules}
	Above we used the notation \rulebp{s}{\arule} to denote a rule named \arule that belongs to the rewriting system \m{s}. We will omit the \m{s} and write \rulep{\arule} to denote a rule \arule if it is unambiguous to which rewriting system \arule belongs to.
\end{notation}

\begin{notation}[rewriting relations for \unfCRS and \unfbhCRS]
	We write \m{\red_\unfold} (\m{\red_\unfoldbh}) for the rewrite relation induced by \unfCRS (\unfbhCRS). And by
	\m{\red_λ},
	\m{\red_@},
	\m{\red_\merg},
	\m{\red_\rec},
	\m{\red_\nil},
	\m{\red_\reduce},
	\m{\red_\tighten}, and
	\m{\red_\bh},
	we denote the rewrite relations of both the CRSs that are induced by the rules
	\rulep{λ},
	\rulep{@},
	\rulep{\merg},
	\rulep{\rec},
	\rulep{\nil},
	\rulep{\reduce},
	\rulep{\tighten}, and
	\rulep{\bh},
	respectively.
\end{notation}

\begin{remark}[alternative formalisation with permutations]
	Note that in the rule patterns in \unfCRS we have to ensure that the function binding we want to refer to may be at an arbitrary position among the function bindings of a \Let-expression. An alternative approach would be to adding a permutation rule by which two adjacent function bindings can be swapped. Then the remaining rules which refer to a function binding could be written such that they always make use of the first function binding of the \Let-expression. This approach is for instance used in \cite{grab:roch:letfloating}.
\end{remark}


\begin{para}[informal notation]
	As mentioned in \cref{para:informal_notation} when studying examples we will mostly rely on the easier-to-read informal notation from \cref{sec:unfolding_informal} instead of the more cumbersome CRS notation.
\end{para}

\section{Unfolding Semantics of \lambdaletrec}\label{sec:unfolding_semantics}

Based on the CRSs \unfCRS and \unfbhCRS we define the unfolding semantics of \lambdaletrec-terms as its infinite unique normal form. The well-definedness of the functions below is witnessed by \cref{thm:unfsem_well-defined} below.

\begin{definition}[partial and total unfolding semantics]\label{def:unf-mapping}
	\unfCRS induces the unfolding semantics
	\begin{align*}
		\unfsem{} : \Ter{\lambdaletreccal} &⇀ \iTer{\lambdacal}\\
		            \L                &↦ \M ~~~ \text{if}~\L \infnfred_\unfold \M
	\end{align*}
	which is partial, because of meaningless bindings. (see \cref{meaningless_bindings})\\

	\unfbhCRS induces the unfolding semantics
	\begin{align*}
		\unfbhsem{} : \Ter{\lambdaletreccal} &→ \iTer{\lambdabhcal}\\
		              \L                &↦ \M ~~~ \text{if}~\L \infnfred_\unfoldbh \M
	\end{align*}
	which is total, as it maps meaningless terms to \bh.
\end{definition}

\begin{theorem}\label{thm:unfsem_well-defined}
	\unfsem{} and \unfbhsem{} are well-defined functions.
\end{theorem}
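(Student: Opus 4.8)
The statement bundles two claims: that $\unfbhsem{}$ is a \emph{total} single-valued map, and that $\unfsem{}$ is single-valued (hence a well-defined partial map). Both rest on two properties of the CRSs of \cref{def:unfCRS}: \textbf{(a)} \emph{unique infinitary normal forms} --- if $\L \infred \M_1$ and $\L \infred \M_2$ by strongly convergent reductions to normal forms (cf.\ \cref{infred}) then $\M_1 = \M_2$ --- needed for both systems; and \textbf{(b)} \emph{infinitary normalisation} --- every term admits a strongly convergent reduction to a normal form --- needed only for $\unfbhCRS$. I would open with the elementary observation that the normal forms of $\unfCRS$ (resp.\ $\unfbhCRS$) are exactly the $\LetCRS$-free iCRS-terms over $\sigCRS$ (resp.\ $\sigCRSbh$): every subterm headed by $\LetCRS$ matches the left-hand side of some rule scheme, while a $\LetCRS$-free term matches none. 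In particular every $\unfold$-normal form is automatically $\bh$-free, which is what puts $\unfsem{}$ into $\iTer{\lambdacal}$.

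For \textbf{(b)} I would supply an explicit normalising strategy, ``contract an outermost $\LetCRS$'' (which never gets stuck, by the normal-form characterisation). The key local claim is that a term whose head is $\letin{B}{M}$ can in \emph{finitely} many steps be turned into one where every $\LetCRS$ sits strictly below a λ- or application node (via $\rulep{λ}$, $\rulep{@}$, after possibly $\rulep{\merg}$ and repeated $\rulep{\rec}$), or be collapsed to $\bh$, or be deleted. The only way such a round can fail to terminate is a $\rulep{\rec}$-cycle through aliasing bindings $f_{i_1}=f_{i_2},\dots,f_{i_k}=f_{i_1}$ (or a self-loop $f=f$), i.e.\ the meaningless-binding phenomenon of \cref{meaningless_bindings}; in $\unfbhCRS$ such a cycle is shortened by finitely many $\rulep{\tighten}$-steps to a self-loop and removed by $\rulep{\bh}$ (with $\rulep{\reduce}$ and $\rulep{\nil}$ cleaning up), whereas in $\unfCRS$ it is irreducible --- exactly the source of the partiality of $\unfsem{}$. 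Interleaving these rounds over all outermost $\LetCRS$-occurrences drives the least depth of a surviving $\LetCRS$ to $\omega$, giving a strongly convergent reduction whose limit is $\LetCRS$-free, hence an element of $\iTer{\lambdabhcal}$; the Compression Lemma for iCRSs yields such a reduction of length at most $\omega$ if desired.

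For \textbf{(a)} I would try to obtain unique infinitary normal forms from the confluence theory of orthogonal iCRSs \cite{kete:simo:2009, kete:simo:2010, kete:simo:2011}. The systems are left-linear, and the four structural schemes $\rulep{λ},\rulep{@},\rulep{\merg},\rulep{\rec}$ have pairwise non-unifiable left-hand sides (they are told apart by the shape of the $\LetCRS$-body), so every overlap involves $\rulep{\reduce}$, $\rulep{\nil}$, $\rulep{\tighten}$ or $\rulep{\bh}$, each firing only at the root of a $\LetCRS$ inside its binding group; I would check (routinely) that all these critical pairs are joinable. The genuine difficulty is that $\rulep{\reduce}$ --- with $\rulep{\nil}$ its degenerate $n{=}0$ case --- root-overlaps \emph{every} other scheme, so the system is not even weakly orthogonal. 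My plan is to show that $\rulep{\reduce}$- and $\rulep{\nil}$-steps are \emph{inessential}: they strongly commute with, and can be postponed behind, the remaining steps along every finite prefix that matters for strong convergence, so that up to this reorganisation the behaviour is that of a fully-extended orthogonal iCRS. For such a system the infinitary confluence results apply in the ``modulo hypercollapsing subterms'' form (as in the first-order case of Kennaway, Klop, Sleep and de Vries): two strongly convergent reductions out of $\L$ are cofinal up to identification of hypercollapsing subterms --- here, subterms that loop under $\rulep{\rec}$/$\rulep{\nil}$. Since a normal form is $\LetCRS$-free it has no hypercollapsing subterm at all, so the two normal forms must be syntactically equal.

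Combining the pieces: \textbf{(b)} makes $\unfbhsem{}$ everywhere defined and \textbf{(a)} makes it single-valued, so $\unfbhsem{}\colon\Ter{\lambdaletreccal}\to\iTer{\lambdabhcal}$ is a function; \textbf{(a)} alone makes $\unfsem{}$ single-valued, so it is a well-defined partial function $\Ter{\lambdaletreccal}\rightharpoonup\iTer{\lambdacal}$ whose domain is the set of $\L$ possessing a (necessarily $\bh$-free) $\red_\unfold$-normal form. The main obstacle is part \textbf{(a)}: because $\rulep{\reduce}$ breaks orthogonality and $\rulep{\rec}$ manufactures hypercollapsing redexes, infinitary confluence is not available off the shelf, and the real work is the postponement/commutation lemma for the garbage-collection rules together with the bookkeeping needed to keep all reductions strongly convergent while they are rearranged.
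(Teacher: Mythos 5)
Your decomposition of the claim — (i) the normal forms of \unfCRS and \unfbhCRS are exactly the \LetCRS-free terms, hence λ-terms resp.\ \lambdabh-terms, (ii) infinitary normalisation for \unfbhCRS, (iii) uniqueness of (infinitary) normal forms — is exactly the paper's (\cref{lem:unf-nf_well-formed}, \cref{lem:unf-normalisation}, \cref{lem:unique_normal_forms}), and your arguments for (i) and (ii) essentially match the paper's: in particular the observation that an outermost round can only stall on an aliasing \rulep{\rec}-cycle, which \rulep{\tighten} and \rulep{\bh} resolve in \unfbhCRS and which is irreducible in \unfCRS, is the core of the paper's proof of \cref{lem:unf-normalisation}.

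The gap is in your part \textbf{(a)}. You assert that the four structural schemes have pairwise non-unifiable left-hand sides, so that ``every overlap involves $\rulep{\reduce}$, $\rulep{\nil}$, $\rulep{\tighten}$ or $\rulep{\bh}$'', and that after postponing the garbage-collection rules one is left with a fully-extended orthogonal iCRS to which confluence-modulo-hypercollapsing applies off the shelf. That is not so. The left-hand side of \rulep{\merg} contains the inner \LetCRS-expression as a non-metavariable subpattern, so \rulep{\merg} has \emph{nested} critical pairs with every rule applicable to that inner \Let-expression — including \rulep{λ}, \rulep{@}, \rulep{\rec} and \rulep{\merg} itself — and in \unfbhCRS the schemes \rulep{\rec}, \rulep{\tighten} and \rulep{\bh} root-overlap one another (e.g.\ \letin{f = g \eqsep g = \L}{f} is simultaneously a \rulep{\rec}- and a \rulep{\tighten}-redex). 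So postponing \rulep{\reduce} and \rulep{\nil} does not leave an orthogonal system, the cited infinitary confluence results do not apply, and the joinability check you defer as ``routine'' is in fact the substantive content of the proof. The paper instead establishes finitary confluence directly (\cref{prop:unf-confluence}, worked out in \cref{app:conf_proof}) by decreasing diagrams on \emph{parallel} steps indexed by the \letrec-depth of the contracted redexes — parallel steps are used precisely because redex duplication defeats decreasingness for single steps under any rule precedence — together with a comprehensive critical-pair analysis that covers the \rulep{\merg}-overlaps and the \rulep{\rec}/\rulep{\tighten}/\rulep{\bh}-overlaps above; uniqueness of infinitary normal forms is then derived from finitary confluence. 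To repair your route you would either have to carry out that commutation and critical-pair analysis yourself (thereby reproving \cref{prop:unf-confluence}) or locate an infinitary confluence theorem tolerant of these overlaps; neither is routine.
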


\begin{proof}
	Well-definedness of these mappings is guaranteed by the following properties of \unfCRS and \unfbhCRS:
	\begin{description}
		\item[\Cref{lem:unf-normalisation} (infinite normalisation):] the existence of a normal form (only required for \unfbhCRS)
		\item[\Cref{lem:unf-nf_well-formed} (well-formedness of the normal forms):] normal forms do indeed adhere to the signatures \sigCRS, and \sigCRSbh, respectively.
		\item[\Cref{lem:unique_normal_forms} (uniqueness of normal forms)]
	\end{description}
\end{proof}

\begin{proposition}[\unfsem{} is a specialisation \unfbhsem{}]
	\[ ∀ \L ∈ \domain{\unfsem{}} ~ \unfsem{\L} = \unfbhsem{\L} \]
	\begin{proof}
		For \unfCRS as well as \unfbhCRS a \lambdaletrec-term is a normal form if and only if it is a \Let-expression. Therefore every normal form of \unfsem{} is a normal form of \unfbhsem{}. Reachability of the normal forms is guaranteed by the fact that the rules of \unfbhCRS are a superset of the rules of \unfCRS.
	\end{proof}
\end{proposition}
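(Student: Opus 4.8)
The plan is to exploit two structural facts about how the two systems are related: the rule set of $\unfbhCRS$ is obtained from that of $\unfCRS$ by merely \emph{adding} the two schemes $\unfrule{\tighten}$ and $\unfrule{\bh}$, and --- crucially --- the two systems have exactly the same normal forms, namely the terms containing no subterm headed by $\LetCRS$. Given these, the proposition is almost immediate.

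First I would fix $\L \in \domain{\unfsem{}}$ and put $\M := \unfsem{\L}$, so that by \cref{def:unf-mapping} there is a strongly convergent reduction $\L \infnfred_\unfold \M$ with $\M$ a normal form of $\unfCRS$. Since $\red_\unfold \subseteq \red_\unfoldbh$ --- every rule of $\unfCRS$ being a rule of $\unfbhCRS$ --- that very reduction is also a strongly convergent $\unfbhCRS$-reduction, whence $\L \infred_\unfoldbh \M$. It then remains only to check that $\M$ is a normal form of $\unfbhCRS$ as well: by \cref{lem:unf-nf_well-formed} the $\unfCRS$-normal form $\M$ is a term over $\sigCRS$, hence contains no subterm headed by $\LetCRS$; and since the left-hand side of every rule scheme of $\unfbhCRS$ --- including the two extra ones --- is headed by $\LetCRS$, such a term admits no redex in $\unfbhCRS$. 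Thus $\L \infnfred_\unfoldbh \M$.

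To conclude I would appeal to the well-definedness results already available: by \cref{thm:unfsem_well-defined} the map $\unfbhsem{}$ is a well-defined total function, and by \cref{lem:unique_normal_forms} (applied to $\unfbhCRS$) the $\unfbhCRS$-normal form reachable from a term is unique; since $\L \infnfred_\unfoldbh \M$ exhibits one for $\L$, it follows that $\unfbhsem{\L} = \M = \unfsem{\L}$. As $\L$ was arbitrary in $\domain{\unfsem{}}$, the proposition follows.

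I expect the only step carrying real content to be the one I have delegated to \cref{lem:unf-nf_well-formed}, namely that a normal form of $\unfCRS$ retains no residual $\LetCRS$ anywhere, not merely at the root. If a self-contained argument were wanted, it would show that \emph{any} occurrence of a subterm of the form $\letin{B}{L}$ is itself a $\unfCRS$-redex --- via $\unfrule{\nil}$ if $B$ is empty; via $\unfrule{\reduce}$, with index set the bindings of $B$ reachable from $L$, if some binding is unreachable (which subsumes the case where $L$ is a $\lambda$-variable and the case of a self-contained, mutually recursive block of unused bindings); and otherwise, $L$ being then necessarily an abstraction, an application, a $\Let$-expression, or a $\Let$-bound variable of $B$, via $\unfrule{\lambda}$, $\unfrule{@}$, $\unfrule{\merg}$, or $\unfrule{\rec}$ respectively --- while observing that binding groups stay finite at every position even inside infinite terms, so that the reachability analysis underlying $\unfrule{\reduce}$ is meaningful. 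Everything past that point is bookkeeping with the rule-set inclusion and the previously established uniqueness of normal forms.
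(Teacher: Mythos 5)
Your proposal is correct and follows essentially the same route as the paper's own proof: the two systems share the same normal forms (terms with no residual $\LetCRS$), the rule-set inclusion $\red_\unfold \subseteq \red_\unfoldbh$ transfers the normalising reduction from $\unfCRS$ to $\unfbhCRS$, and uniqueness of normal forms closes the argument. Your extra remarks — spelling out the appeal to \cref{lem:unique_normal_forms} and sketching why every $\Let$-expression is a redex — just make explicit what the paper leaves implicit.
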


\begin{terminology}
	We say that a \lambdaletrec-term \L\ \emph{unfolds to} \M if either \m{\unfsem{\L} = \M} or \m{\unfbhsem{\L} = \M}. Which is meant, should be clear from the context.
\end{terminology}

We finish the section with the lemmas required for \cref{thm:unfsem_well-defined}.

\begin{lemma}[well-formedness of the normal forms]\label{lem:unf-nf_well-formed}
	\unfCRS (\unfbhCRS) has λ-terms (\lambdabh-terms) as normal forms.
\end{lemma}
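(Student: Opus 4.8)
The plan is to reduce the lemma to the following observation: \emph{every iCRS-term $M$ over \sigCRSletrec{} in which each occurrence of \LetCRS{} heads a subterm of the shape \m{\letCRS{n}{\vec{f}}{\vec{B}(\vec{f}), L(\vec{f})}}, and which contains at least one \LetCRS-symbol, admits a \unfCRS-step.} This shape constraint holds for \lambdaletrec-terms by \cref{def:set_of_lambdaletrec_terms}; it is preserved by every rule of \unfCRS{} and \unfbhCRS; and, being a requirement on positions of finite depth, it is inherited by the limits of infinite unfolding sequences — so it holds of every (finite or infinite) reduct of a \lambdaletrec-term. Granting the observation, any \unfCRS-normal form $M$ that arises from a \lambdaletrec-term by unfolding contains no \LetCRS, hence no \InCRS{n} either, so $M$ is built solely from \sappCRS, \sabsCRS, and \bh; since no rule of \unfCRS{} introduces \bh{} and \lambdaletrec-terms contain none, $M$ is a term over \sigCRS, i.e.\ a λ-term. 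For \unfbhCRS{} the observation applies verbatim (its rule set includes that of \unfCRS), so a \unfbhCRS-normal form is likewise \LetCRS-free, hence a term over \sigCRSbh, i.e.\ a \lambdabh-term.

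To establish the observation, fix a subterm \m{\letCRS{n}{\vec{f}}{\vec{B}(\vec{f}), L(\vec{f})}} of $M$. By the grammar for \lambdaletrec-terms its body $L$ is an abstraction, an application, a variable, or a \Let-expression, and I obtain an applicable rule by cases on $n$ and on the shape of $L$:
\begin{itemize}
	\item if \m{n = 0}, then \unfrule{\nil} applies;
	\item if \m{n \geq 1} and $L$ is an abstraction, an application, or a \Let-expression, then \unfrule{λ}, \unfrule{@}, or \unfrule{\merg} applies, respectively;
	\item if \m{n \geq 1} and $L$ is the bound variable \m{f_i} of this binding group, then \unfrule{\rec} applies;
	\item if \m{n \geq 1} and $L$ is a variable distinct from all of \m{f_1,\dots,f_n}, then \unfrule{\reduce} applies with empty index set \m{I = \emptyset}, rewriting the subterm to \m{\letCRS{0}{}{L}}.
\end{itemize}
Since iCRS rewriting is closed under contexts, performing any of these steps in context yields a \unfCRS-step on $M$; this proves the observation.

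I expect the one genuinely delicate case to be the last: one must check that the rule scheme \unfrule{\reduce} with \m{I = \emptyset} does instantiate to a CRS rule matching \m{\letCRS{n}{\vec{f}}{\vec{B}(\vec{f}), L}}. Unwinding the scheme, \m{I = \emptyset} forces \m{\vec{f}' = \emptyset} and every \m{\vec{f}_i = \vec{f}}, so the body is required to be an instance of a metaterm over the empty list of abstracted variables; this is legitimate precisely because $L$, being a variable other than each \m{f_i}, refers to no function variable of the binding group. The remaining cases are direct matches of the corresponding left-hand sides. The only other points that must be made explicit are routine: that the case split on $L$ is exhaustive (it mirrors the \lambdaletrec-grammar for the body of a \Let-expression) and that the shape constraint invoked in the reduction is indeed preserved by each unfolding rule and passes to the limits of infinite unfolding sequences.
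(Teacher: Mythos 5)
Your proof takes the same route as the paper's (which is only a sketch): show by a case analysis on the body that every \Let-expression is a redex, so that normal forms contain no \LetCRS and hence no \InCRS{n}, and are therefore terms over \sigCRS (respectively \sigCRSbh). Your handling of the delicate case — a body that is a variable not bound by the current binding group, dispatched by \unfrule{\reduce} with \m{I=\emptyset} followed by \unfrule{\nil} — is correct and in fact more explicit than the paper, whose appeal to the first four rules silently glosses over exactly that case.

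One case is missing, and it matters only for \unfbhCRS: your case split on the body appeals to ``the grammar for \lambdaletrec-terms'', but the terms to which your observation is applied are reducts, and a \unfbhCRS-reduct may contain \bh — for instance \m{\letin{f = f \eqsep g = \abs{x}{x}}{f} \red_\bh \letin{g = \abs{x}{x}}{\bh}}. So the body of a \Let-expression in a reduct can be \bh, which is neither an abstraction, an application, a variable, nor a \Let-expression, and your exhaustiveness claim fails as stated for \unfbhCRS. The repair is immediate and uses nothing new: \bh contains no function variables, so the same \unfrule{\reduce}-with-\m{I=\emptyset} step (or \unfrule{\nil} when \m{n=0}) applies, exactly as in your ``foreign variable'' case. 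With that case added, the argument is complete.
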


\begin{proof}[Proof sketch]
	The names of the first four rules are chosen to reflect the kind of term contained by the body of the \letrec-expression, which helps to see that the rules are complete in the sense that every \Let-expression is a redex, thus normal forms do not contain \Let-expressions. Terms over \sigCRSletrec without \Let-expressions are \lambdabh-terms. The arguments also holds analogously for \unfCRS, which contains no rule to give rise to a \bh-symbol.
\end{proof}

\begin{lemma}[infinite normalisation of \unfbhCRS]\label{lem:unf-normalisation}
	Every \lambdaletrec-term is either weakly normalising w.r.t.\ \m{\red_\unfoldbh} or admits a strongly convergent outermost-fair \m{\red_\unfoldbh}-rewriting-sequence.
\end{lemma}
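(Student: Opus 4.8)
The plan is to prove the disjunction constructively, by running a fixed outermost strategy that unfolds a term level by level from the root downwards, and showing that it either halts (yielding weak normalisation) or produces the desired strongly convergent outermost-fair sequence. The engine is an auxiliary claim — call it \emph{root resolution} — that for every term $\L$ over $\sigCRSletrec$ (a $\lambdaletrec$-term, or an intermediate one, hence possibly containing $\bh$) whose root is a $\LetCRS$-symbol there is a finite $\red_\unfoldbh$-reduction $\L \mred_\unfoldbh \L'$, \emph{all of whose contractions take place at the root}, such that the root of $\L'$ is no longer a $\LetCRS$-symbol. I would prove it by a case distinction on the body $M$ of the outermost $\Let$-expression $\letin{B}{M}$: if $M$ is an abstraction or an application, one $\rulep{\lambda}$- resp.\ $\rulep{@}$-step suffices; if $M$ is $\bh$, or more generally refers to no binding of $B$ (which, for a $\Let$-expression lying below no other $\Let$, can otherwise only happen when $M$ is a variable bound by an enclosing $\lambda$-abstraction), a $\rulep{\reduce}$- followed by a $\rulep{\nil}$-step removes the binding group; if $M$ is itself a $\Let$-expression, one $\rulep{\merg}$-step strictly decreases the number of $\LetCRS$-symbols, so after finitely many $\rulep{\merg}$-steps the body falls into one of the other cases; and if $M$ is a function variable $f_i$, then either $B_i$ carries term structure and a $\rulep{\rec}$-step reduces to an earlier case, or $B_i$ is again a function variable, in which case a $\rulep{\tighten}$-step (if $B_i = f_j$ with $j \neq i$) or a $\rulep{\bh}$-step (if $B_i = f_i$) strictly decreases the number of bindings, so that iterating along the chain of aliases we reach, after finitely many steps, either term structure or a $\bh$ at the root. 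This last case is exactly where the rules $\rulep{\tighten}$ and $\rulep{\bh}$ are needed: they are what lets cyclic, meaningless binding groups be resolved at all.

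Given root resolution, I would build the reduction in stages indexed by a notion of depth $n \in \mathbb{N}$ on positions (one must fix how CRS-abstractions beneath $\sabsCRS$-symbols are counted, but any reasonable convention works). All intermediate terms are finite, since $\lambdaletrec$-terms are and every rule of $\unfbhCRS$ takes finite terms to finite terms. Stage $n$ starts from a term all of whose positions of depth $< n$ carry a symbol other than $\LetCRS$ — so that, since every $\red_\unfoldbh$-redex is rooted at a $\LetCRS$-symbol, no redex lies above depth $n$ — takes the finitely many positions of depth $n$ carrying $\LetCRS$, and applies root resolution to each in turn; as each resolution is finite and confined to its own subtree, the stage is a finite reduction, after which every position of depth $\le n$ carries a non-$\LetCRS$ symbol and every remaining $\LetCRS$ sits at depth $> n$. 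If at the start of some stage the term is $\LetCRS$-free, it is a $\red_\unfoldbh$-normal form by \cref{lem:unf-nf_well-formed}, reached in finitely many steps, so $\L$ is weakly normalising — the first disjunct. Otherwise the process runs through all stages; since each contributes only finitely many steps, the result is a rewriting sequence of length at most $\omega$.

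It remains to see that this $\omega$-sequence gives the second disjunct. It is strongly convergent: every contraction of stage $n$ occurs at depth exactly $n$ (root resolution of a depth-$n$ position contracts only at that position), and the stages proceed in increasing order, so the depth of the contracted redex tends to infinity and the term stabilises below every fixed depth; hence the terms converge strongly, to a $\LetCRS$-free term. It is outermost-fair: any $\LetCRS$-symbol occurring anywhere in the sequence sits at some finite depth $m$, and by the end of stage $m$ — a fixed, finite position along the sequence — no $\LetCRS$-symbol remains at depth $\le m$, so that redex cannot persist forever, a fortiori cannot remain outermost forever. I expect the main obstacles to be two routine-but-fiddly matters: pinning down the termination measures in the alias/cyclic case of root resolution, and matching the informal level-by-level description against the formal definitions of strong convergence and outermost-fairness for infinitary CRSs (including the depth bookkeeping for CRS-abstractions); neither should be a genuine difficulty, and one could instead carry out the construction abstractly via the standard productivity/compression machinery for infinitary CRSs.
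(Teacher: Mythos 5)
Your proof is correct, and it shares its combinatorial core with the paper's proof while packaging it quite differently. Both arguments reduce to the fact that rewriting confined to a single outermost \Let-position must terminate, and both establish this by the same case analysis: \m{\red_λ}- and \m{\red_@}-steps immediately expose a non-\Let symbol at that position, \m{\red_\merg}- and \m{\red_\nil}-steps strictly decrease the number of \Let-symbols there, and \m{\red_\rec}-steps can only follow a chain of aliases, which \m{\red_\tighten} and \m{\red_\bh} cut short by strictly decreasing the number of bindings. The architectural difference is this: the paper fixes no particular strategy, but takes an arbitrary outermost-fair sequence in which \rulep{\reduce}, \rulep{\nil}, \rulep{\tighten} and \rulep{\bh} are applied eagerly and argues \emph{by contradiction} that no position can host infinitely many contractions; your root-resolution claim is essentially the constructive contrapositive of that argument. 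You then exhibit one concrete breadth-first sequence and verify strong convergence and outermost-fairness directly from its shape. What your version buys is a transparent witness for the existential in the statement and a cleanly isolated, finitary termination lemma; what the paper's version buys is the stronger fact that \emph{every} eager outermost-fair sequence behaves well, which matches how such sequences are exploited later (\cref{lem:outermost-fair-sequences} and \cref{lem:unfolding} quantify over all maximal outermost-fair sequences). One spot to tighten: your outermost-fairness argument says a \Let at depth \m{m} ``cannot persist'' past stage \m{m}, but a traced occurrence can in principle be pushed to greater depth by contractions above it. The correct (and easy) observation is that a redex occurrence which \emph{remains outermost} has, by definition, no redex above it; since your strategy only ever contracts redexes, no step is then performed at or above its position, so that position and its depth \m{d} are frozen and stage \m{d} contracts it.
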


For definitions of `outermost fair' and `strongly convergent' we refer to \cite{kete:simo:2010}.

\begin{proof}[Proof sketch]
	We consider outermost-fair \m{\red_\unfoldbh}-rewrite-sequences on a \lambdaletrec-term \L, in which \rulep{\reduce}, \rulep{\nil}, \rulep{\tighten}, and \rulep{\bh} are applied eagerly. We show that every such sequence \arewseq is either finite, or that otherwise its rewrite activity tends to infinity. We argue by contradiction: We assume \arewseq performs infinitely many rewriting steps on position \p. Then there is an infinite subsequence \brewseq of \arewseq which contracts only redexes at \p. We show that \brewseq cannot exist.
	\par First note that a \lambdaletrec-term is a \m{\red_\unfoldbh}-redex if and only if it is a \Let-expression. Also an outermost \m{\red_\unfoldbh}-rewriting-sequence cannot create redexes upwards. Therefore, for \brewseq to be infinite all terms in \brewseq must have a \Let-expression at \p.
	\par \brewseq can contain neither \m{\red_λ}-steps nor \m{\red_@}-steps, since they would generate a function symbol at \p (and thus yield a term that is not a \Let-expression).
	\par For the rest of the proof we argue using the concept of \emph{\letrec-depth}, the number of \Let-symbols occurring under \p.
	\par \brewseq cannot contain an infinite number of \m{\red_\nil}-steps because it reduces \letrec-depth and no other rule increases \letrec-depth.
	\par Thus, \brewseq must have an infinite suffix \crewseq which contains no \m{\red_λ}-, \m{\red_@}-, and \m{\red_\nil}-steps, but only steps due to \m{\red_\merg}, \m{\red_\rec}, \m{\red_\reduce}, \m{\red_\tighten}, and \m{\red_\bh}.
	\par For \crewseq to be infinite it must contain infinitely many \m{\red_\rec}-steps since the remaining four rules are only finitely often applicable, because they manipulate or remove bindings (which can happen only once per binding), or in the case of \m{\red_\merg} decrease the \letrec-depth.
	\par We will now go on to show that \crewseq cannot have infinitely many \m{\red_\rec}-steps, if \m{\red_\tighten}-steps and \m{\red_\bh}-steps take precedence. Every binding that is employed by the \m{\red_\rec}-step can only be of the form \m{f = g} or \m{f = \letin{\dots}{\dots}}, otherwise a \m{\red_λ}-step or a \m{\red_@}-step would ensue, which we already excluded. A term with only these two forms of bindings can be reduced to a term with \letrec-depth \m{1} with only bindings of the form \m{f = g} by applications of \rulep{\tighten}, \rulep{\reduce}, \rulep{\nil}, \rulep{\rec}, \rulep{\merg}. If the bindings are cyclic, the \crewseq terminates with an application of \rulep{\bh}, otherwise the resulting term is a free variable.
\end{proof}

\begin{lemma}[uniqueness of normal forms]\label{lem:unique_normal_forms}
	If for a term in \Ter{\lambdaletreccal} a normal form exists with respect to \unfCRS or \unfbhCRS, it is unique.
\end{lemma}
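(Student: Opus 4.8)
The plan is to derive uniqueness of the (possibly infinite) normal forms referred to in \cref{def:unf-mapping} from two ingredients: \textbf{(A)} \emph{finitary} confluence of $\red_\unfold$ (resp.\ $\red_\unfoldbh$), and \textbf{(B)} a "frozen-prefix" approximation argument that transfers (A) to the infinitary setting. Once (A) is available, the purely finitary case is immediate: if $\L \mred \M_1$ and $\L \mred \M_2$ with $\M_1,\M_2$ finite normal forms, confluence yields a common reduct $\M_3$, and since the $\M_i$ are normal, $\M_1 = \M_3 = \M_2$. Throughout I treat $\unfCRS$ and $\unfbhCRS$ uniformly; $\unfbhCRS$ merely adds the rules $\rulep{\tighten}$ and $\rulep{\bh}$, which are handled like $\rulep{\reduce}$.

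For (A) I would first observe that both CRSs are left-linear (inspection of the left-hand sides in \cref{def:unfCRS}: no metavariable occurs twice) and then carry out a critical-pair analysis. The only overlaps are: (i) between two of the "simplification" rules $\rulep{\reduce},\rulep{\nil},\rulep{\tighten},\rulep{\bh}$, or between one of these and one of the "distribution" rules $\rulep{@},\rulep{λ},\rulep{\merg},\rulep{\rec}$, at a common $\Let$-node -- note that the four distribution rules never overlap with one another, since the shape of the body of a $\Let$-expression (abstraction / application / $\Let$-expression / variable) uniquely selects which of them is applicable; and (ii) "associativity" overlaps of $\rulep{\merg}$ at two nested $\Let$-nodes. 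I would verify that every such critical pair is joinable (a short, somewhat fiddly computation in each case) and then conclude finitary confluence by a confluence criterion for left-linear non-terminating rewriting, e.g.\ by checking that the critical pairs are closed under (multi-step) developments. Alternatively one can use a Hindley--Rosen decomposition: the subsystem generated by $\rulep{\reduce},\rulep{\nil},\rulep{\tighten},\rulep{\bh}$ is terminating (it strictly decreases the lexicographic measure $\langle$ number of $\Let$-symbols, number of bindings $\rangle$) and locally confluent, hence confluent by Newman's Lemma, and it commutes with the subsystem generated by $\rulep{@},\rulep{λ},\rulep{\merg},\rulep{\rec}$, which is itself confluent, being almost orthogonal; confluence of the union then follows.

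For (B), suppose $\L \infnfred_\unfold \M_1$ and $\L \infnfred_\unfold \M_2$ are strongly convergent reductions to normal form; by compression of strongly convergent iCRS-reductions we may assume both have length at most $\omega$, and by \cref{lem:unf-nf_well-formed} the normal forms $\M_1,\M_2$ are $\Let$-free. Fix a depth $d$. By strong convergence, past some finite stage all contracted redexes of $\L \infnfred \M_1$ lie at depth $> d$, so there is a \emph{finite} reduction $\L \mred \L_1$ after which the part of the term at depth $\le d$ already equals the corresponding (hence $\Let$-free) part of $\M_1$; likewise obtain $\L \mred \L_2$ agreeing with $\M_2$ up to depth $d$. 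By (A) the terms $\L_1,\L_2$ have a common reduct $\L_3$. Since every redex is a $\Let$-expression and contracting a redex never produces a $\Let$-symbol strictly above the contracted position (both facts are already used in the proof of \cref{lem:unf-normalisation}), no step in $\L_1 \mred \L_3$ or $\L_2 \mred \L_3$ touches the $\Let$-free depth-$\le d$ prefix; hence $\M_1$ and $\M_2$ agree up to depth $d$. As $d$ was arbitrary, $\M_1 = \M_2$.

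The main obstacle is step (A), and within it the treatment of $\rulep{\reduce}$: because the set of bindings reachable from a body changes non-monotonically when bindings are distributed over an application or abstraction, merged, or eliminated by $\rulep{\rec}$, $\rulep{\tighten}$, or $\rulep{\bh}$, confirming joinability of the critical pairs involving $\rulep{\reduce}$ -- and packaging the result so that an off-the-shelf confluence criterion applies to the non-terminating system as a whole -- requires care. Everything else, including step (B), is routine once (A) and the structural facts established for \cref{lem:unf-normalisation} are in place.
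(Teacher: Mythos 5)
Your overall architecture matches the paper's: the paper also proves this lemma by reducing it to finitary confluence of \unfCRS/\unfbhCRS (\cref{prop:unf-confluence}), and your step (B) -- the depth-$d$ "frozen prefix" transfer from finitary confluence to uniqueness of the infinitary normal form -- is sound and in fact supplies detail that the paper leaves implicit. The gap is in step (A), which you yourself flag as the main obstacle, and the specific routes you propose for it do not go through as stated. First, the subsystem $\set{\rulep{@},\rulep{λ},\rulep{\merg},\rulep{\rec}}$ is \emph{not} almost orthogonal: the left-hand side of \rulep{\merg} contains the entire inner \Let-expression as part of its pattern (not under a metavariable), so \rulep{\merg} has genuine critical overlaps with \rulep{@}, \rulep{λ}, \rulep{\rec}, \rulep{\nil} and with itself at the inner \Let-position. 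These pairs are joinable, but joining them costs several steps (e.g.\ $\letin{B}{\letin{C}{\app{M_0}{M_1}}}$ needs two \rulep{@}-steps plus two \rulep{\merg}-steps on one side), so confluence of that subsystem is not free. Second, the Hindley--Rosen route needs commutation of two relations neither of which is terminating (your termination measure for the simplification subsystem also fails on the identity instances of \rulep{\reduce} where $I$ is the full index set), and commutation does not follow from the local commutation diagrams here precisely because the joining sides are many-steps or parallel steps.

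This is exactly the difficulty the paper reports: the authors state in \cref{app:conf_proof} that no rule ordering (or lexicographic extension with redex depth) made the elementary diagrams decreasing, because the diverging steps duplicate redexes and joining then requires a many-step. Their resolution is to apply decreasing diagrams not to single steps but to \emph{parallel} steps $\parred_{\wDepth{\arule}{d}}$ indexed by the \letrec-depth $d$ of the contracted redexes, with the precedence order given by $d$; the duplicated residuals are then absorbed into a single parallel step at the appropriate depth. So your plan is repairable -- keep the critical-pair analysis, but package it via depth-indexed parallel steps and decreasing diagrams rather than development-closedness or Hindley--Rosen -- but as written, the claims of almost-orthogonality, termination of the simplification subsystem, and commutation are the points where the argument would fail.
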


\begin{proof}
	This follows from finitary confluence (\cref{prop:unf-confluence})
\end{proof}



\begin{proposition}\label{prop:unf-confluence}
	\unfCRS and \unfbhCRS are confluent.
\end{proposition}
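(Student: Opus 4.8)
The plan is to exploit that both \unfCRS and \unfbhCRS are \emph{left-linear} — in each rule scheme the metavariables of the left-hand side are pairwise distinct and occur at most once — so that, since all \lambdaletrec-terms are finite and only finitary confluence of \m{\mred} is at stake, it suffices by the critical-pair criterion for confluence of left-linear CRSs (all critical pairs development closed) to inspect the critical peaks. I would organise this via a decomposition: let $R_{\mathrm c} = \{\rulep{\lambda},\rulep{@},\rulep{\merg},\rulep{\rec}\}$ be the ``structural'' rules and $R_{\mathrm g} = \{\rulep{\nil},\rulep{\reduce}\}$ the ``clean-up'' rules (for \unfbhCRS I would also move \rulep{\tighten} and \rulep{\bh} into the clean-up part, call it $R_{\mathrm g}'$); then show each part confluent, show that $R_{\mathrm c}$ commutes with $R_{\mathrm g}$ (resp.\ $R_{\mathrm g}'$), and conclude by the commutation lemma (Hindley--Rosen) that the union is confluent.

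For the clean-up part, every non-vacuous step strictly decreases a well-founded measure — a \rulep{\nil}-step lowers the number of \Let-symbols, and a non-identity \rulep{\reduce}-, \rulep{\tighten}-, or \rulep{\bh}-step lowers the number of bindings — so $R_{\mathrm g}$ resp.\ $R_{\mathrm g}'$ is terminating; its critical peaks (two \rulep{\reduce}-redexes at one \Let-expression, a \rulep{\reduce}-redex against a \rulep{\nil}-, \rulep{\tighten}-, or \rulep{\bh}-redex, \rulep{\tighten} against \rulep{\tighten} or \rulep{\bh}, and so on) are all readily joined, so it is confluent by Newman's Lemma. For $R_{\mathrm c}$ the key observation is that the left-hand sides of \rulep{\lambda}, \rulep{@}, \rulep{\rec}, and \rulep{\merg} require the body of their \Let to begin with \sabsCRS, \sappCRS, a bound variable, and \LetCRS respectively, and they look only one layer deep below the body; hence the only critical peaks are those in which a \rulep{\merg}-redex overlaps with an inner \rulep{\lambda}-, \rulep{@}-, \rulep{\rec}-, or \rulep{\merg}-redex located in its body-\Let, and each such peak closes after contracting the remaining residual redex(es), since merging is associative and commutes past the distribution rules and past \rulep{\rec}. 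Thus all critical pairs of $R_{\mathrm c}$ are development closed and $R_{\mathrm c}$ is confluent — without appeal to termination, which indeed fails for $R_{\mathrm c}$ because of \rulep{\rec}-cycles like \letin{f=f}{f}.

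The remaining task, which I expect to be the main obstacle, is the commutation of $R_{\mathrm c}$ with the clean-up rules: I would prove that whenever $s \to_{R_{\mathrm c}} t_1$ and $s \to_{R_{\mathrm g}'} t_2$ there is a common reduct reachable from $t_1$ by finitely many clean-up steps and from $t_2$ by finitely many $R_{\mathrm c}$-steps, and invoke the strong-commutation lemma to turn these local diagrams into commutation of the two subsystems. Most cases are routine (the redexes lie at disjoint or nested positions, or in different components of one binding group), but two families are delicate. First, \rulep{\reduce} against \rulep{\rec}: unfolding a binding can turn a previously unreachable binding reachable or a reachable one unreachable, so the reachability side-condition of \rulep{\reduce} has to be re-analysed after the \rulep{\rec}-step, and one must verify that a further garbage-collection step reunites the two reducts. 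Second, \rulep{\tighten} and \rulep{\bh} against the structural rules: these substitute a variable (resp.\ \bh) throughout the term, so one must check that this substitution commutes with distributing, merging, and unfolding, modulo some bookkeeping of binding indices (and note that \rulep{@} duplicates a \Let, so one \rulep{\tighten}-step on $s$ is matched by two on the \rulep{@}-reduct). Once these local diagrams are established, Hindley--Rosen yields confluence of both \unfCRS and \unfbhCRS. Alternatively one could skip the decomposition and verify directly that all critical pairs of each full system are development closed; the decomposition merely organises the same case analysis and isolates the \rulep{\reduce}/\rulep{\rec} and \rulep{\tighten}/\rulep{\bh} interactions as the only nontrivial ones.
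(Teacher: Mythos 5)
Your decomposition founders at the step that carries all the weight: the claim that the critical pairs of $R_{\mathrm c}=\{\rulep{λ},\rulep{@},\rulep{\merg},\rulep{\rec}\}$ are development closed. You correctly identify that the only overlaps in $R_{\mathrm c}$ are a \rulep{\merg}-redex against a redex rooted at its body-\Let, but that overlap sits at a \emph{non-root} position of the \rulep{\merg}-pattern, so the development-closed criterion (like Huet's parallel-closed criterion and its refinements) demands that the inner-step reduct multistep \emph{to} the outer-step reduct — mere joinability is not enough. Take \m{\letin{B}{\letin{C}{\letin{D}{\L}}}}: the outer \rulep{\merg} gives \m{\letin{B,C}{\letin{D}{\L}}}, the inner one gives \m{\letin{B}{\letin{C,D}{\L}}}, and each needs one further \rulep{\merg}-step to reach the common reduct \m{\letin{B,C,D}{\L}}; neither reduces to the other. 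The same shape occurs for \rulep{\merg} against \rulep{λ}, \rulep{@} and \rulep{\rec}. So none of the standard critical-pair criteria for left-linear systems applies, and since $R_{\mathrm c}$ is not terminating (\m{\letin{f=f}{f} \red_\rec \letin{f=f}{f}}), joinability of critical pairs by itself gives you nothing. The confluence of $R_{\mathrm c}$ is therefore not established.

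The second gap is the combination step. Hindley–Rosen needs the reflexive–transitive closures of the two subsystems to commute, and the standard way to get that from local diagrams is strong commutation, which allows at most \emph{one} step on one of the two closing sides. Your local diagrams have ``finitely many'' steps on both sides (and must, e.g.\ when \rulep{@} duplicates the binding group, a single clean-up step above is matched by two below); local commutation of that shape does not lift to commutation of the closures, for the same reason that local confluence does not lift to confluence without termination. Both obstructions are instances of one phenomenon — duplication, and peaks that need one extra step on \emph{each} side — and they are exactly why the paper's proof (\cref{app:conf_proof}) abandons rule-precedence and critical-pair arguments (it records that such an attempt failed) in favour of decreasing diagrams for \emph{parallel} steps indexed by rule and \letrec-depth, with the label order given by depth: duplication is absorbed into a single parallel step, and the extra closing steps live at depth $e-1<e$, which makes every elementary diagram decreasing. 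If you want to keep your decomposition, you would still need a well-founded measure of this kind to close both $R_{\mathrm c}$'s critical pairs and the commutation diagrams — at which point you have essentially reconstructed the paper's argument.
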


\begin{proof}
	Unique infinite normalisation of \unfCRS and \unfbhCRS follows from finitary confluence of \unfCRS and \unfCRS. In previous work \cite{grab:roch:confunf} we proved confluence for a CRS which is very similar to \unfCRS. In \cref{app:conf_proof} an adapted version of that proof is provided, also extended by the rules \rulep{\tighten} and \rulep{\bh}. The proof is based on decreasing diagrams \cite[Section~14.2]{terese:2003} and involves a comprehensive critical-pair analysis.
\end{proof}


\begin{remark}
	Note that this confluence result concerns a rewriting system only for unfolding \lambdaletrec-terms, and therefore does not conflict with non-confluence observations concerning versions of cyclic λ-calculi which include unfolding rules as well as β-reduction \cite{ario:klop:1997}.
\end{remark}

\begin{para}[\domain{\unfsem{}}]
	Finally we are going to characterise the domain of \unfsem{}, thus identify those \lambdaletrec-terms that are not meaningless but unfold to some λ-term. Meaningless \lambdaletrec-terms are unproductive in the sense that during all outermost-fair \unfCRS-rewrite-sequences the production of symbols stagnates due to an unproductive cycle.
\end{para}

\begin{lemma}\label{lem:outermost-fair-sequences}
	Let \L be a \lambdaletrec-term be a \Let-expression. Then exactly one of the following statements hold:
	\begin{itemize}
		\item All maximal outermost-fair \unfCRS-rewriting-sequences on \L solely contain \Let-expressions.
		\item All maximal outermost-fair \unfCRS-rewriting-sequences on \L contain finitely many \Let-expressions.
	\end{itemize}
\end{lemma}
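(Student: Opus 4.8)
The plan is to analyse first how the root symbol of a term can change along an \unfCRS-rewriting-sequence, and then to lift a pointwise observation to the claimed uniform dichotomy by means of confluence (\cref{prop:unf-confluence}). The first step is the \emph{prefix property}: in \emph{any} \unfCRS-rewriting-sequence starting from a \Let-expression, the terms that are \Let-expressions form an initial segment --- once a term whose root is not a \Let\ has occurred, every later term again has a non-\Let\ root. This holds because a rewrite step below the root leaves the root symbol unchanged, a rewrite step \emph{at} the root requires the root to be a \Let\ (all left-hand sides of \unfCRS are \Let-patterns), and no rule of \unfCRS ever turns a non-\Let\ root into a \Let\ root. Hence, for each fixed maximal outermost-fair sequence, exactly one of ``solely contains \Let-expressions'' and ``contains only finitely many \Let-expressions'' holds. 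The two \emph{global} alternatives of the lemma are mutually exclusive as well: a maximal sequence that solely contains \Let-expressions cannot be finite, since a \Let-expression is always a redex (completeness, cf.\ the proof of \cref{lem:unf-nf_well-formed}), so it is infinite and contains infinitely many \Let-expressions; thus whenever the first alternative holds, the second fails.

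It remains to show that at least one global alternative always holds. I would call a \Let-expression \L\ \emph{escaping} if it has some \unfCRS-reduct that is not a \Let-expression. If \L\ is not escaping, then no rewriting-sequence on \L\ ever leaves the \Let-expressions, so in particular every maximal outermost-fair sequence solely contains \Let-expressions and the first alternative holds. So assume that \L\ is escaping; I claim that then every maximal outermost-fair \unfCRS-rewriting-sequence on \L\ reaches a non-\Let-expression after finitely many steps, which together with the prefix property yields the second alternative. The first ingredient is that ``being escaping'' is preserved under \m{\mred_\unfold}: if \m{\L'} is a reduct of \L\ and \N\ is a non-\Let-reduct of \L, then by confluence \m{\L'} and \N\ have a common reduct, which by the prefix property is again a non-\Let-expression, so \m{\L'} is escaping as well.

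Now let \arewseq\ be a maximal outermost-fair sequence on \L\ and suppose, towards a contradiction, that it never leaves the \Let-expressions. Along \arewseq\ the root is a redex at every step, so outermost-fairness forces root-steps to be performed over and over; the only root-steps that keep the root a \Let\ are \rulep{\merg}, \rulep{\rec}, \rulep{\reduce}, and \rulep{\nil} applied to a \Let-body, whereas \rulep{\lambda}, \rulep{@}, and \rulep{\nil} applied to a non-\Let-body leave the \Let-expressions at once. The crux is to show that, since \L\ --- hence, by the previous paragraph, every term of \arewseq\ --- is escaping, the body of the root \Let-expression cannot stay a \Let-bound variable or a nested \Let-expression forever: a non-exiting root-step either peels off one level of \letrec-nesting of the body (\rulep{\merg}, \rulep{\nil}), replaces the body by the right-hand side of one of the bindings (\rulep{\rec}), or merely discards unused bindings (\rulep{\reduce}), and the only way the body can keep being rewritten into \Let-bound variables indefinitely is along a cyclic chain of alias bindings \m{f_1 = f_2, f_2 = f_3, \dots, f_k = f_1}, which would make \L\ non-escaping. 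Once the body has become an abstraction, an application, or a \lambda-bound variable, the root carries a persistent exiting redex --- a \rulep{\lambda}- or \rulep{@}-redex, respectively a \rulep{\reduce}-redex whose contractum has an empty binding group followed by \rulep{\nil} --- which no step below the root can destroy, so outermost-fairness forces it to be contracted, contradicting that \arewseq\ stays within the \Let-expressions.

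The step I expect to be the main obstacle is precisely this last argument: that an outermost-fair sequence on an escaping \Let-expression cannot indefinitely postpone the root-step that exposes a non-\Let\ root. Since \unfCRS is not orthogonal --- the confluence proof already involves a comprehensive critical-pair analysis, cf.\ \cref{prop:unf-confluence} --- one cannot simply invoke an off-the-shelf standardisation or needed-redex argument; and the difficulty is aggravated by the fact that along \unfCRS-rewriting-sequences binding groups can grow without bound (\cref{fig:unbounded:growth:bindgroup}), inner \rulep{\rec}-steps can deepen the \letrec-nesting of the body again, and \rulep{\rec}-steps at the root may in general be applicable forever, so there is no obvious terminating measure. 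The lever is confluence: ``being escaping'' is an invariant, and the task reduces to showing that the sole obstruction to escaping is a cyclic chain of \Let-bound alias variables reachable through the body --- which an escaping term, by definition, does not have --- and that outermost-fairness cannot simultaneously keep this invariant and avoid ever performing an exiting root-step. An alternative, if one prefers a self-contained argument, is to reuse the \letrec-depth bookkeeping from the proof of \cref{lem:unf-normalisation}, localised to the root position, to bound the number of non-exiting root-steps that can precede an exiting one.
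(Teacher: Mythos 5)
The paper in fact states this lemma without giving a proof, so there is nothing to compare your argument against; judging it on its own terms, the skeleton is sound but the decisive step is missing. Your prefix property (non-\Let{} roots are preserved, so the \Let-expressions in any sequence form an initial segment), the mutual exclusivity of the two alternatives, the case split into escaping and non-escaping terms, and the preservation of escapability under \m{\mred_\unfold} via confluence are all correct and cleanly argued; they settle the non-escaping case completely and show that the first alternative \emph{fails} for escaping terms. What is not proved is exactly the claim the lemma hinges on: that for an \emph{escaping} \Let-expression \emph{every} maximal outermost-fair sequence performs an exiting root-step (\m{\red_λ} or \m{\red_@}) after finitely many steps — without this, it could a priori be that \emph{neither} alternative holds for an escaping term. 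You flag this as ``the main obstacle'' and then settle for the assertion that the only obstruction is a cyclic chain of alias bindings \m{f_1=f_2,\dots,f_k=f_1}. That characterisation is not correct as stated: the paper's own \cref{ex:non-unfoldable}, \m{\letin{f=\letin{g=f}{g}}{f}}, is non-escaping although no binding has the alias form; the cycle is only exposed after \rulep{\rec}-, \rulep{\reduce}- and \rulep{\nil}-steps peel away the nested \letrec. More importantly, no argument is actually given for the implication ``infinitely many non-exiting root-steps \m{\Rightarrow} non-escaping'', which is the real content of the lemma.

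Closing this gap is genuinely nontrivial, for the reasons you yourself list: the definitions that \rulep{\rec} copies to the root can themselves be rewritten by below-root steps before being copied, so the ``definition chain'' followed at the root is not determined by \L alone; \rulep{\merg}-steps re-nest the body; and there is the additional subtlety that several overlapping redexes can sit at the root, so ``outermost-fairness forces the exiting step'' needs the descendant bookkeeping of the underlying fairness notion. The route you sketch last is the right one: localise the measure argument from the proof of \cref{lem:unf-normalisation} to the root position — consecutive non-exiting root-steps other than \rulep{\rec} are bounded because \rulep{\reduce} strictly removes bindings and \rulep{\merg}/\rulep{\nil} strictly decrease the \letrec-depth of the body, so infinite non-exiting root activity forces infinitely many root \rulep{\rec}-steps — and then show, using confluence to control the below-root rewriting of the copied definitions, that such a term cannot reduce to a \Let-expression with an abstraction or application as body, i.e.\ is not escaping. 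Until that argument is carried out, the dichotomy is established only for non-escaping terms.
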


\begin{definition}[\unfCRS-productivity]
	We say that a \lambdaletrec-term \L is \emph{\unfCRS-productive} if the following statement holds:
	\begin{itemize}
		\item \L does not have a \m{\red_\unfold}-reduct that is the source of an infinite \m{\red_\unfold}-rewrite-sequence consisting exclusively of outermost steps with respect to \m{\red_\rec}, \m{\red_\nil}, \m{\red_\merg}, or \m{\red_\reduce}.
\end{itemize}
\end{definition}

\begin{lemma}\label{lem:unfolding}
	For every \lambdaletrec-terms \L the following statements are equivalent:
	\begin{enumerate}[(i)]
		\item\label{lem:unfolding:i} \m{\L \m{\omegared_\unfold} \M} for some infinite λ-term \M.
		\item\label{lem:unfolding:ii} \L is \unfCRS-productive.
		\item\label{lem:unfolding:iii} Every maximal outermost-fair \m{\red_\unfold}-rewrite-sequence on \L is strongly convergent.
	\end{enumerate}
\end{lemma}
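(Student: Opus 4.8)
I will establish the cyclic chain (i) \implies (ii) \implies (iii) \implies (i). Three ingredients carry the argument. First, \cref{lem:outermost-fair-sequences}: from a \Let-expression, either every maximal outermost-fair \unfCRS-rewriting-sequence stays inside the \Let-expressions, or every such sequence leaves them after finitely many steps. Second, two structural facts about \m{\red_\unfold}: a \lambdaletrec-term is a redex precisely when it is a \Let-expression, so the unique outermost redex of a \Let-expression is its root; and an outermost \m{\red_\unfold}-step never creates a redex at a strictly shallower position, while a \m{\red_λ}-, \m{\red_@}-, or \m{\red_\nil}-step at a position removes the \Let sitting there for good. Third, confluence (\cref{prop:unf-confluence}) and uniqueness of normal forms (\cref{lem:unique_normal_forms}), together with \cref{lem:unf-nf_well-formed} (a \Let-free closed term is a normal form, hence a λ-term or \lambdabh-term) and the facts that \unfbhsem{} is total and agrees with \unfsem{} on its domain.

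For (iii) \implies (i), take any maximal outermost-fair \m{\red_\unfold}-sequence on \L; by (iii) it is strongly convergent, and a maximal strongly convergent outermost-fair reduction converges to a normal form (standard in infinitary rewriting; by the compression lemma \cite{kete:simo:2011} its length is at most \m{\omega}). That normal form is \Let-free, hence a λ-term \M by \cref{lem:unf-nf_well-formed}; in the case where the reduction genuinely has length \m{\omega} — which is exactly the case in which \M is infinite — this gives the required reduction \m{\L \omegared_\unfold \M}. For (i) \implies (ii) I argue contrapositively: suppose some \m{\red_\unfold}-reduct \m{\L'} of \L is the source of an unproductive cycle. Localizing the cycle to a \Let-expression subterm \m{\L''} of \m{\L'} gives a maximal outermost-fair reduction from \m{\L''} that lies entirely within the \Let-expressions, so by \cref{lem:outermost-fair-sequences} every maximal outermost-fair reduction from \m{\L''} does; hence \m{\L''} never produces a head symbol, and therefore \m{\unfbhsem{\L''} = \bh}. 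Since \m{\L \mred_\unfoldbh \L'} and \unfbhCRS has unique normal forms, \unfbhsem{\L} contains a \bh, so \m{\L ∉ \domain{\unfsem{}}}; in particular \L has no infinite λ-term as an \m{\omegared_\unfold}-target, i.e.\ (i) fails.

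For (ii) \implies (iii), the crux, assume \L is \unfCRS-productive and let \m{\arewseq} be a maximal outermost-fair \m{\red_\unfold}-sequence; I must rule out that \m{\arewseq} is infinite but not strongly convergent. If it were, let \m{d} be the least depth at which infinitely many steps of \m{\arewseq} occur; only finitely many steps lie strictly above depth \m{d}, so after some point the term is frozen above depth \m{d}, and from then on outermost-fairness forbids any \Let above depth \m{d} (the shallowest such \Let would be an un-contracted outermost redex). Hence all later steps sit at depth \m{\geq d}, infinitely many at depth exactly \m{d}; since each intermediate term has only finitely many positions at depth \m{d} (binding groups are finite at every stage and no frozen-above step widens the depth-\m{d} frontier), a pigeonhole argument pins infinitely many of these steps to one fixed position \m{p}. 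By the structural facts, the root of the subterm at \m{p} can never become a \m{λ}- or \m{@}-node (that would end the root activity), so it remains a \Let-expression and all those steps are outermost \m{\red_\rec}-, \m{\red_\nil}-, \m{\red_\merg}-, or \m{\red_\reduce}-steps; projecting away the interleaved inner steps then exhibits a \m{\red_\unfold}-reduct of \L that is the source of an unproductive cycle, contradicting \unfCRS-productivity.

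The main obstacle is this last implication — specifically the localization step (that bounded-depth infinitary activity must concentrate at a single position, which needs care about how \m{\red_\merg}- and \m{\red_\rec}-steps reshuffle binding groups) and the projection of the interleaved inner reductions into a genuine, uninterrupted unproductive cycle, a routine but fiddly descendant computation for CRSs. I would also read (i) charitably at the boundary: if \L unfolds to a \emph{finite} λ-term then (ii) and (iii) still hold while (i) as literally phrased does not, so the intended statement should have (i) weakened to ``\L unfolds to some λ-term'', or \L is tacitly assumed not to unfold to a finite term.
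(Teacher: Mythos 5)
Your proof is correct and follows essentially the same route as the paper: the same cyclic chain of implications, the same reliance on \cref{lem:outermost-fair-sequences}, and the same core dichotomy between \Let-occurrences being pushed down by \m{\red_λ}/\m{\red_@}-steps versus stagnating at a fixed position. Your observation about the boundary case (terms unfolding to a \emph{finite} λ-term) is a fair reading of a genuine imprecision in statement (i), but it does not change the substance of the argument.
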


\begin{proof}
	\cref{lem:unfolding:ii} ⇒ \cref{lem:unfolding:iii}, because if \L is \unfCRS-productive then every outermost occurrence of a \letrec in every \m{\red_\unfold}-reduct will be eventually pushed down to a higher position by either a \m{\red_λ}-step or a \m{\red_@}-step of any maximal outermost-fair \m{\red_\unfold}-sequence. Since only \Let-expressions are \m{\red_\unfold}-redexes any maximal outermost-fair rewrite sequence starting from \L converges to an infinite normal form. (i) follows directly from (iii). \Cref{lem:unfolding:i} ⇒ \cref{lem:unfolding:ii} follows from \cref{lem:outermost-fair-sequences} by contradiction. If \L is not \unfCRS-productive then it has by definition a \m{\red_\unfold}-reduct with at least one occurrence of a \letrec which cannot be pushed further down by any outermost application of any \unfCRS-rule. By \cref{lem:outermost-fair-sequences} the same holds for every other maximal outermost-fair rewrite sequence. Therefore \L cannot unfold to a λ-term \M because \M may not contain any \letrec{s}.
\end{proof}

\begin{proposition}[\domain{\unfsem{}}]
	A \lambdaletrec-term is in \domain{\unfsem{}} if and only if it admits no cyclic \unfCRS-rewriting-sequence without \rulebp{\unfold}{@}-steps and \rulebp\unfold{λ}-steps.
\end{proposition}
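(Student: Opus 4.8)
The plan is to route the proof through \unfCRS-productivity. By \cref{def:unf-mapping} and \cref{lem:unfolding}, a \lambdaletrec-term \L lies in \domain{\unfsem{}} precisely when \L is \unfCRS-productive, that is, precisely when no \m{\red_\unfold}-reduct of \L is the source of an infinite \m{\red_\unfold}-rewrite-sequence of \emph{outermost} steps all of which use one of \rulep{\rec}, \rulep{\nil}, \rulep{\merg}, \rulep{\reduce}. Hence it suffices to prove the equivalence: some \m{\red_\unfold}-reduct of \L begins such an infinite outermost \set{\rulep{\rec},\rulep{\nil},\rulep{\merg},\rulep{\reduce}}-rewrite-sequence if and only if some \m{\red_\unfold}-reduct of \L is the source of a non-empty \emph{cyclic} \m{\red_\unfold}-rewrite-sequence using neither \rulep{@} nor \rulep{\lambda} --- equivalently, using only \rulep{\rec}, \rulep{\nil}, \rulep{\merg}, \rulep{\reduce}, the remaining rules of \unfCRS.

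For the direction from a cycle to non-productivity I would start from a non-empty cyclic rewrite-sequence \m{\L' = t_0 \to \dots \to t_k = \L'} using no \rulep{@}- and no \rulep{\lambda}-step, and pass to a position \p at which the cycle performs a step and which is topmost among such positions (if the topmost active positions are several pairwise-incomparable ones, the cycle decomposes into independent cycles, which may be treated separately). Since no \rulep{@}- or \rulep{\lambda}-step is performed, every \m{t_i} carries a \Let-expression at \p, this \Let-expression is never replaced there by an abstraction or an application, and no redex is ever created strictly above \p; therefore the restriction of the cycle to the subterm at \p is again a non-empty cycle, but now one consisting of \emph{outermost} steps. Iterating it yields an infinite outermost \set{\rulep{\rec},\rulep{\nil},\rulep{\merg},\rulep{\reduce}}-rewrite-sequence from the subterm of \m{t_0} at \p, which is a \m{\red_\unfold}-reduct of \L; so \L is not \unfCRS-productive.

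For the converse, let some reduct \m{\L'} of \L be the source of an infinite outermost \set{\rulep{\rec},\rulep{\nil},\rulep{\merg},\rulep{\reduce}}-rewrite-sequence \arewseq. As in the proof of \cref{lem:unf-normalisation}, I may assume \arewseq performs all available \rulep{\reduce}- and \rulep{\nil}-steps eagerly, so that every term along \arewseq is garbage free. The decisive claim is that the terms along such a derivation stay bounded in size --- the kind of unbounded growth of binding groups exhibited in \cref{fig:unbounded:growth:bindgroup} cannot occur once \rulep{\reduce} is applied eagerly --- so that, modulo \alpha-equivalence, only finitely many terms occur along \arewseq. By the pigeonhole principle some term recurs, and the segment of \arewseq between two of its occurrences is a non-empty cyclic \m{\red_\unfold}-rewrite-sequence using neither \rulep{@} nor \rulep{\lambda} and based at a \m{\red_\unfold}-reduct of \L. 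Together with the previous paragraph this yields the equivalence, and hence the proposition.

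The main obstacle is the boundedness claim in the third paragraph. One has to show that along a garbage-free outermost derivation in which \rulep{@} and \rulep{\lambda} never fire, every \rulep{\rec}-step unfolds either an aliasing binding \m{f = g} or a binding whose right-hand side is itself a \Let-expression --- any other right-hand side would immediately expose a \rulep{@}- or \rulep{\lambda}-redex --- so that \rulep{\rec} only ever duplicates material already present in the binding group; combining this with the strict decreases of \letrec-depth effected by \rulep{\nil} and \rulep{\merg} and of binding-group size effected by \rulep{\reduce}, and with the removal of the duplicated material by the eager \rulep{\reduce}-steps, confines the derivation to finitely many terms up to \alpha-equivalence. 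This is essentially the \letrec-depth/binding-size bookkeeping carried out in the proof sketch of \cref{lem:unf-normalisation} --- there with \rulep{\tighten} and \rulep{\bh} in the role of \rulep{\reduce} and \rulep{\nil} --- which I would adapt to the present rule set. The passage to an outermost position in the second paragraph is a lesser point, justified by the observation that a \rulep{@}/\rulep{\lambda}-free rewrite sequence cannot alter the term structure strictly above the positions at which it acts.
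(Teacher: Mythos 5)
Your route is genuinely different from the paper's: the paper disposes of the forward direction by an appeal to confluence (\cref{prop:unf-confluence}), and for the converse argues that a non-normalising term has a reduct with a root-active subterm whose root-rewriting sequence can contain no \rulebp{\unfold}{@}- or \rulebp{\unfold}{λ}-steps, obtaining cyclicity from the correspondence between such rewrite sequences and access paths in the \emph{finite} derivation of \prefixed{}{\L} in \ugstRegletrec (\cref{prop:Regletrec:stRegletrec}, \cref{prop:derivability:Regletrec:stRegletrec}). You instead route everything through \unfCRS-productivity and replace the appeal to finite derivations by a pigeonhole argument on a bounded set of reachable terms. That substitution is where the main gap sits: the boundedness claim carries the entire weight of the hard direction, and the proof sketch of \cref{lem:unf-normalisation} that you propose to adapt does not deliver it --- that argument shows that certain infinite sequences at a fixed position are impossible in \unfbhCRS, crucially using \rulep{\tighten} and \rulep{\bh}, which are not available in \unfCRS, and in any case it bounds rewrite activity rather than the number of distinct garbage-free reducts modulo α-equivalence. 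You would need a separate argument that every reduct reachable without \rulep{@}/\rulep{λ}-steps is determined by a bounded amount of data drawn from the sub-\Let-expressions of \L; this is exactly what the paper's detour through the finite proof tree buys for free.

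The other direction also does not quite match the definitions. From a cycle acting at topmost position \p you pass to ``the subterm of \m{t_0} at \p, which is a \m{\red_\unfold}-reduct of \L'' --- but a subterm of a reduct is not in general a reduct, and \unfCRS-productivity quantifies over reducts of \L. Moreover, steps at the topmost \emph{active} position of the cycle need not be \emph{outermost} steps: \m{t_0} may carry \Let-redexes strictly above \p that the cycle never touches, so the iterated cycle is not literally a sequence of outermost steps as the definition of productivity requires. Both points are repairable --- for instance by arguing directly, as the paper does, that a cycle without \rulebp{\unfold}{@}- and \rulebp{\unfold}{λ}-steps makes the subterm at \p root-active, which by confluence and uniqueness of (infinite) normal forms prevents \L from unfolding to a λ-term --- but as written the reduction to \cref{lem:unfolding} does not go through.
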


\begin{proof}
	``⇒'' follows from confluence of \unfCRS (\cref{prop:unf-confluence}). For ``⇐'' we assume that \L is not infinitarily normalising in \unfCRS. That means that any fair strategy rewrites \M eventually to a reduct with root-active subterm \O. The infinite rewriting sequence that only applies rules to the root of \O can never contain any \rulebp{\unfold}{@}-steps or \rulebp{\unfold}{λ}-steps because the reduct would not be a redex. The cyclicity of this rewriting sequence follows from \cref{prop:Regletrec:stRegletrec}.
\end{proof}

\begin{para}[a single-rule unfolding system in the likeness of μ-unfolding]
	In \cref{app:single_rule} we present an alternative rewriting system for unfolding \lambdaletrec-terms. It has only a single rule, like the canonical rewriting system for unfolding μ-terms.
\end{para}

\begin{para}[outlook]
	Now that we have established what we mean by unfolding, we can phrase the problems we tackle in the three following chapters of this thesis:
	\begin{description}
		\item[\cref{chap:expressibility}:] Here we characterise the set of λ-terms that can be expressed finitely as \lambdaletrec-terms. To this end we introduce rewriting systems for deconstructing λ-terms. These rewriting systems induce a notion of regularity on a decomposed λ-term: if the set of its components is finite it is regular.
		\item[\cref{chap:representations}:] From the decompositions systems arise graphical representations for λ-terms in a natural way (essentially the reduction graph of a term w.r.t.\ a decomposition system). We study these graph representations in order to use them to reason about their respective λ-terms.
		\item[\cref{chap:maxsharing}:] Here we relate these graph formalisms back to \lambdaletreccal and unfolding, by which we obtain concrete practical methods to analyse and manipulate \lambdaletrec-terms.
	\end{description}
\end{para}

\chapter{Expressibility in \lambdaletrec}\label{chap:expressibility}

\section{Overview}\label{sec:expressibility:overview}

\setcounter{theorem}{-1}
\begin{para}[teaser]
	Why cannot all infinite \lambdaletrec-terms with a simple repetitive structure like \abs{a}{\abs{b}{\app{(\abs{a}{\app{(\abs{b}{\app{\dots}{a}})}{b}})}{a}}} be expressed in \lambdaletreccal?
\end{para}

\begin{para}[subject matter]
	In this chapter we study the relationship between finite terms in \lambdaletreccal and the λ-terms they express. We investigate λ-terms that are not unfoldings of any \lambdaletrec-term and we consider the question: Which are the infinite lambda terms that are \lambdaletrec-expressible in the sense that they can be obtained as infinite unfoldings of finite \lambdaletrec-terms? Or in other words: how expressive is the language \lambdaletreccal?
\end{para}


\begin{para}[methods and formalisms]
	We introduce a rewrite system for observing λ-terms through repeated experiments carried out at the head of the term, thereby decomposing it into `generated subterms'. There are four sorts of decomposition steps: \m{\red_λ}-steps (decomposing a λ-abstraction), \m{\red_{@_0}}-steps and \m{\red_{@_1}}-steps (decomposing an application into its function and argument), and a scope-delimiting step. The scope-delimiting step comes in two variants, \m{\red_\del} and \m{\red_\S}, defining two rewriting systems with rewrite relations \m{\red_\reg} and \m{\red_\streg}. These rewrite relations each induce a notion of `λ-transition-graph', a sort of graphical representation of λ-terms, which can be compared by means of bisimulation. We call a λ-term `regular' (`strongly regular') if its set of \m{\red_\reg}-reachable (\m{\red_\streg}-reachable) generated subterms (and therefore its λ-transition-graph) is finite. Furthermore, we analyse the binding structure of λ-terms with the concept of `binding--capturing chains'.
\end{para}

\begin{para}[result]
	Utilising these concepts, we answer the above question by providing two characterisations of \lambdaletrec-expressibility. For all λ-terms \M, the following statements are equivalent:
	(i): \M is \lambdaletrec-expressible;
	(ii): \M is strongly regular;
	(iii): \M is regular, and it only has finite binding--capturing chains.
\end{para}

\section{Introduction}\label{sec:expressibility:introduction}


In \cref{chap:introduction} we have established how \lambdaletrec-terms serve as a finite representation of (potentially) infinite λ-terms. It is quite obvious, that not every infinite λ-term can be represented finitely, only those λ-terms come into consideration that have some kind of repetitive, or regular, structure. It turns out, however, that not even λ-terms with a regular syntax tree can always be expressed as the infinite unfolding of a term in \lambdaletreccal (see \cref{ex:entangled} below).

\begin{terminology}[\lambdaletrec-expressible]
	We say that a λ-term \M is \emph{\lambdaletrec-expressible} if it has a representation as a (finite)\footnote{Recall that \lambdaletrec-terms are finite by \cref{def:set_of_lambdaletrec_terms}.} term \L in \lambdaletreccal:
	\[ ∃ \L ∈ \Ter{\lambdaletreccal} ~~ \M = \unfsem{\L} \]
	We then also say that \L\ \emph{expresses} \M.
\end{terminology}

Note that λ-terms have infinitely many different representations as \lambdaletrec-terms (see e.g.\ \cref{ex:expressibility:fix} below).

\begin{figure}
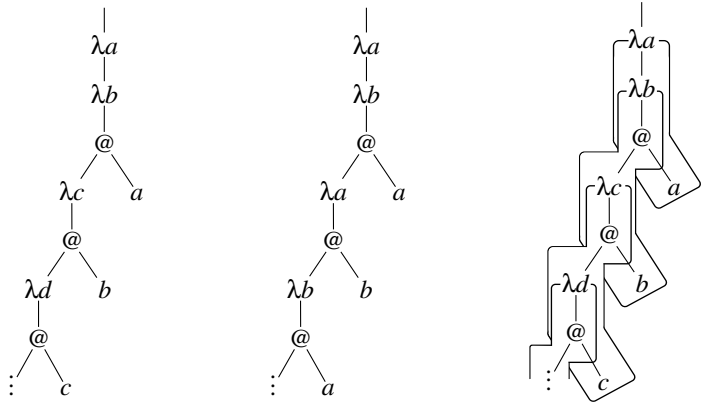

	\begin{hspread}
		\vcentered{\fig{pstricks/named/entangled-st}} & \vcentered{\fig{pstricks/named/entangled-ab-st}} & \vcentered{\fig{entangled-scopes}}
	\end{hspread}
	\caption{Two (α-equivalent) syntax trees of \M from \cref{ex:entangled}. The second syntax tree is cleary regular. The syntax tree on the right is a version of the first syntax tree with scopes depicted as box-like structures.}
	\label{fig:ex:entangled}
\end{figure}

\begin{example}[not \lambdaletrec-expressible]\label{ex:entangled}
	Consider the infinite λ-term of the form \m{\M = \abs{a}{\abs{b}{\app{(\abs{c}{\app{(\abs{d}{\app{\dots}{c}})}{b}})}{a}}}} with syntax trees as shown in \cref{fig:ex:entangled}. Even though it has a syntax tree with a regular structure, \M is not \lambdaletrec-expressible.
\end{example}

\begin{para}[the relevance of scopes]
	To understand why there is no \lambdaletrec-term that unfolds to \M, it helps to consider the scopes of the abstractions in \M. Informally speaking, the scope of an abstraction denotes the minimal connected portion of the term which includes the abstraction itself as well as all occurrences of the bound variable that it binds. A precise definition is given later in \cref{def:scope:extscope}. The scopes of \M as shown in \cref{fig:ex:entangled} are infinitely entangled: the scope of \abs{a}{} reaches into the scope of \abs{b}{}, the scope of \abs{b}{} into the scope of \abs{c}{}, and so on. This trait of \M suggests \M cannot be the result of `unrolling' a \lambdaletrec-term \L, since such a process would map \L's scoping structure onto \M in a regular manner, and result in a term which is `tiled' into finite, non-overlapping scopes with bounded scope-nesting depth. This excludes, intuitively, the formation of the infinite entanglement of successively overlapping scopes that can be observed in \M.
\end{para}

\begin{example}[\lambdaletrec-expressible]\label{ex:expressible:simpleletrec}
	Consider the infinite λ-term of the form \m{\M = \abs{xy}{\app{\app{\M}{y}}{x}}}.
	It is \lambdaletrec-expressible as it arises as the unfolding of \m{\letin{f = \abs{xy}{\app{\app{f}{y}}{x}}}{f}} as witnessed by the following rewrite sequence:
	\rewritingSequence{
		& \letin{f = \abs{xy}{\app{\app{f}{y}}{x}}}{f}
		\alignbreak\m{\red_\rec} ~& \letin{f = \abs{xy}{\app{\app{f}{y}}{x}}}{\abs{xy}{\app{\app{f}{y}}{x}}}
		\alignbreak\m{\red_λ} ~& \abs{x}{\letin{f = \abs{xy}{\app{\app{f}{y}}{x}}}{\abs{y}{\app{\app{f}{y}}{x}}}}
		\alignbreak\m{\red_λ} ~& \abs{xy}{\letin{f = \abs{xy}{\app{\app{f}{y}}{x}}}{\app{\app{f}{y}}{x}}}
		\alignbreak\m{\red_@} ~& \abs{xy}{\app{(\letin{f = \abs{xy}{\app{\app{f}{y}}{x}}}{\app{f}{y}})}{(\letin{f = \abs{xy}{\app{\app{f}{y}}{x}}}{x})}}
		\alignbreak\m{\red_\reduce} ~& \abs{xy}{\app{(\letin{f = \abs{xy}{\app{\app{f}{y}}{x}}}{\app{f}{y}})}{(\letin{}{x})}}
		\alignbreak\m{\red_\nil} ~& \abs{xy}{\app{(\letin{f = \abs{xy}{\app{\app{f}{y}}{x}}}{\app{f}{y}})}{x}}
		\alignbreak\m{\red_@} ~& \abs{xy}{\app{\app{(\letin{f = \abs{xy}{\app{\app{f}{y}}{x}}}{f})}{(\letin{f = \abs{xy}{\app{\app{f}{y}}{x}}}{y})}}{x}}
		\alignbreak\m{\red_\reduce} ~& \abs{xy}{\app{\app{(\letin{f = \abs{xy}{\app{\app{f}{y}}{x}}}{f})}{(\letin{}{y})}}{x}}
		\alignbreak\m{\red_\nil} ~& \abs{xy}{\app{\app{(\letin{f = \abs{xy}{\app{\app{f}{y}}{x}}}{f})}{y}}{x}}
		\alignbreak\m{\red_\rec} ~& \dots
	}
	As shown in \cref{fig:expressible:simpleletrec} the syntax tree of \M has some entanglement (the scopes of \abs{x}{} and \abs{y}{} do overlap) but the entanglement is finite as opposed to the entanglement in \cref{ex:entangled}.
\end{example}

\begin{figure}
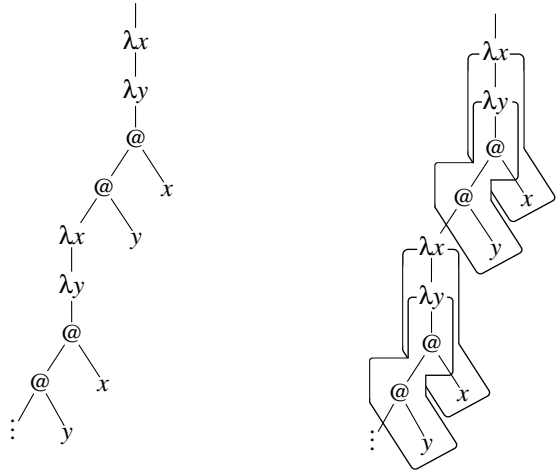

	\begin{hspread}
		\vcentered{\fig{simpleletrec-unf}} & \vcentered{\fig{simpleletrec-scopes}}
	\end{hspread}
	\caption{Syntax tree of \M from \cref{ex:expressible:simpleletrec}, and a version annotated with scopes.}
	\label{fig:expressible:simpleletrec}
\end{figure}

\begin{example}[\lambdaletrec-expressible]\label{ex:expressibility:fix}
	The infinite λ-term \abs{f}{\app{f}{(\app{f}{(\app{f}{\dots})})}} from \cref{ex:unfolding_deriv_fix} can be expressed by
	\L as well as by \P defined as follows:
	\[
		L := \abs{f}{\letin{r=\app{f}{r}}{r}} ~~~~~~~ P := \abs{f}{\letin{r=\app{f}{(\app{f}{r})}}{r}}
	\]
	It holds that \m{\unfsem{L} = \abs{f}{\app{f}{(\app{f}{(\app{f}{\dots})})}} = \unfsem{P}}.
	See \cref{fig:ex:expressibility:fix} for the corresponding `syntax graphs'. There is only one abstraction, so trivially there are no overlapping scopes.
\end{example}

\begin{figure}
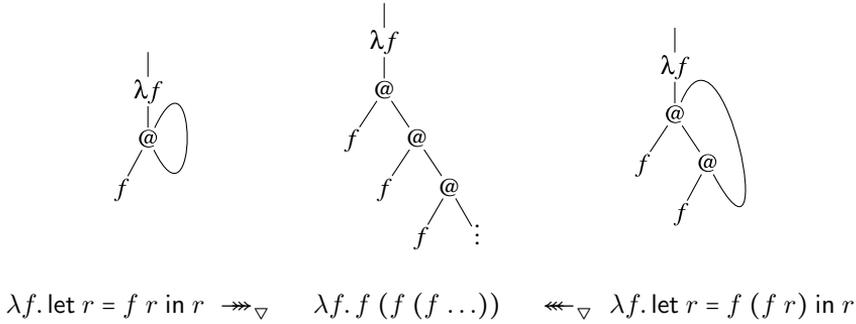

	\begin{hspread}
		\vcentered{\fig{fix-eff-sg}} & \vcentered{\fig{pstricks/named/fix-unf-st}} & \vcentered{\fig{fix-big-sg}} \\
		\m{\abs{f}{\letin{r=\app{f}{r}}{r}} ~ \infred_\unfold} &
		\m{\abs{f}{\app{f}{(\app{f}{(\app{f}{\dots})})}}} &
		 \m{\convinfred_\unfold ~ \abs{f}{\letin{r=\app{f}{(\app{f}{r})}}{r}}}
	\end{hspread}
	\caption{The `syntax graphs' of the \lambdaletrec-terms from \cref{ex:expressibility:fix} and the syntax tree of their unfolding.}
	\label{fig:ex:expressibility:fix}
\end{figure}

\begin{para}[regularity of first-order terms]
	Over a first order signature, the simplest kind of infinite terms are those that are regular in the sense that they possess only a finite number of different subterms. They correspond to trees over ranked alphabets that are regular \cite{cour:1983}. Like regular trees, also regular terms can be expressed finitely by systems of recursion equations \cite{cour:1983} or by `rational expressions' \cite[Definition\hspace*{1pt}4.5.3]{cour:1983}, which correspond to μ-terms (see e.g.\ \cite{endr:grab:klop:oost:2011}). Hereby finite expressions denote infinite terms either via a mathematical definition (a fixed-point construction, or induction on paths) or as the limit of a rewrite sequence consisting of unfolding steps.
\end{para}

\begin{para}[regularity of higher-order terms]
	For higher-order terms such as λ-terms the concept of regularity is less clear-cut from the outset, due to the presence of variable bindings. Frequently, regularity has been used to denote the existence of a first-order representation with named variables that is regular (e.g.\ in \cite{ario:klop:1997,ario:blom:1997}). According to that definition \M from \cref{ex:entangled} would be regular: while the syntax tree on the left in \cref{fig:ex:entangled} of the term \M contains infinitely many variables (and therefore is not regular), \M has another α-equivalent regular syntax tree (on the right) that uses only two variable names.
	However, such a definition of regularity has two drawbacks: it does not (as desired) correspond to \lambdaletrec-expressibility; and it relies on a property of a first-order representation.
	We will attempt to find a stronger, more direct definition of regularity for λ-terms that arises as an adaptation of the first-order notion of regularity.
\end{para}

\begin{para}[subterms of higher-order terms]
	The first-order notion of regularity relies on the finiteness of the number of subterms. When adapting this notion to higher-order terms, the problem is that there is no immediate, self-evident notion of subterm. Should for instance \y be considered a subterm of \abs{x}{x} which is after all equivalent to \abs{y}{y}? In order to arrive at a viable notion of higher-order subterms we define a CRS for decomposing higher-order terms. The decomposition `remembers' the abstractions encountered and `saves' them as part of the such obtained components which we call `generated subterms'.
\end{para}

\begin{para}[prefixed subterms]\label{prefixed-subterms}
	Viable notions of subterms for λ-terms in a higher-order formalisation require a stipulation on how to treat variable binding when descending into the body of a λ-abstraction. For this purpose we enrich the syntax of λ-terms with a parenthesised prefix of abstractions (similar to a proof system for weak μ-equality in \cite[Figure\hspace*{1.5pt}12]{endr:grab:klop:oost:2011}). An expression \prefixed{x_1 \dots x_n}{\M} represents a partially decomposed λ-term: the body \M typically contains free occurrences of variables which in the original λ-term were bound by λ-abstractions that have been split off by decomposition steps. The role of such abstractions has then been taken over by abstractions in the prefix \m{\absprefix{x_1 \dots x_n}}. In this way expressions with abstraction prefixes can be kept closed under decomposition steps. 
\end{para}

\begin{para}[decomposition CRSs]
	On these prefixed λ-terms, we will define two closely related rewrite systems \RegCRS and \stRegCRS. Rewrite sequences in \RegCRS and in \stRegCRS deconstruct λ-terms by steps that typically decompose applications and λ-abstractions that occur just below the abstraction prefix. \RegCRS and \stRegCRS differ with respect to the steps for removing vacuous prefix bindings they facilitate: while such bindings can always be removed by pertinent steps in \RegCRS, the system \stRegCRS only enables steps that remove vacuous bindings at the end of the abstraction prefix.
\end{para}

\begin{para}[scope-delimiting strategies]
	For each of these systems we consider a family of strategies that make deterministic choices concerning the application of the steps for removing vacuous prefix bindings: we call these strategies `scope-delimiting strategies' for \RegCRS, and `\extscope-delimiting strategies'\footnote{We usually say `extended scope' when pronouncing \extscope, see \cref{Reg:scope-delimiters}} for \stRegCRS. Scope-delimiting strategies \astrat for \RegCRS and \extscope-delimiting strategies \astratplus for \stRegCRS induce rewrite relations \m{\red_{\astrat}} and \m{\red_{\astratplus}}, respectively. These families of rewrite strategies define respective notions of `generated subterm', and they give rise to differently strong concepts of regularity: a λ-term \M is called regular (strongly regular) if there is a rewrite strategy \astrat for \RegCRS (a rewrite strategy \astratplus for \stRegCRS) such that the set of from \M \m{\red_{\astrat}}-reachable (\m{\red_{\astratplus}}-reachable) generated subterms is finite.
\end{para}

\begin{figure}
	\begin{hspread}
		\vcentered{\fig{simpleletrec-unf}} & \vcentered{\fig{simpleletrec-chains}}
		\\
		\vcentered{syntax tree} & \vcentered{binding--capturing chains}
		\\
		\vcentered{\fig{simpleletrec-scopes}} & \vcentered{\fig{simpleletrec-extscopes}} & \vcentered{\fig{simpleletrec-ltg}}
		\\
		\vcentered{scopes} & \vcentered{\extscope} & \vcentered{\m{\red_{\eagStratPlus}}-generated subterms}
	\end{hspread}
	\caption{Various depictions of the strongly regular infinite λ-term that is expressed by the \lambdaletrec-term \m{\letin{f = \abs{xy}{\app{\app{f}{y}}{x}}}{f}}.}
	\label{fig:ll:expressible}
\end{figure}

\begin{example}[\stRegCRS-decomposition]
	Before giving definitions of the decomposition CRSs (introduced in \cref{sec:regular}) let us consider an example and decompose\footnote{The term is decomposed with respect to the `eager scope-delimiting' strategy \eagStratPlus for \stRegCRS; see \cref{def:eagerStrats}.} the λ-term \M of the form \m{\M = \abs{xy}{\app{\app{\M}{y}}{x}}} from \cref{ex:expressible:simpleletrec}. The set of generated subterms consists of these prefixed terms (in no particular order):\\
		\prefixed{}{\M} ~~
		\prefixed{x}{\abs{y}{\app{\app{\M}{y}}{x}}} ~~
		\prefixed{xy}{\app{\app{\M}{y}}{x}} ~~
		\prefixed{xy}{{{\app{\M}{y}}}} ~~
		\prefixed{xy}{\M} ~~
		\prefixed{x}{\M} ~~
		\prefixed{xy}{y} ~~
		\prefixed{xy}{x} ~~
		\prefixed{x}{x}.\\
		Here, \prefixed{xy}{{{\app{\M}{y}}}} for instance is a witness for \app{\M}{y} being a (generated) subterm of \M, with the variables \x and \y being bound `somewhere above'. We could of course also have written \prefixed{xz}{{{\app{\M}{z}}}}, which is equivalent. So why is the subterm \prefixed{xy}{x} in the list along with \prefixed{x}{x}, if \x occurs at a position where both \x and \y have been bound? It is, because the decomposition rewrite systems include rules to remove variables from the prefix when no longer required (for leaving the scope).
	See \cref{fig:ll:expressible} for the reduction graph (\cref{def:reduction_graph}) of \M, which illustrates how the decomposition deconstructs \M into the generated subterms above. Since \prefixed{}{\M} has only 9 different \m{\red_{\eagStratPlus}}-reducts, the term \M is strongly regular.
\end{example}

\begin{para}[result]
	The generalisations of the concept of regularity to λ-terms suggest the question: do the expressibility results in \cite{cour:1983} for regular first-order trees with respect to systems of recursion equations, rational expressions, or μ-terms also generalise in an appropriate way? We tackle only the case of strong regularity here, and obtain an expressibility result with respect to the \lambdaletreccal, the λ-calculus with \letrec. We show that an infinite unfolding is unique if it exists, and it can be obtained as the limit of an infinite rewrite sequence of unfolding steps. We prove that a λ-term is \lambdaletrec-expressible if and only if it is strongly regular. This result confirms a conjecture\footnote{Cf.\ the last sentence of \cite[Section~1.2.4]{blom:2001}: `We conjecture that the set of regular lambda-trees is precisely the set of lambda-trees that can be obtained as the unwinding of terms with \letrec'. Mind that `regular lambda-trees' there correspond to strongly regular λ-terms in our sense, and that the notion of `sub-tree' of a `lambda-tree' corresponds to our notion of \m{\red_{\astratplus}}-generated subterm with respect to a \extscope-delimiting strategy \astratplus for \stRegCRS.} by Blom in \cite[Section 1.2.4]{blom:2001}.
\end{para}

\begin{para}[chapter overview]
	\cref{sec:expressibility:prelims} introduces formalisms used in this chapter, predominantly concerning abstract rewriting systems and rewriting strategies.
	In \cref{sec:regular} we introduce rewriting systems for decomposing λ-terms into `generated subterms', and we show some properties of these systems in connection to so-called scope-delimiting strategies. Also in this section, we define regularity and strong regularity for λ-terms employing the concepts of generated subterms and scope-delimiting strategies.
	In \cref{sec:letrec2} we adapt the rewrite systems for decomposing λ-terms and the notions of scope-delimiting strategies to the λ-calculus with \letrec.
	In \cref{sec:proofs}, we develop proof systems for the notions of regularity and strong regularity, for equality of strongly regular λ-terms, and for the property of a \lambdaletrec-term to unfold to a λ-term.
	In \cref{sec:chains} we examine the binding structure of λ-terms (binding--capturing chains) and connect to the concepts introduced so far.
	In \cref{sec:express} we establish the correspondence between strong regularity and \lambdaletrec-expressibility for λ-terms.
	In \cref{sec:lambda-transition-graphs} we introduce `λ-transition-graphs' of λ-terms and of \lambdaletrec-terms as
	labelled transition graphs in which the edges carry one of the four different labels \m{@_0}, \m{@_1}, λ, and \S.
	\Cref{sec:expressibility:conclusion} summarises the above results.
\end{para}

\section{Preliminaries}\label{sec:expressibility:prelims}
This section gathers known concepts vital to this chapter. Some notions concerning rewriting are recapitulated from \cite{terese:2003}, for others references are given. Some definitions of known concepts are simplified or tailored to our purposes.

\begin{notation}[\length{w}, length of a word \w]
	For words \w over some alphabet we denote the length of \w by \length{w}.
\end{notation}

We will use the following specific version of Kőnig's Lemma.

\begin{para}[Kőnig's Lemma]\label{koenigs_lemma}
	Let \m{G = \pair{V}{E}} be an undirected graph with set \V of vertices and set \E of edges. Suppose that \G has infinitely many vertices (\V is infinite), that it is connected (for all vertices \m{\v,\w ∈ \V} there exists a path in \G from \v to \w) and that every vertex has finite degree (it is adjacent to only finitely many other vertices in \G). Then for every vertex \m{\v ∈ \V}, \G contains an infinitely long simple path from \v, that is, a path starting at \v without repetition of vertices.\footnote{\label{footnote:Koenigs:Lemma}This formulation corresponds to the following original formulation by D\'{e}nes Kőnig \cite[page~80]{koen:2001} ``\emph{Satz 3: Jeder unendliche zusammenhängende Graph \G endlichen Grades besitzt einen einseitig unendlichen Weg, wobei der Anfangspunkt \m{P_0} dieses Weges beliebig vorgeschrieben werden kann.}'' in connection with the definition \cite[page~10 ]{koen:2001} ``Eine unendliche Menge von Kanten \m{P_i P_{i+1}} (\m{i=0,1,\dots} \emph{in inf.}), bzw.\ der durch sie gebildete Graph, heißt ein \emph{einseitig unendlicher Weg}, falls für \m{i ≠ j} stets \m{P_i ≠ P_j} ist.''}
\end{para}

\subsection{Rewriting Relations}

\begin{notation}[\m{\arel ⋅ \brel}, composition of relations \arel and \brel]
	For relations \m{\arel\subseteq\A\times\B} and \m{\brel\subseteq\B\times\C} we denote by \m{\arel ⋅ \brel} the \emph{composition of \arel with \brel} defined by \m{\arel ⋅ \brel := \setcompr{\pair{x}{z}}{(∃ y ∈ \B) \pair{x}{y} ∈ \arel ∧ \pair{y}{z} ∈ \brel} \subseteq \A\times\C}.
\end{notation}

\begin{notation}[\m{\arel^*}, reflexive transitive closure of a relation \arel]
	For a relation \m{\arel\subseteq\A\times\B} we denote by \m{\arel^*} the reflexive transitive closure of \arel under composition, for which it holds that \m{\arel^* := \bigcup_{i ∈ ℕ} \arel^i} where \m{\arel^0 := \idon{\A}{} := \setcompr{\pair{x}{x}}{x ∈ \A}} and, for all \m{i ∈ ℕ}, \m{\arel^{i+1} := \arel ⋅ \arel^i}.
\end{notation}

\subsection{Abstract Rewriting Systems}

We use abstract rewriting systems (ARSs) to reason about CRSs (from which we derive the ARSs) and more specifically about the set of terms that a term can be reduced to in an arbitrary number of reductions. ARSs are essentially binary relations on sets. They are in a sense graph-like structures, and more tangible in comparison to CRSs and therefore easier to manipulate.

\begin{para}[abstract rewriting systems]\label{def:ARS}
	An \emph{abstract rewriting system (ARS)} is a quadruple \quadruple{O}{\steps}{\src{}}{\tgt{}} consisting of a set \O of \emph{objects}, a set \steps of \emph{steps}, and \m{\src{}, \tgt{} : \steps → O}, the \emph{source} and \emph{target} functions. For objects \m{o ∈ O} we denote by \outgoingst{\steps}{o} and by \incomingst{\steps}{o} the set of steps in \steps that depart (are outgoing steps) from \o, and that arrive (are incoming steps) at \o, respectively. We say that an ARS is \emph{finite} if \steps is finite.
\end{para}

\begin{para}[ARS induced by a CRS/iCRS]\label{def:inducedARS}
	Let \aCRS be a CRS/iCRS over signature \sig with rules \m{R} and let \Ter{\sig}/\iTer{\sig} be the set of terms over \sig. We call the ARS \m{\aARS = \tuple{O,\steps,\src{},\tgt{}}} the \emph{ARS induced by} \aCRS where:
	\begin{gather*}
		O = \Ter{\sig} ~~/~~ O = \iTer{\sig}
		\\
		\steps = \setcompr{\triple{s}{\arule}{t}}{s,t ∈ O, \arule ∈ R, s \red_\arule t}
		\\
		\src{} : \triple{s}{\arule}{t} ↦ s
		\\
		\tgt{} : \triple{s}{\arule}{t} ↦ t
	\end{gather*}
\end{para}

\begin{notation}[induced ARS steps]\label{notation:inducedSteps}
	While the ARS formalism makes no qualitative distinction between different kinds of steps (\steps being an unstructured set in the definition of ARS), in \cref{def:inducedARS} we retain the original information from the CRS \aCRS as to which rule a step stems from (\steps being a set of triples). This allows us to do more fine-grained reasoning and we will continue to write \m{o_1 \red_\arule o_2} to indicate that there is a step \m{\astep ∈ \steps} with \m{\src{\astep} = o_1} and \m{\tgt{\astep} = o_2} which is due to the application of a CRS rule \rulebp{\aCRS}{\arule}, or even \m{o_1 \red_{\aARS.\arule} o_2} to indicate which ARS we mean specifically. We sometimes also name the step specifically and write \m{\astep : o_1 \red_\arule o_2} or \m{\astep : o_1 \red_{\aARS.\arule} o_2}.
\end{notation}

\begin{para}[sub-ARS]\label{def:sub-ARS}
	Let \m{\aARS_1 = \tuple{O_1,\steps_1,\srci{1}{},\tgti{1}{}}} and \m{\aARS_2 = \tuple{O_2,\steps_2,\srci{2}{},\tgti{2}{}}} be ARSs. We say that \m{\aARS_1} is a \emph{sub-ARS} of \m{\aARS_2} if \m{O_1 \subseteq O_2}, \m{\steps_1 \subseteq \steps_2}, and \srci{1}{}, \tgti{1}{} are the restrictions of \srci{2}{} and \tgti{2}{}, respectively, to \m{\steps_1}, which are required to be total functions. This implies that, for all \m{\astep ∈ \steps_1}, it holds that \m{\srci{2}{\astep} = \srci{1}{\astep} ∈ O_1}, and \m{\tgti{2}{\astep} = \tgti{1}{\astep} ∈ O_1}.
\end{para}

\begin{para}[generated sub-ARS]\label{def:generatedSubARS}
	For an object \m{o ∈ O} of an ARS \m{\aARS = \tuple{O,\steps,\src{},\tgt{}}} we denote by \m{\GeneratedSubARS{o} := \tuple{O',\steps',\src{}',\tgt{}'}} the \emph{sub-ARS of \aARS generated by} \o, where \m{O'} comprises only the objects from \O that are reachable from \o by an arbitrary number of steps (or no steps) and with \m{\steps'}, \m{\src{}'}, \m{\tgt{}'} being the restrictions of \steps, \src{}, \tgt{} to the objects in \m{O'} and to steps between objects in \m{O'}. We write \GeneratedSubARSi{\aARS}{o} to explicitly refer to a specific ARS.
\end{para}

\begin{terminology}[reduction graph]\label{def:reduction_graph}
	We call \GeneratedSubARS{o} the \emph{reduction graph} of \o, and \GeneratedSubARSi{\aARS}{o} the reduction graph of \o with respect to \aARS.
\end{terminology}

\subsection{ARS Strategies}

Next we will give definitions for history-free and history-aware strategies for ARSs. The latter is based on the notion of ARS labelling which in turn is based on the notion of ARS bisimulations. Note that ARS bisimulation is solely introduced for the sake of defining ARS labellings and is not to be confused with bisimulation between LTSs or TRSs.

\begin{para}[bisimulation between ARSs] 
	Let \m{\aARS_i = \quadruple{O_i}{\steps_i}{\srci{i}{}}{\tgti{i}{}}} for \m{i ∈ \set{1,2}} be ARSs. A relation \m{\aARSbisim~\subseteq (O_1 \times O_2) ∪ (\steps_1 \times \steps_2)}, which relates objects with objects and steps with steps, is called an \emph{ARS bisimulation} between \m{\aARS_1} and \m{\aARS_2} if the following holds:
	\begin{itemize}
		\item if \aARSbisim relates two objects, then \aARSbisim also relates their outgoing steps:
			\begin{align*}
				&∀ \pair{o_1}{o_2} ∈ O_1 \times O_2~~ o_1 \aARSbisim o_2 \implies \\
				&\hspace{1cm}∀ \astep_1 ∈ \outgoingst{\steps_1}{o_1} ~ ∃ \astep_2 ∈ \outgoingst{\steps_2}{o_2} ~~ \astep_1 \aARSbisim \astep_2~~∧\\
				&\hspace{1cm}∀ \astep_2 ∈ \outgoingst{\steps_2}{o_2} ~ ∃ \astep_1 ∈ \outgoingst{\steps_1}{o_1} ~~ \astep_2 \aARSbisim \astep_1
			\end{align*}
		\item if \aARSbisim relates two steps, then \aARSbisim also relates their sources and targets:
			\begin{align*}
				∀ \pair{\astep_1}{\astep_2} ∈ \steps_1 \times \steps_2~~ \astep_1 \aARSbisim \astep_2 \implies~
				&\srci{1}{\astep_1} \:\aARSbisim\: \srci{2}{\astep_2} ~∧\\
				&\tgti{1}{\astep_1} \:\aARSbisim\: \tgti{2}{\astep_2}
			\end{align*}
	\end{itemize}
\end{para}

In this work we need ARS-bisimulation only to define ARS-labellings:

\begin{para}[labellings of ARSs]
	Let \m{\aARS = \tuple{O,\steps,\src{},\tgt{}}} and \m{\aARS' = \tuple{O',\steps',\src{}',\tgt{}'}} be ARSs.
	\begin{enumerate}[(i)]
		\item An ARS bisimulation \alabelling between \aARS and \m{\aARS'} is called a \emph{labelling of \aARS to \m{\aARS'}}, and \m{\aARS'} \emph{the \alabelling-labelled version of} \aARS, if the converse \converse{\alabelling} of \alabelling is a function \m{\converse{\alabelling} : O' ∪\steps' → O ∪\steps}, and if additionally, for all \m{o' ∈ O'} and \m{o ∈ O} with \m{o \mathrel{\alabelling} o'}, the restriction \m{\srestrictto{\converse{\alabelling}}{\soutgoingst{\steps}'(o')} : \soutgoingst{\steps}'(o') → \soutgoingst{\steps}(o)} of \converse{\alabelling} to the steps departing from \m{o'} is bijective.
		\item A \emph{rewrite labelling} \L of \aARS to \m{\aARS'} is a pair \pair{\alabelling}{l} consisting of a labelling \alabelling of \aARS to \m{\aARS'} together with an \emph{initial} labelling function \m{l} mapping objects of \aARS to bisimilar objects of \m{\aARS'}.
	\end{enumerate}
\end{para}

\begin{para}[history-free strategy]
	A \emph{history-free strategy} for an abstract rewriting system \aARS is a sub-ARS of \aARS that has the same objects, and the same normal forms as \aARS.
\end{para}

\begin{para}[history-aware strategy]
	A \emph{history-aware strategy} for an abstract rewriting system \aARS is a history-free strategy for the \alabelling-labelled version of \aARS with respect to, and together with, a rewrite labelling \pair{\alabelling}{l} of \aARS.
\end{para}

\begin{para}[history-aware and history-free strategies]
	While strategies are simply defined as sub-ARSs of an ARS, here one may think of them as restrictions on the applicability of CRS rules. This is because in this work we consider ARSs that are induced by a CRS. A history-free strategy cannot restrict the applicability of a CRS rule to some term differently depending on the history (i.e.\ the previously applied rules) of that term. In history-aware strategy, on the other hands, terms are coupled with additional information that can be used to record the history of a term. Thus, how the applicability of CRS rules are restricted can differ depending on that record.
 	\par By a \emph{strategy} for \aARS we will mean a history-free strategy or a history-aware strategy for \aARS.
\end{para}

\begin{para}[projecting history-aware strategies to history-free strategies]
	Let \astrat be a history-aware strategy for \aARS, and let \m{\aARS'} be that \alabelling-labelled version of \aARS which \astrat is a history-free strategy for. Then \astrat projects to a history-free strategy \check{\astrat} of \aARS. The projection is defined by \alabelling, which induces a local bijective correspondence between outgoing steps of related sources of \aARS and \m{\aARS'}. Mind that for deterministic \astrat, \check{\astrat} may become non-deterministic. Furthermore, every rewrite sequence according to \astrat in \m{\aARS'} projects to a unique rewrite sequence in \aARS (which is a rewrite sequence according to \check{\astrat}).
	\par The last mentioned fact makes it possible to speak, for a given rewrite labelling, of rewrite sequences of a history-aware strategy for the objects of the original ARS.
\end{para}



Let \astrat be a history-aware strategy for an ARS \aARS, and \o an object of \aARS. Suppose that \astrat is a sub-ARS of the \alabelling-labelled version \m{\aARS'} of \aARS for some rewrite labelling \pair{\alabelling}{l} of \aARS. Then by a \emph{rewrite sequence of \astrat on \o} (in \aARS) we will mean the projection to \aARS of a rewrite sequence of \astrat (in \m{\aARS'}) on the result \m{l(o)} of the initial labelling applied to \o.


\subsection{Labelled Transition Systems}

\begin{para}[labelled transition systems]\label{def:labelled_transition_systems}
	A \emph{labelled transition system (LTS)} is a triple \m{\aLTS = \triple{\states}{L}{T}} consisting of a set \states of \emph{states}, a set \L of \emph{labels}, and a set \m{T \subseteq \states \times L \times \states} of \L-labelled transitions. We write \m{s_1 \red_\alab s_2} to denote labelled transitions, indicating that \m{\triple{s_1}{\alab}{s_2} ∈ T}.
\end{para}

\begin{para}[bisimulation between LTSs]
	Let \m{\aLTS_1 = \triple{\states_1}{L}{T_1}} and \m{\aLTS_2 = \triple{\states_2}{L}{T_2}} be a LTSs over a common set of labels. A \emph{bisimulation on \aLTS} is a binary relation \m{\abisim \subseteq \states_1\times\states_2} that satisfies, for all \m{\m{s_1} ∈ \states_1} and \m{s_2 ∈ \states_2}:
	\begin{enumerate}[(i)]
		\item if \m{\m{s_1} \mathrel{\abisim} s_2} and \m{\m{s_1} \red_\alab \m{s_1}'}, then there exists \m{s_2' ∈ \states_2} such that \m{s_2 \red_\alab s_2'} and \m{\m{s_1}'\mathrel{\abisim} s_2'};
		\item if \m{\m{s_1} \mathrel{\abisim} s_2} and \m{s_2 \red_\alab s_2'}, then there exists \m{\m{s_1}' ∈ \states_1} such that \m{\m{s_1} \red_\alab \m{s_1}'} and \m{\m{s_1}'\mathrel{\abisim} s_2'}.
	\end{enumerate}
	Two states \m{\m{s_1} ∈ \states_1} and \m{s_2 ∈ \states_2} are \emph{bisimilar}, denoted by \m{\m{s_1} \bisim s_2}, if there exists a bisimulation \abisim such that \m{\m{s_1} \mathrel{\abisim} s_2}.
\end{para}

\begin{remark}[ARSs versus LTSs]\label{rem:diff-ARS-LTS}
	Note that labelled transition systems (LTSs) are essentially the same as (indexed) ARSs \cite[1.1]{terese:2003}. Traditionally, research about LTSs is more concerned with the transitions and their labels while research about ARSs is more concerned with set of objects and their reachability. This is reflected in a different definition of bisimulation for the two systems. Only on LTSs the bisimulation is sensitive to transition labels; ARSs do not have explicit labels on their steps. In this work we adhere to the tradition. We use ARSs for the definition of (strong) regularity which depends on whether the set of reachable terms is finite, and we use LTSs to define a graph representation for λ-terms on which we study properties revolving around (functional) bisimulation.
\end{remark}

\begin{para}[labelled transition graphs]\label{def:labelled_transition_graphs}
	A \emph{labelled transition graph (LTG)} \G is a pointed LTS, that is, \m{G = \quadruple{\states}{L}{i}{T}} where \triple{\states}{L}{T} is an LTS, and \m{i ∈ \states}, which is called the \emph{initial state}.
\end{para}

\begin{para}[bisimulation between LTGs]
	Two LTGs \m{G_1 = \triple{\states_1,L}{i_1}{\red_1}} and \m{G_1 = \triple{\states_2,L}{i_2}{\red_2}} are \emph{bisimilar} if there is a bisimulation on the underlying LTSs that relates the initial states \m{i_1} and \m{i_2} with one another.
\end{para}

\subsection{iCRS Terms}

\begin{para}[iCRS preterms, iCRS terms]\label{iCRS-terms}
	When speaking of `infinite terms' for CRSs over some signature we draw on \cite[12.4]{terese:2003} and \cite{kete:simo:2011} where metaterms of iCRSs are defined by means of metric completion. The metric is defined on α-equivalence classes of finite metaterms dependent on the minimal depth at which two finite preterms belonging to the equivalence classes have a `conflict'. \emph{iCRS terms} are defined as the objects formed by the metric completion process. They can be represented as equivalence classes of infinite preterms, which we call \emph{iCRS preterms}, with respect to a notion of α-equivalence that again is based on the notion of `conflict' (see also \cite[Definition~12.4.1]{terese:2003}). iCRS preterms are infinite ordered dyadic trees in which each node is either labelled by a variable name, and then the node does not have a successor, or by named abstractions \abs{x}{} (with some variable name \x), and then the node has a single successor node, or by an application symbol, and then the node has a right and a left successor node.
\end{para}

\begin{definition}[α-equivalence for iCRS preterms, Schroer-style proof system]\label{def:iCRSPretermAlphaSchroer}
	The notion of α-equivalence on iCRS preterms based on the absence of conflicts can be described by provability in the proof system \iCRSPretermAlpha{\Schroer}{\sig} in \cref{fig:alpha-infpreterm:Schroer} which is an adaptation of a proof system due to Schroer \cite{hend:oost:2003}. The proof system \iCRSPretermAlpha{\Schroer}{\sig} over signature \sig consists of the axioms and rules displayed in \cref{fig:alpha-infpreterm:Schroer} and contains a rule \f for every \m{f ∈ \sig}.
	Provability in \iCRSPretermAlpha{\Schroer}{\sig} of an equation between preterms is defined as the existence of a possibly infinite, \emph{completed} derivation: for example, by \m{\infderivablein{\siCRSPretermAlpha{\Schroer}}{\apreter = \bpreter}} we mean the existence of a possibly infinite proof tree \infDeriv with conclusion \m{\apreter = \bpreter} such that maximal threads from the conclusion upwards either have length \omega, or have finite length and end at a leaf that carries an axiom. (We will generally use the decorated turnstyle symbol \sinfderivable to indicate provability by a completed, possibly infinite derivation.)
\end{definition}

\begin{figure}
	\proofsystem{
		\begin{bprooftree}
			\emptyAxiom
			\infLabel{\text{const}}
			\unaryInf{c = c}
		\end{bprooftree}
		\hsep
		\begin{bprooftree}
			\axiom{\subst\apreter{x}{c} = \subst\bpreter{y}{c}}
			\infLabel{\CRSabs{\hspace*{1pt}}{}}
			\unaryInf{\CRSabs{x}{\apreter} = \CRSabs{y}{\bpreter}}
		\end{bprooftree}
		\\
		\begin{bprooftree}
			\axiom{\apreter_1 = \bpreter_1}
			\axiom{\dots\dots}
			\axiom{\apreter_n = \bpreter_n}
			\infLabel{f}
			\ternaryInf{f(\apreter_1,\dots,\apreter_n) = f(\bpreter_1,\dots,\bpreter_n)}
		\end{bprooftree}
	}
	\caption{Schroer-style proof system \iCRSPretermAlpha{\Schroer}{\sig} for α-equivalence of iCRS preterms over signature \sig: for every \m{f ∈ \sig} with arity \n, \iCRSPretermAlpha{\Schroer}{\sig} contains a rule \f. In instances of the rule \m{\CRSabs{\hspace*{1pt}}{}}, the constant c is chosen fresh for \apreter and \bpreter. Substitution which occurs in the assumption of \m{\CRSabs{\hspace*{1pt}}{}} denotes substitution by variable replacement on iCRS preterms. It needs not to be capture-avoiding because of the freshness of the substituant.}
	\label{fig:alpha-infpreterm:Schroer}
\end{figure}

However, closer to coinductive proof systems for λ-terms that we develop is the following different, but equivalent characterisation of α-equivalence for infinite iCRS preterms, a variant for iCRS terms of a proof system for α-equivalence between finite λ-terms due to Kahrs (see \cite{hend:oost:2003}).

\begin{definition}[α-equivalence for iCRS preterms, Kahrs-style proof system]\label{def:iCRSPretermAlphaKahrs}
	The proof system \iCRSPretermAlpha{\Kahrs}{\sig} for α-equivalence on iCRS preterms over signature \sig consists of the axioms and the rules in \cref{fig:alpha-infpreterms:Kahrs} with, for every \m{f ∈ \sig}, a rule \f.
	Provability of an equation between preterms in \iCRSPretermAlpha{\Kahrs}{\sig} is defined, analogously as in \cref{fig:alpha-infpreterm:Schroer}, as the existence of a possibly infinite, completed derivation.
\end{definition}

\begin{figure}
	\proofsystem{
		\begin{bprooftree}
			\emptyAxiom
			\infLabel{\0}
			\unaryInf{\KahrsCRSabs{\vec{x}y}{y} = \KahrsCRSabs{\vec{z}u}{u}}
		\end{bprooftree}
		\\
		\begin{bprooftree}
			\axiom{\KahrsCRSabs{\vec{x}}{\apreter} = \KahrsCRSabs{\vec{z}}{\bpreter}}
			\infLabel{\S~~\mlSideCondition{if \y does not occur in \apreter,\\ and \w does not occur in \bpreter}}
			\unaryInf{\KahrsCRSabs{\vec{x}y}{\apreter} = \KahrsCRSabs{\vec{z}w}{\bpreter}}
		\end{bprooftree}
		\\
		\begin{bprooftree}
			\axiom{\KahrsCRSabs{\vec{x}y}{\apreter} = \KahrsCRSabs{\vec{z}u}{\bpreter}}
			\infLabel{\CRSabs{\hspace*{1pt}}{}}
			\unaryInf{\KahrsCRSabs{\vec{x}}{\CRSabs{y}{\apreter}} = \KahrsCRSabs{\vec{z}}{\CRSabs{u}{\bpreter}}}
		\end{bprooftree}
		\\
		\begin{bprooftree}
			\axiom{\KahrsCRSabs{\vec{x}}{\apreter_1} = \KahrsCRSabs{\vec{y}}{\bpreter_1}}
			\axiom{\dots\dots}
			\axiom{\KahrsCRSabs{\vec{x}}{\apreter_n} = \KahrsCRSabs{\vec{y}}{\bpreter_n}}
			\infLabel{f}
			\ternaryInf{\KahrsCRSabs{\vec{x}}{f(\apreter_1,\dots,\apreter_n)} = \KahrsCRSabs{\vec{y}}{f(\bpreter_1,\dots,\bpreter_n)}}
		\end{bprooftree}
	}
	\caption{Kahrs-style proof system \iCRSPretermAlpha{\Kahrs}{\sig} for α-equivalence on iCRS preterms over signature \sig: for every \m{f ∈ \sig} with arity \n, \iCRSPretermAlpha{\Kahrs}{\sig} contains a rule \f.}
	\label{fig:alpha-infpreterms:Kahrs}
\end{figure}

This formulation of α-equivalence for λ-terms will be the key to our formulation of a `coinduction principle' for λ-terms in \cref{thm:coinduction-principle}.

\begin{para}[infinite rewrite relation, \infred]\label{infred}
	For an iCRS with rewrite relation \red, we denote by \infred the infinite rewrite relation induced by strongly convergent and continuous rewrite sequences of arbitrary (countable) ordinal length. Hereby strong convergence means that, at every limit ordinal, the depth of the rewrite activity in the rewrite sequence's terms tends to infinity. Continuity means that the terms of the rewrite sequence converge, in the metric space of infinite terms, at every limit ordinal. By \omegared we will denote the rewrite relation induced by strongly continuous \red-rewrite-sequences of length \omega.
\end{para}

\section{Regular and strongly regular λ-terms}\label{sec:regular}

\begin{para}[regularity]\label{para:regularity}
	For infinite first-order trees the concept of regularity is well-known and well-studied \cite{cour:1983}. Regularity of a labelled tree\footnote{By a `labelled tree' we here mean a finite or infinite tree whose nodes are labelled by function symbols with a fixed arity. The arity determines the number of (ordered) successor nodes each node has.} is defined as the existence of only finitely many subtrees and implies the existence of a finite graph that unfolds to that tree. In this section we generalise the notion of regularity to trees with a binding mechanism, i.e.\ to λ-terms specifically. We give a definition for regularity which corresponds to regularity of a term when regarded as a first-order tree, and for strong regularity, which will be shown in the following sections to coincide with \lambdaletrec-expressibility.
\end{para}

\begin{para}[decomposition CRSs \RegCRS and \stRegCRS]
	We define regularity and strong regularity in terms of rewriting systems that will be called \RegCRS and \stRegCRS. Rewrite sequences in these systems \emph{inspect} a given term coinductively in the sense that a rewrite sequence corresponds to a decomposition of the term along one of its paths from the root. Both \RegCRS and \stRegCRS extend a kernel system \RegzeroCRS comprising three rewrite rules which denote whether the position just passed in the tree is an abstraction or an application and in the second case whether the application is being followed to the left or to the right.
\end{para}

\begin{para}[prefixed terms]\label{prefixed-terms}
	The rewriting systems are defined on λ-terms enriched by what we call an \emph{abstraction prefix}, by which the terms can be kept closed during the deconstruction. This is crucial for the definition of the rewriting system as a CRS. While intuitively it is clear that the λ-term \app\M\N is composed of the subterms \M and \N, abstractions are more problematic. In a first-order setting one could say that \abs{x}\M contains \M as a subterm, but if \x occurs freely in \M then \M would be an open term. That means that the scrutinisation of an abstraction would be able to go from a closed term to an open term, which would run counter to the interpretation of a higher-order term as an α-equivalence class. In the definition of the rewriting systems below this issue is resolved as follows. When inspecting an abstraction, the binder is eliminated but moved from the scrutinised subterm into the prefix. That guarantees that the term as a whole remains closed.
\end{para}

\begin{example}
	In \m{\abs{x}{\abs{y}{\app{\app{x}{x}}{y}}}} for instance the path from the root to the second occurrence of \x then corresponds to the rewrite sequence:
	\m{
		\prefixed{}{\abs{x}{\abs{y}{\app{\app{x}{x}}{y}}}}
		\red_λ
		\prefixed{x}{\abs{y}{\app{\app{x}{x}}{y}}}
		\red_λ
		\prefixed{xy}{\app{\app{x}{x}}{y}}
		\red_{@_0}
		\prefixed{xy}{\app{x}{x}}
		\red_{@_1}
		\prefixed{xy}{x}
	}.
\end{example}

\begin{para}[scope and \extscope]\label{Reg:scope-delimiters}
	The \RegCRS and the \stRegCRS system extend \RegzeroCRS by a \emph{scope-delimiting} rule, which signifies that the scope of an abstraction has ended, with both systems being based on different notions of scope.
	\RegCRS relies on what we simply call \emph{scope} of an abstraction: the range from the abstraction up to the positions under which the bound variable does not occur anymore.
	We base \stRegCRS on a different notion of scope, called \extscope (pronounced `extended scope'), which is strictly nested. The \extscopes of an abstraction extends its scope by encompassing all \extscopes that are opened within its range. As a consequence, \extscopes do no overlap partially (see \cref{fig:extscope}) but are strictly nested. In a sense \extscope is the transitive closure of scope; this becomes obvious in \cref{def:scope:extscope} where a precise definition of scope and \extscope is given. In \cite{bala:2012} \extscopes are called `skeletons'.
\end{para}

\begin{figure}
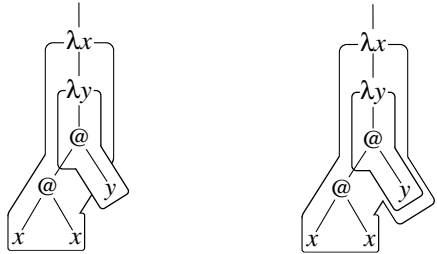

	\fig{scope} \hspace{2cm} \fig{extscope}
	\caption{The difference between scope and \extscope}
	\label{fig:extscope}
\end{figure}

\begin{para}[\extscope-delimiters identify abstractions]
	When every \extscope is closed by the \extscope-delimiting rule then the sequence of rewrite steps alone (i.e.\ without the terms themselves) unambiguously determines which abstraction a variable occurrence belongs to.
\end{para}

\begin{example}\label{ex:delimiter_x}
	The rewrite sequence from above would then have one additional \extscope-delimiting step asserting that the variable at the end of the path is indeed \x and not \y:
	\m{
		\prefixed{}{\abs{x}{\abs{y}{\app{\app{x}{x}}{y}}}}
		\red_λ
		\prefixed{x}{\abs{y}{\app{\app{x}{x}}{y}}}
		\red_λ
		\prefixed{xy}{\app{\app{x}{x}}{y}}
		\red_{@_0}
		\prefixed{xy}{\app{x}{x}}
		\red_{@_1}
		\prefixed{xy}{x}
		\red_\S
		\prefixed{x}{x}
	}
\end{example}

\begin{para}[abstraction prefix and \extscope]
	The abstraction prefix not only keeps the term closed but also denotes which \extscope is still open, which provides the information to decide applicability of the \extscope-delimiting rule.
\end{para}

\begin{example}
	The last step closes the \extscope of \y, therefore that variable is removed from the prefix.
	The rewrite sequence for the path to the occurrence of \y does not include a \extscope-delimiting step:
	\m{
		\prefixed{}{\abs{x}{\abs{y}{\app{\app{x}{x}}{y}}}}
		\red_λ
		\prefixed{x}{\abs{y}{\app{\app{x}{x}}{y}}}
		\red_λ
		\prefixed{xy}{\app{\app{x}{x}}{y}}
		\red_{@_1}
		\prefixed{xy}{y}
	}
\end{example}

\begin{para}[nameless representation for λ-terms]
Ultimately, the \stRegCRS rewriting system defines nameless representations for λ-terms related to the de-Bruijn notation for λ-terms. Considering the de-Bruijn representation of the above term \m{\abs{}{\abs{}{\app{\app{\S(\0)}{\S(\0)}}{\0}}}} we find that the position of the \m{\red_\S}-steps indeed coincides with the position of the \S markers. However, the rewrite system \stRegCRS permits more flexibility for the placement of \m{\red_\S}-steps.
\end{para}

\begin{example}
	For example the path from \cref{ex:delimiter_x} to the second occurrence of \x can also be witnessed by the following rewrite sequence:
	\m{
		\prefixed{}{\abs{x}{\abs{y}{\app{\app{x}{x}}{y}}}}
		\m{\red_λ}
		\prefixed{x}{\abs{y}{\app{\app{x}{x}}{y}}}
		\m{\red_λ}
		\prefixed{xy}{\app{\app{x}{x}}{y}}
		\red_{@_0}
		\prefixed{xy}{\app{x}{x}}
		\m{\red_\S}
		\prefixed{x}{\app{x}{x}}
		\red_{@_1}
		\prefixed{x}{x}
	}.
	Here the \extscope of \y is closed earlier. This would correspond to \m{\abs{}{\abs{}{\app{\S(\app{\0}{\0})}{\0}}}} in de-Bruijn notation, or more precisely, in a variant of the de-Bruijn notation which permits the scope/\extscope-delimiter \S to occur anywhere between a variable occurrence and its binding abstraction.
\end{example}

\begin{para}[scope delimiters in the literature]\label{S}
	This variation of the de-Bruijn notation \cite{de-bruijn:72}, in which \S-symbols can be used anywhere in the term signify the end of a scope, is due to Paterson \cite{bird:pat:1999}. The idea is also used in \cite{oost:looi:zwit:2004} and related to an even more flexible end-of-scope symbol \adbmal \cite{hend:oost:2003}.
\end{para}

\begin{definition}[CRS terms with abstraction prefixes]\label{def:sig:lambdaprefixcal:CRS}
	The CRS signature for \lambdaprefixcal, the λ-calculus with abstraction prefixes, extends the CRS signature \sigCRS for \lambdacal (see \cref{def:sigs:lambdacal:lambdaletrec:CRS}) and consists of the set \m{\sigCRSPrefixed = \sigCRS ∪ \setcompr{ \sPrefixedCRS{n}}{n ∈ ℕ}} of function symbols, where 
	for \m{n ∈ ℕ} the function symbols \sPrefixedCRS{n} for \emph{prefix λ-abstractions} of length \n are unary (have arity one). CRS terms with leading prefixes \m{\prefixedCRS{n}{x_1}{\dots \CRSabs{x_n}{\M}}} will informally be denoted by \prefixed{x_1 \dots x_n}{\M}, abbreviated as \prefixed{\vec{x}}{\M}.
\end{definition}

\begin{definition}[the CRSs \RegzeroCRS, \RegCRS, \stRegCRS for decomposing λ-terms]\label{def:RegCRS:stRegCRS}
	Consider the following CRS rule schemes over \sigCRSPrefixed:
	\newcommand\rulename[1]{&\decrule{#1}\hspace*{-2ex}&:~~&}
	\newcommand\lhs[1]{#1 \red}
	\newcommand\rhs[1]{#1 \\}
	\newcommand\rhsBr[1]{\\ &&& \indent #1 \\}
	\begin{align*}
		\rulename{@_0}
			\lhs{\prefixedCRS{n}{x_1 \dots x_n}{\appCRS{Z_0(\vec{x})}{Z_1(\vec{x})}}}
			\rhs{\prefixedCRS{n}{x_1 \dots x_n}{Z_0(\vec{x})}}
		\rulename{@_1}
			\lhs{\prefixedCRS{n}{x_1 \dots x_n}{\appCRS{Z_0(\vec{x})}{Z_1(\vec{x})}}}
			\rhs{\prefixedCRS{n}{x_1 \dots x_n}{Z_1(\vec{x})}}
		\rulename{λ}
			\lhs{\prefixedCRS{n}{x_1 \dots x_n}{\absCRS{x_{n+1}}{Z(\vec{x})}}}
			\rhs{\prefixedCRS{n+1}{x_1 \dots x_{n+1}}{Z(\vec{x})}}
		\rulename{\S}
			\lhs{\prefixedCRS{n+1}{x_1 \dots x_{n+1}}{Z(x_1,\dots,x_n)}}
			\rhs{\prefixedCRS{n}{x_1 \dots x_n}{Z(x_1,\dots,x_n)}}
		\rulename{\del}
			\lhs{\prefixedCRS{n+1}{x_1 \dots x_{n+1}}{Z(x_1,\dots,x_{i-1},x_{i+1},\dots,x_{n+1})}}
			\rhsBr{\prefixedCRS{n}{x_1 \dots x_{i-1}x_{i+1}\dots x_n}{Z(x_1,\dots,x_{i-1},x_{i+1},\dots,x_{n+1})}}
	\end{align*}
	By \RegzeroCRS we denote the CRS with rules \decrule{@_0}, \decrule{@_1}, and \decrule{λ}.
	By \RegCRS (by \stRegCRS) we denote the CRS consisting of all of the above rules
	\emph{except} the rule \decrule{\S} (\emph{except} the rule \decrule{\del}).
	The rewrite relations of \RegzeroCRS, \RegCRS, and \stRegCRS are denoted by \m{\red_\regzero}, \m{\red_\reg} and \m{\red_\streg}, respectively.
	And by \m{\red_{@_0}}, \m{\red_{@_1}},
	\m{\red_λ}, \m{\red_\S}, \m{\red_\del},
	we respectively denote the rewrite relations
	induced by each of the single rules
	\decrule{@_0}, \decrule{@_1}, \decrule{λ},
	\decrule{\S}, and \decrule{\del}. 
\end{definition}

\begin{para}[\RegzeroCRS, \RegCRS, and \stRegCRS in informal notation]
	\newcommand\rulename[1]{&\decrule{#1}\hspace*{-7ex}&:~~&}
	\newcommand\lhs[1]{#1 \red}
	\newcommand\rhs[1]{#1 \\}
	\newcommand\annotation[1]{&&& \indent \sideCondition{#1} \\}
	For better readability we will from now on rely on the informal notation corresponding to \cref{def:RegCRS:stRegCRS} which is as follows:
	\begin{align*}
		\rulename{@_0}
			\lhs{\prefixed{x_1 \dots x_n}{\app{\M_0}{\M_1}}}
			\rhs{\prefixed{x_1 \dots x_n}{\M_0}}
		\rulename{@_1}
			\lhs{\prefixed{x_1 \dots x_n}{\app{\M_0}{\M_1}}}
			\rhs{\prefixed{x_1 \dots x_n}{\M_1}}
		\rulename{λ}
			\lhs{\prefixed{x_1 \dots x_n}{\abs{x_{n+1}}{\M_0}}}
			\rhs{\prefixed{x_1 \dots x_{n+1}}{\M_0}}
		\rulename{\S}
			\lhs{\prefixed{x_1 \dots x_{n+1}}{\M_0}}
			\rhs{\prefixed{x_1 \dots x_n}{\M_0}}
			\annotation{if the binding \m{λx_{n+1}} is vacuous}
		\rulename{\del}
			\lhs{\prefixed{x_1 \dots x_{n+1}}{\M_0}}
			\rhs{\prefixed{x_1 \dots x_{i-1}x_{i+1}\dots x_{n+1}}{\M_0}}
			\annotation{if the binding \m{λx_i} is vacuous}
	\end{align*}
\end{para}

\begin{remark}
	Be reminded that as mentioned in \cref{notation:rules} we will at times omit the \decompose and write \rulep{\arule} instead of \decrule{\arule}.
\end{remark}

\begin{figure}
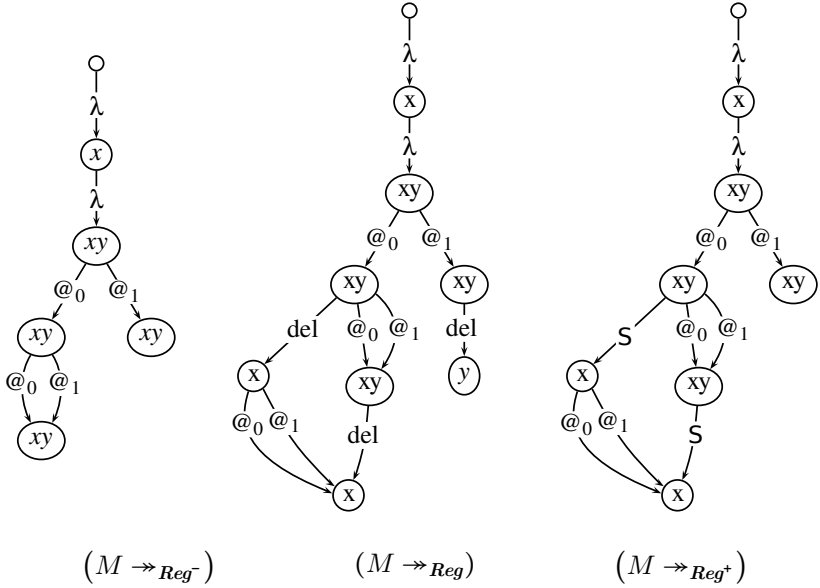

	\begin{hspread}
		\vcentered{\fig{lxy-xxy-regmin-ltg-l}} &
		\hspace{-1em}\vcentered{\fig{lxy-xxy-cprgraph-ltg-l}} &
		\hspace{-1em}\vcentered{\fig{lxy-xxy-redgraph-ltg-l}}
		\\
		\vcentered{\GeneratedSubARSi{\RegzeroCRS}{M}} &
		\vcentered{\GeneratedSubARSi{\RegCRS}{M}} &
		\vcentered{\GeneratedSubARSi{\stRegCRS}{M}}
	\end{hspread}
	\caption{The reduction graphs of \m{M = \prefixed{}{\abs{x}{\abs{y}{\app{\app{x}{x}}{y}}}}} with respect to \RegzeroCRS/\RegCRS/\stRegCRS. Note, that in the vertices not the entire \lambdaprefixcal-terms are displayed but only the respective prefixes.}
	\label{fig:three_redgraphs}
\end{figure}

\begin{para}[\stRegCRS defines nameless representations]
	Considering the reduction graphs from \cref{fig:three_redgraphs} without labels on their nodes we see that only from the \stRegCRS graph the original term can be reconstructed unambiguously. For example, the path to the rightmost occurrence of \x has the rewrite sequence in \RegCRS \[\red_λ . \red_λ . \red_{@_0} . \red_\del . \red_{@_1}\] which witnesses an occurrence of \y in place of \x at the same position. This ambiguity plays a role for the definition of λ-transition-graphs in \cref{sec:lambda-transition-graphs}, and is discussed in that context in \cref{rem:nameless-repr}.
\end{para}

\begin{para}[\m{\red_\streg ~\subseteq~ \red_\reg ~\subseteq~ \red_\regzero}]
	Note that \m{\red_\reg} is contained in \m{\red_\regzero}, and since the rule \rulep{\del} generalises the rule \rulep{\S}, \m{\red_\streg} is contained in \m{\red_\reg}.
\end{para}

We will specifically need the following statement later on:
\begin{proposition}\label{prop:rewseqs:stRegCRS:2:RegCRS}
	Every rewrite sequence in \stRegCRS corresponds directly to a rewrite sequence in \RegCRS by exchanging \m{\red_\S}-steps with \m{\red_\del}-steps.
\end{proposition}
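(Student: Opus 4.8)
The plan is to exploit the fact, already noted just before the proposition in the paragraph on \m{\red_\streg ~\subseteq~ \red_\reg}, that the scheme \decrule{\S} is precisely the instance of the scheme \decrule{\del} obtained by taking \m{i = n+1}, whereas the remaining schemes \decrule{@_0}, \decrule{@_1}, and \decrule{λ} belong, literally as the same rules, to both CRSs \stRegCRS and \RegCRS. So, given an arbitrary \stRegCRS-rewrite-sequence \m{u_0 \red u_1 \red u_2 \red \dots} (possibly of transfinite length), I would keep every \decrule{@_0}-, \decrule{@_1}-, \decrule{λ}-step, together with every intermediate term, exactly as it stands, and replace each \m{\red_\S}-step by the \m{\red_\del}-step that deletes the \emph{last} prefix variable.

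First I would check that this replacement is legitimate at the level of CRS rule instances. Reading \decrule{\S} and \decrule{\del} off from \cref{def:RegCRS:stRegCRS}: a \m{\red_\S}-step rewrites a term matching \m{\prefixedCRS{n+1}{x_1\dots x_{n+1}}{Z(x_1,\dots,x_n)}} to \m{\prefixedCRS{n}{x_1\dots x_n}{Z(x_1,\dots,x_n)}}, which is exactly the left-hand side and the contractum produced by the scheme \decrule{\del} when \m{i} is chosen to be \m{n+1} — its side condition ``\m{λx_i} vacuous'' then literally becoming ``\m{λx_{n+1}} vacuous'', the side condition of \decrule{\S}. Hence the very same CRS instance (the same \m{n}, the same substitution for the metavariable \m{Z}) that witnesses a given \m{\red_\S}-step also witnesses a \m{\red_\del}-step with the same source and the same target. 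Therefore the transformed sequence passes through exactly the same terms \m{u_0, u_1, u_2, \dots}, so the target of each step is the source of the next and each step is a \RegCRS-step: it is a \RegCRS-rewrite-sequence.

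For finite sequences this already completes the argument. For sequences of length \omega or of larger ordinal length I would additionally remark that strong convergence and continuity (see \cref{infred}) are conditions on the sequence of terms and on the depths of the contracted redexes only; since swapping an \m{\red_\S}-step for the corresponding \m{\red_\del}-step alters neither the terms nor the redex positions, these conditions are preserved, and the transformed sequence is again a legitimate (strongly convergent, continuous) \RegCRS-rewrite-sequence. This yields the asserted correspondence, which is moreover injective on rewrite sequences.

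There is no genuine obstacle here: the only point requiring a moment's care is the purely syntactic verification that the left-hand side, the side condition, and the contractum of \decrule{\del} at \m{i = n+1} coincide on the nose with those of \decrule{\S}, together with the observation — relevant only in the transfinite case — that the depth of rewrite activity is left invariant, so that strong convergence and continuity carry over unchanged.
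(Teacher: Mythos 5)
Your proof is correct and takes essentially the same route as the paper, which in fact states this proposition without an explicit proof, regarding it as immediate from the preceding observation that the rule \ruleref{\del} generalises the rule \ruleref{\S}. Your syntactic check that \decrule{\S} is literally the instance \m{i = n+1} of \decrule{\del} (with identical left-hand side, side condition, and contractum), together with the remark that the terms and redex depths are unchanged so that strong convergence carries over in the transfinite case, is exactly the intended argument spelled out.
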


\begin{para}[abstract prefix symbol]
	We are only interested in terms which have a single occurrence of the abstraction prefix symbol at the outermost position. Note that the rules in \RegCRS and \stRegCRS guarantee that every reduct of a term of the form \prefixed{\vec{x}}{\M} is again a term of this form. Therefore we define:
\end{para}

\begin{definition}[prefixed λ-terms]\label{def:CRSterms}
	By \iTer{\lambdaprefixcal} we denote the subset of closed iCRS terms over \sigCRSPrefixed with the restrictions:
	\begin{itemize}
		\item Every term \m{\M ∈ \iTer{\lambdaprefixcal}} has a prefix at its root and nowhere else: \M is of the form \m{\prefixedCRS{n}{x_1}{\dots \CRSabs{x_n}\M}} and \M has no occurrences of function symbols \sPrefixedCRS{i} for any \m{i ∈ ℕ}.
		\item Otherwise a CRS abstraction can only occur directly beneath an \sabsCRS-symbol.
	\end{itemize}
	\iTer{\lambdaprefixcal} is more formally specified in \cref{def:Ter-lambdaletrec}.
\end{definition}

\begin{proposition}\label{prop:oneRegARS}
	\iTer{\lambdaprefixcal} is closed under \m{\red_\regzero}, \m{\red_\reg}, and \m{\red_\streg}.
\end{proposition}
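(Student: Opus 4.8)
The plan is to prove the three closure claims simultaneously by inspecting the rule schemes one by one. By \cref{def:RegCRS:stRegCRS}, each of $\red_\regzero$, $\red_\reg$, and $\red_\streg$ is the union of the one-step relations of a subcollection of the five rules \decrule{@_0}, \decrule{@_1}, \decrule{λ}, \decrule{\S}, \decrule{\del}; since a class of objects closed under each of a family of relations is also closed under their union, it suffices to show that whenever $\apreter\in\iTer{\lambdaprefixcal}$ and $\apreter$ rewrites in one step to $\bpreter$ by any one of these five rules, then $\bpreter\in\iTer{\lambdaprefixcal}$.

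The first thing I would establish is that every such step is forced to be a root step. The left-hand side of each of the five rule schemes carries a prefix symbol \sPrefixedCRS{n} at its head, so the prefix symbol of any of its redexes occurs at the redex's root; but by \cref{def:CRSterms} a term in \iTer{\lambdaprefixcal} contains exactly one prefix symbol, namely at its own root. Hence the surrounding rewrite context is empty, the redex is all of $\apreter$, and the subterms matched by the metavariables $Z$, $Z_0$, $Z_1$ are sub-preterms of the body sitting just below the root prefix of $\apreter$; by \cref{def:CRSterms} such sub-preterms carry no prefix symbols and have all their CRS-abstractions directly beneath \sabsCRS-symbols. This reduces the task to checking, for each of the five rules, that the right-hand side $\bpreter$ again has the shape demanded by \cref{def:CRSterms} and is closed.

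Then I would go through the five cases, each of which is short. For \decrule{@_0} and \decrule{@_1} the reduct is \prefixedCRS{n}{x_1 \dots x_n}{Z_i}, whose body $Z_i$ is a sub-preterm of the former body and hence still prefix-free with CRS-abstractions only below \sabsCRS, and whose free variables remain among $x_1,\dots,x_n$, so $\bpreter$ is closed and lies in \iTer{\lambdaprefixcal}. For \decrule{λ} the reduct is \prefixedCRS{n+1}{x_1 \dots x_{n+1}}{Z}: the CRS-abstraction over $x_{n+1}$ that stood directly beneath the \sabsCRS-symbol in $\apreter$ has been moved into the abstraction prefix, which is precisely the locus — other than directly beneath an \sabsCRS-symbol — where \cref{def:CRSterms} permits a CRS-abstraction; the body $Z$ is unchanged and retains the required shape, and any free occurrence of $x_{n+1}$ in $Z$ is now bound by the lengthened prefix, so $\bpreter$ stays closed. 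For \decrule{\S} and \decrule{\del} the body is left untouched, so its shape is trivially fine, and the only issue is closedness: the left-hand side of the rule — equivalently, the side condition that the dropped binding is vacuous — guarantees that the removed prefix variable ($x_{n+1}$ for \decrule{\S}, $x_i$ for \decrule{\del}) does not occur in the body, so deleting it from the prefix leaves $\bpreter$ closed.

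I expect the only step requiring a moment's attention to be the \decrule{λ}-case, where one has to observe that transplanting the binder over $x_{n+1}$ from beneath the \sabsCRS-symbol into the abstraction prefix does not violate the well-formedness restriction of \cref{def:CRSterms} — it does not, for exactly the reason that that restriction designates the prefix as the legitimate home for such abstractions. Everything else is routine free-variable bookkeeping, and the infinitariness of the terms plays no role, since the entire analysis is confined to a finite region at the top of the term and never touches its possibly-infinite subterms.
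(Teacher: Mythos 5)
Your proof is correct, and it is essentially a fully worked-out version of the argument the paper leaves implicit: the paper states this proposition without proof, relying only on the preceding remark that "the rules in \RegCRS and \stRegCRS guarantee that every reduct of a term of the form \prefixed{\vec{x}}{\M} is again a term of this form." Your key observations — that the prefix symbol in each left-hand side forces every step to be a root step, and that the side conditions on \decrule{\S} and \decrule{\del} preserve closedness — are exactly the points that make the paper's claim go through.
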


\begin{definition}[the ARSs \RegzeroARS, \RegARS, \stRegARS]\label{RegARS:stRegARS:oneRegARS:onestRegARS}
	We denote by \RegzeroARS, \RegARS and \stRegARS the ARSs induced by the iCRSs derived from \RegzeroCRS, \RegCRS, \stRegCRS, restricted to terms in \iTer{\lambdaprefixcal}.
\end{definition}

\begin{proposition}\label{prop:rewprops:RegCRS:stRegCRS}
	The restrictions of the rewrite relations of \RegzeroARS, \RegARS and \stRegARS to \iTer{\lambdaprefixcal}, the set of objects of \RegzeroARS, \RegARS, and \stRegARS, have the following properties:
	\begin{enumerate}[(i)]
		\item\label{prop:rewprops:RegCRS:stRegCRS:i} \m{\red_\del} is confluent, and terminating.
		\item\label{prop:rewprops:RegCRS:stRegCRS:ii} \m{\red_\del} one-step commutes with \m{\red_λ}, \m{\red_{@_0}}, \m{\red_{@_1}}, \m{\red_\S}:
			\[\convred_\del ⋅ \red_\arule ~\subseteq~ \red_\arule ⋅ \convred_\del ~~~ ∀\arule∈\set{λ,@_0,@_1,\S}\]
		\item\label{prop:rewprops:RegCRS:stRegCRS:ii-1} \m{\red_\del} can be postponed:
			\[\red_\del ⋅ \red_\arule ~\subseteq~ \red_\arule ⋅ \red_\del ~~~ ∀\arule∈\set{λ,@_0,@_1,\S}\]
		\item\label{prop:rewprops:RegCRS:stRegCRS:iii-0} \m{\red_\S ~\subseteq~ \m{\red_\del}} and thus \m{\red_\streg ~\subseteq~ \m{\red_\reg}}. Furthermore \m{\red_\regzero ~\subset~ \m{\red_\streg} ~\subset~ \m{\red_\reg}}.
		\item\label{prop:rewprops:RegCRS:stRegCRS:iii} \m{\red_\S} is deterministic, hence confluent, and terminating.
		\item\label{prop:rewprops:RegCRS:stRegCRS:iv} \m{\red_\S} one-step commutes with \m{\red_{@_0}}, and \m{\red_{@_1}}:
			\[\convred_\S ⋅ \red_{@_i} ~\subseteq~ \red_{@_i} ⋅ \convred_\S ~~~ ∀i ∈ \set{0,1}\]
		\item\label{prop:rewprops:RegCRS:stRegCRS:iv-1} \prefixed{x}{x} is the sole term in \m{\red_\reg}-normal-form. \m{\red_\streg}-normal-forms are of the form \prefixed{x_1 \dots x_n}{x_n}.
		\item\label{prop:rewprops:RegCRS:stRegCRS:v} \m{\red_\reg} and \m{\red_\streg} are finitely branching, and -- on finite terms -- terminating.
	\end{enumerate}
\end{proposition}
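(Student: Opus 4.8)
The plan is to read off all eight items from the shapes of the rules in \cref{def:RegCRS:stRegCRS}, leaning on two elementary observations. First, a $\red_\del$- or $\red_\S$-step changes only the abstraction prefix and leaves the body of the term untouched, and the side condition ``the binding $λx_i$ is vacuous'' says exactly that $x_i$ does not occur freely in the body. Second, for the body-decomposing rules $\rulep{@_0}$, $\rulep{@_1}$, $\rulep{λ}$ the free variables of the relevant subterm of the body are contained in those of the body (for $\rulep{λ}$ up to the single variable released into the prefix), so vacuousness of a prefix variable is preserved, indeed only strengthened, by a decomposition step. I will work modulo $α$ and may therefore assume throughout that the variables of a prefix are pairwise distinct and that the variable a $\rulep{λ}$-step releases is fresh for the current prefix. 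Closure of \iTer{\lambdaprefixcal} under the three relations, used tacitly below, is \cref{prop:oneRegARS}.

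The cheap items come first. For \cref{prop:rewprops:RegCRS:stRegCRS:iii-0} I will observe that $\rulep{\S}$ is the instance $i=n+1$ of $\rulep{\del}$, so $\red_\S\subseteq\red_\del$ and hence $\red_\streg\subseteq\red_\reg$; strictness of $\red_\regzero\subset\red_\streg\subset\red_\reg$ is witnessed by $\prefixed{xy}{x}\red_\S\prefixed{x}{x}$ and $\prefixed{xy}{y}\red_\del\prefixed{y}{y}$ (from the first no $\red_\regzero$-step is possible, from the second no $\red_\streg$-step is, since the last prefix variable occurs in the body). For \cref{prop:rewprops:RegCRS:stRegCRS:i}: each $\red_\del$-step strictly shortens the finite prefix, giving termination; local confluence holds because two $\red_\del$-redexes delete distinct vacuous prefix variables, each still vacuous after the other is removed, so they have a common $\red_\del$-reduct — Newman's Lemma then gives confluence. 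For \cref{prop:rewprops:RegCRS:stRegCRS:iii}: a term has at most one $\red_\S$-redex (at its last prefix variable, if vacuous), so $\red_\S$ is deterministic, hence trivially confluent, and terminates by prefix shortening. For \cref{prop:rewprops:RegCRS:stRegCRS:iv-1}: a $\red_\reg$-normal form has a body that is not an abstraction or application, hence a variable, which by closedness is in the prefix, and absence of $\rulep{\del}$-redexes forces the prefix to be exactly that variable, i.e.\ the term is $\prefixed{x}{x}$; weakening ``all prefix variables occur in the body'' to ``the last one does'' characterises the $\red_\streg$-normal forms as the terms $\prefixed{x_1\dots x_n}{x_n}$, all of which are indeed $\red_\streg$-normal. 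For \cref{prop:rewprops:RegCRS:stRegCRS:v}: from any term there are at most one $\rulep{λ}$-redex, at most two $\rulep{@_i}$-redexes and finitely many $\rulep{\del}$-redexes (one per vacuous prefix variable; resp.\ at most one $\rulep{\S}$-redex), so both relations are finitely branching; and on a finite term the pair consisting of the size of the body and the length of the prefix, ordered lexicographically, strictly decreases along every $\red_\reg$- and every $\red_\streg$-step ($\rulep{@_0}$, $\rulep{@_1}$, $\rulep{λ}$ shrink the body, $\rulep{\S}$ and $\rulep{\del}$ keep it and shrink the prefix), so both relations terminate on finite terms.

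The commutation statements \cref{prop:rewprops:RegCRS:stRegCRS:ii} and \cref{prop:rewprops:RegCRS:stRegCRS:iv} I will prove by a case analysis on the rule $\arule$ meeting a $\del$- or $\S$-step. For \cref{prop:rewprops:RegCRS:stRegCRS:ii}, given $s\convred_\del t\red_\arule s'$ with the $\del$-step deleting $x_i$: if $\arule\in\set{λ,@_0,@_1}$ the $\del$-step, which only prunes the prefix, and the $\arule$-step, which decomposes the body (and in the $λ$ case appends one fresh variable), are independent and may be swapped, the only point being that $x_i$ remains vacuous in the relevant body after the $\arule$-step, which is the second observation above; if $\arule=\S$ then either the two steps delete the same prefix variable (so $s=s'$) or distinct ones, and then each remains deletable after the other, closing the diagram via $s\red_\S u\convred_\del s'$. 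Statement \cref{prop:rewprops:RegCRS:stRegCRS:iv} then follows from the $\arule=@_i$ case: there the participating $\del$-step deletes the last prefix variable and is already an $\S$-step, and since $@_i$ does not touch the prefix the closing step is again an $\S$-step. I will also note why one may \emph{not} put $λ$ in place of $@_i$ here: a $\rulep{λ}$-step can push a non-vacuous variable onto the prefix, so that the variable to be re-removed is no longer last and only a $\rulep{\del}$-step — not an $\rulep{\S}$-step — closes the square.

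Finally, for the postponement \cref{prop:rewprops:RegCRS:stRegCRS:ii-1}, given $t\red_\del s\red_\arule s'$ with the $\del$-step removing $x_i$: when $\arule\in\set{λ,@_0,@_1}$, $t$ and $s$ have the same body, so the $\arule$-redex already sits in $t$; doing it yields $u$ whose relevant body still omits $x_i$ from its free variables (second observation, plus freshness), so $u\red_\del s'$ by removing $x_i$. When $\arule=\S$ the $\S$-step removes the last prefix variable $z$ of $s$; if $z$ is also last in $t$ then $t\red_\S u$ (remove $z$) and $u\red_\del s'$ (remove $x_i$, still vacuous); otherwise $x_i$ itself is last in $t$, so $t\red_\del s$ is in fact an $\S$-step, and $t\red_\S s\red_\S s'$ lies in $\red_\S\cdot\red_\del$ because $\red_\S\subseteq\red_\del$ by \cref{prop:rewprops:RegCRS:stRegCRS:iii-0}. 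I expect this last subcase to be the only genuinely delicate point: one must spot that the $\del$-step being postponed is itself an $\S$-step and then invoke \cref{prop:rewprops:RegCRS:stRegCRS:iii-0} to recognise the closing step as a $\del$-step. Everything else is routine bookkeeping with free-variable sets and prefix lengths.
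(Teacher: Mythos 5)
Your proof is correct and follows essentially the same route as the paper, which itself only sketches this proposition: it declares most items "easy to verify by analysing the rules", exhibits the counterexample \prefixed{xy}{y} for the strictness in \cref{prop:rewprops:RegCRS:stRegCRS:iii-0}, and proves \cref{prop:rewprops:RegCRS:stRegCRS:v} via the branching-degree bound and the lexicographic measure \m{\pair{\text{body size}}{\text{prefix length}}} — all of which you reproduce, merely filling in the free-variable bookkeeping for the commutation and postponement diagrams. Your observations that \rulep{\S} is the instance \m{i=n+1} of \rulep{\del} and that \rulep{λ} must be excluded from item \cref{prop:rewprops:RegCRS:stRegCRS:iv} are accurate and worth keeping.
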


\begin{proof}
	Most properties, including those concerning commutation of steps, are easy to verify by analysing the behaviour of the rewrite rules in \RegARS on terms of \iTer{\lambdaprefixcal}.
	\par For \m{\red_\streg \subset \red_\reg} in \cref{prop:rewprops:RegCRS:stRegCRS:iii-0} note that, for example, \prefixed{xy}{y} \m{\red_\reg} \prefixed{y}{y} by a \m{\red_\del}-step, but that \prefixed{xy}{y} is a \m{\red_\S}-normal-form, and hence also a \m{\red_\streg}-normal-form.
	\par Concerning \cref{prop:rewprops:RegCRS:stRegCRS:v} we first argue for finite branchingness of \m{\red_\reg} and \m{\red_\streg} on \iTer{\lambdacal}: this property follows from the fact that, on a term \prefixed{\vec{x}}{\M} with just one abstraction in its prefix, of the constituent rewrite relations \m{\red_{@_0}}, \m{\red_{@_1}}, \m{\red_λ}, \m{\red_\S}, \m{\red_\del} of \m{\red_\reg} and \m{\red_\streg} only \m{\red_\del} can have branching degree greater than one, which in this case then also is bounded by the length \length{\vec{x}} of the abstraction prefix.
	For termination of \m{\red_\reg} and \m{\red_\streg} on finite terms with just a leading abstraction prefix we can restrict to \m{\red_\reg}, due to \cref{prop:rewprops:RegCRS:stRegCRS:iii-0}, and argue as follows: on finite terms in \iTer{\lambdaprefixcal}, in every \m{\red_\reg}-rewrite-step either the size of the body of the term decreases strictly, or the size of the body stays the same, but the length of the prefix decreases by one. Hence in every rewrite step the measure \m{\pair{\text{body size}}{\text{prefix length}}} on terms decreases strictly in the (well-founded) lexicographic ordering on \m{ℕ\timesℕ}.
\end{proof}


\begin{corollary}[\m{\red_\del} and \m{\red_\S} are normalising]
	Note that as a consequence of \cref{prop:rewprops:RegCRS:stRegCRS}~\cref{prop:rewprops:RegCRS:stRegCRS:i} and \cref{prop:rewprops:RegCRS:stRegCRS:iii}, the rewrite relations \m{\red_\del} and \m{\red_\S} are normalising on \iTer{\lambdacal}.
\end{corollary}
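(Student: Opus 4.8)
The plan is to read this corollary as the standard implication ``terminating $\Rightarrow$ (weakly) normalising'', specialised to the two relations \m{\red_\del} and \m{\red_\S}. Recall that a rewrite relation \m{\red} on a set \m{X} is \emph{normalising} if every \m{a \in X} has a \m{\red}-normal form \m{b} with \m{a \mred b}. If \m{\red} admits no infinite reduction sequence, then starting from any \m{a} and performing a \m{\red}-step whenever one is available produces a sequence that is necessarily finite, and a maximal finite \m{\red}-sequence can only end at an object with no outgoing \m{\red}-step, i.e.\ at a normal form; this supplies the required \m{b}. Hence termination already entails normalisation, and this direction uses neither confluence nor König's Lemma nor finite branching.

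The steps are then: (1) by \cref{prop:rewprops:RegCRS:stRegCRS}~\cref{prop:rewprops:RegCRS:stRegCRS:i}, \m{\red_\del} is terminating on \iTer{\lambdaprefixcal}; (2) by \cref{prop:rewprops:RegCRS:stRegCRS}~\cref{prop:rewprops:RegCRS:stRegCRS:iii}, \m{\red_\S} is deterministic and terminating on \iTer{\lambdaprefixcal}; (3) apply the general fact of the previous paragraph to each of the two relations. Since a pure \lambdacal-term carries the empty abstraction prefix, it is already a \m{\red_\del}- and \m{\red_\S}-normal form, so the restriction of the claim to \iTer{\lambdacal} follows a fortiori. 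It is worth pointing out \emph{why} \cref{prop:rewprops:RegCRS:stRegCRS:i} and \cref{prop:rewprops:RegCRS:stRegCRS:iii} may be invoked without the ``finite terms'' qualification that appears in \cref{prop:rewprops:RegCRS:stRegCRS:v}: a single \m{\red_\del}- or \m{\red_\S}-step deletes exactly one binder from the (always finite) abstraction prefix and leaves the body of the term untouched, so the prefix length is a strictly decreasing natural-number measure on all of \iTer{\lambdaprefixcal}, regardless of whether the body is finite or infinite.

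I do not anticipate any genuine obstacle here; the content is entirely bookkeeping. The only points requiring a little care are: confirming that the termination assertions in \cref{prop:rewprops:RegCRS:stRegCRS:i} and \cref{prop:rewprops:RegCRS:stRegCRS:iii} are indeed meant on the full object set (which the measure argument just sketched confirms), and — should one want the sharper statement — remarking that combining that termination with the confluence recorded in the very same two items upgrades ``normalising'' to ``convergent'', i.e.\ yields \emph{unique} normal forms for \m{\red_\del} and for \m{\red_\S}. The corollary as stated, however, needs only the termination halves of \cref{prop:rewprops:RegCRS:stRegCRS:i} and \cref{prop:rewprops:RegCRS:stRegCRS:iii}.
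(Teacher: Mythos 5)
Your proposal is correct and matches the paper's intent exactly: the corollary is stated without proof as an immediate consequence of the termination assertions in \cref{prop:rewprops:RegCRS:stRegCRS}~\cref{prop:rewprops:RegCRS:stRegCRS:i} and \cref{prop:rewprops:RegCRS:stRegCRS:iii}, via the standard implication ``terminating \m{\Rightarrow} normalising''. Your extra remark that the strictly decreasing prefix-length measure makes termination valid on arbitrary (also infinite-bodied) terms of \iTer{\lambdaprefixcal} is a sensible clarification, not a deviation.
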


\begin{proposition}\label{prop:compress:prefix:RegARS}
	The following statements hold:
	\begin{enumerate}[(i)]
		\item\label{prop:compress:prefix:RegARS:i} Let \prefixed{\vec{x}}{\M} a term in \RegARS with \m{\length{\vec{x}} = n ∈ ℕ}. Then the number of terms \prefixed{\vec{y}}{\N} in \RegARS with \prefixed{\vec{y}}{\N} \m{\mred_\del} \prefixed{\vec{x}}{\M} and \m{\length{\vec{y}} = n+k ∈ ℕ} is \m{\binom{n+k}{n}}.
		\item\label{prop:compress:prefix:RegARS:ii} Let \m{T} be a finite set of terms in \RegARS, and \m{k ∈ ℕ}. Then also the set of terms in \RegARS that are the form \prefixed{\vec{y}}{\N} with \m{\length{\vec{y}} ≤ k} and that have a \m{\mred_\del}-reduct in \m{T} is finite.
	\end{enumerate}
\end{proposition}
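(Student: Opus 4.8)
The plan is to establish (i) by a direct combinatorial analysis of $\mred_\del$-predecessors, and then to derive (ii) from (i) as a finite union of finite sets.

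First I would note that the rule $\decrule{\del}$ never touches the body of a prefixed term: it only deletes one vacuous entry from the abstraction prefix. Hence whenever $\prefixed{\vec y}{\N}\mred_\del\prefixed{\vec x}{\M}$ we have $\N=\M$, and the reduction merely erases $k:=|\vec y|-|\vec x|$ entries of $\vec y$; since the body stays fixed and a binding that is vacuous in $\M$ stays vacuous, the erased entries are exactly the positions of $\vec y$ lying outside the preserved subsequence $\vec x$. Conversely, given $\vec x$ and any way of interleaving $k$ further binders among its $n$ entries, choosing fresh names for the new binders yields a term of $\iTer{\lambdaprefixcal}$ (it is still closed, and the new binders are vacuous in $\M$) that $\mred_\del$-reduces to $\prefixed{\vec x}{\M}$ by deleting the inserted entries one at a time. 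So the $\mred_\del$-predecessors of $\prefixed{\vec x}{\M}$ of prefix length $n+k$ are in surjective correspondence with the $\binom{n+k}{n}$ order-preserving injections $\{1,\dots,n\}\hookrightarrow\{1,\dots,n+k\}$, equivalently with the choices of which $k$ of the $n+k$ prefix slots carry the newly inserted binders.

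The remaining, and main, step is to pass from these $\binom{n+k}{n}$ syntactic interleavings to $\alpha$-equivalence classes of terms of $\RegARS$. The tool for this is the characterisation of $\alpha$-equivalence of prefixed terms (obtained by specialising the Kahrs- or Schroer-style proof systems of \cref{def:iCRSPretermAlphaKahrs,def:iCRSPretermAlphaSchroer} to $\sigCRSPrefixed$): up to $\alpha$ the name of a vacuous prefix binder is irrelevant, so $\prefixed{\vec y}{\M}$ is determined by $|\vec y|$ together with $\M$ in which each free variable is replaced by a pointer to its prefix position. Thus the free occurrences of $\M$ pin down the positions of the \emph{used} prefix variables, and distinct interleavings of $\vec x$ give $\alpha$-distinct terms exactly when every entry of $\vec x$ occurs in $\M$ — which yields the count $\binom{n+k}{n}$; if $\prefixed{\vec x}{\M}$ itself carries vacuous prefix bindings, some interleavings coincide up to $\alpha$ and $\binom{n+k}{n}$ is an upper bound. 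Making this injectivity (respectively the upper bound) precise — tracking how inserted vacuous binders shift the indices of the used ones through the $\alpha$-calculus — is the delicate part; the combinatorial skeleton above is routine.

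For (ii): let $T$ be finite and $k ∈ ℕ$, and take $\prefixed{\vec y}{\N} ∈ \RegARS$ with $|\vec y| \le k$ and $\prefixed{\vec y}{\N} \mred_\del t'$ for some $t' ∈ T$. Since $\decrule{\del}$ preserves the body and weakly decreases prefix length, $t' = \prefixed{\vec x}{\M}$ with $\N = \M$ and $n := |\vec x| \le |\vec y| \le k$. For each of the finitely many $\prefixed{\vec x}{\M} ∈ T$, of prefix length $n$, part (i) bounds the number of such predecessors of prefix length $n+j$ by $\binom{n+j}{n}$ for each $j$ with $n+j \le k$ (and there are none if $n > k$); hence the set in question is a finite union, over $\prefixed{\vec x}{\M} ∈ T$ and $j ∈ \{0,\dots,k-n\}$, of finite sets, and therefore finite.
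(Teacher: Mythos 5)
Your proof takes essentially the same route as the paper's: a \m{\mred_\del}-predecessor of prefix length \m{n+k} is determined by the choice of \n positions \m{i_1<\dots<i_n} in \set{1,\dots,n+k} at which the entries of \vec{x} land, and statement (ii) follows as a finite union of finite sets. Your extra care about \alpha-equivalence is a genuine refinement rather than a detour: when \prefixed{\vec{x}}{\M} itself carries vacuous prefix bindings, distinct interleavings can coincide up to \alpha (for \prefixed{x_1x_2}{x_2} the choices \m{(1,3)} and \m{(2,3)} both yield \prefixed{y_1y_2y_3}{y_3}), so \m{\binom{n+k}{n}} is then only an upper bound — a point the paper's proof passes over by asserting exact equality; since only finiteness is used in (ii) and in the later application in \cref{thm:streg:fin:bind:capt:chains}, nothing downstream is affected.
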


\begin{proof}
	From \m{\prefixed{\vec{y}}{{\M}(y)} = \prefixed{y_1 \dots y_{n+k}}{{\N}(y_1,\dots,y_{n+k})} \mred_\del \prefixed{x_1 \dots x_n}{{\M}(x_1,\dots,x_n)} = \prefixed{\vec{x}}{{\M}(\vec{x})}}
	it follows that there are
	\m{i_1,\dots,i_n ∈ \set{1,\dots,n+k}} with \m{i_1 < i_2 < \dots < i_n}
	such that the term \prefixed{\vec{y}}{{\M}(y)} is actually of the form
	\prefixed{y_1 \dots y_{n+k}}{{\N}(y_{i_1},\dots,y_{i_n})}
	and furthermore
	\m{\prefixed{y_{i_1}\dots y_{i_n}}{{\N}(y_{i_1},\dots,y_{i_n})} = \prefixed{x_1 \dots x_n}{{\M}(x_1,\dots,x_n)}}.
	Hence the number of terms
	\prefixed{\vec{y}}{{\M}(y)}
	with 
	\m{\mred_\del}-reduct
	\prefixed{\vec{x}}{{\M}(\vec{x})}
	is equal to the number of choices \m{i_1,\dots,i_n ∈ \set{1,\dots,n+k}} such that \m{i_1 < i_2 < \dots < i_n}.
	This establishes statement \cref{prop:compress:prefix:RegARS:i}.
	Statement \cref{prop:compress:prefix:RegARS:ii} is an easy consequence.
\end{proof}

\begin{lemma}\label{lem:rewprops:projection:RegCRS:stRegCRS}
	On \iTer{\lambdaprefixcal}, the rewrite relations \m{\red_\reg} and \m{\red_\streg} have the following further properties with respect to \m{\red_\del}, \m{\red_\S}, \m{\nfred_\del}, and \m{\nfred_\S}:
	\begin{align*}
		{\m{\convmred_\del} ⋅ \m{\red_\reg}}
		\; & \subseteq\;
		{(\m{\nfred_\del} ⋅ \m{\eqred_\regzero}) ⋅ \m{\convmred_\del}}
		\\
		{\m{\convmred_\del} ⋅ \m{\mred_\reg}}
		\; & \subseteq\;
		{(\m{\nfred_\del} ⋅ \m{\eqred_\regzero})^* ⋅ \m{\convmred_\del}}
		\\[2ex]
		{\m{\convmred_\del} ⋅ \m{\red_\streg}}
		\; & \subseteq\;
		{(\m{\nfred_\S} ⋅ \m{\eqred_\regzero}) ⋅ \m{\convmred_\del}}
		\\
		{\m{\convmred_\del} ⋅ \m{\mred_\streg}}
		\; & \subseteq\;
		{(\m{\nfred_\S} ⋅ \m{\eqred_\regzero})^* ⋅ \m{\convmred_\del}}
	\end{align*}
\end{lemma}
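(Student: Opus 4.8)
The plan is to prove the two ``single-step'' inclusions (the first and the third) directly, and then to derive the two ``many-step'' inclusions (the second and the fourth) by induction on the number of $\red_\reg$- resp.\ $\red_\streg$-steps: in the inductive step one splits off the last such step, applies the induction hypothesis to the prefix and then the single-step inclusion to the final step, and uses that $(\nfred_\del\cdot\eqred_\regzero)^*$ resp.\ $(\nfred_\S\cdot\eqred_\regzero)^*$ absorbs one more factor on the left while $\convmred_\del$ re-composes as needed. So the content is in the single-step cases, where I would use throughout (from \cref{prop:rewprops:RegCRS:stRegCRS}) that $\red_\del$ is confluent and terminating and $\red_\S$ is deterministic and terminating---so $\red_\del$- and $\red_\S$-normal forms exist and are unique---and that $\red_\S\subseteq\red_\del$.

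For the first inclusion, suppose $b\mred_\del a$ and $b\red_\reg c$; I need to place $(a,c)$ in $(\nfred_\del\cdot\eqred_\regzero)\cdot\convmred_\del$. Let $d$ be the $\red_\del$-normal form of $b$; since $b\mred_\del a$, $d$ is also the $\red_\del$-normal form of $a$, so $a\nfred_\del d$. If $b\red_\reg c$ is a $\red_\del$-step, then $c\mred_\del d$ by confluence, and $(a,c)$ lies in the right-hand side via the empty step $d\eqred_\regzero d$. If $b\red_\reg c$ is a $\red_\regzero$-step (one of $@_0$, $@_1$, $\lambda$), I would commute it through the reduction $b\mred_\del d$, using the one-step commutation of $\red_\del$ with $\red_{@_0}$, $\red_{@_1}$, $\red_\lambda$ iterated along that reduction, to obtain $e$ with $d\red_\regzero e$ and $c\mred_\del e$; then $a\nfred_\del d\eqred_\regzero e$ and $c\mred_\del e$.

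For the third inclusion, suppose $b\mred_\del a$ and $b\red_\streg c$, and let $d$ be the $\red_\S$-normal form of $a$, so $a\nfred_\S d$ and, since $\red_\S\subseteq\red_\del$, also $a\mred_\del d$. If $b\red_\S c$, then $b\red_\del c$; commuting this step along $b\mred_\del a$---the closure degenerating precisely when a $\red_\del$-step on the other side deletes the same vacuous binding, so the commuted step is at most one $\red_\S$-step---gives $w$ with $c\mred_\del w$ and $a$ reaching $w$ by at most one $\red_\S$-step; by determinism of $\red_\S$ this $w$ lies on the $\red_\S$-normalisation of $a$, hence $w\mred_\del d$ and so $c\mred_\del d$, and we may take both middle witnesses equal to $d$ (with an empty $\regzero$-step). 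If $b\red_\regzero c$, I would commute it along $b\mred_\del a$ as in the first inclusion, obtaining $s$ with $c\mred_\del s$ and $a\red_\regzero s$ of the same kind, and then commute $a\red_\regzero s$ along $a\mred_\S d$:
\begin{itemize}
	\item if the kind is $@_0$ or $@_1$: by the one-step commutation of $\red_\S$ with $\red_{@_0}$, $\red_{@_1}$ (iterated along $a\mred_\S d$) I get $e$ with $d\red_{@_i}e$ and $s\mred_\S e$, whence $c\mred_\del s\mred_\del e$;
	\item if the kind is $\lambda$: write $a=\prefixed{\vec{x}}{\abs{y}{\M}}$, so $s=\prefixed{\vec{x}y}{\M}$ while $d=\prefixed{\vec{x}'}{\abs{y}{\M}}$ with $\vec{x}'$ obtained from $\vec{x}$ by deleting its maximal trailing block of bindings that are vacuous in $\abs{y}{\M}$; putting $e:=\prefixed{\vec{x}'y}{\M}$ we have $d\red_\lambda e$, and $s\mred_\del e$ by deleting from the prefix of $s$ exactly the bindings of $\vec{x}$ absent from $\vec{x}'$, each of which is vacuous in $\M$; so again $c\mred_\del s\mred_\del e$.
\end{itemize}
In every case $a\nfred_\S d\eqred_\regzero e$ and $c\mred_\del e$, which is the required form.

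The step I expect to be the main obstacle is the $\lambda$-subcase of the third inclusion. Unlike $\rulep{@_0}$ and $\rulep{@_1}$, the rule $\rulep{\lambda}$ does \emph{not} one-step commute with $\rulep{\S}$ within $\red_\S$: applying a $\red_\lambda$-step can demote prefix bindings that were blocked for $\rulep{\S}$ to ordinary $\rulep{\del}$-redexes (they cease to be at the \emph{end} of the prefix), so no commutation property is available and one has to argue directly about the shape of the abstraction prefix, as sketched above. A lesser subtlety, already flagged, is that the commutation of $\red_\del$ with $\red_\S$ can close trivially, which is why the commuted $\red_\S$-step must be bookkept as ``at most one step'' rather than exactly one.
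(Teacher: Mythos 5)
Your proof is correct and is a fully worked-out version of exactly the route the paper takes: the paper's own proof is only the one-line remark that the inclusions ``can be shown by arguments with diagrams using the commutation properties in \cref{prop:rewprops:RegCRS:stRegCRS}'', and your diagram chases (normalise to the \m{\red_\del}- resp.\ \m{\red_\S}-normal form, commute the single \m{\red_\regzero}-step across using the one-step commutations, derive the many-step versions by induction on the number of steps) are precisely those arguments. Your observation that the \m{\lambda}-subcase of the third inclusion cannot be handled by a commutation property (since \m{\red_\S} does not commute with \m{\red_\lambda}) and needs the direct analysis of the trailing vacuous block of the abstraction prefix — which is also why the right-hand sides end in \m{\convmred_\del} rather than \m{\convmred_\S} — is the one genuinely delicate point, and you handle it correctly.
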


\begin{proof}
	These commutation properties, which can be viewed as projection properties, can be shown by arguments with diagrams using the commutation properties in \cref{prop:rewprops:RegCRS:stRegCRS}.
\end{proof}

\begin{remark}
	The commutation properties in \cref{lem:rewprops:projection:RegCRS:stRegCRS} can be refined to state that \m{\red_λ}-steps project to \m{\red_λ}-steps, and \m{\red_{@_0}}-steps and \m{\red_{@_1}}-steps project to \m{\red_{@_0}}-steps and \m{\red_{@_1}}-steps, accordingly.
\end{remark}

As an immediate consequence of \cref{lem:rewprops:projection:RegCRS:stRegCRS} we obtain the following lemma, which formulates a connection via projection between rewrite sequences in \RegARS (in \stRegARS) and \emph{\m{\red_\del}-eager} (\emph{\m{\red_\S}-eager}) rewrite sequences in \RegARS (in \stRegARS) that do not contain \m{\red_λ}-steps, \m{\red_{@_0}}-steps, or \m{\red_{@_1}}-steps on terms that allow \m{\red_\del}-steps (\m{\red_\S}-steps).

\begin{lemma}\label{lem:projection:RegCRS:stRegCRS}
	The following statements hold:
	\begin{enumerate}[(i)]
		\item\label{lem:projection:RegCRS:stRegCRS:reg}
			Every (finite or infinite) rewrite sequence in \RegARS of the form \[\arewseq : \prefixed{\vec{x}_0}{\M_0} \m{\red_\reg} \prefixed{\vec{x}_1}{\M_1} \m{\red_\reg} \dots \m{\red_\reg} \prefixed{\vec{x}_k}{\M_k} \m{\red_\reg} \dots\] projects over a rewrite sequence \m{\crewseq : \prefixed{\vec{x}_0}{\M_0} \m{\mred_\del} \prefixed{\vec{x}'_0}{\M_0}} to a \m{\red_\del}-eager rewrite sequence in \RegARS of the form
			\begin{align*}
				\check{\arewseq} : \prefixed{\vec{x}'_0}{\M_0} & \m{\nfred_\del} ⋅ \m{\eqred_\regzero} \prefixed{\vec{x}'_1}{\M_1} \m{\nfred_\del} ⋅ \m{\eqred_\regzero} ~\dots \\
				\dots~ & \m{\nfred_\del} ⋅ \m{\eqred_\regzero} \prefixed{\vec{x}'_k}{\M_k} \m{\nfred_\del} ⋅ \m{\eqred_\regzero} ~\dots
			\end{align*}
			in the sense that \prefixed{\vec{x}_i}{\M_i} \m{\mred_\del} \prefixed{\vec{x}'_i}{\M_i}\, for all \m{i ∈ ℕ} less or equal to the length of \arewseq.
		\item\label{lem:projection:RegCRS:stRegCRS:streg}
			Every (finite or infinite) rewrite sequence in \stRegARS of the form \[\arewseq : \prefixed{\vec{y}_0}{\N_0} \m{\red_\streg} \prefixed{\vec{y}_1}{\N_1} \m{\red_\streg} \dots \m{\red_\streg} \prefixed{\vec{y}_k}{\N_k} \m{\red_\streg} \dots\] projects over a rewrite sequence \m{\crewseq : \prefixed{\vec{y}_0}{\N_0} \m{\mred_\del} \prefixed{\vec{y}'_0}{\N_0}} to a \m{\red_\S}-eager rewrite sequence in \stRegARS:
			\begin{align*}
				\check{\arewseq} : \prefixed{\vec{y}'_0}{\N_0} & \m{\nfred_\S} ⋅ \m{\eqred_\regzero} \prefixed{\vec{y}'_1}{\N_1} \m{\nfred_\S} ⋅ \m{\eqred_\regzero} ~\dots \\
				\dots~ & \m{\nfred_\S} ⋅ \m{\eqred_\regzero} \prefixed{\vec{y}'_k}{\N_k} \m{\nfred_\S} ⋅ \m{\eqred_\regzero} ~\dots
			\end{align*}
			in the sense that \prefixed{\vec{y}_i}{\N_i} \m{\mred_\S} \prefixed{\vec{y}'_i}{\N_i} for all \m{i ∈ ℕ} less or equal to the length of \arewseq.
	\end{enumerate}
\end{lemma}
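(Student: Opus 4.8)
The plan is to obtain both statements by iterating, one $\arewseq$-step at a time, the single-step projection inclusions of \cref{lem:rewprops:projection:RegCRS:stRegCRS}; no new rewriting facts are needed beyond that lemma.

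For \cref{lem:projection:RegCRS:stRegCRS:reg}: by \cref{prop:rewprops:RegCRS:stRegCRS}~\cref{prop:rewprops:RegCRS:stRegCRS:i} the relation \m{\red_\del} is confluent and terminating, so \prefixed{\vec{x}_0}{\M_0} has a unique \m{\red_\del}-normal-form, and I would take \crewseq to be a reduction \prefixed{\vec{x}_0}{\M_0} \m{\mred_\del} \prefixed{\vec{x}'_0}{\M_0} to that normal form. Since \m{\red_\del}-steps rewrite only the abstraction prefix and never the body, the reduct legitimately has the form \prefixed{\vec{x}'_0}{\M_0} with body \M_0; this body-preservation is what keeps the indexing of \check{\arewseq} aligned with that of \arewseq. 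Now define the terms \prefixed{\vec{x}'_i}{\M_i} of \check{\arewseq} by recursion on \m{i ∈ ℕ}, maintaining the invariant \prefixed{\vec{x}_i}{\M_i} \m{\mred_\del} \prefixed{\vec{x}'_i}{\M_i}. The base case \m{i = 0} is the choice of \crewseq. For the step, assume the invariant at \i and consider the \arewseq-step \prefixed{\vec{x}_i}{\M_i} \m{\red_\reg} \prefixed{\vec{x}_{i+1}}{\M_{i+1}}; then \prefixed{\vec{x}'_i}{\M_i} is related to \prefixed{\vec{x}_{i+1}}{\M_{i+1}} by the composite \m{\convmred_\del} \m{\cdot} \m{\red_\reg}, so the first inclusion of \cref{lem:rewprops:projection:RegCRS:stRegCRS} supplies a term \m{t} with \prefixed{\vec{x}'_i}{\M_i} \m{\nfred_\del} \m{\cdot} \m{\eqred_\regzero} \m{t} and \prefixed{\vec{x}_{i+1}}{\M_{i+1}} \m{\mred_\del} \m{t}. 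As \m{\mred_\del} preserves the body, \m{t} has body \M_{i+1}, i.e.\ \m{t} \m{=} \prefixed{\vec{x}'_{i+1}}{\M_{i+1}} for a suitable prefix; this restores the invariant and at the same time contributes the required segment \prefixed{\vec{x}'_i}{\M_i} \m{\nfred_\del} \m{\cdot} \m{\eqred_\regzero} \prefixed{\vec{x}'_{i+1}}{\M_{i+1}} of \check{\arewseq}. Concatenating the segments yields \check{\arewseq}; each segment is finite because \m{\red_\del} terminates, so \check{\arewseq} is a genuine \RegARS-rewrite-sequence of the same length type (finite, or of length \omega) as \arewseq, and it is \m{\red_\del}-eager, since its only \RegzeroCRS-steps occur after a \m{\red_\del}-normalisation -- the leading \m{\nfred_\del} being vacuous because \prefixed{\vec{x}'_0}{\M_0} was chosen normal.

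Statement \cref{lem:projection:RegCRS:stRegCRS:streg} is proved by the same argument, with \m{\red_\del}, \m{\mred_\del}, \m{\nfred_\del} replaced by \m{\red_\S}, \m{\mred_\S}, \m{\nfred_\S} and \RegARS by \stRegARS, invoking the last two inclusions of \cref{lem:rewprops:projection:RegCRS:stRegCRS} and \cref{prop:rewprops:RegCRS:stRegCRS}~\cref{prop:rewprops:RegCRS:stRegCRS:iii} (\m{\red_\S} is deterministic, hence confluent and terminating) to justify the choice of \crewseq and the finiteness of the \m{\nfred_\S}-segments; the resulting \m{\red_\S}-eager sequence is a \stRegARS-rewrite-sequence because its steps (\m{\red_\S}-steps and \RegzeroCRS-steps) are all \stRegCRS-steps. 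I do not expect a genuine obstacle here: all the real work sits in \cref{lem:rewprops:projection:RegCRS:stRegCRS}, which may be assumed, and the present lemma is merely its iterated form. The points that need care are (a) choosing \crewseq so that it lands exactly on the \m{\red_\del}- (resp.\ \m{\red_\S}-) normal-form, so that the "eager" phrasing is literally correct already for the first step; (b) observing that the infinite case needs no limit argument -- one only produces, by recursion on \m{i}, the countably many terms \prefixed{\vec{x}'_i}{\M_i}, each segment being finite by termination of \m{\red_\del} (resp.\ \m{\red_\S}); and (c) carrying the invariant \prefixed{\vec{x}_i}{\M_i} \m{\mred_\del} \prefixed{\vec{x}'_i}{\M_i}, which keeps bodies fixed and hence aligns the enumeration of \check{\arewseq} with that of \arewseq. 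If the refinement announced in the remark following the lemma is also wanted (that \m{\red_λ}-, \m{\red_{@_0}}-, and \m{\red_{@_1}}-steps project to steps of the same kind), one simply uses the correspondingly refined form of \cref{lem:rewprops:projection:RegCRS:stRegCRS} in the recursion.
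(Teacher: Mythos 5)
Your proof is correct and follows exactly the route the paper intends: the paper states this lemma as an immediate consequence of \cref{lem:rewprops:projection:RegCRS:stRegCRS} without writing out a proof, and your argument — iterating the single-step projection inclusions by recursion on $i$, maintaining the invariant \prefixed{\vec{x}_i}{\M_i} \m{\mred_\del} \prefixed{\vec{x}'_i}{\M_i} and using body-preservation of \m{\red_\del}/\m{\red_\S} together with their termination to keep the segments finite and the indexing aligned — is precisely the intended derivation.
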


\begin{para}[non-determinism of \m{\red_\reg} and \m{\red_\streg}]\label{rem:nondeterminism_of_regred_and_stregred}
	On terms in \iTer{\lambdaprefixcal}, which have just one prefix at the top of the term, there are two different causes for non-determinism of the rewrite relations in \Reg and \stReg: First, since the left-hand sides of the rules \rulep{@_0} and \rulep{@_1} coincide, these rules enable different steps on the same term, producing the left- and respectively the right subterm of the application immediately below the prefix. Second, the rules \rulep{\del} and \rulep{\S} can be applicable in situations where also one of the rules \rulep{@_0}, \rulep{@_1}, or \rulep{λ} is applicable (see for instance \cref{fig:three_redgraphs}, in the middle). Whereas the first kind of non-determinism is due to the `observer' having to observe the two different subterms of an application in a λ-term, the second is due to a freedom of the observer as to when to attest the end of a scope/\extscope in the analysed λ-term.
\end{para}

\begin{para}[outlook: λ-transition-graphs]
	In the definition below we define strategies for \RegARS and \stRegARS that only allow the former source of non-determinism while forbidding the second kind. The intention is that the reduction graph of a term with respect to such a strategy corresponds to the term's syntax tree. These reduction graphs can be seen as a nameless graph representation for λ-terms. We will introduce and study such representations (called λ-transition-graphs) in \cref{sec:lambda-transition-graphs} and a further adaptation (λ-term-graphs) in \cref{chap:representations}.
\end{para}

\begin{definition}[scope/\extscope-delimiting strategy]\label{def:scope:delimiting:strat:Reg:stReg}
	We call a strategy \astrat for \RegARS (for \stRegARS) a \emph{scope-delimiting strategy} (\emph{\extscope-delimiting strategy}) if the source of a step is non-deterministic (that is, it is the source of more than one step) if and only if it is the source of precisely a \m{\red_{@_0}}-step and a \m{\red_{@_1}}-step.
\end{definition}

\begin{para}[alternative formulation of \cref{def:scope:delimiting:strat:Reg:stReg}]
	We can define a bit more verbosely: a strategy \astrat for \RegARS (for \stRegARS) is called a scope-delimiting (\extscope-delimiting) strategy if
	\begin{itemize}
		\item every source of a step is one of three kinds: the source of a \m{\red_λ}-step, the source of a \m{\red_\del}-step (a \m{\red_\S}-step), or the source of both a \m{\red_{@_0}}-step and a \m{\red_{@_1}}-step with the restriction that (in all three cases) it is not the source of any other step.
	\end{itemize}
	Heeding the fact that sources of \m{\red_λ}-steps are never sources of \m{\red_{@_i}}-steps, and vice versa, this condition can be relaxed to:
	\begin{itemize}
		\item no source of a \m{\red_λ}-step or a \m{\red_{@_i}}-step for \m{i ∈ \set{0,1}} is also the source of a \m{\red_\del}-step (\m{\red_\S}-step), and every source of a \m{\red_{@_i}}-step for \m{i ∈ \set{0,1}} is the source of both a \m{\red_{@_0}}-step and a \m{\red_{@_1}}-step, but not of any other step.
	\end{itemize}
\end{para}

\begin{definition}[eager scope/\extscope-delimiting strategy]\label{def:eagerStrats}
	The \emph{eager scope-delimiting strategy\/ \eagStrat for \RegARS} is defined as the restriction of rewrite steps in \RegARS to eager application of the rule \rulep{\del}: applications rules other than \rulep{\del} are only allowed if \rulep{\del} is not applicable. Analogously, the \emph{eager \extscope-delimiting strategy\/ \eagStratPlus for \stRegARS} is defined as the restriction of rewrite steps in \stRegARS to eager application of the rule \rulep{\S}.
\end{definition}

\begin{definition}[lazy scope/\extscope-delimiting strategy]\label{def:lazyStrats}
	The \emph{lazy scope-delimiting strategy \lazyStrat for \RegARS} is defined as the restriction of rewrite steps in \RegARS to lazy application of the rule \rulep{\del}: applications of \rulep{\del} are only allowed when other rules are not applicable. Analogously, the \emph{lazy \extscope-delimiting strategy \lazyStratPlus for \stRegARS} is defined as the restriction of rewrite steps in \stRegARS to lazy application of the rule \rulep{\S}.
\end{definition}

\begin{figure}
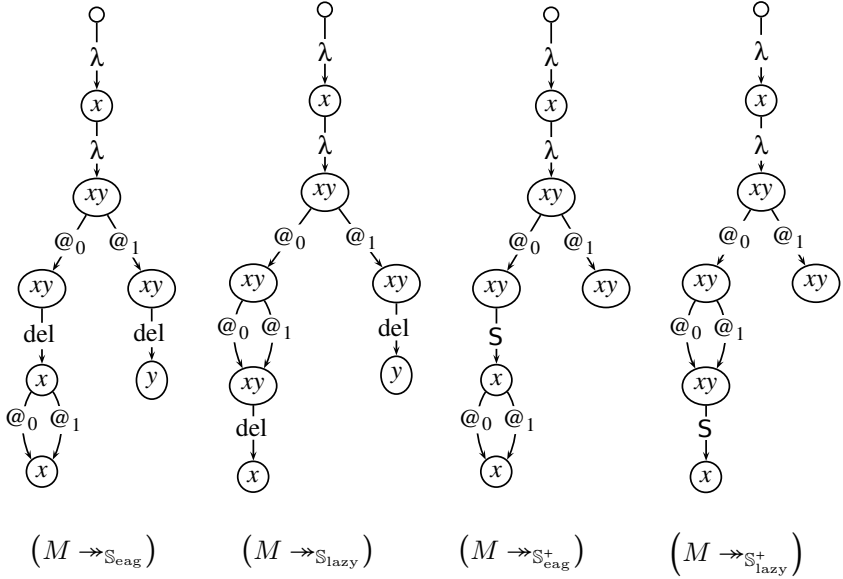

	\begin{hspread}
		\vcentered{\fig{lxy-xxy-cpreager-ltg-l}} &
		\hspace{-0.5em}\vcentered{\fig{lxy-xxy-cprlazy-ltg-l}} &
		\hspace{-0.5em}\vcentered{\fig{lxy-xxy-eager-ltg-l}} &
		\hspace{-0.5em}\vcentered{\fig{lxy-xxy-lazy-ltg-l}} \\
		\vcentered{\GeneratedSubARSi{\eagStrat}{M}} &
		\vcentered{\GeneratedSubARSi{\lazyStrat}{M}} &
		\vcentered{\GeneratedSubARSi{\eagStratPlus}{M}} &
		\vcentered{\GeneratedSubARSi{\lazyStratPlus}{M}}
	\end{hspread}
	\caption{Reduction graphs of \m{M = \abs{x}{\abs{y}{\app{\app{x}{x}}{y}}}} with respect to different \RegARS and \stRegARS strategies; compare: \cref{fig:three_redgraphs}. Again note, that the labels do only show the prefixes associated with each term.}
	\label{fig:four-strategies}
\end{figure}

\begin{para}[history-aware versus history-free scope-delimiting strategies]
	The history-free strategy obtained by projection from a history-aware scope-delimiting strategy is not in general a scope-delimiting strategy. This is due to the non-determinism which may be introduced by the projection. Consider for example the term \app{M}{M} with \m{M=\abs{x}{\abs{y}{\app{\app{x}{x}}{y}}}} and the history-aware strategy constructed by using \eagStrat on the left component and \lazyStrat on the right component of \app{M}{M}. See \cref{fig:MM_leager_rlazy} for the reduction graph. The reduction graph w.r.t.\ the history-free strategy obtained by projection, however, bears the kind of non-determinism that is not permitted for a scope-delimiting strategy, in the form of a vertex that is the source of both a \m{\red_\del}- and a \m{\red_{@_i}}-step.
\end{para}

\begin{figure}
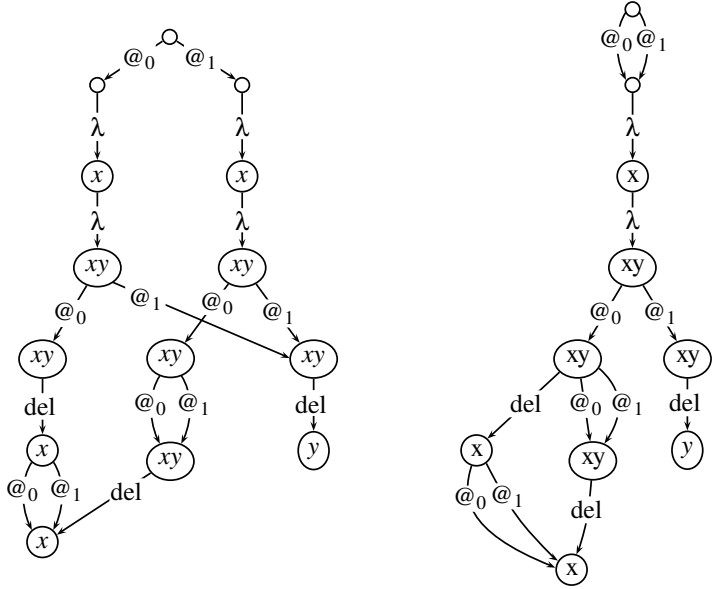

	\begin{hspread}
		\vcentered{\fig{MM-leager-rlazy-ltg-l}} & \vcentered{\fig{MM-leager-rlazy-proj-ltg-l}}
	\end{hspread}
	\caption{On the left: reduction graph of \app{M}{M} with \m{M = \abs{x}{\abs{y}{\app{\app{x}{x}}{y}}}} with respect to the history-aware strategy for \RegARS constructed by using \eagStrat on the left component and \lazyStrat on the right component of \app{M}{M}. On the right: reduction graph with respect to the history-free strategy obtained by projection. Note: vertex labels only show prefixes.}
	\label{fig:MM_leager_rlazy}
\end{figure}

The following proposition formulates a property of the eager scope-delimiting (\extscope-delimiting) strategy for \RegARS (in \stRegARS) that assigns it a special status: the target of every rewrite sequence with respect to \eagStrat (with respect to \eagStratPlus) can be reached, modulo some final \m{\red_\del}-steps (\m{\red_\S}-steps), also by a rewrite sequence with respect to an arbitrary scope-delimiting (\extscope-delimiting) strategy. Furthermore, rewrite sequences with respect to \eagStrat (with respect to \eagStratPlus) are able to mimic rewrite sequences with respect to an arbitrary scope-delimiting (\extscope-delimiting) strategy, up to trailing \m{\red_\del}-steps (\m{\red_\S}-steps) applied to the latter.

\begin{definition}
	Let \m{\red_1}, \m{\red_2}, \m{\red_3} be rewrite relations.
	The rewrite relation \m{\red_1} is called \emph{cofinal for} \m{\red_2} if \m{\mred_2~\subseteq {\mred_1 ⋅ \convmred_2}}.
	We say that \m{\red_1} is \emph{cofinal for \m{\red_2} with trailing \m{\red_3}-steps} if \m{\mred_2~\subseteq {\mred_1 ⋅ \convmred_3}}.
	Furthermore we say that \m{\red_1} \emph{factors into} \m{\red_2} and \m{\red_3} if \m{\red_1~\subseteq {\red_2 ⋅ \red_3}}.
\end{definition}

\begin{proposition}\label{prop:eagstrat}
	For all scope-delimiting strategies \astrat on \RegARS and all \extscope-delimiting strategies \astratplus on \stRegARS the following holds:
	\begin{enumerate}[(i)]
		\item\label{prop:eagstrat:i}
			\m{\mred_{\eagStrat}} factors into \m{\mred_{\astrat}} and \m{\mred_\del}.\\
			\m{\mred_{\eagStratPlus}} factors into \m{\mred_{\astratplus}} and \m{\mred_\S}.
		\item\label{prop:eagstrat:ii}
			\m{\red_{\eagStrat}} is cofinal for \m{\red_{\astrat}} with trailing \m{\red_\del}-steps.\\
			\m{\red_{\eagStratPlus}} is cofinal for \m{\red_{\astratplus}} with trailing \m{\red_\S}-steps.
	\end{enumerate}
\end{proposition}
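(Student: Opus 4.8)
The plan is to establish both parts from the commutation and postponement behaviour of the scope-delimiting steps ($\red_\del$ for $\RegARS$, $\red_\S$ for $\stRegARS$) with respect to the structural steps $\red_\lambda$, $\red_{@_0}$, $\red_{@_1}$, as collected in \cref{prop:rewprops:RegCRS:stRegCRS}, together with the projection \cref{lem:projection:RegCRS:stRegCRS}. Two auxiliary observations are used throughout: a $\red_\del$- or $\red_\S$-step never changes the body of a prefixed term, it only shortens the prefix; and $\red_\del$ and $\red_\S$ are both confluent and terminating, so that every term has a unique $\del$-normal-form and a unique $\S$-normal-form.

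For part (ii) I would argue directly from \cref{lem:projection:RegCRS:stRegCRS}. Let $\arewseq : p \mred_{\astrat} q$ be a rewrite sequence according to a scope-delimiting strategy $\astrat$, regarded as a rewrite sequence in $\RegARS$. The lemma produces an initial $p \mred_\del p'$ and a $\red_\del$-eager rewrite sequence $\check\arewseq$ from $p'$ that performs its $\red_\del$-steps maximally before each structural step and reaches a term $q'$ with $q \mred_\del q'$. Prepending to $\check\arewseq$ a reduction of $p$ to its $\del$-normal-form (which factors through $p \mred_\del p'$) yields a rewrite sequence from $p$ to $q'$ that performs structural steps only on $\del$-normal-forms; since $\eagStrat$ allows every $\red_\del$-step and forbids other steps precisely on terms that are not $\del$-normal-forms, this is a rewrite sequence according to $\eagStrat$. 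Together with $q \mred_\del q'$ this is exactly $\mred_{\astrat} \subseteq \mred_{\eagStrat} \cdot \convmred_\del$, i.e.\ $\red_{\eagStrat}$ is cofinal for $\red_{\astrat}$ with trailing $\red_\del$-steps. The $\stRegARS$-statement follows in the same way from the analogous part of \cref{lem:projection:RegCRS:stRegCRS} for $\stRegARS$, with $\red_\S$ in place of $\red_\del$.

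For part (i) I would proceed by induction on the length of a rewrite sequence $\sigma : p \mred_{\eagStrat} q$, keeping the invariant that the $\astrat$-sequence constructed so far ends at a term that $\del$-reduces to the current term of $\sigma$. If $\sigma$ ends with a $\red_\del$-step $q_1 \red_\del q$, the invariant is restored by appending that step on the right. If $\sigma$ ends with a structural step $q_1 \red_\arule q$, then $q_1$ is a $\del$-normal-form by eagerness, and I would run an inner induction on the length of the $\del$-reduction from the current $\astrat$-term $q_1'$ to $q_1$. If $q_1' = q_1$, the scope-delimiting condition forces $\astrat$ to perform the matching step at $q_1$ ($\red_\lambda$ is the unique available structural step; for $\red_{@_i}$ both @-steps are available), which completes the construction. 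If $q_1' \neq q_1$, then $q_1'$ is a $\red_\del$-source and also a $\red_\arule$-source (same body as $q_1$), so $\astrat$ at $q_1'$ performs either a $\red_\del$-step — whence we recurse, using that $q_1$ is the unique $\del$-normal-form of $q_1'$ — or the matching $\red_\arule$-step $q_1' \red_\arule d'$, in which case postponability of $\red_\del$ past $\red_\arule$ (\cref{prop:rewprops:RegCRS:stRegCRS}) gives $d' \mred_\del q$, restoring the invariant. At the end of $\sigma$ the invariant delivers an $\astrat$-sequence from $p$ to some $q'$ with $q' \mred_\del q$, which is the claimed factoring $\mred_{\eagStrat} \subseteq \mred_{\astrat} \cdot \mred_\del$.

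The delicate point, and the step I expect to be the main obstacle, is exactly this matching of the scope-delimiting decisions of $\astrat$ against those of $\eagStrat$: the two strategies may legitimately postpone the closing of a scope to different moments. For $\RegARS$ the discrepancy is absorbed by the fact that $\red_\del$ one-step commutes with and can be postponed past each of $\red_\lambda$, $\red_{@_0}$, $\red_{@_1}$. For $\stRegARS$ the corresponding argument is more subtle, since by \cref{prop:rewprops:RegCRS:stRegCRS} $\red_\S$ commutes with $\red_{@_0}$ and $\red_{@_1}$ but \emph{not} with $\red_\lambda$: an $\red_\S$-step removing a vacuous binder at the end of the prefix becomes blocked as soon as a $\red_\lambda$-step pushes a fresh binder past it. The way out is to use that an $\extscope$-delimiting strategy treats $\extscopes$ as strictly nested, so that a vacuous end-of-prefix binder — whose $\extscope$ is already closed — must be discarded by an $\red_\S$-step before a new $\red_\lambda$-step is taken; hence at a term that is both an $\red_\S$-source and a $\red_\lambda$-source an $\extscope$-delimiting strategy performs the $\red_\S$-step, so the problematic configuration in the inner induction does not arise. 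With this observation the $\stRegARS$-versions of (i) and (ii) go through exactly as for $\RegARS$, using $\red_\S$ and its determinism (hence unique normal forms) in place of $\red_\del$.
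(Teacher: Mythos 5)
Your treatment of \RegARS is essentially sound: part~(ii) does follow from \cref{lem:projection:RegCRS:stRegCRS} together with the observation that a \m{\red_\del}-eager sequence is an \eagStrat-sequence, and your induction for part~(i), resting on postponement of \m{\red_\del} past \m{\red_λ} and \m{\red_{@_i}} and on confluence and termination of \m{\red_\del} (the inner induction is well-founded because each \m{\red_\del}-step shortens the prefix by exactly one), goes through.

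The \stRegARS half has a genuine gap, and you put your finger on exactly the right spot before stepping over it. Your rescue --- that at a term which is both a \m{\red_\S}-source and a \m{\red_λ}-source an \extscope-delimiting strategy must perform the \m{\red_\S}-step --- is false: \cref{def:scope:delimiting:strat:Reg:stReg} only requires such a term to be the source of exactly one kind of step in the strategy, and the lazy strategy \lazyStratPlus of \cref{def:lazyStrats} makes precisely the opposite choice. Concretely, for \m{\abs{x}{\abs{y}{\app{y}{(\abs{z}{\app{x}{z}})}}}} the eager strategy reaches \prefixed{xz}{\app{x}{z}} (closing the scope of \y before entering \abs{z}{}), while \lazyStratPlus reaches \prefixed{xyz}{\app{x}{z}}, which is the only \lazyStratPlus-reduct with body \app{x}{z} and is a \m{\red_\S}-normal form (\m{z} occurs in the body): the vacuous binder \y is trapped in the middle of the prefix, so no trailing \m{\red_\S}-steps can bridge the gap, and symmetrically this reduct does not \m{\mred_\S}-reduce to any \eagStratPlus-reduct. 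So the configuration you hoped to exclude does arise, and with \m{\mred_\S} the claim cannot be established for arbitrary \extscope-delimiting strategies. The discrepancy between an arbitrary \extscope-delimiting strategy and \eagStratPlus is measured by \m{\mred_\del} (deletion of vacuous binders at arbitrary prefix positions), not by \m{\mred_\S}; this is also what \cref{lem:projection:RegCRS:stRegCRS} actually supplies for \stRegARS (its projections are over \m{\mred_\del}), and it is all that \cref{prop:eager:strat:in:def:reg:streg} needs. With \m{\mred_\del} in place of \m{\mred_\S} in the \stRegARS clauses, your \RegARS-argument transfers verbatim, since by \cref{prop:rewprops:RegCRS:stRegCRS} \m{\red_\del} one-step commutes with, and postpones past, each of \m{\red_λ}, \m{\red_{@_0}}, \m{\red_{@_1}}, and \m{\red_\S}.
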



\begin{figure}
	\begin{hspread}
		\vcentered{\fig{entangled-cpreager-ltg-l}} & \vcentered{\fig{pstricks/named/entangled-ltg-l}} & \vcentered{\fig{fix-eff-ltg-l}}
	\end{hspread}
	\captionsetup{singlelinecheck=off}
	\caption[]{
		Reduction graphs (with prefixes as vertex labels) of:
		\begin{description}
			\item[left:] \cref{ex:entangled} w.r.t.\ \eagStrat. The dotted line denotes node equality with the connected nodes representing identical (α-equivalent) terms. They are drawn as two separate nodes instead of one to avoid confusion as to which variable (it is \a) is removed from the prefix by the incoming \del-edge.
			\item[middle:] \cref{ex:entangled} w.r.t.\ an arbitrary \extscope-delimiting strategy for \stRegARS (and also with respect to \lazyStrat). The vertical dots denote an infinite growth of the graph.
			\item[right:] \cref{ex:expressibility:fix} w.r.t.\ an arbitrary scope/\extscope-delimiting strategy for \stRegARS/\RegARS.
		\end{description}
	}
	\label{fig:entangled-fix-ltg}
\end{figure}

\begin{definition}[generated subterms]\label{def:ST}
	Let \astrat be a scope-delimiting strategy for \RegARS/\stRegARS. For every \m{M ∈ \iTer{\lambdacal}}, the terms in the set \ST{\astrat}{M} are called the \emph{generated subterms of \M with respect to \astrat}, where \ST{\astrat}{M} is the set of objects of the generated sub-ARS \GeneratedSubARSi{\astrat}{\prefixed{}{M}}, or in other words, the set of \m{\mred_{\astrat}}-reducts of \prefixed{}{\M}:
	\begin{align*}
		\ST{\astrat}{} ~:~ \iTer{\lambdacal}
		& ~→~ \powerset(\iTer{\lambdaprefixcal}) \\
		M & ~↦~ O ~~~ \text{where \m{\tuple{O,\steps,\src{},\tgt{}} = \GeneratedSubARSi{\astrat}{\prefixed{}{M}}}}
	\end{align*}
\end{definition}

\begin{definition}[\astrat-regularity]\label{def:regularity:ARS}
	Let \aARS be an abstract rewriting system, and \astrat a strategy for \aARS. We say that an object \o in \aARS is \emph{\astrat-regular in \aARS} if the set of generated subterms \ST{\astrat}{o} of \o is finite.
\end{definition}

\begin{definition}[regular and strongly regular λ-terms]\label{def:reg:streg}
	A λ-term \M is called \emph{regular} (\emph{strongly regular}) if there exists a scope-delimiting strategy \astrat for \RegARS (a \extscope-delimiting strategy \astratplus for \stRegARS) such that \M is \astrat-regular (is \astratplus-regular).
\end{definition}

\begin{corollary}[regularity and strong regularity]
	A λ-term \M is regular (strongly regular) if and only if the set \ST{\astrat}{\M} of generated subterms of \M with respect to some scope-delimiting strategy (\extscope-delimiting strategy) \astrat is finite.
\end{corollary}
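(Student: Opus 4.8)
The plan is to prove the statement by a direct unfolding of definitions: the corollary is nothing more than \cref{def:reg:streg} with the intermediate notion of $\astrat$-regularity spelled out.

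First I would recall that by \cref{def:regularity:ARS}, applied to the ARS $\RegARS$ (respectively $\stRegARS$), an object $o$ is \emph{$\astrat$-regular} exactly when the set of objects of its generated sub-ARS is finite, and that by \cref{def:ST} the set $\ST{\astrat}{M}$ attached to a $λ$-term $M$ is \emph{defined} to be precisely the set of objects of $\GeneratedSubARSi{\astrat}{\prefixed{}{M}}$. Hence the clause ``$M$ is $\astrat$-regular'' occurring in \cref{def:reg:streg} --- which, strictly speaking, is a property of ARS-objects and is to be read as ``the object $\prefixed{}{M}$ is $\astrat$-regular in $\RegARS$ (respectively $\stRegARS$)'' --- is by definition equivalent to ``$\ST{\astrat}{M}$ is finite''. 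This bookkeeping identification of the $λ$-term $M$ with the object $\prefixed{}{M}$ is the only point that needs to be made explicit.

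Then the biconditional follows by carrying the equivalence through the existential quantifier over strategies: by \cref{def:reg:streg}, $M$ is regular (strongly regular) iff there is a scope-delimiting strategy $\astrat$ for $\RegARS$ (an extended-scope-delimiting strategy $\astratplus$ for $\stRegARS$) with $M$ being $\astrat$-regular ($\astratplus$-regular), and by the previous paragraph this is the same as the existence of a scope-delimiting (extended-scope-delimiting) strategy for which $\ST{\astrat}{M}$ is finite --- which is exactly the condition stated in the corollary.

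I do not expect any genuine obstacle: the statement is definitional, and the only thing worth remarking is the match between the ARS-level formulation ``$o$ is $\astrat$-regular'' and the term-level notation $\ST{\astrat}{M}$, which is immediate from \cref{def:ST}. If one wished to be pedantic one could additionally observe that scope-delimiting strategies do exist (e.g.\ the eager strategies $\eagStrat$ and $\eagStratPlus$ of \cref{def:eagerStrats}), so that the quantifier ``for some strategy'' is not vacuous, but this is not needed for the equivalence.
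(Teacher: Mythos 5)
Your proof is correct and matches the paper's intent: the corollary is stated without proof precisely because it is a direct unfolding of \cref{def:reg:streg} via \cref{def:regularity:ARS} and \cref{def:ST}, which is exactly the argument you give. Nothing is missing.
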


\begin{proposition}[\m{\text{finiteness} \implies \text{strong regularity} \implies \text{regularity}}]\label{prop:def:reg:streg}
	\ 
	\begin{enumerate}[(i)]
		\item\label{prop:def:reg:streg:i} Every strongly regular λ-term is also regular.
		\item\label{prop:def:reg:streg:ii} Finite λ-terms are strongly regular.
	\end{enumerate}
\end{proposition}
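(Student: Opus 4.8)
The two statements are proved separately; for \cref{prop:def:reg:streg:ii} I would in fact show that every finite λ-term \m{M} is \eagStratPlus-regular. Since \prefixed{}{M} is a finite term of \iTer{\lambdaprefixcal} and \m{\red_\streg}-steps preserve finiteness (an \m{@}-step or an \m{\S}-step shrinks the term, a \m{λ}-step only moves a binder into the prefix), all \m{\red_\streg}-reducts of \prefixed{}{M} are finite; hence by \cref{prop:rewprops:RegCRS:stRegCRS}~\cref{prop:rewprops:RegCRS:stRegCRS:v} the relation \m{\red_\streg} is finitely branching on, and terminating from, \prefixed{}{M}. The tree of finite \m{\red_\streg}-rewrite-sequences issuing from \prefixed{}{M} is then finitely branching with no infinite branch, hence finite by \cref{koenigs_lemma}, so \GeneratedSubARSi{\stRegARS}{\prefixed{}{M}} has only finitely many objects. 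As \eagStratPlus is a history-free strategy for \stRegARS, the set \ST{\eagStratPlus}{M} is contained in this finite set, so \m{M} is \eagStratPlus-regular and hence strongly regular.

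For \cref{prop:def:reg:streg:i}, let \m{M} be strongly regular, witnessed by some \extscope-delimiting strategy \astratplus with \ST{\astratplus}{M} finite. First I would reduce to the eager strategy: by \cref{prop:eagstrat}~\cref{prop:eagstrat:i} we have \m{\mred_{\eagStratPlus} ~\subseteq~ \mred_{\astratplus} \cdot \mred_\S}, and \m{\red_\S} is deterministic and terminating (\cref{prop:rewprops:RegCRS:stRegCRS}~\cref{prop:rewprops:RegCRS:stRegCRS:iii}), so every term has only finitely many \m{\mred_\S}-reducts; thus \ST{\eagStratPlus}{M} is a finite union of finite sets, hence finite, and we may assume \m{\astratplus = \eagStratPlus} (which is also history-free, sidestepping history-awareness issues). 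Now I would build a scope-delimiting strategy \astrat for \RegARS as follows. On the objects of \ST{\eagStratPlus}{M}, let \astrat take exactly the steps of \eagStratPlus, but reinterpret each \m{\red_\S}-step as a \m{\red_\del}-step --- legitimate because \m{\red_\S ~\subseteq~ \red_\del} by \cref{prop:rewprops:RegCRS:stRegCRS}~\cref{prop:rewprops:RegCRS:stRegCRS:iii-0} (the ARS counterpart of \cref{prop:rewseqs:stRegCRS:2:RegCRS}). The \eagStratPlus-normal-forms have the form \prefixed{x_1 \dots x_n}{x_n} by \cref{prop:rewprops:RegCRS:stRegCRS}~\cref{prop:rewprops:RegCRS:stRegCRS:iv-1}; for \m{n ≥ 2} these are not \RegARS-normal-forms, so there \astrat is continued by the chain
\[
	\prefixed{x_1 \dots x_n}{x_n} \;\red_\del\; \prefixed{x_2 \dots x_n}{x_n} \;\red_\del\; \dots \;\red_\del\; \prefixed{x_n}{x_n}
\]
of \m{\red_\del}-steps, each removing the first prefix binder (which is vacuous, being distinct from \m{x_n}), ending at the \RegARS-normal-form \prefixed{x_n}{x_n}. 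On all remaining objects of \RegARS, let \astrat coincide with \eagStrat.

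It then remains to check: (a) \astrat is well-defined --- the prescribed step at each object is unique, and consistent even for objects lying on several appended chains, since we always remove the first prefix binder; (b) \astrat is a scope-delimiting strategy for \RegARS --- it has the same objects as \RegARS, its only normal form is \prefixed{x}{x} (the copied region, the chains, and \eagStrat leave no other dead ends), and every source of an \astrat-step is a deterministic source of a single \m{\red_λ}-step, a deterministic source of a single \m{\red_\del}-step (in the copied region because \eagStratPlus is \extscope-delimiting, in the chains because there the body is a variable), or a source of precisely a \m{\red_{@_0}}- and a \m{\red_{@_1}}-step, matching \cref{def:scope:delimiting:strat:Reg:stReg}; (c) \ST{\astrat}{M} is finite --- it consists of \ST{\eagStratPlus}{M} together with, for each normal form \prefixed{x_1 \dots x_n}{x_n} \m{∈} \ST{\eagStratPlus}{M}, the at most \m{n-1} intermediate terms of its appended chain, and \ST{\eagStratPlus}{M} is finite with bounded prefix lengths. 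Hence \m{M} is \astrat-regular, i.e.\ regular.

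The main obstacle I anticipate is entirely in \cref{prop:def:reg:streg:i}: reconciling the \stRegARS-normal-forms \prefixed{x_1 \dots x_n}{x_n} with the unique \RegARS-normal-form \prefixed{x}{x} while simultaneously preserving the scope-delimiting condition of \cref{def:scope:delimiting:strat:Reg:stReg} and finiteness of the set of generated subterms; passing first to \eagStratPlus via \cref{prop:eagstrat} is what keeps the construction history-free and the bookkeeping in (a)--(c) manageable. \Cref{prop:def:reg:streg:ii}, by contrast, is a direct application of \cref{prop:rewprops:RegCRS:stRegCRS}~\cref{prop:rewprops:RegCRS:stRegCRS:v} and König's Lemma.
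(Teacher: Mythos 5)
Your proof is correct and follows essentially the same route as the paper's: part \cref{prop:def:reg:streg:ii} is the identical finite-branching-plus-termination-plus-Kőnig's-Lemma argument, and part \cref{prop:def:reg:streg:i} rests on the same core idea of converting the \extscope-delimiting strategy into a scope-delimiting one by reading \m{\red_\S}-steps as \m{\red_\del}-steps via \cref{prop:rewseqs:stRegCRS:2:RegCRS}. Your additional bookkeeping --- passing first to \eagStratPlus via \cref{prop:eagstrat}, and appending \m{\red_\del}-chains from the \m{\red_\streg}-normal-forms \prefixed{x_1 \dots x_n}{x_n} down to \prefixed{x_n}{x_n} --- is a legitimate and in fact more careful treatment of a point the paper's proof elides (the naively transferred sub-ARS would have spurious normal forms and so would not literally satisfy the definition of a strategy for \RegARS); it only adds finitely many generated subterms, so the finiteness conclusion is unaffected.
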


\begin{proof}
	For statement \cref{prop:def:reg:streg:i}, let \M be a λ-term, and let \astratplus be a \extscope-delimiting strategy for \stRegARS by which \ST{\astratplus}{\M} is finite. Due to \cref{prop:rewseqs:stRegCRS:2:RegCRS} \astratplus can be modified to yield a scope-delimiting strategy \astrat for \RegARS by replacing \m{\red_\S}-steps with \m{\red_\del}-steps. Then there is a stepwise correspondence between \m{\red_{\astratplus}}-rewrite-sequences and \m{\red_{\astrat}}-rewrite-sequences that pass through the same terms. 
	Consequently, the sets of \m{\mred_{\astratplus}}-reducts and \m{\mred_{\astrat}}-reducts of \prefixed{}{\M} coincide: \m{\ST{\astrat}{\M} = \ST{\astratplus}{\M}}. It follows that \ST{\astrat}{\M} is finite.
	\par For statement \cref{prop:def:reg:streg:ii}, note that by \cref{prop:rewprops:RegCRS:stRegCRS}~\cref{prop:rewprops:RegCRS:stRegCRS:v}, and Kőnig's Lemma every finite term in \iTer{\lambdaprefixcal} has only finitely many reducts with respect to \m{\mred_\reg} and \m{\mred_\streg}. It follows that for every finite λ-term \M and every scope-delimiting strategy \astrat on \RegARS, or on \stRegARS, the number of \m{\mred_{\astrat}}-reducts of \prefixed{}{\M} is finite.
\end{proof}

\begin{proposition}[eager scope-closure and regularity]\label{prop:eager:strat:in:def:reg:streg}
	For all λ-terms \M the following statements hold:
	\begin{enumerate}[(i)]
		\item\label{prop:eager:strat:in:def:reg:streg:i} \M is regular if and only if \M is \eagStrat-regular.
		\item\label{prop:eager:strat:in:def:reg:streg:ii} \M is strongly regular if and only if \M is \eagStratPlus-regular.
	\end{enumerate}
\end{proposition}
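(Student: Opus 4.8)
The plan is to reduce both equivalences to the factorisation property of \cref{prop:eagstrat}~\cref{prop:eagstrat:i}; once that is in hand, the ``if'' directions are almost immediate and the ``only if'' directions come down to a finiteness count.

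\emph{The ``if'' directions.} First I would argue that the eager scope-delimiting strategy \eagStrat\ of \cref{def:eagerStrats} is — up to a deterministic choice of which vacuous binder to remove — a scope-delimiting strategy for \RegARS\ in the sense of \cref{def:scope:delimiting:strat:Reg:stReg}: a term of \iTer{\lambdaprefixcal} without a vacuous prefix binder has, directly below its single prefix, either an application (enabling precisely one \m{\red_{@_0}}- and one \m{\red_{@_1}}-step), or an abstraction (enabling precisely one \m{\red_λ}-step), or else it equals \m{\prefixed{x}{x}} and enables no step. Fixing such a choice — say, always removing the innermost vacuous binder — yields a deterministic refinement \m{\astrat_0 \subseteq \eagStrat} that is a genuine scope-delimiting strategy, has the same objects and normal forms as \RegARS, and satisfies \m{\ST{\astrat_0}{M} \subseteq \ST{\eagStrat}{M}}; hence \m{\ST{\eagStrat}{M}} finite forces \m{\ST{\astrat_0}{M}} finite, so \M\ is regular by \cref{def:reg:streg}. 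The same argument, run in \stRegARS\ with \eagStratPlus, settles the ``if'' direction of \cref{prop:eager:strat:in:def:reg:streg:ii}; there no tie-breaking is needed, since \m{\red_\S} is deterministic by \cref{prop:rewprops:RegCRS:stRegCRS}~\cref{prop:rewprops:RegCRS:stRegCRS:iii}.

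\emph{The ``only if'' directions.} For \cref{prop:eager:strat:in:def:reg:streg:i} I would take a scope-delimiting strategy \astrat\ witnessing regularity of \M, so \m{\ST{\astrat}{M}} is finite, and invoke \cref{prop:eagstrat}~\cref{prop:eagstrat:i} in the form \m{\mred_{\eagStrat} \subseteq \mred_{\astrat} \cdot \mred_\del}. This makes every \m{\mred_{\eagStrat}}-reduct of \prefixed{}{\M} a \m{\mred_\del}-reduct of some \m{\mred_{\astrat}}-reduct of \prefixed{}{\M}, i.e.
\[
	\ST{\eagStrat}{M} ~\subseteq~ \bigcup_{N \in \ST{\astrat}{M}} \setcompr{N'}{N \mred_\del N'}\,.
\]
The index set \m{\ST{\astrat}{M}} is finite by hypothesis, and for each term \m{N = \prefixed{\vec{x}}{\N_0}} the set \m{\setcompr{N'}{N \mred_\del N'}} is finite: since \rulep{\del} only deletes a vacuous binder from the prefix and leaves the body \m{\N_0} untouched, every \m{\mred_\del}-reduct of \m{N} has the form \prefixed{\vec{y}}{\N_0} with \vec{y} a subsequence of \vec{x}, so there are at most \m{2^{\length{\vec{x}}}} of them (equivalently, this follows from termination and finite branching of \m{\red_\del} on \iTer{\lambdaprefixcal}, \cref{prop:rewprops:RegCRS:stRegCRS}~\cref{prop:rewprops:RegCRS:stRegCRS:i} and~\cref{prop:rewprops:RegCRS:stRegCRS:v}). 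Hence \m{\ST{\eagStrat}{M}} is a finite union of finite sets, so \M\ is \eagStrat-regular. Part \cref{prop:eager:strat:in:def:reg:streg:ii} would be handled identically in \stRegARS, using \cref{prop:eagstrat}~\cref{prop:eagstrat:i} (second line) with \astratplus, \eagStratPlus, and \m{\red_\S} in the roles of \astrat, \eagStrat, and \m{\red_\del}; there the \m{\mred_\S}-reducts of \prefixed{\vec{x}}{\N_0} are the \prefixed{\vec{x}'}{\N_0} with \vec{x}' an initial segment of \vec{x}, again finitely many.

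Essentially all of the real content is already carried by \cref{prop:eagstrat}, so I do not expect a serious obstacle. The one point demanding care is the ``if'' direction: \eagStrat\ as literally defined admits several \m{\red_\del}-steps from a source carrying two or more vacuous binders, and is therefore, strictly speaking, not a scope-delimiting strategy, which is why one must first pass to a deterministic scope-delimiting refinement. After that the argument is routine bookkeeping about the (plainly finite) sets of \m{\mred_\del}- and \m{\mred_\S}-reducts.
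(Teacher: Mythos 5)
Your proof is correct and follows essentially the same route as the paper's: the ``only if'' direction rests on the factorisation \m{\mred_{\eagStrat} \subseteq \mred_{\astrat} \cdot \mred_\del} from \cref{prop:eagstrat}~\cref{prop:eagstrat:i} together with the finiteness of the set of \m{\mred_\del}- (resp.\ \m{\mred_\S}-) reducts of any prefixed term, while the ``if'' direction is read off from the definition of regularity. Your additional step of passing to a deterministic refinement of \eagStrat\ addresses a technicality the paper silently glosses over (it treats \eagStrat\ as a scope-delimiting strategy outright even though several \m{\red_\del}-steps may issue from one source), and this is a harmless, correct precaution rather than a different approach.
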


\begin{proof}
	We only prove \cref{prop:eager:strat:in:def:reg:streg:i}, because \cref{prop:eager:strat:in:def:reg:streg:ii} can be established analogously.
	The implication ``⇐'' follows from the definition of regularity.
	For ``⇒'', let \M be a regular λ-term. Then there exists a scope-delimiting strategy \astrat so that \ST{\astrat}{\M} is finite. Since by \cref{prop:eagstrat}~\cref{prop:eagstrat:i} every \m{\mred_{\eagStrat}}-rewrite-sequence factors into an \m{(\mred_{\astrat} ⋅ \m{\mred_\del})}-rewrite-sequence, it follows that every term in \ST{\eagStrat}{\M} is the \m{\mred_\del}-reduct of a term in \ST{\astrat}{\M}. As every term in \iTer{\lambdaprefixcal} has only finitely many \m{\mred_\del}-reducts, it follows that also \ST{\eagStrat}{\M} is finite.
\end{proof}

\begin{example}[regular and strongly regular terms]\label{ex:reg-and-streg-terms}
	The following examples demonstrate the connection between (strong) regularity and, as illustrated in \cref{fig:entangled-fix-ltg}, the finiteness of the ARSs generated by different \RegARS/\stRegARS strategies.
	\begin{itemize}
		\item \cref{ex:entangled} is regular but not strongly regular.
		\item \cref{ex:expressibility:fix} is strongly regular.
	\end{itemize}
\end{example}

To further illustrate the statements made in \cref{ex:reg-and-streg-terms} let us consider various \RegARS and \stRegARS rewrite sequences corresponding to infinite paths through the terms.

\begin{example}\label{ex:entangled-infinite-path}
	\newfunsymbol\recM{rec_{\mathit{M}}}
	For the term \M from \cref{ex:entangled}, we first introduce a finite CRS based notation, as a `higher-order recursive program scheme'. We can represent \M by \m{\abs{a}{\recM(a)}} together with the CRS rule \m{\recM(X) \red_\recM \abs{x}{\app{\recM(x)}{X}}}. It holds that \m{\abs{a}{\recM(a)} \infred_\recM \M}. Using this notation we can finitely describe the infinite path down the spine of \cref{ex:entangled} by the cyclic \eagStrat-rewrite-sequence:
	\[\begin{array}{ll}
		                & \prefixed{}{\abs{a}{\recM(a)}}
		\\\m{\red_λ}    & \prefixed{a}{\recM(a)}
		\\\red_\recM    & \prefixed{a}{\abs{b}{\app{\recM(b)}a}}
		\\\m{\red_λ}    & \prefixed{ab}{\app{\recM(b)}a}
		\\\red_{@_0}    & \prefixed{ab}{\recM(b)}
		\\\m{\red_\del} & \prefixed{b}{\recM(b)}
		\\=             & \prefixed{a}{\recM(a)}
	\end{array}\]
	In \stRegARS the rewriting sequence for the same path is invariant over all
	\extscope-delimiting strategies and necessarily infinite:
	\[\hspace{-0.5em}\begin{array}{llll}
		\prefixed{}{\abs{a}{\recM(a)}}
		& \hspace{-0.6em}\m{\red_λ}    \prefixed{a}{\recM(a)}
		& \hspace{-0.6em}\red_\recM \\ \prefixed{a}{\abs{b}{\app{\recM(b)}a}}
		& \hspace{-0.6em}\m{\red_λ}    \prefixed{ab}{\app{\recM(b)}a}
		& \hspace{-0.6em}\red_{@_0}    \prefixed{ab}{\recM(b)}
		& \hspace{-0.6em}\red_\recM \\ \prefixed{ab}{\abs{c}{\app{\recM(c)}b}}
		& \hspace{-0.6em}\m{\red_λ}    \prefixed{abc}{\app{\recM(c)}b}
		& \hspace{-0.6em}\red_{@_0}    \prefixed{abc}{\recM(c)}
		& \hspace{-0.6em}\red_\recM \\ \prefixed{abc}{\abs{d}{\app{\recM(d)}c}}
		& \hspace{-0.6em}\m{\red_λ}    \prefixed{abcd}{\app{\recM(d)}c}
		& \hspace{-0.6em}\red_{@_0}    \dots
	\end{array}\]
\end{example}

\begin{example}\label{ex:simpleletrec-infinite-path}
	As an illustration of a regular term we study \M defined as the unfolding of \m{\letin{f = \abs{xy}{\app{\app{f}{y}}{x}}}{f}} from \cref{ex:expressible:simpleletrec}. It is strongly regular since the infinite path through the term can be witnessed by this cyclic \stRegARS rewriting sequence:
	\rewritingSequence{
		\init{\prefixed{}\M}
		\step{=}         {\prefixed{}{\abs{xy}{\app{\app\M{y}}{x}}}}
		\step{\red_λ}    {\prefixed{x}{\abs{y}{\app{\app\M{y}}{x}}}}
		\step{\red_λ}    {\prefixed{xy}{\app{\app\M{y}}{x}}}
		\step{\red_{@_0}}{\prefixed{xy}{\app\M{y}}}
		\step{\red_{@_0}}{\prefixed{xy}\M}
		\step{\red_\S}   {\prefixed{x}\M}
		\step{\red_\S}   {\prefixed{}\M}
		\step{}          {\dots}
	}
	See also \cref{fig:distance} for a graphical illustration of the reduction graph.
\end{example}

\begin{figure}
	\begin{hspread}
		\vcentered{\fig{simpleletrec-eager-ltg-l}} &
		\vcentered{\fig{pstricks/named/simpleletrec-lazy-ltg-l}}
	\end{hspread}
	\caption{\stRegCRS-reduction-graphs of \m{\unfsem{\letin{f = \abs{xy}{\app{\app{f}{y}}{x}}}{f}}} with respect to the \extscope-delimiting strategies \eagStratPlus (left) and \lazyStratPlus (right).}
	\label{fig:distance}
\end{figure}

\begin{para}[eager scope-closure is necessary]
	The restriction of \cref{prop:eager:strat:in:def:reg:streg} to the eager scope-delimiting strategy cannot be relaxed to arbitrary scope-delimiting strategies. The term in \cref{ex:simpleletrec-infinite-path} for instance is \eagStratPlus-regular but not \lazyStratPlus-regular (see \cref{fig:distance}).
\end{para}

\begin{definition}[grounded cycles in \RegARS, \stRegARS]\label{def:grounded:cycle}
	Let
	\m{\arewseq :
		\prefixed{\vecsub{x}{0}}{\M_0}
		\red
		\prefixed{\vecsub{x}{1}}{\M_1}
		\red
		\dots}
	be a finite or infinite rewrite sequence with respect to \m{\red_\reg} or \m{\red_\streg}.
	By a \emph{grounded cycle} in \arewseq we mean
	a cycle
	\m{\prefixed{\vecsub{x}{i}}{\M_i}
		\red
		\prefixed{\vecsub{x}{i+1}}{\M_{i+1}}
		\red
		\dots
		\red
		\prefixed{\vecsub{x}{i+k}}{\M_{i+k}} = \prefixed{\vecsub{x}{i}}{\M_i}},
	in \arewseq, where \m{i ∈ ℕ} and \m{k ≥ 1}, with the additional property that \m{\length{\vecsub{x}{i+j}} ≥ \length{\vecsub{x}{i}}} for all \m{j ∈ \set{0,\dots,k}} (i.e.\ the lengths of the abstraction prefixes in the terms of the cycle is greater or equal to the length of the abstraction prefix at the first and final term of the cycle).
\end{definition}

\begin{proposition}[infinite \astrat-rewrite-sequences contain grounded cycles]\label{prop:grounded:cycle}
	Let \M be a λ-term that is \astrat-regular for some scope/\extscope-delimiting strategy \astrat. Then every infinite rewrite sequence with respect to \astrat contains a grounded cycle.
\end{proposition}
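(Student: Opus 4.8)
The plan is to use $\astrat$-regularity to turn this into a purely combinatorial statement about the prefix lengths along the rewrite sequence. Since $M$ is $\astrat$-regular, the set $\ST{\astrat}{M}$ of generated subterms is finite, so any infinite $\astrat$-rewrite-sequence starting from \prefixed{}{M} passes through only finitely many distinct \lambdaprefixcal-terms; in particular the lengths of the abstraction prefixes occurring along it are bounded by \m{N := \max\setcompr{\length{\vec{x}}}{\prefixed{\vec{x}}{M'} ∈ \ST{\astrat}{M}}}. I will then extract a grounded cycle by a two-step pigeonhole argument on these prefix lengths, choosing a \emph{minimal} prefix length that recurs infinitely often.

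First I would fix notation: write the infinite sequence as \m{\arewseq : \prefixed{\vec{x}_0}{M_0} \red \prefixed{\vec{x}_1}{M_1} \red \dots}, set \m{n_i := \length{\vec{x}_i}}, and note \m{n_i ≤ N} for all \i because every \prefixed{\vec{x}_i}{M_i} is a \m{\mred_\astrat}-reduct of \prefixed{}{M} and hence lies in \ST{\astrat}{M}. Since \m{(n_i)_{i ∈ ℕ}} is an infinite sequence taking values in the finite set \m{\set{0,\dots,N}}, at least one value occurs infinitely often; let \m{m} be the \emph{least} such value. By minimality, every value \m{< m} occurs only finitely often and there are only finitely many of them, so there is an index \m{K} with \m{n_i ≥ m} for all \m{i ≥ K}; and the set \m{I := \setcompr{i ≥ K}{n_i = m}} is still infinite (only finitely many of the infinitely many positions with \m{n_i = m} can lie below \m{K}).

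Next I would apply pigeonhole a second time: the assignment \m{i ↦ \prefixed{\vec{x}_i}{M_i}} maps the infinite set \m{I} into the finite set \ST{\astrat}{M}, so there are indices \m{i < j} in \m{I} with \m{\prefixed{\vec{x}_i}{M_i} = \prefixed{\vec{x}_j}{M_j}}. Then the contiguous segment \m{\prefixed{\vec{x}_i}{M_i} \red \prefixed{\vec{x}_{i+1}}{M_{i+1}} \red \dots \red \prefixed{\vec{x}_j}{M_j} = \prefixed{\vec{x}_i}{M_i}} of \arewseq{} is a cycle with \m{k := j - i ≥ 1}, and for every \m{j' ∈ \set{0,\dots,k}} we have \m{i + j' ≥ i ≥ K}, hence \m{\length{\vec{x}_{i+j'}} = n_{i+j'} ≥ m = n_i = \length{\vec{x}_i}}. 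This is exactly the groundedness condition of \cref{def:grounded:cycle}, so this segment is a grounded cycle in \arewseq.

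There is no real computational content here; the only delicate point — and the one I would emphasise — is the choice of \m{m} as the \emph{least} prefix length that recurs infinitely often: it is precisely this minimality that produces the threshold \m{K} beyond which the prefix length never drops below \m{m}, and an arbitrary infinitely recurring length would not yield groundedness. It is also worth noting that the argument uses nothing about the individual rewrite rules of \RegARS/\stRegARS (not even that a step changes the prefix length by at most one): only the finiteness of \ST{\astrat}{M} and the infinitude of the rewrite sequence enter, so the statement holds verbatim for any scope- or \extscope-delimiting strategy.
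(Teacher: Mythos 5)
Your proof is correct and follows essentially the same route as the paper's: the paper also takes $l := \liminf_i \length{\vec{x}_i}$ (i.e.\ the least prefix length recurring infinitely often), restricts to the tail where the prefix length never drops below $l$, and applies pigeonhole on the finitely many generated subterms to find a repetition, which is then grounded by construction. Your emphasis on the minimality of $m$ as the crucial point matches exactly the role of the $\liminf$ in the paper's argument.
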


\begin{proof}
	Since the argument is analogous in both cases, we only treat strongly regular terms. Let \M be a λ-term that is \astratplus-regular for some \extscope-delimiting strategy \astratplus, and let
	\m{\arewseq :
		\M
		=
		\prefixed{\vecsub{x}{0}}{\M_0}
		\red_{\astratplus}
		\prefixed{\vecsub{x}{1}}{\M_1}
		\red_{\astratplus}
		\dots}
	be an infinite rewrite sequence.
	As \M is \astratplus-regular, 
	the sequence \m{\sequence{\prefixed{\vecsub{x}{i}}{\M_i}}{i ∈ ℕ}} of terms on \arewseq contains only finitely many different terms.
	Let \m{l := \liminf \sequence{\length{\vecsub{x}{i}}}{i ∈ ℕ}}, that is, the minimum of abstraction prefix lengths that appears infinitely often on \arewseq. Let \m{\sequence{\prefixed{\vecsub{x}{i_j}}{\M_{i_j}}}{j ∈ ℕ}} be the subsequence of \m{\sequence{\prefixed{\vecsub{x}{i}}{\M_i}}{i ∈ ℕ}} consisting of terms with prefix length \m{l}, and such that, for all \m{k ≥ i_0}, \m{\length{\vecsub{x}{k}} ≥ l}.
	Since also this subsequence contains only finitely many terms, there exist \m{j_1,j_2 ∈ ℕ}, \m{j_1 < j_2} such that \prefixed{\vecsub{x}{i_{j_1}}}{\M_{i_{j_1}}} = \prefixed{\vecsub{x}{i_{j_2}}}{\M_{i_{j_2}}}. By the choice of the subsequence it follows that
	\m{\prefixed{\vecsub{x}{i_{j_1}}}{\M_{i_{j_1}}}
		\red_{\astratplus}
		\dots
		\red_{\astratplus}
		\prefixed{\vecsub{x}{i_{j_2}}}{\M_{i_{j_2}}}}
	is a grounded cycle in \arewseq.
\end{proof}

\begin{para}
	We round off this section by providing a motivation for the system \stRegARS in terms of an operation `parse' that when applied to a λ-term (i) decomposes it into its generated subterms, and (ii) recombines the generated subterms encountered in the decomposition analysis with, in the limit, the original term as the result. With this purpose in mind, we define a CRS \stParseCRS.
\end{para}

\begin{definition}[\stParseCRS]\label{def:stParseCRS}
	Let \m{\sigCRSParse = \sigCRSPrefixed ∪ \setcompr{\parsei{n}}{n ∈ ℕ}} be the extension of the CRS signature for \lambdaprefixcal, where for \m{n ∈ ℕ}, the symbols \parsei{n} have arity \n. By \stParseCRS we denote the CRS over \sigCRSParse with the following rules:
	\newcommand\rulename[1]{\rulebp{\sparse}{#1}\hspace*{-.7ex}:~~&}
	\newcommand\lhs[1]{#1 \red}
	\newcommand\rhs[1]{\\ & \hspace{-8mm} #1 \\[0.5em]}
	\begin{align*}
		\rulename{@}
			\lhs{\parse{n}{\vecsub{X}{n}, \prefixedCRS{n}{\vecsub{x}{n}}{\appCRS{Z_0(\vecsub{x}{n})}{Z_1(\vecsub{x}{n})}}}}
			\rhs{\appCRS{\parse{n}{\vecsub{X}{n}, \prefixedCRS{n}{\vecsub{x}{n}}{Z_0(\vecsub{x}{n})}}}{\parse{n}{\vecsub{X}{n}, \prefixedCRS{n}{\vecsub{x}{n}}{Z_1(\vecsub{x}{n})}}}}
		\rulename{λ}
			\lhs{\parse{n}{\vecsub{X}{n}, \prefixedCRS{n}{\vecsub{x}{n}}{\absCRS{y}{Z(\vecsub{x}{n},y)}}}}
			\rhs{\absCRS{y}{\parse{n+1}{\vecsub{X}{n},y, \prefixedCRS{n+1}{\vecsub{x}{n}}{\CRSabs{y}{Z(\vecsub{x}{n},y)}}}}}
		\rulename{\S}
			\lhs{\parse{n+1})X_1,\dots,X_{n+1}(\sPrefixedCRS{n+1}(\CRSabs{\vecsub{x}{n+1}}{Z(\vecsub{x}{n})})}
			\rhs{\parse{n}{X_1,\dots,X_n, {\sPrefixedCRS{n}}(\CRSabs{\vecsub{x}{n}}{Z(\vecsub{x}{n})})}}
		\rulename{\0}
			\lhs{\parse{n}{X_1,\dots,X_n, \prefixedCRS{n}{\vecsub{x}{n}}{x_n}}}
			X_n 
	\end{align*}
	We denote by \m{\red_\sparse} the rewrite relation induced by this CRS.
\end{definition}

\begin{para}[\stParseCRS contains \stRegCRS]
	Observe that the rules \rulep{@_i} for \m{i ∈ \set{0,1}}, \rulep{λ}, and \rulep{\S} of the CRS \stRegCRS are contained within the rules \rulebp{\sparse}{@}, \rulebp{\sparse}{λ}, \rulebp{\sparse}{\S}, respectively, of the CRS \stParseCRS, in the sense that applications of the latter rules include applications of the former. This has as a consequence that repeated \m{\red_\sparse}-steps on a term \prefixed{}{\M} lead to terms that contain generated subterms of \M as closed subexpressions. Furthermore \m{\red_\sparse}-rewrite-sequences on \prefixed{}{\M} are possible that move redexes simultaneously deeper and deeper, analysing ever larger parts of \M, and at the same time recreating a larger and larger λ-term parts (stable prefix contexts) of \M, the original term.
\end{para}

\begin{proposition}\label{prop:stParseCRS}
	For every term \m{\M ∈ \iTer{\lambdacal}} it holds:
	\begin{enumerate}[(i)]
		\item\label{prop:stParseCRS:i} \stParseCRS analyses \M into its generated subterms: If \parse{0}{\prefixed{}{\M}} \m{\mred_\sparse} \m{\M'}, then all subexpressions starting with \sPrefixedCRS{n} (for some \m{n ∈ ℕ}) in \m{\M'} are generated subterms of \M. Moreover, for every generated subterm \prefixed{\vec{y}}{\N} of \M, there exists a \m{\mred_\sparse}-reduction \m{\M''} of \M such that \prefixed{\vec{y}}{\N} is contained in \m{\M''}.
		\item\label{prop:stParseCRS:ii} \stParseCRS reconstructs \M: \m{\parse{0}{\prefixed{}{\M}} \m{\infred_\sparse} \M}.
	\end{enumerate}
\end{proposition}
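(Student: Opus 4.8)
The proof rests on the embedding of \stRegCRS into \stParseCRS noted just above the proposition: applying \rulebp{\sparse}{@}, \rulebp{\sparse}{λ}, or \rulebp{\sparse}{\S} to a subexpression of the form \parse{n}{X_1,\dots,X_n,\prefixed{x_1\dots x_n}{\N}} advances its prefixed last argument \prefixed{x_1\dots x_n}{\N} by exactly one \decrule{@_0}-, \decrule{@_1}-, \decrule{λ}-, or \decrule{\S}-step of \stRegCRS, while \rulebp{\sparse}{\0} applies precisely when \prefixed{x_1\dots x_n}{\N} is a \m{\red_\streg}-normal-form \prefixed{x_1\dots x_n}{x_n} and removes that subexpression, returning \m{X_n}.

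\emph{Part \cref{prop:stParseCRS:i}.} First I would prove, by induction on the length of a \m{\mred_\sparse}-reduction, the invariant that in every \m{\mred_\sparse}-reduct \m{\M'} of \parse{0}{\prefixed{}{\M}} the last argument of every occurring \parsei{n}-symbol is a \m{\mred_\streg}-reduct of \prefixed{}{\M}, hence a generated subterm of \M; the step case is immediate from the embedding, since a single \m{\red_\sparse}-step advances one \parsei-node's prefixed argument by one \m{\red_\streg}-step (or removes that node) and leaves all other \parsei-nodes untouched. Since \M contains no symbol \sPrefixedCRS{n} and no \m{\red_\streg}-step introduces one in the body of a prefixed term, the subexpressions of \m{\M'} headed by some \sPrefixedCRS{n} are exactly these \parsei-arguments, which gives the first assertion. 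For the ``moreover'' part, given a generated subterm \prefixed{\vec{y}}{\N} with a witnessing sequence \prefixed{}{\M} \m{\mred_\streg} \prefixed{\vec{y}}{\N}, I would lift this sequence step by step to a \m{\mred_\sparse}-rewrite-sequence from \parse{0}{\prefixed{}{\M}} maintaining a \emph{designated} \parsei-node whose last argument equals the current term of the \m{\red_\streg}-sequence — for a \decrule{@_i}-step one retains the \parsei-node in the corresponding component of the application created by the matching \rulebp{\sparse}{@}-step — so that the final term \m{\M''} contains \prefixed{\vec{y}}{\N} as that argument.

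\emph{Part \cref{prop:stParseCRS:ii}.} Here a direct infinitary argument is called for. Regard the positions of \M as \emph{tasks}, the root task being the initial \parsei{0}-node. Resolving the task at a position \m{p} costs finitely many \m{\red_\sparse}-steps: if the subterm of \M at \m{p} is an application or an abstraction, one \rulebp{\sparse}{@}- or \rulebp{\sparse}{λ}-step suffices and creates the task(s) at depth \m{\length{p}+1}; if it is a variable occurrence, bound by the \m{i}-th abstraction on the path from the root, then finitely many \rulebp{\sparse}{\S}-steps followed by one \rulebp{\sparse}{\0}-step return that variable and create no new task. Now pick any fair strategy, i.e.\ one in which every \parsei-node that appears is eventually contracted. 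Since a depth-\m{d} task spawns only depth-\m{(d+1)} tasks, after finitely many steps all tasks of depth \m{\le d} are resolved; at that point the current term coincides with \M on all positions of depth \m{\le d}, and every remaining \m{\red_\sparse}-redex sits at depth \m{\ge d+1}. Hence the depth of rewrite activity tends to infinity, so the sequence is strongly convergent, and its limit agrees with \M at every finite depth — therefore it \emph{is} \M, establishing \parse{0}{\prefixed{}{\M}} \m{\infred_\sparse} \M (in fact \m{\omegared_\sparse} \M).

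\emph{Main obstacle.} Everything follows mechanically from the embedding of \stRegCRS into \stParseCRS except for the bookkeeping in the variable-leaf case of Part \cref{prop:stParseCRS:ii}: one must check that the sequence of \rulebp{\sparse}{\S}-steps and the final \rulebp{\sparse}{\0}-step returns exactly the bound variable matching the \sPrefixedCRS{n}-binder that an earlier \rulebp{\sparse}{λ}-step introduced above it, i.e.\ that the de-Bruijn-like correspondence between prefix positions and the argument list \m{X_1,\dots,X_n} is preserved along the reduction and is not disturbed by α-renaming in the iCRS framework. I expect this to be the only genuinely delicate point.
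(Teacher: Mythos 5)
Your proof is correct, and it follows exactly the route the paper intends: the paper states this proposition without proof, relying on the observation immediately preceding it that the \stRegCRS-rules are embedded in the \rulebp{\sparse}{@}-, \rulebp{\sparse}{λ}-, \rulebp{\sparse}{\S}-rules (which gives your invariant for part (i)) and on the same outermost-fair, depth-tends-to-infinity convergence argument that it spells out later for the analogous statement about \stParseUnfCRS in the proof of \cref{lem:unfolding:versus:strats}. The ``delicate point'' you flag is indeed discharged by the rule design itself, since \rulebp{\sparse}{λ} and \rulebp{\sparse}{\S} extend and truncate the argument list \m{X_1,\dots,X_n} in lockstep with the abstraction prefix.
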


\begin{example}\label{ex:stParseCRS}
	Let \M be the infinite unfolding of \m{\letin{f = \abs{xy}{\app{\app{f}{y}}{x}}}{f}} for which we use as a finite representation the equation \m{\M = \abs{xy}{\app{\app{\M}{y}}{x}}}. In \stParseCRS, \M is decomposed, and composed again, by the infinite rewrite sequence (see also \cref{ex:expressible:simpleletrec}):
	\rewritingSequence{
		\init{\parse{0}{\prefixed{}{\M}}}
		\step{=}{\parse{0}{\prefixed{}{\abs{x}{\abs{y}{\app{\app{\M}{y}}{x}}}}}}
		\step{\red_{\sparse.λ}}
			{\abs{x'}{\parse{1}{x', \prefixed{x'}{\abs{y}{\app{\app{\M}{y}}{x'}}}}}}
		\step{\red_{\sparse.λ}}
			{\abs{x'}{\abs{y'}{\parse{2}{x',y', \prefixed{x'y'}{\app{\app{\M}{y'}}{x'}}}}}}
		\step{\red_{\sparse.@}}{
			\abs{x'}{\abs{y'}{\app
				{\parse{2}{x',y', \prefixed{x'y'}{\app{\M}{y'}}}}
				{\parse{2}{x',y', \prefixed{x'y'}{x'}}}}}}
		\step{\red_{\sparse.\S}}{
			\abs{x'}{\abs{y'}{\app
				{\parse{2}{x',y', \prefixed{x'y'}{\app{\M}{y'}}}}
				{\parse{1}{x', \prefixed{x'}{x'}}}}}}
		\step{\red_{\sparse.\0}}{
			\abs{x'}{\abs{y'}{\app
				{\parse{2}{x',y', \prefixed{x'y'}{\app{\M}{y'}}}}
				{x'}}}}
		\step{\red_{\sparse.@}}{
			\abs{x'}{\abs{y'}{\app
				{\app
					{\parse{2}{x',y', \prefixed{x'y'}{\M}}}
					{\parse{2}{x',y', \prefixed{x'y'}{y'}}}}
				{x'}}}}
		\step{\red_{\sparse.\0}}{
			\abs{x'}{\abs{y'}{\app
				{\app
					{\parse{2}{x',y', \prefixed{x'y'}{\M}}}
					{y'}}
				{x'}}}}
		\step{\red_{\sparse.\S}}{
			\abs{x'}{\abs{y'}{\app
				{\app
					{\parse{1}{x', \prefixed{x'}{\M}}}
					{y'}}
				{x'}}}}
		\step{\red_{\sparse.\S}}{
			\abs{x'}{\abs{y'}{\app
				{\app
					{\parse{0}{\prefixed{}{\M}}}
					{y'}}
				{x'}}}}
		\step{=}{
			\abs{x'}{\abs{y'}{\app
				{\app
					{\parse{0}{\prefixed{}{\abs{x}{\abs{y}{\app{\app{\M}{y}}{x}}}}}}
					{y'}}
				{x'}}}}
		\step{\red_{\sparse.λ}}{\dots}
	}
	Note that the generated subterms of \M appear as the last arguments of the \parsei{i}-symbols in this rewrite sequence.
\end{example}

\section{Observing \lambdaletrec-terms by their generated subterms}\label{sec:letrec2}

\begin{para}[overview]
	In this section we adapt the concepts developed so far for the infinitary λ-calculus to \lambdaletrec{}. By combining the rules of \unfCRS with those of \RegCRS and \stRegCRS, respectively, we obtain the CRSs \RegletrecCRS and \stRegletrecCRS for the deconstruction of \lambdaletrec-terms furnished with an abstraction prefix. We define scope-delimiting and \extscope-delimiting strategies for \RegletrecCRS and \stRegletrecCRS as before by excluding all non-determinism except for sources of \m{\red_{@_0}}-steps and \m{\red_{@_1}}-steps.
\end{para}

\begin{definition}[the CRSs \RegletrecCRS and \stRegletrecCRS for decomposing \lambdaletrec-terms]\label{def:RegletrecCRS:stRegletrecCRS}
	We extend \sigCRSletrec (see \cref{def:sigs:lambdacal:lambdaletrec:CRS}) by function symbols \sPrefixedCRS{n} with arity one to obtain the signature \m{\sigCRSLetrecPrefixed = \sigCRSletrec ∪ \setcompr{\sPrefixedCRS{n}}{n ∈ ℕ}}. We denote the induced set of (finite) prefixed \lambdaletrec-terms by \Ter{\lambdaletrecprefixcal} and we adopt the same informal notation for \lambdaletrecprefixcal as for \lambdaprefixcal{} (see \cref{def:sig:lambdaprefixcal:CRS}).
	On the signature \sigCRSletrec we define the CRS \RegletrecCRS (the CRS \stRegletrecCRS) with the rules as the union of the rules of \unfCRS and \RegCRS (the union of the rules of \unfCRS and \stRegCRS).
\end{definition}

\begin{para}[list of rules for \RegletrecCRS/\stRegletrecCRS]
	The rules of \RegletrecCRS/\stRegletrecCRS consist of the set of rules arising from the rule schemes
	\unfrule{λ}, \unfrule{@}, \unfrule{\merg}, \unfrule{\rec}, \unfrule{\nil}, \unfrule{\reduce}
	from the unfolding CRS \unfCRS, joined with the rules arising from the rule schemes
	\decrule{@_0}, \decrule{@_1}, \decrule{λ}, \decrule{\del}/\decrule{\S}
	from the decomposition CRS \RegCRS/\stRegCRS.
\end{para}

\begin{notation}[rewrite relations for \RegletrecCRS/\stRegletrecCRS]
	To avoid ambiguity when referring to the rewriting relations induced by the rules of \RegletrecCRS/\stRegletrecCRS, we will prefix the rule names of \unfCRS with \unfold. Thus, we will denote the induced rewrite relations for \RegletrecCRS/\stRegletrecCRS as:
	\m{\red_{\unfoldpre{λ}}}, \m{\red_{\unfoldpre{@}}}, \m{\red_{\unfoldpre{\merg}}}, \m{\red_{\unfoldpre{\rec}}}, \m{\red_{\unfoldpre{\nil}}}, \m{\red_{\unfoldpre{\reduce}}},
	\m{\red_{@_0}}, \m{\red_{@_1}}, \m{\red_{λ}}, \m{\red_{\del}}/\m{\red_{\S}}.
\end{notation}

\begin{definition}[the ARSs \RegletrecARS, \stRegletrecARS]
	By \RegletrecARS, and \stRegletrecARS we denote the ARSs that result by restricting the ARSs induced by \RegletrecCRS, and \stRegletrecCRS, respectively, to the subset \Ter{\lambdaletreccal} of terms.
\end{definition}


The lemmas below state a number of simple rewrite properties for \RegletrecARS and \stRegletrecARS concerning the interplay between unfolding and decomposition steps.

\begin{lemma}\label{prop:rewprops:RegletrecCRS:stRegletrecCRS}
	On \Ter{\lambdaletrecprefixcal}, the rewrite relations in \RegletrecCRS and \stRegletrecCRS have the following properties:
	\begin{enumerate}[(i)]
		\item\label{prop:rewprops:RegletrecCRS:stRegletrecCRS:i}
			\m{\red_\unfold} one-step commutes with \m{\red_λ}, \m{\red_{@_0}}, \m{\red_{@_1}}, \m{\red_\S}, and \m{\red_\del}:
			\begin{align*}
				\m{\convred_\unfold} &⋅ \m{\red_λ} && \subseteq& \m{\red_λ} &⋅ \m{\convred_\unfold}
				\\
				\m{\convred_\unfold} &⋅ \red_{@_i} && \subseteq& \red_{@_i} &⋅ \m{\convred_\unfold} ~~~ (i ∈ \set{0,1})
				\\
				\m{\convred_\unfold} &⋅ \m{\red_\S} && \subseteq& \m{\red_\S} &⋅ \m{\convred_\unfold}
				\\
				\m{\convred_\unfold} &⋅ \m{\red_\del} && \subseteq& \m{\red_\del} &⋅ \m{\convred_\unfold}
			\end{align*}
		\item\label{prop:rewprops:RegletrecCRS:stRegletrecCRS:ii}
			\RegletrecARS and \stRegletrecARS have the same normal forms as \RegARS and \stRegARS, respectively:
			\prefixed{x}{x} is the only term in \Ter{\lambdaletrecprefixcal} in \m{\red_\reg}-normal-form.
			Every \m{\red_\streg}-normal-form in \Ter{\lambdaletrecprefixcal} is of the form
			\prefixed{x_1 \dots x_n}{x_n}.
	\end{enumerate}
\end{lemma}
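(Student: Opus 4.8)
The plan is to prove the two parts separately. Both come down to routine case analyses once one observes that unfolding steps act on \Let-expressions occurring as subterms anywhere in a term, whereas the decomposition steps $\m{\red_λ}$, $\m{\red_{@_0}}$, $\m{\red_{@_1}}$ act on the subterm directly below the abstraction prefix and the steps $\m{\red_\S}$, $\m{\red_\del}$ act only on the prefix; in particular an unfolding redex and a decomposition redex can never be at the same position.

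For part~\cref{prop:rewprops:RegletrecCRS:stRegletrecCRS:i} I would verify each of the four one-step commutation diagrams by analysing an overlap of an $\m{\red_\unfold}$-step with a decomposition step $\m{\red_\arule}$, $\m{\arule ∈ \set{λ,@_0,@_1,\S,\del}}$, according to where the contracted \Let-redex lies relative to the decomposition redex. If the decomposition step is $\m{\red_λ}$ or $\m{\red_{@_i}}$, the subterm below the prefix is an abstraction, resp.\ an application, so the \Let-redex sits strictly inside the body being analysed (or in a disjoint subterm, which is immediate), and the peak closes by performing the two steps in the opposite order; the sole exceptional situation is an $\m{\red_{@_i}}$-step that discards the argument subterm containing the \Let-redex, in which case that second term of the peak is already an $\m{\red_{@_i}}$-reduct of the first and the diagram closes with no residual unfolding step. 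If the decomposition step is $\m{\red_\S}$ or $\m{\red_\del}$, I additionally need that the prefix binding it removes --- vacuous in the source of the overlap --- stays vacuous after the $\m{\red_\unfold}$-step. This is the one auxiliary point, and it follows from a one-line inspection of the rule schemes $\unfrule{λ}$, $\unfrule{@}$, $\unfrule{\merg}$, $\unfrule{\rec}$, $\unfrule{\nil}$, $\unfrule{\reduce}$ of \unfCRS: none of them introduces a free variable, i.e.\ $\m{\freevars{M'} \subseteq \freevars{M}}$ whenever $\m{M \red_\unfold M'}$ (for $\unfrule{\rec}$ because a \Let-bound variable is replaced by a right-hand side already present in the term, for $\unfrule{\reduce}$ because bindings are only deleted). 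Given this, the $\m{\red_\S}$/$\m{\red_\del}$-step remains enabled after the unfolding step, and the overlap closes. These arguments run exactly parallel to the commutation properties for $\m{\red_\del}$ and $\m{\red_\S}$ already proved in \cref{prop:rewprops:RegCRS:stRegCRS} and \cref{lem:rewprops:projection:RegCRS:stRegCRS} for the pure λ-calculus.

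For part~\cref{prop:rewprops:RegletrecCRS:stRegletrecCRS:ii} I would first record that a prefixed \lambdaletrec-term $\prefixed{\vec{x}}{\M}$ admits no $\m{\red_\unfold}$-step if and only if \M contains no \Let-expression as a subterm: every unfolding rule matches a \Let-pattern, and conversely every \Let-expression is an unfolding redex (cf.\ the proof of \cref{lem:unf-nf_well-formed}) while the unfolding rules are closed under contexts. Consequently, if $\prefixed{\vec{x}}{\M}$ is a normal form of \RegletrecARS (of \stRegletrecARS), then \M is \Let-free, hence a pure and --- being a \lambdaletrec-term --- finite λ-term, so that $\m{\prefixed{\vec{x}}{\M} ∈ \iTer{\lambdaprefixcal}}$; since no decomposition step applies to it either, it is a normal form of \RegARS (of \stRegARS), and by \cref{prop:rewprops:RegCRS:stRegCRS}~\cref{prop:rewprops:RegCRS:stRegCRS:iv-1} it must equal $\prefixed{x}{x}$ (resp.\ have the form $\prefixed{x_1 \dots x_n}{x_n}$). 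The converse direction is the immediate check, again supplied by \cref{prop:rewprops:RegCRS:stRegCRS}~\cref{prop:rewprops:RegCRS:stRegCRS:iv-1}, that $\prefixed{x}{x}$, resp.\ every $\prefixed{x_1 \dots x_n}{x_n}$, contains neither a \Let-expression nor a decomposition redex.

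The main --- though still modest --- effort I expect is the exhaustive bookkeeping of redex positions in the case analysis of part~\cref{prop:rewprops:RegletrecCRS:stRegletrecCRS:i}. There is no genuine critical-pair overlap between the unfolding and decomposition rules in the direction that matters, and the one place where the combined system might a priori misbehave --- an unfolding step destroying the vacuousness of a prefix binding, and thereby an $\m{\red_\S}$- or $\m{\red_\del}$-redex --- is ruled out by the free-variable monotonicity of $\m{\red_\unfold}$.
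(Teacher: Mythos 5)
Your proof is correct and follows essentially the same route as the paper, which for part (i) simply notes that the commutation diagrams are verified by analysing the behaviour of the rules on prefixed terms (with the reflexive closure of the arrows understood, as you implicitly use in the discarded-argument case), and for part (ii) argues exactly as you do that a normal form of the combined system must be \Let-free because every \Let-expression is an $\m{\red_\unfold}$-redex, and then appeals to \cref{prop:rewprops:RegCRS:stRegCRS}. Your explicit observation that $\m{\red_\unfold}$ does not create free variables, so that vacuousness of prefix bindings (and hence $\m{\red_\S}$/$\m{\red_\del}$-redexes) is preserved, is the right auxiliary point and is left implicit in the paper.
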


\begin{proof}
	The commutation properties in \cref{prop:rewprops:RegletrecCRS:stRegletrecCRS:i} are easy to verify by analysing the behaviour of the rewrite rules in \RegletrecARS and in \stRegletrecARS on the terms of \Ter{\lambdaletrecprefixcal}.
	\par The statement in \cref{prop:rewprops:RegletrecCRS:stRegletrecCRS:ii} follows from \cref{prop:rewprops:RegCRS:stRegCRS}~\cref{prop:rewprops:RegCRS:stRegCRS:i}, and \cref{prop:rewprops:RegCRS:stRegCRS:v}: Normal forms with respect to \RegletrecARS and \stRegletrecARS can only be λ-terms without occurrences of \letrec, since every occurrence of \letrec in a \lambdaletrec-term gives rise to a \m{\red_\unfold}-redex.
\end{proof}

\begin{lemma}\label{lem:commute:unfoldomegared:stregred}
	The rewrite relation \m{\omeganfred_\unfold} (to \m{\red_\unfold}-normal-form in at most \omega steps) one-step commutes with \m{\red_λ}, \m{\red_{@_0}}, \m{\red_{@_1}}, \m{\red_\S}, and \m{\red_\del}:
	\begin{align*}
		\m{\convomeganfred_\unfold} ⋅ \m{\red_λ}
			& \;\subseteq\;
		\m{\red_λ} ⋅ \m{\convomeganfred_\unfold}
		&
		\m{\convomeganfred_\unfold} ⋅ \red_{@_i}
			& \;\subseteq\;
		\red_{@_i} ⋅ \m{\convomeganfred_\unfold}
			 ~~\; (i ∈ \set{0,1})
		\\
		\m{\convomeganfred_\unfold} ⋅ \m{\red_\S}
			& \;\subseteq\;
		\m{\red_\del} ⋅ \m{\convomeganfred_\unfold}
		&
		\m{\convomeganfred_\unfold} ⋅ \m{\red_\del}
			& \;\subseteq\;
		\m{\red_\del} ⋅ \m{\convomeganfred_\unfold}
	\end{align*}
	This implies, for prefixed terms that have unfoldings that:
	\begin{gather*}
		\unfsem{\prefixed{\vec{x}}{\abs{y}{\L_0}}}
		\m{\red_λ}
		\unfsem{\prefixed{\vec{x}y}{\L_0}}
		\\
		\unfsem{\prefixed{\vec{x}}{\app{\L_0}{\L_1}}}
		\m{\red_{@_i}}
		\unfsem{\prefixed{\vec{x}}{\L_i}}
		\m{~~\; (i ∈ \set{0,1})}
		\\
		\prefixed{\vec{x}}{\M}
		\m{\red_\S}
		\prefixed{\vec{x}'}{\M}
		~⇒~
		\unfsem{\prefixed{\vec{x}}{\M}}
		\m{\red_\S}
		\unfsem{\prefixed{\vec{x}'}{\M}}
		\\
		\prefixed{\vec{x}}{\M}
		\m{\red_\del}
		\prefixed{\vec{x}'}{\M}
		~⇒~
		\unfsem{\prefixed{\vec{x}}{\M}}
		\m{\red_\del}
		\unfsem{\prefixed{\vec{x}'}{\M}}
	\end{gather*}
	Furthermore it holds: \m{\convomeganfred_\unfold ⋅ \red_\unfold ~\subseteq~ \convomeganfred_\unfold}.
\end{lemma}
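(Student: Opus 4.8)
The plan is to establish the three parts of the lemma in turn: the four one-step commutation inclusions for $\convomeganfred_\unfold$ (call these (A)), the four displayed consequences about $\unfsem{}$ (call these (B)), and the final inclusion $\convomeganfred_\unfold \cdot \red_\unfold ~\subseteq~ \convomeganfred_\unfold$ (call this (C)). The engine for (A) is the single-step commutation already recorded in \cref{prop:rewprops:RegletrecCRS:stRegletrecCRS}~\cref{prop:rewprops:RegletrecCRS:stRegletrecCRS:i}, which I would lift from one $\red_\unfold$-step to a strongly convergent unfolding-to-normal-form sequence of length at most $\omega$ by an infinitary tiling (projection) argument. Throughout I use the fact that the rules of \unfCRS act only inside the \lambdaletrec-body of a prefixed term and never touch the abstraction prefix, so that along any $\omeganfred_\unfold$-sequence the prefix is constant.

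For (A), fix a decomposition step $B \red_\alpha C$ with $\alpha ∈ \set{λ, @_0, @_1, \S, \del}$ and a strongly convergent sequence $B = B_0 \red_\unfold B_1 \red_\unfold \dots$ of length at most $\omega$ whose limit $A$ is a $\red_\unfold$-normal-form. Tiling each elementary square by \cref{prop:rewprops:RegletrecCRS:stRegletrecCRS}~\cref{prop:rewprops:RegletrecCRS:stRegletrecCRS:i} produces, by induction, decomposition steps $B_j \red_\alpha C_j$ (with $C_0 = C$) together with a projected unfolding sequence $C = C_0 \red_\unfold C_1 \red_\unfold \dots$ (some of whose steps collapse in the $@_i$ case, namely when the contracted redex lies in the discarded argument). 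The crux is to pass to the limit: I would show that the projected $C$-sequence is again strongly convergent, that its limit $D$ is a $\red_\unfold$-normal-form, and that the limit step $A \red_\alpha D$ is licit. This yields $C \omeganfred_\unfold D$ and $A \red_\alpha D$, i.e. $A \mathrel{(\red_\alpha \cdot \convomeganfred_\unfold)} C$, which is exactly the right-hand side of the claimed inclusion.

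Strong convergence is preserved because each decomposition step shifts the depth of rewrite activity by a bounded constant: $\red_λ$ moves an abstraction from the body into the prefix and so preserves total depth; $\red_{@_i}$ discards an application node together with one argument, lowering depth by one; and $\red_\S, \red_\del$ only shorten the prefix, again lowering depth by one. Hence the activity depths on the $C$-side tend to infinity whenever those on the $B$-side do. The limit $D$ is normal since its body is either a subterm of $A$'s body (for $λ, @_i$) or equal to it (for $\S, \del$), and $A$'s body is \Let-free. The limit step is licit because the top application/abstraction symbol of $\app{\L_0}{\L_1}$/$\abs{y}{\L_0}$ is never an $\unfold$-redex and so survives into $A$, and because vacuousness of a prefix variable is preserved under unfolding (unfolding substitutes function variables and cannot create new free λ-occurrences, so $\freevars{\unfsem{\M}} \subseteq \freevars{\M}$). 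For $\alpha = \S$ the removed variable remains the last prefix variable in the limit, so the limit step is in fact a $\red_\S$-step; the $\red_\del$ displayed on the right of the matrix then follows from $\red_\S \subseteq \red_\del$ (\cref{prop:rewprops:RegCRS:stRegCRS}~\cref{prop:rewprops:RegCRS:stRegCRS:iii-0}), while the sharper $\red_\S$-version is what I would keep for the consequence below.

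Parts (B) and (C) are then short. For (B), specialise (A) to a $B$ that has an unfolding, so that $A = \unfsem{B}$ and $D = \unfsem{C}$; reading off the four instances of (A) — with the sharp $\red_\S$-version in the third case — gives exactly the four displayed equations, after rewriting $\unfsem{\prefixed{\vec{x}}{\abs{y}{\L_0}}}$, $\unfsem{\prefixed{\vec{x}}{\app{\L_0}{\L_1}}}$, and the prefix-shortening cases. For (C), a single $\red_\unfold$-step from $B$ to $C$ preserves the infinitary unfolding normal form: by confluence (\cref{prop:unf-confluence}) and uniqueness of normal forms (\cref{lem:unique_normal_forms}), $C$ and $B$ normalise to the same $A$, and since $B \omeganfred_\unfold A$ this normal form is reached from $C$ within $\omega$ steps, giving $C \omeganfred_\unfold A$, i.e. $A \convomeganfred_\unfold C$. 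I expect the infinitary tiling in (A) — specifically the strong-convergence bookkeeping and the justification of the limiting decomposition step — to be the only genuine obstacle; everything else is routine.
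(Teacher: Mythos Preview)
Your approach is essentially the paper's. Both proceed by tiling a strongly convergent $\red_\unfold$-sequence across a single decomposition step, with depth bookkeeping to verify that strong convergence survives the projection. Where you argue the depth shifts in words, the paper packages the same observation into depth-annotated relations $\red^{\,\ge n}_\unfold$ and records refined one-step commutations such as $\convred^{\,\ge n+1}_\unfold \cdot \red_{@_i} \subseteq \red_{@_i} \cdot \convred^{\,\ge n}_\unfold$ (depth preserved in the $\lambda$ case, lowered by one for $@_i$, $\S$, $\del$); this is a presentational difference only. Your sharper observation that the limit step for $\alpha=\S$ is again a $\red_\S$-step agrees with the paper's refined squares, which likewise keep $\red_\S$ on both sides.

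One remark on (C): your appeal to finitary confluence plus uniqueness of normal forms glosses over exactly the infinitary projection you correctly flag as the obstacle in (A). Finitary confluence and UN tell you that \emph{if} $C$ has an infinitary normal form then it is $A$, but not that $C$ reaches $A$ via a strongly convergent sequence of length at most $\omega$. The paper handles (C) by the same technique as (A): depth-refined versions of the elementary confluence diagrams are used to project $B \omeganfred_\unfold A$ across $B \red_\unfold C$. So (C) is not quite as routine as you suggest; it is another instance of the same tiling argument, now with $\red_\unfold$ on both sides of the squares.
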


\begin{proof}
	\newcommand\sunfolddepthconvred[1]{\m{\convred^{#1}_{\unfold}}}
	\newcommand\sunfolddepthred[1]{\m{\red^{#1}_{\unfold}}}
	The commutation properties with the rewrite relation \m{\omeganfred_\unfold} can be shown by using refined versions of the commutation properties in \cref{prop:rewprops:RegletrecCRS:stRegletrecCRS}~\cref{prop:rewprops:RegletrecCRS:stRegletrecCRS:i}, in which the minimal depth of unfolding steps is taken account of. When denoting by \m{\sunfolddepthred{≥ n}} the rewrite relation that is generated by \m{\red_\unfold}-steps of depth \m{≥ n}, then the following properties hold:
	\begin{align*}
		\sunfolddepthconvred{≥ n} ⋅ \m{\red_λ} & \;\subseteq\; \m{\red_λ} ⋅ \sunfolddepthconvred{≥ n} &
		\sunfolddepthconvred{≥ n+1} ⋅ \red_{@_i} & \;\subseteq\; \red_{@_i} ⋅ \sunfolddepthconvred{≥ n} \\
		\sunfolddepthconvred{≥ n+1} ⋅ \m{\red_\S} & \;\subseteq\; \m{\red_\S} ⋅ \sunfolddepthconvred{≥ n} &
		\sunfolddepthconvred{≥ n+1} ⋅ \m{\red_\del} & \;\subseteq\; \m{\red_\del} ⋅ \sunfolddepthconvred{≥ n}
	\end{align*}
	Using these properties, strongly convergent \m{\red_\unfold}-rewrite-sequences can be shown to project, via \m{\red_λ}-, \m{\red_{@_i}}-, \m{\red_\del}-, and \m{\red_\S}-steps, to strongly convergent \m{\red_\unfold}-rewrite-sequences.
	\par The property \m{\convomeganfred_\unfold ⋅ \red_\unfold ~\subseteq~ \convomeganfred_\unfold} can be shown by using refined versions of the elementary diagrams from the confluence proof that take the minimal depths of steps into account.
\end{proof}

\begin{para}[scope/\extscope-delimiting strategy for \RegletrecARS/\stRegletrecARS]
	As for \RegARS/\stRegARS we require of scope/\extscope-delimiting strategies to have deterministic \m{\red_\del}/\m{\red_\S}-steps. As we did for \cref{def:scope:delimiting:strat:Reg:stReg}, we will also fix all non-determinism except for the choice between \m{@_0} and \m{@_1}.
\end{para}

\begin{definition}[scope/\extscope-delimiting strategy for \RegletrecARS/\stRegletrecARS]\label{def:scope:delimiting:strat:Regletrec:stRegletrec}
	A strategy \astrat for \RegletrecARS (\stRegletrecARS) will be called a \emph{scope-delimiting} (\emph{\extscope-delimiting}) \emph{strategy} if:
	\begin{itemize}
		\item \astrat is deterministic for sources of \m{\red_λ}-steps, \m{\red_\del}-steps (\m{\red_\S}-steps), and all \letrec-unfolding steps (i.e.\ all \m{\red_{\unfoldpre{\arule}}}-steps for every rule \unfrule{\arule} of \unfCRS).
		\item \astrat enforces eager application of \unfrule{\reduce}: every source of a step in \astrat according to an application of a rule different from \unfrule{\reduce} is not the source of a \m{\red_\reduce}-step in the underlying ARS.
	\end{itemize}
	We say that such a strategy \astrat is a \emph{lazy-unfolding} scope-delimiting strategy (a \emph{lazy-unfolding} \extscope-delimiting strategy) if furthermore:
	\begin{itemize}
	\item \astrat applies the rules of \unfCRS except for \unfrule{\reduce} only at the root of the term, i.e.\ directly beneath the abstraction prefix.
		\item \astrat uses the rules of \unfCRS other than \unfrule{\reduce} in a lazy way: every source of a step in \astrat with respect to a rule of \unfCRS other than \unfrule{\reduce} is not also the source of a step in the underlying ARS, with respect to one of the rules of \RegARS (of \m{\stRegARS)}.
	\end{itemize}
\end{definition}

\begin{notation}\label{not:dot}
	For every scope-delimiting strategy \astrat on \RegletrecARS (on \stRegletrecARS), we denote by \m{\red_{\astrat.\arule}} the rewrite relation that is induced by those steps according to \astrat due to applications of the rule \arule of \RegletrecARS (of \stRegletrecARS).
\end{notation}

\begin{para}[deterministic unfolding]\label{rem:nondet:unfolding}
	Note that in \cref{def:scope:delimiting:strat:Regletrec:stRegletrec} we do not only require of a strategy to eliminate the non-determinism with respect to \decrule{\del}-steps (\decrule{\S}-steps) but all non-determinism except for the \decrule{@_0}/\decrule{@_1}-non-determinism. This restriction will play a role later for the definition of λ-transition-graphs in \cref{sec:lambda-transition-graphs}, and aso in this section for defining projections of scope/\extscope-delimiting strategies for \lambdaletrec-terms to scope/\extscope-delimiting strategies for λ-terms.
\end{para}

\begin{para}[eager application of \unfrule{\reduce}]
	By requiring scope/\extscope delimiting strategies to apply \unfrule{\reduce} eagerly we can exploit a useful property with respect to free variables of a term: if \prefixed{\vec{x}}{\L ∈ \Ter{\lambdaletrecprefixcal}} is in \m{\red_{\unfoldpre{\reduce}}}-normal-form, 
	then the free variables occurring in \M correspond to the free variables of \unfsem{\L}.
\end{para}

\begin{para}[unfolding of prefixed terms]
	Note that we have just applied \unfsem{} to a prefixed term while in \cref{def:unf-mapping} \unfsem{} is only defined for `normal', unprefixed terms from \Ter{\lambdaletreccal}. We will continue to do so in the rest of this section and assume that the domain of \unfsem{} is extended in the obvious way to terms in \Ter{\lambdaletrecprefixcal}.
\end{para}


\begin{definition}[\astrat-productive terms]\label{def:productive:strats}
	Let \L be a \lambdaletrec-term, and \astrat a scope-delimiting strategy for \RegletrecARS or a \extscope-delimiting strategy for \stRegletrecARS. We say that \L is \emph{\astrat-productive} if every infinite rewrite sequence on \L with respect to \astrat contains infinitely many steps according to \m{\red_{{\astrat}.@_0}}, \m{\red_{{\astrat}.@_1}}, or \m{\red_{{\astrat}.λ}}.
\end{definition}

\begin{definition}[generated subterms of \lambdaletrec-terms]\label{def:ST:letrec}
	We extend the domain of \ST{}{} from \cref{def:ST} to \lambdaletrec-terms. 
	Let \astrat be a scope-delimiting strategy for \RegletrecARS/\stRegletrecARS. For every \m{\L ∈ \Ter{\lambdaletreccal}}, \ST{\astrat}{\L} is defined as the set of objects of the sub-ARS \GeneratedSubARSi{\astrat}{\prefixed{}{\L}}, or in other words, the set of \m{\mred_{\astrat}}-reducts of \prefixed{}{\L}:
	\begin{align*}
		\ST{\astrat}{} ~:~ \Ter{\lambdaletreccal}
		  & ~→~ \powerset(\Ter{\lambdaletrecprefixcal}) \\
		L & ~↦~ O ~~~ \text{where \m{\tuple{O,\steps,\src{},\tgt{}} = \GeneratedSubARSi{\astrat}{\prefixed{}{L}}}}
	\end{align*}
\end{definition}

\begin{para}
	The following lemma states that every scope/\extscope-delimiting strategy for \stRegletrecARS (on \lambdaletrec-terms), when restricted to the reducts of a \lambdaletrec-term \L that expresses a λ-term \M, projects to the restriction of a \extscope-delimiting strategy for \stRegARS (on λ-terms) to reducts of \M. And it asserts a similar statement for scope-delimiting strategies. The proof uses the commutation properties described in \cref{lem:commute:unfoldomegared:stregred} between the infinite unfolding \m{\omeganfred_\unfold} and the decomposition rewrite relations \m{\red_λ}, \m{\red_{@_0}}, \m{\red_{@_1}}, \m{\red_\S}, and \m{\red_\del}.
\end{para}

\begin{lemma}[projection of scope-delimiting strategies]\label{lem:proj:strat:letrec:lambda}
	Let \astrat be a scope/\extscope-delimiting strategy \astrat for \RegletrecARS/\stRegletrecARS, and let \L be a \lambdaletrec-term that is \astrat-productive. Then there exists a (history-aware) scope/\extscope-delimiting strategy \Check{\astrat} for \RegARS/\stRegARS such that the generated sub-ARS \GeneratedSubARSi{\Check{\astrat}}{\unfsem{\L}} of \unfsem{\L} is the projection (under the unfolding mapping \unfsem{}) of the generated sub-ARS \GeneratedSubARSi{\astrat}{\L} of \L, in the sense that for all \m{\L'} in \GeneratedSubARSi{\astrat}{\L} it holds:
	\begin{gather*}
		\L' \mred_{\astrat.\unfold} ⋅ \red_{\astrat.\decompose} \L''
		~\Longrightarrow~
		\unfsem{\L'} \red_{\Check{\astrat}} \unfsem{\L''}
		\\
		\unfsem{\L'} \red_{\Check{\astrat}} \M''
		~\Longrightarrow~
		(∃ \L'')~
		\L' \mred_{\astrat.\unfold} ⋅ \red_{\astrat.\decompose} \L''
		\;∧\;
		\M'' = \unfsem{\L''}
	\end{gather*}
	As a consequence, \unfsem{\L} is \Check{\astrat}-regular if \L is \astrat-regular.
\end{lemma}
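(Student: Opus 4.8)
The plan is to obtain \Check{\astrat} as a \emph{history-aware} strategy whose decoration at a state records the \lambdaletrec-reduct \m{\L'} of \L from which that state was reached (equivalently: the finite sequence of decomposition steps \astrat has produced so far, from which \m{\L'} is recovered by replaying \astrat's unfolding blocks). The projection is \unfsem{} itself: it collapses every maximal run of \m{\red_{\astrat.\unfold}}-steps to a point, so that a \emph{block} \m{\L' \mred_{\astrat.\unfold} \L'_* \red_{\astrat.\decompose} \L''} of \astrat projects to a single \Check{\astrat}-step \m{\unfsem{\L'} = \unfsem{\L'_*} \red_\arule \unfsem{\L''}} whose rule \arule is the kind (\m{@_0}, \m{@_1}, λ, or \S/\del) of the decomposition step. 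I would begin with two preliminary observations. First, by \astrat-productivity (\cref{def:productive:strats}) there is no infinite \astrat-rewrite-sequence consisting only of \m{\red_{\astrat.\unfold}}-steps, so from every \astrat-reduct of \L a decomposition step (or a normal form) is reached after finitely many unfolding steps; hence the blocks above are well-defined and finite. Second, the same argument applied to each reduct shows that \L and all its \astrat-reducts lie in \domain{\unfsem{}}, so that \unfsem{\cdot} is meaningful throughout.

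For the construction, fix a block \m{\L' \mred_{\astrat.\unfold} \L'_* \red_{\astrat.\decompose} \L''}. Unfolding steps preserve the infinite normal form (\cref{prop:unf-confluence}), so \m{\unfsem{\L'} = \unfsem{\L'_*}}. The decomposition step \m{\L'_* \red_{\decompose.\arule} \L''} takes place at the root, directly below the abstraction prefix; applying \cref{lem:commute:unfoldomegared:stregred} to that single step yields \m{\unfsem{\L'_*} \red_\arule \unfsem{\L''}} for \m{\arule ∈ \set{λ, @_0, @_1}}, and likewise for \m{\arule ∈ \set{\S,\del}} — where one only has to check that the side condition survives unfolding, which it does, since a prefix variable absent from the body of \m{\L'_*} is absent from every right-hand side of its bindings, hence from \unfsem{\L'_*}. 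These projected steps are the steps of the labelled ARS \m{\aARS'} over the part of \RegARS/\stRegARS reachable by \astrat; I would then extend \m{\aARS'} trivially over the remainder (adding at every state a lift of every \RegARS/\stRegARS-step, with the decoration extended by that step), so that \m{\aARS'} is a bona fide labelled version of \RegARS/\stRegARS with the required bijection between outgoing steps, and take \Check{\astrat} to be the sub-ARS of \m{\aARS'} that at a decorated state \m{\pair{\unfsem{\L'}}{\L'}} selects exactly the projected step(s) of \astrat's next block. The converse implication of the lemma is then immediate: every \Check{\astrat}-step out of \unfsem{\L'} is by construction the projection of a block \m{\L' \mred_{\astrat.\unfold} \cdot \red_{\astrat.\decompose} \L''}, landing at \unfsem{\L''}.

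It remains to verify the required properties. \Check{\astrat} is a history-free strategy on \m{\aARS'}: the decoration determines \m{\L'}, on which \astrat (by \cref{def:scope:delimiting:strat:Regletrec:stRegletrec}) is deterministic except at \m{@_0}/\m{@_1}-forks, so the selection is well defined; and it has the same normal forms as \m{\aARS'}, since by \cref{prop:rewprops:RegletrecCRS:stRegletrecCRS}~\cref{prop:rewprops:RegletrecCRS:stRegletrecCRS:ii} the \astrat-normal-forms among reducts of \L are exactly the \RegARS/\stRegARS-normal-forms \prefixed{x}{x} (resp.\ \prefixed{x_1 \dots x_n}{x_n}). \Check{\astrat} is a scope/\extscope-delimiting strategy: at \m{\L'_*} the strategy \astrat offers precisely one \m{\red_λ}-step, or precisely one \m{\red_\del}-step (\m{\red_\S}-step), or precisely the pair \m{\red_{@_0}}, \m{\red_{@_1}}; by the step-kind preservation just established, \Check{\astrat} offers the same at the corresponding state, which is exactly the condition of \cref{def:scope:delimiting:strat:Reg:stReg}. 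Iterating the forward projection along \astrat-rewrite-sequences, and using the converse for the reverse inclusion, shows that the object set of \m{\GeneratedSubARSi{\Check{\astrat}}{\unfsem{\L}}} is the \unfsem{}-image of \ST{\astrat}{\L}; hence if \L is \astrat-regular then \ST{\Check{\astrat}}{\unfsem{\L}} is finite, i.e.\ \unfsem{\L} is \Check{\astrat}-regular, which is the concluding claim.

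I expect the main obstacle to be the history bookkeeping: checking that \m{\aARS'} is genuinely a labelled version of the \emph{full} \RegARS/\stRegARS — every state's outgoing steps in bijection with those of its underlying λ-term — while simultaneously the steps that \Check{\astrat} selects really lie in \m{\aARS'}. This is precisely where a history-aware strategy is unavoidable: two reducts \m{\L'_1 ≠ \L'_2} with \m{\unfsem{\L'_1} = \unfsem{\L'_2}} may lead \astrat to decompose with different step kinds (for instance one takes a \m{\red_\S}-step and the other a \m{\red_{@_i}}-step, when the common unfolding has both a vacuous prefix binding and an application head), so an undecorated projection would create exactly the non-determinism forbidden for scope-delimiting strategies — cf.\ the \app{M}{M}-example discussed earlier. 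A minor additional point is confirming the second preliminary observation (that \astrat-productivity forces membership in \domain{\unfsem{}}) and the preservation of the \m{\red_\S}/\m{\red_\del} side conditions under unfolding.
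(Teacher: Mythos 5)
Your proposal is correct and follows essentially the same route as the paper's (very terse) proof sketch: the commutation properties of \cref{lem:commute:unfoldomegared:stregred} are used to project each block of \m{\red_{\astrat.\unfold}}-steps followed by a decomposition step onto a single step on the unfolding, and these commuting diagrams determine the history-aware strategy \Check{\astrat} on all of \GeneratedSubARSi{\astrat}{\L}, with the second implication following from the same construction. Your additional bookkeeping (finiteness of unfolding blocks via productivity, preservation of the \m{\red_\S}/\m{\red_\del} side conditions, and the necessity of history-awareness) is exactly the detail the paper leaves implicit.
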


\begin{proof}[Proof sketch]
	We can utilise \cref{lem:commute:unfoldomegared:stregred} to make commuting diagrams out of the two formulas above (for any given \m{\L'}), which allows us to determine \Check{\astrat} with respect to all terms in \GeneratedSubARSi{\astrat}{\L}. This freedom in the definition of \Check{\astrat} also guarantees the property in the second implication in the lemma.
\end{proof}

\begin{definition}[\stParseUnfCRS]\label{def:stParseUnfCRS}
	By \stParseUnfCRS we denote the CRS comprising the rules of \stParseCRS as well as the unfolding rules from \unfCRS.
\end{definition}

\begin{example}\label{ex:stParseUnfCRS}
	When applied to \m{\letin{f = \abs{xy}{\app{\app{f}{y}}{x}}}{f}}, the rewrite relation in \stParseUnfCRS unfolds and decomposes, but at the same time recreates the corresponding λ-term (see also \cref{ex:expressible:simpleletrec} and \cref{ex:stParseCRS}). In the rewriting sequence below, we prefix the rewrite relations that are due to rules from \unfCRS with \unfold, while the rewrite relations due to rules from \stParseUnfCRS are not prefixed.
	\rewritingSequence{
		\init{\parse{0}{\prefixed{}{\letin{f = \abs{xy}{\app{\app{f}{y}}{x}}}{f}}}}
		\step{\m{\red_{\unfoldpre{\rec}}}}
		{\parse{0}{\prefixed{}{\letin{f = \abs{xy}{\app{\app{f}{y}}{x}}}{\abs{xy}{\app{\app{f}{y}}{x}}}}}}
		\step{\m{\red_{\unfoldpre{λ}}}}
		{\parse{0}{\prefixed{}{\abs{x}{\letin{f = \abs{xy}{\app{\app{f}{y}}{x}}}{\abs{y}{\app{\app{f}{y}}{x}}}}}}}
		\step{\m{\red_{λ}}}
		{\abs{x}{\parse{1}{x, \prefixed{x}{\letin{f = \abs{xy}{\app{\app{f}{y}}{x}}}{\abs{y}{\app{\app{f}{y}}{x}}}}}}}
		\step{\m{\red_{\unfoldpre{λ}}}}
		{\abs{x}{\parse{1}{x, \prefixed{x}{\abs{y}{\letin{f = \abs{xy}{\app{\app{f}{y}}{x}}}{\app{\app{f}{y}}{x}}}}}}}
		\step{\m{\red_{λ}}}
		{\abs{xy}{\parse{2}{x,y, \prefixed{xy}{\letin{f = \abs{xy}{\app{\app{f}{y}}{x}}}{\app{\app{f}{y}}{x}}}}}}
		\step{\m{\red_{\unfoldpre{@}}}}
		{\simpleabs{xy}{\parse{2}{x,y, \prefixed{xy}{\appbreak
			{(\letin{f = \abs{xy}{\app{\app{f}{y}}{x}}}{\app{f}{y}})}
			{(\letin{f = \abs{xy}{\app{\app{f}{y}}{x}}}{x})}
		}}}}
		\step{\m{\red_{\unfoldpre{\reduce}}}}
		{\abs{xy}{\parse{2}{x,y, \prefixed{xy}{\app
			{(\letin{f = \abs{xy}{\app{\app{f}{y}}{x}}}{\app{f}{y}})}
			{(\letin{}{x})}
		}}}}
		\step{\m{\red_{\unfoldpre{\nil}}}}
		{\abs{xy}{\parse{2}{x,y, \prefixed{xy}{\app
			{(\letin{f = \abs{xy}{\app{\app{f}{y}}{x}}}{\app{f}{y}})}
			{x}
		}}}}
		\step{\m{\red_{@}}}
		{\simpleabs{xy}{\appbreak
			{(\parse{2}{x,y, \prefixed{xy}{\letin{f = \abs{xy}{\app{\app{f}{y}}{x}}}{\app{f}{y}}}})}
			{(\parse{2}{x,y, \prefixed{xy}{x}})}
		}}
		\step{\m{\red_{\S}}}
		{\simpleabs{xy}{\appbreak
			{(\parse{2}{x,y, \prefixed{xy}{\letin{f = \abs{xy}{\app{\app{f}{y}}{x}}}{\app{f}{y}}}})}
			{(\parse{1}{x, \prefixed{x}{x}})}
		}}
		\step{\m{\red_{\0}}}
		{\abs{xy}{\app
			{(\parse{2}{x,y, \prefixed{xy}{\letin{f = \abs{xy}{\app{\app{f}{y}}{x}}}{\app{f}{y}}}})}
			{x}
		}}
		\step{\m{\red_{\unfoldpre{@}}}}
		{\simpleabs{xy}{\app
			{(\parse{2}{x,y, \prefixed{xy}{\appbreak
				{(\letin{f = \abs{xy}{\app{\app{f}{y}}{x}}}{f})}
				{(\letin{f = \abs{xy}{\app{\app{f}{y}}{x}}}{y})}
			}})}
			{x}
		}}
		\step{\m{\red_{\unfoldpre{\reduce}}}}
		{\abs{xy}{\app
			{(\parse{2}{x,y, \prefixed{xy}{\app
				{(\letin{f = \abs{xy}{\app{\app{f}{y}}{x}}}{f})}
				{(\letin{}{y})}
			}})}
			{x}
		}}
		\step{\m{\red_{\unfoldpre{\nil}}}}
		{\abs{xy}{\app
			{(\parse{2}{x,y, \prefixed{xy}{\app
				{(\letin{f = \abs{xy}{\app{\app{f}{y}}{x}}}{f})}
				{y}
			}})}
			{x}
		}}
		\step{\m{\red_{@}}}
		{\simpleabs{xy}{\app{\appbreak
				{(\parse{2}{x,y, \prefixed{xy}{\letin{f = \abs{xy}{\app{\app{f}{y}}{x}}}{f}}})}
					{(\parse{2}{x,y, \prefixed{xy}{y}})}}
				{x}
		}}
		\step{\m{\red_{\0}}}
		{\abs{xy}{\app{\app
			{(\parse{2}{x,y, \prefixed{xy}{\letin{f = \abs{xy}{\app{\app{f}{y}}{x}}}{f}}})}
			{y}}
			{x}
		}}
		\step{\m{\red_{\S}}}
		{\abs{xy}{\app{\app
				{(\parse{1}{x, \prefixed{x}{\letin{f = \abs{xy}{\app{\app{f}{y}}{x}}}{f}}})}
				{y}}
			{x}
		}}
		\step{\m{\red_{\S}}}
		{\abs{xy}{\app{\app
				{(\parse{0}{\prefixed{}{\letin{f = \abs{xy}{\app{\app{f}{y}}{x}}}{f}}})}
				{y}}
			{x}
		}}
		\step{\m{\red_{\unfoldpre{\rec}}}}
		{\abs{xy}{\app{\app{~\dots~}{y}}{x}}}
	}
\end{example}

\begin{lemma}\label{lem:unfolding:versus:strats}
	For all closed \m{\L ∈ \Ter{\lambdaletreccal}} the following statements are equivalent:
	\begin{enumerate}[(i)]
		\item\label{lem:unfolding:versus:strats:i}
			\L expresses an infinite λ-term \M, that is, \m{\L \m{\omegared_\unfold} \M}.
		\item\label{lem:unfolding:versus:strats:ii}
			\m{\parse{0}{\prefixed{}{\L}} \m{\omegared_\sparseunf} \M}, for some infinite λ-term \M.
		\item\label{lem:unfolding:versus:strats:iii}
			\L is \astratplus-productive for some \extscope-delimiting strategy \astratplus.
		\item\label{lem:unfolding:versus:strats:iv}
			\L is \astratplus-productive for every \extscope-delimiting strategy \astratplus.
	\end{enumerate}
\end{lemma}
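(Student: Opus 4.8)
The plan is to prove the cycle of implications (i) $\Rightarrow$ (iv) $\Rightarrow$ (iii) $\Rightarrow$ (ii) $\Rightarrow$ (i). The step (iv) $\Rightarrow$ (iii) is immediate, since an \extscope-delimiting strategy for \stRegletrecARS exists at all (the eager one, in the spirit of \cref{def:eagerStrats}). Throughout I would lean on three available ingredients: \cref{lem:unfolding}, which identifies (i) with \unfCRS-productivity of \L; the commutation properties of \cref{lem:commute:unfoldomegared:stregred} between the infinite unfolding \m{\omeganfred_\unfold} and the decomposition steps \m{\red_\lambda}, \m{\red_{@_0}}, \m{\red_{@_1}}, \m{\red_\S}, \m{\red_\del} (together with \m{\convomeganfred_\unfold \cdot \red_\unfold \subseteq \convomeganfred_\unfold}); and \cref{prop:stParseCRS}, which says that \stParseCRS first decomposes a λ-term into its generated subterms and then reconstructs it, a fact that lifts to \stParseUnfCRS (\cref{def:stParseUnfCRS}) by interleaving unfolding with parsing exactly as in \cref{ex:stParseUnfCRS}.

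For (i) $\Rightarrow$ (iv) I would argue contrapositively. Suppose some \extscope-delimiting strategy \astratplus admits an infinite \astratplus-rewrite sequence \arewseq on \L with only finitely many \m{\red_{\astratplus.@_0}}-, \m{\red_{\astratplus.@_1}}-, and \m{\red_{\astratplus.\lambda}}-steps. Past the last such step the abstraction prefix can only be shortened (by \m{\red_\S}-steps, of which there are finitely many since prefix length is a natural number) and otherwise remains a fixed \m{\vec{x}}, so \arewseq has an infinite tail consisting solely of \unfCRS-steps applied below \m{\vec{x}}; reading this off in the body yields an infinite \m{\red_\unfold}-rewrite sequence starting from some prefixed term \prefixed{\vec{x}}{L'}. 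By \cref{lem:commute:unfoldomegared:stregred} this term is again in \domain{\unfsem{}} (it is reachable from \prefixed{}{\L} by unfolding steps and \m{\red_\lambda}-, \m{\red_{@_i}}-, \m{\red_\S}-steps only). From here one derives a contradiction with \unfCRS-productivity of \L --- which holds by \cref{lem:unfolding} since (i) was assumed --- by showing that such a strategy run cannot avoid decomposition forever: this uses eager application of \unfrule{\reduce} (so that no garbage survives to spin on), the determinism of \extscope-delimiting strategies (so that a root \m{\red_{@_i}}- or \m{\red_\lambda}-step, once available, is taken), confluence of \unfCRS (\cref{prop:unf-confluence}), and the characterisation of \domain{\unfsem{}} in terms of the absence of cyclic \unfCRS-rewriting-sequences without \m{@}- and \m{\lambda}-unfolding steps. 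I expect this to be the main obstacle: making precise that a scope/\extscope-delimiting strategy cannot sustain an infinite run that is eventually pure unfolding on a \lambdaletrec-term that \emph{does} have an unfolding.

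For (iii) $\Rightarrow$ (ii), let \astratplus be an \extscope-delimiting strategy for which \L is \astratplus-productive (\cref{def:productive:strats}). The rules \rulep{@_i}, \rulep{\lambda}, \rulep{\S} of \stRegletrecARS sit inside \rulebp{\sparse}{@}, \rulebp{\sparse}{\lambda}, \rulebp{\sparse}{\S} of \stParseUnfCRS exactly as \stRegCRS sits inside \stParseCRS (cf.\ \cref{def:stParseCRS}), and the unfolding rules are shared. Running \astratplus ``inside'' \stParseUnfCRS --- performing, in every \parsei{n}-subexpression, the step that \astratplus prescribes for the corresponding \stRegletrecARS-term --- produces a \m{\red_\sparseunf}-rewrite sequence out of \parse{0}{\prefixed{}{\L}}. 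By \astratplus-productivity every branch undergoes infinitely many \m{@}/\m{\lambda}-decomposition steps, which push the \parsei{n}-symbols to ever greater depth; dovetailing the branches yields a strongly convergent sequence of length at most \omega whose limit contains neither \parsei{n}-symbols (else a \rulebp{\sparse}{@}-, \rulebp{\sparse}{\lambda}-, \rulebp{\sparse}{\S}-, or \rulebp{\sparse}{\0}-redex would remain on a productive branch) nor \letrec-subexpressions (else a \m{\red_\unfold}-redex would remain), hence is a \m{\red_\sparseunf}-normal-form and an infinite λ-term \M. This is \parse{0}{\prefixed{}{\L}} \m{\omeganfred_\sparseunf} \M.

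Finally, for (ii) $\Rightarrow$ (i), from \parse{0}{\prefixed{}{\L}} \m{\omeganfred_\sparseunf} \M with \M infinite I would project away the parsing infrastructure. The \unfCRS-steps of this sequence act inside the prefixed-term arguments of the \parsei{n}-symbols, independently of the \stParseCRS-rules; using the commutation of \m{\omeganfred_\unfold} with the parse/decomposition steps and \m{\convomeganfred_\unfold \cdot \red_\unfold \subseteq \convomeganfred_\unfold} from \cref{lem:commute:unfoldomegared:stregred}, these steps can be permuted to the front, giving a strongly convergent \m{\red_\unfold}-rewrite sequence \L \m{\omeganfred_\unfold} \m{M'}; the remaining, parse-only tail then reconstructs \m{M'}, so by \cref{prop:stParseCRS}~\cref{prop:stParseCRS:ii} together with confluence of \unfCRS one gets \m{M' = \M}. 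Since \M is infinite this is precisely \m{\L \omegared_\unfold \M}, i.e.\ statement (i), closing the cycle.
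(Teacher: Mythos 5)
Your cycle of implications is the same as the paper's, and three of the four arrows are argued essentially as in the paper: (iv)\,$\Rightarrow$\,(iii) is immediate once a \extscope-delimiting strategy is exhibited, (iii)\,$\Rightarrow$\,(ii) runs the productive strategy inside \stParseUnfCRS and uses productivity to push the \parsei{n}-symbols to ever greater depth, and (ii)\,$\Rightarrow$\,(i) projects the \m{\red_\sparseunf}-sequence onto its unfolding steps (the paper keeps only the \m{\red_\unfold}-steps, observing that \m{\red_\sparse}-steps depend on prior unfolding but not conversely; you permute them to the front with \cref{lem:commute:unfoldomegared:stregred} — both work).

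The problem is (i)\,$\Rightarrow$\,(iv), exactly where you flag "the main obstacle". Extracting an infinite tail of pure \unfCRS-steps below a fixed abstraction prefix does not by itself yield a contradiction: a \lambdaletrec-term that unfolds to a λ-term can perfectly well admit infinite \m{\red_\unfold}-rewrite sequences (e.g.\ \letin{f=\app{f}{f}}{f}), and \unfCRS-productivity only forbids infinite sequences of \emph{outermost} \m{\red_\rec}-, \m{\red_\nil}-, \m{\red_\merg}-, \m{\red_\reduce}-steps, whereas the steps in your tail are neither required to be outermost nor restricted to those rules. So "from here one derives a contradiction with \unfCRS-productivity" is the very statement that needs proving, not a consequence of the tail's existence; the tools you list (eager \unfrule{\reduce}, determinism, confluence) are not assembled into an argument that the tail must stagnate rather than do productive unfolding work at ever greater depth. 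The missing idea is the bridge to the outermost-fair characterisation: the paper transplants the non-productive strategy run into \stParseUnfCRS as in (iii)\,$\Rightarrow$\,(ii), where the interleaving construction makes the resulting \m{\red_\sparseunf}-sequence outermost-fair by construction, notes that the infinitely repeated activity at an outermost \letrec-occurrence cannot be blamed on the \m{\red_\sparse}-steps, and drops them to obtain an outermost-fair \m{\red_\unfold}-sequence that fails to converge, contradicting \cref{lem:unfolding}~\cref{lem:unfolding:iii}. Either adopt that route, or supply a separate lemma showing that a deterministic, eagerly-\m{\red_\reduce}-ing strategy whose run eventually performs no decomposition steps must in fact produce a stagnating (root-active, outermost) unfolding sequence.
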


\begin{proof}
	Let \m{\L ∈ \Ter{\lambdaletreccal}}. We show the lemma by establishing the implications in the following order:
	``\cref{lem:unfolding:versus:strats:iv} ⇒ \cref{lem:unfolding:versus:strats:iii} ⇒ \cref{lem:unfolding:versus:strats:ii} ⇒ \cref{lem:unfolding:versus:strats:i} ⇒ \cref{lem:unfolding:versus:strats:iv}''.
	\par The implication ``\cref{lem:unfolding:versus:strats:iv} ⇒ \cref{lem:unfolding:versus:strats:iii}'' is clear: \cref{lem:unfolding:versus:strats:iv} implies that \L is productive for e.g.\ the lazy-unfolding, eager \extscope-delimiting strategy for \stRegletrecARS.
	\par For showing the implication ``\cref{lem:unfolding:versus:strats:iii} ⇒ \cref{lem:unfolding:versus:strats:ii}'', let \astrat be a \extscope-delimiting strategy for \stRegARS such that \L is \astrat-productive. Then the strategy \astrat defines a \m{\red_\sparseunf}-rewrite-sequence \arewseq on \parse{0}{\prefixed{}{\L}} by using \astrat to define next steps on subexpressions that are of the form \m{\parse{n}{\dots, \prefixedCRS{n}{x_1,\dots,x_n}{P}}} in already obtained reducts: if on a term \prefixed{x_1,\dots,x_n}{P} the strategy \astrat prescribes a \m{\red_\unfold}-step, then this step is adopted in \arewseq; if \astrat prescribes a \m{\red_{λ}}-step, then \arewseq can continue with a \m{{\red}_{\sparse.λ}}-step; if \astrat prescribes a \m{\red_{@_0}}- and a \m{\red_{@_1}}-step, then \arewseq can continue with a \m{{\red}_{\sparse.@}}-step. For the construction of \arewseq, possible steps in subexpressions \m{\parse{n}{\dots, \prefixedCRS{n}{x_1,\dots,x_n}{P}}} at parallel positions have to be interleaved to ensure that the reduction work is done in an outermost-fair way. Productivity of \astrat on \L ensures that always after finitely many steps inside a subexpression \m{\parse{n}{\dots, \prefixedCRS{n}{x_1,\dots,x_n}{P}}} the function symbol \parsei{n} disappears at this position (either entirely, or it is moved deeper over a λ-abstraction or an application). In the terms of the rewrite sequence \arewseq larger and larger λ-term contexts appear at the head. Hence \arewseq is strongly convergent, and it obtains, in the limit, an infinite λ-term; thus it witnesses \m{\arewseq : \parse{0}{\prefixed{}{\L}} \m{\omegared_{\sparseunf}} \M}.
	\par For the implication ``\cref{lem:unfolding:versus:strats:ii} ⇒ \cref{lem:unfolding:versus:strats:i}'', suppose that \arewseq is a rewrite sequence that witnesses \m{\parse{0}{\prefixed{}{\L}} \m{\omegared_{\sparseunf}} \M} for some infinite λ-term \M. Since the \m{\red_\sparse}-steps require already unfolded parts of the term, they have to `shadow' unfolding steps. All \m{\red_\unfold}-steps in \arewseq take place beneath symbols \parsei{n}. So the possibility of \m{\red_\sparse}-steps during \arewseq depends on the unfolding steps during \arewseq, but not vice versa. Hence a rewrite sequence \m{\arewseq'} on \parse{0}{\prefixed{}{\L}} can be constructed that only adopts the \m{\red_\unfold}-steps from \arewseq. Since \arewseq is strongly convergent and converges to \M, \m{\arewseq'} witnesses \m{\parse{0}{\prefixed{}{\L}} \m{\omegared_\unfold} \parse{0}{\prefixed{}{\M}}}. By dropping the `non-participant' prefix context \parse{0}{\prefixed{}{\acxthole}} from all terms in \m{\arewseq'}, and adapting the steps accordingly, a rewrite sequence \m{\arewseq''} is obtained that witnesses \m{\arewseq'' : \L \m{\omegared_\unfold} \M}.
	\par We show the implication ``\cref{lem:unfolding:versus:strats:i} ⇒ \cref{lem:unfolding:versus:strats:iv}'' indirectly. So we assume that there is a \extscope-delimiting strategy \astrat such that \L is not \astrat-productive. As in the proof above of ``\cref{lem:unfolding:versus:strats:iii} ⇒ \cref{lem:unfolding:versus:strats:ii}'', \astrat defines an outermost-fair \m{\red_{\sparseunf}}-rewrite-sequence \arewseq on \parse{0}{\prefixed{}{\L}}. But since \astrat here is a strategy that is not productive for \L, it follows that, due to its construction, \arewseq does not succeed in `pushing' all function symbols \letrec to deeper and deeper depth, and thereby building up an infinite λ-term. Instead, this outermost-fair \m{\red_{\sparseunf}}-rewrite-sequence contains infinitely many steps at the position of an outermost occurrence of \letrec. Since, other than the \m{\red_\unfold}-steps, the \m{\red_\sparse}-steps (which always take place above outermost occurrences of \letrec-symbols) cannot be the reason for this, the same stagnation of an outermost-fair unfolding process takes place if the \m{\red_\sparse}-steps are postponed, that is dropped from \arewseq. In this way, by again dropping the `non-participant' prefix context \parse{0}{\prefixed{}{\acxthole}} from the terms of \arewseq, and adapting the steps accordingly, we obtain an outermost-fair \m{\red_\unfold}-rewrite-sequence starting on \prefixed{}{\L} that does not converge to an infinite λ-term. But then \cref{lem:unfolding} implies that \L does not unfold to an infinite λ-term.
\end{proof}

\section{Proving regularity and strong regularity}\label{sec:proofs}

\begin{para}[overview]
	In this section we introduce proof systems that are sound and complete for the notions of regular, and strongly regular λ-terms. In order to prove soundness and completeness, we establish, as auxiliary results, a correspondence between scope/\extscope-delimiting strategies for \RegARS/\stRegARS and closed derivations in the corresponding proof systems. Then we introduce a proof system that is sound and complete for equality between strongly regular λ-terms. Furthermore, we give two proof systems that are sound and complete for the property of \lambdaletrec-terms to unfold to λ-terms. And finally, we show the following part of our characterisation result: λ-terms that are unfoldings of \lambdaletrec-terms are strongly regular.
\end{para}

We start with a more formal definition of λ-terms and \lambdaletrec-terms than \cref{def:CRSterms}, by means of derivability in a proof system that formalises term decomposition.

\begin{figure}
	\proofsystem{
		\begin{bprooftree}
			\emptyAxiom
			\infLabel{\0}
			\unaryInf{\prefixed{\vec{x}y}{y}}
		\end{bprooftree}
		\hsep
		\begin{bprooftree}
			\axiom{\prefixed{\vec{x}y}{\M_0}}
			\infLabel{λ}
			\unaryInf{\prefixed{\vec{x}}{\abs{y}{\M_0}}}
		\end{bprooftree}
		\hsep
		\begin{bprooftree}
			\axiom{\prefixed{\vec{x}}{\M_0}}
			\axiom{\prefixed{\vec{x}}{\M_1}}
			\infLabel{@}
			\binaryInf{\prefixed{\vec{x}}{\app{\M_0}{\M_1}}}
		\end{bprooftree}
		\\
		\begin{bprooftree}
			\axiom{\prefixed{x_1 \dots x_{n-1}}{\M}}
			\infLabel{\S~~\sideCondition{if the binding \m{λx_n} is vacuous}}
			\unaryInf{\prefixed{x_1 \dots x_n}{\M}}
		\end{bprooftree}
	}
	\caption{Proof system \iTer{\lambdacal} for defining the set of λ-terms.}
	\label{fig:iTer-lambda}
\end{figure}

\begin{definition}[λ-terms]\label{def:iTer-lambda}
	We define the set of prefixed λ-terms as those terms in \Ter{\sigCRSPrefixed} for which there exists a possibly infinite, completed (see \cref{def:iCRSPretermAlphaKahrs}) derivation in the proof system \iTer{\lambdacal} with axioms and rules as shown in \cref{fig:iTer-lambda}:
	\[\iTer\lambdaprefixcal := \setcompr{\M ∈ \Ter\sigCRSPrefixed}{\infderivablein{\iTer{\lambdacal}}\M}\]
	The set of plain λ-terms are those terms that comply with the previous definition when equipped with an empty prefix:
	\[\iTer\lambdacal := \setcompr{\M ∈ \Ter\sigCRS}{\prefixed{}{\M} ∈ \iTer\lambdaprefixcal}\]
\end{definition}

\begin{figure}
	\proofsystem{
		\begin{bprooftree}
			\axiom{\prefixed{\vec{x} f_1 \dots f_n}{\M_0}}
			\axiom{\dots}
			\axiom{\prefixed{\vec{x} f_1 \dots f_n}{\M_n}}
			\infLabel{\letrec}
			\ternaryInf{\prefixed{\vec{x}}{\letin{f_1=\M_1,\dots,f_n=\M_n}{\M_0}}}
		\end{bprooftree}
	}
	\caption{Proof system \Ter{\lambdaletreccal} for defining the set of \lambdaletrec-terms defined as an extension of \iTer{\lambdacal} in \cref{fig:iTer-lambda} by an additional rule \ruleref{\letrec}}
	\label{fig:Terlambdaletrec}
\end{figure}

\begin{definition}[\lambdaletrec-terms]\label{def:Ter-lambdaletrec}
	The set of prefixed \lambdaletrec-terms comprises those terms out of \Ter{\sigCRSLetrecPrefixed} for which there exists a finite derivation in the proof system \Ter{\lambdaletreccal} (\cref{fig:Terlambdaletrec}): \[\Ter\lambdaletrecprefixcal := \setcompr{\M ∈ \Ter\sigCRSLetrecPrefixed}{\derivablein{\Ter{\lambdaletreccal}}{\M}}\]
	The set of plain \lambdaletrec-terms are those terms that comply with the previous definition when equipped with an empty prefix: \[\Ter\lambdaletreccal := \setcompr{\M ∈ \Ter\sigCRSletrec}{\prefixed{}\M ∈ \Ter{\lambdaletrecprefixcal}}\]
\end{definition}

Building on rules already used in the proof systems for term formation in
\lambdacal and \lambdaprefixcal from the definition above,
we now introduce proof systems for regularity and strong regularity of λ-terms in \lambdacal.

\begin{definition}[proof systems \Reg, and \stReg, \stRegzero]\label{def:Reg:stReg:stRegzero}
	The natural-deduction style proof system \stReg for recognising strongly regular λ-terms contains the axioms and rules as shown in \cref{fig:stReg:stRegzero}. In particular, the rule \FIX is a natural-deduction style derivation rule in which marked assumptions from the top of the proof tree can be discharged. Instances of this rule carry the side-condition that the depth \depth{\Deriv_0} of the immediate subderivation \m{\Deriv_0} of its premise is greater or equal to 1 (hence this subderivation contains at least one rule instance, and, importantly, for a topmost occurrence of \FIX, \m{\Deriv_0} must have a bottommost instance of one of the rules \ruleref{λ}, \ruleref{@}, or \ruleref{\S}).
	\par The variant \stRegzero of \Reg contains the same axioms and rules as \stReg, but in it instances of \FIX are subject to the additional side-condition: for all \prefixed{\vec{y}}{\N} on threads in \m{\Deriv_0} from open marked assumptions \m{(\prefixed{\vec{x}}{\M})^u} downwards it holds that \m{\length{\vec{y}} ≥ \length{\vec{x}}}.
	\par The natural-deduction style proof system \Reg for recognising regular λ-terms differs from \stReg by the absence of the rule \ruleref{\S}, and the presence instead of the rule \ruleref{\del} in \cref{fig:Reg}, and by the restriction of the axiom scheme (\0) to the more restricted version displayed in \cref{fig:Reg}.
	\par Provability of a term in \lambdaprefixcal in one of these proof systems is defined as the existence of a \emph{closed} derivation: for \m{\boldsymbol{R} ∈ \set{\Reg, \stReg, \stRegzero}} we denote by \derivablein{\boldsymbol{R}}{\prefixed{\vec{x}}{\M}} the existence of a proof tree \Deriv with conclusion \M and with rule instances of \boldsymbol{R} such that all marked assumptions at the top of the \Deriv are discharged at some instance of the rule \FIX.
\end{definition}

\begin{figure}
	\proofsystem{
		\begin{bprooftree}
			\axiom{\prefixed{\vec{x}y}{\M_0}}
			\infLabel{λ}
			\unaryInf{\prefixed{\vec{x}}{\abs{y}{\M_0}}}
		\end{bprooftree}
		\hsep
		\begin{bprooftree}
			\axiom{\prefixed{\vec{x}}{\M_0}}
			\axiom{\prefixed{\vec{x}}{\M_1}}
			\infLabel{@}
			\binaryInf{\prefixed{\vec{x}}{\app{\M_0}{\M_1}}}
		\end{bprooftree}
		\\
		\begin{bprooftree}
			\emptyAxiom
			\infLabel{\0}
			\unaryInf{\prefixed{\vec{x}y}{y}}
		\end{bprooftree}
		\hsep
		\begin{bprooftree}
			\axiom{\prefixed{x_1 \dots x_{n-1}}{\M}}
			\infLabel{\S~~\mlSideCondition{if the binding \\ \m{λx_n} is vacuous}}
			\unaryInf{\prefixed{x_1 \dots x_n}{\M}}
		\end{bprooftree}
		\\
		\begin{bprooftree}
			\axiom{[\prefixed{\vec{x}}{\M}]^u}
			\noLine
			\unaryInf{\Deriv_0}
			\noLine
			\unaryInf{\prefixed{\vec{x}}{\M}}
			\infLabel{\FIX, u~~\sideCondition{if \m{\depth{\Deriv_0} ≥ 1}}}
			\unaryInf{\prefixed{\vec{x}}{\M}}
		\end{bprooftree}
	}
	\caption{The natural-deduction style proof system \stReg for strongly regular λ-terms is an extension of \iTer{\lambdacal} by one additional rule \FIX. In the variant system \stRegzero, instances of \FIX are subject to the following side-condition: for all \prefixed{\vec{y}}{\N} on threads in \m{\Deriv_0} from open marked assumptions \m{(\prefixed{\vec{x}}{\M})^u} downwards it holds that \m{\length{\vec{y}} ≥ \length{\vec{x}}}.}
	\label{fig:stReg:stRegzero}
\end{figure}

\begin{figure}
	\proofsystem{
		\begin{bprooftree}
			\axiom{\prefixed{\vec{x}y}{\M_0}}
			\infLabel{λ}
			\unaryInf{\prefixed{\vec{x}}{\abs{y}{\M_0}}}
		\end{bprooftree}
		\hsep
		\begin{bprooftree}
			\axiom{\prefixed{\vec{x}}{\M_0}}
			\axiom{\prefixed{\vec{x}}{\M_1}}
			\infLabel{@}
			\binaryInf{\prefixed{\vec{x}}{\app{\M_0}{\M_1}}}
		\end{bprooftree}
		\\
		\begin{bprooftree}
			\emptyAxiom
			\infLabel{\0}
			\unaryInf{\prefixed{y}{y}}
		\end{bprooftree}
		\hsep
		\begin{bprooftree}
			\axiom{\prefixed{x_1 \dots x_{i-1}x_{i+1}\dots x_n}{\M}}
			\infLabel{\del~~\mlSideCondition{if the binding \\ \m{λx_i} is vacuous}}
			\unaryInf{\prefixed{x_1 \dots x_n}{\M}}
		\end{bprooftree}
		\\
		\begin{bprooftree}
			\axiom{[\prefixed{\vec{x}}{\M}]^u}
			\noLine
			\unaryInf{\Deriv_0}
			\noLine
			\unaryInf{\prefixed{\vec{x}}{\M}}
			\infLabel{\FIX, u~~\sideCondition{if \m{\depth{\Deriv_0} ≥ 1}}}
			\unaryInf{\prefixed{\vec{x}}{\M}}
		\end{bprooftree}
	}
	\caption[natural-deduction style proof system \Reg for regular λ-terms]{The natural-deduction style proof system \Reg for regular λ-terms arises from the proof system \stReg by replacing the rule \ruleref{\S} with the rule \ruleref{\del} for the introduction of vacuous bindings in the λ-abstraction prefixes, and by replacing the axiom scheme (\0) of \stReg by the more restricted version here.}
	\label{fig:Reg}
\end{figure}

\begin{remark}[\stReg versus \stRegzero]
	While it will be established in \cref{prop:stReg:stRegzero} that provability in \stReg and \stRegzero coincides, the difference between these systems will come to the fore in annotated versions that are purpose-built for the extraction of \lambdaletrec-terms that express λ-terms. This will be explained and illustrated later in \cref{example:annstRegzero}, using annotated versions of the two derivations in \cref{example:stReg} above.
\end{remark}

\begin{remark}
	The proof system \stReg is related to a proof system for nameless, finite terms in the λ-calculus that is used in \cite[sec.~2]{oost:looi:zwit:2004} as part of a translation of λ-terms into `Lambdascope' interaction nets, which are used for optimal evaluation (in the sense of L\'{e}vy) of λ-terms.
\end{remark}

The proposition below explains that the side-condition on instances of \FIX from the proof systems above to have immediate subderivations \m{\Deriv_0} with \m{\depth{\Deriv_0} ≥ 1} entails a `guardedness' property for threads from such instances upwards to discharged instances.

\begin{proposition}[cycles are guarded]\label{prop:Reg:stReg}
	Let \Deriv be a derivation in \Reg, in \stReg or in \stRegzero possibly with open marked assumptions. Then for all instances \ainst of the rule \FIX in \Deriv it holds: every thread from \ainst upwards to a marked assumption that is discharged at \ainst passes at least one instance of a rule \ruleref{λ} or \ruleref{@}.
\end{proposition}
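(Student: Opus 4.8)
The plan is to argue by contradiction, reading sequents (i.e.\ prefixed terms) from the conclusion of a rule instance upwards to its premises, and to exploit that the only rules that affect the \emph{length} of the abstraction prefix are $\ruleref{λ}$ and the scope-delimiting rules ($\ruleref{\S}$ in $\stReg$ and $\stRegzero$, $\ruleref{\del}$ in $\Reg$). First I would tabulate the elementary effect of each rule along a thread, read upwards: $\ruleref{λ}$ lengthens the prefix by one and replaces the body by the body of the abstraction; $\ruleref{@}$ keeps the prefix and replaces the body by one of the two application components; $\ruleref{\S}$ and $\ruleref{\del}$ shorten the prefix by one and leave the body unchanged; $\FIX$ leaves the entire sequent unchanged; and the axiom $\ruleref{\0}$ can only occur at a leaf. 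In particular, going upwards along any thread the prefix length strictly decreases at each $\ruleref{\S}$- or $\ruleref{\del}$-instance, strictly increases at each $\ruleref{λ}$-instance, and is unchanged at $\ruleref{@}$- and $\FIX$-instances.

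Now fix an instance $\ainst$ of $\FIX$ in $\Deriv$ and a thread $\apath$ running from the conclusion of $\ainst$ upwards to a marked assumption $(\prefixed{\vec{x}}{\M})^u$ discharged at $\ainst$. Since the $\FIX$-rule has identical premise and conclusion, the bottom sequent of $\apath$ (the conclusion of $\ainst$) and the top sequent of $\apath$ (that marked assumption) are literally the same, namely $\prefixed{\vec{x}}{\M}$; in particular their prefixes have the same length $\length{\vec{x}}$. Suppose, towards a contradiction, that $\apath$ passes no instance of $\ruleref{λ}$ and none of $\ruleref{@}$. As $\ruleref{\0}$ can only be a leaf while the top of $\apath$ is an assumption rather than an axiom, the only rule instances occurring on $\apath$ are then instances of $\FIX$ and of the scope-delimiting rule. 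But then, going upwards, the prefix length never increases and strictly decreases at every scope-delimiting instance; as the top and bottom prefixes have equal length, there can be no scope-delimiting instance on $\apath$ at all. Hence every rule instance on $\apath$ is an instance of $\FIX$.

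To finish, I would invoke the side condition $\depth{\Deriv_0}\ge 1$ attached to $\FIX$-instances. Let $\ainst'$ be the topmost rule instance on $\apath$ (possibly $\ainst'=\ainst$). By the previous step $\ainst'$ is a $\FIX$-instance, and the unique premise of $\ainst'$ lying on $\apath$ is already the marked assumption $(\prefixed{\vec{x}}{\M})^u$ at the top of $\apath$, which is a leaf of $\Deriv$. Since the premise of a $\FIX$-instance is the conclusion of its immediate subderivation $\Deriv_0$, this forces $\Deriv_0$ to consist of that single assumption, so $\depth{\Deriv_0}=0$, contradicting the side condition. This contradiction shows that $\apath$ must pass an instance of $\ruleref{λ}$ or of $\ruleref{@}$, as claimed. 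The argument is uniform over the three systems: it uses only the tabulated rule effects, and the extra side condition on $\FIX$ in $\stRegzero$ is never used (it merely restricts derivations further, so the statement for $\stRegzero$ also follows immediately from that for $\stReg$).

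The only delicate point, and hence the part I would write out most carefully, is the bookkeeping at the two ends of $\apath$: that $\ainst$ itself is counted among the rule instances on $\apath$, and that the topmost rule instance $\ainst'$ really has the discharged marked assumption \emph{directly} as its on-thread premise, so that the ``$\depth{\Deriv_0}=0$'' contradiction is legitimate. Everything else is a routine combination of the prefix-length count with the case analysis of the rules.
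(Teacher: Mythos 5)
Your proof is correct and rests on exactly the same two ingredients as the paper's: the prefix-length accounting (only $\ruleref{λ}$ lengthens and only $\ruleref{\S}$/$\ruleref{\del}$ shortens the prefix along an upward thread, while both ends of the thread carry the same prefix $\vec{x}$) and the side condition $\depth{\Deriv_0}\ge 1$ on $\FIX$-instances. The only difference is organizational — you argue contrapositively (no $λ$/$@$ forces an all-$\FIX$ thread, contradicting the depth condition at the topmost instance), whereas the paper argues directly from the topmost $\FIX$-instance — and your bookkeeping at the ends of the thread is handled correctly.
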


\begin{proof}
	Since for \stReg and \stRegzero the argument is analogous, we only consider derivations in \Reg. So, let \Deriv be a derivation in \Reg. Furthermore, let \ainst be an instance of the rule \FIX in \Deriv with conclusion \prefixed{\vec{x}}{\M}, and let \apath be a thread from the conclusion of \ainst upwards to a marked assumption \m{(\prefixed{\vec{x}}{\M})^{\amarker}}. Let \binst be the topmost instance of \FIX in \Deriv that is passed on \apath. By its side-condition, the immediate subderivation of \binst has depth greater or equal to 1, and hence there is at least one instance of a rule \ruleref{λ}, \ruleref{@}, or \ruleref{\del} passed on \apath above \binst. If there is an instance of \ruleref{λ} or \ruleref{@} on this part of \apath, we are done. Otherwise only rules \ruleref{\del} are passed on \apath above \binst. But since the rule \ruleref{\del} decreases the length of the abstraction prefix in the term occurrences in a pass from the conclusion to the premise, and since the length of the abstraction prefix at the start of \apath is the same as at the end of \apath, namely \length{\vec{x}}, it follows that at least one occurrence of a rule that increases the length of the abstraction prefix must also have been passed on \apath, on the segment from \ainst to \binst. Since the only rule of \Reg that increases the length of an abstraction prefix in a pass from conclusion to a premise is the rule \ruleref{λ}, we have also in this case found a desired rule instance on \apath.
\end{proof}

\begin{example}[difference between \stReg and \stRegzero]\label{example:stReg}
	Let \M be the infinite unfolding of \m{\letin{f = \abs{xy}{\app{\app{f}{y}}{x}}}{f}} for which we use as a finite representation the equation \m{\M = \abs{xy}{\app{\app{\M}{y}}{x}}}. This term admits the following two derivations in \stReg, with the latter having some redundancy:
	\begin{prooftree}
		\axiom{(\prefixed{}{\M})^{\amarker}}
		\infLabel{\S}
		\unaryInf{\prefixed{x}{\M}}
		\infLabel{\S}
		\unaryInf{\prefixed{xy}{\M}}
		\emptyAxiom
		\infLabel{\0}
		\unaryInf{\prefixed{xy}{y}}
		\infLabel{@}
		\binaryInf{\prefixed{xy}{\app{\M}{y}}}
		\emptyAxiom
		\infLabel{\0}
		\unaryInf{\prefixed{x}{x}}
		\infLabel{\S}
		\unaryInf{\prefixed{xy}{x}}
		\infLabel{@}
		\binaryInf{\prefixed{xy}{\app{\app{\M}{y}}{x}}}
		\infLabel{λ}
		\unaryInf{\prefixed{x}{\abs{y}{\app{\app{\M}{y}}{x}}}}
		\infLabel{λ}
		\unaryInf{\prefixed{}{\abs{xy}{\app{\app{\M}{y}}{x}}}}
		\infLabel{\FIX,\amarker}
		\unaryInf{\prefixed{}{\M}}
	\end{prooftree}
	\begin{prooftree}
		\axiom{(\prefixed{x}{\abs{y}{\app{\app{\M}{y}}{x}}})^{\amarker}}
		\infLabel{λ}
		\unaryInf{\prefixed{}{\M}}
		\infLabel{\S}
		\unaryInf{\prefixed{x}{\M}}
		\infLabel{\S}
		\unaryInf{\prefixed{xy}{\M}}
		\emptyAxiom
		\infLabel{\0}
		\unaryInf{\prefixed{xy}{y}}
		\infLabel{@}
		\binaryInf{\prefixed{xy}{\app{\M}{y}}}
		\emptyAxiom
		\infLabel{\0}
		\unaryInf{\prefixed{x}{x}}
		\infLabel{\S}
		\unaryInf{\prefixed{xy}{x}}
		\infLabel{@}
		\binaryInf{\prefixed{xy}{\app{\app{\M}{y}}{x}}}
		\infLabel{λ}
		\unaryInf{\prefixed{x}{\abs{y}{\app{\app{\M}{y}}{x}}}}
		\infLabel{\FIX,\amarker}
		\unaryInf{\prefixed{x}{\abs{y}{\app{\app{\M}{y}}{x}}}}
		\infLabel{λ}
		\unaryInf{\prefixed{}{\M}}
	\end{prooftree}
	Note that the first derivation is also a derivation in \stRegzero, but that this is not the case for the secord derivation, as for the occurrence of \FIX the side-condition in the system \stRegzero is violated: on the path from the marked assumption \m{(\prefixed{x}{\abs{y}{\app{\app{\M}{y}}{x}}})^{\amarker}} down to the instance of \FIX there is the occurrence \prefixed{}{\M} of a term with shorter prefix than the term in the assumption and in the conclusion.
	\par See \cref{ex:simpleletrec-infinite-path} for a rewriting sequence in \stRegARS corresponding to the leftmost path in both derivations, and also \cref{fig:distance} for the corresponding transition graph.
\end{example}

\begin{example}[difference between \RegCRS and \stRegCRS]
	The infinite λ-term from \cref{ex:entangled} with the \RegCRS/\stRegCRS-transition-graphs shown in \cref{fig:entangled-fix-ltg} is derivable in \Reg by the following closed derivation using the notation from \cref{ex:entangled-infinite-path}:
	\begin{prooftree}
		\axiom{(\overbrace{\prefixed{b}{{rec_{\M}}(b)}}^{\prefixed{a}{{rec_{\M}}(a)}})^{\amarker}}
		\infLabel{\del}
		\unaryInf{\prefixed{ab}{{rec_{\M}}(b)}}
		\emptyAxiom
		\infLabel{\0}
		\unaryInf{\prefixed{a}{a}}
		\infLabel{\del}
		\unaryInf{\prefixed{ab}{a}}
		\infLabel{@}
		\binaryInf{\prefixed{ab}{\app{{rec_{\M}}(b)}{a}}}
		\infLabel{λ}
		\unaryInf{\prefixed{a}{\abs{b}{\app{{rec_{\M}}(b)}{a}}}}
		\infLabel{\FIX, \amarker}
		\unaryInf{\prefixed{a}{{rec_{\M}}(a)}}
		\infLabel{λ}
		\unaryInf{\prefixed{}{\abs{a}{{rec_{\M}}(a)}}}
	\end{prooftree}
	When trying to construct a derivation for this term in \stReg from the bottom upwards, the rules of \stReg apart from \FIX offer only deterministic choices, resulting in an infinite proof tree of the form:
	\begin{prooftree}
		\axiom{\vdots}
		\noLine
		\unaryInf{\prefixed{abcde}{{rec_{\M}}(e)}}
		\infLabel{λ}
		\unaryInf{\prefixed{abcd}{\abs{e}{{rec_{\M}}(e)}}}
		\emptyAxiom
		\infLabel{\0}
		\unaryInf{\prefixed{abc}{c}}
		\infLabel{\S}
		\unaryInf{\prefixed{abcd}{c}}
		\infLabel{@}
		\binaryInf{\prefixed{abcd}{\app{{rec_{\M}}(d)}{c}}}
		\infLabel{λ}
		\unaryInf{\prefixed{abc}{\abs{d}{\app{{rec_{\M}}(d)}{c}}}}
		\emptyAxiom
		\infLabel{\0}
		\unaryInf{\prefixed{ab}{b}}
		\infLabel{\S}
		\unaryInf{\prefixed{abc}{b}}
		\infLabel{@}
		\binaryInf{\prefixed{abc}{\app{{rec_{\M}}(c)}{b}}}
		\infLabel{λ}
		\unaryInf{\prefixed{ab}{ \abs{c}{\app{{rec_{\M}}(c)}{b}}}}
		\emptyAxiom
		\infLabel{\0}
		\unaryInf{\prefixed{a}{a}}
		\infLabel{\S}
		\unaryInf{\prefixed{ab}{a}}
		\infLabel{@}
		\binaryInf{\prefixed{ab}{\app{{rec_{\M}}(b)}{a}}}
		\infLabel{λ}
		\unaryInf{\prefixed{a}{\abs{b}{\app{{rec_{\M}}(b)}{a}}}}
		\infLabel{λ}
		\unaryInf{\prefixed{}{\abs{a}{{rec_{\M}}(a)}}}
	\end{prooftree}
	But then, since this proof tree does not contain repetitions, the use of the rule \FIX in order to discharge assumptions is impossible. Consequently, the term is not derivable in \stReg.
	\par For the \RegARS and \stRegARS rewriting sequences corresponding to the leftmost paths through the two proofs above, see \cref{ex:entangled-infinite-path}. The corresponding transition graphs are displayed in \cref{fig:entangled-fix-ltg}.
\end{example}




\begin{proposition}[correspondence between proof systems and decomposition CRSs]\label{prop:derivationpaths:2:rewritesequences:Reg:stReg}
	Let \Deriv be a derivation in \Reg/\stReg with conclusion \prefixed{\vec{x}}{\M}.
	Then every path in \Deriv from the conclusion upwards corresponds to a \m{\red_\reg}/\m{\red_\streg}-rewrite-sequence from \prefixed{\vec{x}}{\M}: passes over instances of \FIX correspond to empty rewrite steps; passes over instances of \ruleref{@} to the left and to the right correspond to \m{\red_{@_0}}- and \m{\red_{@_1}}-steps, respectively; passes over instances of \ruleref{λ} correspond to \m{\red_λ}-steps; passes over instances of \ruleref{\del}/\ruleref{\S} correspond to \m{\red_\del}/\m{\red_\S}-steps.
	\par The same holds for (finite or infinite) cyclic paths in \Deriv that return, possibly repeatedly, from a marked assumption at the top down to the conclusion of the instance of \FIX at which the respective assumption is discharged.
\end{proposition}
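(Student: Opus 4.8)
The plan is to establish the correspondence by a straightforward structural analysis of how each rule of the proof systems \Reg/\stReg relates to the rules of the decomposition CRSs \RegCRS/\stRegCRS, as laid out in \cref{def:RegCRS:stRegCRS} and \cref{def:Reg:stReg:stRegzero}. The key observation is that the axioms and rules of \Reg/\stReg are, by construction, in bijective correspondence with the decomposition rules: the rule \ruleref{λ} in the proof systems has as its premise exactly the result of a \m{\red_λ}-step applied to its conclusion (reading upwards); the rule \ruleref{@} has as its two premises exactly the two results \m{\red_{@_0}} and \m{\red_{@_1}} applied to its conclusion; \ruleref{\S}/\ruleref{\del} correspond in the same way to \m{\red_\S}/\m{\red_\del}-steps; the axioms \ruleref{\0} are the normal forms (by \cref{prop:rewprops:RegCRS:stRegCRS}~\cref{prop:rewprops:RegCRS:stRegCRS:iv-1}); and the rule \FIX has a premise whose conclusion coincides with its own conclusion, hence corresponds to an empty rewrite step.

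First I would fix a derivation \Deriv in \Reg (the argument for \stReg is identical after swapping \ruleref{\del}/\m{\red_\del} for \ruleref{\S}/\m{\red_\S}). Given a path \apath in \Deriv starting at the conclusion \prefixed{\vec{x}}{\M} and moving upwards, I would define the associated rewrite sequence by induction on the length of the initial segment traversed. At each step, \apath passes from the conclusion of a rule instance to one of its premises; I read off the corresponding CRS step (or the empty step, in the case of \FIX) according to the dictionary above, and I invoke the fact that the premise term is precisely the target of that CRS step applied to the conclusion term. This is where the bulk of the work sits, but it is entirely routine case-checking: for each of the rules \ruleref{λ}, \ruleref{@} (two cases: left and right premise), \ruleref{\del}/\ruleref{\S}, \FIX, one matches the shape of premise against conclusion and confirms it agrees with the shape of the corresponding rewrite rule in \cref{def:RegCRS:stRegCRS}. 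Since the rules of \Reg/\stReg are stated on prefixed terms in exactly the format \prefixed{\vec x}{\cdot} used by the CRSs, the side-conditions (vacuousness of \m{λx_i} for \ruleref{\del}, of \m{λx_n} for \ruleref{\S}) coincide verbatim with those of the CRS rules, so no reconciliation is needed there.

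For the second part of the statement — the claim about cyclic paths that descend from a discharged marked assumption back to the instance of \FIX that discharges it — I would argue that such a path is just a path in the sense already treated, run in the direction in which it traverses the rules from conclusion to premise; the only new feature is that the start and end term coincide (the marked assumption and the conclusion of \FIX carry the same prefixed term by the form of the rule \FIX), so the associated rewrite sequence is a cycle. For infinite cyclic paths — those that loop through such a cycle repeatedly — the associated infinite rewrite sequence is obtained by concatenating the cycles; \cref{prop:Reg:stReg} (cycles are guarded) ensures that each traversal of the cycle contains at least one \ruleref{λ}- or \ruleref{@}-instance, hence at least one genuine (non-empty) \m{\red_{@_i}}- or \m{\red_λ}-step, so that the resulting rewrite sequence is genuinely infinite and not eventually stationary. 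The main (very mild) obstacle is purely bookkeeping: being careful that passes over \FIX are recorded as empty steps and do not disrupt the inductive construction, and that the left/right branch of \ruleref{@} is consistently matched to \m{\red_{@_0}}/\m{\red_{@_1}}; beyond that there is no real difficulty, since the proof systems were designed precisely to mirror the decomposition CRSs.
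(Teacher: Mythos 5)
Your proposal is correct and follows essentially the same route as the paper, which proves the proposition by exactly this rule-by-rule dictionary (conclusion-to-premise passes over \ruleref{λ}, \ruleref{@}, \ruleref{\del}, \ruleref{\S} matching \m{\red_λ}-, \m{\red_{@_i}}-, \m{\red_\del}-, \m{\red_\S}-steps, with \FIX as empty steps). Your additional remarks on the cyclic case and the use of guardedness to rule out eventually-stationary infinite sequences are sound elaborations of what the paper leaves implicit.
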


\begin{proof}
	The proposition is an easy consequence of the following facts: passes from a term in the conclusion of an instance \ainst of one of the rules \ruleref{λ}, \ruleref{\del}, \ruleref{\S} to the term in the premise of \ainst correspond to \m{\red_λ}-, \m{\red_\del}-, and \m{\red_\S}-steps, respectively; passes from a term in the conclusion of an instance of \ruleref{@} to the left and the right premise correspond to \m{\red_{@_0}}-steps and \m{\red_{@_1}}-steps, respectively.
\end{proof}

\begin{para}[proofs and scope/\extscope-delimiting strategies]
	Observe that, for the derivation \Deriv in \cref{example:stReg}, the \m{\red_\streg}-rewrite-sequences that correspond to paths in \Deriv as described in \cref{prop:derivationpaths:2:rewritesequences:Reg:stReg} are actually rewrite sequences with respect to the \emph{eager} \extscope-delimiting strategy \eagStratPlus for \stRegARS. This illustrates the general situation, formulated by the lemma below: paths in a derivation \Deriv in \Reg/\stReg from the conclusion upwards correspond to rewrite sequences according to some -- usually history-aware -- scope/\extscope-delimiting strategy \astrat, which can be extracted from \Deriv.
\end{para}

\begin{lemma}[from \Reg/\stReg-derivations to scope/\extscope-delimiting strategies]\label{lem:derivations:Reg:stReg:2:strategies}
	Let \m{\M ∈ \iTer{\lambdacal}}, and let \Deriv be a closed derivation in \Reg (in \stReg) with conclusion \prefixed{}{\M}. Then there exists an, in general history-aware, scope-delimiting strategy \m{\astrat_{\Deriv}} for \RegARS (\extscope-delimiting strategy \m{\astrat_{\Deriv}} for \stRegARS) with the following properties:
	\begin{enumerate}[(i)]
		\item\label{lem:derivations:Reg:stReg:2:strategies:i} Every (possibly cyclic) path in \Deriv from the conclusion upwards corresponds to a rewrite sequence with respect to \m{\astrat_{\Deriv}} starting on \prefixed{}{\M} in the sense of \cref{prop:derivationpaths:2:rewritesequences:Reg:stReg} where passes over instances of the rules \ruleref{@} to the left and to the right correspond to \m{\red_{{\astrat_{\Deriv}}.@_0}}-steps and \m{\red_{{\astrat_{\Deriv}}.@_1}}-steps, respectively, and passes over instances of \ruleref{λ} and of \ruleref{\del} (of \ruleref{\S}) correspond to \m{\red_{{\astrat_\Deriv}.λ}}-, and \m{\red_{{\astrat_\Deriv}.\del}}-steps (\m{\red_{{\astrat_\Deriv}.\S}}-steps).
		\item\label{lem:derivations:Reg:stReg:2:strategies:ii} Every rewrite sequence that starts on \prefixed{}{\M} and proceeds according to \m{\astrat_\Deriv} corresponds to a (possibly cyclic) path in \Deriv starting at the conclusion in upwards direction: thereby a \m{\red_{{\astrat_\Deriv}.@_0}}-step and a \m{\red_{{\astrat_\Deriv}.@_1}}-step corresponds to a pass over (possibly successive \FIX-instances, or from a marked assumption to the instance of \FIX that binds it, followed by) an instance of \ruleref{@} in direction left and right, respectively; a \m{\red_{{\astrat_\Deriv}.λ}}-step or \m{\red_{{\astrat_\Deriv}.\del}}-step (\m{\red_{{\astrat_\Deriv}.\S}}-step) corresponds to a pass over (possibly \FIX-instances and assumption bindings to \FIX-instances) an instance of \ruleref{λ} or \ruleref{\del} (of \ruleref{\S}), respectively.
		\item\label{lem:derivations:Reg:stReg:2:strategies:iii} \m{\ST{\astrat_\Deriv}{\M} = \setcompr{ \prefixed{\vec{y}}{\N}}{\text{the term \prefixed{\vec{y}}{\N} occurs in \Deriv}}}.
	\end{enumerate}
\end{lemma}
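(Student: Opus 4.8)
I would extract $\astrat_\Deriv$ directly from the tree shape of $\Deriv$, using \cref{prop:derivationpaths:2:rewritesequences:Reg:stReg} to turn passes over rule instances into decomposition steps, and accommodate the history-dependence by letting $\astrat_\Deriv$ live on a labelled version of $\RegARS$ (of $\stRegARS$). First I would collapse $\Deriv$ into a finite directed graph $G_\Deriv$: contract every pass over an instance of \FIX to an empty step, and identify each discharged marked assumption with the conclusion of the \FIX-instance that binds it. Because $\Deriv$ is \emph{closed}, every leaf is then either a \ruleref{\0}-axiom or such an identified assumption, so the vertices of $G_\Deriv$ are exactly the conclusions of the instances of \ruleref{\0}, \ruleref{λ}, \ruleref{@}, and \ruleref{\del} (or \ruleref{\S}), each carrying its prefixed term $\N_v$, and its edges — labelled by the corresponding rule and, for \ruleref{\del}, additionally recording which vacuous prefix variable is removed — run from a vertex to the vertices reached along each premise upward through chains of \FIX-instances and discharge links, with a unique root vertex $v_0$ carrying $\N_{v_0} = \prefixed{}{\M}$. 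By \cref{prop:derivationpaths:2:rewritesequences:Reg:stReg}, along every edge of $G_\Deriv$ the carried terms are related by the decomposition step matching the edge's label; moreover an \ruleref{λ}-vertex has exactly one outgoing edge (labelled $λ$), an \ruleref{@}-vertex exactly the two edges $@_0$ and $@_1$, a \ruleref{\del}/\ruleref{\S}-vertex exactly one, and a \ruleref{\0}-vertex none, its term then being a $\red_\reg$-/$\red_\streg$-normal-form. This is precisely the pattern of a scope/\extscope-delimiting strategy in the sense of \cref{def:scope:delimiting:strat:Reg:stReg}.

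Second, since the same prefixed term may occur at several vertices of $G_\Deriv$ under different rules — e.g.\ $\prefixed{x}{\abs{y}{\M_0}}$ with $x$ not free in $\abs{y}{\M_0}$ may be a conclusion of either \ruleref{λ} or \ruleref{\del} — the graph $G_\Deriv$ need not embed into $\RegARS$ as a history-free strategy, so I would pass to the full-history rewrite labelling $\pair{\alabelling}{l}$ of $\RegARS$ (of $\stRegARS$): the objects of the labelled ARS are all finite rewrite sequences of the underlying ARS, the steps are their one-step extensions, $\alabelling$ relates a sequence to its endpoint, and $l$ sends each object to the empty sequence at it; that this is a rewrite labelling in the sense of the excerpt is routine. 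Call a sequence $p$ issuing from $l(\prefixed{}{\M})$ \emph{on-derivation} if its list of step-labels traces a path in $G_\Deriv$ from $v_0$; since the single-step relations at issue are deterministic once the removed variable of a $\del$-step is recorded (for $\red_\S$ by determinism from \cref{prop:rewprops:RegCRS:stRegCRS}), the endpoint of an on-derivation $p$ is forced to be $\N_{v_p}$, where $v_p$ is the vertex it reaches. I then define the history-free strategy $\astrat_\Deriv$ on the labelled ARS by: for an on-derivation $p$ ending at $v_p$, allow exactly the extensions corresponding to the outgoing edges of $v_p$; for every other sequence, allow the extensions prescribed by $\eagStrat$/$\eagStratPlus$ (\cref{def:eagerStrats}) at its endpoint. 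As both kinds of choices are of the three admissible forms, and a sequence is $\astrat_\Deriv$-normal exactly when its endpoint is a $\RegARS$-/$\stRegARS$-normal-form (\ruleref{\0}-vertices carry normal forms; $\eagStrat$/$\eagStratPlus$ has the right normal forms), $\astrat_\Deriv$ is a genuine, in general history-aware, scope/\extscope-delimiting strategy.

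It then remains to read off \cref{lem:derivations:Reg:stReg:2:strategies:i}--\cref{lem:derivations:Reg:stReg:2:strategies:iii}. For \cref{lem:derivations:Reg:stReg:2:strategies:i}, a (possibly cyclic) path in $\Deriv$ from the conclusion upwards becomes, after the \FIX-contraction and discharge identifications, a path in $G_\Deriv$ from $v_0$; this lifts along $\astrat_\Deriv$ to a rewrite sequence on $l(\prefixed{}{\M})$ and hence projects to a rewrite sequence of $\astrat_\Deriv$ on $\prefixed{}{\M}$, with the step/pass correspondence exactly that of \cref{prop:derivationpaths:2:rewritesequences:Reg:stReg}, the edges now read as $\red_{{\astrat_\Deriv}.@_0}$-, $\red_{{\astrat_\Deriv}.@_1}$-, $\red_{{\astrat_\Deriv}.λ}$-, and $\red_{{\astrat_\Deriv}.\del}$-/$\red_{{\astrat_\Deriv}.\S}$-steps. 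For \cref{lem:derivations:Reg:stReg:2:strategies:ii}, $l(\prefixed{}{\M})$ is on-derivation and, by construction, $\astrat_\Deriv$-extensions of on-derivation sequences are again on-derivation, so any rewrite sequence of $\astrat_\Deriv$ on $\prefixed{}{\M}$ traces a path in $G_\Deriv$ from $v_0$, which re-expands into a (possibly cyclic) path in $\Deriv$ from the conclusion realising the passes described in the statement (chains of \FIX-instances and assumption bindings, followed by a real rule instance). For \cref{lem:derivations:Reg:stReg:2:strategies:iii}, by \cref{def:ST} the set $\ST{\astrat_\Deriv}{\M}$ is the object set of $\GeneratedSubARSi{\astrat_\Deriv}{\prefixed{}{\M}}$, which by the previous two items is exactly the set of $\alabelling$-images of the $\astrat_\Deriv$-reachable on-derivation sequences, i.e.\ the terms $\N_v$ for $v$ a vertex of $G_\Deriv$ reachable from $v_0$; since $\Deriv$ is rooted at its conclusion all vertices are reachable, and the \FIX-contraction and the discharge identifications drop no term, so this set is exactly $\setcompr{\prefixed{\vec{y}}{\N}}{\text{\prefixed{\vec{y}}{\N} occurs in \Deriv}}$.

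The main obstacle I expect is the second step: realising the history-aware $\astrat_\Deriv$ as a bona fide history-free strategy on a labelled ARS and verifying it meets both the labelling axioms and the scope/\extscope-delimiting condition — in particular, absorbing the non-uniqueness of $\red_\del$-steps by recording the removed prefix variable in $G_\Deriv$, and choosing the off-derivation behaviour (here: eager) so that the global object set and the normal forms remain intact. Everything after that, including the three correspondence claims, is bookkeeping layered on \cref{prop:derivationpaths:2:rewritesequences:Reg:stReg}.
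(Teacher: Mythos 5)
Your proposal is correct and follows essentially the same route as the paper's proof: both lift to a labelled version of \RegARS/\stRegARS, let the strategy take the derivation-prescribed steps on labelled objects that are ``on'' the derivation, and fall back to a fixed history-free scope/\extscope-delimiting strategy everywhere else, after which \cref{lem:derivations:Reg:stReg:2:strategies:i}--\cref{lem:derivations:Reg:stReg:2:strategies:iii} are read off exactly as you do. The only real difference is the choice of rewrite labelling — you use the full-history labelling (objects are finite rewrite sequences), whereas the paper decorates \Deriv with position labels in \m{\set{0,1}^*} and works over the resulting labelled ARS \stRegARSlab — which changes nothing essential.
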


\begin{figure}
	\proofsystem{
		\begin{bprooftree}
			\axiom{\labprefixed{l}{\vec{x}y}{\M_0}}
			\infLabel{λ}
			\unaryInf{\labprefixed{l}{\vec{x}}{\abs{y}{\M_0}}}
		\end{bprooftree}
		\hsep
		\begin{bprooftree}
			\axiom{\labprefixed{{l}\, 0}{\vec{x}}{\M_0}}
			\axiom{\labprefixed{{l}\, 1}{\vec{x}}{\M_1}}
			\infLabel{@}
			\binaryInf{\labprefixed{l}{\vec{x}}{\app{\M_0}{\M_1}}}
		\end{bprooftree}
		\\
		\begin{bprooftree}
			\emptyAxiom
			\infLabel{\0}
			\unaryInf{\labprefixed{l}{\vec{x}y}{y}}
		\end{bprooftree}
		\hsep
		\begin{bprooftree}
			\axiom{\labprefixed{l}{x_1 \dots x_{n-1}}{\M}}
			\infLabel{\S~~\mlSideCondition{if \m{x_n} does not \\ occur in \M}}
			\unaryInf{\labprefixed{l}{x_1 \dots x_n}{\M}}
		\end{bprooftree}
		\\
		\begin{bprooftree}
			\axiom{[\labprefixed{l}{\vec{x}}{\M}]^u}
			\noLine
			\unaryInf{\Deriv_0}
			\noLine
			\unaryInf{\labprefixed{l}{\vec{x}}{\M}}
			\infLabel{\FIX,u~~\sideCondition{if \m{\depth{\Deriv_0} ≥ 1}}}
			\unaryInf{\labprefixed{l}{\vec{x}}{\M}}
		\end{bprooftree}
	}
	\caption{Proof system \stReglab for decorating \stReg-derivations with labels in \m{\set{0,1}^*}.}
	\label{fig:stReglab}
\end{figure}

\begin{proof}\label{prf:lem:derivations:Reg:stReg:2:strategies}
	\newcommand\stepslabon[1]{\m{{\stepslab}_{\text{on-}{#1}}}}
	\newcommand\stepslabnoton[1]{\m{{\stepslab}_{\text{not-on-}{#1}}}}
	\newcommand\stepslab{\widehat{\steps}}
	The proof defines a history-aware strategy \m{\astrat_\Deriv} for \stRegARS as a modification of an arbitrary (history-free) strategy for \stRegARS lifted to a labelled version of \stRegARS. Thereby the modification is performed according to a given derivation \Deriv, and the construction will guarantee that \cref{lem:derivations:Reg:stReg:2:strategies:i}, \cref{lem:derivations:Reg:stReg:2:strategies:ii}, and \cref{lem:derivations:Reg:stReg:2:strategies:iii} hold.
	\par We establish the lemma only for the case of derivations in \stReg, since the case of derivations in \Reg can be treated analogously. So, let \Deriv be a derivation in \stReg with conclusion \prefixed{}{\M}.
	\par In a first step we decorate \Deriv with position labels such that a derivation \Derivlab with conclusion \labprefixed{}{\rootpos}{\M} in the variant proof system \stReglab in \cref{fig:stReglab} is obtained. Note that the decoration process can be carried out in a bottom-up manner, where the label in the conclusion of a rule instance determines the label in the premise(s) if that is not an already labelled term, and where in the case of instances of the rule of \FIX also the labels in marked assumptions are determined.
	\par In a second step we use the decorated version \Derivlab of \Deriv to define a history-aware strategy \m{\astrat_{\Deriv}} according to which the term \prefixed{}{\M} can be reduced as `prescribed' by \Derivlab. Since the derivations can only determine the strategy \m{\astrat_{\Deriv}} on terms occurring in \Deriv, we also have to define \m{\astrat_{\Deriv}} on other terms. This will be done by choosing an arbitrary (but here: history-free) \extscope-delimiting strategy \astrat for \RegARS, and basing the definition of \m{\astrat_{\Deriv}} on it.
	\par
	\newfunction\Terlab{\widehat{Ter}}
	\newcommand\srclab{\widehat{\src{}}}
	\newcommand\tgtlab{\widehat{\tgt{}}}
	We start by defining a labelling of \stRegARS as the ARS for which \m{\astrat_{\Deriv}} will be defined as a history-free strategy, which together with an initial labelling \m{l} then yields a history-aware strategy for \stRegARS. Assuming \m{\stRegARS = \tuple{\iTer{\lambdaprefixcal}, \steps, \src{}, \tgt{}}} as the formal representation of \stRegARS, we define the ARS
	\m{\stRegARSlab := \tuple{\Terlab{\lambdaprefixcal}, \stepslab, \srclab, \tgtlab}}
	where
	\begin{align*}
		&
		\Terlab{\lambdaprefixcal} := \setcompr{\labprefixed{l}{\vec{y}}{\N}}{\prefixed{\vec{y}}{\N} ∈ \iTer{\lambdaprefixcal}, {l} ∈ \set{0,1}^*} \\
		&
		\stepslab := \setcompr{\triple{\labprefixed{l}{\vec{y}}{\N}}{\astep}{\labprefixed{l'}{\vec{y}'}{\N'}}}{\text{the following holds}} \\
		& \hspace{5ex}
		\parbox{275pt}{there is an instance of \ruleref{λ}, \ruleref{@}, or \ruleref{\S} in \stReglab with \labprefixed{l}{\vec{y}}{\N} in the conclusion and the term \m{\labprefixed{l'}{\vec{y}'}{\N'}} in the premise, and with \m{\astep : \prefixed{\vec{y}}{\N} \m{\red_\streg} \prefixed{\vec{y}'}{\N'}} (one of) the corresponding step(s) in \stRegARS}
	\end{align*}
	and where \m{\srclab, \tgtlab : \stepslab → \Terlab{\lambdaprefixcal}} are defined as projections on the first, and respectively, the third component of the triples that constitute steps in \stepslab. Then the relation
	\begin{align*}
		\alabelling :=
		& \setcompr{\pair{\prefixed{y}{\N}}{\labprefixed{l}{y}{\N}}}{\prefixed{y}{\N} ∈ \iTer{\lambdaprefixcal}, {l} ∈ \set{0,1}^*} \\
		& {} ∪ \setcompr{\pair{\astep}{\triple{\prefixed{y}{\N}}{\astep}{\prefixed{y}{\N}}}}{\triple{\prefixed{y}{\N}}{\astep}{\prefixed{y}{\N}} ∈ \stepslab}
	\end{align*}
	is a labelling of \stRegARS to \stRegARSlab.
	As initial labelling we choose the function \m{l} that is defined by
	\m{l : \iTer{\lambdaprefixcal} → \Terlab{\lambdaprefixcal}, \prefixed{\vec{x}}{\M} ↦ \labprefixed{\rootpos}{\vec{x}}{\M}}.
	and which adds the label `\rootpos'.
	\par Now we define the strategy \m{\astrat_\Deriv} with:
	\begin{align*}
		& \astrat_\Deriv := \tuple{\Terlab{\lambdaprefixcal}, \stepslabon{\Derivlab} ∪ \stepslabnoton{\Derivlab}, \srclab', \tgtlab'} \\
		& \stepslabon{\Derivlab} := \setcompr{\triple{\labprefixed{l}{\vec{y}}{\N}}{\astep}{\labprefixed{l'}{\vec{y}'}{\N'}} ∈ \stepslab}{\text{the following holds}} \\
		& \hspace*{5ex} \parbox{275pt}{there is an instance of \ruleref{λ}, \ruleref{@}, or \ruleref{\S} in \Derivlab with \labprefixed{l}{\vec{y}}{\N} in the conclusion and the term \m{\labprefixed{l'}{\vec{y}'}{\N'}} in the premise, and with \m{\astep : \prefixed{\vec{y}}{\N} \m{\red_\streg} \prefixed{\vec{y}'}{\N'}} (one of) the corresponding step(s) in \stRegARS} \\
		& \stepslabnoton{\Derivlab} := \setcompr{\triple{\labprefixed{l}{\vec{y}}{\N}}{\astep}{\labprefixed{l'}{\vec{y}'}{\N'}} ∈ \stepslab}{\text{the following holds}} \\
		& \hspace*{5ex} \parbox{275pt}{\labprefixed{l}{\vec{y}}{\N} does not occur in \Derivlab, \astep is a step according to \astrat}
	\end{align*}
	where \m{\srclab'}, \m{\tgtlab'} are the appropriate restrictions of \srclab and \tgtlab.
	\par Note that, by its definition, \m{\astrat_\Deriv} is a sub-ARS of \stRegARSlab. Now for showing that \m{\astrat_\Deriv} is a (history-aware) strategy for \stRegARSlab, it has to be established that \m{\astrat_\Deriv} is a history-free strategy for the lifted version \stRegARSlab of \stRegARS. For this it remains to show that every normal form of \m{\astrat_\Deriv} is also a normal form of \stRegARSlab. So, let \m{\labprefixed{l}{\vec{y}}{\N} ∈ \iTer{\lambdaprefixcal}} be such that it is not a normal form of \stRegARSlab. Then also \prefixed{\vec{y}}{\N} is not a normal form of \stRegARS. For showing that there is a step in \m{\astrat_\Deriv} with this labelled term as a source we will distinguish two cases, i.e.\ whether \labprefixed{l}{\vec{y}}{\N} occurs on \Derivlab or not.
	\par For the second case, we assume that \labprefixed{l}{\vec{y}}{\N} does not occur in \Derivlab. Then there is a step \m{\astep : (\prefixed{\vec{y}}{\N}) \red_{\astrat} (\prefixed{\vec{y}'}{\N'})} in the scope-delimiting strategy \astrat in \stRegARS, which gives rise to the step \m{\astep : (\labprefixed{l}{\vec{y}}{\N}) \red (\labprefixed{l'}{\vec{y}'}{\N'})} in \stRegARSlab and in \m{\astrat_\Deriv}.
	\par For the first case, we assume that \labprefixed{l}{\vec{y}}{\N} occurs in \Derivlab, and we fix an occurrence \o. Since by assumption \labprefixed{l}{\vec{y}}{\N} is not a normal form of \stRegARSlab, \o cannot be the occurrence of an axiom (\0), and hence it is either an occurrence as the conclusion of an instance of one of the rules \ruleref{λ}, \ruleref{@}, \ruleref{\S} in \Derivlab, or as a marked assumption in \Derivlab. If \o is the conclusion of an instance \iota of \ruleref{λ}, \ruleref{@}, or \ruleref{\S}, then \iota defines a step on \labprefixed{l}{\vec{y}}{\N} which also is a step in \m{\astrat_\Deriv}. If \o is the conclusion of an instance of \FIX in \Deriv, then we consider an arbitrary path \apath in \Derivlab from \o upwards towards a leaf of \Derivlab. Since, due to the side-condition of the rule \FIX, immediate subderivations of instances of \FIX consist of at least one rule application, \apath cannot consist merely of applications of \FIX. Hence by following \apath from \o upwards, after a number of successive instances of \FIX, each of which have \labprefixed{l}{\vec{y}}{\N} as conclusion and premise, an instance of one of the rules \ruleref{λ}, \ruleref{@}, \ruleref{\S} follows, which witnesses a step with source \ruleref{λ}, \ruleref{@}, \ruleref{\S} in \stRegARSlab and in \m{\astrat_\Deriv}. Finally, if \o is an occurrence in a marked assumption at the top of the proof tree \Derivlab, then, since \Derivlab is a closed derivation and due to the form of instances of the assumption-discharging rule \FIX, there is also an occurrence \m{o'} of \labprefixed{l}{\vec{y}}{\N} as the conclusion of an instance of \FIX in \Derivlab. Now the argument above can be applied to the occurrence \m{o'} to obtain a step of \m{\astrat_\Deriv} on \labprefixed{l}{\vec{y}}{\N}.
	\par By construction \m{\astrat_\Deriv} conforms to \cref{lem:derivations:Reg:stReg:2:strategies:i} and \cref{lem:derivations:Reg:stReg:2:strategies:ii} because of the inclusion of \stepslabon{\Derivlab} and \stepslabnoton{\Derivlab} respectively; \cref{lem:derivations:Reg:stReg:2:strategies:iii} follows from \cref{lem:derivations:Reg:stReg:2:strategies:ii}.
\end{proof}

\begin{lemma}[from scope/\extscope-delimiting strategies to \Reg/\stReg-derivations]\label{lem:strategies:2:derivations:Reg:stReg}
	Let \m{\M ∈ \iTer{\lambdacal}}, and let \astrat be a scope-delimiting strategy for \RegARS (a \extscope-delimiting strategy for \stRegARS) such that \ST{\astrat}{\M} is finite. Then there exists a closed derivation \Deriv in \Reg (in \stRegzero, and hence in \stReg) with conclusion \prefixed{}{\M} such that the following properties hold (note the minor differences with the items \cref{lem:derivations:Reg:stReg:2:strategies:i}, \cref{lem:derivations:Reg:stReg:2:strategies:ii}, and \cref{lem:derivations:Reg:stReg:2:strategies:iii} in \cref{lem:derivations:Reg:stReg:2:strategies}):
	\begin{enumerate}[(i)]
		\item\label{lem:strategies:2:derivations:Reg:stReg:i} Every (non-cyclic) path in \Deriv from the conclusion upwards to a leaf of the proof tree \Deriv corresponds to a \m{\red_{\astrat}}-rewrite-sequence starting on \prefixed{}{\M} where passes over instances of the rules \ruleref{@} to the left and to the right correspond to \m{\red_{{\astrat}.@_0}}- and \m{\red_{{\astrat}.@_1}}-steps, respectively, and passes over instances of \ruleref{λ} and of \ruleref{\del} (of \ruleref{\S}) correspond to \m{\red_{{\astrat}.λ}}-, and \m{\red_{{\astrat}.\del}}-steps (\m{\red_{{\astrat}.\S}}-steps); passes from the conclusion to the premise of instances of \FIX correspond to empty steps.
		\item\label{lem:strategies:2:derivations:Reg:stReg:ii} Every sufficiently long \m{\red_{\astrat}}-rewrite-sequence on \prefixed{\vec{x}}{\M} has an initial segment that corresponds to a (non-cyclic) path in \Deriv from the conclusion upwards to a leaf of the proof tree: thereby a \m{\red_{{\astrat}.@_0}}-step or \m{\red_{{\astrat}.@_1}}-step corresponds to a pass over (possibly some \FIX-instances followed by) an instance of \ruleref{@} in direction left and right, respectively; a \m{\red_{{\astrat}.λ}}-step or \m{\red_{{\astrat}.\del}}-step (\m{\red_{{\astrat}.\S}}-step) corresponds to a pass over (possibly some \FIX-instances followed by) an instance of \ruleref{λ} or \ruleref{\del} (of \ruleref{\S}), respectively.
		\item\label{lem:strategies:2:derivations:Reg:stReg:iii} \m{\ST{\astrat}{\M} \subseteq \setcompr{ \prefixed{\vec{y}}{\N}}{\text{the term }\prefixed{\vec{y}}{\N}\text{ occurs in }\Deriv}}.
	\end{enumerate}
\end{lemma}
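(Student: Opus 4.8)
The plan is to fold the finite reduction graph $\GeneratedSubARSi{\astrat}{\prefixed{}{M}}$ into a proof tree, using the rule $\FIX$ to close off a branch as soon as it returns to an earlier term in a suitable way. I would carry this out for the $\stRegARS$-case, producing a derivation in the stronger system $\stRegzero$ (and hence in $\stReg$); the $\RegARS$-case is entirely analogous, with $\ruleref{\del}$ in place of $\ruleref{\S}$, with branches closed at first repetitions rather than at grounded cycles, and with termination coming from a pigeonhole argument on the finite set $\ST{\astrat}{M}$. The starting point is that, along a scope/\extscope-delimiting strategy, every object is of exactly one of four kinds by \cref{def:scope:delimiting:strat:Reg:stReg}: a $\red_\streg$-normal-form, which by \cref{prop:rewprops:RegCRS:stRegCRS:iv-1} has the shape $\prefixed{x_1 \dots x_n}{x_n}$ and matches the axiom $\ruleref{\0}$; the source of a single $\red_λ$-step; the source of a single $\red_\S$-step; or the source of exactly one $\red_{@_0}$-step and one $\red_{@_1}$-step. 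This gives a step-to-rule translation just as in \cref{prop:derivationpaths:2:rewritesequences:Reg:stReg}: a $\red_λ$-step corresponds to an instance of $\ruleref{λ}$, a $\red_\S$-step to an instance of $\ruleref{\S}$, and the $@_0/@_1$-pair to an instance of $\ruleref{@}$ with the two target objects as its premises.

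Using this translation I would build $\Deriv$ by a depth-first traversal from $\prefixed{}{M}$. At a current object $o$: if $o$ already occurs on the current path from the root, at some position $i$, in such a way that the segment of the path from $i$ to the present position is a grounded cycle in the sense of \cref{def:grounded:cycle}, then close the branch with a marked assumption $(o)^u$ and turn the occurrence at $i$ into the conclusion of an instance of $\FIX, u$ discharging it; otherwise apply the rule translating the step or steps that $\astrat$ prescribes at $o$ and recurse into the premise(s). If several descendants fold back to the same occurrence at a position $i$, then a single instance of $\FIX$ there discharges all the corresponding marked assumptions.

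The crux is to verify that this yields a finite closed derivation satisfying \cref{lem:strategies:2:derivations:Reg:stReg:i}, \cref{lem:strategies:2:derivations:Reg:stReg:ii} and \cref{lem:strategies:2:derivations:Reg:stReg:iii}. Every branch of the constructed tree is, via the translation, a $\red_{\astrat}$-rewrite-sequence on $\prefixed{}{M}$; were a branch infinite then, since $\ST{\astrat}{M}$ is finite and hence $M$ is $\astrat$-regular, \cref{prop:grounded:cycle} would produce a grounded cycle along it, so the branch would in fact have been closed at the end of that cycle --- a contradiction. As the tree is binary branching (only instances of $\ruleref{@}$ branch) and has no infinite branch, it is finite by K\H{o}nig's Lemma (\cref{koenigs_lemma}). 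The side-conditions of $\FIX$ are met by construction: a grounded-cycle segment has length at least $1$ and none of its steps translates to a $\FIX$, so the immediate subderivation $\Deriv_0$ below each $\FIX$ has a bottommost instance of one of $\ruleref{λ}$, $\ruleref{@}$, $\ruleref{\S}$, whence $\depth{\Deriv_0} \geq 1$; and the extra $\stRegzero$-side-condition is precisely the defining property of a grounded cycle, namely that every term on the thread from a discharged assumption down to its $\FIX$ has abstraction prefix no shorter than the conclusion of that $\FIX$ --- and under the multi-fold convention each such thread is again a grounded-cycle segment. All marked assumptions being discharged, $\Deriv$ is closed. \cref{lem:strategies:2:derivations:Reg:stReg:i} and \cref{lem:strategies:2:derivations:Reg:stReg:ii} then follow from the step-to-rule translation (a non-cyclic branch corresponds to a $\red_{\astrat}$-rewrite-sequence, and conversely any rewrite sequence longer than the finite height of $\Deriv$ has an initial segment tracing a branch to a leaf), while \cref{lem:strategies:2:derivations:Reg:stReg:iii} holds because any $o \in \ST{\astrat}{M}$ is reached along a shortest $\red_{\astrat}$-path from $\prefixed{}{M}$, which has no repeated objects, hence contains no grounded cycle, and is therefore followed to the end by the traversal.

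I expect the main obstacle to be landing inside $\stRegzero$ rather than merely $\stReg$: one must close branches precisely at grounded cycles --- not at the first repetition --- so that the induced instances of $\FIX$ satisfy the prefix-length side-condition, and this hinges on \cref{prop:grounded:cycle} guaranteeing that a grounded cycle always eventually shows up along an infinite $\astrat$-rewrite-sequence. A secondary care point is the bookkeeping when several descendants fold back to one ancestor, together with the observation that termination of $\red_\del$ and $\red_\S$ (\cref{prop:rewprops:RegCRS:stRegCRS}) excludes cycles built solely from $\ruleref{\del}$- or $\ruleref{\S}$-steps --- which is essentially \cref{prop:Reg:stReg} for the derivation we construct.
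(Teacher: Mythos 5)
Your proposal is correct and follows essentially the same route as the paper's proof: unfold the strategy into a proof tree, stop each branch exactly at a grounded cycle (which is the paper's notion of an $\stRegzero$-admissible repetition), derive termination from \cref{prop:grounded:cycle} together with K\H{o}nig's Lemma, and then discharge the marked assumptions with instances of \FIX whose side-conditions are met precisely because the closing segments are grounded cycles. The only difference is presentational --- you interleave branch-closing with the construction, whereas the paper first grows the open derivation until every assumption has an admissible repetition beneath it and only then inserts the \FIX-instances in a second pass.
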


\begin{proof}{\label{prf:lem:strategies:2:derivations:Reg:stReg}}
	We will argue only for the part of the statement of the lemma concerning a \extscope-delimiting strategy for \stRegARS, since the case with a scope-delimiting strategy for \RegARS can be established analogously.
	\par Let \M be an λ-term, and let \astrat be a \extscope-delimiting strategy for \RegARS such that \ST{\astrat}{\M} is finite. Now let \m{\Deriv_0} be the (trivial) derivation with conclusion \prefixed{}{\M}, which, in case that this is not an axiom of \stRegzero (and \stReg), is also an assumption, and then is of the form \m{(\prefixed{}{\M})^u}, carrying an assumption marker \u. If \m{\Deriv_0} is an axiom, then it is easy to verify that the statements \cref{lem:strategies:2:derivations:Reg:stReg:i}, \cref{lem:strategies:2:derivations:Reg:stReg:ii}, and \cref{lem:strategies:2:derivations:Reg:stReg:iii} hold.
	\par Otherwise we construct a sequence \m{\Deriv_1, \Deriv_2, \dots} of derivations where each \m{\Deriv_n} satisfies the properties \cref{lem:strategies:2:derivations:Reg:stReg:i}, \cref{lem:strategies:2:derivations:Reg:stReg:ii}, and \cref{lem:strategies:2:derivations:Reg:stReg:iii}, where terms in marked assumptions are not also terms in axioms \0, and where \m{\Deriv_{n+1}} extends \m{\Deriv_n} by one additional rule instance above a marked assumption in \m{\Deriv_n}: For the extension step on a derivation \m{\Deriv_n}, a marked assumption \m{(\prefixed{\vec{y}}{\N})^{\amarker}} in \m{\Deriv_n} is picked with the property that the term \prefixed{\vec{y}}{\N} does not appear in the thread down to the conclusion of \m{\Deriv_n}.
	\par Suppose that the \m{\red_{\astrat}}-rewrite-sequence from the conclusion of \m{\Deriv_n} up to the marked assumption is of the form:
	\[\arewseq : \prefixed{}{\M} = \prefixed{\vec{x}_0}{\M_0} \red_{\astrat} \prefixed{\vec{x}_1}{\M_1} \red_{\astrat} \dots \red_{\astrat} \prefixed{\vec{x}_m}{\M_m} = \prefixed{\vec{y}}{\N}\]
	Note that, since by assumption \prefixed{\vec{y}}{\N} is not a term in an axiom \0 of \stRegzero, it follows by \cref{prop:rewprops:RegCRS:stRegCRS}~\cref{prop:rewprops:RegCRS:stRegCRS:v}, that it is not a \m{\red_\streg}-normal-form. Then depending on whether the possible next step(s) in an \m{\red_{\astrat}}-rewrite that extends \arewseq by one step is a \m{\red_{{\astrat}.λ}}-, \m{\red_{{\astrat}.\del}}-step, or either a \m{\red_{{\astrat}.@_0}}-steps or a \m{\red_{{\astrat}.@_1}}-steps, the derivation \m{\Deriv_n} is extended above the marked assumption \m{(\prefixed{\vec{y}}{\N})^{\amarker}} by an application of λ, \S, or @, respectively. For example in the case that \arewseq extends by one additional step to either of the two rewrite sequences:
	\begin{align*}
		\arewseq_i : \prefixed{}{\M} = \prefixed{\vec{x}_0}{\M_0} \red_{\astrat} \dots \red_{\astrat} & \prefixed{\vec{x}_m}{\M_m} \\ = & \prefixed{\vec{x}_m}{\app{\M_{m,0}}{\M_{m,1}}} \red_{{\astrat}.@_i} \prefixed{\vec{x}_m}{\M_{m,i}}
	\end{align*}
	with \m{i ∈ \set{0,1}}, the derivation \m{\Deriv_n} of the form:
	\begin{prooftree}
		\axiom{\langle (\prefixed{\vec{x}_m}{\app{\M_{m,0}}{\M_{m,1}}})^{\amarker} \rangle}
		\noLine
		\unaryInf{\Deriv_n}
		\noLine
		\unaryInf{\prefixed{}{\M}}
	\end{prooftree}
	is extended to \m{\Deriv_{n+1}}:
	\begin{prooftree}
		\axiom{(\prefixed{\vec{x}_m}{\M_{m,0}})^{\amarker_0}}
		\axiom{(\prefixed{\vec{x}_m}{\M_{m,1}})^{\amarker_1}}
		\insertBetweenHyps{\hspace*{1em}}
		\infLabel{@}
		\binaryInf{\langle \prefixed{\vec{x}_m}{\app{\M_{m,0}}{\M_{m,1}}} \rangle}
		\noLine
		\unaryInf{\Deriv_n}
		\noLine
		\unaryInf{\prefixed{}{\M}}
	\end{prooftree}
	for two fresh assumption markers \m{\amarker_0} and \m{\amarker_1} (the angle brackets \m{\langle \dots \rangle} are used here to indicate just a single formula occurrence at the top of the proof tree \m{\Deriv_n}). If either of \prefixed{\vec{x}_m}{\M_{m,0}} or \prefixed{\vec{x}_m}{\M_{m,1}} is an axiom, then the assumption marker is removed and the formula is marked as an axiom \0, accordingly. Note that, if the statements \cref{lem:strategies:2:derivations:Reg:stReg:i}, \cref{lem:strategies:2:derivations:Reg:stReg:ii}, and \cref{lem:strategies:2:derivations:Reg:stReg:iii} are satisfied for \m{\Deriv = \Deriv_n}, then this is also the case for \m{\Deriv = \Deriv_{n+1}}. Furthermore, terms in marked assumptions are not terms in axioms of \stRegzero.
	\par The extension process continues as long as \m{\Deriv_n} contains a marked assumption \m{(\prefixed{\vec{y}}{\N})^{\bmarker}} without a `\stRegzero-admissible repetition' beneath it, by which we mean the occurrence \o of the term \prefixed{\vec{y}}{\N} on the thread down to the conclusion in \Deriv, but strictly beneath the marked assumption, such that furthermore all terms on the part of the thread down to \o have an abstraction prefix of length greater or equal to \length{\vec{y}}. (Note the connection to the side-condition on instances of the rule \FIX in \stRegzero, and, in particular, that marked assumptions with an \stRegzero-admissible repetition beneath it could be discharged by an appropriately introduced instance of \FIX in \stRegzero.)
	\par That the extension process terminates can be seen as follows: Suppose that, to the contrary, it continues indefinitely. Then, since the derivation size increases strictly in every step, an infinite proof tree \m{\Deriv^{\infty}} is obtained in the limit, which due to finite branchingness of the proof tree and Kőnig's Lemma possesses an infinite path \apath starting at the conclusion. Now note that due to \cref{lem:strategies:2:derivations:Reg:stReg:i}, \apath corresponds to an infinite \m{\red_{\astrat}}-rewrite-sequence. Due to \cref{prop:grounded:cycle}, this infinite rewrite sequence must contain a grounded cycle. However, the existence of such grounded cycle contradicts the termination condition of the extension process, because every grounded cycle provides an \stRegzero-admissible repetition.
	\par Let \m{\Deriv_N}, for some \m{N ∈ ℕ}, be the derivation that is reached when no further extension step, as described, is possible. By the construction the statements \cref{lem:strategies:2:derivations:Reg:stReg:i}, \cref{lem:strategies:2:derivations:Reg:stReg:ii}, and \cref{lem:strategies:2:derivations:Reg:stReg:iii} are satisfied for \m{\Deriv = \Deriv_N}. Furthermore, \m{\Deriv_N} is a derivation in \stReg and \stRegzero in which every leaf at the top is either an axiom \0 or an assumption \m{(\prefixed{\vec{y}}{\N})^{\amarker}} marked with a unique marker \amarker, and for every such marked assumption in \m{\Deriv_N}, there is a \stRegzero-admissible repetition strictly beneath it. This fact enables us to modify \m{\Deriv_N} into a closed derivation in \stRegzero by closing all open assumptions by newly introduced applications of the rule \FIX. More precisely, steps of the following kind are carried out repeatedly. A derivation with occurrences of a number of marked assumptions \m{(\prefixed{\vec{y}}{\N})^{\amarker_i}} highlighted together with a single occurrence of the term \prefixed{\vec{y}}{\N} in its interior that indicates the \stRegzero-admissible repetition for the displayed marked assumptions:
	\begin{prooftree}
		\axiom{\langle (\prefixed{\vec{y}}{\N})^{\amarker_1} \rangle}
		\insertBetweenHyps{\dots}
		\axiom{\langle (\prefixed{\vec{y}}{\N})^{\amarker_k} \rangle}
		\noLine
		\binaryInf{\Deriv_{000}}
		\noLine
		\unaryInf{\langle \prefixed{\vec{y}}{\N} \rangle}
		\noLine
		\unaryInf{\Deriv_{00}}
		\noLine
		\unaryInf{\prefixed{}{\M}}
	\end{prooftree}
	is modified into:
	\begin{prooftree}
		\axiom{\langle (\prefixed{\vec{y}}{\N})^{\cmarker} \rangle}
		\insertBetweenHyps{\dots}
		\axiom{\langle (\prefixed{\vec{y}}{\N})^{\cmarker} \rangle}
		\noLine
		\binaryInf{\Deriv_{000}}
		\noLine
		\unaryInf{\prefixed{\vec{y}}{\N}}
		\infLabel{\FIX, \cmarker}
		\unaryInf{\langle \prefixed{\vec{y}}{\N} \rangle}
		\noLine
		\unaryInf{\Deriv_{00}}
		\noLine
		\unaryInf{\prefixed{}{\M}}
	\end{prooftree}
	where \cmarker is a fresh assumption marker. In every such transformation step the number of open assumptions is strictly decreased, but the properties \cref{lem:strategies:2:derivations:Reg:stReg:i}, \cref{lem:strategies:2:derivations:Reg:stReg:ii}, and \cref{lem:strategies:2:derivations:Reg:stReg:iii} (for \Deriv the resulting derivation of such a step) is preserved. Hence after finitely many such transformation steps a derivation \Deriv in \stRegzero without open assumptions and with the properties \cref{lem:strategies:2:derivations:Reg:stReg:i}, \cref{lem:strategies:2:derivations:Reg:stReg:ii}, and \cref{lem:strategies:2:derivations:Reg:stReg:iii} is reached, and obtained as the result of this construction.
\end{proof}

As a consequence of the two lemmas above, derivability in \stReg and in \stRegzero coincides:

\begin{proposition}[\stReg ≡ \stRegzero ]\label{prop:stReg:stRegzero}
	\[∀ \M ∈ \iTer{\lambdacal}~~ \derivablein{\stReg}{\prefixed{}{\M}} ~\Longleftrightarrow~ \derivablein{\stRegzero}{\prefixed{}{\M}}\]
\end{proposition}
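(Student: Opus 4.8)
The plan is to prove the biconditional by establishing the two implications separately. The direction $\derivablein{\stRegzero}{\prefixed{}{\M}} \Rightarrow \derivablein{\stReg}{\prefixed{}{\M}}$ is immediate, whereas the converse proceeds by a round-trip through \extscope-delimiting strategies, using the two lemmas that precede the proposition.

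For the easy direction I would simply observe, from \cref{def:Reg:stReg:stRegzero}, that $\stRegzero$ and $\stReg$ share exactly the same axioms and inference rules, the only difference being that instances of the rule $\FIX$ in $\stRegzero$ are constrained by an additional side-condition. Consequently every $\stRegzero$-derivation is already a $\stReg$-derivation, and in particular every closed $\stRegzero$-derivation with conclusion $\prefixed{}{\M}$ is a closed $\stReg$-derivation with the same conclusion; this gives $\derivablein{\stRegzero}{\prefixed{}{\M}} \Rightarrow \derivablein{\stReg}{\prefixed{}{\M}}$.

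For the other direction, let $\M \in \iTer{\lambdacal}$ and suppose $\derivablein{\stReg}{\prefixed{}{\M}}$, witnessed by a closed derivation $\Deriv$ with conclusion $\prefixed{}{\M}$. Since $\Deriv$ is a closed derivation it is a finite proof tree, so the set of prefixed terms occurring in $\Deriv$ is finite. By \cref{lem:derivations:Reg:stReg:2:strategies} there is an (in general history-aware) \extscope-delimiting strategy $\astrat_\Deriv$ for $\stRegARS$ whose set $\ST{\astrat_\Deriv}{\M}$ of generated subterms of $\M$ coincides, by item \cref{lem:derivations:Reg:stReg:2:strategies:iii}, with the set of prefixed terms occurring in $\Deriv$; hence $\ST{\astrat_\Deriv}{\M}$ is finite, i.e.\ $\M$ is $\astrat_\Deriv$-regular. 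Now \cref{lem:strategies:2:derivations:Reg:stReg}, applied to the \extscope-delimiting strategy $\astrat_\Deriv$ for $\stRegARS$, yields a closed derivation in $\stRegzero$ with conclusion $\prefixed{}{\M}$, so $\derivablein{\stRegzero}{\prefixed{}{\M}}$. This completes the argument.

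Because the proof is nothing more than a two-step composition of the two preceding lemmas, there is no substantive obstacle. The only point that warrants a moment of care is verifying that the interfaces of the two lemmas match: the output of \cref{lem:derivations:Reg:stReg:2:strategies} (via item \cref{lem:derivations:Reg:stReg:2:strategies:iii}, together with finiteness of $\Deriv$) is precisely a \extscope-delimiting strategy for $\stRegARS$ with a finite generated-subterm set, which is exactly the hypothesis demanded by \cref{lem:strategies:2:derivations:Reg:stReg}. No new rewriting or proof-theoretic content is needed beyond what those lemmas already provide.
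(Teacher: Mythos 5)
Your proof is correct and follows exactly the paper's own argument: the easy direction from the fact that every \stRegzero-derivation is a \stReg-derivation, and the converse by composing \cref{lem:derivations:Reg:stReg:2:strategies} (to extract a \extscope-delimiting strategy with finite generated-subterm set) with \cref{lem:strategies:2:derivations:Reg:stReg} (to rebuild a closed \stRegzero-derivation). Your explicit remark that finiteness of $\ST{\astrat_\Deriv}{\M}$ follows from item \cref{lem:derivations:Reg:stReg:2:strategies:iii} together with finiteness of the closed proof tree is a welcome bit of added care, but the route is the same.
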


\begin{proof}
	The direction ``⇐'' follows by the fact that every derivation in \stRegzero is also a derivation in \stReg. For the direction ``⇒'', let \M be an infinite term such that \derivablein{\stReg}{\prefixed{}{\M}}. By \cref{lem:derivations:Reg:stReg:2:strategies} there exists a \extscope-delimiting strategy \astratplus such that \ST{\astratplus}{\M} is finite. But then it follows by \cref{lem:strategies:2:derivations:Reg:stReg} that there is also a closed derivation in \stRegzero with conclusion \prefixed{}{\M}, and hence that \derivablein{\stRegzero}{\prefixed{}{\M}}.
\end{proof}

Now we have assembled all auxiliary statements that we use for proving a theorem that tightly links derivability in the proof system \Reg with regularity, and derivability in \stReg and in \stRegzero with strong regularity, of λ-terms.

\begin{theorem}\label{lem:Reg:stReg}
	The following statements hold for the proof systems 
	\Reg, \stReg, \stRegzero:
	\begin{enumerate}[(i)]
		\item\label{lem:Reg:stReg:i}
			\Reg is sound and complete for regular λ-terms. That is, for all \m{\M ∈ \iTer{\lambdacal}} it holds:
			\begin{equation*}
			\derivablein{\Reg}{\prefixed{}{\M}}
			\hspace*{5ex}\text{if and only if}\hspace*{5ex}
			\text{\M is regular.}
			\end{equation*}
		\item\label{lem:Reg:stReg:ii}
			\stReg and \stRegzero are sound and complete for strongly regular λ-terms. That is,
		for all \m{\M ∈ \iTer{\lambdacal}} the following statements are equivalent:
		\begin{enumerate}[(a)]
			\item \M is strongly regular.
			\item \derivablein{\stReg}{\prefixed{}{\M}}.
			\item \derivablein{\stRegzero}{\prefixed{}{\M}}.
		\end{enumerate}
	\end{enumerate}
\end{theorem}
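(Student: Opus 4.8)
The plan is to obtain the theorem as an essentially immediate consequence of the two preceding lemmas, \cref{lem:derivations:Reg:stReg:2:strategies} and \cref{lem:strategies:2:derivations:Reg:stReg}, together with the equivalence \cref{prop:stReg:stRegzero}. Recall from \cref{def:reg:streg} in combination with \cref{def:regularity:ARS} that a \lambda-term \M is regular (strongly regular) precisely when there exists a scope-delimiting strategy \astrat for \RegARS (an \extscope-delimiting strategy \astratplus for \stRegARS) such that \m{\ST{\astrat}{\M}} (respectively \m{\ST{\astratplus}{\M}}) is finite. So the whole task reduces to matching closed derivations in the proof systems with such finitely-generated strategies, in both directions.

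For statement \cref{lem:Reg:stReg:i} (the case of \Reg and regularity): for \emph{soundness}, assume \derivablein{\Reg}{\prefixed{}{\M}}, witnessed by a closed derivation \Deriv in \Reg. By \cref{lem:derivations:Reg:stReg:2:strategies} there is a (in general history-aware) scope-delimiting strategy \m{\astrat_\Deriv} for \RegARS such that, by item \cref{lem:derivations:Reg:stReg:2:strategies:iii}, \m{\ST{\astrat_\Deriv}{\M} = \setcompr{\prefixed{\vec{y}}{\N}}{\prefixed{\vec{y}}{\N}\text{ occurs in }\Deriv}}. Since \Deriv is a finite proof tree, this set is finite, hence \M is \m{\astrat_\Deriv}-regular and therefore regular. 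For \emph{completeness}, assume \M is regular and fix a scope-delimiting strategy \astrat for \RegARS with \m{\ST{\astrat}{\M}} finite; then \cref{lem:strategies:2:derivations:Reg:stReg} supplies a closed derivation \Deriv in \Reg with conclusion \prefixed{}{\M}, i.e.\ \derivablein{\Reg}{\prefixed{}{\M}}.

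For statement \cref{lem:Reg:stReg:ii}: the equivalence of \textbf{(b)} and \textbf{(c)} is exactly \cref{prop:stReg:stRegzero}, so it suffices to establish \textbf{(a)}$\Leftrightarrow$\textbf{(b)}. The argument is the mirror image of the one for \cref{lem:Reg:stReg:i}, using the \stReg/\stRegARS variants of the two lemmas: from a closed derivation in \stReg, \cref{lem:derivations:Reg:stReg:2:strategies} extracts an \extscope-delimiting strategy \m{\astrat_\Deriv} for \stRegARS whose generated-subterm set is (by item \cref{lem:derivations:Reg:stReg:2:strategies:iii}) the finite set of terms occurring in the derivation, so \M is \m{\astrat_\Deriv}-regular and hence strongly regular; conversely, if \M is strongly regular, fix an \extscope-delimiting strategy \astratplus with \m{\ST{\astratplus}{\M}} finite, and \cref{lem:strategies:2:derivations:Reg:stReg} produces a closed derivation in \stRegzero (and hence in \stReg) with conclusion \prefixed{}{\M}. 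Together with \cref{prop:stReg:stRegzero} this closes the chain of equivalences \textbf{(a)}$\Leftrightarrow$\textbf{(b)}$\Leftrightarrow$\textbf{(c)}.

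The hard part is not in this theorem itself but has already been discharged in \cref{lem:derivations:Reg:stReg:2:strategies} and \cref{lem:strategies:2:derivations:Reg:stReg}; at this level the only point requiring genuine care is the soundness direction, where one must invoke the \emph{precise} statement \cref{lem:derivations:Reg:stReg:2:strategies:iii} — that the generated subterms with respect to the extracted strategy \m{\astrat_\Deriv} coincide exactly with the prefixed terms occurring in \Deriv — in order to conclude finiteness from the finiteness of the (finite) proof tree, and not merely that every such term is a generated subterm. The substantial underlying work for the completeness direction, namely the termination of the bottom-up derivation-construction process (which rests on \cref{prop:grounded:cycle} — infinite \astrat-rewrite-sequences contain grounded cycles — and Kőnig's Lemma, ensuring that every open marked assumption eventually acquires an admissible repetition and can be discharged by an instance of \FIX), is encapsulated in \cref{lem:strategies:2:derivations:Reg:stReg} and need not be reproved here.
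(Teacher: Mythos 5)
Your proposal is correct and follows essentially the same route as the paper: both directions of (i) are obtained from \cref{lem:strategies:2:derivations:Reg:stReg} and \cref{lem:derivations:Reg:stReg:2:strategies} respectively, and (ii) is handled analogously with \cref{prop:stReg:stRegzero} supplying the equivalence of \stReg and \stRegzero. Your explicit appeal to item \cref{lem:derivations:Reg:stReg:2:strategies:iii} to get finiteness of \ST{\astrat_\Deriv}{\M} from finiteness of the proof tree is exactly the point the paper relies on (if slightly more tersely).
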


\begin{proof}
	Since the proof of statement \cref{lem:Reg:stReg:ii} of the theorem can be carried out analogously (taking into account \cref{prop:stReg:stRegzero}), we argue here only for statement \cref{lem:Reg:stReg:i}.
	\par For ``⇒'' in \cref{lem:Reg:stReg:i}, let \M be a λ-term that is regular. Then there exists a scope-delimiting strategy \astrat on \RegARS such that \ST{\astrat}{\M} is finite. By \cref{lem:strategies:2:derivations:Reg:stReg} it follows that there exists a closed derivation \Deriv in \Reg with conclusion \prefixed{}{\M}. This derivation witnesses \derivablein{\Reg}{\prefixed{}{\M}}.
	For ``⇐'' in \cref{lem:Reg:stReg:i}, suppose that \derivablein{\Reg}{\prefixed{}{\M}}. Then there exists a closed derivation \Deriv in \Reg with conclusion \prefixed{}{\M}. Now \cref{lem:derivations:Reg:stReg:2:strategies} entails the existence of a scope-delimiting strategy \astrat in \stRegARS such that, in particular, \ST{\astrat}{\M} is finite. This fact implies that \M is regular.
\end{proof}

\begin{figure}
	\proofsystem{
		\begin{bprooftree}
			\emptyAxiom
			\infLabel{\0}
			\unaryInf{\prefixedCRS{n+1}{\vecOneToN{x}y}{y} = \prefixedCRS{n+1}{\vecOneToN{z}u}{u}}
		\end{bprooftree}
		\\
		\begin{bprooftree}
			\axiom{\prefixedCRS{n}{\vecOneToN{x}}{\apreter} = \prefixedCRS{n}{\vecOneToN{z}}{\bpreter}}
			\infLabel{\S~~\mlSideCondition{\y does not occur \\ free in \apreter; \w does \\ not occur free in \bpreter}}
			\unaryInf{\prefixedCRS{n+1}{\vecOneToN{x}y}{\apreter} = \prefixedCRS{n+1}{\vecOneToN{z}w}{\bpreter}}
		\end{bprooftree}
		\\
		\begin{bprooftree}
			\axiom{\prefixedCRS{n+1}{\vecOneToN{x}y}{\apreter} = \prefixedCRS{n+1}{\vecOneToN{z}u}{\bpreter}}
			\infLabel{λ}
			\unaryInf{\prefixedCRS{n}{\vecOneToN{x}}{\absCRS{y}{\apreter}} = \prefixedCRS{n}{\vecOneToN{z}}{\absCRS{u}{\bpreter}}}
		\end{bprooftree}
		\\
		\begin{bprooftree}
			\axiom{\prefixedCRS{n}{\vecOneToN{x}}{\apreter_0} = \prefixedCRS{n}{\vecOneToN{y}}{\bpreter_0}\hspace{9ex}}
			\noLine
			\unaryInf{\hspace{9ex}\prefixedCRS{n}{\vecOneToN{x}}{\apreter_1} = \prefixedCRS{n}{\vecOneToN{y}}{\bpreter_1}}
			\infLabel{@}
			\unaryInf{\prefixedCRS{n}{\vecOneToN{x}}{\app{\apreter_0}{\apreter_1}} = \prefixedCRS{n}{\vecOneToN{y}}{\app{\bpreter_0}{\bpreter_1}}}
		\end{bprooftree}
	}
	\caption{Proof system \AlphaPreTer for equality of preterms in \lambdaprefixcal modulo \alphaequiv.}
	\label{fig:AlphaPreTer}
\end{figure}

\begin{figure}
	\proofsystem{
		\begin{bprooftree}
			\emptyAxiom
			\infLabel{\0}
			\unaryInf{\prefixed{\vec{x}y}{y} = \prefixed{\vec{z}u}{u}}
		\end{bprooftree}
		\hsep
		\begin{bprooftree}
			\axiom{\prefixed{\vec{x}y}{\M} = \prefixed{\vec{z}u}{\N}}
			\infLabel{λ}
			\unaryInf{\prefixed{\vec{x}}{\abs{y}{\M}} = \prefixed{\vec{z}}{\abs{u}{\N}}}
		\end{bprooftree}
		\\
		\begin{bprooftree}
			\axiom{\prefixed{\vec{x}}{\M} = \prefixed{\vec{z}}{\N}}
			\infLabel{\S~~\mlSideCondition{if \y does not occur in \M, \\ and \w does not occur in \N}}
			\unaryInf{\prefixed{\vec{x}y}{\M} = \prefixed{\vec{z}w}{\N}}
		\end{bprooftree}
		\\
		\begin{bprooftree}
			\axiom{\prefixed{\vec{x}}{\M_0} = \prefixed{\vec{y}}{\N_0}}
			\axiom{\prefixed{\vec{x}}{\M_1} = \prefixed{\vec{y}}{\N_1}}
			\infLabel{@}
			\binaryInf{\prefixed{\vec{x}}{\app{\M_0}{\M_1}} = \prefixed{\vec{y}}{\app{\N_0}{\N_1}}}
		\end{bprooftree}
	}
	\caption{Proof system \EqTer for equality of terms in \lambdaprefixcal in informal notation.}
	\label{fig:EqTer}
\end{figure}

For defining a proof system for equality of strongly regular λ-terms, we first give a specialised version of the proof system \siCRSPretermAlpha{\Kahrs} (for α-equivalence of iCRS preterms) for \lambdaprefixcal-preterms, and a corresponding system on \lambdaprefixcal-terms.

\begin{definition}[proof systems \AlphaPreTer, \EqTer]\label{def:AlphaPreTer:EqTer}
	The proof system \AlphaPreTer for α-equivalence of infinite preterms in \lambdaprefixcal consists of the rules displayed in \cref{fig:AlphaPreTer}.
	The proof system \EqTer for equality of infinite terms in \lambdaprefixcal consists of the rules displayed in \cref{fig:EqTer}.
	Provability in \AlphaPreTer and in \EqTer is defined, analogous to the proof system \siCRSPretermAlpha{\Kahrs} from \cref{def:iCRSPretermAlphaKahrs}, as the existence of a completed (possibly infinite) derivation, and will, as done for \siCRSPretermAlpha{\Schroer} in \cref{def:iCRSPretermAlphaSchroer}, be indicated using the symbol \sinfderivable.
\end{definition}

\begin{proposition}\label{prop:AlphaPreTer:EqTer}
	The following statements hold for the proof systems \AlphaPreTer and \EqTer.
	\begin{enumerate}[(i)]
		\item\label{prop:AlphaPreTer:EqTer:i} \AlphaPreTer is sound and complete for \alphaequiv on \lambdaprefixcal-preterms. That is, the following holds for all closed preterms \apreter and \bpreter in \lambdaprefixcal:
		\begin{equation*}
			\infderivablein{\AlphaPreTer}{\prefixedCRS{0}{}{\apreter} = \prefixedCRS{0}{}{\bpreter}}
			\hspace*{4ex}\text{if and only if}\hspace*{4ex}
			\apreter \alphaequiv \bpreter .
		\end{equation*}
		\item\label{prop:AlphaPreTer:EqTer:ii} \EqTer is sound and complete for equality between \lambdaprefixcal-terms. That is, the following holds for all terms \M and \N in \lambdaprefixcal:
		\begin{equation*}
			\infderivablein{\EqTer}{\M = \N}
			\hspace*{4ex}\text{if and only if}\hspace*{4ex}
			\M = \N .
		\end{equation*}
	\end{enumerate}
\end{proposition}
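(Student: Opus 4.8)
Both parts will follow once we notice that \AlphaPreTer is just the Kahrs-style system \siCRSPretermAlpha{\Kahrs} (\cref{def:iCRSPretermAlphaKahrs}) specialised to \lambdaprefixcal-preterms, with the leading prefix symbol \sPrefixedCRS{n} playing the role of the Kahrs bracket prefix \KahrsCRSabs{\vec{x}}{\cdot}, and that \EqTer is the same system read on \alphaequiv-classes rather than on representatives. The known soundness and completeness of \siCRSPretermAlpha{\Kahrs} for \alphaequiv on iCRS preterms — obtained from its equivalence with the conflict-based system \siCRSPretermAlpha{\Schroer} (\cref{def:iCRSPretermAlphaSchroer}), which is how \alphaequiv is defined — then does the essential work, and the rest is matching up rule shapes and transferring from preterms to terms.

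For \cref{prop:AlphaPreTer:EqTer:i} the plan is to set up a translation between \AlphaPreTer-derivations and derivations in \siCRSPretermAlpha{\Kahrs} over \sigCRS. A \lambdaprefixcal-preterm \prefixedCRS{n}{\vec{x}}{\apreter} (note \apreter carries no prefix symbols, hence is an ordinary \sigCRS-preterm) is sent to \KahrsCRSabs{\vec{x}}{\apreter}; accordingly an \AlphaPreTer-judgment \prefixedCRS{n}{\vec{x}}{\apreter} = \prefixedCRS{n}{\vec{z}}{\bpreter} — the rules of \AlphaPreTer force both sides to carry prefixes of equal length, so this is well posed — corresponds to \KahrsCRSabs{\vec{x}}{\apreter} = \KahrsCRSabs{\vec{z}}{\bpreter}. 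Under this translation the axiom \0 and the rules \ruleref{\S}, \ruleref{@} of \AlphaPreTer are literal copies of their Kahrs counterparts (for \ruleref{@} because \sigCRS contributes exactly the one binary symbol \sappCRS), while one \AlphaPreTer-\ruleref{λ}-step unzips into an application of the \sabsCRS-rule of \siCRSPretermAlpha{\Kahrs} immediately followed by an application of its abstraction rule \CRSabs{\hspace*{1pt}}{}; since the \ruleref{λ}-step is merely split into two steps of bounded size, finite threads go to finite threads, threads of length \omega to threads of length \omega, and axiom leaves to axiom leaves, so completed \AlphaPreTer-derivations translate to completed Kahrs-derivations. This gives the soundness direction: \infderivablein{\AlphaPreTer}{\prefixedCRS{0}{}{\apreter} = \prefixedCRS{0}{}{\bpreter}} implies \infderivablein{\siCRSPretermAlpha{\Kahrs}}{\KahrsCRSabs{}{\apreter} = \KahrsCRSabs{}{\bpreter}} implies \apreter \alphaequiv \bpreter. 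For completeness I would, given \apreter \alphaequiv \bpreter (i.e.\ \apreter and \bpreter have no conflict), build a completed \AlphaPreTer-derivation of \prefixedCRS{0}{}{\apreter} = \prefixedCRS{0}{}{\bpreter} directly by corecursion: at a judgment \prefixedCRS{n}{\vec{x}}{\apreter'} = \prefixedCRS{n}{\vec{z}}{\bpreter'}, if \apreter' is a (prefix) variable at binding depth \i then conflict-freeness forces \bpreter' to be the corresponding prefix variable at depth \i, and we apply \ruleref{\S}-steps to discard the vacuous trailing prefix entries and reach an instance of \0; if \apreter' is an application or an abstraction, conflict-freeness forces \bpreter' to have the same head, and we apply \ruleref{@} or \ruleref{λ} and recurse. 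This construction never stalls, so every thread is completed. (Equivalently, completeness can be seen by translating a Kahrs-derivation back, which needs a small reorganisation so that each \sabsCRS-step is paired with a \CRSabs{\hspace*{1pt}}{}-step; the corecursive construction avoids this.)

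For \cref{prop:AlphaPreTer:EqTer:ii} the plan is to transfer (i) from preterms to \alphaequiv-classes. Collapsing every preterm occurring in an \AlphaPreTer-derivation to the \lambdaprefixcal-term it represents turns that derivation into an \EqTer-derivation, because the two systems have identical rule shapes and the side conditions (``the binder is vacuous'', i.e.\ the variable does not occur free) are \alphaequiv-invariant; conversely, a completed \EqTer-derivation lifts to a completed \AlphaPreTer-derivation by choosing preterm representatives corecursively along the derivation, always taking the actual bound-variable names to be exactly the prefix variables prescribed by the rule being applied — this choice is forced, hence coherent, and the resulting preterm judgments are valid \AlphaPreTer-instances with completedness preserved. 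Combining these with (i), and using the rule \ruleref{λ} to reduce an arbitrary judgment \prefixed{\vec{x}}{\M_0} = \prefixed{\vec{z}}{\N_0} to the closed-up judgment \prefixed{}{\abs{x_1}{\dots\abs{x_n}{\M_0}}} = \prefixed{}{\abs{z_1}{\dots\abs{z_n}{\N_0}}} (so that (i) applies), one obtains that \infderivablein{\EqTer}{\prefixed{\vec{x}}{\M_0} = \prefixed{\vec{z}}{\N_0}} holds iff \m{\abs{x_1}{\dots\abs{x_n}{\M_0}} \alphaequiv \abs{z_1}{\dots\abs{z_n}{\N_0}}}, which in turn holds iff \prefixed{\vec{x}}{\M_0} and \prefixed{\vec{z}}{\N_0} denote the same \lambdaprefixcal-term — the last equivalence being the micro-observation that, as far as \alphaequiv is concerned, the root prefix symbol behaves exactly like a block of leading abstractions.

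I expect the main obstacle to be bookkeeping rather than conceptual: in (i), verifying that the corecursive construction of the \AlphaPreTer-derivation from conflict-freeness really never stalls and yields a \emph{completed} (possibly infinite) derivation (or, on the other route, making precise the permutation that pairs \sabsCRS-steps with \CRSabs{\hspace*{1pt}}{}-steps in an infinite Kahrs-derivation); and in (ii), spelling out the corecursive lifting of an infinite \EqTer-derivation to an \AlphaPreTer-derivation together with the identification of equality of prefixed terms with \alphaequiv of their closed-up representatives. All of this is routine given the conflict-based definition of \alphaequiv and the known properties of \siCRSPretermAlpha{\Kahrs}, but it is the part that has to be written out with care.
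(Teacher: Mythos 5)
Your proof is correct and follows essentially the same route as the paper: part \cref{prop:AlphaPreTer:EqTer:i} is done by a derivation translation between \AlphaPreTer and the Kahrs-style system \siCRSPretermAlpha{\Kahrs} (the paper adds one instance of the rule for the symbol \sPrefixedCRS{0} at the bottom and otherwise matches rules one-for-one; you are in fact a bit more precise in noting that one \ruleref{λ}-instance of \AlphaPreTer corresponds to an \sabsCRS-rule instance followed by a \ruleref{\CRSabs{\hspace*{1pt}}{}}-instance in \siCRSPretermAlpha{\Kahrs}), and the ``only if'' direction of \cref{prop:AlphaPreTer:EqTer:ii} is done, exactly as in the paper, by corecursively lifting an \EqTer-derivation to a preterm-representative \AlphaPreTer-derivation using invertibility of the rules and then appealing to \cref{prop:AlphaPreTer:EqTer:i}. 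The one genuine divergence is the ``if'' direction of \cref{prop:AlphaPreTer:EqTer:ii}: you obtain the \EqTer-derivation by collapsing the completed \AlphaPreTer-derivation supplied by \cref{prop:AlphaPreTer:EqTer:i}, which makes the proof uniform in that both directions are routed through the preterm system; the paper instead builds a completed \EqTer-derivation of the conclusion \prefixed{}{\M} = \prefixed{}{\M} directly, by fair stepwise extension of finite derivations guided by an \extscope-delimiting strategy for \stRegARS, without invoking \cref{prop:AlphaPreTer:EqTer:i} for this direction. Both arguments are sound; yours requires the (easy but worth stating) observation that the \ruleref{\S}-side-conditions are \alphaequiv-invariant, so that collapsing preterms to terms sends rule instances to rule instances, while the paper's version trades that for a dependency on the decomposition machinery of \cref{sec:regular}. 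Your closing remark reducing arbitrary prefixed judgments to empty-prefix ones via \ruleref{λ} also covers the general statement, which the paper's proof only treats for empty prefixes.
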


\begin{proof}
	For statement \cref{prop:AlphaPreTer:EqTer:i} it suffices to show that, for an equation \m{\prefixedCRS{0}{}{\apreter} = \prefixedCRS{0}{}{\bpreter}} between preterms of \lambdaprefixcal, derivability in \AlphaPreTer coincides with derivability of this equation in the general proof system \siCRSPretermAlpha{\Kahrs} for α-equivalence between iCRS preterms in \cref{def:iCRSPretermAlphaKahrs}. Given a derivation \infDeriv in \AlphaPreTer, a derivation \m{\infDeriv_\Kahrs} in \siCRSPretermAlpha{\Kahrs} results by replacing each formula occurrence \m{\prefixedCRS{n}{\vecOneToN{x}}{\apreter} = \prefixedCRS{n}{\vecOneToN{y}}{\bpreter}} by the formula occurrence \m{\KahrsCRSabs{\vecOneToN{x}}{\apreter} = \KahrsCRSabs{\vecOneToN{y}}{\bpreter}} and adding an instance of the rule for the function symbol \sPrefixedCRS{0} at the bottom. Then instances of the axioms and rules (\0), (\m{@}), \ruleref{λ}, and \ruleref{\S} in \infDeriv correspond to instances of axioms and rules (\0), (\sappCRS), (\m{\CRSabs{\hspace*{1pt}}{}}), and \ruleref{\S} in \m{\infDeriv_\Kahrs}, respectively. This proof transformation also has an inverse.
	\par For ``⇐'' in statement \cref{prop:AlphaPreTer:EqTer:ii} it suffices to note: Every \extscope-delimiting strategy for \stRegARS can be used to stepwise extend finite derivations in \EqTer with conclusion \prefixed{}{\M} = \prefixed{}{\M} by one additional rule application above a leaf containing a formula \prefixed{y}{\N} = \prefixed{y}{\N} that is not an axiom \0, which implies that \prefixed{y}{\N} is not a normal form of \m{\red_\streg}. If these extensions are carried out in a fair manner by extending all non-axiom leafs at depth \n in the proof tree before proceeding with leafs at depth \m{>n}, then in the limit a completed derivation in \EqTer is obtained.
	\par For ``⇒'' in statement \cref{prop:AlphaPreTer:EqTer:ii}, suppose that \infDeriv is a completed derivation in \EqTer with conclusion \prefixed{}{\M} = \prefixed{}{\N}. Let \prefixedCRS{0}{}{\apreter} and \prefixedCRS{0}{}{\bpreter} be preterm representatives of \prefixed{}{\M} and \prefixed{}{\N}, respectively. Now a completed derivation \m{\infDeriv_{\text{pter}}} in \AlphaPreTer can be found by developing it step by step from the conclusion \m{\prefixedCRS{0}{}{\apreter} = \prefixedCRS{0}{}{\bpreter}} upwards, parallel to \infDeriv, and following the rules of \AlphaPreTer, which are invertible (that is, the premises of a rule instance are uniquely determined by the conclusion). Then \m{\infDeriv_{\text{pter}}} is a preterm representative version of \infDeriv. By using \cref{prop:AlphaPreTer:EqTer:i}, it follows that \m{\apreter \alphaequiv \bpreter}. Since \apreter and \bpreter are preterm representatives of \M and \N, respectively, \m{\M = \N} follows.
\end{proof}

\begin{definition}[the proof system \stRegeq]\label{def:stRegeq}
	The natural-deduction-style proof system \stRegeq for equality of strongly regular λ-terms has all the rules of the proof system \EqTer from \cref{def:AlphaPreTer:EqTer} and \cref{fig:EqTer}, and additionally, the rule \FIX in \cref{fig:stRegeq}. But contrary to the definition in \EqTer, provability of an equation \prefixed{\vec{x}}{\M} = \prefixed{\vec{x}}{\N} in \stRegeq is defined as the existence of a finite closed derivation with conclusion \prefixed{\vec{x}}{\M} = \prefixed{\vec{x}}{\N}.
\end{definition}

\begin{figure}
	\proofsystem{
		\begin{bprooftree}
			\axiom{[\prefixed{\vec{x}}{\M} = \prefixed{\vec{y}}{\N}]^u}
			\noLine
			\unaryInf{\Deriv_0}
			\noLine
			\unaryInf{\prefixed{\vec{x}}{\M} = \prefixed{\vec{y}}{\N}}
			\infLabel{\FIX,u~~\sideCondition{if \m{\depth{\Deriv_0} ≥ 1}}}
			\unaryInf{\prefixed{\vec{x}}{\M} = \prefixed{\vec{y}}{\N}}
		\end{bprooftree}
	}
	\caption{The rule \FIX, which is added to the rules of \EqTer from \cref{fig:EqTer} in order to obtain the proof system \stRegeq for equality of strongly regular λ-terms.}
	\label{fig:stRegeq}
\end{figure}

\begin{theorem}\label{thm:stRegeq}
	\stRegeq is sound and complete for equality between strongly regular λ-terms. That is, for all strongly regular λ-terms \M and \N it holds: \[ \derivablein{\stRegeq}{\M = \N} \hspace{4ex}\text{if and only if}\hspace{4ex} \M = \N . \]
\end{theorem}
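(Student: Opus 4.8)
The plan is to mirror the structure of the soundness-and-completeness proof for the term-formation system \EqTer (Proposition~\ref{prop:AlphaPreTer:EqTer}), adapting it to the finitary, assumption-discharging setting, much as \stReg relates to \iTer{\lambdacal}. For the direction ``$\Rightarrow$'' (soundness), suppose $\derivablein{\stRegeq}{\M = \N}$ via a finite closed derivation $\Deriv$. The key step is to \emph{unfold} $\Deriv$ into a completed (possibly infinite) derivation $\infDeriv$ in \EqTer with the same conclusion, by repeatedly replacing each instance of \FIX together with the subderivation above it by a copy of that subderivation, substituting the discharged assumptions by the whole construction; this is the standard passage from cyclic to infinitary proofs. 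The side-condition $\depth{\Deriv_0} \geq 1$ on instances of \FIX guarantees guardedness (exactly as in Proposition~\ref{prop:Reg:stReg}), so that every thread of $\infDeriv$ either ends at an axiom \0 or passes infinitely many instances of \ruleref{@}, \ruleref{\lambda}, or \ruleref{\S}, hence has length $\omega$; thus $\infDeriv$ is a \emph{completed} \EqTer-derivation. Then Proposition~\ref{prop:AlphaPreTer:EqTer}~\cref{prop:AlphaPreTer:EqTer:ii} yields $\M = \N$.

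For the direction ``$\Leftarrow$'' (completeness), assume $\M$ and $\N$ are strongly regular with $\M = \N$. Since $\M = \N$, we may take a single \extscope-delimiting strategy \astratplus for \stRegARS witnessing strong regularity of both, so that $\ST{\astratplus}{\M} = \ST{\astratplus}{\N}$ is finite; the pairs $\pair{\prefixed{\vec{x}}{\M'}}{\prefixed{\vec{x}}{\N'}}$ reachable by running \astratplus ``in parallel'' on the (equal) terms $\M$ and $\N$ then form a finite set. I would now imitate the construction in the proof of Lemma~\ref{lem:strategies:2:derivations:Reg:stReg}: build an increasing sequence $\Deriv_0, \Deriv_1, \dots$ of derivations in \EqTer with conclusion $\prefixed{}{\M} = \prefixed{}{\N}$, at each step extending above a marked assumption $(\prefixed{\vec{y}}{\M'} = \prefixed{\vec{y}}{\N'})^u$ that has no admissible repetition strictly beneath it, by the unique \EqTer-rule (\ruleref{@}, \ruleref{\lambda}, or \ruleref{\S}) dictated by the next step(s) of \astratplus on $\prefixed{\vec{y}}{\M'}$ (equivalently on $\prefixed{\vec{y}}{\N'}$, since they evolve in lock-step). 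Because the set of reachable pairs is finite, and because by Proposition~\ref{prop:grounded:cycle} every infinite \astratplus-rewrite-sequence contains a grounded cycle, the extension process terminates: an infinite proof tree would, by Kőnig's Lemma, have an infinite path corresponding to an infinite \astratplus-rewrite-sequence, whose grounded cycle provides a repetition, contradicting the termination condition. Finally, the open marked assumptions in the resulting finite $\Deriv_N$ are discharged by newly inserted instances of \FIX at the repetition points (the guardedness side-condition $\depth{\Deriv_0} \geq 1$ being satisfied since each cycle passes an \ruleref{@}, \ruleref{\lambda}, or \ruleref{\S} step), yielding a finite closed derivation in \stRegeq, so $\derivablein{\stRegeq}{\M = \N}$.

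The main obstacle I anticipate is the bookkeeping in the completeness direction: one must run the \extscope-delimiting strategy \emph{synchronously} on $\M$ and $\N$ and check that the choices it prescribes (in particular the placement of \ruleref{\S}-steps, which \astratplus fixes deterministically) are compatible with the side-condition on the \ruleref{\S}-rule of \EqTer --- namely that the top prefix variable is vacuous on \emph{both} sides. This should follow from $\M = \N$ together with the fact that \astratplus-steps are determined by the prefixed term up to the observed structure, but it needs care. A secondary subtlety is that \stRegeq is built on \EqTer (plain \ruleref{\S}, no \ruleref{\del}), so unlike the \stReg/\stRegzero situation there is no prefix-length side-condition on \FIX to exploit; instead termination of the extension process rests entirely on finiteness of the reachable-pair set and on the grounded-cycle property, which is why invoking Proposition~\ref{prop:grounded:cycle} (rather than a direct measure-decrease argument) is essential here.
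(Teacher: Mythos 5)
Your proof is correct, and the soundness direction (unfolding a finite closed \stRegeq-derivation into a completed infinitary \EqTer-derivation, with the $\depth{\Deriv_0} \geq 1$ side-condition on \FIX guaranteeing productivity/guardedness) is exactly the paper's argument. For completeness you take a slightly different route: the paper first invokes \cref{prop:AlphaPreTer:EqTer}~\cref{prop:AlphaPreTer:EqTer:ii} to obtain a completed infinite \EqTer-derivation of $\M = \N$ and then \emph{compresses} it, observing that strong regularity makes the set of generated subterms finite, so every infinite thread contains an equation repetition at which a \FIX-instance can be inserted; you instead build the finite derivation \emph{directly} bottom-up from a \extscope-delimiting strategy, in the style of \cref{lem:strategies:2:derivations:Reg:stReg}, and argue termination of the construction. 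Both rest on the same key fact (finiteness of $\ST{\astratplus}{\M}$ forces repetitions), so the difference is organisational rather than substantive; your version avoids the detour through the infinitary tree but redoes the bookkeeping of the bottom-up construction, and note that since \FIX in \stRegeq carries no prefix-length side-condition, termination already follows from the pigeonhole principle on the finite set of reachable equations --- the appeal to \cref{prop:grounded:cycle} is harmless but not needed here. Your worry about synchronising the \ruleref{\S}-steps on both sides also dissolves: since $\M = \N$ as terms, the strategy produces literally the same reducts on both sides, so the vacuousness side-condition of \ruleref{\S} in \EqTer is automatically met.
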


\begin{proof}[Proof sketch]\label{prf:lem:stRegeq}
	Let \M and \N be strongly regular λ-terms. In view of \cref{prop:AlphaPreTer:EqTer}~\cref{prop:AlphaPreTer:EqTer:ii}, it suffices to show:
	\begin{equation}\label{eq:prf:lem:stRegeq}
		\derivablein{\stRegeq}{\M = \N} \hspace{4ex}\text{if and only if}\hspace{4ex} \infderivablein{\EqTer}{\M = \N} .
	\end{equation}
	\par For showing ``⇐'' in \cref{eq:prf:lem:stRegeq}, let \infDeriv be a derivation in \EqTer with conclusion \prefixed{}{\M} = \prefixed{}{\N}. Since paths in \infDeriv correspond to \m{\red_\streg}-rewrite-sequences, and since the number of generated subterms of both \M and \N are finite (as a consequence of their strong regularity), on every infinite thread equation repetitions occur. These repetitions can be used to cut all infinite threads by appropriate introductions of instances of \FIX in order to obtain a finite and closed derivation in \stRegeq with the same conclusion.
	\par For showing ``⇒'' in \cref{eq:prf:lem:stRegeq}, let \Deriv be a closed derivation in \stRegeq with conclusion \prefixed{}{\M} = \prefixed{}{\N}. Now \Deriv can be unfolded into an infinite derivation \infDeriv in \EqTer by repeatedly removing a bottommost instance of \FIX and inserting its immediate subderivation above each of the marked assumptions the instance discharges. If this process is organised in a fair manner with respect to bottommost instances of \FIX, then in the limit an infinite completed proof tree with conclusion \prefixed{}{\M} = \prefixed{}{\N} in \EqTer is obtained. For productivity of this process it is decisive that the side-condition on every instance \ainst of the rule \FIX guarantees that on threads from the conclusion of \ainst to a marked assumption that is discharged by \ainst at least one instance of a rule different from \FIX is passed.
\end{proof}

\begin{figure}
	\proofsystem{
		\begin{bprooftree}
			\axiom{\set{[\prefixed{\vec{x}}{\aconstname_{f_i}} ]^{\amarker_i}}_{i=1,\dots,n}}
			\noLine
			\unaryInf{\Deriv_j}
			\noLine
			\unaryInf{\set{\dots ~ \prefixed{\vec{x}}{\subst{\L_j}{\vec{f}}{\vec{\aconstname}_{\vec{f}}}} ~ \dots}_{j=0,\dots,n}}
			\infLabel{\FIXletrec, \amarker_1,\dots,\amarker_n}
			\unaryInf{\prefixed{\vec{x}}{\letin{f_1 = \L_1 \dots f_n = \L_n}{\L_{0}}}}
		\end{bprooftree}
		\proofpar where \m{\aconstname_{f_1}, \dots,\aconstname_{f_n}} are distinct constants fresh for \m{\L_1,\dots,\L_n}, and substitutions \m{\subst{\L_j}{\vec{f}}{\vec{\aconstname}_{\vec{f}}}} stands for \m{\L_j [f_1 := \aconstname_{f_1}, \dots, f_n := \aconstname_{f_n}]}.
		\proofpar \emph{side-conditions}: \m{\length{\vec{y}} ≥ \length{\vec{x}}} holds for the prefix length of every \prefixed{\vec{y}}{\N} on a thread in \m{\Deriv_j} for \m{j ∈ \set{0,\dots,n}} from an open assumptions \m{( \prefixed{\vec{x}}{\aconstname_{f_i}})^{\amarker_i}} downwards; for bottommost instances: the arising derivation is guarded on access path cycles.
	}
	\caption{The rule \FIXletrec for the natural-deduction style proof systems \Regletrec and \stRegletrec on \lambdaletrec-terms.}
	\label{fig:Regletrec:stRegletrec}
\end{figure}

For the purpose of the following definition and the respective proof system in \cref{fig:Regletrec:stRegletrec} we extend the signature \sigCRSletrec of \Regletrec and \stRegletrec by an infinite set of constants for which we use the symbol \aconstname as syntactical variables which frequently carry index subscripts.


\begin{definition}[proof systems \Regletrec, \stRegletrec, and \ugRegletrec, \ugstRegletrec]\label{def:Regletrec:stRegletrec}
	The proof systems \stRegletrec and \Regletrec for \lambdaletrec-terms arise from the proof systems \stReg and \Reg (see \cref{def:Reg:stReg:stRegzero}, \cref{fig:stReg:stRegzero} and \cref{fig:Reg}), respectively, by replacing the terms in the axioms (\0), and rules \ruleref{λ}, \ruleref{@}, \ruleref{\S} and \ruleref{\del} through \lambdaletrec-terms with abstraction prefixes accordingly, and by replacing the rule \FIX with the rule \FIXletrec in \cref{fig:Regletrec:stRegletrec}. The side-condition concerning access path cycles on the derivation arising by an instance of \FIXletrec pertains only to bottommost occurrences of this rule, and is explained below. By \FIXletrecmin we mean the variant of the rule \FIXletrec in which the side-condition concerning guardedness of the arising derivation on access path cycles has been dropped. By \ugRegletrec/\ugstRegletrec we denote the variants of \Regletrec/\stRegletrec, respectively, in which the rule \FIXletrec is replaced by the rule \FIXletrecmin.
	\par Let \Deriv be a derivation in one of these proof systems. By an \emph{access path} of \Deriv we mean a (possibly cyclic) path \apath in \Deriv such that:
	\begin{enumerate}[(a)]
		\item\label{access:path:i} \apath starts at the conclusion and can proceed in upwards direction;
		\item\label{access:path:ii} at instances of \ruleref{@}, \apath can step from the conclusion to one of the premises;
		\item\label{access:path:iii} at instances of \FIXletrec, \apath can step from the conclusion to the rightmost premise (which corresponds to the body of the \Let-expression);
		\item\label{access:path:iv} when arriving at a marked assumption \m{(\prefixed{\vec{x}}{\aconstname_{f_i}})^{\amarker_i}} that is discharged at an application of \FIXletrec of the form as displayed in \cref{fig:Regletrec:stRegletrec}, \apath can step over to the conclusion \prefixed{\vec{x}}{\subst{\L_i}{\vec{f}}{\vec{\aconstname}_{\vec{f}}}} of the subderivation \m{\Deriv_i} of that application of \FIXletrec, and proceed from there, again in upwards direction.
	\end{enumerate}
	For every formula occurrence \o in \Deriv, by a \emph{relative access path} from \o we mean a path with the properties \cref{access:path:ii}--\cref{access:path:iv} that starts at \o and proceeds in upwards direction. An access path (or relative access path) in \Deriv is \emph{cyclic} if there is a formula occurrence in \Deriv that is visited more than once.
	\par Now we say that \Deriv is \emph{guarded on access path cycles} if every cyclic access path contains, on each of its cycles, at least one \emph{guard}, that is, an instance of a rule \ruleref{λ} or \ruleref{@}. We say that \Deriv is \emph{guarded} if every relative access path contains a guard on each of its cycles.
\end{definition}

\begin{example}
	The \lambdaletrec-term \m{\L_1 = \letin{f = f \eqsep g = \abs{x}{x}}{\abs{y}{g}}} admits the following closed derivation \m{\Deriv_1} in \Regletrec/\stRegletrec:
	\begin{prooftree}
		\axiom{(\prefixed{}{\aconstname_g})^{\amarker_2}}
		\infLabel{\del/\S}
		\unaryInf{\prefixed{y}{\aconstname_g}}
		\infLabel{λ}
		\unaryInf{\prefixed{}{\abs{y}{\aconstname_g}}}
		\axiom{(\prefixed{}{\aconstname_f})^{\amarker_1}}
		\emptyAxiom
		\infLabel{\0}
		\unaryInf{\prefixed{x}{x}}
		\infLabel{λ}
		\unaryInf{\prefixed{}{\abs{x}{x}}}
		\infLabel{\FIXletrec, \amarker_1, \amarker_2}
		\ternaryInf{\prefixed{}{\letin{f = f \eqsep g = \abs{x}{x}}{\abs{y}{g}}}}
	\end{prooftree}
	This derivation can be built in a straightforward way, from the bottom upwards. Note that \Deriv is guarded on access path cycles, and hence that the instance of \FIXletrec at the bottom is a valid one, because: \Deriv does not possess any cyclic access paths. In particular, there is no access path in \m{\Deriv_1} that reaches the first premise of the instance of \FIXletrec: this premise is the starting point of an unguarded relative access path, which entails that \m{\Deriv_1} itself is not guarded.
	\par Now consider the \lambdaletrec-term \m{\L_2 = \letin{f = f \eqsep g = \abs{x}{x}}{\abs{y}{\app{f}{g}}}}. When trying to construct a derivation in \Regletrec/\stRegletrec for this term in a bottom-up manner, one arrives at the closed derivation \m{\Deriv_2} in \ugRegletrec/\ugstRegletrec: 
	\begin{prooftree}
		\axiom{(\prefixed{}{\aconstname_f})^{\amarker_1}}
		\infLabel{\del/\S}
		\unaryInf{\prefixed{y}{\aconstname_f}}
		\axiom{(\prefixed{}{\aconstname_g})^{\amarker_2}}
		\infLabel{\del/\S}
		\unaryInf{\prefixed{y}{\aconstname_g}}
		\infLabel{@}
		\binaryInf{\prefixed{y}{\app{\aconstname_f}{\aconstname_g}}}
		\infLabel{λ}
		\unaryInf{\prefixed{}{\abs{y}{\app{\aconstname_f}{\aconstname_g}}}}
		\axiom{\hspace{-7ex}(\prefixed{}{\aconstname_f})^{\amarker_1}}
		\emptyAxiom
		\infLabel{\0}
		\unaryInf{\prefixed{x}{x}}
		\infLabel{λ}
		\unaryInf{\hspace{-1ex}\prefixed{}{\abs{x}{x}}}
		\infLabel{\FIXletrecmin, \amarker_1, \amarker_2}
		\ternaryInf{\prefixed{}{\letin{f = f \eqsep g = \abs{x}{x}}{\abs{y}{\app{f}{g}}}}}
	\end{prooftree}
	However, \m{\Deriv_2} is not a valid derivation in \Regletrec/\stRegletrec, as the inference step at the bottom is
	an instance of \FIXletrecmin, but not of \FIXletrec,
	because the side-condition on the arising derivation to be guarded on access path cycles is not satisfied:
	now there is an access path that reaches the first premise of the derivation and that continues looping on
	this an unguarded cycle.
	Since the bottom-up search procedure for derivations is deterministic in this case,
	it follows that \prefixed{}{\L_2} is not derivable in \Reg nor in \stReg.
\end{example}

Similar as the correspondence, stated by \cref{prop:derivationpaths:2:rewritesequences:Reg:stReg},
between (possibly cyclic) paths in a derivation in \Reg and \stReg starting at the conclusion
and rewrite sequences with respect to \m{\red_\reg} and \m{\red_\streg} on the infinite term in the conclusion,
there is also the following correspondence between access paths in a derivation in \Regletrec and \stRegletrec,
and rewrite sequences with respect to \m{\red_\reg} and \m{\red_\streg} on the \lambdaletrec-term in the conclusion.

\begin{proposition}\label{prop:Regletrec:stRegletrec}
	Let \Deriv be a derivation in \Regletrec or \ugRegletrec (in \stRegletrec or in \ugstRegletrec) with conclusion \prefixed{\vec{x}}{\L}.
	\par Then every access path in \Deriv to an occurrence \o of a term \prefixed{\vec{y}}{P} corresponds to a \m{\red_\reg}-rewrite-sequence \prefixed{\vec{x}}{\L} \m{\mred_\reg} \prefixed{\vec{y}}{\letin{B}{\tilde{P}}} (to a \m{\red_\streg}-rewrite-sequence \prefixed{\vec{x}}{\L} \m{\mred_\streg} \prefixed{\vec{y}}{\letin{B}{\tilde{P}}}), where \B arises as the union of all outermost binding groups in conclusions of instances of \FIXletrec below \o, and \prefixed{\vec{y}}{P} = \prefixed{\vec{y}}{\subst{\tilde{P}}{\vec{f}}{\vecsub{\aconstname}{\vec{f}}}} where \vec{f} is comprised of the function variables occurring in \B and \vecsub{\aconstname}{\vec{f}} distinct constants for \vec{f} as chosen by \Deriv. More precisely:
	\begin{enumerate}[(a)]
		\item a pass over an instance of \FIXletrec corresponds to an empty or \m{\red_{\unfoldpre{\merg}}}-step, dependent on whether the instance is the bottommost \FIXletrec-instance or not;
		\item a pass over an instance of the rule \ruleref{@} to the left/to the right corresponds to a \m{\red_{@_0}}-step/\m{\red_{@_1}}-step, which, if the application is somewhere above an instance of \FIXletrec, has to be preceded by a \m{\red_{\unfoldpre{@}}}-step;
		\item a pass over an instance of the rule \ruleref{λ} corresponds to a \m{\red_λ}-step which, if the application is above an instance of \FIXletrec, has to be preceded by a \m{\red_{\unfoldpre{λ}}}-step;
		\item a pass over an instance of the rule \ruleref{\del}/\ruleref{\S} corresponds to a \m{\red_\del}/\m{\red_\S}-step, possibly preceded by an application of \m{\red_{\unfoldpre{\reduce}}}.
		\item a step from a marked assumption to a premise of a \FIXletrec-instances, a step as described in \cref{access:path:iii} of the definition of access paths, corresponds to an \m{\red_{\unfoldpre{\rec}}}-step followed by a \m{\red_{\unfoldpre{\reduce}}}-step.
	\end{enumerate}
\end{proposition}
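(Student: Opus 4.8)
The plan is an induction on the length $k$ of a finite access path in $\Deriv$ from the conclusion $\prefixed{\vec{x}}{\L}$ to an occurrence $\o$ of a term $\prefixed{\vec{y}}{P}$; the assertion for (possibly cyclic) infinite access paths then follows by applying the finite case to every initial segment. I will argue for derivations in $\Regletrec$ and $\ugRegletrec$ together with $\mred_\reg$; the case of $\stRegletrec$ and $\ugstRegletrec$ with $\mred_\streg$ is entirely analogous, with $\ruleref{\del}$-instances and $\red_\del$-steps replaced by $\ruleref{\S}$-instances and $\red_\S$-steps throughout (note that $\FIXletrecmin$ differs from $\FIXletrec$ only by a side-condition, so the local case analysis is insensitive to the ``$\mathbf{ug}$''-distinction).

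The induction will carry the following refinement of the claim as its invariant. Let $\ainst_1,\dots,\ainst_m$ be, in bottom-up order, the $\FIXletrec$-instances whose conclusions lie below $\o$ (equivalently, those whose body the access path has ascended into), let $B_1,\dots,B_m$ be their binding groups, and let $\sigma$ be the renaming that sends, for each $j$, the fresh constants $\Deriv$ chose for $\ainst_j$ to the function variables of $B_j$. Put $B := \sigma(B_1) \cup \dots \cup \sigma(B_m)$ (``the union of all outermost binding groups in conclusions of $\FIXletrec$-instances below $\o$'', with enclosing constants renamed back) and $\tilde{P} := \sigma(P)$, so that $P$ is recovered from $\tilde{P}$ by the inverse substitution of constants for function variables; the invariant is that there is a rewrite sequence $\prefixed{\vec{x}}{\L} \mred_\reg \prefixed{\vec{y}}{\letin{B}{\tilde{P}}}$ in $\RegletrecCRS$ (with the convention that an empty binding group is dropped) whose steps follow the correspondence (a)--(e), the only steps beyond those listed there being auxiliary $\red_{\unfoldpre{\reduce}}$-steps that discard bindings of $B$ no longer reachable from $\tilde{P}$. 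The base case $k=0$ is immediate: $\o$ is the conclusion, $m=0$, $B$ is empty, $\tilde{P}=P=\L$, and the rewrite sequence is empty.

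For the induction step I would extend a length-$k$ access path by one admissible continuation at $\o$ and distinguish the four cases of the access-path definition. If the continuation passes a $\FIXletrec$-instance (so $P$, hence $\tilde{P}$, is a $\Let$-expression), append nothing when that instance is the bottommost one and a single $\red_{\unfoldpre{\merg}}$-step otherwise, merging the outermost $\Let$ of $\tilde{P}$ into $B$; either way the new occurrence carries the body of $P$ with the new binding group's function variables replaced by their constants, and the invariant is restored with $B$ enlarged by that binding group. If it passes an instance of $\ruleref{@}$ (so $\tilde{P}=\app{\tilde{Q}_0}{\tilde{Q}_1}$), apply $\red_{\unfoldpre{@}}$ (omitted when $B$ is empty) to push $B$ over the application, then a $\red_{@_0}$- or $\red_{@_1}$-step according to the chosen premise; symmetrically, for a $\ruleref{λ}$-instance (so $\tilde{P}=\abs{z}{\tilde{Q}_0}$) apply $\red_{\unfoldpre{λ}}$ followed by a $\red_λ$-step extending the prefix. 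For a $\ruleref{\del}$-instance removing a vacuous $\lambda x_i$, first perform one $\red_{\unfoldpre{\reduce}}$-step garbage-collecting the bindings of $B$ in which $x_i$ still occurs but that are unreachable from $\tilde{P}$ (so that $x_i$ genuinely disappears from $\letin{B}{\tilde{P}}$), then the $\red_\del$-step. Finally, for a jump from a discharged marked assumption $(\prefixed{\vec{y}}{\aconstname})^{\amarker}$ to the conclusion of the sibling subderivation defining the associated function variable $f$, observe that $\tilde{P}=f$ and that $B$ contains the renamed binding $f = \hat{L}_f$; apply $\red_{\unfoldpre{\rec}}$ to reach $\letin{B}{\hat{L}_f}$ followed by a $\red_{\unfoldpre{\reduce}}$-step, and check that $\hat{L}_f$ is exactly $\sigma$ applied to the right-hand side of $f$ as it appears (with constants substituted) in the conclusion of that subderivation, so that the invariant holds with the same $\FIXletrec$-instances below the new occurrence.

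The work is essentially bookkeeping rather than anything deep, and two points need care. First, one must track precisely which fresh constants each $\FIXletrec$-instance below $\o$ introduced, so that the renaming $\sigma$, the merged binding group $B$, and the prefix $\vec{y}$ are unambiguously determined along the path; this uses that the $\FIXletrec$-instances below $\o$ in the derivation tree are exactly the ones whose body the access path has ascended into. Second, the clean description of $B$ as the union of outermost binding groups has to be reconciled with the garbage-collecting $\red_{\unfoldpre{\reduce}}$-steps that the $\ruleref{\del}$-case and the jump-case force: these can prune unreachable bindings from $B$, so $B$ must be read up to such pruning, which is harmless since a $\red_{\unfoldpre{\reduce}}$-step is a $\red_\unfold$-step and hence preserves the (unique, by \cref{prop:unf-confluence}) infinite unfolding. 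The jump case is the conceptually least routine local move, as it is precisely where the derivation's act of ``calling a function'' is turned into the unfolding step $\red_{\unfoldpre{\rec}}$ that copies a right-hand side out of $B$.
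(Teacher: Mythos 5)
The paper offers no proof of this proposition at all --- it is stated as the \letrec-analogue of \cref{prop:derivationpaths:2:rewritesequences:Reg:stReg} and left to the reader --- so your induction along the access path, with its invariant tracking the accumulated binding group \B and the constant-to-function-variable renaming, is the argument the paper intends, and your case analysis lines up with items (a)--(e) of the statement.

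There is, however, one load-bearing claim you assert without argument, and it is precisely the point where the side-condition on \FIXletrec has to do real work. In the \ruleref{\del}/\ruleref{\S}-case you remove ``the bindings of \B in which $x_i$ still occurs but that are unreachable from $\tilde{P}$'' and conclude parenthetically that $x_i$ then disappears from \m{\letin{B}{\tilde{P}}}. That presupposes that \emph{every} binding of \B whose right-hand side contains $x_i$ is unreachable from $\tilde{P}$; the side-condition of the \ruleref{\del}-instance only says that $x_i$ does not occur in $P$ itself, and says nothing about the right-hand sides stored in \B. The claim is true, but only because of the prefix-length side-condition on \FIXletrec: on a thread from a \FIXletrec-instance up to a marked assumption it discharges, no variable of the instance's conclusion prefix can ever be deleted (a deleted variable can never re-enter a prefix, so the assumption's prefix would come out wrong), hence every prefix on such a thread contains the whole conclusion prefix. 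Chasing an occurrence of $\aconstname_{f}$ in $P$ up to its marked assumption then shows that the deleted $x_i$ is absent from the conclusion prefix of the instance defining $f$ and hence from \freevars{\L_{f}}, and iterating along a reachability chain (successive conclusion prefixes being contained in one another) extends this to every reachable binding. Without this, the \m{\red_\del}-step --- whose CRS formulation requires $x_i$ to be literally absent from the entire body \m{\letin{B}{\tilde{P}}} --- need not be applicable and the induction step fails; the same argument is what lets the single \m{\red_{\unfoldpre{\reduce}}}-step in your jump case discard all binding groups accumulated strictly above the discharging instance. Your residual worry about \B shrinking below ``the union of all outermost binding groups'' is best resolved not by reading \B up to pruning but by choosing the index set $I$ in the \m{\red_{\unfoldpre{\reduce}}}-steps non-minimally, retaining every binding that does not block the subsequent step.
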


\begin{example}\label{ex:stRegletrec}
	The \lambdaletrec-term \m{\letin{f = \letin{g = f}{g}}{f}} from \cref{ex:non-unfoldable} does not unfold to a λ-term, as can be regognised considering this derivation:
	\begin{prooftree}
		\axiom{(\aconstname_f)^{\amarker}}
		\axiom{(\aconstname_f)^{\amarker}}
		\axiom{(\aconstname_g)^{\bmarker}}
		\insertBetweenHyps{\hspace*{5ex}}
		\infLabel{\FIXletrec, \bmarker}
		\binaryInf{\prefixed{}{\letin{g = f}{g}}}
		\infLabel{\FIXletrecmin, \amarker}
		\binaryInf{\prefixed{}{\letin{f = \letin{g = f}{g}}{f}}}
	\end{prooftree}
	The instance of \FIXletrecmin at the bottom is not an instance of \FIXletrec, since it is not guarded (has an unguarded cyclic access path that reaches and cycles on the left premise of the instance of \FIXletrec.
\end{example}

\begin{lemma}\label{lem:Regletrec:stRegletrec}
	Let \Deriv be a closed derivation in \ugRegletrec (in \ugstRegletrec) with conclusion \prefixed{}{\L}. Then there exists a scope-delimiting (\extscope-delimiting) strategy \m{\astrat_{\Deriv}} for \RegletrecARS (for \stRegletrecARS) with the following properties:
	\begin{enumerate}[(i)]
		\item\label{lem:Regletrec:stRegletrec:i}
			Every access path in \Deriv corresponds to a rewrite sequence with respect to \m{\astrat_{\Deriv}} starting on \prefixed{}{\L} in the sense of \cref{prop:Regletrec:stRegletrec}.
		\item\label{lem:Regletrec:stRegletrec:ii}
			Every rewrite sequence that starts on \prefixed{}{\L} and proceeds according to \m{\astrat_{\Deriv}} corresponds to an access path in \Deriv with correspondences as described in \cref{prop:Regletrec:stRegletrec}.
		\item\label{lem:Regletrec:stRegletrec:iii}
			\m{\ST{\astrat_{\Deriv}}{\L} = \setcompr{\prefixed{\vec{y}}{\letin{B}{\tilde{P}}}}{\multilinebox{\prefixed{\vec{y}}{\letin{B}{\tilde{P}}} arises from an \\ occurrence of \prefixed{\vec{y}}{P} on an \\ access path of \Deriv as described \\ in \cref{prop:Regletrec:stRegletrec}}}}. \\As a consequence of that \Deriv is finite, \L is \m{\astrat_{\Deriv}}-regular.
		\item\label{lem:Regletrec:stRegletrec:iv}
			\L is \m{\astrat_\Deriv}-productive ⇔ \Deriv is guarded (i.e.\ \Deriv derivation in \Regletrec (\stRegletrec)).
	\end{enumerate}
\end{lemma}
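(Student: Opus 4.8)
The plan is to follow the blueprint of the proof of \cref{lem:derivations:Reg:stReg:2:strategies}, with \cref{prop:Regletrec:stRegletrec} -- the dictionary between access paths in a \Regletrec/\stRegletrec-derivation and rewrite sequences in \RegletrecARS/\stRegletrecARS -- taking over the role played there by \cref{prop:derivationpaths:2:rewritesequences:Reg:stReg}. First I would decorate the given closed derivation \Deriv in \ugRegletrec (in \ugstRegletrec) with position labels, obtaining a derivation $\Derivlab$ in the evident labelled variant of that system, just as in the passage from \Deriv to $\Derivlab$ in the earlier proof, now additionally recording at every instance of \FIXletrec which marked assumptions it discharges and the subderivations into which their access-path continuations lead. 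Then I would pass to a labelled version of the ARS \RegletrecARS (of \stRegletrecARS), fix an arbitrary history-free lazy-unfolding scope-delimiting (\extscope-delimiting) strategy $\astrat$ for it, and define the history-aware strategy $\m{\astrat_{\Deriv}}$ by the \emph{on-$\Derivlab$} / \emph{not-on-$\Derivlab$} split of \cref{lem:derivations:Reg:stReg:2:strategies}: on a labelled term occurring in $\Derivlab$, $\m{\astrat_{\Deriv}}$ prescribes exactly the block of steps that \cref{prop:Regletrec:stRegletrec} attaches to the rule instance having that term in its conclusion -- an $\m{\red_{@_0}}$- and an $\m{\red_{@_1}}$-step at a conclusion of \ruleref{@}, an $\m{\red_\lambda}$-step at \ruleref{\lambda}, an $\m{\red_\del}$-/$\m{\red_\S}$-step at \ruleref{\del}/\ruleref{\S}, a $\m{\red_{\unfoldpre{\merg}}}$-step at a non-bottommost \FIXletrec and an empty step at the bottommost one, each prefixed by the $\m{\red_{\unfoldpre{@}}}$-, $\m{\red_{\unfoldpre{\lambda}}}$-, or $\m{\red_{\unfoldpre{\reduce}}}$-steps required by \cref{prop:Regletrec:stRegletrec}, and an $\m{\red_{\unfoldpre{\rec}}}$-then-$\m{\red_{\unfoldpre{\reduce}}}$-step at a discharged marked assumption $\m{(\prefixed{\vec{x}}{\aconstname_{f_i}})^{\amarker_i}}$; on a labelled term not occurring in $\Derivlab$, $\m{\astrat_{\Deriv}}$ copies $\astrat$. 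The genuinely new bookkeeping layer, absent from \cref{lem:derivations:Reg:stReg:2:strategies}, is that formulas in \Deriv carry the fresh constants $\m{\aconstname_{f_i}}$ and must be read, via \cref{prop:Regletrec:stRegletrec}, as the genuine \RegletrecARS-objects $\m{\prefixed{\vec{y}}{\letin{B}{\tilde{P}}}}$ obtained by reinstating the binding groups $B$ accumulated by the \FIXletrec-instances below the occurrence; keeping eager application of \unfrule{\reduce} compatible with this reading is precisely the purpose of the $\m{\red_{\unfoldpre{\reduce}}}$-prefixes supplied by \cref{prop:Regletrec:stRegletrec}.

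I would then verify, exactly as in \cref{lem:derivations:Reg:stReg:2:strategies}, that $\m{\astrat_{\Deriv}}$ is a scope-delimiting (\extscope-delimiting) strategy in the sense of \cref{def:scope:delimiting:strat:Regletrec:stRegletrec} -- deterministic $\m{\red_\lambda}$-, $\m{\red_\del}$-/$\m{\red_\S}$- and letrec-unfolding steps, eager \unfrule{\reduce}, lazy unfolding, and the only residual non-determinism the $\m{\red_{@_0}}$/$\m{\red_{@_1}}$ branch at conclusions of \ruleref{@} -- since each rule instance dictates a uniquely determined such block of steps and $\astrat$ was chosen with these properties. The only non-formal point in showing that $\m{\astrat_{\Deriv}}$ is a genuine strategy, that every normal form of it is a normal form of the labelled ARS, is the same case split as before: a non-normal-form labelled term that occurs in $\Derivlab$ is the conclusion of a rule instance, or of a chain of successive \FIXletrec-instances topped -- by the side-condition $\m{\depth{\Deriv_0} \geq 1}$ -- by such an instance, or a discharged marked assumption whose binding \FIXletrec-conclusion is again an occurrence, so a step is available; if it does not occur in $\Derivlab$, then $\astrat$ supplies a step. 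Items \cref{lem:Regletrec:stRegletrec:i}, \cref{lem:Regletrec:stRegletrec:ii} and \cref{lem:Regletrec:stRegletrec:iii} are then immediate from the construction together with \cref{prop:Regletrec:stRegletrec}; in particular $\m{\ST{\astrat_{\Deriv}}{\L}}$ is finite -- so \L is $\m{\astrat_{\Deriv}}$-regular -- because the finite proof tree \Deriv has only finitely many formula occurrences.

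For \cref{lem:Regletrec:stRegletrec:iv} I would argue by contraposition on both sides, using \cref{lem:Regletrec:stRegletrec:i}, \cref{lem:Regletrec:stRegletrec:ii} and \cref{lem:Regletrec:stRegletrec:iii}; here ``\Deriv is guarded'' is, as the parenthetical in the statement indicates, the property that \Deriv is a valid derivation in \Regletrec/\stRegletrec, which for the closed \Deriv of \ugRegletrec/\ugstRegletrec means that \Deriv is guarded on its access path cycles. If \Deriv is not guarded on access path cycles, pick a cyclic access path from the conclusion carrying a cycle with no instance of \ruleref{\lambda} or \ruleref{@}; following it to that cycle and then looping around the cycle forever is again an access path, hence by \cref{lem:Regletrec:stRegletrec:i} it projects to an infinite $\m{\red_{\astrat_{\Deriv}}}$-rewrite-sequence on $\m{\prefixed{}{\L}}$ -- infinite because each traversal of the cycle passes at least one step from a discharged marked assumption into its binding subderivation, hence contributes an $\m{\red_{\unfoldpre{\rec}}}$-step -- whose steps past the initial segment are never $\m{\red_{@_0}}$-, $\m{\red_{@_1}}$- or $\m{\red_\lambda}$-steps (these arise only from passes over \ruleref{@} or \ruleref{\lambda}); so \L is not $\m{\astrat_{\Deriv}}$-productive. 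Conversely, if \L is not $\m{\astrat_{\Deriv}}$-productive, take an infinite $\m{\red_{\astrat_{\Deriv}}}$-rewrite-sequence on $\m{\prefixed{}{\L}}$ with only finitely many $\m{\red_{@_0}}$-, $\m{\red_{@_1}}$- and $\m{\red_\lambda}$-steps; by \cref{lem:Regletrec:stRegletrec:ii} it corresponds to an infinite access path in \Deriv which, \Deriv being finite, is cyclic, and whose tail -- which corresponds to the tail of the rewrite sequence free of $\m{\red_{@_0}}$-, $\m{\red_{@_1}}$- and $\m{\red_\lambda}$-steps, since a pass over \ruleref{\lambda} or \ruleref{@} always yields a $\m{\red_\lambda}$- or an $\m{\red_{@_i}}$-step -- contains a guard-free cycle; hence \Deriv is not guarded on access path cycles.

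I expect the main obstacle to be the set-up in the first paragraph: defining $\m{\astrat_{\Deriv}}$ on a labelled version of \RegletrecARS/\stRegletrecARS while threading the constant-reinstating correspondence of \cref{prop:Regletrec:stRegletrec} through the whole construction and reconciling it with eager application of \unfrule{\reduce}, so that the normal-form condition and both correspondence items \cref{lem:Regletrec:stRegletrec:i} and \cref{lem:Regletrec:stRegletrec:ii} really hold in the presence of $\m{\red_\unfold}$-steps and of the fresh constants $\m{\aconstname_{f_i}}$. Once that apparatus is in place, items \cref{lem:Regletrec:stRegletrec:i}, \cref{lem:Regletrec:stRegletrec:ii} and \cref{lem:Regletrec:stRegletrec:iii} go through as for \cref{lem:derivations:Reg:stReg:2:strategies}, and \cref{lem:Regletrec:stRegletrec:iv} is the short combinatorial argument above.
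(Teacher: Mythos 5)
Your proposal is correct and takes essentially the same route as the paper: the paper's proof is a two-sentence sketch that says to extract $\m{\astrat_{\Deriv}}$ from \Deriv "similar as in the proof of \cref{lem:derivations:Reg:stReg:2:strategies}" and to obtain \cref{lem:Regletrec:stRegletrec:iv} from the access-path/rewrite-sequence correspondence of \cref{prop:Regletrec:stRegletrec}, which is exactly the blueprint you follow. Your write-up merely fills in the details the paper leaves implicit (the labelled-ARS construction, the normal-form check, and the two-directional contraposition for \cref{lem:Regletrec:stRegletrec:iv}), and those details are sound.
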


\begin{proof}
	Given a closed derivation \Deriv with conclusion \prefixed{}{\L} (for example) in \ugstRegletrec, a \extscope-delimiting strategy \m{\astrat_{\Deriv}} for \stRegletrecARS such that \cref{lem:Regletrec:stRegletrec:i}--\cref{lem:Regletrec:stRegletrec:iv} hold can be extracted from \Deriv similar as in the proof of a \extscope-delimiting strategy \m{\astrat_{\Deriv}} in \stRegARS was extracted from a closed derivation in \stReg. That the extracted strategy \m{\astrat_{\Deriv}} is productive (not productive) for \L if \Deriv is guarded (not guarded) can be seen by the fact that \m{\astrat_{\Deriv}}-rewrite-sequences correspond to access paths of \Deriv in the sense as stated by \cref{prop:Regletrec:stRegletrec}.
\end{proof}

Now we will prove that derivability in \ugRegletrec/\ugstRegletrec is guaranteed for all \lambdaletrec-terms, and that derivability in \Regletrec/\stRegletrec is a property of a \lambdaletrec-term that is decidable by an easy parsing process.

\begin{proposition}\label{prop:derivability:Regletrec:stRegletrec}
	The following statements hold:
	\begin{enumerate}[(i)]
		\item\label{prop:derivability:Regletrec:stRegletrec:i}
			For every	\lambdaletrec-term \L, \prefixed{}{\L} is derivable both in \ugRegletrec and in \ugstRegletrec.
		\item\label{prop:derivability:Regletrec:stRegletrec:ii}
			For every \lambdaletrec-term \L, derivability of \prefixed{}{\L} in \stRegletrec
		is decidable in at most quadratic time in the size of \L.
	\end{enumerate}
\end{proposition}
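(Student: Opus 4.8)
The plan is to prove \cref{prop:derivability:Regletrec:stRegletrec:i} by a deterministic bottom-up proof-search procedure. First I would fix, once and for all, the \emph{canonical} lazy-unfolding, eager-$\extscope$ strategy $\astrat$ for $\stRegletrecARS$ (eager application of $\unfrule{\reduce}$ and of $\ruleref{\S}$, root-only use of the remaining unfolding rules, and the $\ruleref{@_0}/\ruleref{@_1}$ branching left as the sole remaining nondeterminism), and analogously a canonical strategy for $\RegletrecARS$. By the construction underlying \cref{lem:Regletrec:stRegletrec} (run in reverse, i.e.\ extracting a closed derivation from a finite generated sub-ARS by cutting cycles, exactly as a derivation in $\stRegzero$ was built from a finite generated sub-ARS in the proof of \cref{lem:strategies:2:derivations:Reg:stReg}), it then suffices to show that for every $\lambda_{\letrec}$-term $\L$ the set $\ST{\astrat}{\L}$ is finite; from such a finite generated sub-ARS one obtains a finite closed derivation of $\prefixed{}{\L}$ in $\ugstRegletrec$ (resp.\ in $\ugRegletrec$), where the use of $\FIXletrecmin$ — which carries no guardedness side-condition — makes every repeated sequent eligible to be closed.

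So the heart of \cref{prop:derivability:Regletrec:stRegletrec:i} is the claim that $\ST{\astrat}{\L}$ is finite. By \cref{prop:Regletrec:stRegletrec}, the generated subterms are sequents of the form $\prefixed{\vec{y}}{\letin{B}{\tilde{P}}}$ (or $\letrec$-free terms) where, modulo $\alpha$-renaming and substitution of constants, $\tilde{P}$ ranges over the finitely many subterm occurrences of $\L$; $B$ is the union of the binding groups of the $\Let$-expressions of $\L$ enclosing that occurrence, hence is determined by the occurrence and ranges over a finite set; and $\vec{y}$ is the abstraction prefix. The only non-obvious point is that $|\vec{y}|$ is bounded. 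I would argue that, under the eager use of $\unfrule{\reduce}$ and of $\ruleref{\S}$, no vacuous binder can accumulate in the abstraction prefix, so that only the finitely many $\lambda$-binders literally occurring in $\L$ can ever contribute, whence $|\vec{y}|$ is bounded by the $\lambda$-nesting depth of $\L$. Given the bound on all three components, $\ST{\astrat}{\L}$ is finite.

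For \cref{prop:derivability:Regletrec:stRegletrec:ii} I would run the procedure of \cref{prop:derivability:Regletrec:stRegletrec:i} with the canonical strategy, obtaining a canonical closed derivation $\Deriv^*$ in $\ugstRegletrec$, and decide derivability of $\prefixed{}{\L}$ in $\stRegletrec$ by checking whether $\Deriv^*$ is guarded on access path cycles. Correctness of this test has two directions: if $\Deriv^*$ is guarded on access path cycles then its bottommost $\FIXletrecmin$-instance meets the side-condition of $\FIXletrec$, so $\Deriv^*$ is already a derivation in $\stRegletrec$; conversely, if any derivation $\Deriv'$ of $\prefixed{}{\L}$ in $\stRegletrec$ exists, then by \cref{lem:Regletrec:stRegletrec}~\cref{lem:Regletrec:stRegletrec:iv} $\L$ is $\astrat_{\Deriv'}$-productive, hence by \cref{lem:unfolding:versus:strats} productive for \emph{every} $\extscope$-delimiting strategy, in particular for the one induced by $\Deriv^*$, and then \cref{lem:Regletrec:stRegletrec}~\cref{lem:Regletrec:stRegletrec:iv} yields that $\Deriv^*$ is guarded. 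For the complexity bound I would use the estimate of the previous paragraph quantitatively: with the prefix bound, the generated sub-ARS underlying $\Deriv^*$ has $\bigO{|\L|}$ vertices when sequents are represented compactly (a position in $\L$, a prefix, and a set of binding groups), and each construction step — in particular the reachability analysis implicit in an $\unfrule{\reduce}$-step — costs $\bigO{|\L|}$, so $\Deriv^*$ is computed in $\bigO{|\L|^{2}}$ time; the guardedness test is then a search for a cycle in the finite access-path graph of $\Deriv^*$ avoiding all instances of $\ruleref{\lambda}$ and $\ruleref{@}$, a graph-reachability problem linear in the size of $\Deriv^*$.

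The main obstacle I expect is exactly the prefix-length bound of the second paragraph. It is the point where the tension between unfolding — which manufactures $\lambda$-abstractions at unbounded nesting depth — and eager scope-delimiting — which should keep abstraction prefixes short — has to be controlled carefully; note that the bound genuinely \emph{fails} for non-$\lambda_{\letrec}$-expressible infinite $\lambda$-terms such as the term of \cref{ex:entangled}, so the argument must use essentially that $\L$ is a finite $\lambda_{\letrec}$-term (equivalently, one can route the bound through \cref{lem:commute:unfoldomegared:stregred} and \cref{lem:proj:strat:letrec:lambda} for $\L$ that unfolds to a $\lambda$-term, treating meaningless $\L$, whose decomposition eventually only spins on $\unfrule{\rec}$-, $\unfrule{\merg}$-, $\unfrule{\reduce}$-steps without changing the prefix, separately).
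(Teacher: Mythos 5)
There is a genuine gap in your argument for part~\cref{prop:derivability:Regletrec:stRegletrec:i}, and it sits exactly where you locate "the main obstacle": the finiteness of $\ST{\astrat}{\L}$, and in particular the bound on the length of the abstraction prefixes arising under the canonical strategy. This claim is not a routine observation — it is essentially equivalent to the strong regularity of $\unfsem{\L}$, i.e.\ to (the easy direction of) the chapter's main theorem, and in the paper that result is \emph{derived from} the present proposition via \cref{lem:Regletrec:stRegletrec} and \cref{thm:Regletrec:stRegletrec}, not the other way around. Your one-line justification ("no vacuous binder can accumulate, so $\length{\vec{y}}$ is bounded by the $λ$-nesting depth of $\L$") does not engage with the actual difficulty: the rule $\ruleref{\S}$ only pops the \emph{last} binder of the prefix, so a vacuous binder can be trapped below a non-vacuous one, and ruling out unbounded accumulation is precisely the binding--capturing-chain analysis of \cref{sec:chains}. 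Your fallback of routing the bound through \cref{lem:proj:strat:letrec:lambda} also goes the wrong way: that lemma concludes regularity of $\unfsem{\L}$ \emph{from} $\astrat$-regularity of $\L$, which is what you are trying to prove. A secondary issue is that the cycle-cutting recipe of \cref{lem:strategies:2:derivations:Reg:stReg} produces $\FIX$-style derivations; turning a finite generated sub-ARS into a derivation with $\FIXletrecmin$-instances (one instance per $\Let$, bundling the body and all binding premises, with the matching constants $\aconstname_{f_i}$) is an extra adaptation you would need to spell out.

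The paper's proof of \cref{prop:derivability:Regletrec:stRegletrec:i} sidesteps all of this: the closed derivation $\Deriv_{\L}$ in \ugstRegletrec is built bottom-up \emph{following the finite syntax tree of $\L$} — one rule instance per syntactic construct, with $\FIXletrecmin$ at each $\Let$ discharging the marked assumptions $(\prefixed{\vec{x}}{\aconstname_{f_i}})^{\amarker_i}$ for its function variables, and $\ruleref{\S}$-instances placed only immediately below marked assumptions so as to meet the prefix-length side-condition. No unfolding is performed and no reachable-set finiteness is needed; the derivation is finite simply because $\L$ is. I would suggest restructuring part~\cref{prop:derivability:Regletrec:stRegletrec:i} along these lines. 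Your treatment of part~\cref{prop:derivability:Regletrec:stRegletrec:ii} — construct the canonical derivation and test guardedness on access-path cycles — does match the paper's, although your correctness argument detours through productivity and \cref{lem:unfolding:versus:strats}, where the paper's shorter observation suffices: the only freedom in constructing $\Deriv_{\L}$ is the placement of $\ruleref{\S}$-instances, and these never create or destroy guards (instances of $\ruleref{λ}$ or $\ruleref{@}$) on access-path cycles.
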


\begin{proof}
	For \cref{prop:derivability:Regletrec:stRegletrec:i} note that for every \lambdaletrec-term \L, a closed derivation \m{\Deriv_{\L}} with conclusion \prefixed{}{\L} in \ugstRegletrec can be produced by a bottom-up construction following the term structure of \L. Hereby use of the rules \ruleref{\S} can be restricted to instances immediately below marked assumptions such that, viewed from a (non-cyclic) path \apath from the conclusion upwards to a marked assumption, these \ruleref{\S}-instances are only introduced to shorten the frozen abstraction prefixes by all λ-abstractions that have become frozen on \apath (in order to conform to the side-condition on \FIXletrecmin-instances to have the same frozen abstraction prefix lengths in the discharged marked assumptions as in the conclusion and in the premises).
	\par Now for \cref{prop:derivability:Regletrec:stRegletrec:ii} in order to decide derivability of \prefixed{}{\L} in \stRegletrec, it suffices to decide whether the derivation \m{\Deriv_{\L}} in \ugstRegletrec obtained as described above, or its bottommost instance of \FIXletrecmin if there is any, is guarded on all of its access path cycles. (Note that in the construction of \m{\Deriv_{\L}} only the freedom in placing instances of \ruleref{\S} has been used in a certain, namely lazy, way. The specific placement of instances of these rules does not interfere with the existence or non-existence of guards, that is instances of λ or @ on cycles of access paths.) For this it remains to check whether every cycle on an access path in \m{\Deriv_{\L}} has a guard. This can be done by exploring the proof tree of \m{\Deriv_{\L}} according to all possible access paths (until for the first time a cycle is concluded) and checking for the existence of guards on cycles.
\end{proof}

We now can prove soundness and completeness of the proof system \stRegletrec for the property of \lambdaletrec-terms to unfold to λ-terms.

\begin{theorem}\label{thm:Regletrec:stRegletrec}
	\stRegletrec is sound and complete for the property of \lambdaletrec-terms to unfold to a λ-term. That is, for every term \m{\L ∈ \Ter{\lambdaletreccal}} the following statements are equivalent:
	\begin{enumerate}[(i)]
		\item\label{thm:Regletrec:stRegletrec:i}
			\L expresses a λ-term.
		\item\label{thm:Regletrec:stRegletrec:iii}
			\derivablein{\stRegletrec}{\prefixed{}{\L}}.
	\end{enumerate}
\end{theorem}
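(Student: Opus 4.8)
The plan is to obtain \cref{thm:Regletrec:stRegletrec} by chaining three results already at our disposal: the characterisation of unfolding to a λ-term in terms of productivity of \extscope-delimiting strategies (\cref{lem:unfolding:versus:strats}), the bridge between closed derivations in \ugstRegletrec and \extscope-delimiting strategies for \stRegletrecARS (\cref{lem:Regletrec:stRegletrec}), and the fact that every \lambdaletrec-term admits a closed derivation in \ugstRegletrec (\cref{prop:derivability:Regletrec:stRegletrec}~\cref{prop:derivability:Regletrec:stRegletrec:i}). The glue is the observation recorded in \cref{lem:Regletrec:stRegletrec:iv}: a closed \ugstRegletrec-derivation \Deriv is guarded — equivalently, is a derivation in \stRegletrec — if and only if the strategy \m{\astrat_{\Deriv}} extracted from it makes the \lambdaletrec-term in its conclusion \m{\astrat_{\Deriv}}-productive. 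Composing these equivalences yields both implications of the theorem.

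For ``\cref{thm:Regletrec:stRegletrec:iii} $\Rightarrow$ \cref{thm:Regletrec:stRegletrec:i}'' I would take a closed derivation \Deriv of \prefixed{}{\L} in \stRegletrec. Since every \stRegletrec-derivation is in particular a \ugstRegletrec-derivation, \cref{lem:Regletrec:stRegletrec} applies and furnishes an \extscope-delimiting strategy \m{\astrat_{\Deriv}} for \stRegletrecARS; guardedness of \Deriv together with \cref{lem:Regletrec:stRegletrec:iv} gives that \L is \m{\astrat_{\Deriv}}-productive. Then \cref{lem:unfolding:versus:strats} — the implication from productivity for some \extscope-delimiting strategy to expressibility — shows that \L unfolds to an infinite λ-term. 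In the degenerate case in which \L unfolds instead to a \emph{finite} λ-term, which by \cref{prop:Regletrec:stRegletrec} is precisely the case in which every access path of the finite proof tree \Deriv terminates at an axiom (so that all \m{\astrat_{\Deriv}}-rewrite-sequences on \prefixed{}{\L} are finite), the finite \m{\red_\unfold}-normal form of \L can be read off directly from \Deriv, and the conclusion is again immediate.

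For ``\cref{thm:Regletrec:stRegletrec:i} $\Rightarrow$ \cref{thm:Regletrec:stRegletrec:iii}'' I would start from the hypothesis that \L unfolds to a λ-term \M. By \cref{prop:derivability:Regletrec:stRegletrec}~\cref{prop:derivability:Regletrec:stRegletrec:i} there is a closed derivation \Deriv of \prefixed{}{\L} in \ugstRegletrec — say the canonical one built bottom-up along the term structure of \L — and by \cref{lem:Regletrec:stRegletrec} it induces an \extscope-delimiting strategy \m{\astrat_{\Deriv}} for \stRegletrecARS. If \M is infinite, then \cref{lem:unfolding:versus:strats} — expressibility implies productivity for \emph{every} \extscope-delimiting strategy — gives in particular that \L is \m{\astrat_{\Deriv}}-productive; if \M is finite, then \L is \m{\astrat_{\Deriv}}-productive vacuously, since \M, being a λ-term, contains no \letrec, so (as in the previous paragraph) the bottom-up \Deriv has only finite, cycle-free access paths and hence \prefixed{}{\L} admits no infinite \m{\astrat_{\Deriv}}-rewrite-sequence at all. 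In either case \L is \m{\astrat_{\Deriv}}-productive, whence \cref{lem:Regletrec:stRegletrec:iv} forces \Deriv to be guarded, i.e.\ to be a derivation in \stRegletrec; this derivation witnesses \derivablein{\stRegletrec}{\prefixed{}{\L}}.

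The main work is not the chaining but making the interfaces watertight. The step I expect to be the chief obstacle is the one that, in the ``$\Rightarrow$'' direction, turns ``\L is productive for \emph{every} \extscope-delimiting strategy'' into ``\emph{the particular} derivation \Deriv at hand is guarded''; this passes through \cref{lem:Regletrec:stRegletrec:iv}, and hence ultimately through the faithfulness of the correspondence of \cref{prop:Regletrec:stRegletrec} between access paths in \Deriv and \m{\red_\streg}-rewrite-sequences with their interspersed unfolding steps. Concretely, one must verify that an unguarded cyclic access path really does project, under this correspondence, to an infinite \m{\astrat_{\Deriv}}-rewrite-sequence containing only finitely many structural (\m{@_0}-, \m{@_1}-, λ-) steps, so that it is a genuine witness of non-productivity. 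A secondary, bookkeeping-heavy obstacle is the clean isolation of the finite-unfolding case, where productivity holds only vacuously and the statement has to be checked directly against the — necessarily finite and cycle-free — proof tree produced by the bottom-up construction.
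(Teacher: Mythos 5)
Your proposal is correct and follows essentially the same route as the paper's proof: both directions are obtained by composing \cref{prop:derivability:Regletrec:stRegletrec}, \cref{lem:Regletrec:stRegletrec} (in particular item~\cref{lem:Regletrec:stRegletrec:iv}), and \cref{lem:unfolding:versus:strats}, the only cosmetic difference being that the paper establishes ``\cref{thm:Regletrec:stRegletrec:i} $\Rightarrow$ \cref{thm:Regletrec:stRegletrec:iii}'' by contraposition (an unguarded derivation yields a non-productive strategy, hence no unfolding) where you argue directly — which is equivalent since all the lemmas invoked are biconditionals. Your extra care about the finite-unfolding case is reasonable (the paper's proof glosses over it), but it does not change the structure of the argument.
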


\begin{proof}
	For the proof of both directions of the equivalence, let \m{\L ∈ \Ter{\lambdaletreccal}}.
	\par For showing the implication \cref{thm:Regletrec:stRegletrec:i} ⇒ \cref{thm:Regletrec:stRegletrec:iii}, we argue indirectly, and therefore assume that \prefixed{}{\L} is not derivable in \stRegletrec. Then, while \prefixed{}{\L} is not derivable in \stRegletrec, there is, by \cref{prop:derivability:Regletrec:stRegletrec}~\cref{prop:derivability:Regletrec:stRegletrec:i}, a derivation \Deriv in \ugstRegletrec that is not guarded. It follows by \cref{lem:Regletrec:stRegletrec}, and in particular due to \cref{lem:Regletrec:stRegletrec}~\cref{lem:Regletrec:stRegletrec:iv}, that there is a \extscope-delimiting strategy \m{\astrat_{\Deriv}} for \stRegARS such that \L is not \m{\astrat_{\Deriv}}-productive. Then it follows by \cref{lem:unfolding:versus:strats}, using \cref{lem:unfolding:versus:strats:i} ⇒ \cref{lem:unfolding:versus:strats:iv} there, that \L does not unfold to a λ-term.
	\par For showing the implication \cref{thm:Regletrec:stRegletrec:iii} ⇒ \cref{thm:Regletrec:stRegletrec:i}, let \Deriv be a closed derivation in \stRegletrec with conclusion \prefixed{}{\L}. It follows by \cref{lem:Regletrec:stRegletrec} that there is a \extscope-delimiting strategy \astrat for \RegARS such that \L is \astrat-productive. Then \cref{lem:unfolding:versus:strats} implies that \L unfolds to a λ-term.
\end{proof}

\begin{para}[soundness and completeness of \Regletrec]
	Also the proof system \Regletrec can be shown to be sound and complete for the property of \lambdaletrec-terms to unfold to λ-terms. To establish this in analogy with the route of proof we pursued here, a CRS \ParseCRS similar to \stParseCRS (see \cref{def:stParseCRS}) could be defined by replacing the rule \rulebp{\sparse}{\S} by a rule \rulebp{\sparse}{\del} that can compress more abstraction prefixes, similar as the rule \rulebp{\reg}{\del} of \RegCRS can compress more abstraction prefixes than the rule \rulebp{\streg}{\S} of \stRegCRS. Then furthermore also a lemma analogous to \cref{lem:unfolding:versus:strats} can be formulated, proved, and used in a similar way.
\end{para}

We now arrive at a theorem that states one direction of our main characterisation result (\cref{thm:ll-expressible:streg} in \cref{sec:express}) that will link \lambdaletrec-expressibility to strong regularity of λ-terms.

\begin{theorem}\label{thm:ll-expressible:2:streg}
	Every \lambdaletrec-expressible λ-term is strongly regular.
\end{theorem}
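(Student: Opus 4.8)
The statement is a corollary of the apparatus developed in this section: it follows by chaining the soundness direction of \cref{thm:Regletrec:stRegletrec}, the strategy-extraction result \cref{lem:Regletrec:stRegletrec}, and the projection result \cref{lem:proj:strat:letrec:lambda}. Here is the plan.

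\emph{Setup.} Let \M be \lambdaletrec-expressible, say \m{\M = \unfsem{\L}} for some \m{\L ∈ \Ter{\lambdaletreccal}}. Then \L expresses a λ-term, so \cref{thm:Regletrec:stRegletrec} (direction \cref{thm:Regletrec:stRegletrec:i} ⇒ \cref{thm:Regletrec:stRegletrec:iii}) provides a closed derivation \Deriv in \stRegletrec with conclusion \prefixed{}{\L}. By the definition of \stRegletrec (see \cref{def:Regletrec:stRegletrec}), every such \Deriv is guarded; in particular \Deriv is a closed — hence finite — derivation in \ugstRegletrec that is guarded.

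\emph{Extracting a strategy on \lambdaletrec-terms, then projecting.} Applying \cref{lem:Regletrec:stRegletrec} to \Deriv yields an \extscope-delimiting strategy \m{\astrat_{\Deriv}} for \stRegletrecARS. Since \Deriv is finite, \cref{lem:Regletrec:stRegletrec}~\cref{lem:Regletrec:stRegletrec:iii} gives that \L is \m{\astrat_{\Deriv}}-regular; since \Deriv is guarded, \cref{lem:Regletrec:stRegletrec}~\cref{lem:Regletrec:stRegletrec:iv} gives that \L is \m{\astrat_{\Deriv}}-productive. The latter is precisely the hypothesis required by \cref{lem:proj:strat:letrec:lambda}; applying it produces a (history-aware) \extscope-delimiting strategy \m{\astratplus} for \stRegARS, and — since \L is \m{\astrat_{\Deriv}}-regular — the final clause of that lemma gives that \unfsem{\L} is \m{\astratplus}-regular. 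As \m{\M = \unfsem{\L}}, the strategy \m{\astratplus} witnesses, via \cref{def:reg:streg}, that \M is strongly regular. (If a history-free witness is preferred, \cref{prop:eager:strat:in:def:reg:streg}~\cref{prop:eager:strat:in:def:reg:streg:ii} then yields that \M is even \eagStratPlus-regular.)

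\emph{Where the work lies.} None of the steps above is itself difficult; the substance has been relegated to the cited lemmas. The only points deserving attention are: (a) that a \stRegletrec-derivation is guaranteed to exist for an expressible term — this is the non-trivial content of \cref{thm:Regletrec:stRegletrec}, ultimately resting on \cref{lem:unfolding:versus:strats} and, through it, on the commutation properties of \cref{lem:commute:unfoldomegared:stregred}; and (b) that the strategy returned by \cref{lem:proj:strat:letrec:lambda} may be history-aware, which is harmless since \cref{def:reg:streg} explicitly admits history-aware strategies as witnesses of strong regularity. I therefore expect no genuine obstacle in this proof beyond correctly invoking the chain \cref{thm:Regletrec:stRegletrec} → \cref{lem:Regletrec:stRegletrec} → \cref{lem:proj:strat:letrec:lambda}.
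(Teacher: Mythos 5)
Your proof is correct and follows exactly the same route as the paper's: invoke \cref{thm:Regletrec:stRegletrec} to get a closed \stRegletrec-derivation, extract a strategy via \cref{lem:Regletrec:stRegletrec}, and project it to \stRegARS via \cref{lem:proj:strat:letrec:lambda}. Your explicit verification of the productivity hypothesis needed for the projection lemma is a small point of added care that the paper's proof leaves implicit, but it is not a different argument.
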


\begin{proof}
	Let \M be a λ-term that is expressible by a \lambdaletrec-term \L, that is, \m{\L \m{\omegared_\unfold} \M} holds. Then by \cref{thm:Regletrec:stRegletrec} there exists a closed derivation \Deriv in \stRegletrec with conclusion \prefixed{}{\L}. Now \cref{lem:Regletrec:stRegletrec} guarantees a \extscope-delimiting strategy \m{\astrat_{\Deriv}} for \stRegletrecARS such that \L is \m{\astrat_{\Deriv}}-regular. Then \cref{lem:proj:strat:letrec:lambda} gives a \extscope-delimiting strategy \m{\Check{\astrat}_{\Deriv}} for \stRegletrecARS such that \m{\M = \unfsem{\L}} is \m{\Check{\astrat}_{\Deriv}}-regular. It follows that \M is strongly regular.
\end{proof}

\section{Binding--Capturing Chains}\label{sec:chains}

\begin{para}[overview]
	In this section we develop a characterisation for strong regularity of λ-terms by means of the `binding--capturing chains' occurring in a term. This concept is related to the notions of scope and \extscope as explained informally in \cref{para:regularity}. Binding--capturing chains occur whereever scopes overlap; and they are fully contained within \extscopes. First we give definitions for the concepts involved: binding, capturing, and binding--capturing chains. Then we show that strong regularity of regular λ-terms is equivalent to the absence of infinite binding--capturing chains.
\end{para}

\begin{para}[binding and capturing]
	We will define binding and capturing as relations on the positions of a λ-term. Binding relates an abstraction with the occurrences of the variable it binds. If \p is the position of an abstraction (\abs{x}{\dots}) that abstracts over \x and \q is the position of an occurrence of \x that is bound by the abstraction, then we will write \m{\p \binds \q} and say that \p `binds' \q. Capturing relates an abstraction with variables that occur freely underneath the abstraction. If \p is the position of an abstraction, and \m{\q > \p} is the position of a variable that is free in the entire subterm at position \p, then we will write \m{\p \captures \q} and say that \p `captures' \q. See \cref{fig:entangled_bc} for an illustration of these concepts.
\end{para}

\begin{figure}
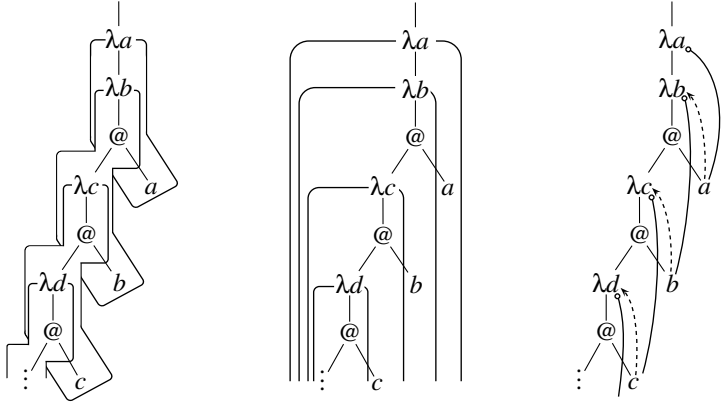

	\begin{hspread}
		\fig{entangled-scopes} & \fig{entangled-extscopes} & \fig{entangled-chains}
	\end{hspread}
	\caption{The term graph of the term in \cref{ex:entangled} with its overlapping scopes (left), its nested \extscopes (middle), and with its binding (\binds) and capturing (\capturedby) links (right).}
	\label{fig:entangled_bc}
\end{figure}

\begin{para}[positions in iCRS terms]
	When we speak of positions in λ-terms (and thus iCRS terms) we act on the assumption that positions on iCRS terms are an established concept as for example in \cite{kete:simo:2011}. Note, however, that we deviate slightly from the scheme there in addressing the arguments of an \sappCRS by \m{0} and \m{1} instead of \m{1} and \m{2}.
\end{para}

\begin{para}[binding--capturing chains in the literature]
	Binding--capturing chains have been used in \cite{endr:grab:klop:oost:2011} to study α-avoiding rewrite sequences in a rewrite calculus for μ-unfolding. They originate from the notion of `gripping' due to \cite{mell:96}, and from techniques developed in \cite{oost:97} concerning the notion of `holding' of redexes (which is shown there as being `parting' for CRSs, that is, never relating two residuals of the same redex).
\end{para}

We now define `binding' and `capturing' formally as binary relations on the set of positions of λ-terms.

\begin{definition}[binding, capturing]\label{def:bind:iscapturedby}
	For every \m{\M ∈ \iTer{\lambdacal}} we define the binary relations \binds and \capturedby on the set \m{\positions{\M} \subseteq ℕ^*} of positions of \M:
	\begin{enumerate}[(i)]
		\item\label{def:bind:iscapturedby:bind} We say that a binder at position \p\ \emph{binds} a variable occurrence at position \q, symbolically \m{\p \binds \q}, if \p is a binder position, and \q a variable position in \M, and the binder at position \p binds the variable occurrence at position \q.
		\item\label{def:bind:iscapturedby:iscapturedby} We say that a variable occurrence at position \q\ \emph{is captured by} a binder at position \p, symbolically \m{\q \capturedby \p} (and that a binder at position \p\ \emph{captures} a variable occurrence at position \q, symbolically \m{\p \captures \q}), if \q is a variable position and \m{\p < \q} a binder position in \M, and there is no binder position \m{\q_0} in \M with \m{\p ≤ \q_0} and \m{\q_0 \binds \q}.
	\end{enumerate}
\end{definition}

\begin{definition}[binding--capturing chain]\label{def:bind:capt:chain}
	Let \M be a λ-term. A finite or infinite sequence \enumsequence{\p_0,\p_1,\p_2,\dots} in \positions{\M} is called a \emph{binding--capturing chain in \M} if \m{\p_0} is the position of an abstraction in \M, and the positions in the sequence are alternatingly linked via binding and capturing, starting with a binding: \m{\p_0 \binds \p_1 \capturedby \p_2 \binds \dots}.
\end{definition}

\begin{para}
	Binding--capturing chains are closely related to the notion of scope and \extscope. In order to establish this, we first give precise definitions of the notions of scope and \extscope in terms of an `in-scope' rewrite relation on the positions of a λ-term: While the scope of a binder position \p is the set of positions between \p and variable positions bound at \p (the positions directly reachable by a single `in-scope' step), the \extscope of \p is the set of positions reachable by a finite number of successive `in-scope' steps.
\end{para}

\begin{notation}[binder positions]\label{not:binder_positions}
	In the following paragraphs for a given position \p in some λ-term, we write \bp{\p} for the proposition `\p is a binder position'.
\end{notation}

\begin{definition}[scope and \extscope]\label{def:scope:extscope}
	Let \M be a λ-term. On the set \positions{\M} of \M, the \emph{in-scope} relation \m{\red_\sscope} (for \M) is defined by:
	\[
		\p \m{\red_\sscope} \q
		~\Longleftrightarrow~
			\bp{\p} ~∧~
		∃ \p' ∈ \positions{\M} ~~ \p \bindseq \p' \:∧\: \p ≤ q ≤ p'
	\]
	where \bindseq denotes the reflexive closure of the binding relation \binds.
	
	For every binder position \m{\p ∈ \positions{\M}}, the \emph{scope of \p in \M} and the \emph{\extscope of \p in \M} are defined as the following sets of positions in \M:
	\begin{align*}
		\scopeof{\M}{\p}
		& :=
		\setcompr{\q ∈ \positions{\M}}{\p \m{\red_\sscope} \q}
		\\
		\extscopeof{\M}{\p}
		& :=
		\setcompr{\q ∈ \positions{\M}}{\p \m{\morestepred_\sscope} \q}
	\end{align*}
	(Note that the scopes and \extscopes of non-binder positions are empty sets.)
\end{definition}

The following proposition establishes that binding--capturing chains starting at a binder position \p span the positions of the \extscope of \p.

\begin{proposition}[binding, capturing, and scope/\extscope]\label{prop:scope:extscope}
	Let \M be a λ-term and \m{\p ∈ \positions{\M}} be a binder position. Then for all positions \m{\q ∈ \positions{\M}} the following statements hold:
	\begin{enumerate}[(i)]
		\item \m{\p \m{\red_\sscope} \q ~∧~ \bp{\q} ~\Longleftrightarrow~ \p = \q ~∨~ \p \binds ⋅ \capturedby \q}
		\item \m{\p \m{\morestepred_\sscope} \q ~\Longleftrightarrow~ ∃\p' ∈ \positions{\M}~~ \p~(\binds ⋅ \capturedby)^*\,⋅\,\bindseq \p' \:∧\: \p ≤ \q ≤ \p'}
		\item \m{\scopeof{\M}{\p} = \setcompr{\q ∈ \positions{\M}}{∃\p' ∈ \positions{\M}~\begin{array}{l}\p \bindseq \p' \\∧\: \p ≤ q ≤ p'\end{array}}}
		\item \m{\extscopeof{\M}{\p} = \setcompr{\q ∈ \positions{\M}}{∃ \p' ∈ \positions{\M}~\begin{array}{l}\p~(\binds ⋅ \capturedby)^*\,⋅\,\bindseq \p' \\∧\: \p ≤ q ≤ p'\end{array}}}
	\end{enumerate}
\end{proposition}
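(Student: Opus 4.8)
The plan is to establish the two equivalences (i) and (ii) directly, and then to read off (iii) and (iv) as immediate corollaries: (iii) is the defining equation \m{\scopeof{\M}{\p} = \setcompr{\q ∈ \positions{\M}}{\p \red_\sscope \q}} with the conjunct \bp{\p} in the definition of \m{\red_\sscope} discharged by the standing hypothesis that \p is a binder position, and (iv) results from (ii) together with \m{\extscopeof{\M}{\p} = \setcompr{\q ∈ \positions{\M}}{\p \morestepred_\sscope \q}}. Throughout I would lean on two elementary facts about positions of iCRS terms: that \m{≤} is the ancestor (prefix) order of a tree — so \m{≤} is transitive and any two positions having a common descendant are comparable — and that in a λ-term every occurrence of a bound variable is bound by a unique abstraction.

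For (i), I would argue both directions by unfolding the definitions of \m{\red_\sscope}, \binds, and \capturedby. For ``\m{⇒}'': choose \m{\p'} with \m{\p \bindseq \p'} and \m{\p ≤ \q ≤ \p'}; if \m{\p = \p'} then \m{\q = \p}; otherwise \m{\p \binds \p'} with \m{\p'} a variable position, and since \q is a binder position \m{\q ≠ \p'}, so \m{\p ≤ \q < \p'}; if \m{\q = \p} we are done, and if \m{\p < \q < \p'} then \q is a binder strictly above the variable occurrence \m{\p'}, while no binder at a position \m{\q_0 ≥ \q} can bind \m{\p'} (the unique binder of \m{\p'} being \p, which lies strictly above \q), hence \m{\p' \capturedby \q} and so \m{\p \binds ⋅ \capturedby \q}. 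For ``\m{⇐}'': if \m{\p = \q} the reflexive witness \m{\p' = \p} gives \m{\p \red_\sscope \p}; if \m{\p \binds \r} and \m{\r \capturedby \q}, then \q is a binder, the capturing condition forces \m{\p < \q < \r}, and \m{\p' := \r} witnesses \m{\p \red_\sscope \q}. In passing this also records the monotonicity facts used below, namely that \m{\p \binds ⋅ \capturedby \q} entails \m{\p < \q}, and that every \m{\red_\sscope}-step satisfies \m{\p ≤ \q} (immediate from the definition).

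For (ii), I would use (i) and induction on the length of \m{\red_\sscope}-chains. For ``\m{⇒}'', given \m{\p = \q_0 \red_\sscope \cdots \red_\sscope \q_n = \q} with \m{n ≥ 1}, each \m{\q_i} with \m{i < n} is a binder position; applying (i) to the steps \m{\q_i \red_\sscope \q_{i+1}} for \m{i ≤ n-2} (whose targets are binders) collapses them into \m{\p\,(\binds ⋅ \capturedby)^*\,\q_{n-1}}, and the final step supplies \m{\p'} with \m{\q_{n-1} \bindseq \p'} and \m{\q_{n-1} ≤ \q ≤ \p'}; since \m{\red_\sscope}-steps do not decrease the position, \m{\p ≤ \q}, yielding the right-hand side. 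For ``\m{⇐}'', given \m{\p = \r_0\,(\binds ⋅ \capturedby)\,\cdots\,(\binds ⋅ \capturedby)\,\r_m \bindseq \p'} with \m{\p ≤ \q ≤ \p'}: each \m{(\binds ⋅ \capturedby)}-step lands on a binder position and strictly increases the position, and by (i) gives \m{\r_i \red_\sscope \r_{i+1}}, so \m{\p \mred_\sscope \r_m} and \m{\p = \r_0 < \cdots < \r_m ≤ \p'}; moreover \p, \q, \m{\p'} and all \m{\r_i} are ancestors of \m{\p'}, hence lie on one root-to-\m{\p'} path. Let \m{i} be largest with \m{\r_i ≤ \q} (it exists since \m{\r_0 = \p ≤ \q}). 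If \m{i = m}, then \m{\r_m ≤ \q ≤ \p'} with \m{\r_m \bindseq \p'} gives \m{\r_m \red_\sscope \q}. If \m{i < m}, write \m{\r_i \binds \w} and \m{\w \capturedby \r_{i+1}}; then \m{\r_{i+1} < \w}, and from \m{\r_i ≤ \q < \r_{i+1} < \w} (using maximality of \m{i} and comparability of \q with \m{\r_{i+1}}) we get \m{\r_i ≤ \q ≤ \w} with \m{\r_i \binds \w}, hence \m{\r_i \red_\sscope \q}. In either case \m{\p \mred_\sscope \r_i \red_\sscope \q}, so \m{\p \morestepred_\sscope \q}.

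The step I expect to be the main obstacle is the ``\m{⇐}'' direction of (ii): one must pick the right link \m{\r_i} of the chain — the deepest one still above \q — and then see that the variable occurrence \w reached along the next \binds-edge is still below \q, which rests on the tree structure of positions and transitivity of \m{≤}. Beyond that, the bookkeeping merely requires some care with the reflexive closures (\bindseq inside \m{\red_\sscope}, and the outer \m{(\binds ⋅ \capturedby)^*}) and with the degenerate cases \m{m = 0}, \m{\q = \p}, and vacuous binders, all of which are covered by the reflexive witnesses noted in the treatment of (i).
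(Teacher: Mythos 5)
Your proof is correct. The paper actually states \cref{prop:scope:extscope} without any proof, so there is nothing to compare against; your argument — unfolding the definitions of \m{\red_\sscope}, \binds and \capturedby to get (i), then inducting along \m{\red_\sscope}-chains (respectively along the \m{(\binds ⋅ \capturedby)}-chain, picking the deepest link still above \q) for (ii), and reading (iii) and (iv) off as corollaries — is the natural one and all the delicate points (uniqueness of binders for the \capturedby-condition, comparability of ancestors of \m{\p'}, the reflexive/degenerate cases) are handled correctly.
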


Conversely, every position that is covered by a binding--capturing chain starting at a binding position \p is in the \extscope of \p:

\begin{proposition}\label{prop:bindcaptchains}
	Let \enumsequence{\p_0,\p_1,\p_2,\dots} be a binding--capturing chain in a λ-term \M. Then it holds that \m{\p_0 < \p_2 < \dots}, and \m{\p_0 < \p_1}, \m{\p_2 < \p_3}, \dots. Furthermore, for all \q with \m{\p_0 ≤ \q ≤ \p_n} for some \m{n ∈ ℕ} with \m{\p_n} being a position on the chain it holds that \m{\q ∈ \extscopeof{\M}{\p_0}}.
\end{proposition}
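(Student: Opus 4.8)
The plan is to prove the two claims of \cref{prop:bindcaptchains} separately, by induction on the length of the binding--capturing chain, leaning on the characterisations in \cref{prop:scope:extscope} and on the elementary order-theoretic properties of binding and capturing that are immediate from \cref{def:bind:iscapturedby}.

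First I would record the two basic facts about single links. If \m{\p \binds \q}, then \p is the position of an abstraction \abs{x}{\dots} and \q is an occurrence of \x strictly below it, so \m{\p < \q}. If \m{\q \capturedby \p'} (equivalently \m{\p' \captures \q}), then by \cref{def:bind:iscapturedby}~\cref{def:bind:iscapturedby:iscapturedby} we have \m{\p' < \q}. Combining, along a chain \m{\p_0 \binds \p_1 \capturedby \p_2 \binds \p_3 \capturedby \dots} we get \m{\p_0 < \p_1}, \m{\p_2 < \p_3}, \dots\ from the binding links, as required. For the interleaving inequality \m{\p_0 < \p_2 < \dots} I would argue about a single segment \m{\p_{2i} \binds \p_{2i+1} \capturedby \p_{2i+2}}: we have \m{\p_{2i} < \p_{2i+1}} and \m{\p_{2i+2} < \p_{2i+1}}, so \m{\p_{2i}} and \m{\p_{2i+2}} are both ancestors of the variable position \m{\p_{2i+1}}, hence comparable. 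They cannot be equal, since \m{\p_{2i}} is a binder that binds \m{\p_{2i+1}} (so \m{\p_{2i+1}} is \emph{not} free in the subterm at \m{\p_{2i}}), whereas \m{\p_{2i+2}} captures \m{\p_{2i+1}} (so \m{\p_{2i+1}} \emph{is} free in the subterm at \m{\p_{2i+2}}); if we had \m{\p_{2i} = \p_{2i+2}} these two conditions on the same subterm would be contradictory. It remains to rule out \m{\p_{2i} > \p_{2i+2}}: in that case, since no binder position \m{\q_0} with \m{\p_{2i+2} \le \q_0} binds \m{\p_{2i+1}} (by definition of capturing) and \m{\p_{2i+2} \le \p_{2i}}, the binder \m{\p_{2i}} would not bind \m{\p_{2i+1}}, contradicting \m{\p_{2i} \binds \p_{2i+1}}. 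Hence \m{\p_{2i} < \p_{2i+2}}, giving \m{\p_0 < \p_2 < \dots}.

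Next I would prove the scope-coverage claim: for every \q with \m{\p_0 \le \q \le \p_n}, where \m{\p_n} lies on the chain, we have \m{\q ∈ \extscopeof{\M}{\p_0}}. Here I distinguish two cases according to the parity of \n. If \m{n = 2i+1} is odd, then \m{\p_{2i}} is the binder immediately preceding \m{\p_n} on the chain, with \m{\p_{2i} \binds \p_n}, and — using the inequalities just established, \m{\p_0 \le \p_{2i}} and \m{\p_0 < \p_n} — we have \m{\p_0~(\binds ⋅ \capturedby)^*\,⋅\,\bindseq \p_n}: namely \m{\p_0~(\binds ⋅ \capturedby)^i~\p_{2i}} by walking the first \m{2i} links, followed by the single binding link \m{\p_{2i} \binds \p_n}. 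Then \cref{prop:scope:extscope}~(iv) applies directly with \m{\p' := \p_n}, since \m{\p_0 \le \q \le \p_n}, yielding \m{\q ∈ \extscopeof{\M}{\p_0}}. If \m{n = 2i} is even with \m{i \ge 1}, then \m{\p_n} is reached from \m{\p_0} by \m{(\binds ⋅ \capturedby)^i}, and I would use the reflexive closure: \m{\p_0~(\binds ⋅ \capturedby)^i\,⋅\,\bindseq \p_n} with the final \m{\bindseq}-step taken to be the identity on \m{\p_n}; applying \cref{prop:scope:extscope}~(iv) with \m{\p' := \p_n} again covers every \q with \m{\p_0 \le \q \le \p_n}. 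The case \m{n = 0} is trivial, since then \m{\q = \p_0 ∈ \extscopeof{\M}{\p_0}}.

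The main obstacle I anticipate is the non-strict-comparability argument in the second paragraph — precisely, establishing \m{\p_{2i} < \p_{2i+2}} rather than merely \m{\p_{2i} \ne \p_{2i+2}}. This requires carefully unpacking the `no intervening binder' clause in the definition of capturing and matching it against the defining property of the binding link; it is the only place where the asymmetry between \binds and \capturedby is genuinely used, and getting the inequality in the right direction is what makes the chain of positions strictly increasing (and hence, later, what forces an infinite chain to leave any finite term — though that consequence is not part of this statement). Everything else is bookkeeping with \cref{prop:scope:extscope}, whose characterisations of \scopeof{}{} and \extscopeof{}{} in terms of \m{(\binds ⋅ \capturedby)^*\,⋅\,\bindseq} are tailored exactly for this purpose.
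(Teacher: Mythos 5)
Your proof is correct; the paper in fact states \cref{prop:bindcaptchains} without proof, and your argument is exactly the intended one: the strict inequalities follow from \cref{def:bind:iscapturedby} together with prefix-comparability of ancestors of the common variable position \m{\p_{2i+1}} (the `no intervening binder' clause of capturing ruling out \m{\p_{2i+2} \le \p_{2i}}), and the coverage claim is a direct application of \cref{prop:scope:extscope}~(iv) with \m{\p' := \p_n}, using the reflexive closure \bindseq when \n is even.
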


\begin{para}[position-annotated variants \RegposCRS and \stRegposCRS]
	In order to study the relationship between rewrite sequences in \stRegCRS and binding--capturing chains we first introduce a position-annotated variant of \stRegCRS.
The idea is, that when a prefixed term \prefixed{y_1 \dots y_n}{\N} is obtained as a generated subterm of a λ-term \M by a \m{\red_\reg} or \m{\red_\streg} rewrite sequence \arewseq on \prefixed{}{\M}, then in the position-annotated rewrite system a prefixed term \m{\posprefixed{y_1,\dots,y_n}{\p_1 \dots \p_n}{\q}{\N}} is obtained by an annotated version \Hat{\arewseq} of the rewrite sequence \arewseq such that: the positions \m{\p_1,\dots,\p_n} are the positions in (the original λ-term) \M from which the bindings \abs{y_1}{}, \dots, \abs{y_n}{} in the abstraction prefix stem from; and \q is the position in \M of the body \N of the subterm generated by \arewseq.
\end{para}

\begin{para}[position-annotated decomposition in informal notation]
	\newcommand\rulename[1]{&\rulepos{#1}:~}
	\newcommand\rewritestoBreak{\\[-.5ex]&~\hspace{1cm}\red}
	\newcommand\rewritesto{\red}
	\newcommand\nextRule{\\}
	\newcommand\annotation[1]{\\[-.5ex]&~\hspace{2cm}{#1}}
	On \iTer{\lambdaprefixcal} we consider the following rewrite rules:
	\begin{align*}
		\rulename{@_i}
			\posprefixed{x_1 \dots x_n}{\p_1,\dots,\p_n}{\q}{\app{\M_0}{\M_1}}
			\rewritesto
			\posprefixed{x_1 \dots x_n}{\p_1,\dots,\p_n}{\q i}{\M_i}
			~~\sideCondition{\m{i ∈ \set{0,1}}}
			\nextRule
		\rulename{λ}
			\posprefixed{x_1 \dots x_n}{\p_1,\dots,\p_n}{\q}{\abs{y}{\M_0}}
			\rewritesto
			\posprefixed{x_1 \dots x_ny}{\p_1,\dots,\p_n,\q}{\q 00}{\M_0}
			\nextRule
		\rulename{\S}
			\posprefixed{x_1 \dots x_{n+1}}{\p_1,\dots,\p_{n+1}}{\q}{\M_0}
			\rewritesto
			\posprefixed{x_1 \dots x_n}{\p_1,\dots,\p_n}{\q}{\M_0}
			\annotation{\sideCondition{if the binding \abs{x_{n+1}}{} is vacuous}}
			\nextRule
		\rulename{\del}
			\posprefixed{x_1 \dots x_{n+1}}{\p_1,\dots,\p_{n+1}}{\q}{\M_0}
			\rewritestoBreak
			\posprefixed{x_1 \dots x_{i-1}x_{i+1}\dots x_{n+1}}{\p_1,\dots \p_{i-1},\p_{i+1}\dots \p_{n+1}}{\q}{\M_0}
			\annotation{\sideCondition{if the binding \abs{x_i}{} is vacuous}}
	\end{align*}
	Note that in the rule \rulepos{λ} the position changes from \q to \m{\q 00}, because of the underlying CRS notation for terms in \lambdaprefixcal: when a term \absCRS{y}{\M_0} starts at position \q, then the CRS-binding is at position \m{\q 0}, and the body \m{\M_0} starts at position \m{\q 00}.
\end{para}

\begin{definition}[position-annotated decomposition rewrite systems \RegposzeroCRS, \RegposCRS, \stRegposCRS]\label{def:RegposCRS:stRegposCRS}
	\newcommand\sigPosPrefixed{\m{\sig^{\wPrefix}_\spos}}
	\newcommand\sfabsposCRS[2]{\m{{\mathsf{pre}}^{#2}_{\enumsequence{#1}}}}
	\newcommand\fabsposCRS[4]{{\sfabsposCRS{#2}{#3}}\left(\CRSabs{#1}{#4}\right)}
	The CRS signature for \lambdaprefixposcal, the λ-calculus with position-annotated abstraction prefixes is given by \m{\sigPosPrefixed = \sigCRS ∪ \setcompr{\sfabsposCRS{\p_1,\dots,\p_n}{\q}}{ \p_1,\dots,\p_n,\q ∈ \set{0,1}^*}} where all of the function symbols \m{\sfabsposCRS{\p_1,\dots,\p_n}{\q}} are unary. We consider the following CRS rule schemes over \sigPosPrefixed:
	\newcommand\rulename[1]{&\rulepos{#1}:~}
	\newcommand\rewritestoBreak{\\[-.5ex]&~\hspace{1cm}\red}
	\newcommand\rewritesto{\red}
	\newcommand\nextRule{\\}
	\newcommand\annotation[1]{\\[-.5ex]&~\hspace{2cm}{#1}}
	\begin{align*}
		\rulename{@_i}
			\fabsposCRS{x_1 \dots x_n}{\p_1,\dots,\p_n}{\q}{\appCRS{\M_0}{\M_1}}
			\rewritestoBreak
			\fabsposCRS{x_1 \dots x_n}{\p_1,\dots,\p_n}{\q i}{\M_i}
			~~\sideCondition{\m{i ∈ \set{0,1}}}
			\nextRule
		\rulename{λ}
			\fabsposCRS{x_1 \dots x_n}{\p_1,\dots,\p_n}{\q}{\absCRS{y}{\M_0}}
			\rewritestoBreak
			\fabsposCRS{x_1 \dots x_ny}{\p_1,\dots,\p_n,\q}{\q 00}{\M_0}
			\nextRule
		\rulename{\S}
			\fabsposCRS{x_1 \dots x_{n+1}}{\p_1,\dots,\p_{n+1}}{\q}{\M_0}
			\rewritesto
			\fabsposCRS{x_1 \dots x_n}{\p_1,\dots,\p_n}{\q}{\M_0}
			\nextRule
		\rulename{\del}
			\fabsposCRS{x_1 \dots x_{n+1}}{\p_1,\dots,\p_{n+1}}{\q}{\M_0}
			\rewritestoBreak
			\fabsposCRS{x_1 \dots x_{i-1}x_{i+1}\dots x_{n+1}}{\p_1,\dots \p_{i-1},\p_{i+1}\dots \p_{n+1}}{\q}{\M_0}
	\end{align*}
	By \RegposzeroCRS we denote the CRS with the rules \rulepos{@_i} and \rulepos{λ}. By \RegposCRS/\stRegposCRS we denote the CRS consisting of the rules \rulepos{@_i}, \rulepos{λ}, and \rulepos{\del}/\rulepos{\S}.
	\par By \RegposzeroARS, \RegposARS and \stRegposARS we denote the ARSs induced by the iCRSs derived from \RegposzeroCRS, \RegposCRS, \stRegCRS, restricted to position-annotated terms in \iTer{\lambdaprefixcal}.
\end{definition}

\begin{proposition}[position-annotated rewrite sequences]\label{prop:position:lifting:projecting}
	It holds:
	\begin{enumerate}[(i)]
		\item\label{prop:position:lifting:projecting:i}
			Every rewrite sequence
			\begin{equation}\label{eq1:prop:position:lifting:projecting}
				\arewseq :~ \prefixed{\vec{x}_0}{\M_0} \red \prefixed{\vec{x}_1}{\M_1} \red \dots \red \prefixed{\vec{x}_n}{\M_n}
			\end{equation}
			in \RegzeroARS, \RegARS, or \stRegARS can be transformed (lifted) step by step, for a given \m{\q_0 ∈ ℕ^*} and \m{\vec{\p}_0 ∈ \vecpositions} with \m{\length{\vec{\p}_0} = \length{\vec{x}_0}}, by adding these and appropriate further position annotations \m{\q_1, \dots, \q_n ∈ ℕ^*} and \m{\vec{\p}_1, \dots, \vec{\p}_n ∈ \vec{ℕ^*}}, to a rewrite sequence:
			\begin{equation}\label{eq2:prop:position:lifting:projecting}
				\Hat{\arewseq} :~ \posprefixed{\vec{x}_0}{\vec{\p}_0}{\q_0}{\M_0} \red \posprefixed{\vec{x}_1}{\vec{\p}_1}{\q_1}{\M_1} \red \dots \red \posprefixed{\vec{x}_n}{\vec{\p}_n}{\q_n}{\M_n}
			\end{equation}
			in \RegposzeroARS, \RegposARS, or \stRegposARS, accordingly, such that the result of dropping the position annotations in the prefix of \Hat{\arewseq} is again \arewseq.
		\item\label{prop:position:lifting:projecting:ii}
			Conversely, every rewrite sequence \brewseq in \RegposzeroARS, \RegposARS, or \stRegposARS of the form \cref{eq2:prop:position:lifting:projecting} can be transformed step by step, by dropping the position annotations in the prefix, to a rewrite sequence \Check{\brewseq} of the form \cref{eq2:prop:position:lifting:projecting} in \RegzeroARS, \RegARS, or \stRegARS, respectively.
	\end{enumerate}
	The transformations in \cref{prop:position:lifting:projecting:i} and \cref{prop:position:lifting:projecting:ii} preserve eagerness/laziness of rewrite sequences.
\end{proposition}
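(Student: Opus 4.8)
The plan is to prove both directions by a straightforward induction on the length $n$ of the rewrite sequence, exploiting that the position-annotated rules of \RegposzeroCRS, \RegposCRS, and \stRegposCRS are \emph{conservative refinements} of the corresponding rules of \RegzeroCRS, \RegCRS, and \stRegCRS (\cref{def:RegposCRS:stRegposCRS}, \cref{def:RegCRS:stRegCRS}): forgetting all position annotations from a \rulepos{\arule}-redex yields a \decrule{\arule}-redex, and the rewrite step performed on the underlying $λ$-term and on the un-annotated abstraction prefix is exactly the one performed by \decrule{\arule}. Conversely, given a \decrule{\arule}-step together with position annotations $\vec{\p}$ (satisfying $\length{\vec{\p}} = \length{\vec{x}}$) and $\q$ on its source, there is exactly one way to extend it to a \rulepos{\arule}-step, since the new body position and the new prefix-annotation vector are the deterministic functions of $\q$, $\vec{\p}$, and $\arule$ prescribed in \cref{def:RegposCRS:stRegposCRS} (for \m{@_i}: $\q ↦ \q i$, prefix annotations unchanged; for $λ$: $\q ↦ \q 00$, prefix annotations extended by $\q$; for \S: last prefix annotation dropped; for \del: the $i$-th prefix annotation dropped, matching the deletion of $x_i$ from the prefix). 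These two observations are precisely the single-step versions of statements \cref{prop:position:lifting:projecting:i} and \cref{prop:position:lifting:projecting:ii}, and both evidently preserve the length invariant $\length{\vec{\p}} = \length{\vec{x}}$.

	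For \cref{prop:position:lifting:projecting:i} I would start from the plain sequence $\arewseq$ of \cref{eq1:prop:position:lifting:projecting} together with the given data $\q_0$ and $\vec{\p}_0$ with $\length{\vec{\p}_0} = \length{\vec{x}_0}$. Applying the single-step lifting observation to the $k$-th step of $\arewseq$ determines, once $\vec{\p}_k$ and $\q_k$ are fixed, a unique annotated step $\posprefixed{\vec{x}_k}{\vec{\p}_k}{\q_k}{\M_k} \red \posprefixed{\vec{x}_{k+1}}{\vec{\p}_{k+1}}{\q_{k+1}}{\M_{k+1}}$ together with the annotations $\vec{\p}_{k+1}$, $\q_{k+1}$ to be carried into the next step, and the length invariant is maintained. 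Iterating over $k = 0,\dots,n-1$ yields $\Hat{\arewseq}$ of the form \cref{eq2:prop:position:lifting:projecting}; and since forgetting annotations from each annotated step returns exactly the original plain step, dropping the prefix annotations from $\Hat{\arewseq}$ recovers $\arewseq$. Statement \cref{prop:position:lifting:projecting:ii} is the same bookkeeping read in reverse: forgetting the prefix annotations from each step of an annotated sequence $\brewseq$ as in \cref{eq2:prop:position:lifting:projecting} produces a \decrule{\arule}-step, so the sequence $\Check{\brewseq}$ of the underlying terms is a well-formed rewrite sequence in \RegzeroARS, \RegARS, or \stRegARS of the form \cref{eq1:prop:position:lifting:projecting}.

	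For the final clause on eagerness and laziness, the key remark is that the set of rules applicable at a position-annotated term depends only on its underlying $λ$-term together with the un-annotated abstraction prefix, never on the annotations themselves: the redex at the root is a \m{@_i}-redex iff the body is an application, a $λ$-redex iff it is an abstraction, and a \del-redex (an \S-redex) iff some (respectively, the last) prefix binding is vacuous --- and vacuity of $λ x_i$ is a property of the body $\M_0$ alone. Consequently, for each $k$ the step performed at $\posprefixed{\vec{x}_k}{\vec{\p}_k}{\q_k}{\M_k}$ is forced by \rulep{\del}/\rulep{\S} (``eager''), respectively permitted only when no other rule applies (``lazy''), if and only if the corresponding step at $\prefixed{\vec{x}_k}{\M_k}$ has the same status in the plain system; hence $\Hat{\arewseq}$ is \m{\red_\del}-eager / \m{\red_\S}-eager / lazy exactly when $\arewseq$ is, and dually for $\Check{\brewseq}$.

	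The only genuinely delicate point is the positional bookkeeping in rule \rulepos{λ}: because of the CRS encoding of abstractions, a subterm $\absCRS{y}{\M_0}$ occurring at position $\q$ has its CRS binder at $\q 0$ and its body $\M_0$ at $\q 00$, which is why the body annotation jumps from $\q$ to $\q 00$. One has to check that this shift, together with the inheritance $\q ↦ \q i$ at \m{@_i}-steps and the invariance of $\q$ and of the prefix annotations at \S- and \del-steps, is consistent with how positions of subterms of the original term are inherited along a rewrite sequence, so that in $\Hat{\arewseq}$ the annotation $\q_k$ always correctly names the position of $\M_k$ in the original term, and $\vec{\p}_k$ the positions of the abstractions that have been split off into the prefix. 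Everything else is routine induction.
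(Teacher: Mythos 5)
Your proof is correct, and it is exactly the routine argument the paper has in mind: the paper states this proposition without proof, treating it as evident from the fact that the annotated rules refine the plain ones with deterministically computed annotations. Your single-step lifting/forgetting observations, the length invariant, and the remark that rule applicability (hence eagerness/laziness) depends only on the un-annotated data are precisely the points that make the omitted proof go through.
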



The proposition below characterises the binding relation \binds and the capturing relation \capturedby on the positions of an
infinite term \M with the help of rewrite sequences with respect to \m{\red_\regzero}
on \posprefixed{}{}{\rootpos}{\M} in \RegposzeroCRS
down to `variable occurrences' \posprefixed{\vec{x}}{\vec{\p}}{\q}{x_i} in \M.

\begin{proposition}[binding, capturing, and \RegposzeroCRS rewrite sequences]\label{prop:bind:iscapturedby}
	For all \m{\M ∈ \iTer{\lambdacal}} and positions \m{\p,\q ∈ \positions{\M}} it holds:
	\begin{align*}
		\p \binds \q
		&\Leftrightarrow
		\multilinebox{there is a rewrite sequence \m{\posprefixed{}{\tuple{}}{\rootpos}{\M} \m{\mred_\regzero} \posprefixed{x_1 \dots x_n}{\p_1,\dots,\p_n}{\q}{x_i}} \\ with \m{x_1 \dots x_n} distinct, \m{i ∈ \set{1,\dots,n}}, such that \m{\p = \p_i}}
		\\
		\q \capturedby \p
		&\Leftrightarrow
		\multilinebox{there is a rewrite sequence \m{\posprefixed{}{\tuple{}}{\rootpos}{\M} \m{\mred_\regzero} \posprefixed{x_1 \dots x_n}{\p_1,\dots,\p_n}{\q}{x_i}} \\ with \m{x_1 \dots x_n} distinct, \m{i ∈ \set{1,\dots,n}}, such that \m{\p ∈ \set{\p_{i+1},\dots,\p_n}}}
	\end{align*}
\end{proposition}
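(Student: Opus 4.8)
The plan is to prove both biconditionals at once by isolating a single structural invariant of \RegposzeroCRS-rewrite-sequences that start from \posprefixed{}{\tuple{}}{\rootpos}{\M}, and then reading off binding and capturing from it. The point is that each of the rules \rulepos{@_0}, \rulepos{@_1}, \rulepos{\lambda} descends exactly one edge in the syntax tree of \M --- into the left or right argument of an application, or into the body of an abstraction --- so that a \RegposzeroCRS-rewrite-sequence from \posprefixed{}{\tuple{}}{\rootpos}{\M} is nothing but a walk down some path of \M, during which the abstraction prefix collects, together with their positions, precisely the abstractions passed on the way (there are no \rulepos{\del}- or \rulepos{\S}-steps in \RegposzeroCRS to discard any of them). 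Throughout, I would fix a representative of the iCRS term \M in which all bound variables carry pairwise distinct names (legitimate, as iCRS terms are taken modulo α-equivalence and there are only countably many binder positions); this makes every prefix occurring in such a descent automatically distinct, so the distinctness clauses in the statement are met for free.

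The invariant, proved by induction on the length of the rewrite sequence \posprefixed{}{\tuple{}}{\rootpos}{\M} \mred_\regzero \posprefixed{x_1 \dots x_n}{\p_1,\dots,\p_n}{\q}{N}, reads: (a) $\q ∈ \positions{\M}$ and $N$ is the subterm of the chosen representative of \M rooted at \q; (b) $\p_1 < \p_2 < \dots < \p_n ≤ \q$, each $\p_j$ is a binder position of \M, and $\set{\p_1,\dots,\p_n}$ is exactly the set of binder positions of \M that lie strictly above \q on the path from the root to \q; and (c) for each $j$, the name $x_j$ is the variable bound by the abstraction at $\p_j$, so that the occurrences of $x_j$ that are free in $N$ are exactly the occurrences in \M below \q that are bound by $\p_j$. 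The base case is immediate. In the induction step, the only rule affecting the prefix is \rulepos{\lambda}, where the position update $\q \mapsto \q 00$ just accounts for the CRS encoding of abstractions (abstraction node at \q, CRS-binding at $\q 0$, body at $\q 00$), and (b), (c) extend by one entry; for \rulepos{@_i} the position update is $\q \mapsto \q i$ and all three clauses carry over verbatim.

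Given the invariant, the two equivalences follow by short ancestor arguments on the tree. For ``$\Leftarrow$'' of the \binds-clause: if a descent ends in \posprefixed{x_1 \dots x_n}{\p_1,\dots,\p_n}{\q}{x_i} with $\p = \p_i$, then by (a) the occurrence at \q in \M is this $x_i$, which by (c) is bound by $\p_i = \p$, so $\p \binds \q$. For ``$\Rightarrow$'': if $\p \binds \q$, then $\p$ is an ancestor of \q, so there is a path from the root through $\p$ to \q; descending it in \RegposzeroCRS --- applying \rulepos{\lambda} at abstractions and the appropriate \rulepos{@_i} at applications, a descent that lifts from the unannotated system by \cref{prop:position:lifting:projecting} if one prefers that route --- yields a sequence ending in \posprefixed{x_1 \dots x_n}{\p_1,\dots,\p_n}{\q}{x_i}, and since $\p$ binds the occurrence at \q, clause (c) forces $\p = \p_i$. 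The \capturedby-clause is handled the same way. For ``$\Leftarrow$'', if $\p = \p_j$ with $j > i$, then $\p_i < \p_j = \p$ by (b), so the binder $\p_i$ of the occurrence at \q lies strictly above $\p$; hence that occurrence is free in the subterm at $\p$ and no binder between $\p$ and \q rebinds it, i.e.\ $\q \capturedby \p$. For ``$\Rightarrow$'', since \M is closed the occurrence at \q has a binder $\p'$; freeness in the subterm at $\p$, together with the fact that $\p'$ and $\p$ are both ancestors of \q and hence comparable, forces $\p' < \p$ (as $\p ≤ \p'$ would place $\p'$ inside the subtree at $\p$); descending the root-to-\q path then gives $\p' = \p_i$ and $\p = \p_j$ with $i < j$, i.e.\ $\p ∈ \set{\p_{i+1},\dots,\p_n}$.

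I expect the only real work to lie in the invariant, specifically in phrasing clause (c) so that the implicit α-renaming of CRS rewriting is tracked cleanly, and in keeping the CRS position bookkeeping straight ($\q \mapsto \q 00$ under \rulepos{\lambda} versus $\q \mapsto \q i$ under \rulepos{@_i}, and the fact that the prefix positions record abstraction nodes rather than CRS-binding nodes). Everything after that is routine ancestor reasoning on syntax trees.
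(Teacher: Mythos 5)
The paper states \cref{prop:bind:iscapturedby} without proof, so there is nothing to compare against; your argument supplies the intended routine verification and is correct. The single invariant you isolate --- that a \RegposzeroCRS-descent from \posprefixed{}{\tuple{}}{\rootpos}{\M} tracks a root-to-\q path while the annotated prefix accumulates, in order, exactly the binder ancestors of \q together with their bound names --- is precisely what makes all four implications fall out, and your explicit attention to the \m{\q \mapsto \q 00} bookkeeping and to fixing a representative with distinct bound names (needed both for clause (c) and for the distinctness side-condition in the statement) covers the only places where such a proof could silently go wrong.
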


The following lemmas describe the close relationship between, on the one hand,
binding--capturing chains in a λ-term \M, and on the other hand,
\m{\red_\streg}-rewrite-sequences on \posprefixed{}{}{\rootpos}{\M}
in \stRegposCRS that are guided by the eager \extscope-delimiting strategy.

\begin{lemma}[binding--capturing chains]\label{lem:bind:capt:chains:stRegpos}
	For all \m{\M ∈ \iTer{\lambdacal}} the following statements hold:
	\begin{enumerate}[(i)]
		\item\label{lem:bind:capt:chains:stRegpos:i} If there is a rewriting sequence of the form
			\begin{align*}
				\posprefixed{}{}{\rootpos}{\M} & \mred_{\eagStratPlus} \posprefixed{x_0 \dots x_{n_1}}{\p_0,\dots,\p_{n_1}}{\q}{\N}
				\\
				& \mred_{\eagStratPlus} \posprefixed{x_0 \dots x_{n_1}\dots x_{n_2}}{\p_0,\dots,\p_{n_1},\dots,\p_{n_2}}{\q'}{O}
			\end{align*}
			then there exist \m{\q_{n_1 +1},\dots,\q_{n_2} ∈ \positions{\M}} such that \m{\p_{n_1} \binds \q_{n_1 +1} \capturedby \p_{n_1 +1} \binds \dots \binds \q_{n_2} \capturedby \p_{n_2}}.
		\item\label{lem:bind:capt:chains:stRegpos:ii} If \m{\p_0 \binds \q_1 \capturedby \p_1 \binds \dots \binds \q_n \capturedby \p_n} is a binding--capturing chain in \M, then there exist \m{r_0,\dots,r_m,s ∈ \positions{\M}} with \m{m ≥ n} such that \m{\posprefixed{}{}{\rootpos}{\M} \mred_{\eagStratPlus} \posprefixed{x_0 \dots x_m}{r_0,\dots,r_m}{s}{\N}} and furthermore \m{\p_0,\dots,\p_n ∈ \set{r_0,\dots,r_m}} such that \m{\p_0 < \p_1 < \dots < \p_n = r_m}.
	\end{enumerate}
\end{lemma}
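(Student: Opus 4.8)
The two statements concern the relationship between eager-$\extscope$-delimiting rewrite sequences in \stRegposCRS and binding--capturing chains, so the plan is to shuttle between the position-annotated rewrite world and the combinatorial chain world, exploiting the characterisation of $\binds$ and $\capturedby$ via \RegposzeroCRS-rewrite-sequences in \cref{prop:bind:iscapturedby} together with the rewrite properties of $\m{\red_\streg}$ and $\m{\red_\del}$ collected in \cref{prop:rewprops:RegCRS:stRegCRS}. For \cref{lem:bind:capt:chains:stRegpos:i}, I would first observe that by \cref{prop:position:lifting:projecting} and \cref{prop:rewprops:RegCRS:stRegCRS}, the interesting part of the work happens along the spine: a $\m{\red_{\eagStratPlus}}$-rewrite-sequence uses $\m{\red_\S}$-steps eagerly, and exactly those variables $x_i$ that are \emph{not} free below the current position get removed. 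The key point is that if a prefix entry $\p_{n_1}$ (the position of some abstraction $\abs{x_{n_1}}{}$) survives from the term $\posprefixed{x_0 \dots x_{n_1}}{\p_0,\dots,\p_{n_1}}{\q}{\N}$ all the way to the later term $\posprefixed{x_0 \dots x_{n_2}}{\p_0,\dots,\p_{n_2}}{\q'}{O}$, then $x_{n_1}$ must have remained free (i.e.\ not vacuous) throughout: this means at each intermediate $\m{\red_λ}$-step introducing $x_{n_j+1}$, the variable $x_{n_1}$ was captured by the new binder position $\p_{n_j+1}$ in the sense of \cref{def:bind:iscapturedby}. I would extract, via \cref{prop:bind:iscapturedby}, a $\m{\red_\regzero}$-rewrite-sequence witnessing, for each $j$, an occurrence $\q_{n_j}$ of the bound variable $x_{n_{j-1}}$ below $\p_{n_j}$ but above $\p_{n_{j-1}}$, establishing the alternating chain $\p_{n_1} \binds \q_{n_1+1} \capturedby \p_{n_1+1} \binds \dots$. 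The main subtlety is that $\m{\red_{\eagStratPlus}}$ does not literally descend to a variable occurrence, so I need to argue: because a prefix entry is only removed when the corresponding binding becomes vacuous, its survival up to a given point forces the existence of a witnessing occurrence, which I can pin down by a side-excursion with \cref{prop:rewprops:RegCRS:stRegCRS} (postponement/commutation of $\m{\red_\S}$ and $\m{\red_\del}$ against the other rules) to produce the required $\m{\red_\regzero}$-sequence and then invoke \cref{prop:bind:iscapturedby}.

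For \cref{lem:bind:capt:chains:stRegpos:ii}, the plan is essentially the converse construction, by induction on the length $n$ of the given binding--capturing chain $\p_0 \binds \q_1 \capturedby \p_1 \binds \dots \binds \q_n \capturedby \p_n$. Using \cref{prop:bindcaptchains} I get $\p_0 < \p_1 < \dots < \p_n$ and that every position between $\p_0$ and $\p_n$ lies in $\extscopeof{\M}{\p_0}$. I then build a $\m{\red_{\eagStratPlus}}$-rewrite-sequence that walks down the path towards $\p_n$ (choosing at each application node the branch $@_0$ or $@_1$ that leads towards $\p_n$, and at each abstraction a $\m{\red_λ}$-step), and I must check that along this descent none of the prefix entries $\p_0,\dots,\p_{n-1}$ is eagerly removed by an $\m{\red_\S}$-step before we reach depth $\p_n$. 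This is exactly where I use the $\capturedby$-links: the chain guarantees, for each $i<n$, that $x_i$ (the variable bound at $\p_i$, viewed in the prefix) occurs free somewhere in the subtree at $\p_n$ — indeed $\q_{i+1}$ is such an occurrence, and it sits below $\p_n$ by \cref{prop:bindcaptchains} — hence the binding $\abs{x_i}{}$ is \emph{not} vacuous in any prefixed term on the path down to (just above) $\p_n$, so the eager strategy never fires $\m{\red_\S}$ on those entries. The resulting term is $\posprefixed{x_0 \dots x_m}{r_0,\dots,r_m}{s}{\N}$ with $s = \p_n$ (up to the CRS $00$-offsets from \cref{def:RegposCRS:stRegposCRS}, which I would account for but not dwell on), the sequence $r_0,\dots,r_m$ being the positions of all abstractions crossed on the way, so $\p_0,\dots,\p_n \subseteq \set{r_0,\dots,r_m}$ with $\p_0 < \dots < \p_n = r_m$ as required (any extra $r_k$'s are positions of other abstractions crossed that happened not to be scope-closed yet, and $m \ge n$ follows).

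The main obstacle I anticipate is the bookkeeping around the eager strategy in part \cref{lem:bind:capt:chains:stRegpos:ii}: I must be careful that \emph{eager} application of $\m{\red_\S}$ does not prematurely pop a needed prefix entry at some intermediate point on the spine where, momentarily, that variable does not occur in the \emph{immediate} body being inspected but does reappear deeper down. The resolution is that "vacuous" in the side-condition of $\rulepos{\S}$ refers to non-occurrence in the entire current body subterm (as in the informal rule in \cref{def:RegCRS:stRegCRS}), not merely at the head, so the $\capturedby$-witness $\q_{i+1}$ lying below the current position genuinely blocks the step; I would make this precise by tracking, as an invariant along the constructed sequence, that for each surviving prefix entry $\p_i$ with $i < n$ the position $\q_{i+1}$ is still within the current subtree. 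A secondary, purely notational nuisance is translating between positions in $\M$ and positions in the CRS-representation (the $\q \mapsto \q00$ shift for abstractions noted in \cref{def:RegposCRS:stRegposCRS}), which I would handle once at the start and then suppress.
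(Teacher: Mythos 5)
Your treatment of part \cref{lem:bind:capt:chains:stRegpos:i} is essentially right: since \m{\rulepos{\S}} only ever removes the \emph{rightmost} prefix entry and the entries \m{\p_j} and \m{\p_{j+1}} end up adjacent in the final prefix, the entry \m{\p_j} must be rightmost at the moment the \m{\red_λ}-step at \m{\p_{j+1}} is performed; eagerness then forces \m{x_j} to occur in the subterm at \m{\p_{j+1}}, and any such occurrence \m{\q_{j+1}} satisfies \m{\p_j \binds \q_{j+1} \capturedby \p_{j+1}} because its binder \m{\p_j} lies strictly above \m{\p_{j+1}}.

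Part \cref{lem:bind:capt:chains:stRegpos:ii}, however, has a genuine gap. You justify the survival of each entry \m{\p_i} (\m{i<n}) by the claim that the witness \m{\q_{i+1}} ``sits below \m{\p_n}'' and that the invariant ``\m{\q_{i+1}} is still within the current subtree'' is maintained along the descent. \cref{prop:bindcaptchains} does not give this, and it is false in general: capturing only yields \m{\p_{i+1} < \q_{i+1}}, and \m{\q_{i+1}} may lie in a branch disjoint from \m{\p_{i+2},\dots,\p_n}. Concretely, for \m{\M = \abs{a}{\abs{b}{\app{a}{(\abs{c}{\app{b}{c}})}}}} with \m{\p_0,\p_1,\p_2} the positions of the three abstractions, there is a chain \m{\p_0 \binds \q_1 \capturedby \p_1 \binds \q_2 \capturedby \p_2}, yet the only occurrence \m{\q_1} of \m{a} lies in the left branch of the application while \m{\p_2} lies in the right one, so \m{a} is vacuous in the subterm at \m{\p_2} and your invariant breaks as soon as the descent takes the \m{\red_{@_1}}-step. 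What actually protects \m{\p_i} is the stack discipline of \m{\rulepos{\S}}: argue by downward induction on \m{i} that \m{\p_i} is never popped — before the \m{\red_λ}-step at \m{\p_{i+1}}, whenever \m{\p_i} is the rightmost entry the current position is an ancestor of \m{\p_{i+1}}, hence of \m{\q_{i+1}}, so \m{x_i} is non-vacuous and \m{\rulepos{\S}} does not fire; after that step, \m{\p_{i+1}} sits to the right of \m{\p_i} and by the induction hypothesis is never removed, so \m{\p_i} is never again rightmost and cannot be removed regardless of whether \m{x_i} later becomes vacuous. With this repair the remainder of your construction for \cref{lem:bind:capt:chains:stRegpos:ii} goes through.
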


%
%

\begin{lemma}[length of binding--capturing chains]\label{lem:fin:bind:capt:chains}
	Let \M be a λ-term such that \m{\prefixed{}{\M} \mred_{\eagStratPlus} \prefixed{x_0 \dots x_n}{\N}}. Then \M contains a binding--capturing chain of length \n.
\end{lemma}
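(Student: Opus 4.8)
The plan is to read the statement off from the position-annotated decomposition calculus, combining Lemma~\ref{lem:bind:capt:chains:stRegpos}~\cref{lem:bind:capt:chains:stRegpos:i} with the lifting result Proposition~\ref{prop:position:lifting:projecting}~\cref{prop:position:lifting:projecting:i}. By the latter, applied with $\q_0 = \rootpos$ and the empty annotated prefix, I would lift the assumed rewrite sequence to a position-annotated $\eagStratPlus$-rewrite-sequence
\[
	\Hat{\arewseq} \;:\; \posprefixed{}{}{\rootpos}{\M} \mred_{\eagStratPlus} \posprefixed{x_0 \dots x_n}{\p_0,\dots,\p_n}{\q}{\N}
\]
in $\stRegposARS$ (the transformation preserves eagerness), for suitable positions $\p_0,\dots,\p_n,\q ∈ \positions{\M}$. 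By the shape of the rule $\rulepos{λ}$ each $\p_i$ is then the position of an abstraction of $\M$; in particular, if $n = 0$ the one-element sequence $\enumsequence{\p_0}$ is already a binding--capturing chain of length $0$ in the sense of \cref{def:bind:capt:chain}, so from now on I may assume $n ≥ 1$.

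The key observation is that along any $\eagStratPlus$-rewrite-sequence in $\stRegposARS$ the abstraction prefix behaves like a stack: the only rules of $\stRegposCRS$ that act on it are $\rulepos{λ}$, which pushes a single variable onto its right end, and $\rulepos{\S}$, which pops a single variable off its right end; hence the prefix length changes by $\pm 1$ at every step. Since along $\Hat{\arewseq}$ this length runs from $0$ to $n + 1 ≥ 2$, it passes through the value $1$, and I would pick the \emph{last} term $\posprefixed{x_0}{\p_0}{\q'}{\N'}$ on $\Hat{\arewseq}$ whose prefix has length exactly $1$. After that term the length stays $≥ 2$ (were it to return to $1$, that later term would contradict the choice of the last such term), so thereafter the bottom element of the prefix is never popped; consequently it stays constant and equals the bottom element $\p_0$ of the final prefix $\p_0,\dots,\p_n$. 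Splitting $\Hat{\arewseq}$ at the chosen term therefore exhibits one rewrite sequence of exactly the shape demanded in the hypothesis of Lemma~\ref{lem:bind:capt:chains:stRegpos}~\cref{lem:bind:capt:chains:stRegpos:i}, with $n_1 = 0$ and $n_2 = n$.

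Applying that lemma then yields positions $\q_1,\dots,\q_n ∈ \positions{\M}$ with $\p_0 \binds \q_1 \capturedby \p_1 \binds \dots \binds \q_n \capturedby \p_n$, i.e.\ a binding--capturing chain in $\M$ with binder positions $\p_0,\dots,\p_n$, which has length $n$. I expect the only mildly delicate step to be locating the split term in $\Hat{\arewseq}$ and checking that the annotation of its single prefix variable really is $\p_0$; but this is an immediate consequence of the stack-like behaviour of the abstraction prefix under $\eagStratPlus$, and no rewriting-theoretic input beyond Lemma~\ref{lem:bind:capt:chains:stRegpos} and Proposition~\ref{prop:position:lifting:projecting} is needed.
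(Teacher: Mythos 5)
Your proof is correct and takes essentially the same route as the paper's: lift the assumed $\eagStratPlus$-rewrite-sequence to \stRegposARS via \cref{prop:position:lifting:projecting}~\cref{prop:position:lifting:projecting:i} and then apply \cref{lem:bind:capt:chains:stRegpos}~\cref{lem:bind:capt:chains:stRegpos:i} (with $n_1=0$, $n_2=n$). The only difference is that you make explicit how to split the lifted sequence at the last term with prefix length one so as to match the two-segment hypothesis of that lemma, a detail the paper leaves implicit; just note that your claim that the prefix length changes by $\pm 1$ at \emph{every} step should read ``by at most $1$'', since \rulepos{@_i}-steps leave it unchanged, though nothing in your argument depends on this.
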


\begin{proof}
	By \cref{prop:position:lifting:projecting}~\cref{prop:position:lifting:projecting:i}, the assumed rewrite sequence \m{\prefixed{}{\M} \mred_{\eagStratPlus} \prefixed{x_0 \dots x_n}{\N}} in \stRegARS can be lifted to a rewrite sequence \m{\posprefixed{}{}{\rootpos}{\M} \red_{\eagStratPlus} \posprefixed{x_0 \dots x_n}{\p_0,\dots,\p_n}{\q}{\N}} in \stRegposARS. Then by \cref{lem:bind:capt:chains:stRegpos}~\cref{lem:bind:capt:chains:stRegpos:i}, there exists a binding--capturing chain of length \n.
\end{proof}

\begin{para}[binding--capturing chains and scope/\extscope]
	The notion of scope and \extscope helps to understand the relationship between binding--capturing chains and rewrite sequences in \stRegCRS. A binding--capturing chain corresponds to the overlap of scopes, or in other words the nesting of \extscopes. An infinite binding--capturing chain thus corresponds to a infinitely deep nesting of \extscopes and therefore to an unrestricted growth of the prefix in certain rewriting sequences in \stRegCRS.
\end{para}

\begin{lemma}[infinite binding--capturing chains]\label{lem:inf:bind:capt:chains}
	\newfunction\pl{p}
	Let \M be a λ-term, and let \arewseq be an infinite rewrite sequence in \RegARS w.r.t.\ the eager \extscope-delimiting strategy \eagStratPlus:
	\begin{equation*}
		\arewseq ~:~
		\prefixed{}{\M} = \prefixed{\vec{x}_0}{\M_0}
		\red_{\eagStratPlus}
		\prefixed{\vec{x}_1}{\M_1}
		\red_{\eagStratPlus}
		\dots
		\red_{\eagStratPlus}
		\prefixed{\vec{x}_i}{\M_i}
		\red_{\eagStratPlus}
		\dots
	\end{equation*}
	Furthermore suppose that for \m{\pl{} : ℕ→ℕ}, \m{i↦\pl{i}:= \length{\vec{x}_i}}, the prefix length function associated with \arewseq, there exists a lower bound \m{\lb{} :ℕ→ℕ} such
	that \lb{} is non-decreasing, and 
	\m{\lim_{n→\infty} \lb{n} = \infty}.
	Then there exists an infinite binding--capturing chain in \M.

	\begin{proof}
		Let \M, \arewseq, \pl{}, \lb{} as above. By \cref{prop:position:lifting:projecting}~\cref{prop:position:lifting:projecting:i}, \arewseq can be lifted to position annotated counterpart
		\begin{equation*}
			\Hat{\arewseq} :\;
			\posprefixed{}{}{\rootpos}{\M} = \posprefixed{\vec{x}_0}{}{\rootpos}{\M_0}
			\red_{\eagStratPlus}
			\posprefixed{\vec{x}_1}{\vec{\p}_1}{\q_1}{\M_1}
			\red_{\eagStratPlus}
			\dots
			\red_{\eagStratPlus}
			\posprefixed{\vec{x}_i}{\vec{\p}_i}{\q_i}{\M_i}
			\red_{\eagStratPlus}
			\dots
		\end{equation*}
		where, for all \m{i ∈ ℕ}, \m{\q_i} are positions and \m{\vec{\p}_i = \tuple{\p_1,\dots,\p_{m_i}}} vectors of positions, with \m{m_i ∈ ℕ}.
		\newfunction\st{st}
		Next we define the function:
		\[\st{} : ℕ→ℕ,\quad l ↦ \st{l} := \min{\setcompr{i}{\lb{i} ≥ l}}\]
		which is well-defined, since \m{\lim_{n→\infty}{\lb{n} = \infty}}. It describes a prefix stabilisation property: for every \m{l ∈ ℕ}, it gives the first index \m{i = \st{l}} with the property that the prefix of \prefixed{\vec{x}_i}{\M_i} contains more than \m{l} abstractions, and (since \lb{} is non-decreasing, and a lower bound for \pl{}) that from \i onward the \m{l}-th abstraction never disappears again, for \m{j ≥ i}, in terms \prefixed{\vec{x}_j}{\M_j} that follow in \arewseq as well as in \Hat{\arewseq}. Furthermore, \st{} is non-decreasing, as an easy consequence of its definition, and unbounded: if \st{} were bounded by \m{M ∈ ℕ}, then \m{∀l∈ℕ ~ ∃i∈ℕ ~ i≤M ∧ \lb{i} ≥ l} would follow, which cannot be the case since \set{\lb{0},\dots,\lb{M}} is a finite set. By non-decreasingness and unboundedness it also follows that \m{\lim_{n→\infty}{\st{n} = \infty}}.
		\par So when the rewrite sequence \Hat{\arewseq} is split into segments indicated in
		\begin{align*}
			\posprefixed{}{}{\rootpos}{\M} \mred_{\eagStratPlus} \dots
			& \mred_{\eagStratPlus} \posprefixed{\vec{x}_{\st{i}}}{\vec{\p}_{\st{i}}}{\q_{\st{i}}}{\M_{\st{i}}} \\
			& \mred_{\eagStratPlus} \posprefixed{\vec{x}_{\st{i+1}}}{\vec{\p}_{\st{i+1}}}{\q_{\st{i+1}}}{\M_{\st{i+1}}} \mred_{\eagStratPlus} \dots
		\end{align*}
		then it follows that all terms of the sequence after \m{\posprefixed{\vec{x}_{\st{i}}}{\vec{\p}_{\st{i}}}{\q_{\st{i}}}{\M_{\st{i}}}} have an abstraction prefix of length greater or equal to \i, for all \m{i ∈ ℕ}.
		\par Now note that in a step
		\[\posprefixed{\vec{x}}{\tuple{\p_1,\dots,\p_n}}{\q}{\N} \red \posprefixed{\vec{x}'}{\tuple{\p'_1,\dots,\p'_{n'}}}{\q'}{\N'}\]
		in \RegposARS that does not shorten the abstraction prefix it holds that \m{n ≤ n'} and
		\m{\p'_1 = \p_1}, \dots, \m{\p'_n = \p_n},
		that is, positions in the vector in the subscript of the abstraction prefix are preserved.
		As a consequence it follows for the rewrite sequence \Hat{\arewseq} that for all \m{i ∈ ℕ} and \m{j > i} the position vector \m{\vec{\p}_{\st{j}}} in the term \m{\posprefixed{\vec{x}_{\st{j}}}{\vec{\p}_{\st{j}}}{\q_{\st{j}}}{\M_{\st{j}}}} is of the form
		\[\vec{\p}_{\st{j}} = \tuple{\p_{1,\st{i}},\dots,\p_{i,\st{i}},\p_{j,\st{j}},\dots,\p_{j,m_j}}.\]
		This implies furthermore that for all \m{i ∈ ℕ}:
		\[\vec{\p}_{\st{i}} = \tuple{\p_{1,\st{1}},\p_{2,\st{2}},\dots,\p_{i,\st{i}},\dots,\p_{i,m_i}}.\]
		Then \cref{lem:bind:capt:chains:stRegpos}~\cref{lem:bind:capt:chains:stRegpos:i},
		implies the existence of positions \m{\q_2,\q_3,\dots} such that
		\begin{equation*}
			\p_{1,\st{1}}
			\binds \q_2 \capturedby
			\p_{2,\st{2}}
			\binds \q_3 \capturedby
			\dots
			\capturedby
			\p_{i,\st{i}}
			\binds \q_{i+1} \capturedby
			\p_{i+1,\st{i+1}}
			\binds
			\dots
		\end{equation*}
		and thereby, an infinite binding--capturing chain in \M.
	\end{proof}
\end{lemma}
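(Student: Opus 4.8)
The plan is to pass to the position-annotated decomposition system \stRegposCRS, where binding and capturing can be read off directly from rewrite sequences, and then to use the lower bound \lb{} to single out an infinite family of prefix positions that become permanently frozen along the lifted sequence.

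First I would lift the given eager rewrite sequence. By \cref{prop:position:lifting:projecting}~\cref{prop:position:lifting:projecting:i}, starting from \posprefixed{}{}{\rootpos}{\M}, the sequence \arewseq transforms step by step, by adding position annotations, into an eager rewrite sequence \Hat{\arewseq} in \stRegposARS whose \m{i}-th term has the form \posprefixed{\vec{x}_i}{\vec{\p}_i}{\q_i}{\M_i}, where the entries of \m{\vec{\p}_i} are the positions in \M of the abstractions collected in the prefix of \prefixed{\vec{x}_i}{\M_i}, and \m{\q_i} is the position in \M of its body; the transformation preserves eagerness. Next I would organise the bookkeeping. Let \m{\mathrm{p} : ℕ \to ℕ}, \m{i \mapsto \length{\vec{x}_i}}, be the prefix-length function of \arewseq, so that \m{\lb{i} \le \mathrm{p}(i)} for all \m{i}, with \lb{} non-decreasing and \m{\lim_{n\to\infty} \lb{n} = \infty}. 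Define \m{\mathrm{st} : ℕ \to ℕ} by \m{\mathrm{st}(l) := \min \setcompr{i}{\lb{i} \ge l}}, which is well-defined since \m{\lim \lb{} = \infty}. I would check that \m{\mathrm{st}} is non-decreasing (immediate) and unbounded (otherwise \m{\set{\lb{0},\dots,\lb{M}}} would be an infinite set), hence \m{\lim_{n\to\infty}\mathrm{st}(n) = \infty}.

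The core observation is a stabilisation property: from index \m{\mathrm{st}(l)} onward the prefix length never again drops below \m{l}, because \lb{} is a non-decreasing lower bound for \m{\mathrm{p}}; and a \stRegposARS-step that does not shorten the prefix leaves the existing entries of the prefix-position vector unchanged and only possibly appends new ones. Combining the two, the first \m{l} entries of \m{\vec{\p}_i} are frozen for all \m{i \ge \mathrm{st}(l)}, so the vectors at the stabilisation indices have the shape \m{\vec{\p}_{\mathrm{st}(i)} = \tuple{\p_{1,\mathrm{st}(1)}, \dots, \p_{i,\mathrm{st}(i)}, \dots}} with \m{\tuple{\p_{1,\mathrm{st}(1)},\dots,\p_{i,\mathrm{st}(i)}}} an initial segment of \m{\vec{\p}_{\mathrm{st}(j)}} for every \m{j \ge i}. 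Finally I would feed this into \cref{lem:bind:capt:chains:stRegpos}~\cref{lem:bind:capt:chains:stRegpos:i}: applied between the stage at which \m{\tuple{\p_{1,\mathrm{st}(1)},\dots,\p_{i,\mathrm{st}(i)}}} appears and the stage at which \m{\tuple{\p_{1,\mathrm{st}(1)},\dots,\p_{i+1,\mathrm{st}(i+1)}}} appears, it yields positions (among them \m{\q_{i+1}}) witnessing \m{\p_{i,\mathrm{st}(i)} \binds \q_{i+1} \capturedby \p_{i+1,\mathrm{st}(i+1)}}. Concatenating these segments for all \m{i} gives \m{\p_{1,\mathrm{st}(1)} \binds \q_2 \capturedby \p_{2,\mathrm{st}(2)} \binds \q_3 \capturedby \dots}, which is an infinite binding--capturing chain in \M by \cref{def:bind:capt:chain} since \m{\p_{1,\mathrm{st}(1)}} is a binder position.

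The step I expect to be the main obstacle is not any single calculation but the precise bookkeeping around the stabilisation indices: one must present the lifted sequence as passing through stages whose prefix-position vectors are genuine pointwise initial segments of one another, so that the intermediate-term hypothesis of \cref{lem:bind:capt:chains:stRegpos} is literally met, and one must verify that the finitely many chain segments obtained at successive stages glue into a single infinite chain with the alternation \binds{}/\capturedby{} respected across the joins. The lifting, the elementary properties of \m{\mathrm{st}}, and the shape of non-shortening \stRegposARS-steps are all routine.
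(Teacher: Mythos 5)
Your proposal is correct and follows essentially the same route as the paper's proof: the same lifting to \stRegposARS via \cref{prop:position:lifting:projecting}, the same stabilisation function \m{\mathrm{st}(l) = \min\setcompr{i}{\lb{i} \ge l}}, the same observation that non-shortening steps freeze the prefix-position vector entries, and the same application of \cref{lem:bind:capt:chains:stRegpos}~\cref{lem:bind:capt:chains:stRegpos:i} to successive stabilisation segments, concatenated into an infinite chain. No gaps.
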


Now we formulate and prove the main theorem of this section, which applies the concept of binding--capturing chain to pin down, among all λ-terms that are regular, those that are strongly regular.

\begin{theorem}[binding--capturing chains and strong regularity]\label{thm:streg:fin:bind:capt:chains}
	A regular λ-term is strongly regular if and only if it contains only finite binding--capturing chains.
\end{theorem}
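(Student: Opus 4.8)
The two directions call for rather different machinery. For the direction ``strongly regular $\implies$ only finite binding--capturing chains'', I would argue by contraposition: assume $\M$ is regular and contains an infinite binding--capturing chain $\p_0 \binds \q_1 \capturedby \p_1 \binds \q_2 \capturedby \dots$, and show $\M$ is not strongly regular. By \cref{prop:eager:strat:in:def:reg:streg}~\cref{prop:eager:strat:in:def:reg:streg:ii} it suffices to show that $\M$ is not $\eagStratPlus$-regular, i.e.\ that $\ST{\eagStratPlus}{\M}$ is infinite. The key tool is \cref{lem:bind:capt:chains:stRegpos}~\cref{lem:bind:capt:chains:stRegpos:ii}: for each finite initial segment $\p_0 \binds \q_1 \capturedby \dots \capturedby \p_n$ of the chain, there is an $\eagStratPlus$-rewrite-sequence $\posprefixed{}{}{\rootpos}{\M} \mred_{\eagStratPlus} \posprefixed{x_0 \dots x_m}{r_0,\dots,r_m}{s}{\N}$ with $m \ge n$ and $\{\p_0,\dots,\p_n\} \subseteq \{r_0,\dots,r_m\}$ where $\p_0 < \p_1 < \dots < \p_n = r_m$. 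Projecting away the position annotations (via \cref{prop:position:lifting:projecting}~\cref{prop:position:lifting:projecting:ii}) gives an $\eagStratPlus$-reduct of $\prefixed{}{\M}$ whose abstraction prefix has length $\ge n+1$. Since $n$ is arbitrary, the prefix lengths of $\eagStratPlus$-reducts of $\prefixed{}{\M}$ are unbounded, hence $\ST{\eagStratPlus}{\M}$ is infinite, so $\M$ is not strongly regular.

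**The converse.** For ``regular and only finite binding--capturing chains $\implies$ strongly regular'', let $\M$ be regular, so by \cref{prop:eager:strat:in:def:reg:streg}~\cref{prop:eager:strat:in:def:reg:streg:i} it is $\eagStrat$-regular, i.e.\ $\ST{\eagStrat}{\M}$ is finite. I want to conclude that $\ST{\eagStratPlus}{\M}$ is finite. The natural route is to bound the prefix lengths occurring in $\eagStratPlus$-reducts of $\prefixed{}{\M}$. Suppose, for contradiction, that $\ST{\eagStratPlus}{\M}$ is infinite. Since by \cref{prop:rewprops:RegCRS:stRegCRS}~\cref{prop:rewprops:RegCRS:stRegCRS:v} $\red_\streg$ is finitely branching, $\GeneratedSubARSi{\eagStratPlus}{\prefixed{}{\M}}$ is then an infinite, finitely-branching, connected graph, so by Kőnig's Lemma (\cref{koenigs_lemma}) it contains an infinite simple path, i.e.\ an infinite $\eagStratPlus$-rewrite-sequence $\arewseq$ passing through infinitely many distinct terms. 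I then need to manufacture from $\arewseq$ an infinite binding--capturing chain in $\M$, contradicting the hypothesis. To apply \cref{lem:inf:bind:capt:chains}, I must exhibit a non-decreasing, unbounded lower bound $\lb{}$ for the prefix-length function $\pl{i} = \length{\vec{x}_i}$ of $\arewseq$. If the prefix lengths along $\arewseq$ were bounded, say by $k$, then all terms of $\arewseq$ would lie in the set of terms of the form $\prefixed{\vec{y}}{\N}$ with $\length{\vec{y}} \le k$ that are $\mred_\del$-reducts (equivalently $\mred_\S$-reducts, up to the correspondence of \cref{prop:rewseqs:stRegCRS:2:RegCRS}) of terms in a suitable finite set related to $\ST{\eagStrat}{\M}$; by \cref{prop:compress:prefix:RegARS}~\cref{prop:compress:prefix:RegARS:ii} (combined with \cref{lem:projection:RegCRS:stRegCRS} and \cref{prop:eagstrat}) this set is finite, contradicting that $\arewseq$ has infinitely many distinct terms. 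Hence $\limsup_{i} \pl{i} = \infty$; taking $\lb{n} := \min_{i \ge n} \pl{i}$ — wait, that is not quite right since $\pl{}$ need not tend to infinity. Instead I should pass to a subsequence / suffix argument: along $\arewseq$ the prefix length takes arbitrarily large values, and I claim one can find a suffix on which it never dips below any fixed value infinitely often simultaneously — this is where $\liminf$ comes in, analogously to the argument for grounded cycles in \cref{prop:grounded:cycle}.

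**The main obstacle.** The crux, and the part I expect to require the most care, is the converse direction's passage from ``unbounded prefix lengths along an infinite $\eagStratPlus$-sequence through infinitely many distinct terms'' to ``existence of a genuinely non-decreasing unbounded lower bound $\lb{}$'', as demanded by the hypotheses of \cref{lem:inf:bind:capt:chains}. If the $\liminf$ of the prefix lengths is a finite number $\ell$, one should restrict to the suffix of $\arewseq$ starting after the last time the prefix length drops below $\ell$, and argue that on this suffix either (a) there is still a sub-suffix with strictly larger $\liminf$, allowing an inductive/diagonal construction of an infinite binding--capturing chain segment by segment using \cref{lem:bind:capt:chains:stRegpos}~\cref{lem:bind:capt:chains:stRegpos:i} directly (bypassing \cref{lem:inf:bind:capt:chains}), or (b) the prefix length is eventually bounded, which — by the finiteness argument above via \cref{prop:compress:prefix:RegARS} — contradicts having infinitely many distinct terms. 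Option (a) is the cleaner formulation: iterate \cref{lem:bind:capt:chains:stRegpos}~\cref{lem:bind:capt:chains:stRegpos:i} along the stabilisation points of the prefix (exactly as in the proof of \cref{lem:inf:bind:capt:chains}), producing the infinite chain. So in fact the tidiest organisation is to not invoke \cref{lem:inf:bind:capt:chains} as a black box but to reprove its conclusion in the slightly more general setting where the prefix length is merely unbounded (after passing to a suitable suffix where it has a non-decreasing unbounded lower bound, which a $\liminf$ argument supplies), and to discharge the boundedness alternative separately via \cref{prop:compress:prefix:RegARS}. Once both directions are in place, combining with \cref{thm:ll-expressible:2:streg} is not needed here; the theorem stands on its own.
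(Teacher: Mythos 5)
Your proposal follows the paper's proof almost exactly: both directions reduce to the correspondence between chain lengths and abstraction-prefix lengths of $\eagStratPlus$-reducts, and the converse direction uses K\H{o}nig's Lemma, the projection of \cref{lem:projection:RegCRS:stRegCRS}, the finiteness of $\ST{\eagStrat}{\M}$ via \cref{prop:compress:prefix:RegARS}, and finally \cref{lem:inf:bind:capt:chains}. (For the forward direction you cite \cref{lem:bind:capt:chains:stRegpos}~\cref{lem:bind:capt:chains:stRegpos:ii}, which is indeed the direction of the equivalence actually needed there; the paper's appeal to \cref{lem:fin:bind:capt:chains} is the same content read contrapositively.)

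The one place you get tangled --- the ``main obstacle'' about $\liminf$ --- dissolves once you apply your own finiteness argument to \emph{every} threshold rather than only to global boundedness. Fix any $l_0 \in ℕ$ and suppose $\length{\vec{x}_i} < l_0$ for infinitely many $i$. Those infinitely many terms are pairwise distinct (the sequence passes through distinct terms), each has prefix length $< l_0$, and each has a $\mred_\del$-reduct in the finite set $T = \setcompr{\prefixed{\vec{x}'_{i}}{\M_{i}}}{i ∈ ℕ} \subseteq \ST{\eagStrat}{\M}$; by \cref{prop:compress:prefix:RegARS}~\cref{prop:compress:prefix:RegARS:ii} there are only finitely many such terms --- contradiction. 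Hence for every $l$ all but finitely many indices satisfy $\length{\vec{x}_i} \ge l$, i.e.\ $\pl{i} → \infty$, and $\lb{n} := \min \setcompr{\length{\vec{x}_{n'}}}{n' ≥ n}$ is a well-defined, non-decreasing, unbounded lower bound. So \cref{lem:inf:bind:capt:chains} applies as a black box; neither the diagonal reconstruction in your option (a) nor a reproof of that lemma is needed. With that repair your argument is complete and coincides with the paper's.
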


And taking into account \cref{prop:def:reg:streg}~\cref{prop:def:reg:streg:i}, we obtain:

\begin{corollary}\label{cor:thm:streg:fin:bind:capt:chains}
	A λ-term is strongly regular if and only if it is regular and contains only finite binding--capturing chains.
\end{corollary}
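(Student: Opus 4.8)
The corollary follows immediately from the preceding \cref{thm:streg:fin:bind:capt:chains} together with \cref{prop:def:reg:streg}~\cref{prop:def:reg:streg:i}: since every strongly regular λ-term is regular, ``\M is strongly regular'' is equivalent to ``\M is regular \emph{and} strongly regular'', which by the theorem means ``\M is regular and has only finite binding--capturing chains''. So the real task is to prove \cref{thm:streg:fin:bind:capt:chains}. Throughout I would invoke \cref{prop:eager:strat:in:def:reg:streg} in order to replace ``regular'' by ``\eagStrat-regular'' and ``strongly regular'' by ``\eagStratPlus-regular'', i.e.\ by finiteness of \ST{\eagStrat}{\M} resp.\ of \ST{\eagStratPlus}{\M}.

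The direction ``strongly regular \m{\Rightarrow} only finite binding--capturing chains'' I would prove contrapositively, and it does not need regularity. Suppose \m{\p_0 \binds \q_1 \capturedby \p_1 \binds \q_2 \capturedby \dots} is an infinite binding--capturing chain in \M. For each \m{n} the initial segment \m{\p_0 \binds \q_1 \capturedby \dots \capturedby \p_n} is a finite binding--capturing chain, so by \cref{lem:bind:capt:chains:stRegpos}~\cref{lem:bind:capt:chains:stRegpos:ii} there is a \eagStratPlus-rewrite-sequence from \m{\posprefixed{}{}{\rootpos}{\M}} in the position-annotated system reaching a term with abstraction prefix of length \m{\geq n+1}; dropping the position annotations via \cref{prop:position:lifting:projecting}~\cref{prop:position:lifting:projecting:ii} yields a term in \ST{\eagStratPlus}{\M} with prefix length \m{\geq n+1}. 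As \m{n} is arbitrary, \ST{\eagStratPlus}{\M} is infinite, hence \M is not \eagStratPlus-regular, and so by \cref{prop:eager:strat:in:def:reg:streg}~\cref{prop:eager:strat:in:def:reg:streg:ii} not strongly regular.

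The converse ``regular and only finite binding--capturing chains \m{\Rightarrow} strongly regular'' is the substantial direction. I would first establish a \emph{projection property}: every term in \ST{\eagStratPlus}{\M} has a \m{\mred_\del}-reduct lying in \ST{\eagStrat}{\M}. Given a \eagStratPlus-rewrite-sequence \m{\prefixed{}{\M} \mred_{\eagStratPlus} \prefixed{\vec{x}}{\N}}, one replays its \m{\red_λ}- and \m{\red_{@_i}}-steps (choosing the same \m{@_0}/\m{@_1} branch) as a \eagStrat-rewrite-sequence, inserting the forced eager \m{\red_\del}-steps; because \m{\red_\S \subseteq \red_\del} (\cref{prop:rewprops:RegCRS:stRegCRS}~\cref{prop:rewprops:RegCRS:stRegCRS:iii-0}) the \eagStrat-run discards at least the prefix variables discarded by \eagStratPlus, arriving at some \m{\prefixed{\vec{z}}{\N} \in \ST{\eagStrat}{\M}} where \m{\vec{z}} is the subsequence of \m{\vec{x}} consisting of the variables that actually occur in \m{\N}, so that \m{\prefixed{\vec{x}}{\N} \mred_\del \prefixed{\vec{z}}{\N}}. (Alternatively one may transfer \eagStratPlus to its \RegARS-image via \cref{prop:rewseqs:stRegCRS:2:RegCRS} and invoke \cref{prop:eagstrat}~\cref{prop:eagstrat:ii}.) Since \M is regular, \ST{\eagStrat}{\M} is finite by \cref{prop:eager:strat:in:def:reg:streg}~\cref{prop:eager:strat:in:def:reg:streg:i}; combined with the projection property and \cref{prop:compress:prefix:RegARS}~\cref{prop:compress:prefix:RegARS:ii} this shows that for every \m{k} only finitely many terms in \ST{\eagStratPlus}{\M} have prefix length \m{\leq k}. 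Now suppose, for contradiction, that \M is not strongly regular, i.e.\ that \ST{\eagStratPlus}{\M} is infinite. The reduction graph \GeneratedSubARSi{\eagStratPlus}{\prefixed{}{\M}} is finitely branching by \cref{prop:rewprops:RegCRS:stRegCRS}~\cref{prop:rewprops:RegCRS:stRegCRS:v} and has infinitely many vertices, so by Kőnig's Lemma (\cref{koenigs_lemma}) it contains an infinite simple path, that is, an infinite \eagStratPlus-rewrite-sequence \m{\arewseq} through pairwise distinct terms. Since for each fixed \m{k} only finitely many terms have prefix length \m{\leq k} and no term repeats on \m{\arewseq}, the prefix-length function \m{\p} along \m{\arewseq} satisfies \m{\p(i) \to \infty}; hence \m{n \mapsto \lb{n} := \min \setcompr{\p(i)}{i \geq n}} is non-decreasing and unbounded, so \cref{lem:inf:bind:capt:chains} produces an infinite binding--capturing chain in \M, contradicting the hypothesis. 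Therefore \M is strongly regular, which establishes \cref{thm:streg:fin:bind:capt:chains} and hence the corollary.

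The step I expect to be the main obstacle is the projection property: one has to check carefully that replaying the descent steps of a \eagStratPlus-rewrite-sequence under the \emph{different} eager scope-closure discipline of \eagStrat stays inside \ST{\eagStrat}{\M}, that the abstraction prefix only ever shrinks by deletions of vacuous prefix variables (so that a genuine \m{\mred_\del}-reduct is obtained), and that the ordering of prefix variables is preserved --- with the freshness and α-renaming bookkeeping for newly appended prefix variables handled through the CRS, resp.\ informal-notation, conventions. A second, more conceptual point is that regularity is genuinely needed in the last step: a non-regular λ-term may possess binding--capturing chains of unbounded finite length without an infinite one, so one must really exploit finiteness of \ST{\eagStrat}{\M} (through the projection property and \cref{prop:compress:prefix:RegARS}) to upgrade ``unbounded prefixes along \m{\arewseq}'' into ``prefixes tending to infinity along \m{\arewseq}'', which is exactly what makes \cref{lem:inf:bind:capt:chains} applicable.
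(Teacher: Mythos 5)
Your proof is correct and follows essentially the same route as the paper: the corollary is reduced to \cref{thm:streg:fin:bind:capt:chains} via \cref{prop:def:reg:streg}~\cref{prop:def:reg:streg:i}, the forward direction rests on the chain-length/prefix-length correspondence of \cref{lem:bind:capt:chains:stRegpos}, and the converse combines Kőnig's Lemma, the projection onto \RegARS (your ``projection property'' is precisely the paper's \cref{lem:projection:RegCRS:stRegCRS}), \cref{prop:compress:prefix:RegARS}~\cref{prop:compress:prefix:RegARS:ii}, and \cref{lem:inf:bind:capt:chains}. The only cosmetic differences are that you argue the forward direction contrapositively and re-derive the projection lemma rather than citing it.
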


\begin{proof}[Proof of \cref{thm:streg:fin:bind:capt:chains}.]
	Let \M be a λ-term that is regular.
	\par For showing the implication ``⇒'', we assume that \M is also strongly regular. Then there exists a \extscope-delimiting strategy \astrat such that \ST{\astrat}{\M} is finite. By \cref{prop:eager:strat:in:def:reg:streg}~\cref{prop:eager:strat:in:def:reg:streg:i} it follows that then also \ST{\eagStratPlus}{\M} is finite for the eager \extscope-delimiting strategy \eagStratPlus in \stRegARS. Now let \n be the longest abstraction prefix of a term in \ST{\eagStratPlus}{\M}. Then it follows by \cref{lem:fin:bind:capt:chains} that the length of every binding--capturing chain in \M is bounded by \m{n-1}. Hence \M only contains finite binding--capturing chains.
	\par In the rest of this proof, we establish the implication ``⇐'' in the statement of the theorem. For this we argue indirectly: assuming that \M is not strongly regular, we show the existence of an infinite binding--capturing chain in \M.
	\par So suppose that \M is not strongly regular. Then for all \extscope-delimiting strategies \astrat in \stRegARS it holds that \ST{\astrat}{\M} is infinite. This means that in particular \ST{\eagStratPlus}{\M} is infinite for the eager scope-delimiting strategy \eagStratPlus on \stRegARS. It follows that the number of \m{\mred_{\eagStratPlus}}-reducts, and hence the generated sub-ARS \GeneratedSubARSi{\eagStratPlus}{\prefixed{}{\M}} of \prefixed{}{\M} in \stRegARS is infinite. Since \m{\mred_{\eagStratPlus}} on \stRegARS has branching degree \m{≤ 2} (branching actually only happens at sources of \m{\red_{@_i}}-steps), it follows by Kőnig's Lemma that there exists an infinite rewrite sequence:
	\begin{equation}\label{eq1:prf:thm:streg:fin:bind:capt:chains}
		\arewseq :~
		\prefixed{}{\M} = \prefixed{\vec{x}_0}{\M_0}
		\red_{\eagStratPlus}
		\dots
		\red_{\eagStratPlus}
		\prefixed{\vec{x}_i}{\M_i}
		\red_{\eagStratPlus}
		\dots
	\end{equation}
	in \stRegARS that passes through distinct terms. By \cref{lem:projection:RegCRS:stRegCRS}~\cref{lem:projection:RegCRS:stRegCRS:reg}, this rewrite sequence projects to a rewrite sequence:
	\begin{equation}\label{eq2:prf:thm:streg:fin:bind:capt:chains}
		\Check{\arewseq} :~
		\prefixed{}{\M} = \prefixed{\vec{x}'_0}{\M_0}
		\mred_{\eagStrat}
		\dots
		\mred_{\eagStrat}
		\prefixed{\vec{x}'_i}{\M_i}
		\mred_{\eagStrat}
		\dots
	\end{equation}
	in \RegARS in the sense that:
	\begin{equation}\label{eq3:prf:thm:streg:fin:bind:capt:chains}
		\prefixed{\vec{x}_i}{\M_i} \m{\mred_\del} \prefixed{\vec{x}'_i}{\M_i} ~~~ \sideCondition{for all \m{i ∈ ℕ}}.
	\end{equation}
	Note that, in the terms, the projection merely shortens the length of the abstraction prefix. Since \M is regular, \ST{\eagStrat}{\M} is finite by \cref{prop:eager:strat:in:def:reg:streg}~\cref{prop:eager:strat:in:def:reg:streg:i}, and hence it follows that only finitely many terms occur in \Check{\arewseq}.
	\par Now we will use this contrast with \arewseq, and the fact that the terms of \arewseq project to terms in \Check{\arewseq} via \m{\mred_\del}-prefix compression rewrite sequences, to show that the prefix lengths in terms of \arewseq are unbounded, and stronger still, that these lengths actually tend to infinity. More precisely, we show the following:
	\begin{equation}\label{eq4:prf:thm:streg:fin:bind:capt:chains}
		∀ l ∈ ℕ ~ ∃ i_0 ∈ ℕ ~ ∀ i ≥ i_0~~ \length{\vec{x}_i} ≥ l
	\end{equation}
	Suppose that this statement does not hold. Then there exists \m{l_0 ∈ ℕ} such that \m{\length{\vec{x}_i} < l_0} for infinitely many \m{i ∈ ℕ}. This means that there is an increasing sequence \m{i_0 < i_1 < i_2 < i_3 < \dots} in ℕ such that:
	\begin{gather}
		S := \setcompr{\prefixed{\vec{x}_{i_j}}{\M_{i_j}}}{j ∈ ℕ}
		\text{ is infinite}
		\label{eq6:prf:thm:streg:fin:bind:capt:chains}
		\\
		\text{and for all \m{\prefixed{\vec{x}_{i_j}}{\M_{i_j}} ∈ S~~ \length{\vec{x}_{i_j}} < l_0}}
	\end{gather}
	(\m{S} is infinite since the terms in \arewseq are distinct). On the other hand we have:
	\begin{gather}
		T := \setcompr{\prefixed{\vec{x}'_{i_j}}{\M_{i_j}}}{j ∈ ℕ} \; \subseteq \; \ST{\eagStrat}{\M} \text{ is finite}
	\end{gather}
	because \M is regular. However, since every term in \m{S} has a \m{\mred_\del}-reduct in \m{T} due to \cref{eq3:prf:thm:streg:fin:bind:capt:chains}, as well as an abstraction prefix of a length bounded by \m{l_0}, it follows by \cref{prop:compress:prefix:RegARS}~\cref{prop:compress:prefix:RegARS:ii}, that \m{S} also has to be finite, conflicting with \cref{eq6:prf:thm:streg:fin:bind:capt:chains}. We have reached a contradiction, and thereby established \cref{eq4:prf:thm:streg:fin:bind:capt:chains}.
	\par Now we are able to define a lower bound on the lengths of the prefixes in \arewseq that fulfils the requirements of \cref{lem:inf:bind:capt:chains}. We define the function:
	\[\lb{} : ℕ → ℕ,\quad n ↦ \lb{n} := \min \setcompr{\length{\vec{x}_{n'}}}{n' ≥ n}\]
	Its definition guarantees that \lb{} is a lower bound on the prefix lengths in \arewseq, and that \lb{} is non-decreasing. Furthermore also \m{\lim_{n→\infty} \lb{n} = \infty} follows by non-decreasingness, in addition to unboundedness of \lb{}: for arbitrary \m{l ∈ ℕ}, by \cref{eq4:prf:thm:streg:fin:bind:capt:chains} there exists \m{n_0 ∈ ℕ} such that \m{\length{\vec{x}_n} ≥ l} holds for all \m{n ∈ ℕ}, \m{n ≥ n_0}; this entails \m{\lb{n_0} ≥ l}.
	\par Now since \M, \arewseq, together with \lb{} as defined above, satisfy the assumptions of \cref{lem:inf:bind:capt:chains}, this lemma can be applied, yielding an infinite binding--capturing chain in \M.
\end{proof}

\begin{example}[infinite binding--capturing chain]\label{ex:bindcaptchain}
	The infinite λ-term from \cref{ex:entangled} with a representation as a higher-order recursive program scheme in \cref{ex:entangled-infinite-path}, which was recognised there to be regular but not strongly regular, possesses an infinite binding--capturing chain as indicated on the right in \cref{fig:entangled_bc}.
\end{example}

\section{Expressibility by terms in \texorpdfstring{\lambdaletreccal}{λletrec}}\label{sec:express}

\begin{para}[overview]
	In this section we finish the proof of our main characterisation result: we prove that every strongly regular λ-term is \lambdaletrec-expressible. For this purpose we introduce an annotated variant of one of the proof systems for strongly regular λ-terms. We show that every closed derivation in \stRegzero with conclusion \prefixed{}{\M}, which witnesses that \M is strongly regular, can be annotated, by adding appropriate \lambdaletrec-terms to each prefixed term in the derivation, into a derivation in the annotated system with conclusion \annprefixed{}{\L}{\M} such that the \lambdaletrec-term annotation \L expresses the λ-term \M. We show the correctness of this construction by transforming the derivation in the annotated proof system into a derivation in the proof system \stRegeq with conclusion \prefixed{}{\unfsem{\L}} = \prefixed{}{\M}, and then drawing upon the soundness of \stRegeq with respect to equality of strongly regular λ-terms.
\end{para}

\begin{figure}
	\proofsystem{
		\begin{bprooftree}
			\emptyAxiom
			\infLabel{\0}
			\unaryInf{\annprefixed{\vec{x}y}{y}{y}}
		\end{bprooftree}
		\hsep
		\begin{bprooftree}
			\axiom{\annprefixed{\vec{x}y}{\L}{\M}}
			\infLabel{λ}
			\unaryInf{\annprefixed{\vec{x}}{\abs{y}{\L}}{\abs{y}{\M}}}
		\end{bprooftree}
		\\
		\begin{bprooftree}
			\axiom{\annprefixed{\vec{x}}{\L_0}{\M_0}}
			\axiom{\annprefixed{\vec{x}}{\L_1}{\M_1}}
			\infLabel{@}
			\binaryInf{\annprefixed{\vec{x}}{\app{\L_0}{\L_1}}{\app{\M_0}{\M_1}}}
		\end{bprooftree}
		\\
		\begin{bprooftree}
			\axiom{\annprefixed{x_1 \dots x_{n-1}}{\L}{\M}}
			\infLabel{\S~~\sideCondition{if the binding \abs{x_n}{} is vacuous}}
			\unaryInf{\annprefixed{x_1 \dots x_n}{\L}{\M}}
		\end{bprooftree}
		\\
		\begin{bprooftree}
			\axiom{[\annprefixed{\vec{x}}{\aconstname_u}{\M}]^u}
			\noLine
			\unaryInf{\Deriv_0}
			\noLine
			\unaryInf{\annprefixed{\vec{x}}{\subst{\L}{u}{\aconstname_u}}{\M}}
			\infLabel{\FIX,u~~\mlSideCondition{if \m{\depth{\Deriv_0} ≥ 1}, and \m{\length{\vec{y}} ≥ \length{\vec{x}}} for all \\ \prefixed{\vec{y}}{\N} on threads in \m{\Deriv_0} from open \\ assumptions \m{(\annprefixed{\vec{x}}{u}{\M})^u} down}}
			\unaryInf{\annprefixed{\vec{x}}{(\letin{u = \L}{u})}{\M}}
		\end{bprooftree}
	}
	\caption{Annotated natural-deduction style proof system \annstRegzero for strongly regular λ-terms, a version of \stRegzero with \lambdaletrec-terms as annotations.}
	\label{fig:annstRegzero}
\end{figure}

We start by introducing a variant of the proof system \stRegzero in which the formulas are closed, prefixed, \lambdaletrec-term-annotated λ-terms.

\begin{definition}[the proof system \annstRegzero]
	The formulas of the proof system \annstRegzero are closed expressions of the form \annprefixed{\vec{x}}{\L}{\M} with \vec{x} a variable prefix vector, \abs{\vec{x}}{\L} a \lambdaletrec-term, and \abs{\vec{x}}{\M} a λ-term. The axioms and rules of \annstRegzero are annotated versions of the axioms and rules of the proof system \stRegzero from \cref{def:Reg:stReg:stRegzero} and \cref{fig:stReg:stRegzero}, and are displayed in \cref{fig:annstRegzero}.
\end{definition}

\begin{remark}
	For an example that illustrates why we have chosen to formulate an annotated version only of the proof system \stRegzero, but not of \stReg, please see \cref{example:annstRegzero}.
\end{remark}

The following proposition is a statement that is entirely analogous to \cref{prop:Reg:stReg}. 

\begin{proposition}[cycles are guarded]\label{prop:annstRegzero:sidecondition}
	For all for all instances \ainst of the rule \FIX in a derivation \Deriv (possibly with open assumptions) in \annstRegzero it holds: every thread from \ainst upwards to a marked assumption that is discharged at \ainst passes at least one instance of a rule \ruleref{λ} or \ruleref{@}.
\end{proposition}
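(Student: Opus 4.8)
The proof of \cref{prop:annstRegzero:sidecondition} will follow the same argument as that of \cref{prop:Reg:stReg}, adapted to the annotated proof system \annstRegzero. The plan is to argue that the side-condition on instances of \FIX in \annstRegzero --- which requires both \m{\depth{\Deriv_0} ≥ 1} and that all prefixes on threads from discharged assumptions downwards are at least as long as the prefix in the conclusion --- forces every cycle from a \FIX-instance up to a discharged assumption to pass a \emph{guard}, i.e.\ an instance of \ruleref{λ} or \ruleref{@}. The crucial observation is that when we strip the \lambdaletrec-term annotations from a derivation in \annstRegzero, we obtain a derivation in \stRegzero: the prefix-and-λ-term part of every formula \annprefixed{\vec{x}}{\L}{\M} reads off to the formula \prefixed{\vec{x}}{\M} of \stRegzero, and the annotated rules project onto the corresponding rules of \stRegzero (with \ruleref{\S} going to \ruleref{\S}, etc.), respecting the side-conditions. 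So the statement for \annstRegzero reduces to the statement for \stRegzero, hence also to \stReg, already recorded in \cref{prop:Reg:stReg}.

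Concretely, I would proceed as follows. Let \Deriv be a derivation in \annstRegzero, possibly with open marked assumptions, let \ainst be an instance of \FIX in \Deriv with conclusion \annprefixed{\vec{x}}{(\letin{u=\L}{u})}{\M}, and let \apath be a thread from the conclusion of \ainst upwards to a marked assumption \m{(\annprefixed{\vec{x}}{\aconstname_u}{\M})^{u}} that is discharged at \ainst. Let \binst be the topmost \FIX-instance passed on \apath (which may be \ainst itself). By its side-condition, the immediate subderivation \m{\Deriv_0} of \binst satisfies \m{\depth{\Deriv_0} ≥ 1}, so at least one instance of a rule other than \FIX is passed on \apath above \binst. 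If that rule is \ruleref{λ} or \ruleref{@}, we are done. Otherwise, the only remaining rules available in \annstRegzero are \ruleref{\S} and \ruleref{\0}; the axiom \ruleref{\0} cannot occur strictly below the top of a thread that continues, so only \ruleref{\S}-instances are passed on \apath above \binst. Since \ruleref{\S} strictly decreases the prefix length from conclusion to premise, and since the prefix length at the start of \apath equals the one at the end (namely \length{\vec{x}}, as both the conclusion of \ainst and the discharged assumption carry prefix \vec{x}), some rule that increases the prefix length must also have been passed on \apath on the segment from \ainst down to (or rather up to) \binst. The only such rule in \annstRegzero is \ruleref{λ}, which is a guard. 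Hence a guard is found on \apath in either case.

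The argument is essentially a length-counting argument on abstraction prefixes, mirroring the proof of \cref{prop:Reg:stReg} verbatim once one notes that \annstRegzero has exactly the same prefix-manipulation structure as \stRegzero (the \ruleref{\S}-rule shortens, \ruleref{λ} lengthens, \ruleref{@} and \ruleref{\0} and \FIX preserve prefix length, and the annotations play no role). I expect no genuine obstacle here; the only thing to be slightly careful about is confirming that the annotations do not introduce new rules or alter the prefix behaviour of the existing ones --- but \cref{fig:annstRegzero} makes this transparent. One could also phrase the whole proof as: erase annotations, obtain a \stRegzero-derivation with the corresponding \FIX-instance and discharged assumption, apply \cref{prop:Reg:stReg}, and observe that a guard in the erased derivation is the image of a guard in \Deriv. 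I would present the self-contained length-counting version, since it is short and matches the already-displayed proof of \cref{prop:Reg:stReg}.
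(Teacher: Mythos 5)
Your proof is correct and matches the paper's intent exactly: the paper gives no separate proof for this proposition, merely noting it is ``entirely analogous to'' \cref{prop:Reg:stReg}, and your argument is precisely the prefix-length-counting proof of that proposition transplanted to \annstRegzero, with the correct observation that the annotations do not alter which rules exist or how they change prefix lengths.
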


The lemma below states a straightforward connection between derivations in
\stRegzero and derivations in its annotated version \annstRegzero.

\begin{lemma}[from \stRegzero- to \annstRegzero-derivations, and back]\label{lem:stRegzero:2:annstRegzero}
	The following transformations are possible between derivations in \stRegzero and derivations in \annstRegzero:
	\begin{enumerate}[(i)]
		\item\label{lem:stRegzero:2:annstRegzero:i}
			Every derivation \Deriv in \stRegzero with conclusion \prefixed{\vec{x}}{\M} can be transformed into a derivation \Derivann in \annstRegzero with conclusion \annprefixed{\vec{x}}{\L}{\M} such that there is a bijective correspondence between marked assumptions \m{(\prefixed{\vec{y}}{\M})^{\amarker}} in \Deriv and marked assumptions \m{(\annprefixed{\vec{y}}{\amarker}{\M})^{\amarker}} in \Derivann. (As a consequence, \Derivann is a closed derivation if \Deriv is closed.) More precisely, \Derivann can be obtained from \Deriv by replacing every term occurrence \prefixed{\vec{y}}{\N} by an occurrence of \annprefixed{\vec{y}}{P}{\N} for a prefixed \lambdaletrec-term \prefixed{\vec{y}}{P} with the property that every prefix variable \m{y_i} bound in \P is also bound in \N. Thereby occurrences of marked assumptions and axioms \0 in \Deriv give rise to occurrences of marked assumptions and axioms \0 in \Derivann, respectively; instances of the \stRegzero-rules λ, @, \S, and \FIX in \Deriv give rise to instances of \annstRegzero-rules λ, @, \S, and \FIX in \Derivann, respectively.
		\item\label{lem:stRegzero:2:annstRegzero:ii}
			From every closed derivation \Deriv in \annstRegzero with conclusion \annprefixed{\vec{x}}{\L}{\M} a closed derivation \check{\Deriv} in \stRegzero with conclusion \prefixed{\vec{x}}{\M} can be obtained by dropping the annotations with \lambdaletrec-terms.
	\end{enumerate}
\end{lemma}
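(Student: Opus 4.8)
The statement decomposes into two directions. Direction \cref{lem:stRegzero:2:annstRegzero:ii} is an erasure argument, which I would dispatch first. Given a closed derivation $\Deriv$ in $\annstRegzero$, I would replace every formula occurrence $\annprefixed{\vec{x}}{P}{\N}$ throughout $\Deriv$ by $\prefixed{\vec{x}}{\N}$. Comparing \cref{fig:annstRegzero} with \cref{fig:stReg:stRegzero}, each axiom scheme and each rule of $\annstRegzero$ is precisely the corresponding item of $\stRegzero$ decorated with a $\lambdaletrec$-term component; moreover the side-conditions — vacuousness of the last prefix binding at $\ruleref{\S}$, and $\depth{\Deriv_0} \geq 1$ together with the prefix-length condition on open assumptions at $\FIX$ — refer only to the $\lambda$-term component and the prefix vectors, neither of which erasure affects. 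Hence the erased tree $\check{\Deriv}$ is a valid $\stRegzero$-derivation with conclusion $\prefixed{\vec{x}}{\M}$; since erasure leaves the discharge structure of marked assumptions untouched, $\check{\Deriv}$ is closed.

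Direction \cref{lem:stRegzero:2:annstRegzero:i} I would prove by induction on the structure of the $\stRegzero$-derivation, strengthened so as to allow open marked assumptions: for every derivation $\Deriv$ in $\stRegzero$ with conclusion $\prefixed{\vec{x}}{\M}$ and open marked assumptions $(\prefixed{\vec{y}_j}{\M_j})^{u_j}$, and for any assignment of pairwise distinct fresh constants $\aconstname_{u_j}$ to the markers $u_j$, there is a derivation $\Derivann$ in $\annstRegzero$ with conclusion $\annprefixed{\vec{x}}{P}{\M}$ and open marked assumptions $(\annprefixed{\vec{y}_j}{\aconstname_{u_j}}{\M_j})^{u_j}$, obtained from $\Deriv$ by replacing each term occurrence $\prefixed{\vec{y}}{\N}$ by an occurrence $\annprefixed{\vec{y}}{Q}{\N}$ in which every prefix variable bound in $Q$ is also bound in $\N$, and mapping axioms $\ruleref{\0}$, marked assumptions, and instances of $\ruleref{\lambda}$, $\ruleref{@}$, $\ruleref{\S}$, $\FIX$ to items of the same kind. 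The cases for $\ruleref{\0}$ (annotate $\prefixed{\vec{x}y}{y}$ by $\annprefixed{\vec{x}y}{y}{y}$), for a marked assumption with marker $u$ (annotate by $\annprefixed{\vec{y}}{\aconstname_u}{\M}$), and for $\ruleref{\lambda}$, $\ruleref{@}$, $\ruleref{\S}$ (propagate the premises' annotations exactly as the corresponding $\annstRegzero$-rule prescribes) are immediate; the invariant on bound prefix variables is checked directly in each case, and the $\ruleref{\S}$-side-condition transfers because vacuousness of $\abs{x_n}{}$ in $\M$ forces, via the invariant, vacuousness in $Q$.

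The $\FIX$ case carries the content. Here $\Deriv$ consists of a subderivation $\Deriv_0$ with conclusion $\prefixed{\vec{x}}{\M}$ discharging the marked assumptions $(\prefixed{\vec{x}}{\M})^u$ of this instance, followed by $\FIX$ with conclusion $\prefixed{\vec{x}}{\M}$. I would apply the induction hypothesis to $\Deriv_0$, choosing a fresh constant $\aconstname_u$ for the $u$-marked assumptions and the already-fixed constants for the remaining open assumptions; this yields an annotated subderivation whose conclusion is $\annprefixed{\vec{x}}{\L_0}{\M}$, with the $u$-marked assumptions carrying $\aconstname_u$. Since the construction only ever introduces the \emph{constant} $\aconstname_u$, the \emph{variable} $u$ does not occur freely in $\L_0$, so putting $\L := \L_0[\aconstname_u := u]$ we get $\L_0 = \L[u := \aconstname_u]$ — exactly the premise shape of the $\annstRegzero$-$\FIX$-rule — whose conclusion $\annprefixed{\vec{x}}{\letin{u = \L}{u}}{\M}$ I then take. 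The side-conditions ($\depth{\Deriv_0} \geq 1$ and the prefix-length condition) are inherited verbatim because the prefixes are untouched, and $\FIX$ discharges in $\Derivann$ precisely the assumptions corresponding to those discharged in $\Deriv$, which gives the claimed bijection on marked assumptions; when $\Deriv$ is closed, no open assumption survives, all $\aconstname_u$ are re-internalised as $\Let$-bindings, and $\Derivann$ is closed.

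The main difficulty is not conceptual but one of careful bookkeeping around $\FIX$: coordinating, for each $\FIX$-marker $u$, the freshly chosen constant $\aconstname_u$ with the $\Let$-bound function variable it becomes; pinning down the freshness conditions so that $\L_0[\aconstname_u := u]$ is the literal inverse of the substitution appearing in the rule scheme; and verifying that the bound-prefix-variable invariant — which is what keeps the annotations well-formed prefixed $\lambdaletrec$-terms and makes the $\ruleref{\S}$-side-conditions transfer — is preserved through every rule case, $\FIX$ included, where it must survive the wrapping in $\letin{u = \L}{u}$. I would stress that this lemma does \emph{not} assert that the root annotation $\L$ expresses $\M$; that is obtained separately, by transporting $\Derivann$ to a derivation of $\prefixed{}{\unfsem{\L}} = \prefixed{}{\M}$ in $\stRegeq$ and invoking \cref{thm:stRegeq}.
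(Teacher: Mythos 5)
Your proposal is correct and follows essentially the same route as the paper: part (ii) by erasure (noting that all side-conditions depend only on the $\lambda$-term components and prefixes), and part (i) by induction on the derivation with open marked assumptions annotated by fresh constants $\aconstname_u$, using the fact that the $\annstRegzero$-rules uniquely determine the conclusion's annotation from the premises' and invoking the bound-prefix-variable invariant to transfer the $\ruleref{\S}$-side-condition. Your treatment of the $\FIX$ case (recovering $\L$ from $\L_0$ via $\L_0 = \subst{\L}{u}{\aconstname_u}$) and your closing remark that unfolding correctness is deferred to the $\stRegeq$-transfer are both exactly as in the paper.
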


\begin{proof}
	Statement \cref{lem:stRegzero:2:annstRegzero:i} of the lemma can be established through a proof by induction on the depth \depth{\Deriv} of a derivation \Deriv in \stRegzero with possibly open assumptions. In the base case, axioms (\0) of \stRegzero are annotated to axioms (\0) of \annstRegzero, and marked assumptions \m{(\prefixed{\vec{y}}{\N})^{\amarker}} in \stRegzero to marked assumptions \m{(\annprefixed{\vec{y}}{\aconstname_{\amarker}}{\N})^{\amarker}}. In the induction step it has to be shown that a derivation \Deriv in \stRegzero with immediate subderivation \m{\Deriv_0} can be annotated to a derivation \Derivann in \annstRegzero, using the induction hypothesis which guarantees that an annotated version \m{\Derivann_0} of \m{\Deriv_0} has already been obtained.
	Then for obtaining \Derivann from \m{\Derivann_0} the fact is used that the rules in \annstRegzero uniquely determine the annotation in the conclusion of an instance once the annotation(s) in the premise(s) (and in the case of \FIX additionally the annotation markers used in the assumptions that are discharged) are given. In order to establish that instances of \S in \Deriv give rise to corresponding instances of \S in \Derivann, the part of the induction hypothesis is used which guarantees that the \lambdaletrec-term annotation in the premise contains not more variable bindings than the λ-term it annotates.

	\par Statement \cref{lem:stRegzero:2:annstRegzero:ii} of the lemma is a consequence of the fact that, by dropping the \lambdaletrec-term-annotations, every instance of a rule of \annstRegzero give rise to an instance of the corresponding rule in \stRegzero. Formally the statement can again be established by induction on the depth of derivations in \annstRegzero.
\end{proof}

\begin{example}\label{example:annstRegzero}
	The derivation \m{\Deriv_l} in \stRegzero from \cref{example:stReg} on the left can be annotated, as described by \cref{lem:stRegzero:2:annstRegzero}~\cref{lem:stRegzero:2:annstRegzero:ii}, to obtain the following derivation \m{\Hat{\Deriv}_l} in \annstRegzero:
	\begin{prooftree}
		\axiom{(\annprefixed{}{\aconstname_{\amarker}}{\M})^{\amarker}}
		\infLabel{\S}
		\unaryInf{\annprefixed{x}{\aconstname_{\amarker}}{\M}}
		\infLabel{\S}
		\unaryInf{\annprefixed{xy}{\aconstname_{\amarker}}{\M}}
		\emptyAxiom
		\infLabel{\0}
		\unaryInf{\annprefixed{xy}{y}{y}}
		\infLabel{@}
		\binaryInf{\annprefixed{xy}{\app{\aconstname_{\amarker}}{y}}{\app{\M}{y}}}
		\emptyAxiom
		\infLabel{\0}
		\unaryInf{\annprefixed{x}{x}{x}}
		\infLabel{\S}
		\unaryInf{\annprefixed{xy}{x}{x}}
		\infLabel{@}
		\binaryInf{\annprefixed{xy}{\app{\app{\aconstname_{\amarker}}{y}}{x}}{\app{\app{\M}{y}}{x}}}
		\infLabel{λ}
		\unaryInf{\annprefixed{x}{\abs{y}{\app{\app{\aconstname_{\amarker}}{y}}{x}}}{\app{\app{\M}{y}}{x}}}
		\infLabel{λ}
		\unaryInf{\annprefixed{}{\abs{xy}{\app{\app{\aconstname_{\amarker}}{y}}{x}}}{\abs{xy}{\app{\app{\M}{y}}{x}}}}
		\infLabel{\FIX, u}
		\unaryInf{\annprefixed{}{(\letin{\amarker = \abs{xy}{\app{\app{\amarker}{y}}{x}}}{f})}{\M}}
	\end{prooftree}
	Note that the term in the conclusion, which has been extracted by the annotation procedure, is actually the same as the \lambdaletrec-term \m{\letin{f = \abs{xy}{\app{\app{f}{y}}{x}}}{f}} which was used in \cref{example:stReg} to define \M as its infinite unfolding.
	\par Furthermore note that, in a variant of \annstRegzero in which the `\stRegzero-addition' (concerning abstraction prefix lengths) to the side-condition of \FIX is dropped, the derivation \m{\Deriv_r} in \cref{example:stReg} on the right could be annotated to obtain the following proof tree \m{\Hat{\Deriv}_r}:
	\begin{prooftree}
		\axiom{(\annprefixed{x}{\aconstname_{\amarker}}{\abs{y}{\app{\app{\M}{y}}{x}}})^{\amarker}}
		\infLabel{λ}
		\unaryInf{\annprefixed{}{\abs{x}{\aconstname_{\amarker}}}{\M}}
		\infLabel{\S}
		\unaryInf{\annprefixed{x}{\abs{x}{\aconstname_{\amarker}}}{\M}}
		\infLabel{\S}
		\unaryInf{\annprefixed{xy}{\abs{x}{\aconstname_{\amarker}}}{\M}}
		\emptyAxiom
		\infLabel{\0}
		\unaryInf{\annprefixed{xy}{y}{y}}
		\infLabel{@}
		\binaryInf{\annprefixed{xy}{\app{(\abs{x}{\aconstname_{\amarker}})}{y}}{\app{\M}{y}}}
		\emptyAxiom
		\infLabel{\0}
		\unaryInf{\annprefixed{x}{x}{x}}
		\infLabel{\S}
		\unaryInf{\annprefixed{xy}{x}{x}}
		\infLabel{@}
		\binaryInf{\annprefixed{xy}{\app{(\app{\abs{x}{\aconstname_{\amarker}})}{y}}{x}}{\app{\app{\M}{y}}{x}}}
		\infLabel{λ}
		\unaryInf{\annprefixed{x}{\abs{y}{\app{\app{(\abs{x}{\aconstname_{\amarker}})}{y}}{x}}}{\abs{y}{\app{\app{\M}{y}}{x}}}}
		\infLabel{\ainst, \u}
		\unaryInf{\annprefixed{x}{\letin{\amarker = \abs{y}{\app{\app{\amarker}{y}}{x}}}{\amarker}}{\abs{y}{\app{\app{\M}{y}}{x}}}}
		\infLabel{λ}
		\unaryInf{\annprefixed{}{\abs{x}{\letin{\amarker = \abs{y}{\app{\app{\amarker}{y}}{x}}}{\amarker}}}{\M}}
	\end{prooftree}
	Observe that, equally as was the case for \m{\Deriv_r}, also in \m{\Hat{\Deriv}_r} there occurs, on the thread between the marked assumption at the top and the rule instance \ainst at which this assumption is discharged, a formula, namely \annprefixed{}{\amarker}{\M}, that has a shorter abstraction prefix than the formula in the premise and conclusion of \ainst as well as in the assumption. Thus \ainst is not an instance of the rule \FIX in \annstRegzero.
	\par Furthermore note that the \lambdaletrec-term extracted by \m{\Hat{\Deriv}_r} does not unfold to \M, and hence does not express \M. This example shows that the side-condition on instances of \FIX in \annstRegzero cannot be weakened to the form used for the rule \FIX in \stReg when the aim is to extract a \lambdaletrec-term that unfolds to the infinite λ-term in the conclusion.
\end{example}

The central property of the proof system \annstRegzero still remains to be shown:
that the \lambdaletrec-terms in the conclusion of a derivation in this system
does actually unfold to the infinite λ-term in the conclusion.
This will be established below in \cref{lem:annstRegzero:2:stRegeq} and \cref{thm:annstRegzero}.
But as an intermediary proof system that will allow us to use results about the
proof system \stRegletrec from \cref{sec:proofs}, we also introduce an annotated
version of the rule \letrec in \stRegletrec, and an according annotated proof system.

\begin{figure}
	\proofsystem{
		\begin{bprooftree}
			\axiom{\set{[\annprefixed{\vec{x}}{\aconstname_{f_i}}{\M_i}]}_{i=1,\dots,n}}
			\noLine
			\unaryInf{\Deriv_j}
			\noLine
			\unaryInf{\set{\dots \annprefixed{\vec{x}}{\subst{\L_j}{\vec{f}}{\vec{\aconstname}_{\vec{f}}}}{\M_j} \dots}_{j=0,\dots,n}}
			\infLabel{\FIXletrec}
			\unaryInf{\annprefixed{\vec{x}}{(\letin{f_1 = \L_1 \dots f_n = \L_n}{\L_0})}{\M_0}}
		\end{bprooftree}
		\proofpar where \m{\aconstname_{f_1}, \dots,\aconstname_{f_n}} are distinct constants fresh for \m{\L_0,\dots,\L_n}, and substitutions \m{\subst{\L_l}{\vec{f}}{\vec{\aconstname}_{\vec{f}}}} stands for \m{\L_l [f_1 := \aconstname_{f_1}, \dots, f_n := \aconstname_{f_n}]}.
		\proofpar \emph{side-conditions}: \m{\length{\vec{y}} ≥ \length{\vec{x}}} holds for the prefix length of every \prefixed{\vec{y}}{\N} on a thread in \m{\Deriv_j} for \m{0 ≤ j ≤ n} from an open assumptions \m{(\prefixed{\vec{x}}{\aconstname_{f_i}})^{\amarker_i}} downwards; for bottommost instances: the arising derivation is guarded on access path cycles.
	}
	\caption{The proof system \annstRegletrec for \lambdaletrec-terms arises from the proof system \annstRegzero (see \cref{fig:annstRegzero}) by replacing the rule \FIX with the rule \FIXletrec.}
	\label{fig:annstRegletrec}
\end{figure}

\begin{definition}[the proof system \annstRegletrec]
	The proof system \annstRegletrec arises from \stRegzero by replacing the rule \FIX by the rule \FIXletrec in \cref{fig:annstRegletrec}, an annotated version of the rule \FIXletrec from \cref{def:Regletrec:stRegletrec} and \cref{fig:Regletrec:stRegletrec}. The side-condition on bottommost instances of \FIXletrec to be guarded on access path cycles is analogous as explained in \cref{def:Regletrec:stRegletrec}.
\end{definition}

\begin{proposition}[from \annstRegletrec- to \stRegletrec-derivations]\label{prop:annstRegletrec:2:stRegletrec}
	Let \Deriv be a closed derivation in \annstRegletrec with conclusion \annprefixed{}{\L}{\M}. Then a closed derivation \check{\Deriv} in \stRegletrec with conclusion \prefixed{}{\L} can be obtained by removing the λ-terms in \Deriv while keeping the \lambdaletrec-term-annotations.
\end{proposition}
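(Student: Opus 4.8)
## Proof Proposal for \cref{prop:annstRegletrec:2:stRegletrec}

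The plan is to establish the transformation by a straightforward structural induction on the derivation \Deriv in \annstRegletrec, showing at each rule instance that dropping the λ-term annotations (the right-hand components of the annotated formulas \annprefixed{\vec{x}}{\L}{\M}) while retaining the \lambdaletrec-term annotations (the left-hand components) produces a legal rule instance of \stRegletrec. Since the proof system \annstRegletrec is, by \cref{fig:annstRegletrec} and its defining \cref{def:Regletrec:stRegletrec} (via \cref{fig:annstRegzero}), an annotated version of \stRegletrec, every axiom and rule of \annstRegletrec has, by construction, a corresponding axiom or rule of \stRegletrec obtained precisely by this erasure: axioms \ruleref{\0} map to \ruleref{\0}, and instances of \ruleref{λ}, \ruleref{@}, \ruleref{\S}/\ruleref{\del}, and \FIXletrec map to the instances of the homonymous rules in \stRegletrec. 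This is entirely analogous to \cref{lem:stRegzero:2:annstRegzero}~\cref{lem:stRegzero:2:annstRegzero:ii}, where the same erasure argument was carried out for the non-\letrec systems \annstRegzero and \stRegzero; here the only difference is that the erasure now keeps the \lambdaletrec-component rather than discarding it, and that the rule \FIX is replaced by \FIXletrec.

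First I would fix the erasure map on formulas, sending \annprefixed{\vec{x}}{\L}{\M} to \prefixed{\vec{x}}{\L}, and extend it to proof trees node by node. Then I would check, case by case over the rules of \annstRegletrec, that the image of a rule instance is a rule instance of \stRegletrec: for \ruleref{λ}, \ruleref{@}, and \ruleref{\S}/\ruleref{\del} this is immediate since these rules act uniformly on the prefix and the syntactic shape of the \lambdaletrec-component, mirroring the shape manipulations in \cref{fig:Reg} and \cref{fig:stReg:stRegzero}; for axioms \ruleref{\0} the annotated form \annprefixed{\vec{x}y}{y}{y} erases to \prefixed{\vec{x}y}{y}, which is an axiom of \stRegletrec. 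The one instance requiring attention is \FIXletrec: its conclusion \annprefixed{\vec{x}}{(\letin{f_1 = \L_1 \dots f_n = \L_n}{\L_0})}{\M_0} erases to \prefixed{\vec{x}}{\letin{f_1 = \L_1 \dots f_n = \L_n}{\L_0}}, its discharged assumptions \m{(\annprefixed{\vec{x}}{\aconstname_{f_i}}{\M_i})^{\amarker_i}} erase to \m{(\prefixed{\vec{x}}{\aconstname_{f_i}})^{\amarker_i}}, and its premises erase to \m{\prefixed{\vec{x}}{\subst{\L_j}{\vec{f}}{\vec{\aconstname}_{\vec{f}}}}}, which is exactly the shape of \FIXletrec in \cref{fig:Regletrec:stRegletrec}. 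Moreover the side-conditions of the two rules coincide literally: both impose \m{\length{\vec{y}} ≥ \length{\vec{x}}} on prefixes of terms on threads from discharged assumptions, and both impose, on bottommost instances, that the arising derivation be guarded on access path cycles. The notion of access path (and its cycles and guards) from \cref{def:Regletrec:stRegletrec} depends only on the underlying tree shape and on which formula occurrences coincide; since erasure of the λ-component is injective on the derivations in question relative to the tree structure (it only forgets information and never merges distinct nodes into ones that must be identified for a cycle), access paths and their guards are preserved, so the guardedness side-condition transfers.

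Finally I would note that the resulting tree \check{\Deriv} is closed: every marked assumption \m{(\annprefixed{\vec{x}}{\aconstname_{f_i}}{\M_i})^{\amarker_i}} at the top of \Deriv is discharged at some instance of \FIXletrec, and its erasure \m{(\prefixed{\vec{x}}{\aconstname_{f_i}})^{\amarker_i}} is then discharged at the corresponding erased instance of \FIXletrec in \check{\Deriv} carrying the same marker; there are no axioms or leaves introduced or lost by erasure. Hence \check{\Deriv} is a closed derivation in \stRegletrec with conclusion \prefixed{}{\L}, as required. The only genuinely delicate point is the transfer of the guardedness side-condition on bottommost \FIXletrec-instances, so the bulk of the written proof should be devoted to spelling out that access paths, their cycles, and the positions of guarding instances of \ruleref{λ} and \ruleref{@} are invariant under erasure of the λ-annotation; everything else is a routine one-line check per rule, and I would present it by induction on \depth{\Deriv} exactly as in the proof of \cref{lem:stRegzero:2:annstRegzero}.
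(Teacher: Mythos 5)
Your proof is correct and follows the same route the paper (implicitly) relies on: the paper states this proposition without proof precisely because it is the routine erasure argument you spell out, with the rules of \annstRegletrec being by construction annotated copies of those of \stRegletrec and the side-conditions (prefix lengths, guardedness on access path cycles) depending only on the prefixes and the tree structure, both of which survive deletion of the λ-component. Your additional care about the transfer of guardedness is a reasonable point to make explicit, but it introduces no new idea beyond what the paper takes for granted.
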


\begin{proposition}[from \annstRegzero- to \annstRegletrec-derivations]\label{prop:annstRegzero:2:annstRegletrec}
	Every derivation \Deriv in \annstRegzero can be transformed into a derivation \m{\Deriv'} in \annstRegletrec with the same conclusion and with the same open assumption classes.
\end{proposition}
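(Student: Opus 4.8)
\textbf{Proof plan for \cref{prop:annstRegzero:2:annstRegletrec}.}

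The plan is to give a direct structural transformation on derivations, proceeding by induction on the depth of the derivation \Deriv in \annstRegzero, and replacing each bottommost occurrence of the assumption-discharging rule \FIX by a suitable instance of \FIXletrec. The only rules of \annstRegzero that differ from those of \annstRegletrec are \FIX versus \FIXletrec; the axioms \ruleref{\0} and the rules \ruleref{λ}, \ruleref{@}, \ruleref{\S} are shared verbatim. So the heart of the argument is a local rewriting step: whenever we meet an instance of \FIX in \Deriv of the form displayed in \cref{fig:annstRegzero}, discharging an assumption class \m{(\annprefixed{\vec{x}}{\aconstname_u}{\M})^u} and producing a conclusion \annprefixed{\vec{x}}{(\letin{u = \L}{u})}{\M}, we replace it by an instance of \FIXletrec with a single binding \m{u = \L}, whose rightmost (body) premise \annprefixed{\vec{x}}{\aconstname_u}{\M} is obtained by re-using the marked assumption directly, and whose leftmost premise is the immediate subderivation \m{\Deriv_0} that produced \annprefixed{\vec{x}}{\subst{\L}{u}{\aconstname_u}}{\M}. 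In other words, the single-binding \FIXletrec instance
\[
\begin{bprooftree}
\axiom{[\annprefixed{\vec{x}}{\aconstname_u}{\M}]^u}
\noLine
\unaryInf{\Deriv_0}
\noLine
\unaryInf{\annprefixed{\vec{x}}{\subst{\L}{u}{\aconstname_u}}{\M}}
\axiom{[\annprefixed{\vec{x}}{\aconstname_u}{\M}]^u}
\insertBetweenHyps{\hspace*{1em}}
\infLabel{\FIXletrec, u}
\binaryInf{\annprefixed{\vec{x}}{(\letin{u = \L}{u})}{\M}}
\end{bprooftree}
\]
exactly reproduces the original conclusion, because the body of the \letrec is the single function variable \m{u} whose defining right-hand side is \m{\L}. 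Working from the bottom of \Deriv upwards (equivalently, by induction on depth, transforming the immediate subderivations first by the induction hypothesis and then treating the final rule), every \FIX instance is dealt with in this way, and all other rule instances are carried over unchanged.

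Next I would check that the side-conditions transfer. The \FIX-side-condition \m{\depth{\Deriv_0} ≥ 1} is precisely the requirement we need for the new \FIXletrec instance to have a genuine (non-trivial) immediate subderivation; and the prefix-length condition of \FIX in \annstRegzero --- that \m{\length{\vec{y}} ≥ \length{\vec{x}}} for all \prefixed{\vec{y}}{\N} on threads in \m{\Deriv_0} from the discharged assumptions downwards --- is literally the same as the prefix-length side-condition on \FIXletrec as stated in \cref{fig:annstRegletrec} (with \m{n = 1}). The remaining side-condition on bottommost \FIXletrec instances, namely that the arising derivation be guarded on access path cycles, I would verify using \cref{prop:annstRegzero:sidecondition}: that proposition guarantees that every thread from a \FIX instance up to an assumption it discharges passes a \ruleref{λ} or \ruleref{@}; since access paths in the transformed derivation, upon reaching a marked assumption \m{(\annprefixed{\vec{x}}{\aconstname_u}{\M})^u}, step over to the conclusion of the corresponding subderivation \m{\Deriv_0} (per clause \cref{access:path:iv} of the definition of access paths), each cycle on such an access path traverses one such thread and hence contains a guard. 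Finally, because the transformation never touches leaves other than turning them from \annstRegzero-assumptions into \annstRegletrec-assumptions of the same class, the conclusion and the open assumption classes of \m{\Deriv'} coincide with those of \Deriv, which is the claim.

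The main obstacle I anticipate is not the local rewriting itself but the bookkeeping for the guardedness side-condition on \emph{bottommost} \FIXletrec instances in the presence of \emph{nested} \FIX instances: after the transformation, an access path may pass through several \FIXletrec instances (stepping over to bodies or to subderivations of inner instances), and one must argue that \emph{every} cycle of such a path --- not merely those confined to a single subderivation --- contains a guard. This is handled by observing that a cyclic access path, being cyclic, must revisit a formula occurrence, and between two visits it must, by the structure of access paths and \cref{prop:annstRegzero:sidecondition}, traverse at least one thread of the kind covered by that proposition, hence pass a \ruleref{λ} or \ruleref{@}; so the guardedness of \emph{the whole arising derivation} follows from the uniform guardedness guarantee of \cref{prop:annstRegzero:sidecondition}, applied at each \FIX instance before the transformation. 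I would also double-check the one subtlety that \cref{fig:annstRegzero}'s \FIX always produces a \letrec with body exactly a single variable \m{u}, so that the single-binding \FIXletrec instance above is well-formed and its conclusion matches syntactically (including the substitution \m{\subst{\L}{u}{\aconstname_u}} in the premise versus \m{\letin{u=\L}{u}} in the conclusion, which agree with the pattern of \FIXletrec for \m{n=1}).
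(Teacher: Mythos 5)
Your proposal is correct and follows essentially the same route as the paper: the paper likewise replaces each topmost-at-the-bottom instance of \FIX by a single-binding \FIXletrec instance whose body premise is a re-used marked assumption of class \m{u} and whose other premise is \m{\Deriv_0}, and then derives guardedness on access path cycles from \cref{prop:annstRegzero:sidecondition} by observing that relative access paths from a \FIXletrec conclusion to a discharged assumption descend from the corresponding threads in the original \FIX instances. The side-condition bookkeeping you describe (depth, prefix lengths, preservation of conclusion and open assumption classes) matches the paper's argument.
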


\begin{proof}
	First note that \annstRegzero and \annstRegletrec differ only by the specific version of assumption-discharging rule in the system, \FIX in \annstRegzero and \FIXletrec in \annstRegletrec.
	For showing the proposition, let \Deriv be a derivation in \annstRegzero.
	\par We define a proof tree \m{\Deriv'}, (intended to be a derivation in \annstRegletrec) by repeatedly replacing topmost occurrences of \FIX at the bottom of subderivations of the form as depicted in \cref{fig:annstRegzero}, by simulating subderivations of the form:
	\begin{prooftree}
		\axiom{(\annprefixed{\vec{x}}{\aconstname_u}{\M})^u}
		\axiom{[\annprefixed{\vec{x}}{\aconstname_u}{\M}]^u}
		\noLine
		\unaryInf{\Deriv_0}
		\noLine
		\unaryInf{\annprefixed{\vec{x}}{\subst{\L}{u}{\aconstname_u}}{\M}}
		\infLabel{\FIXletrec,u}
		\binaryInf{\annprefixed{\vec{x}}{(\letin{u = \L}{u})}{\M}}
	\end{prooftree}
	until all occurrences of instances of \FIX have been replaced by instances of \FIXletrec. The result is a proof tree with axioms and rules of \annstRegletrec \m{+} \FIXletrecmin, with the same conclusion and the same classes of open assumptions as \Deriv, but in which rule instances carrying the label \FIXletrec might actually be instances of \FIXletrecmin, unless actually proven (as will be done below) to be instances of \FIXletrec.
	\par Now first note that, due to the form of the introduced instances of \FIXletrec, every formula occurrence in \Deriv is reachable on an access path of \m{\Deriv'}. Second, note that relative access paths \m{\apath'} in \m{\Deriv'} starting at the conclusion of an instance \m{\ainst'} of \FIXletrec up to a marked assumption that is discharged at \m{\ainst'} descend from a thread \apath in \Deriv from the conclusion of an application \ainst of \FIX up to a marked assumption that is discharged at \ainst. Since by \cref{prop:annstRegzero:sidecondition} the thread \m{\apath'} passes at least one instance of a rule \ruleref{λ} or \ruleref{@}, this is also the case for \apath. As a consequence, all cycles on relative access paths are guarded. Thus \Deriv is guarded. Hence all occurrences of rule names \FIXletrec in \m{\Deriv'} rightly label occurrences of this rule, and \m{\Deriv'} is a derivation in \annstRegletrec, which moreover is guarded.
\end{proof}

\begin{example}\label{example:annstRegletrec}
	The closed derivation \m{\Hat{\Deriv}_l} in \cref{example:annstRegzero} can be transformed into the following closed derivation in \annstRegletrec:
	\begin{prooftree}
		\axiom{(\annprefixed{}{\aconstname_{\amarker}}{\M})^{\amarker}\hspace{-3em}}
		\axiom{(\annprefixed{}{\aconstname_u}{\M})^{\amarker}}
		\infLabel{\S}
		\unaryInf{\annprefixed{x}{\aconstname_u}{\M}}
		\emptyAxiom
		\infLabel{\0}
		\unaryInf{\annprefixed{x}{x}{x}}
		\infLabel{@}
		\binaryInf{\annprefixed{x}{\app{\aconstname_u}{x}}{\app{\M}{x}}}
		\infLabel{\S}
		\unaryInf{\annprefixed{xy}{\app{\aconstname_u}{x}}{\app{\M}{x}}}
		\emptyAxiom
		\infLabel{\0}
		\unaryInf{\annprefixed{xy}{y}{y}}
		\infLabel{@}
		\binaryInf{\annprefixed{xy}{\app{\app{\aconstname_u}{x}}{y}}{\app{\app{\M}{x}}{y}}}
		\infLabel{λ}
		\unaryInf{\annprefixed{x}{\abs{y}{\app{\app{\aconstname_u}{x}}{y}}}{\app{\app{\M}{x}}{y}}}
		\infLabel{λ}
		\unaryInf{\annprefixed{}{\abs{xy}{\app{\app{\aconstname_u}{x}}{y}}}{\abs{xy}{\app{\app{\M}{x}}{y}}}}
		\infLabel{\FIXletrec, \u}
		\binaryInf{\annprefixed{}{(\letin{u = \abs{xy}{\app{\app{u}{x}}{y}}}{u})}{\M}}
	\end{prooftree}
\end{example}

Now we concentrate on the remaining matter of proving that the \lambdaletrec-term
obtained by the annotation process from a closed derivation in \stRegzero to one in \annstRegzero
does indeed unfold to the λ-term it annotates.
For this, we establish a proof-theoretic transformation from derivations in \annstRegzero
to derivations in \stRegeq.

\begin{lemma}[from \annstRegzero- to \stRegeq-derivations]\label{lem:annstRegzero:2:stRegeq}
	Let \Deriv be a closed derivation in \annstRegzero with conclusion \annprefixed{}{\L}{\M}. Then \defd{\unfsem{\L}}, and \Deriv can be transformed into a closed derivation \m{\Deriv'} in \stRegeq with conclusion \prefixed{}{\unfsem{\L}} = \prefixed{}{\M} by:
	\begin{itemize}
		\item replacing each formula occurrence \o of \annprefixed{\vec{y}}{P}{\N} in \Deriv by an occurrence of the formula \m{\unfsem{\prefixed{\vec{y}}{\letin{B}{\tilde{P}}}} = \prefixed{\vec{y}}{\N}} in \m{\Deriv'}, where \B arises as the union of all outermost binding groups in conclusions of instances of \FIX at or below \o, and where \prefixed{\vec{y}}{P} = \prefixed{\vec{y}}{\subst{\tilde{P}}{\vec{f}}{\vecsub{\aconstname}{\vec{f}}}} and \vec{f} is comprised of the function variables occurring in \B and \vecsub{\aconstname}{\vec{f}} are distinct constants for \vec{f} as chosen by \Deriv; the unfoldings involved here are always defined.
	\end{itemize}
\end{lemma}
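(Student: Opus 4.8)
The idea is to relabel the finite proof tree $\Deriv$ occurrence by occurrence exactly as the lemma prescribes and then to check that each rule instance of $\annstRegzero$ becomes a valid rule instance of $\stRegeq$; the definedness claims are supplied by the apparatus already built for $\stRegletrec$ together with the commutation of infinite unfolding with the decomposition steps. First I would establish $\defd{\unfsem{\L}}$: by \cref{prop:annstRegzero:2:annstRegletrec} the derivation $\Deriv$ transforms into a derivation in $\annstRegletrec$ with the same conclusion, by \cref{prop:annstRegletrec:2:stRegletrec} dropping its $\lambdaletrec$-term components yields a closed derivation in $\stRegletrec$ with conclusion $\prefixed{}{\L}$, and \cref{thm:Regletrec:stRegletrec} then gives that $\L$ expresses a $\lambda$-term. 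For definedness at an arbitrary occurrence $\o$ of $\annprefixed{\vec{y}}{P}{\N}$ with associated binding group $B$: by \cref{prop:Regletrec:stRegletrec} the term $\prefixed{\vec{y}}{\letin{B}{\tilde{P}}}$ occurs on an access path of that $\stRegletrec$-derivation, hence is a $\m{\mred_\streg}$-reduct of $\prefixed{}{\L}$ in $\stRegletrecARS$; projecting this rewrite sequence under $\unfsem{}$ by means of \cref{lem:commute:unfoldomegared:stregred} (the unfolding steps leaving the unfolding fixed, by confluence of $\unfCRS$, \cref{prop:unf-confluence}) shows that $\unfsem{\prefixed{\vec{y}}{\letin{B}{\tilde{P}}}}$ is a reduct of $\unfsem{\L}$ under $\m{\red_\lambda}$-, $\m{\red_{@_0}}$-, $\m{\red_{@_1}}$-, $\m{\red_\S}$-steps, hence again a $\lambda$-term and so defined. (Alternatively one can localise the first argument to the subderivation of $\Deriv$ rooted at $\o$, closed up by the $\FIX$-instances below $\o$.)

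After relabelling, every formula occurrence is a genuine equation between $\lambda$-terms, and I would verify the rules one by one; the recurring ingredient is that infinite unfolding commutes with the term constructors. Using the $\m{\red_\unfold}$-steps $\letin{B}{\abs{y}{\tilde{\L}}} \m{\red_\unfold} \abs{y}{\letin{B}{\tilde{\L}}}$ and $\letin{B}{\app{\tilde{\L}_0}{\tilde{\L}_1}} \m{\red_\unfold} \app{(\letin{B}{\tilde{\L}_0})}{(\letin{B}{\tilde{\L}_1})}$ (both legitimate since, by the stack discipline of the abstraction prefixes and the prefix-length side condition on $\FIX$, no binding of $B$ mentions $y$), confluence of $\unfCRS$, and the implications of \cref{lem:commute:unfoldomegared:stregred}, one obtains, for prefixed terms that have unfoldings,
\begin{gather*}
	\unfsem{\prefixed{\vec{x}}{\letin{B}{\abs{y}{\tilde{\L}}}}} \m{\red_\lambda} \unfsem{\prefixed{\vec{x}y}{\letin{B}{\tilde{\L}}}}, \\
	\unfsem{\prefixed{\vec{x}}{\letin{B}{\app{\tilde{\L}_0}{\tilde{\L}_1}}}} \m{\red_{@_i}} \unfsem{\prefixed{\vec{x}}{\letin{B}{\tilde{\L}_i}}} \quad (i \in \{0,1\}).
\end{gather*}
Since the rules $\ruleref{\lambda}$ and $\ruleref{@}$ of $\EqTer$ relate conclusion and premise precisely by a $\m{\red_\lambda}$- resp.\ $\m{\red_{@_i}}$-step on each side of the equation, and the $\lambda$-term components of the $\annstRegzero$-rules are manifestly so related, the images of rules $\ruleref{\lambda}$ and $\ruleref{@}$ of $\annstRegzero$ are instances of $\ruleref{\lambda}$ and $\ruleref{@}$ of $\EqTer$.

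For the axiom $\ruleref{\0}$ of $\annstRegzero$ the body is a prefix variable $y$, which reaches no function variable, so $\letin{B}{y} \m{\red_\reduce} \letin{}{y} \m{\red_\nil} y$, whence $\unfsem{\prefixed{\vec{x}y}{\letin{B}{y}}} = \prefixed{\vec{x}y}{y}$ and the image is an instance of $\ruleref{\0}$ of $\EqTer$. For rule $\ruleref{\S}$ of $\annstRegzero$: since $x_n$ does not occur in $\N$, it occurs neither in $\tilde{P}$ nor in $B$, hence not in $\letin{B}{\tilde{P}}$, and since unfolding introduces no free variables it does not occur in $\unfsem{\prefixed{x_1 \dots x_{n-1}}{\letin{B}{\tilde{P}}}}$ either; so the premise and conclusion images differ only by one vacuous prefix entry pushed on both sides, an instance of $\ruleref{\S}$ of $\EqTer$. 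Finally, for $\FIX$ of $\annstRegzero$, whose conclusion carries annotation $\letin{u = \L}{u}$, whose premise carries the $\m{\aconstname_u}$-for-$u$ variant of $\L$, and whose discharged assumptions carry $\m{\aconstname_u}$: the premise and every discharged assumption acquire binding group $B, u = \tilde{\L}$ (up to bindings unreachable from $\tilde{\L}$, stripped off by a $\m{\red_\reduce}$-step), and from $\letin{B}{\letin{u = \tilde{\L}}{u}} \m{\red_\merg} \letin{B, u = \tilde{\L}}{u} \m{\red_\rec} \letin{B, u = \tilde{\L}}{\tilde{\L}}$ together with confluence one sees that the images of the conclusion, the premise, and every discharged assumption are one and the same equation; as $\m{\depth{\Deriv_0} \ge 1}$ is preserved by relabelling, this is an instance of $\FIX$ of $\stRegeq$ discharging the corresponding assumptions. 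Since $\Deriv$ is closed and tree shape and discharge structure are preserved, the result is a closed $\stRegeq$-derivation; at its root $B$ is empty and $\tilde{P} = \L$, and $\letin{}{\L} \m{\red_\nil} \L$, so its conclusion is $\prefixed{}{\unfsem{\L}} = \prefixed{}{\M}$, as required.

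\textbf{The main obstacle} is the bookkeeping underlying the first two paragraphs: tracking which constants $\m{\aconstname_u}$ occur in which annotation, how the accumulated binding group $B$ grows along a branch, and --- most delicately --- that each $\prefixed{\vec{y}}{\letin{B}{\tilde{P}}}$ is a closed $\lambdaletrec$-term and that the $\unfCRS$-reductions relating these terms are available. This rests on the stack-like behaviour of the abstraction prefixes and the prefix-length side conditions on $\FIX$ in $\stRegzero$ and $\annstRegzero$, which guarantee that a binding introduced at a $\FIX$-instance below $\o$ never mentions a $\lambda$-variable absent from the prefix at $\o$, and hence is harmless exactly when it becomes unreachable.
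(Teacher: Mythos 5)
Your proposal is correct and follows essentially the same route as the paper's proof: establish \defd{\unfsem{\L}} via the chain \cref{prop:annstRegzero:2:annstRegletrec}, \cref{prop:annstRegletrec:2:stRegletrec}, \cref{thm:Regletrec:stRegletrec}, and then verify the relabelled rule instances one by one using local \m{\red_\unfold}-steps (\m{\red_{\unfoldpre{λ}}}, \m{\red_{\unfoldpre{@}}}, \m{\red_{\unfoldpre{\merg}}}, \m{\red_{\unfoldpre{\rec}}}, \m{\red_{\unfoldpre{\reduce}}}) together with unique normalisation of unfolding. The only cosmetic differences are that you defer the \m{\red_{\unfoldpre{\merg}}}-step to the \FIX case rather than bundling it with the λ-case, and that you offer an access-path/commutation argument for per-occurrence definedness where the paper extracts it from the bottom-up induction itself; both variants rest on the same lemmas.
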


\begin{proof}
	Let \Deriv be a closed derivation in \annstRegzero with conclusion \annprefixed{}{\L}{\M}.
	\par By \cref{prop:annstRegzero:2:annstRegletrec}, \Deriv can be transformed into a closed derivation \m{\Deriv_1} in \annstRegletrec with the same conclusion. Due to \cref{prop:annstRegletrec:2:stRegletrec}, by dropping the infinite terms in \m{\Deriv_1}, a derivation \m{\Deriv_2} in \stRegletrec with conclusion \prefixed{}{\L} can be obtained. Then it follows from \cref{thm:Regletrec:stRegletrec} that \defd{\unfsem{\L}}, that is, that \L unfolds to a λ-term.
	\par We have to show that the transformation of \Deriv into \m{\Deriv'} as described in the statement of the lemma is, on the one hand, possible (that is, the unfolding of each prefixed \lambdaletrec-term is indeed defined), and on the other hand, that the proof tree \m{\Deriv'} obtained by these replacements is indeed a valid derivation in \stRegeq.
	\par We argue for the possibility of these replacements and for their correctness locally, that is by carrying out the replacements from the bottom of \Deriv upwards, thereby recognising for every replacement step that it is possible, and that it indeed produces a valid inference in \stRegzero.
	\par As a typical example of the arguments necessary to establish this fact, we consider a derivation \Deriv in \annstRegzero with in it an instance of \ruleref{λ} that immediately succeeds an instance of \FIX:
	\begin{prooftree}
		\axiom{[\annprefixed{\vec{x}y}{\aconstname_{\amarker}}{\M_0}]^{\amarker}}
		\noLine
		\unaryInf{\Deriv_0}
		\noLine
		\unaryInf{\annprefixed{\vec{x}y}{\subst{\L_0}{\amarker}{\aconstname_{\amarker}}}{\M_0}}
		\infLabel{\FIX, \amarker}
		\unaryInf{\annprefixed{\vec{x}y}{\letin{\amarker = \L_0}{\amarker}}{\M_0}}
		\infLabel{λ}
		\unaryInf{\annprefixed{\vec{x}}{(\abs{y}{\letin{\amarker = \L_0}{\amarker}})}{\abs{y}{\M_0}}}
		\noLine
		\unaryInf{\Deriv_{00}'}
		\noLine
		\unaryInf{\annprefixed{}{\L}{\M}}
	\end{prooftree}
	According to the statement of the lemma, \Deriv is transformed into the following
	\stRegeq-proof-tree:
	\begin{prooftree}
		\axiom{\dots
		(\unfsem{\prefixed{\vec{x}y}{\letin{B_0 \eqsep \amarker = \tilde\L'_0 \eqsep B'}{\amarker}}} = \prefixed{\vec{x}y}{\M_0})^{\amarker} \dots}
		\noLine
		\unaryInf{\Deriv_0'}
		\noLine
		\unaryInf{\unfsem{\prefixed{\vec{x}y}{\letin{B_0 \eqsep \amarker = \tilde\L'_0}{\tilde\L'_0}}} = \prefixed{\vec{x}y}{\M_0}}
		\infLabel{\FIX, \amarker}
		\unaryInf{\unfsem{\prefixed{\vec{x}y}{\letin{B_0 \eqsep \amarker = \tilde\L'_0}{\amarker}}} = \prefixed{\vec{x}y}{\M_0}}
		\infLabel{λ}
		\unaryInf{\unfsem{\prefixed{\vec{x}}{\letin{B_0}{\abs{y}{\letin{\amarker = \tilde\L'_0}{\amarker}}}}} = \prefixed{\vec{x}}{\abs{y}{\M_0}}}
		\noLine
		\unaryInf{\Deriv_{00}'}
		\noLine
		\unaryInf{\unfsem{\prefixed{}{\letin{}\L}} = \prefixed{}{\M}}
	\end{prooftree}
	where \m{B_0} arises as the union of all outermost binding groups in conclusions of instances of \FIX strictly below the visible instance of \FIX, and \m{B'} is the union of all outermost binding groups in conclusions of instances of \FIX strictly above the visible instance of \FIX and below the indicated marked assumptions (this binding group differs for different marked assumptions of this assumption class), and where \m{\tilde\L'_0} is the result of replacing in \m{\L_0} all occurrences of constants \m{\aconstname_f} by the function variable \f from which it originates.
	\par Now assuming that the unfolding in the conclusion of the visible instance of \ruleref{@} has been shown to exist, we want to recognise that this instance and the instance of \FIX above are valid instances in \stRegzero. For the instance of \ruleref{λ} we have to show:
	\begin{align*}
		&\defd{\unfsem{\prefixed{\vec{x}}{\letin{B_0}{\abs{y}{\letin{\amarker = \tilde\L'_0}{\amarker}}}}}}
		\\&\Longrightarrow∃\abs{\vec{x}y}{\N_0}~ 
		\begin{aligned}[t]
			& \defd{\unfsem{\prefixed{\vec{x}y}{\letin{B_0 \eqsep \amarker = \tilde\L'_0}{\tilde\L'_0}}}} = \prefixed{\vec{x}y}{\N_0}∧
			\\ & \unfsem{\prefixed{\vec{x}}{\letin{B_0}{\abs{y}{\letin{\amarker = \tilde\L'_0}{\amarker}}}}} = \prefixed{\vec{x}}{\abs{y}{\N_0}}
		\end{aligned}
	\end{align*}
	This, however, is a straightforward consequence of the following \m{\red_\unfold}-rewrite-steps:
	\rewritingSequence{
		\prefixed{\vec{x}}{\letin{B_0}{\abs{y}{\letin{\amarker = \tilde\L'_0}{\amarker}}}}
		\step{\red_{\unfoldpre{λ}}}{\prefixed{\vec{x}}{\abs{y}{\letin{B_0}{\letin{\amarker = \tilde\L'_0}{\amarker}}}}}
		\step{\red_{\unfoldpre{\merg}}}{\prefixed{\vec{x}}{\abs{y}{\letin{B_0,\amarker = \tilde\L'_0}{\amarker}}}}
	}
	in view of the fact that, by \cref{lem:unique_normal_forms}, unfolding is uniquely normalising (in at most \omega steps).
	And for the instance of \FIX we have to show:
	\begin{multline*}
		\defd{\underbrace{\unfsem{\prefixed{\vec{x}y}{\letin{B_0 \eqsep \amarker = \tilde\L'_0}{\amarker}}}}_{= (\star)}}
		\Longrightarrow
		\begin{aligned}[t]
			&
			\defd{\unfsem{\prefixed{\vec{x}y}{\letin{B_0 \eqsep \amarker = \tilde\L'_0}{\tilde\L'_0}}}}
			= (\star)
			\\
			& ∧ ~
			\defd{\unfsem{\prefixed{\vec{x}y}{\letin{B_0 \eqsep \amarker = \tilde\L'_0 \eqsep B'}{\amarker}}}}
			= (\star)
		\end{aligned}
	\end{multline*}
	(actually the statement as in the second line has to be shown for every binding-group \m{B'} that occurs for marked assumptions discharged at the instance of \FIX). This implication is a consequence of the \m{\red_\unfold}-rewrite-steps:
	\begin{gather*}
		\prefixed{\vec{x}y}{\letin{B_0 \eqsep \amarker = \tilde\L'_0}{\amarker}}
		\m{\red_{\unfoldpre{\rec}}}
		\prefixed{\vec{x}y}{\letin{B_0 \eqsep \amarker = \tilde\L'_0}{\tilde\L'_0}}
		\\
		\prefixed{\vec{x}y}{\letin{B_0 \eqsep \amarker = \tilde\L'_0 \eqsep B'}{\amarker}}
		\m{\red_{\unfoldpre{\reduce}}}
		\prefixed{\vec{x}y}{\letin{B_0 \eqsep \amarker = \tilde\L'_0}{\amarker}}
	\end{gather*}
	again in view of the statement of \cref{lem:unique_normal_forms}.
	\par The arguments used here are typical, and can be carried out similarly also for showing that axioms (\0), and instances of rules \ruleref{@} and \ruleref{\S} in \annstRegzero-derivations give rise to, under the transformation described in the statement of the lemma, valid instances of axioms (\0), and instances of \ruleref{@} and \ruleref{\S}, respectively, in \stRegeq-derivations.
\end{proof}

\begin{theorem}[\annstRegzero and unfolding semantics]\label{thm:annstRegzero}
	If \derivablein{\annstRegzero}{\annprefixed{}{\L}{\M}} holds for a \lambdaletrec-term \L and a λ-term \M, then \L unfolds to \M.
\end{theorem}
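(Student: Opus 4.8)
The plan is to derive this directly from the proof-theoretic transformation of \cref{lem:annstRegzero:2:stRegeq} together with the soundness of \stRegeq established in \cref{thm:stRegeq}. So let \m{\Deriv} be a closed derivation in \annstRegzero with conclusion \annprefixed{}{\L}{\M}. First I would apply \cref{lem:annstRegzero:2:stRegeq}: it yields at once that \m{\defd{\unfsem{\L}}} — hence \L unfolds to \emph{some} λ-term — and that \m{\Deriv} can be transformed into a closed derivation \m{\Deriv'} in \stRegeq with conclusion \m{\prefixed{}{\unfsem{\L}} = \prefixed{}{\M}}. It then remains only to upgrade the existence of \m{\Deriv'} to the genuine identity \m{\unfsem{\L} = \M}.

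For that step I would invoke \cref{thm:stRegeq}, which asserts that \stRegeq is sound for equality of \emph{strongly regular} λ-terms; so I need to certify strong regularity of both operands of the derived equation. For \m{\unfsem{\L}}: it is \lambdaletrec-expressible (by \L itself, the unfolding having just been shown to exist), hence strongly regular by \cref{thm:ll-expressible:2:streg}. For \M: dropping the \lambdaletrec-term annotations from \m{\Deriv} via \cref{lem:stRegzero:2:annstRegzero}~\cref{lem:stRegzero:2:annstRegzero:ii} yields a closed derivation in \stRegzero with conclusion \prefixed{}{\M}, and by soundness of \stRegzero (\cref{lem:Reg:stReg}~\cref{lem:Reg:stReg:ii}) this witnesses that \M is strongly regular. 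With both operands strongly regular, \cref{thm:stRegeq} applies to \m{\Deriv'} and gives \m{\unfsem{\L} = \M}, i.e.\ \L unfolds to \M.

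I expect there to be essentially no obstacle here — all the substantive work is carried by the already-proved lemmas. The one point that dictates the order of the argument is that \cref{thm:stRegeq} is only available for strongly regular operands, so one cannot apply it before independently extracting \m{\defd{\unfsem{\L}}} from \cref{lem:annstRegzero:2:stRegeq} and feeding it into \cref{thm:ll-expressible:2:streg}; once that dependency is respected, the proof is a short chain of citations.
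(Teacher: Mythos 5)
Your proof is correct and follows essentially the same route as the paper's: apply \cref{lem:annstRegzero:2:stRegeq} to obtain \defd{\unfsem{\L}} and the \stRegeq-derivation of \m{\prefixed{}{\unfsem{\L}} = \prefixed{}{\M}}, then conclude by soundness of \stRegeq (\cref{thm:stRegeq}). In fact you are slightly more careful than the paper, which applies \cref{thm:stRegeq} without explicitly discharging its strong-regularity hypotheses; your verification of these (via \cref{thm:ll-expressible:2:streg} for \unfsem{\L} and via \cref{lem:stRegzero:2:annstRegzero} together with \cref{lem:Reg:stReg} for \M) is exactly right.
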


\begin{proof}
	Suppose that \Deriv is a closed derivation in \annstRegzero with conclusion \annprefixed{}{\L}{\M}. \Cref{lem:annstRegzero:2:stRegeq} entails that \L unfolds to a λ-term, and moreover, that \Deriv can be transformed into a closed derivation \m{\Deriv'} in \stRegeq with conclusion \prefixed{}{\unfsem{\L}} = \prefixed{}{\M}. Then it follows by \cref{thm:stRegeq} (applying soundness of \stRegzero with respect to the property of \lambdaletrec-terms to unfold to a λ-term that \m{\unfsem{\L} = \M}, and hence that \m{\L \m{\omegared_\unfold} \M}. In this way we have found a \lambdaletrec-term \L that expresses \M.
\end{proof}

We now arrive at our main characterisation result.

\begin{theorem}[\lambdaletrec-expressibility \sim strong regularity]\label{thm:ll-expressible:streg}
	A λ-term is \lambdaletrec-expressible if and only if it is strongly regular.
\end{theorem}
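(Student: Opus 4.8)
The plan is to prove \cref{thm:ll-expressible:streg} by combining the two implications that have essentially already been assembled in the preceding development. The statement asserts, for all $\M \in \iTer{\lambdacal}$, that $\M$ is \lambdaletrec-expressible if and only if $\M$ is strongly regular. One direction, ``\lambdaletrec-expressible $\implies$ strongly regular'', is exactly \cref{thm:ll-expressible:2:streg}, which was proved via the chain \cref{thm:Regletrec:stRegletrec} (existence of a closed derivation in \stRegletrec), \cref{lem:Regletrec:stRegletrec} (extraction of a \extscope-delimiting strategy $\astrat_{\Deriv}$ for \stRegletrecARS making $\L$ regular), and \cref{lem:proj:strat:letrec:lambda} (projection of $\astrat_{\Deriv}$ under $\unfsem{}$ to a \extscope-delimiting strategy for \stRegARS making $\M = \unfsem{\L}$ regular). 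So for this direction I would simply invoke \cref{thm:ll-expressible:2:streg}.

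For the converse, ``strongly regular $\implies$ \lambdaletrec-expressible'', I would argue as follows. Let $\M$ be a strongly regular $\lambda$-term. By \cref{lem:Reg:stReg}~\cref{lem:Reg:stReg:ii} (soundness and completeness of \stRegzero for strongly regular $\lambda$-terms), there is a closed derivation $\Deriv$ in \stRegzero with conclusion $\prefixed{}{\M}$. By \cref{lem:stRegzero:2:annstRegzero}~\cref{lem:stRegzero:2:annstRegzero:i}, $\Deriv$ can be annotated into a closed derivation $\Derivann$ in \annstRegzero with conclusion $\annprefixed{}{\L}{\M}$ for some \lambdaletrec-term $\L$. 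Finally, by \cref{thm:annstRegzero} (\annstRegzero and unfolding semantics), $\derivablein{\annstRegzero}{\annprefixed{}{\L}{\M}}$ implies that $\L$ unfolds to $\M$, i.e.\ $\unfsem{\L} = \M$, which is precisely the statement that $\M$ is \lambdaletrec-expressible, witnessed by $\L$.

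Putting these together gives the biconditional, and the proof of \cref{thm:ll-expressible:streg} is essentially a two-line composition of already-established results. The real work has been done in the earlier sections; the main conceptual obstacle was \cref{thm:annstRegzero}, whose proof (via \cref{lem:annstRegzero:2:stRegeq}, transforming \annstRegzero-derivations into \stRegeq-derivations by replacing each annotated formula $\annprefixed{\vec{y}}{P}{\N}$ by the equation $\unfsem{\prefixed{\vec{y}}{\letin{B}{\tilde{P}}}} = \prefixed{\vec{y}}{\N}$, and then invoking soundness of \stRegeq via \cref{thm:stRegeq}) requires the careful local verification that the unfoldings involved are always defined and that each rule instance is preserved under the transformation — in particular the passages over \ruleref{\S}-instances rely on the invariant that the \lambdaletrec-term annotation never introduces more bindings than its annotated $\lambda$-term, and the passages over \FIX rely critically on the \stRegzero-specific side-condition on abstraction-prefix lengths (as illustrated by \cref{example:annstRegzero}, where the weaker side-condition of \stReg would yield a $\L$ that does not unfold to $\M$). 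With those lemmas in hand, the present theorem follows immediately.

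\begin{proof}
	``$\Rightarrow$'': This is \cref{thm:ll-expressible:2:streg}.
	\par ``$\Leftarrow$'': Let $\M$ be a strongly regular $\lambda$-term. By \cref{lem:Reg:stReg}~\cref{lem:Reg:stReg:ii} there exists a closed derivation $\Deriv$ in \stRegzero with conclusion $\prefixed{}{\M}$. By \cref{lem:stRegzero:2:annstRegzero}~\cref{lem:stRegzero:2:annstRegzero:i}, $\Deriv$ transforms into a closed derivation $\Derivann$ in \annstRegzero with conclusion $\annprefixed{}{\L}{\M}$ for some \lambdaletrec-term $\L$. Then \cref{thm:annstRegzero} yields that $\L$ unfolds to $\M$, that is, $\unfsem{\L} = \M$. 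Hence $\M$ is \lambdaletrec-expressible.
\end{proof}
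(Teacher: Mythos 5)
Your proof is correct and follows exactly the paper's own route: the forward direction is \cref{thm:ll-expressible:2:streg}, and the converse composes \cref{lem:Reg:stReg}~\cref{lem:Reg:stReg:ii}, \cref{lem:stRegzero:2:annstRegzero}~\cref{lem:stRegzero:2:annstRegzero:i}, and \cref{thm:annstRegzero}. Your choice to extract a closed derivation in \stRegzero (rather than \stReg, as the paper's wording has it) is in fact the slightly cleaner reading, since \cref{lem:stRegzero:2:annstRegzero} is stated for \stRegzero-derivations.
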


\begin{proof}
	Let \M be a λ-term.
	\par The direction ``⇒'' is the statement of \cref{thm:ll-expressible:2:streg}.
	\par For showing the direction ``⇐'' in the statement of the theorem, we assume that \M is strongly regular. Then by \cref{lem:Reg:stReg}~\cref{lem:Reg:stReg:ii}, there exists a closed derivation \Deriv in \stReg with conclusion \prefixed{}{\M}. Due to \cref{lem:stRegzero:2:annstRegzero}~\cref{lem:stRegzero:2:annstRegzero:i}, \Deriv can be transformed into a derivation \Derivann in \annstRegzero with conclusion \annprefixed{}{\L}{\M}, for some \lambdaletrec-term \L. Then it follows by \cref{thm:annstRegzero} that the \lambdaletrec-term \L expresses \M.
\end{proof}

As an immediate consequence of \cref{thm:ll-expressible:streg} and of \cref{cor:thm:streg:fin:bind:capt:chains} we obtain the following theorem, a summary of our main results:

\begin{theorem}
	For all λ-terms \M the following statements are equivalent:
	\begin{enumerate}[(i)]
		\item \M is \lambdaletrec-expressible.
		\item \M is strongly regular.
		\item \M is regular, and it only contains finite binding--capturing chains.
	\end{enumerate}
\end{theorem}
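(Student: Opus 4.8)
The plan is to assemble the final theorem directly from the named results already proved in the excerpt, since all three equivalences have been reduced to statements that appear earlier. Concretely, I would prove the cycle of implications (i) $\Rightarrow$ (ii) $\Rightarrow$ (iii) $\Rightarrow$ (i), or equivalently just invoke the two main bridging theorems together with the binding--capturing-chain characterisation.

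\medskip

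\noindent\textbf{The equivalence (i) $\Leftrightarrow$ (ii).} This is exactly \cref{thm:ll-expressible:streg}: a $λ$-term is \lambdaletrec-expressible if and only if it is strongly regular. That theorem in turn is proved from \cref{thm:ll-expressible:2:streg} (every \lambdaletrec-expressible $λ$-term is strongly regular, obtained via the derivation-to-strategy machinery in \cref{sec:proofs}) for the forward direction, and from the annotated-proof-system construction culminating in \cref{thm:annstRegzero} for the backward direction. So for this half there is nothing new to do: I would simply cite \cref{thm:ll-expressible:streg}.

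\medskip

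\noindent\textbf{The equivalence (ii) $\Leftrightarrow$ (iii).} This is precisely \cref{cor:thm:streg:fin:bind:capt:chains}: a $λ$-term is strongly regular if and only if it is regular and contains only finite binding--capturing chains. That corollary is itself derived from \cref{thm:streg:fin:bind:capt:chains} (a regular $λ$-term is strongly regular iff it has only finite binding--capturing chains) together with \cref{prop:def:reg:streg}~\cref{prop:def:reg:streg:i} (strong regularity implies regularity). So again I would simply cite \cref{cor:thm:streg:fin:bind:capt:chains}.

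\medskip

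\noindent\textbf{Putting it together.} The proof is therefore a two-line chaining argument: fix a $λ$-term $\M$; by \cref{thm:ll-expressible:streg}, statement (i) holds iff $\M$ is strongly regular, i.e.\ iff (ii); and by \cref{cor:thm:streg:fin:bind:capt:chains}, (ii) holds iff $\M$ is regular with only finite binding--capturing chains, i.e.\ iff (iii). Hence (i), (ii), (iii) are mutually equivalent. I do not anticipate any genuine obstacle here, because this final theorem is deliberately stated as a \emph{summary} of results whose hard parts — the projection lemmas between scope-delimiting strategies for \lambdaletrec-terms and for $λ$-terms, the soundness/completeness of the proof systems \stReg, \stRegletrec, \stRegeq, the annotated extraction of a \lambdaletrec-term from a derivation in \annstRegzero, and the binding--capturing-chain analysis of \stRegposCRS-rewrite-sequences — have all already been carried out in the preceding sections. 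The only care needed is bookkeeping: making sure the ``regular'' in clause (iii) refers to the same notion (\cref{def:reg:streg}, via a scope-delimiting strategy for \RegARS) that \cref{cor:thm:streg:fin:bind:capt:chains} uses, which it does.
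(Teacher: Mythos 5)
Your proposal is correct and matches the paper exactly: the theorem is stated there as an immediate consequence of \cref{thm:ll-expressible:streg} (for the equivalence of (i) and (ii)) and \cref{cor:thm:streg:fin:bind:capt:chains} (for the equivalence of (ii) and (iii)), which is precisely the two-citation chaining argument you give.
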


\section{λ-transition-graphs}\label{sec:lambda-transition-graphs}

\begin{para}[Overview]
	In this section we introduce the concept of λ-transition-graphs.
	λ-transition-graphs are a nameless graphical representations of λ-terms that arise naturally from the decomposition systems from this chapter. The λ-transition-graph of \M is in fact almost identical to \M's reduction graph with respect to some \extscope-delimiting strategy for \stRegCRS. The only difference is that the former is a (pointed) LTS and the latter an ARS. As explained earlier in \cref{rem:diff-ARS-LTS} these formalisms are essentially same, except for the definition of bisimulation, which for LTSs is sensitive to the transition labels. The main result of this section is a coinduction principle for λ-terms: two λ-terms are equal if and only if they have bisimilar λ-transition-graphs.
\end{para}

\begin{para}[outlook: λ-term-graphs]
	The λ-transition-graphs which we study in this section will be further developed in \cref{chap:representations} into `λ-term-graphs', which are first-order term graph and serve as a graphical representation of λ-terms and \lambdaletrec-terms.
\end{para}

\begin{definition}[transition system induced by an ARS]
	Let \m{\aARS = \tuple{O,\steps,\src{},\tgt{}}} be an ARS (or a sub-ARS (or a sub-ARS of a labelled version) of an ARS) that is induced by a CRS with rules \m{R} (see \cref{def:inducedARS}). Note that, every step in \m{\steps : O \times R \times O} carries information according to from which rule it stems from. By the \emph{LTS induced by \aARS} we mean the LTS \m{\aLTS_\aARS = \triple{O}{R}{\steps}} in which the steps in \aARS according to rule \arule are interpreted as transitions with label \arule.
\end{definition}

\begin{definition}[transition graph of a λ-term]
	Let \M be a λ-term with reduction graph \GeneratedSubARSi{\aARS}{M} w.r.t.\ to an ARS \aARS and let \triple{O}{R}{T} be the LTS induced by \GeneratedSubARSi{\aARS}{M}. We call the labelled transition graph \m{\ltg{\aARS}{M} = \quadruple{O}{R}{M}{T}} the \emph{transition graph of \M w.r.t.\ \aARS}.
\end{definition}

\begin{definition}[λ-transition-graph]\label{def:lambda-transition-graph}
	We call a labelled transition graph \m{G = \quadruple{\states}{L}{i}{T}} a \emph{λ-transition-graph} if:
	\begin{itemize}
	\item it is connected
	\item the labels are \m{L = \set{λ,\S,@_0,@_1}}
	\item there are no infinite paths in \G consisting solely of \S-transitions
	\item every state \m{s} belongs to one of the following kinds: λ-states, \S-states, and @-states, where
		\begin{itemize}
			\item a λ-state is the source of precisely one λ-transition, and no other transitions: \m{\setcompr{\pair{l}{t}}{s \red_l t} = \set{\pair{λ}{u}} \text{ for some \m{u ∈ \states}}}.
			\item a \S-state is the source of precisely one \S-transition, and no other transitions: \m{\setcompr{\pair{l}{t}}{s \red_l t} = \set{\pair{\S}{u}} \text{ for some \m{u ∈ \states}}}.
			\item a @-state is the source of precisely one \m{@_0}-transition and one \m{@_1}-transition, and no other transitions:\\
				\m{\setcompr{\pair{l}{t}}{s \red_l t} = \set{\pair{@_0}{f},\pair{@_1}{x}} \text{ for some \m{f,x ∈ \states}}}.
		\end{itemize}
	\end{itemize}
\end{definition}

\begin{proposition}[eager scope-closure yields λ-transition-graphs]\label{prop:s-is-a-lambda-ltg}
	Let \astratplus be a \extscope-delimiting strategy for \stRegARS. For every term \m{\M ∈ \iTer\lambdaprefixcal} the transition graph \ltg{\astratplus}{\M} is a λ-transition-graph. We call \ltg{\astratplus}{\M} \emph{the λ-transition-graph of \M with respect to \astratplus}.
\end{proposition}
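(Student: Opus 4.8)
The plan is to verify directly that for a \extscope-delimiting strategy \astratplus on \stRegARS and an arbitrary term \m{\M ∈ \iTer{\lambdaprefixcal}}, the transition graph \ltg{\astratplus}{\M} satisfies each of the four defining clauses of \cref{def:lambda-transition-graph}. The transition graph is, by definition, the LTS induced by the reduction graph \GeneratedSubARSi{\astratplus}{\prefixed{}{\M}}, pointed at \prefixed{}{\M}, with transition labels inherited from the rule names \m{@_0}, \m{@_1}, λ, \S of \stRegARS. So essentially everything reduces to a case analysis of which steps the strategy \astratplus permits from a given prefixed term.

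First I would check connectedness: this is immediate since every object of \GeneratedSubARSi{\astratplus}{\prefixed{}{\M}} is by construction (\cref{def:generatedSubARS}) reachable from the root \prefixed{}{\M} by a finite sequence of \m{\red_{\astratplus}}-steps, hence the induced LTG is connected. Second, the label set is \m{\set{λ,\S,@_0,@_1}} because \stRegCRS has exactly the rules \decrule{λ}, \decrule{@_0}, \decrule{@_1}, \decrule{\S} (\cref{def:RegCRS:stRegCRS}), and these are precisely the rule names carried by steps in \stRegARS. Third, I would rule out infinite \S-paths: by \cref{prop:rewprops:RegCRS:stRegCRS}~\cref{prop:rewprops:RegCRS:stRegCRS:iii}, \m{\red_\S} is terminating, so there can be no infinite chain of consecutive \m{\red_\S}-steps in \stRegARS, hence none in any sub-ARS of it, in particular none in \GeneratedSubARSi{\astratplus}{\prefixed{}{\M}}.

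The main content is the fourth clause, the classification of states into λ-states, \S-states, and @-states. Here I would invoke the definition of \extscope-delimiting strategy (\cref{def:scope:delimiting:strat:Reg:stReg}, with the relaxed reformulation given right after it): a source of a step under \astratplus is either the source of a \m{\red_λ}-step and no other step, or the source of a \m{\red_\S}-step and no other step, or the source of both a \m{\red_{@_0}}-step and a \m{\red_{@_1}}-step and no other step. For a prefixed term \prefixed{\vec{x}}{\N} in \iTer{\lambdaprefixcal} the body \N is a λ-term whose head is an abstraction, an application, or a variable (by the shape restrictions in \cref{def:CRSterms}); I would note that when the head is an abstraction the only applicable underlying rules are \decrule{λ} and possibly \decrule{\del}/\decrule{\S}, when it is an application the only applicable rules are \decrule{@_0}, \decrule{@_1} and possibly \decrule{\del}/\decrule{\S}, and when it is a variable the only applicable rule is \decrule{\S} (or it is the normal form \prefixed{x}{x}); in each case the strategy's determinism conditions then pin down exactly the step-set asserted in \cref{def:lambda-transition-graph}. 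A small point to be careful about: \S-normal-forms of the shape \prefixed{x_1 \dots x_n}{x_n} are sinks and carry no outgoing transitions, which is consistent with the definition since such a state then simply belongs to no outgoing-transition obligation — but I would double-check that \cref{def:lambda-transition-graph} genuinely permits sink states, i.e.\ that the three ``kinds'' are not claimed to be exhaustive over all states, only over states that have outgoing transitions; if the intended reading is that every state is of one of the three kinds, then the argument must additionally observe that normal-form states are excluded from the object set, or that the definition is to be read up to such sinks. This potential mismatch between ``sink'' states and the three enumerated kinds is the one place where the routine case analysis needs a careful reading of the conventions, and is the main (minor) obstacle; everything else is a direct unwinding of definitions.
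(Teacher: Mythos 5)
Your proposal is correct and takes essentially the same route as the paper, whose own proof consists of just the one non-trivial point you also make: there are no infinite \S-paths because \m{\red_\S} is terminating by \cref{prop:rewprops:RegCRS:stRegCRS}~\cref{prop:rewprops:RegCRS:stRegCRS:iii}, with the remaining clauses (connectedness, label set, and the state classification forced by \cref{def:scope:delimiting:strat:Reg:stReg}) left implicit. Your worry about normal-form sinks \prefixed{x_1 \dots x_n}{x_n} is a fair observation about the literal wording of \cref{def:lambda-transition-graph}, which the paper silently glosses over rather than resolves, but it does not affect the substance of the argument.
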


\begin{proof}
	In the transition graph \ltg{\astratplus}{\M} there cannot be infinitely many successive \S-transitions because in the ARS that induces \ltg{\astratplus}{\M}, the rewrite relation \m{\red_\S} is terminating, due to \cref{prop:rewprops:RegCRS:stRegCRS}~\cref{prop:rewprops:RegCRS:stRegCRS:iii}.
\end{proof}

\begin{para}[λ-transition-graphs of \lambdaletrec-terms]
	Along the lines of \cref{prop:s-is-a-lambda-ltg} we can also view transition graphs of \lambdaletrec-terms as λ-transition-graphs; however, unfolding steps must not be included as transitions (let us call them \emph{silent transitions}). As hinted at in \cref{rem:nondet:unfolding}, here it is important to restrict scope/\extscope-delimiting strategies to ones that are deterministic in the application of unfolding rules.
\end{para}

\begin{definition}[LTS with silent steps]
	Let \m{\aLTS_\aARS = \triple{O}{R}{\steps}} be the LTS induced by ARS \aARS and let \m{R_0} be a subset of \m{R}. Then by \emph{the LTS induced by \aARS with silent \m{R_0}-steps} we mean the LTS \m{\SilentLTS{\aARS}{R_0} = \triple{O}{R}{T}} with
	\begin{equation*}
		T := \setcompr{\triple{o}{\arule}{o'}}{\multilinebox{if \m{o \mred_{R_0} ⋅ \red_{\arule} o'} where \m{\mred_{R_0}} are steps w.r.t.\ rules \\ in \m{R_0}, and \m{\red_{\arule}} is a step w.r.t.\ a rule \m{\arule ∈ R \setminus R_0}}}
	\end{equation*}
	in which the steps in \aARS according to rules in \m{R_0} are interpreted as silent transitions.
\end{definition}

\begin{definition}[LTG with silent steps]
	Let \o be an object of the ARS \aARS and \m{\SilentLTS{\GeneratedSubARS{o}}{R_0} = \triple{O}{R}{T}} an LTS with silent \m{R_0}-steps. We call \m{\SilentLTG{\aARS}{R_0}{o} := \quadruple{O}{R}{o}{T}} the \emph{transition graph of \o with silent \m{R_0}-steps}.
\end{definition}

\begin{proposition}
	Let \astratplus be a \extscope-delimiting strategy for \stRegletrecARS. For every term \m{\L ∈ \Ter\lambdaletrecprefixcal}, the transition graph \SilentLTG\astratplus\unfCRS\L of \L is a λ-transition-graph.
\end{proposition}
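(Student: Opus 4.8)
The claim is that for a \extscope-delimiting strategy \astratplus on \stRegletrecARS and any \m{\L ∈ \Ter\lambdaletrecprefixcal}, the transition graph \SilentLTG{\astratplus}{\unfCRS}{\L} with the unfolding steps treated as silent steps is a λ-transition-graph in the sense of \cref{def:lambda-transition-graph}. The plan is to verify the four defining properties of a λ-transition-graph directly, following the same strategy used in the proof of \cref{prop:s-is-a-lambda-ltg} for the non-\letrec case, but now accounting for the silent unfolding transitions.

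First I would observe that \SilentLTG{\astratplus}{\unfCRS}{\L} is connected by construction: its set of objects consists exactly of those states reachable from \prefixed{}{\L} via the generated sub-ARS \GeneratedSubARSi{\astratplus}{\prefixed{}{\L}}, so every state lies on a directed path from the initial state. The labels are \m{\set{λ,\S,@_0,@_1}}: the visible transitions come from the rules \decrule{@_0}, \decrule{@_1}, \decrule{λ}, \decrule{\S} of \stRegletrecCRS, while the \m{\red_{\unfoldpre{\arule}}}-steps of \unfCRS are precisely the steps that get absorbed as silent prefixes in the definition of the LTS with silent \m{R_0}-steps (with \m{R_0} the set of unfolding rules).

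For the absence of infinite all-\S paths, I would argue as follows: a transition in \SilentLTG{\astratplus}{\unfCRS}{\L} labelled \S corresponds to a rewrite of the form \m{o \mred_\unfold ⋅ \red_\S o'} according to \astratplus. An infinite chain of such transitions would give an infinite \m{\astratplus}-rewrite-sequence containing infinitely many \m{\red_\S}-steps but no \m{\red_λ}-, \m{\red_{@_0}}-, or \m{\red_{@_1}}-steps. By \cref{lem:unfolding:versus:strats}, a \lambdaletrec-term from which such a non-\astratplus-productive sequence departs does not express a λ-term — but more to the point, I would instead use the commutation property of \cref{prop:rewprops:RegletrecCRS:stRegletrecCRS}~\cref{prop:rewprops:RegletrecCRS:stRegletrecCRS:i}, namely \m{\convred_\unfold ⋅ \red_\S \subseteq \red_\S ⋅ \convred_\unfold}, to postpone all unfolding steps past the \S-steps in any finite prefix of such a sequence, obtaining an arbitrarily long pure \m{\red_\S}-rewrite-sequence on a fixed term; this contradicts termination of \m{\red_\S} (\cref{prop:rewprops:RegCRS:stRegCRS}~\cref{prop:rewprops:RegCRS:stRegCRS:iii}, which via \cref{prop:rewprops:RegletrecCRS:stRegletrecCRS} also applies in the \letrec-setting). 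The main obstacle here is making precise that the finitely-many unfolding steps interspersed among the \S-steps can be uniformly pushed to the end; this is exactly where the one-step-commutation lemma does the work, so the argument should go through cleanly by a standard diagram-tiling induction on the length of the finite prefix.

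Finally, for the state-kind condition, I would use the fact that \astratplus is a \extscope-delimiting strategy for \stRegletrecARS in the sense of \cref{def:scope:delimiting:strat:Regletrec:stRegletrec}: it is deterministic for sources of \m{\red_λ}-steps, \m{\red_\S}-steps, and all \letrec-unfolding steps, and the only residual non-determinism is the choice between \m{@_0} and \m{@_1} at a source of \m{\red_{@_i}}-steps. Since unfolding steps are silent, a state \m{o} of \SilentLTG{\astratplus}{\unfCRS}{\L} has its outgoing (visible) transitions determined by the first non-unfolding step reached along \astratplus from \m{o} after a (finite, by the productivity-type argument or directly by \cref{lem:unf-normalisation}) sequence of unfolding steps; by \cref{prop:rewprops:RegletrecCRS:stRegletrecCRS:ii} the underlying \m{\red_\streg}-normal-forms of \RegletrecARS coincide with those of \stRegARS, so every non-normal-form state eventually exposes, after silent unfolding, a source of exactly one \m{\red_λ}-step, or exactly one \m{\red_\S}-step, or exactly one \m{\red_{@_0}}- and one \m{\red_{@_1}}-step — giving the λ-, \S-, and @-state dichotomy respectively. (Normal forms of the shape \prefixed{x_1 \dots x_n}{x_n} have no outgoing transitions at all and are trivially admissible as sink states, or alternatively the reader may check the definition tolerates them.) The step I expect to require the most care is confirming that the silent-step closure does not accidentally merge the two kinds of @-successor or introduce spurious extra transitions; this follows from the determinism clauses in \cref{def:scope:delimiting:strat:Regletrec:stRegletrec} together with the commutation of \m{\red_\unfold} with \m{\red_{@_i}}, \m{\red_λ}, \m{\red_\S} from \cref{prop:rewprops:RegletrecCRS:stRegletrecCRS}~\cref{prop:rewprops:RegletrecCRS:stRegletrecCRS:i}, which guarantees that the visible step reached after the silent prefix is independent of how the silent unfolding work is scheduled.
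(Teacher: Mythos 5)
The paper states this proposition without proof, so there is no author argument to compare against; the nearest analogue is the one-line proof of \cref{prop:s-is-a-lambda-ltg}, which only addresses the \S-path condition. Your overall plan --- checking the four defining properties, with connectedness and the label set immediate and the state-kind trichotomy resting on the determinism clauses of \cref{def:scope:delimiting:strat:Regletrec:stRegletrec} --- is the right one, but two of your steps do not go through as written.

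First, the no-infinite-\S-path argument. The property you cite, \m{\convred_\unfold ⋅ \red_\S \subseteq \red_\S ⋅ \convred_\unfold}, resolves a \emph{peak} \m{x \convred_\unfold y \red_\S z}; it says nothing about reordering the sequential composition \m{\red_\unfold ⋅ \red_\S}, and genuine postponement is in fact false: for \m{\prefixed{x\,z}{\letin{f=z}{x}}} the \m{\red_\S}-step discharging \m{z} becomes applicable only \emph{after} the \m{\red_\reduce}- and \m{\red_\nil}-steps that remove the unused binding \m{f=z}, so the unfolding steps cannot be pushed behind it. The conclusion still holds, by a more direct measure: no \unfCRS-rule touches the abstraction-prefix symbol, so every silent step preserves the prefix length, while each \m{\red_\S}-step strictly decreases it by one; hence any chain of \S-transitions in \SilentLTG{\astratplus}{\unfCRS}{\L} has length at most the prefix length of its first state. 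Second, the state-kind condition genuinely fails for states from which \astratplus prescribes unfolding steps forever: from \prefixed{}{\letin{f=f}{f}} only \unfCRS-steps are ever applicable, so no visible transition is produced and the state is neither a λ-, \S- nor @-state. Your appeal to \cref{lem:unf-normalisation} does not close this, since that lemma concerns \unfbhCRS and yields only weak normalisation or strongly convergent outermost-fair sequences, not finiteness of the silent chain from every reachable state. As stated for all of \Ter{\lambdaletrecprefixcal} the proposition therefore needs either the productivity hypothesis that the paper only imposes in the subsequent \cref{def:lambda-transition-graph-of-a-term}, or a reading of \cref{def:lambda-transition-graph} that admits sink states --- which, as you correctly note, is already required for the variable normal forms \prefixed{x_1 \dots x_n}{x_n}.
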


\begin{definition}[λ-transition-graph of a \lambdaletrec-term]\label{def:lambda-transition-graph-of-a-term}
	Let \m{\L ∈ \Ter\lambdaletrecprefixcal} be a (prefixed) \astratplus-productive \lambdaletrec-term. For a \extscope-delimiting strategy \astratplus of \stRegletrecARS such that \L is \astratplus-productive, we call the transition graph \SilentLTG{\astratplus}{\unfCRS}{\L} \emph{the λ-transition-graph of \L with respect to \astratplus}. We also speak of λ-transition-graphs of terms \m{\L ∈ \Ter\lambdaletreccal} or \m{\M ∈ \iTer\lambdacal} by which we refer to the λ-transition-graphs of \prefixed{}\L and \prefixed{}\M, respectively.
\end{definition}

\begin{theorem}[coinduction principle for \lambdacal]\label{thm:coinduction-principle}
	For all λ-terms \M and \N the following statements are equivalent:
	\begin{enumerate}[(i)]
		\item\label{thm:coinduction-principle:i} \m{\M = \N}.
		\item\label{thm:coinduction-principle:ii} \m{\derivablein{\EqTer}{\M = \N}}.
		\item\label{thm:coinduction-principle:iii} \M and \N have bisimilar λ-transition-graphs.
	\end{enumerate}
\end{theorem}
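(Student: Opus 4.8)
\textbf{Proof proposal for \cref{thm:coinduction-principle}.}

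The plan is to prove the cycle of implications \ref{thm:coinduction-principle:i} $\Rightarrow$ \ref{thm:coinduction-principle:iii} $\Rightarrow$ \ref{thm:coinduction-principle:ii} $\Rightarrow$ \ref{thm:coinduction-principle:i}, reusing as much as possible the machinery already developed in the chapter, in particular \cref{prop:AlphaPreTer:EqTer}~\cref{prop:AlphaPreTer:EqTer:ii} (soundness and completeness of \EqTer for equality on \lambdaprefixcal-terms), \cref{prop:s-is-a-lambda-ltg} (that the eager \extscope-delimiting strategy yields a $\lambda$-transition-graph), and the correspondence between paths in \EqTer-derivations and \m{\red_\streg}-rewrite-sequences. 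Note first that the equivalence \ref{thm:coinduction-principle:i} $\Leftrightarrow$ \ref{thm:coinduction-principle:ii} is literally \cref{prop:AlphaPreTer:EqTer}~\cref{prop:AlphaPreTer:EqTer:ii}, so the only genuinely new content is the equivalence of either of these with \ref{thm:coinduction-principle:iii}; I would nevertheless present the proof as a triangle to make the coinductive flavour explicit.

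For \ref{thm:coinduction-principle:i} $\Rightarrow$ \ref{thm:coinduction-principle:iii}: fix a \extscope-delimiting strategy \astratplus for \stRegARS (e.g.\ \eagStratPlus) and consider the $\lambda$-transition-graphs \m{\ltg{\astratplus}{\M}} and \m{\ltg{\astratplus}{\N}}; one needs to note that bisimilarity of $\lambda$-transition-graphs does not depend on the particular choice of \astratplus, which follows from \cref{prop:eagstrat}, since rewrite sequences with respect to any \astratplus differ from those with respect to \eagStratPlus only by trailing \m{\red_\S}-steps, and \m{\red_\S}-steps (being deterministic) do not affect the bisimulation type. The relation
\[
  \abisim := \setcompr{\pair{\prefixed{\vec{x}}{P}}{\prefixed{\vec{y}}{Q}}}{\prefixed{\vec{x}}{P} \text{ and } \prefixed{\vec{y}}{Q} \text{ are equal as } \lambdaprefixcal\text{-terms}}
\]
restricted to the states of the two graphs is a bisimulation: two equal prefixed terms have the same top-level structure (an abstraction, an application, or a variable with a possibly removable leading prefix binding), hence are sources of transitions with the same labels, leading again to equal prefixed terms. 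That \m{\abisim} relates the initial states \m{\prefixed{}{\M}} and \m{\prefixed{}{\N}} is the hypothesis \m{\M = \N}. Here one uses that \m{\red_\streg}-normal forms are of the form \prefixed{x_1\dots x_n}{x_n} (\cref{prop:rewprops:RegCRS:stRegCRS}~\cref{prop:rewprops:RegCRS:stRegCRS:iv-1}) together with determinism of \m{\red_\S} to see that the \S-transitions from two equal terms lead to equal terms.

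For \ref{thm:coinduction-principle:iii} $\Rightarrow$ \ref{thm:coinduction-principle:ii}: given a bisimulation \m{\abisim} between \m{\ltg{\eagStratPlus}{\M}} and \m{\ltg{\eagStratPlus}{\N}} relating the initial states, I would build a completed (possibly infinite) derivation in \EqTer with conclusion \m{\prefixed{}{\M} = \prefixed{}{\N}} by coinduction, developing it from the conclusion upwards: at each leaf \m{\prefixed{\vec{x}}{P} = \prefixed{\vec{y}}{Q}} with \m{\prefixed{\vec{x}}{P} \mathrel{\abisim} \prefixed{\vec{y}}{Q}}, the common transition structure dictated by the bisimulation determines which rule of \EqTer (namely \0, λ, @, or \S) to apply, and the new leaves are again \m{\abisim}-related pairs by the bisimulation conditions. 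Organising the construction in a breadth-first, fair manner produces in the limit a completed derivation, which by definition witnesses \m{\infderivablein{\EqTer}{\prefixed{}{\M} = \prefixed{}{\N}}}. Finally \ref{thm:coinduction-principle:ii} $\Rightarrow$ \ref{thm:coinduction-principle:i} is the ``only if'' part of \cref{prop:AlphaPreTer:EqTer}~\cref{prop:AlphaPreTer:EqTer:ii}. The main obstacle I anticipate is the careful bookkeeping needed to match transition labels of $\lambda$-transition-graphs with the rule applications of \EqTer when the leading prefix of one or both terms carries vacuous bindings: because \S-transitions are only forced when the prefix binding is genuinely vacuous, I must check that the bisimulation really does pair an \S-state with an \S-state (rather than, say, an \S-state with a $\lambda$-state), which is where the fact that $\lambda$-transition-graphs have no infinite \S-paths and the determinism of \m{\red_\S} (\cref{prop:rewprops:RegCRS:stRegCRS}~\cref{prop:rewprops:RegCRS:stRegCRS:iii}) together with the classification of states in \cref{def:lambda-transition-graph} become essential.
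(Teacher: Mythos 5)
Your overall route is the same as the paper's: the equivalence of \cref{thm:coinduction-principle:i} and \cref{thm:coinduction-principle:ii} is delegated to \cref{prop:AlphaPreTer:EqTer}~\cref{prop:AlphaPreTer:EqTer:ii}, the implication \cref{thm:coinduction-principle:i}~$\Rightarrow$~\cref{thm:coinduction-principle:iii} is immediate (the paper simply says it ``clearly holds''; your identity-relation bisimulation is a fine way to spell it out), and the real work, \cref{thm:coinduction-principle:iii}~$\Rightarrow$~\cref{thm:coinduction-principle:ii}, is done exactly as in the paper by growing a completed \EqTer-derivation from the bisimulation in a depth-fair manner.

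There is, however, one incorrect auxiliary claim: that bisimilarity of $\lambda$-transition-graphs is independent of the choice of \extscope-delimiting strategy because trailing \m{\red_\S}-steps ``do not affect the bisimulation type''. Bisimulation between LTSs is label-sensitive, and \S-transitions carry a label, so moving them changes the bisimulation class. Concretely, for \m{\M = \abs{x}{\abs{y}{\app{\app{x}{x}}{y}}}} the state \prefixed{xy}{\app{x}{x}} is an \S-state in \ltg{\eagStratPlus}{\M} but an \m{@}-state in \ltg{\lazyStratPlus}{\M}, and both are reached from the root by the same label sequence $λ.λ.@_0$; hence the two graphs are not bisimilar (cf.\ \cref{fig:three_redgraphs}, middle, where this non-determinism is visible). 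This does not sink your proof, because the claim is not actually needed: in \cref{thm:coinduction-principle:i}~$\Rightarrow$~\cref{thm:coinduction-principle:iii} you may simply use the \emph{same} strategy on both sides, and in \cref{thm:coinduction-principle:iii}~$\Rightarrow$~\cref{thm:coinduction-principle:ii} your construction of the \EqTer-derivation goes through verbatim for a bisimulation between \ltg{\astratplus_1}{\M} and \ltg{\astratplus_2}{\N} for arbitrary strategies \m{\astratplus_1}, \m{\astratplus_2} — which is how the paper phrases it, and how you should phrase it, since statement \cref{thm:coinduction-principle:iii} only asserts the existence of some pair of bisimilar $\lambda$-transition-graphs.
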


\begin{proof}
	In view of \cref{prop:AlphaPreTer:EqTer}~\cref{prop:AlphaPreTer:EqTer:ii}, the logical equivalence between \cref{prop:AlphaPreTer:EqTer:i} and \cref{prop:AlphaPreTer:EqTer:ii}, and the fact that \cref{thm:coinduction-principle:i} ⇒ \cref{thm:coinduction-principle:iii} clearly holds, it suffices to show that \cref{thm:coinduction-principle:iii} ⇒ \cref{thm:coinduction-principle:ii} holds.
	\par For this, suppose that \ltg{\astratplus_1}{\M} and \ltg{\astratplus_2}{\N} are bisimilar for some \extscope-delimiting strategies \m{\astratplus_1} and \m{\astratplus_2} for \RegARS. But now bisimilarity of these transition graphs guarantees that a derivation \Deriv in \EqTer with conclusion \prefixed{}{\M} = \prefixed{}{\N} can be constructed such that all threads in \Deriv correspond to \m{\red_{\astratplus_1}}-rewrite-sequences on \M and to \m{\red_{\astratplus_2}}-rewrite-sequences on \M, respectively. If the construction process is organised in a depth-fair manner (for example, all non-axiom leafs at depth \n are extended by appropriate rule instances, before extensions are carried out at depth greater than \n), then in the limit a completed derivation \infDeriv with conclusion \prefixed{}{\M} = \prefixed{}{\N} is obtained. This establishes \m{\derivablein{\EqTer}{\M = \N}}.
\end{proof}

\begin{conjecture}[coinduction principle for \lambdaletreccal]
	For all \m{\L_1,\L_2 ∈ \Ter{\lambdaletreccal}} it holds that \m{\L_1 = \L_2} if and only if \m{\L_1} and \m{\L_2} have bisimilar λ-transition-graphs.
	\begin{proof}[Proof sketch]
		In \cref{chap:maxsharing} we develop a coinduction principle for \lambdaletreccal (\cref{thm:graphrep:classltgs}) based on λ-term-graphs. λ-term-graphs are introduced in \cref{chap:representations} and are closely related to λ-transition-graphs. The conjecture can in all likelihood be validated by \cref{thm:graphrep:classltgs} after relating λ-transition-graphs to λ-term-graphs in a formal manner.
	\end{proof}
\end{conjecture}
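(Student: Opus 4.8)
The plan is to derive the conjecture from the coinduction principle for the pure $\lambda$-calculus, \cref{thm:coinduction-principle}, by pushing both $\boldsymbol{\lambda}_{\letrec}$-terms through the unfolding map. Throughout, $\L_1 = \L_2$ must be read as $\unfsem{\L_1} = \unfsem{\L_2}$ (equivalently $\unfbhsem{\L_1} = \unfbhsem{\L_2}$, since we will only deal with terms that express genuine $\lambda$-terms), i.e.\ that $\L_1$ and $\L_2$ express the same $\lambda$-term: this is the only reading under which the statement can hold, and the restriction to $\lambda$-term-expressing terms is implicit in \cref{def:lambda-transition-graph-of-a-term} anyway, since a meaningless $\boldsymbol{\lambda}_{\letrec}$-term would yield a silent-step transition graph with a state lacking outgoing transitions, which by \cref{def:lambda-transition-graph} is not a $\lambda$-transition-graph. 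Write $M_i := \unfsem{\L_i}$ for $i \in \set{1,2}$.

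The heart of the argument is a bridging lemma: \emph{if $\L$ expresses a $\lambda$-term and $\astratplus$ is a $\extscope$-delimiting strategy for $\stRegletrecARS$ for which $\L$ is $\astratplus$-productive, then $\SilentLTG{\astratplus}{\unfCRS}{\L}$ is bisimilar, over the label set $\set{\lambda,\S,@_0,@_1}$, to the $\lambda$-transition-graph $\ltg{\Check{\astratplus}}{\unsem}$ of $\unsem = \unfsem{\L}$ with respect to the projected $\extscope$-delimiting strategy $\Check{\astratplus}$ for $\stRegARS$ supplied by \cref{lem:proj:strat:letrec:lambda}} (here $\unsem$ abbreviates $\unfsem{\L}$). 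The witnessing relation is the graph of $\L' \mapsto \unfsem{\L'}$ restricted to the states of $\SilentLTG{\astratplus}{\unfCRS}{\L}$; that it is total and single-valued on states, and a genuine label-preserving bisimulation, is exactly the content of \cref{lem:proj:strat:letrec:lambda}: a visible transition of $\SilentLTG{\astratplus}{\unfCRS}{\L}$ is by definition induced by a block $\L' \mred_{\astratplus.\unfold} \cdot \red_{\astratplus.\decompose} \L''$, which by the commutation properties of \cref{lem:commute:unfoldomegared:stregred} (and finiteness of the block, which is where $\astratplus$-productivity enters) projects onto a single like-labelled $\Check{\astratplus}$-step $\unfsem{\L'} \red_{\Check{\astratplus}} \unfsem{\L''}$, and conversely every $\Check{\astratplus}$-step is hit this way. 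One also checks that the target satisfies \cref{def:lambda-transition-graph}: connectedness and the $\lambda$/$\S$/$@$-state trichotomy transfer along the bisimulation, and no infinite $\S$-path survives because $\red_\S$ is terminating (\cref{prop:rewprops:RegCRS:stRegCRS}~\cref{prop:rewprops:RegCRS:stRegCRS:iii}). Applying the bridging lemma with $\L_1 = \L_2$ and combining it with \cref{thm:coinduction-principle} for the trivially true $\unfsem{\L} = \unfsem{\L}$ also shows that the $\lambda$-transition-graph of $\L$ is independent of the chosen $\extscope$-delimiting strategy up to bisimilarity, so ``$\L_1$ and $\L_2$ have bisimilar $\lambda$-transition-graphs'' is well posed.

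Given the bridging lemma, both implications are immediate. For ``$\Rightarrow$'': if $M_1 = M_2 =: M$, then for any $\extscope$-delimiting strategies $\astratplus_1, \astratplus_2$ (which are productive for $\L_1,\L_2$, indeed \emph{every} $\extscope$-delimiting strategy is, by \cref{lem:unfolding:versus:strats}, since \cref{lem:unfolding:versus:strats:i}$\Leftrightarrow$\cref{lem:unfolding:versus:strats:iv}), we get $\SilentLTG{\astratplus_1}{\unfCRS}{\L_1} \bisim \ltg{\Check{\astratplus_1}}{M} \bisim \ltg{\Check{\astratplus_2}}{M} \bisim \SilentLTG{\astratplus_2}{\unfCRS}{\L_2}$, the middle bisimilarity by \cref{thm:coinduction-principle} (\cref{thm:coinduction-principle:i}$\Rightarrow$\cref{thm:coinduction-principle:iii}) applied to $M = M$. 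For ``$\Leftarrow$'': if the $\lambda$-transition-graphs of $\L_1$ and $\L_2$ are bisimilar, then by the bridging lemma and transitivity so are $\ltg{\Check{\astratplus_1}}{M_1}$ and $\ltg{\Check{\astratplus_2}}{M_2}$, whence $M_1 = M_2$ by \cref{thm:coinduction-principle} (\cref{thm:coinduction-principle:iii}$\Rightarrow$\cref{thm:coinduction-principle:i}); that is, $\unfsem{\L_1} = \unfsem{\L_2}$, i.e.\ $\L_1 = \L_2$ in the intended sense.

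The main obstacle is the bridging lemma, and within it the bookkeeping needed to reconcile two descriptions of the same object: \cref{lem:proj:strat:letrec:lambda} is stated in terms of generated sub-ARSs and (history-aware) strategies, whereas a $\lambda$-transition-graph of a $\boldsymbol{\lambda}_{\letrec}$-term is an LTS in which the unfolding steps are already compressed into \emph{silent} steps, leaving only the labelled $@_0,@_1,\lambda,\S$ transitions. One must verify that this compression commutes with the projection — each visible transition (a block $\mred_{\astratplus.\unfold}\cdot\red_{\astratplus.\decompose}$) maps onto exactly one like-labelled transition of $\ltg{\Check{\astratplus}}{\unfsem{\L}}$, with nothing created or lost — and that $\astratplus$-productivity is precisely what rules out an infinite silent cycle that would otherwise leave a state of $\SilentLTG{\astratplus}{\unfCRS}{\L}$ without an image. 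A cleaner alternative, the one indicated in the outlook, is to first set up an LTS/ARS-level correspondence between $\lambda$-transition-graphs and the first-order $\lambda$-term-graphs of \cref{chap:representations}, and then quote the coinduction principle \cref{thm:graphrep:classltgs} for $\boldsymbol{\lambda}_{\letrec}$ proved there, which already packages preservation and reflection of the unfolding semantics under bisimulation; this moves all remaining work into making that translation precise.
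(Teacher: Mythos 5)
Your route is genuinely different from the one the paper sketches: instead of going through the first-order λ-term-graphs of \cref{chap:representations} and quoting \cref{thm:graphrep:classltgs}, you project the silent-step transition graph of a \lambdaletrec-term onto a λ-transition-graph of its unfolding via \cref{lem:proj:strat:letrec:lambda} and then invoke the coinduction principle \cref{thm:coinduction-principle} for \lambdacal. Reading \m{\L_1 = \L_2} as \m{\unfsem{\L_1} = \unfsem{\L_2}} is the right call, the bridging lemma is well supported by \cref{lem:proj:strat:letrec:lambda} together with \cref{lem:commute:unfoldomegared:stregred}, and the direction ``⇐'' goes through.

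The gap is in ``⇒''. You assert that the λ-transition-graph of a term is independent of the chosen \extscope-delimiting strategy up to bisimilarity, and you use \cref{thm:coinduction-principle}, \cref{thm:coinduction-principle:i} ⇒ \cref{thm:coinduction-principle:iii}, to conclude that the λ-transition-graphs of \M with respect to the two projected strategies are bisimilar. Neither holds for arbitrary strategies: statement \cref{thm:coinduction-principle:iii} is existential in the strategies (its proof of \cref{thm:coinduction-principle:iii} ⇒ \cref{thm:coinduction-principle:ii} opens with ``for some \extscope-delimiting strategies''), so \cref{thm:coinduction-principle:i} ⇒ \cref{thm:coinduction-principle:iii} only records that equal terms have literally the same graph for one and the same strategy. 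Concretely, let \M be the unfolding of \m{\letin{f = \abs{xy}{\app{\app{f}{y}}{x}}}{f}} (cf.\ \cref{fig:distance}): the state \prefixed{xy}{\M}, reached from the root by \m{λ.λ.{@_0}.{@_0}} in both graphs, is an \S-state in \ltg{\eagStratPlus}{\M} but a λ-state in \ltg{\lazyStratPlus}{\M}, and any bisimulation relating the roots is forced to relate these two occurrences; hence \m{\ltg{\eagStratPlus}{\M} \not\bisim \ltg{\lazyStratPlus}{\M}}. So the middle link of your chain breaks for arbitrary \m{\astratplus_1,\astratplus_2}. The repair is to commit to the lazy-unfolding, eager \extscope-delimiting strategies on both \m{\L_1} and \m{\L_2} and to prove the additional fact that the projected strategy \Check{\astratplus} supplied by \cref{lem:proj:strat:letrec:lambda} for an eager \astratplus is again (up to isomorphism of reduction graphs) the eager strategy on \unfsem{\L} --- plausible, since scope-delimiting strategies apply \unfrule{\reduce} eagerly and hence vacuousness of a prefix binding in a reduct of \L agrees with vacuousness in its unfolding, but this is an obligation your sketch does not discharge.
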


\begin{para}[only \stRegCRS defines nameless representations]\label{rem:nameless-repr}
	The coinduction principle above holds for transition graphs derived from terms w.r.t.\ an eager \extscope-delimiting strategy for \stRegARS. It cannot be extended to \RegARS, which is witnessed by the following counterexample.
	\par Consider the transition graph \ltg{\eagStrat}{M} of the term \m{M = \abs{xy}{\app{\app{x}{x}}{y}}} w.r.t.\ the eager scope-delimiting strategy \eagStrat for \RegARS. The corresponding \RegARS-reduction graph is depicted on the left in \cref{fig:four-strategies}. Each of the following terms yields the exact same transition graph w.r.t.\ to an appropriately-chosen scope-delimiting strategy for \RegARS. For the two terms in the middle the eager scope-delimiting strategy can be chosen.
	\[
		\abs{xy}{\app{\app{x}{x}}x}
		\hspace{1cm}
		\abs{xy}{\app{\app{x}{x}}y}
		\hspace{1cm}
		\abs{xy}{\app{\app{y}{y}}x}
		\hspace{1cm}
		\abs{xy}{\app{\app{y}{y}}y}
	\]
\end{para}

\begin{para}[readback for λ-transition-graphs]
	The understanding of λ-transition-graphs as nameless representations of λ-terms implies that from a such graphs the corresponding λ-term can be extracted. We define a function for this purpose by means of a CRS which implements the assembly of a λ-term from the infinite unfolding of a λ-transition-graph. The function is closely related to the \stParseCRS in the sense that \stParseCRS does both destruct and reconstruct the scrutinised term while \readback{} only implements the reconstruction.
\end{para}

\begin{definition}[readback for λ-transition-graphs]
	\newcommand\sreadwriten[1]{{\textsf{rw}}_{#1}}
	\newcommand\readwriten[3]{{\sreadwriten{#1}}\left(#2, #3\right)}
	\newfunction\readwritezero{\m{{\textsf{rw}}_{0}}}
	\begin{align*}
		\readback{} : \iTer{\set{\0, λ, @, \S}} & {} ⇀ \iTer{\lambdacal}\\
		t & {} ↦ \readback{t} := ~ \parbox[t]{130pt}{infinite normal form of \readwritezero{t} w.r.t.\ the following CRS:}
	\end{align*}
	\begin{align*}
		\readwriten{n}{X_1,\dots,X_n}{λ(t_0)} & {} \;\red\; \absCRS{x}{\readwriten{n+1}{X_1,\dots,X_n,x}{t_0}} \\
		\readwriten{n}{\vec{X}}{@(t_0){t_1}} & {} \;\red\; \appCRS{\readwriten{n}{\vec{X}}{t_0}}{\readwriten{n}{\vec{X}}{t_1}} \\
		\readwriten{n+1}{\vec{X},x}{\S(t_0)} & {} \;\red\; \readwriten{n}{\vec{X}}{t_0} \\
		\readwriten{n}{X_1,\dots,X_n}{\0} & {} \;\red\; X_n
	\end{align*}
	The function is partial because \m{\sreadwriten n} is unproductive for infinite \S-chains. That restriction comes forth accordingly in the definition of λ-transition-graphs (\cref{def:lambda-transition-graph}). The function is thus complete on the subset of \m{\iTer{\set{\0, λ, @, \S}}} that is obtained from unfolding a λ-transition-graph.
\end{definition}

\section{Summary}\label{sec:expressibility:conclusion}

\begin{para}[A CRS for decomposing λ-terms]
	To characterise the set of \lambdaletrec-expressible λ-terms we established a framework of formalisms for `observing' λ-terms coinductively. First we introduced prefixed λ-terms that enrich λ-terms by an abstraction prefix. On the prefixed terms we defined the CRS \stRegARS in which a rewrite sequence corresponds to a deconstruction of a term along one of its paths. In that sense a prefixed term \prefixed{\vec{x}}{\M} can be understood as a `suspended decomposition' which has not advanced into subterm \M yet. Such a decomposition describes a path through the term by observations of the form \m{\red_λ}, \m{\red_{@_0}}, \m{\red_{@_1}}, \m{\red_\S}, where the \m{\red_\S} delimits the \extscope of an abstraction.
\end{para}

\begin{para}[scope/\extscope-delimiting strategies]
	Since there is some freedom as to where \extscope-delimiters can be placed, we defined \extscope-delimiting strategies to formalise specific possible choices eliminating that freedom and thereby making the observations deterministic except for the forking into the left or the right subterm of an application. By means of \extscope-delimiting strategies we formulated two important concepts: strong regularity and λ-transition graphs.
\end{para}

\begin{para}[strong regularity]
	The intuitive understanding of strong regularity is the property of a λ-term \M that from \M every `sufficiently eager' \extscope-delimiting strategy can only generate a finite number of terms. We showed that \lambdaletrec-expressibility coincides with strong regularity.
\end{para}

\begin{para}[λ-transition-graphs]
	Every \extscope-delimiting strategy defines for each term a λ-transition-graph which can be viewed as a nameless graphical representation very similar to its term graph in de-Bruijn notation with the difference that \S-nodes are not restricted to occur only near leafs but can be shared by variables (see \cref{S}). We established a coinduction principle for λ-transition-graphs that states that two λ-terms are equal if and only if their λ-transition graphs w.r.t. to a \extscope-delimiting strategy are bisimilar. The eager \extscope-delimiting strategy yields finite λ-transition-graphs for strongly regular λ-terms.
\end{para}

\begin{para}[\lambdaletreccal]
	We then adapted the concepts of the CRS for observing terms, \extscope-delimiting strategies, and λ-transition-graphs and applied them to \lambdaletreccal{} proving similar results as for \lambdacal.
\end{para}

\begin{para}[proof systems for strong regularity]
	We provided a proof system that is sound and complete for the notion of strong regularity and which admits finite proofs for strongly regular λ-terms. We define an annotated version of the proof system which not unlike an attribute-grammar definition implements the extraction of a \lambdaletrec-term \L from a proof for term \M in that system, such that \L unfolds to the \M. We show that every \extscope-delimiting strategy induces a proof and that from a proof a corresponding history-aware strategy can be deduced, which suggests a similar correspondence between λ-transition-graphs and proofs.
\end{para}

\chapter{Term Graph Representations for Strongly Regular λ-Terms}\label{chap:representations}

\section{Overview}

\setcounter{theorem}{-1}
\begin{para}[teaser]
	Can't we just do bisimulation on \lambdaletrec-terms? See also \cref{maxsharing:teaser}.
\end{para}


\begin{para}[subject matter]\label{representations-subject-matter}
	In this chapter we study various classes of higher-order and first-order term graphs (intended as representations for \lambdaletrec-terms). We focus on the relation between `λ-higher-order term graphs' (λ-ho-term-graphs), which are first-order term graphs endowed with a well-behaved scope function, and their representations as `λ-term-graphs', which are plain first-order term graphs with scope-delimiter vertices that meet certain scoping requirements. Specifically we tackle the question: Which class of first-order term graphs admits a faithful embedding of λ-ho-term-graphs in the sense that (i) the homomorphism-based sharing-order on λ-ho-term-graphs is preserved and reflected, and (ii) the image of the embedding corresponds closely to a natural class (of λ-term-graphs) that is closed under functional bisimulation?
\end{para}

\begin{para}[motivation]
	We study these graph formalisms in isolation -- that is to say without formally connecting them to \lambdaletrec-terms. But we do so with a long term goal in mind: to find a graph formalism that is suitable to adequately represent \lambdaletrec-terms. Once we have found such a graph formalism, in \cref{chap:maxsharing} we relate it back to \lambdaletreccal, by which we gain further insights concerning unfolding semantics of \lambdaletreccal and also to obtain concrete practical methods to analyse and manipulate \lambdaletrec-terms.
\end{para}

\begin{para}[methods and formalisms]
	The term graph formalims we study arise naturally from the term decomposition systems from the previous chapter. They are closely related to the λ-transition-graphs, and we derive different classes of term graphs from the term decomposition systems by a similar approach. We systematically examine which classes of λ-term-graphs satisfy the properties in \cref{representations-subject-matter}.
\end{para}

\begin{para}[results]
	We identify a particular class of first-order term graphs with these properties. Term graphs of this class are built not only from application, abstraction, and variable vertices, but also scope-delimiter vertices. They have the characteristic feature that the latter two kinds of vertices have backlinks to the corresponding abstraction. This result puts a handle on the concept of subterm sharing for higher-order term graphs, both theoretically and algorithmically: We obtain an easily implementable method for obtaining the maximally shared form of λ-ho-term-graphs. Also, we open up the possibility to transfer properties from first-order term graphs to λ-ho-term-graphs. In fact we prove in this way that the sharing-order on a set of bisimilar λ-ho-term-graphs forms a complete lattice \cref{sec:transfer}.
\end{para}

\begin{para}[outlook]
	In \cref{chap:maxsharing} we use these insights to develop practical applications w.r.t\ \lambdaletreccal:
	\begin{itemize}
		\item an efficient test for whether two \lambdaletrec-terms have the same unfolding
		\item a partial order for the amount of subterm sharing in a \lambdaletrec-term leading to
		\begin{itemize}
			\item a notion of maximal sharing for \lambdaletrec
			\item an efficient mechanism to compute the maximally shared form of a \lambdaletrec-term which generalises common subexpression elimination
		\end{itemize}
	\end{itemize}
\end{para}

\section{Preliminaries}\label{sec:representations:prelims}

\begin{para}[term graphs]
	The graph formalisms that we study in this chapter all based on term graphs (in contrast to transition graphs as in the last chapter), which is a natural choice, considering the three types of expressions (abstraction, application, variable occurrence) that make up λ-terms. Here is some notation and terminology revolving around term graphs.
\end{para}

\begin{definition}[term graph]
	Let \sig be a signature with arity function \m{\arity{} : \sig → ℕ}. A \emph{term graph over \sig} (or a \emph{\sig-term-graph}) is a tuple \m{\tuple{\V,\lab{},\args{},\r}} where \V is a set of \emph{vertices}, \m{\lab{} : \V → \sig} the \emph{(vertex) label function}, \m{\args{} : \V → \V^*} the \emph{argument function} that maps every vertex \v to the word \args{\v} consisting of the \arity{\lab{\v}} successor vertices of \v (hence it holds that \m{\length{\args{\v}} = \arity{\lab{\v}}}), and \r, the \emph{root}, is a vertex in \V. Note that term graphs may have infinitely many vertices.
\end{definition}

\begin{definition}[root connected term graphs]\label{def:root-connected}
	We say that such a term graph is \emph{root connected} if every vertex is reachable from the root by a path that arises by repeatedly going from a vertex to one of its successors. We denote by \tgsover{\sig} and by \tgsminover{\sig} the class of all root-connected term graphs over \sig, and the class of all term graphs over \sig, respectively. By `term graphs' we will from now on, always mean root-connected term graphs, except in a few situations in which we explicitly state otherwise.
\end{definition}

\begin{definition}[successor relations]
	Let \atg be a term graph over signature \sig. As useful notation for picking out the \i-th vertex, from among the ordered successors \args{\v} of a vertex \v of \atg,
	we define for each \m{i ∈ ℕ} the indexed edge relation \m{{\tgsucc_i} \subseteq {\V\times\V}}, and additionally the (not indexed) edge relation \m{{\tgsucc} \subseteq \V\times\V}, by stipulating for all \m{\v,\v' ∈ \V}:
	\begin{align*}
		\v \tgsucc_i \v' &~:⇔~ ∃ \v_0,\dots,\v_n ∈ \V ~~ \args{\v} = \v_0 \dots \v_n ∧ \v' = \v_i &
		\\
		\v \tgsucc   \v' &~:⇔~ ∃ i ∈ ℕ~~ \v \tgsucc_i \v' &
	\end{align*}
	We write \m{\v \tgsuccis{i}{l} \v'} if \m{\v \tgsucc_i \v' ∧ \lab{\v} = l} holds for \m{\v,\v' ∈ \V}, \m{i ∈ ℕ}, \m{l ∈ \sig}, to indicate the label at the source of an edge.
\end{definition}

\begin{definition}[paths]
	Let \m{\v_0,\dots,\v_n ∈ \V}. A \emph{path} in \atg is a tuple \m{\tuple{\v_0,l_1,\v_1,l_2,\v_2,l_3,\dots,l_{n-1},\v_{n-1},l_n,\v_n}} and \m{n,l_1,\dots,l_n ∈ ℕ} such that \m{\v_0 \tgsucc_{l_1} \v_1 \tgsucc_{l_2} \v_2 \tgsucc_{l_3} \dots \tgsucc_{l_n} \v_n} holds; paths will usually be denoted in the latter form, using indexed edge relations.
\end{definition}

\begin{definition}[access paths]
	An \emph{access path} of a vertex \v of a term graph \atg is a path that starts at the root of \atg, ends in \v, and does not visit any vertex twice. Note that every vertex \v has at least one access path: since every vertex in a term graph is reachable from the root (see \cref{def:root-connected}), there is a path \apath from \r to \v; then an access path of \v can be obtained from \apath by repeatedly cutting out \emph{cycles}, that is, parts of the path between one and the same vertex.
\end{definition}

In the following, let \m{\atg_1 = \tuple{\V_1,\labi{1}{},\argsi{1}{},\r_1}}, \m{\atg_2 = \tuple{\V_2,\labi{2}{},\argsi{2}{},\r_2}} be term graphs over signature \sig.

\begin{definition}[homomorphism, functional bisimulation]\label{def:ltgs:homomorphism}
	A \emph{homomorphism}, also called a \emph{functional bisimulation}, from \m{\atg_1} to \m{\atg_2} is a morphism from the structure \m{\tuple{\V_1,\labi{1}{},\argsi{1}{},\r_1}} to the structure \m{\tuple{\V_2,\labi{2}{},\argsi{2}{},\r_2}}, i.e., a function \m{h : \V_1 → \V_2} such that, for all \m{\v ∈ \V_1} it holds:
	\begin{align}
		\tag{roots}
		\label{ltgs:homomorphism:cond:roots}
		h(\r_1) & = \r_2
		\\
		\tag{labels}
		\label{ltgs:homomorphism:cond:labels}
		\labi{1}{\v} & = \labi{2}{h(\v)}
		\\
		\tag{arguments}
		\label{ltgs:homomorphism:cond:arguments}
		h^*(\argsi{1}{\v}) & = \argsi{2}{h(\v)}
	\end{align}
	where \m{h^*} is the homomorphic extension (also: pointwise lifted version) of \h to words over \m{\V_1}, i.e.\ \m{h^* : \V_1^* → \V_2^*}, \m{\v_1 \dots \v_n ↦ h(\v_1) \dots h(\v_n)}. In this case we write \m{\atg_1 \funbisim_{\h} \atg_2}, or \m{\atg_2 \convfunbisim_{\h} \atg_1}. And we write \m{\atg_1 \funbisim \atg_2}, or for that matter \m{\atg_2 \convfunbisim \atg_1}, if there is a homomorphism from \m{\atg_1} to \m{\atg_2}.
	\par Let \m{\afunsym ∈ \sig}. An \emph{\afunsym-homomorphism} from \m{\atg_1} to \m{\atg_2} is a homomorphism \h from \m{\atg_1} to \m{\atg_2} that `shares' (i.e.\ maps to the same vertex) only vertices with the label \afunsym, i.e.\ \h has the property that \m{h(v_1) = h(v_2) \;⇒\; \labi{1}{v_1} = \labi{1}{v_2} = \afunsym} holds for all \m{v_1 ≠ v_2 ∈ \V_1}. If \h is an \afunsym-homomorphism from \m{\atg_1} to \m{\atg_2}, then we write \m{\atg_1 \funbisim^{\afunsym}_{\h} \atg_2} or \m{\atg_2 \convfunbisim^{\afunsym}_{\h} \atg_1}, or dropping \h, \m{\atg_1 \funbisim^{\afunsym} \atg_2} or \m{\atg_2 \convfunbisim^{\afunsym} \atg_1}.
\end{definition}

\begin{terminology}
	The terms `homomorphism' and `functional bisimulation' will we used interchangeably throughout this document.
\end{terminology}

\begin{definition}[isomorphism]\label{def:ltgs:iso}
	An \emph{isomorphism} between \m{\atg_1} and \m{\atg_2} is a bijective homomorphism \m{i : \V_1 → \V_2} from \m{\atg_1} to \m{\atg_2} (it follows from the homomorphism conditions \cref{def:ltgs:homomorphism} that also the inverse function \m{i^{-1} : \V_1 → \V_2} is a homomorphism). If there is an isomorphism between \m{\atg_1} and \m{\atg_2}, we write \m{\atg_1 \iso \atg_2}, and say that \m{\atg_1} and \m{\atg_2} are \emph{isomorphic}. The relation \iso is an equivalence relation on \tgsover{\sig}. For every term graph \atg over \sig we denote the isomorphism equivalence class \eqcl{\atg}{\iso} by (the boldface letter) \atgiso.
\end{definition}

\begin{definition}[bisimulation, bisimilarity]\label{def:bisimulation_ltgs}
	For \m{i ∈ \set{1,2}}, let \m{\atg_i = \tuple{\V_i,\labi{i}{},\argsi{i}{},\r_i}} be term graphs over signature \sig. A \emph{bisimulation} between \m{\atg_1} and \m{\atg_2} is a relation \m{\abisim \subseteq \V_1\times\V_2} such that the following conditions hold, for all \m{\pair{v}{v'} ∈ \abisim}:
	\begin{align}
		\tag{roots}
		\pair{\r_1}{\r_2} ∈ \abisim
		\\
		\tag{labels}
		\labi{1}{v} = \labi{2}{v'}
		\\
		\tag{arguments}
		\pair{\argsi{1}{v}}{\argsi{2}{v'}} ∈ \abisim^*
	\end{align}
	where the extension \m{\abisim^* \subseteq {V_1}^* \times {V_2}^*} of \abisim to a relation between words over \m{V_1} and words over \m{V_2} is defined as:
	\begin{equation*}
		\abisim^* :=
		\setcompr{\pair{v_1 \dots v_n}{w_1 \dots w_n}}{
			\begin{array}{l}
				\m{n ∈ ℕ, ∀i ∈ \set{1,\dots,n}~ v_i ∈ V_1, w_i ∈ V_2}
				\\
				\text{such that~} \pair{v_i}{w_i} ∈ \abisim \text{ for all \m{1 ≤ i ≤ n}}
			\end{array}
		}
	\end{equation*}
	We write \m{\atg_1 \bisim \atg_2} if there is a bisimulation between \m{\atg_1} and \m{\atg_2}, and say that \m{\atg_1} and \m{\atg_2} are \emph{bisimilar}. Bisimilarity \bisim is an equivalence relation on classes \tgsover{\sig} of term graphs over a signature \sig.
	\par An \emph{\afunsym-bisimulation} between \m{\atg_1} and \m{\atg_2} is a bisimulation between \m{\atg_1} and \m{\atg_2} such that its restriction to vertices with labels different from \afunsym is a bijective function. If there is an \afunsym-bisimulation between \m{\atg_1} and \m{\atg_2} we say that \m{\atg_1} and \m{\atg_2} are \afunsym-bisimilar and write \m{\bisim^{\afunsym}} to indicate \emph{\afunsym-bisimilarity}.
\end{definition}



The following proposition is a simple but useful reformulation of the definition
of homomorphism (\cref{def:ltgs:homomorphism}).

\begin{proposition}[homomorphism]\label{prop:hom:tgs}
	Let \m{\atg_i = \tuple{\V_i,\labi{i}{},\argsi{i}{},\r_i}}, for \m{i ∈ \set{1,2}} be term graphs over signature \sig. Let \m{h : \V_1 → \V_2} be a function. Then \h is a homomorphism from \m{\atg_1} to \m{\atg_2}, if and only if, for all \m{v,v_1 ∈ \V_1}, \m{v_2 ∈ \V_2}, and \m{k ∈ ℕ}, the following four statements hold:
	\begin{gather}
		\tag{roots}
		h(\r_1) = \r_2
		\\
		\tag{labels}
		\lab{v} = \lab{h(v)}
		\\
		\tag{args-forward}\label{eq:args-forward}
		\begin{aligned}
			h(v_1) = v_2 ~∧~
			& ∀ v'_1 ∈ \V_1 ~ v_1 \tgsucc_k v'_1 \\ ⇒~
			& ∃ v'_2 ∈ \V_2 ~ v_2 \tgsucc_k v'_2 ~∧~ h(v'_1) = v'_2
		\end{aligned}
		\\
		\tag{args-backward}\label{eq:args-backward}
		\begin{aligned}
			h(v_1) = v_2 ~∧~
			& ∀ v'_2 ∈ \V_2 ~ v_2 \tgsucc_k v'_2 \\ ⇒~
			& ∃ v'_1 ∈ \V_1 ~ v_1 \tgsucc_k v'_1 ~∧~ h(v'_1) = v'_2
		\end{aligned}
	\end{gather}
\end{proposition}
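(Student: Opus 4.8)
The statement to prove is \cref{prop:hom:tgs}, a reformulation of the definition of homomorphism (\cref{def:ltgs:homomorphism}) in terms of the indexed edge relations \m{\tgsucc_k}. The plan is to show the two directions of the biconditional by unwinding the definitions, with the only real content being the translation between the condition \ref{ltgs:homomorphism:cond:arguments} on argument words, namely \m{h^*(\argsi{1}{v}) = \argsi{2}{h(v)}}, and the pair of pointwise conditions \cref{eq:args-forward} and \cref{eq:args-backward} on indexed successors.

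First I would observe that the \emph{roots} and \emph{labels} conditions are literally identical in the two formulations, so nothing needs to be done for them. The work is entirely about the \emph{arguments} condition. For the forward direction, suppose \h is a homomorphism from \m{\atg_1} to \m{\atg_2}. Fix \m{v_1 ∈ \V_1} with \m{h(v_1) = v_2}, and suppose \m{v_1 \tgsucc_k v'_1}. By definition of \m{\tgsucc_k}, writing \m{\argsi{1}{v_1} = \v_0 \dots \v_n}, we have \m{k ≤ n} and \m{v'_1 = \v_k}. Applying \ref{ltgs:homomorphism:cond:arguments}, \m{\argsi{2}{v_2} = h^*(\argsi{1}{v_1}) = h(\v_0) \dots h(\v_n)}, a word of the same length \m{n+1}. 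Hence \m{v_2 \tgsucc_k h(\v_k) = h(v'_1)}, and also \m{\length{\argsi{2}{v_2}} = \length{\argsi{1}{v_1}}}. This gives \cref{eq:args-forward}; moreover the same length computation shows that every \m{k}-successor of \m{v_2} is accounted for, so conversely if \m{v_2 \tgsucc_k v'_2} then \m{k ≤ n} and \m{v'_2 = h(\v_k)} where \m{v_1 \tgsucc_k \v_k}, which is \cref{eq:args-backward}. (It is worth noting that \cref{eq:args-forward} alone does not force the length of \m{\argsi{2}{h(v)}} to match that of \m{\argsi{1}{v}}; that is why \cref{eq:args-backward} is needed as a separate clause, and where a small subtlety lies.)

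For the backward direction, assume \h satisfies the four displayed statements; I want to derive \ref{ltgs:homomorphism:cond:arguments}. Fix \m{v ∈ \V_1} and let \m{\argsi{1}{v} = \v_0 \dots \v_n} (so \m{n+1 = \arity{\lab{v}} = \arity{\lab{h(v)}}}, using the \emph{labels} condition). For each \m{k ≤ n} we have \m{v \tgsucc_k \v_k}, so by \cref{eq:args-forward} there is \m{w_k ∈ \V_2} with \m{h(v) \tgsucc_k w_k} and \m{h(\v_k) = w_k}. Conversely, \cref{eq:args-backward} ensures that \m{h(v)} has no \m{k}-successor for \m{k > n}: if it had, \cref{eq:args-backward} would produce a \m{k}-successor of \m{v}, contradicting \m{\length{\argsi{1}{v}} = n+1}. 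Hence \m{\argsi{2}{h(v)} = w_0 \dots w_n = h(\v_0) \dots h(\v_n) = h^*(\argsi{1}{v})}, which is exactly \ref{ltgs:homomorphism:cond:arguments}. This completes both directions.

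The main obstacle — really the only point requiring care — is the bookkeeping around the \emph{length} of argument words: \cref{eq:args-forward} and \cref{eq:args-backward} together encode the equality \m{\length{\argsi{1}{v}} = \length{\argsi{2}{h(v)}}} (given that both equal \m{\arity{\lab{v}}} via the labels condition, this is automatic, but one should make the use of the labels condition explicit), and one must make sure both inclusions of successors are used so that the two words are not merely pointwise-equal on a common prefix but genuinely identical. Everything else is a direct unfolding of \cref{def:ltgs:homomorphism} and of the definition of \m{\tgsucc_k}, so I would present the proof compactly, emphasising the role of \cref{eq:args-backward} in pinning down the arity.
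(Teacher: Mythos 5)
Your proof is correct: the paper states this proposition without proof, presenting it as a ``simple but useful reformulation'' of \cref{def:ltgs:homomorphism}, and your direct unwinding of the arguments condition into the two pointwise clauses (with the observation that the labels condition already pins down the lengths of the argument words) is exactly the intended argument.
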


\begin{proposition}[functional bisimulation and paths]\label{prop:hom:tgs:paths}
	For \m{i ∈ \set{1,2}} let \m{\atg_i = \tuple{\V_i,\labi{i}{},\argsi{i}{},\r_i}} be term graphs over signature \sig. Let \m{h : \V_1 → \V_2} be a homomorphism from \m{\atg_1} to \m{\atg_2}. Then the following statements hold:
	\begin{enumerate}[(i)]
		\item\label{prop:hom:tgs:paths:i}
			Every path \m{\apath : v_0 \tgsucc_{k_1} v_1 \tgsucc_{k_2} v_2 \tgsucc_{k_3} \dots \tgsucc_{k_{n-1}} v_{n-1} \tgsucc_{k_n} v_n} in \m{\atg_1} has an image \m{h(\apath)} under \h in \m{\atg_2} in the sense that \m{h(\apath) : h(v_0) \tgsucc_{k_1} h(v_1) \tgsucc_{k_2} h(v_2) \tgsucc_{k_3} \dots \tgsucc_{k_{n-1}} h(v_{n-1}) \tgsucc_{k_n} h(v_n)}.
		\item\label{prop:hom:tgs:paths:ii}
			For every \m{v_0 ∈ \V_0} and \m{v'_0 ∈ \V_1} with \m{h(v_0) = v'_0} it holds that every path \m{\apath' : v'_0 \tgsucc_{k_1} v'_1 \tgsucc_{k_1} v'_2 \tgsucc_{k_2} \dots \tgsucc_{k_{n-1}} v'_{n-1} \tgsucc_{k_{n-1}} v'_n} in \m{\atg_2} has a pre-image under \h in \m{\atg_1} that starts in \m{v_0}: a unique path \m{\apath : v_0 \tgsucc_{k_1} v_1 \tgsucc_{k_2} v_2 \tgsucc_{k_3} \dots \tgsucc_{k_{n-1}} v_{n-1} \tgsucc_{k_n} v_n} in \m{\atg_1} such that \m{\apath' = h(\apath)} holds in the sense of \cref{prop:hom:tgs:paths:i}.
		\item\label{prop:hom:tgs:paths:iii}
			\m{h(\V_1) = \V_2}, that is, \h is surjective.
	\end{enumerate}
\end{proposition}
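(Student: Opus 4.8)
The plan is to establish the three statements in turn, each by a straightforward induction, drawing on the reformulation of the homomorphism conditions in \cref{prop:hom:tgs}. For \cref{prop:hom:tgs:paths:i} I would induct on the length $n$ of the path $\apath : v_0 \tgsucc_{k_1} v_1 \tgsucc_{k_2} \dots \tgsucc_{k_n} v_n$ in $\atg_1$. The case $n = 0$ is trivial. In the induction step, the induction hypothesis gives $h(v_0) \tgsucc_{k_1} \dots \tgsucc_{k_{n-1}} h(v_{n-1})$ in $\atg_2$, and since $v_{n-1} \tgsucc_{k_n} v_n$ holds in $\atg_1$, condition \cref{eq:args-forward} of \cref{prop:hom:tgs} (applied with the homomorphism $h$, and with $k := k_n$) yields $h(v_{n-1}) \tgsucc_{k_n} h(v_n)$; concatenating produces the image path $h(\apath)$ as claimed.

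For \cref{prop:hom:tgs:paths:ii} I would again induct on the length of the path $\apath'$ in $\atg_2$, this time invoking condition \cref{eq:args-backward}. Given $h(v_0) = v'_0$ and a first step $v'_0 \tgsucc_{k_1} v'_1$ in $\atg_2$, \cref{eq:args-backward} furnishes some $v_1 \in \V_1$ with $v_0 \tgsucc_{k_1} v_1$ and $h(v_1) = v'_1$. This $v_1$ is moreover the \emph{unique} such vertex: the successors of $v_0$ are recorded as the word $\argsi{1}{v_0}$, so the $k_1$-st successor of $v_0$ is uniquely determined, and $h$ then fixes its image. Iterating this construction along $\apath'$ produces a path $\apath : v_0 \tgsucc_{k_1} v_1 \tgsucc_{k_2} \dots \tgsucc_{k_n} v_n$ in $\atg_1$ that starts in $v_0$, is uniquely determined, and satisfies $\apath' = h(\apath)$ in the sense of \cref{prop:hom:tgs:paths:i}.

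Finally, for \cref{prop:hom:tgs:paths:iii} I would use that the term graphs under consideration are root connected (\cref{def:root-connected}): every vertex $v'_2 \in \V_2$ has an access path from $\r_2$ to $v'_2$ in $\atg_2$. Since $h(\r_1) = \r_2$ by the roots condition, \cref{prop:hom:tgs:paths:ii} applied with $v_0 := \r_1$ lifts this access path to a path in $\atg_1$ starting at $\r_1$ whose final vertex is mapped by $h$ to $v'_2$; hence $v'_2 \in h(\V_1)$, so $h$ is surjective. None of these steps is technically demanding; the only place calling for a little care is the uniqueness assertion in \cref{prop:hom:tgs:paths:ii}, which rests on successors being stored as an ordered word rather than an unordered set, together with — for \cref{prop:hom:tgs:paths:iii} — the standing assumption that term graphs are root connected.
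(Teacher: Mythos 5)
Your proposal is correct and follows essentially the same route as the paper: induction on path length using \cref{eq:args-forward} for \cref{prop:hom:tgs:paths:i}, induction using \cref{eq:args-backward} for \cref{prop:hom:tgs:paths:ii}, and lifting an access path (which exists by root-connectedness) via \cref{prop:hom:tgs:paths:ii} for \cref{prop:hom:tgs:paths:iii}. Your explicit remark that uniqueness in \cref{prop:hom:tgs:paths:ii} comes from successors being stored as an ordered word is a welcome bit of extra care that the paper leaves implicit.
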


\begin{proof}
	Statement \cref{prop:hom:tgs:paths:i} can be shown by induction on the length of \apath, using the \cref{eq:args-forward} property of \h from \cref{prop:hom:tgs}. Analogously, statement \cref{prop:hom:tgs:paths:ii} can be established by induction on the length of \m{\apath'}, using the \cref{eq:args-backward} property of \h from \cref{prop:hom:tgs}. Statement \cref{prop:hom:tgs:paths:iii} follows by applying, for given \m{v' ∈ \V_2}, the statement of \cref{prop:hom:tgs:paths:ii} to an access path \m{\apath'} of \m{v'} in \m{\atg_2} (which exists because term graphs are defined to be root-connected).
\end{proof}

\begin{proposition}\label{prop:funbisim:anti:symmetric}
	Let \atg, \m{\atg_1}, and \m{\atg_2} be term graphs over a signature \sig.
	\begin{enumerate}[(i)]
		\item\label{prop:funbisim:anti:symmetric:i}
			If \m{\atg \funbisim_{\h} \atg}, then \m{h = \idon{\V}{}},
			where \idon{\V}{} is the identity function on the set \V of vertices of \atg.
		\item\label{prop:funbisim:anti:symmetric:ii}
			If \m{\atg_1 \funbisim_{\h} \atg_2} and \m{\atg_2 \funbisim_g \atg_1} hold,
			then \h and \g are invertible, \m{h^{-1} = g},
			and consequently \m{\atg_1 \iso \atg_2}.
	\end{enumerate}
\end{proposition}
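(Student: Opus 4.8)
The statement to prove is Proposition~\ref{prop:funbisim:anti:symmetric}: (i) if $\atg \funbisim_{\h} \atg$ then $h = \idon{\V}{}$, and (ii) if $\atg_1 \funbisim_{\h} \atg_2$ and $\atg_2 \funbisim_g \atg_1$ then $h$ and $g$ are mutually inverse, and hence $\atg_1 \iso \atg_2$.

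\textbf{Plan.} For part~(i), the plan is to exploit root-connectedness together with the path-lifting behaviour of homomorphisms recorded in \cref{prop:hom:tgs:paths}. First I would show by induction on the length of an access path that $h$ fixes every vertex. For the root, $h(\r) = \r$ by the \cref{ltgs:homomorphism:cond:roots} condition. For the inductive step, suppose $\v$ has access path $\r \tgsucc_{k_1} \v_1 \tgsucc_{k_2} \dots \tgsucc_{k_n} \v$, and assume $h$ fixes $\v_{n-1}$ (the penultimate vertex); since $\v_{n-1} \tgsucc_{k_n} \v$, the \cref{ltgs:homomorphism:cond:arguments} condition (in the forward form \cref{eq:args-forward} of \cref{prop:hom:tgs}) gives $h(\v_{n-1}) \tgsucc_{k_n} h(\v)$, i.e.\ $\v_{n-1} \tgsucc_{k_n} h(\v)$. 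But in a term graph the $k_n$-th successor of a given vertex is unique (it is the $k_n$-th entry of the word $\argsi{}{\v_{n-1}}$), so $h(\v) = \v$. Since every vertex has an access path, $h = \idon{\V}{}$.

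\textbf{Part~(ii) via part~(i).} Given $\atg_1 \funbisim_{\h} \atg_2$ and $\atg_2 \funbisim_{g} \atg_1$, I would first observe that the composition $g \circ h$ is a homomorphism from $\atg_1$ to $\atg_1$ (homomorphisms compose — this is immediate from the three defining conditions, or can be cited as folklore; if the paper has not stated it explicitly I would include the one-line verification that each of \cref{ltgs:homomorphism:cond:roots}, \cref{ltgs:homomorphism:cond:labels}, \cref{ltgs:homomorphism:cond:arguments} is preserved under composition, using that $(g\circ h)^* = g^* \circ h^*$). By part~(i), $g \circ h = \idon{\V_1}{}$. Symmetrically, $h \circ g$ is a homomorphism from $\atg_2$ to $\atg_2$, so $h \circ g = \idon{\V_2}{}$. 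Hence $h$ and $g$ are bijections with $h^{-1} = g$, and $h$ is a bijective homomorphism, i.e.\ an isomorphism by \cref{def:ltgs:iso}; therefore $\atg_1 \iso \atg_2$.

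\textbf{Main obstacle.} There is no deep obstacle here; the proposition is a standard ``homomorphisms between term graphs form a poset up to iso'' fact. The only point requiring a little care is the uniqueness-of-$k$-th-successor argument in part~(i): one must be sure that in the definition of term graph the successors $\argsi{}{\v}$ form an \emph{ordered} word, so that ``$\v \tgsucc_k \v'$'' determines $\v'$ uniquely from $\v$ and $k$ — which is exactly how $\tgsucc_k$ was defined in \cref{def:successor relations} (the relation $\v \tgsucc_i \v'$ holds iff $\args{\v} = \v_0 \dots \v_n$ and $\v' = \v_i$). The other mild subtlety is that the induction in part~(i) is over access paths, which exist for every vertex precisely because term graphs are root-connected by convention (\cref{def:root-connected}); so the statement genuinely uses root-connectedness and would fail for $\tgsminover{\sig}$. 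A small variant of the same argument (lift along access paths in both directions) could also be used to prove part~(ii) directly without invoking composition, but routing it through part~(i) is cleaner.
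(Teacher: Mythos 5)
Your proof is correct and follows essentially the same route as the paper: part (i) by induction on access-path length (using root-connectedness and the uniqueness of the $k$-th successor), and part (ii) by composing the two homomorphisms and applying part (i) to conclude that both composites are identities. The only cosmetic difference is that the paper inducts on the length of the \emph{shortest} access path, which sidesteps having to note that prefixes of access paths are again access paths; your version works just as well.
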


\begin{proof}
	For statement \cref{prop:funbisim:anti:symmetric:i}, suppose that \m{\atg \funbisim_{\h} \atg} holds for some homomorphism \h from \atg to itself. The fact that \m{h(\v) = \idon{\V}{\v}} holds for all vertices \v of \atg can be established by induction on the length of the shortest access path of \v in \atg. Note that we make use here of the root-connectedness of \atg (assumed implicitly, see \cref{sec:representations:prelims}) in the form of the assumption that every vertex can be reached by an access path. In order to show statement \cref{prop:funbisim:anti:symmetric:ii}, note that \m{\atg_1 \funbisim_{\h} \atg_2} and \m{\atg_2 \funbisim_g \atg_1} entail \m{\atg_1 \funbisim_{{\h} ∘ {g}} \atg_1} for homomorphisms \h and \g from \m{\atg_1} to \m{\atg_2}. From this \m{{\h} ∘ {g} = \idon{\V_1}{}}, where \m{\V_1} the set of vertices of \m{\atg_1}, follows by \cref{prop:funbisim:anti:symmetric:i}. Since analogously \m{{\h} ∘ {g} = \idon{\V_2}{}} follows, where \m{\V_2} is the set of vertices of \m{\atg_2}, the further claims follow.
\end{proof}

\begin{para}[sharing order]
	The homomorphism relation \funbisim is a preorder on term graphs over a given signature \sig. It induces a partial order on the isomorphism equivalence classes of term graphs over \sig, where anti-symmetry is implied by item \cref{prop:funbisim:anti:symmetric}~\cref{prop:funbisim:anti:symmetric:ii} of the following proposition. We will refer to \funbisim as the \emph{sharing preorder}, and to the induced relation on isomorphism equivalence classes as the \emph{sharing order}.
\end{para}

\begin{notation}
	Note that, deviating from some literature \cite{terese:2003}, we use the order relation \funbisim in the same direction as \m{≤\,}: if \m{\atg_1 \funbisim \atg_2}, then \m{\atg_2} is greater or equal to \m{\atg_1} with respect to the ordering \funbisim indicating that sharing is typically increased from \m{\atg_1} to \m{\atg_2}.
\end{notation}

\begin{notation}[equivalence classes of term graphs]\label{def:bisim-classes}
	Let \m{\aclass\subseteq\tgsover{\sig}} be a class of term graphs over signature \sig. For \m{\atg ∈ \tgsover{\sig}} we will use the notation
	\[\eqclin{\atgiso}{\bisim}{\aclass} := \setcompr{\atgiso'}{\altg' ∈ \aclass, \altg \bisim \altg'}\]
	to denote the bisimulation equivalence class of \atgiso (the \iso-equivalence-class of \altg) with respect to (the \iso-equivalence-classes in) \aclass. And we will write
	\[\succsofordin{\atgiso}{\funbisim}{\aclass} := \setcompr{\atgiso'}{\altg' ∈ \aclass, \altg \funbisim \altg'}\]
	to denote the class of all \iso-equivalence classes in \aclass that are reachable from \atgiso via functional bisimulation. For \m{\aclass = \tgsover{\sig}} we drop the superscript \aclass, and simply write \eqcl{\atgiso}{\bisim} and \succsoford{\atgiso}{\funbisim}.
\end{notation}

\begin{para}[complete lattice]
	A partially ordered set \m{\pair{\A}{≤}} is a \emph{complete lattice} if every subset of \A possesses a least upper bound and a greatest lower bound.
\end{para}

\begin{proposition}\label{prop:funbisim:succs:of:tgs:iso:complete:lattice}
	Let \sig be a signature, and \atg be a term graph over \sig. The bisimulation equivalence class \eqcl{\atgiso}{\bisim} of the isomorphism equivalence class \atgiso of \atg is ordered by functional bisimulation \funbisim such that \pair{\succsoford{\atgiso}{\funbisim}}{\funbisim} is a complete lattice.
\end{proposition}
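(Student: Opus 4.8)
The plan is to reduce the statement to one lattice-theoretic observation together with one universal construction. Recall first that a partially ordered set in which every subset has a least upper bound is automatically a complete lattice: given a subset $S$, its set $L$ of lower bounds is non-empty (it contains the least element of the poset, which exists as the least upper bound of $\emptyset$), and $\bigsqcup L$ is then readily seen to be the greatest lower bound of $S$. Since $\funbisim$ is reflexive (identity homomorphisms), transitive (composition of homomorphisms), and anti-symmetric on isomorphism classes by \cref{prop:funbisim:anti:symmetric}~(ii), it therefore suffices to exhibit, for every subset of $\succsoford{\atgiso}{\funbisim}$, a least upper bound lying in $\succsoford{\atgiso}{\funbisim}$; here $\atgiso$ itself will be the least element and the least upper bound of all of $\succsoford{\atgiso}{\funbisim}$ the greatest element.

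So fix a representative $\atg = \tuple{\V,\lab{},\args{},\r}$ of $\atgiso$ and a subset $S = \setcompr{\eqcl{\atg_j}{\iso}}{j ∈ J}$ of $\succsoford{\atgiso}{\funbisim}$, choosing for each $j ∈ J$ a representative $\atg_j$ and a homomorphism with $\atg \funbisim_{h_j} \atg_j$. I first record that a homomorphism between root-connected term graphs, when it exists, is unique: by \cref{prop:hom:tgs:paths}~(i) it sends an access path of a vertex to the path with the same sequence of argument indices starting at the image of the root, which is forced to be the root, and since argument edges are functional this determines the homomorphism on every vertex. Now let $E$ be the least equivalence relation on $\V$ containing $\bigcup_{j ∈ J} \ker h_j$, where $\ker h_j := \setcompr{\pair{v}{w}}{h_j(v) = h_j(w)}$. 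Because $h_j$ respects labels and argument words (\cref{def:ltgs:homomorphism}), each $\ker h_j$ is a \emph{congruence} on $\atg$: if $v \mathrel{\ker h_j} w$ then $\lab{v} = \lab{w}$, $\length{\args{v}} = \length{\args{w}}$, and the $i$-th successors of $v$ and of $w$ are again $\ker h_j$-related, for each $i$. These three conditions propagate along finite chains, so $E$, being the reflexive, symmetric and transitive closure of $\bigcup_j \ker h_j$, is again a congruence. Hence the quotient $\atg/E$ — vertices the $E$-classes, with $\lab{[v]} := \lab{v}$, the arguments of $[v]$ the tuple of $E$-classes of $\args{v}$, and root $[\r]$ — is a well-defined root-connected term graph, and the canonical map $q$ yields $\atg \funbisim_{q} \atg/E$, so $\eqcl{\atg/E}{\iso} ∈ \succsoford{\atgiso}{\funbisim}$.

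It remains to verify that $\eqcl{\atg/E}{\iso}$ is the least upper bound of $S$. For each $j$, since $\ker h_j \subseteq E = \ker q$ and $h_j$ is surjective (\cref{prop:hom:tgs:paths}~(iii)), the rule $g_j(h_j(v)) := q(v)$ defines a total function $\vertsof{\atg_j} → \vertsof{\atg/E}$, and from $g_j \circ h_j = q$ together with $h_j$ and $q$ being homomorphisms one checks that $\atg_j \funbisim_{g_j} \atg/E$; thus $\eqcl{\atg/E}{\iso}$ is an upper bound of $S$. Conversely, let $\eqcl{\atg'}{\iso} ∈ \succsoford{\atgiso}{\funbisim}$ be any upper bound of $S$, witnessed by $\atg \funbisim_{h'} \atg'$ and, for each $j$, $\atg_j \funbisim_{k_j} \atg'$. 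By uniqueness of homomorphisms $k_j \circ h_j = h'$, so $\ker h_j \subseteq \ker h'$ for every $j$; as $\ker h'$ is a congruence containing $\bigcup_j \ker h_j$, we get $E \subseteq \ker h'$, whence $h'$ factors through $q$ as $h' = m \circ q$ with $\atg/E \funbisim_{m} \atg'$. Therefore $\eqcl{\atg/E}{\iso} \funbisim \eqcl{\atg'}{\iso}$. (Independence of the chosen representatives and witnessing homomorphisms follows from \cref{prop:funbisim:anti:symmetric}~(ii).)

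The only genuinely non-bookkeeping step is the claim that $E$ is a congruence, i.e.\ that the componentwise argument-compatibility of the individual kernels survives closing $\bigcup_j \ker h_j$ under transitivity; once that is in place, $\atg/E$ is forced to be the universal target and everything else is routine. A close second is the uniqueness of homomorphisms between root-connected term graphs, which is exactly what permits the identification $k_j \circ h_j = h'$ and hence the inclusion $\bigcup_j \ker h_j \subseteq \ker h'$ used for minimality.
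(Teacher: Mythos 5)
Your proof is correct. Note, though, that the paper does not actually prove this proposition: it is presented as an immediate restriction of the known complete-lattice theorems for \emph{entire} bisimulation equivalence classes of term graphs (Ariola--Klop, Theorem~3.19, and Terese, Theorem~13.2.20), the point being that \succsoford{\atgiso}{\funbisim} is the principal up-set of \atgiso inside the complete lattice \pair{\eqcl{\atgiso}{\bisim}}{\funbisim}, and principal up-sets of complete lattices are again complete lattices (cf.\ \cref{rem:prop:funbisim:succs:of:tgs:iso:complete:lattice}). What you supply instead is a self-contained direct argument: the reduction of completeness to the existence of all joins, and the construction of the join of a family of \funbisim-successors as the quotient of \atg by the congruence generated by the union of the kernels of the (unique) witnessing homomorphisms. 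This is essentially the standard proof of the cited theorems, specialised to the up-set of \atgiso; it buys independence from the literature at the cost of redoing the congruence-lattice bookkeeping, and all the individual steps (uniqueness of homomorphisms between root-connected term graphs via access paths, closure of the kernel congruences under the generated equivalence, the universal property of the quotient, and the factorisation $h' = m \circ q$ giving minimality) check out, including the empty-set case which yields \atgiso as bottom element.
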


\begin{remark}\label{rem:prop:funbisim:succs:of:tgs:iso:complete:lattice}
	The statement of \cref{prop:funbisim:succs:of:tgs:iso:complete:lattice} is a restriction to sets of \funbisim-successors of the statements \cite[Theorem~3.19]{ario:klop:1996} and \cite[Theorem~13.2.20]{terese:2003}, which confer the complete-lattice property for entire bisimulation equivalence classes (of \iso-equivalence classes) of term graphs.
\end{remark}

\begin{definition}[closedness under functional bisimulation]\label{def:closedness}
	Let \m{\aclass \subseteq \tgsover{\sig}} be a subclass of the term graphs over some signature \sig. We say that \aclass is \emph{closed under functional bisimulation} (\emph{closed under bisimulation}), if for all term graphs \m{\atg,\atg' ∈ \tgsover{\sig}}, whenever \m{\atg ∈ \aclass} and \m{\atg \funbisim \atg'} (\m{\atg \bisim \atg'}), then also \m{\atg' ∈ \aclass}.
\end{definition}

\section{Introduction}\label{sec:representations:intro}

\begin{para}[premise]
	In this chapter we seek to explore graph formalisms to adequately represent \lambdaletrec-terms. In developing these graph representations we draw heavily on ideas from the previous chapter such as λ-transition-graphs as a representation for λ-terms (\cref{sec:lambda-transition-graphs}) and the general concepts of including \extscope `exctended scope' in the formalisms. The inclusion of \extscope is intended to reflect the scoping rules in \lambdaletreccal, i.e.\ that variables bound by an abstraction or by \Let can only occur underneath their point of definition. Here we consider term graphs built from three kinds of vertices representating applications, abstractions, and variable occurrences, respectively.	
\end{para}


\begin{para}[three classes of graph formalisms]
	In particular we study the following three classes of term graphs:
	\begin{description}
		\item[λ-higher-order-term-graphs (\cref{sec:lambdahotgs})] are extensions of first-order term graphs by adding a scope function that assigns a set of vertices, its scope, to every abstraction vertex. There are two variants, one with and one without an edge (a \emph{backlink}) from each variable occurrence to its corresponding abstraction vertex. The class with backlinks is similar to \emph{higher-order term graphs} as defined by Blom in \cite{blom:2001}, and is in fact an adaptation of that concept to the λ-calculus.
		\item[abstraction-prefix based λ-higher-order-term-graphs (\cref{sec:lambdaaphotgs})] abbreviated as λ-ap-ho-term-graphs do not have a scope function but assign, to each vertex \v, an abstraction prefix consisting of a word of abstraction vertices that includes those abstractions for which \v is in their scope (it actually lists all abstractions for which \v is in their \emph{extended scope}, see \cref{def:scope:extscope}). Abstraction prefixes are aggregated information about the scopes entered so far (much as abstraction prefixes in the decomposition CRSs from the previous chapter; see \cref{prefixed-subterms}, \cref{prefixed-terms}, \cref{def:sig:lambdaprefixcal:CRS}).
		\item[λ-term-graphs with scope delimiters (\cref{sec:ltgs})] are plain first-order term graphs intended to represent both classes of higher-order term graphs above, and by extension \lambdaletrec-terms. Instead of relying upon added structures for describing scopes, they use scope-delimiter vertices to signify the end of scopes (much as the scope-delimiting steps in the decomposition CRSs from the previous chapter; see \cref{Reg:scope-delimiters}). Variable occurrences as well as scope delimiters may or may not have backlinks to their corresponding abstraction vertices.
	\end{description}
\end{para}

\begin{para}[desired properties]
	We develop these graph formalisms with a number of properties in mind, from which we expect to ensure that the representations are meaningful and useful:
	\begin{itemize}
		\item The term graphs should represent some \lambdaletrec-term, and in this sense are not be `meaningless'.
		\item Each of these classes induces a notion of functional bisimulation and bisimulation, which preserve the unfolding semantics of the term graphs (and therefore also the unfolding semantics of the \lambdaletrec-term they represent).
		\item Each of these classes induces a sharing order, which reflects the sharing present in the represented \lambdaletreccal-term.
		\item We are particularly interested in classes of term graphs that are closed under functional bisimulation, which ensures that when increasing sharing in a term graph it still remains meaningful.
	\end{itemize}
\end{para}

\begin{para}[relating the graph formalisms]
	It is important to stress that in this chapter we are not going to prove any of properties above that involve in any way \lambdaletrec-terms, particularly the former three. Instead we take a leap of faith and trust our intuition (and the authority of \cite{blom:2001}) that the λ-higher-order-term-graphs are indeed sound representation of \lambdaletrec-terms in that sense. As mentioned before focus on the graph formalisms themselves, and relate them amongst another. Particularly we establish a bijective correspondence between the λ-higher-order-term-graphs and λ-ap-ho-term-graphs, and a correspondence between λ-ap-ho-term-graphs and λ-term-graphs with scope delimiters that is `almost bijective' (bijective up to the sharing of scope delimiter vertices). We show that all of these correspondences preserve and reflect the sharing order.
	It is only in \cref{chap:maxsharing} that make a connection back to \lambdaletrec-terms. In particular, we supply a translation of \lambdaletrec-terms to λ-term-graphs with scope delimiters and back and show that the above desired properties do indeed hold. By the correspondences these properties extend also to the higher-order graph formalisms, which validate our conjecture about λ-higher-order-term-graphs.
\end{para}



\section{λ-higher-order-Term-Graphs}\label{sec:lambdahotgs}

\begin{definition}[signatures \sigTG, \sigTGi{0}, \sigTGi{1}]
	By \sigTG we denote the signature \set{@, λ} with \m{\arity{@} = 2}, and \m{\arity{λ} = 1}. By \sigTGi{i}, for \m{i ∈ \set{0,1}}, we denote the extension \m{\sigTG ∪ \set{\0}} of \sigTG where \m{\arity{\0} = i}. Whether the variable vertices have an outgoing edge depends on the value of \i. The intention is to consider two variants of term graphs, one with and one without variable backlinks to their corresponding abstraction vertex.
\end{definition}

\begin{definition}[term graphs classes \classtgssiglambdai{0} and \classtgssiglambdai{1}]
	The classes of term graphs over \sigTGi{0} and \sigTGi{1} are denoted by \m{\classtgssiglambdai{0} := \tgsover{\sigTGi{0}}} and \m{\classtgssiglambdai{1} := \tgsover{\sigTGi{1}}}, respectively.
\end{definition}

\begin{notation}[vertex subsets]\label{not:vertsof}
	Let \m{\atg = \tuple{\V,\lab{},\args{},\r}} be a term graph over signature \sig. For \m{l ∈ \sig} we denote by \vertsof{l} the set of \m{l}-vertices of \atg, that is, the subset of \V consisting of all vertices with label \m{l}; more formally, \m{\vertsof{l} := \setcompr{\v ∈ \V}{\lab{\v} = l}}.
\end{notation}

A `λ-higher-order-term-graph' consists of a \sigTGi{i}-term-graph together with a scope function that maps abstraction vertices to their \extscopes, which are subsets of the graph's vertices.

\begin{terminology}[scope \m{:=} \extscope]
	Henceforth when we write `scope' we mean \extscope.
\end{terminology}

\begin{definition}[scope function for \sigTGi{i}-term-graphs]\label{lhotg:scope-function}
	Let \m{i ∈ \set{0,1}} and \m{\atg = \tuple{\V,\lab{},\args{},\r}} be a \sigTGi{i}-term-graph. A function \m{\scope{} : \vertsof{λ} → \powerset(\V)} from λ-vertices of \atg to vertex sets of \atg is called a \emph{scope function} for \atg. Such a function is called \emph{correct} if for all \m{k ∈ \set{0,1}}, all vertices \m{v,w ∈ \V}, and all abstraction vertices \m{x,y ∈ \vertsof{λ}} the following holds:
	\begin{align}
		\tag{root}\label{lambdahotg:root}
		&⇒~ \r∉\scopemin{x}
		\\
		\tag{self}\label{lambdahotg:self}
		&⇒~ x ∈ \scope{x}
		\\
		\tag{nest}\label{lambdahotg:nest}
		x ∈ \scopemin{y} &⇒~ \scope{x} \subseteq \scopemin{y}
		\\
		\tag{closed}\label{lambdahotg:closed}
		v \tgsucc_k w \;∧\; w ∈ \scopemin{x} &⇒~ v ∈ \scope{x}
		\\
		\tag{scope\m{_0}}\label{lambdahotg:scope0}
		v ∈ \vertsof{\0} &⇒~ ∃ x ∈ \vertsof{λ} ~ v ∈ \scopemin{x}
		\\
		\tag{scope\m{_1}}\label{lambdahotg:scope1}
		v ∈ \vertsof{\0} \;∧\; v \tgsucc w &⇒~
		\left\{
			\begin{aligned}
				& w ∈ \vertsof{λ} \;∧ \\
				& v ∈ \scope{x} ⇔ w ∈ \scope{x}
			\end{aligned}
		\right.
	\end{align}
	where \m{\scopemin{v} := \scope{v}\setminus\set{v}}. Note that if \m{i=0}, then \cref{lambdahotg:scope1} is trivially true and hence superfluous, and if \m{i=1}, then \cref{lambdahotg:scope0} is redundant, because it follows from \cref{lambdahotg:scope1}.
	\par We say that \atg\ \emph{admits} a correct scope function if such a function exists for \atg.
\end{definition}

\begin{definition}[λ-ho-term-graph]\label{def:lambdahotg}
	Let \m{i ∈ \set{0,1}}. A \emph{λ-ho-term-graph} (short for \emph{λ-higher-order-term-graph}) over \sigTGi{i}, is a tuple \m{\ahotg = \tuple{\V,\lab{},\args{},\r,\scope{}}} where \m{\atg_{\ahotg} = \tuple{\V,\lab{},\args{},\r}} is a \sigTGi{i}-term-graph, called the term graph \emph{underlying} \ahotg, and \scope{} is a correct scope function for \m{\atg_{\ahotg}}. The classes of λ-ho-term-graphs over \sigTGi{0} and \sigTGi{1} will be denoted by \classlhotgsi{0} and \classlhotgsi{1}.
\end{definition}

\begin{example}
	Note that there is some freedom on how big the scopes are chosen. The minimal choice corresponds to \extscope in \cref{def:scope:extscope}. See \cref{fig:lambdahotg} for two different λ-ho-term-graphs over \sigTGi{i} both of which represent the same term in the λ-calculus with \letrec, namely \m{\letin{f = \abs{x}{\app{(\abs{y}{\app{y}{(\app{x}{g})}})}{(\abs{z}{\app{g}{f}})}},\; g = \abs{u}{u}}{f}}.
\end{example}

\begin{figure}
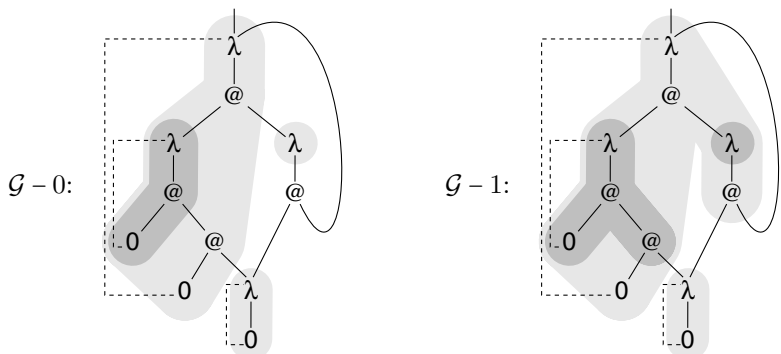

	\begin{hspread}
		\graph{\ahotg-0}{running-scopes-eager} & \graph{\ahotg-1}{running-scopes-non-eager}
	\end{hspread}
	\caption{\m{\ahotg_0} and \m{\ahotg_1} are λ-ho-term-graphs in \classlhotgsi{i} where the dotted backlink edges are present for \m{i=1}, but absent for \m{i=0}. The underlying term graphs of \m{\ahotg_0} and \m{\ahotg_1} are identical but their scope functions (signified by the shaded areas) differ. While in \m{\ahotg_0} scopes are chosen as small as possible (eager scope-closure), in \m{\ahotg_1} some scopes are closed only later in the graph.}
	\label{fig:lambdahotg}
\end{figure}

The following lemma states some basic properties of the scope function in λ-ho-term-graphs. Most importantly, scopes in λ-ho-term-graphs are properly nested, in analogy with \extscopes in finite λ-terms.

\begin{lemma}\label{lem:lhotg}
	Let \m{i ∈ \set{0,1}}, and let \m{\ahotg = \tuple{\V,\lab{},\args{},\r,\scope{}}} be a λ-ho-term-graph over \sigTGi{i}. For \m{v ∈ \V} and \m{x ∈ \vertsof{λ}} we say that \x is a \emph{binder for} \v if \m{v ∈ \scope{x}}, and we denote by \m{\binders{v} := \setcompr{x \in \vertsof{λ}}{v ∈ \scope{x}}} the set of binders of \v. Then the following statements hold for all \m{\w ∈ \V} and \m{\v,\v_1,\v_2 ∈ \vertsof{λ}}:
	\begin{enumerate}[(i)]
		\item\label{lem:lhotg:i} If \m{\w ∈ \scope{\v}}, then \v is visited on every access path of \w, and all vertices on access paths of \w after \v are in \scopemin{\v}. Hence (since \m{\atg_{\ahotg}} is a term graph, every vertex has an access path) \binders{\w} is finite.
		\item\label{lem:lhotg:ii} If \m{\scope{\v_1} ∩ \scope{\v_2} ≠ \emptyset} for \m{\v_1 ≠ \v_2}, then \m{\scope{\v_1} \subseteq \scopemin{\v_2}} or \m{\scope{\v_2} \subseteq \scopemin{\v_1}}. As a consequence, if \m{\scope{\v_1} ∩ \scope{\v_2} ≠ \emptyset}, then \m{\scope{\v_1} \subset \scope{\v_2}} or \m{\scope{\v_1} = \scope{\v_2}} or \m{\scope{\v_2} \subset \scope{\v_1}}.
		\item\label{lem:lhotg:iii} If \m{\binders{\w} ≠ \emptyset}, then \m{\binders{\w} = \set{\v_0,\dots,\v_n}} for \m{\v_0,\dots,\v_n ∈ \vertsof{λ}} and \m{\scope{\v_n} \subset \scope{\v_{n-1}} \dots \subset \scope{\v_0}}.
	\end{enumerate}
\end{lemma}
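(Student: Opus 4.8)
\textbf{Proof proposal for \cref{lem:lhotg}.}

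The plan is to prove the three items in order, each by exploiting the correctness conditions on the scope function (\cref{lambdahotg:root}--\cref{lambdahotg:scope1}) together with basic facts about access paths in term graphs. For item \cref{lem:lhotg:i}, I would fix \m{\w ∈ \scope{\v}} and consider an arbitrary access path \m{\apath : \r = \u_0 \tgsucc \u_1 \tgsucc \dots \tgsucc \u_n = \w} of \w. The key observation is that the condition \cref{lambdahotg:closed} says scopes are `upward closed' along edges: if an edge enters \m{\scopemin{\v}}, its source is already in \m{\scope{\v}}. Reading this contrapositively along \apath from the root: since \m{\r ∉ \scopemin{\v}} by \cref{lambdahotg:root}, but \m{\w ∈ \scopemin{\v}} (using \m{\w ≠ \v}; the case \m{\w = \v} is immediate since then \v is trivially on the path), there must be a least index \m{j} with \m{\u_j ∈ \scopemin{\v}}. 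Then \m{\u_{j-1} ∉ \scopemin{\v}} but \m{\u_{j-1} \tgsucc \u_j ∈ \scopemin{\v}}, so by \cref{lambdahotg:closed} we get \m{\u_{j-1} ∈ \scope{\v}}, hence \m{\u_{j-1} = \v} (it is in \m{\scope{\v}} but not in \m{\scopemin{\v}}). So \v lies on \apath. For the claim that all vertices after \v on \apath are in \m{\scopemin{\v}}, I would argue that once a path is inside \m{\scopemin{\v}} it cannot leave: if \m{\u_k ∈ \scopemin{\v}} and \m{\u_k \tgsucc \u_{k+1}}, then were \m{\u_{k+1} ∉ \scope{\v}}, the \cref{lambdahotg:closed} condition applied with \m{x = \v}, \m{w = \u_k} does not directly block it, so I need to think again --- actually \cref{lambdahotg:closed} gives closure of sources, not targets, so the right statement is: after \v the path stays in \m{\scope{\v}} because every later vertex also has \w (namely \m{\w = \u_n}) in \m{\scopemin{\v}} reachable from it, forcing (by the same least-index argument applied to the sub-path from that vertex to \w, noting the sub-path's start is not \r) either that vertex to be \v (impossible, access paths do not repeat vertices and \v already occurred) or to be in \m{\scopemin{\v}}. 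Finiteness of \m{\binders{\w}} then follows because every binder of \w must be visited on the (finite) access path \apath, so \m{\binders{\w}} injects into the vertex set of \apath.

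For item \cref{lem:lhotg:ii}, I would take \m{\w ∈ \scope{\v_1} ∩ \scope{\v_2}} with \m{\v_1 ≠ \v_2} and pick an access path \apath of \w. By item \cref{lem:lhotg:i}, both \m{\v_1} and \m{\v_2} lie on \apath, so one precedes the other, say \m{\v_1} precedes \m{\v_2}. Then \m{\v_2} occurs on \apath after \m{\v_1}, so by the second half of item \cref{lem:lhotg:i} (applied to \m{\v_1}) we get \m{\v_2 ∈ \scopemin{\v_1}}. Then the nesting condition \cref{lambdahotg:nest} with \m{x = \v_2}, \m{y = \v_1} yields \m{\scope{\v_2} \subseteq \scopemin{\v_1}}. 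The symmetric case gives the other inclusion. The `as a consequence' part is pure set theory: \m{\scope{\v_1} \subseteq \scopemin{\v_2}} implies \m{\scope{\v_1} \subset \scope{\v_2}} (proper, since \m{\v_2 ∈ \scope{\v_2}} by \cref{lambdahotg:self} but \m{\v_2 ∉ \scopemin{\v_2} \supseteq \scope{\v_1}}), and if \m{\scope{\v_1} ∩ \scope{\v_2} = \emptyset} the disjointness case is vacuous for this sub-claim (it only asserts a trichotomy under the non-empty-intersection hypothesis, so I would phrase it carefully: when the scopes do intersect, exactly one of the three holds).

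For item \cref{lem:lhotg:iii}, finiteness of \m{\binders{\w}} from item \cref{lem:lhotg:i} lets me write \m{\binders{\w} = \set{\v_0,\dots,\v_n}}. By item \cref{lem:lhotg:ii}, since all these scopes contain \w, they are pairwise comparable under \m{\subseteq} (and when two are equal-as-sets I would note they have the same label \m{λ} and the same scope; but distinct abstraction vertices can in principle have equal scopes --- I should check whether the correctness conditions forbid this, and if not, handle it by observing that on the access path \apath of \w the binders occur in a definite order, giving a chain even if some scopes coincide as sets; more cleanly, I would order \m{\v_0,\dots,\v_n} by their position on \apath and use item \cref{lem:lhotg:i} repeatedly to get \m{\v_{j+1} ∈ \scopemin{\v_j}} whence \m{\scope{\v_{j+1}} \subseteq \scopemin{\v_j} \subset \scope{\v_j}}, a strictly decreasing chain, which also shows the scopes are in fact pairwise distinct). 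I expect the main obstacle to be item \cref{lem:lhotg:i}: getting the `stays inside the scope after visiting the binder' part exactly right, since \cref{lambdahotg:closed} is phrased in terms of closure of edge-sources rather than edge-targets, so the argument has to route through the fixed endpoint \w repeatedly rather than propagating membership forward one edge at a time. Once item \cref{lem:lhotg:i} is solid, items \cref{lem:lhotg:ii} and \cref{lem:lhotg:iii} are short consequences.
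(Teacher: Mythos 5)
Your proposal is correct and follows essentially the same route as the paper's proof: item \cref{lem:lhotg:i} from \cref{lambdahotg:root} and \cref{lambdahotg:closed} by scanning an access path, item \cref{lem:lhotg:ii} from item \cref{lem:lhotg:i} together with \cref{lambdahotg:nest}, and item \cref{lem:lhotg:iii} as a strictly decreasing chain obtained by iterating the first two items along the access path. The only difference is in item \cref{lem:lhotg:i}: the paper propagates membership \emph{backward} from \w via \cref{lambdahotg:closed}, which yields the occurrence of \v and the ``all later vertices lie in \scopemin{\v}'' claim in a single pass, whereas you scan \emph{forward} from the root and then need the no-repetition property of access paths to rule out leaving the scope --- a correct but slightly longer detour, which you identified and resolved appropriately.
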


\begin{proof}
	Let \m{i ∈ \set{0,1}}, and let \m{\ahotg = \tuple{\V,\lab{},\args{},\r,\scope{}}} be a λ-ho-term-graph over \sigTGi{i}.
	\par For showing \cref{lem:lhotg:i}, let \m{\w ∈ \V} and \m{\v ∈ \vertsof{λ}} be such that \m{\w ∈ \scope{\v}}. Suppose that \m{\apath : \r = \w_0 \tgsucc_{k_1} \w_1 \tgsucc_{k_2} \w_2 \tgsucc_{k_3} \dots \tgsucc_{k_n} \w_n = \w} is an access path of \w. If \m{\w = \v}, then nothing remains to be shown. Otherwise \m{\w_n = \w ∈ \scopemin{\v}}, and, if \m{n>0}, then by \cref{lambdahotg:closed} it follows that \m{\w_{n-1} ∈ \scope{\v}}. This argument can be repeated to find subsequently smaller \i with \m{\w_i ∈ \scope{\v}} and \m{\w_{i+1},\dots,\w_n ∈ \scopemin{\v}}. We can proceed as long as \m{\w_i ∈ \scopemin{\v}}. But since, due to \cref{lambdahotg:root}, \m{\w_0 = \r∉\scopemin{\v}}, eventually we must encounter an \m{i_0} such that such that \m{\w_{i_0+1},\dots,\w_n ∈ \scopemin{\v}} and \m{\w_{i_0} ∈ \scope{\v}\setminus\scopemin{\v}}. This implies \m{\w_{i_0} = \v}, showing that \v is visited on \apath.
	\par For showing \cref{lem:lhotg:ii}, let \m{\w ∈ \V} and \m{\v_1,\v_2 ∈ \vertsof{λ}}, \m{\v_1 ≠ \v_2} be such that \m{\w ∈ \scope{\v_1} ∩ \scope{\v_2}}. Let \apath be an access path of \w. Then it follows by \cref{lem:lhotg:i} that both \m{\v_1} and \m{\v_2} are visited on \apath, and that, depending on whether \m{\v_1} or \m{\v_2} is visited first on \apath, either \m{\v_2 ∈ \scopemin{\v_1}} or \m{\v_1 ∈ \scopemin{\v_2}}. Then due to \cref{lambdahotg:nest} it follows that either
	\m{\scope{\v_2}\subseteq\scopemin{\v_1}} holds or \m{\scope{\v_1}\subseteq\scopemin{\v_2}}.
	\par Finally, statement \cref{lem:lhotg:iii} is an easy consequence of statement \cref{lem:lhotg:ii}.
\end{proof}

\begin{remark}[Comparison to `higher-order term graphs' \cite{blom:2001}]
	The notion of λ-ho-term-graph is an adaptation of the notion of `higher-order term graph' by Blom \cite[Definition\ 3.2.2]{blom:2001} for the purpose of representing finite or infinite λ-terms or cyclic (i.e. strongly regular) λ-terms. In particular, λ-ho-term-graphs over \sigTGi{1} correspond closely to higher-order term graphs over signature \sigTG. But they differ in the following respects:
	\begin{description}
		\item[abstractions:] Higher-order term graphs in \cite{blom:2001} are graph representations of finite or infinite terms in Combinatory Reduction Systems (CRSs). They typically contain abstraction vertices with label \Box that represent CRS abstractions. In contrast, λ-ho-term-graphs have abstraction vertices with label~λ that denote λ-abstractions.
		\item[signature:] Whereas higher-order term graphs in \cite{blom:2001} are based on an arbitrary CRS signature, λ-ho-term-graphs only contain the application symbol @ and the variable-occurrence symbol \0 in addition to the abstraction symbol λ.
		\item[variable backlinks and variable occurrence vertices:] In \cite{blom:2001}
		there are no explicit vertices that represent variable occurrences. Instead, variable occurrences are represented by backlink edges to abstraction vertices. Actually, in the formalisation chosen in \cite[Definition\hspace*{2pt}3.2.1]{blom:2001}, a backlink edge does not directly target the abstraction vertex \v it refers to, but ends at a special variant vertex \bar{\v} of \v. (Every such variant abstraction vertex \bar{\v} could be looked upon as a variable vertex that is shared by all edges that represent occurrences of the variable bound by the abstraction vertex \v.)
		\par In λ-ho-term-graphs over \sigTGi{1} a variable occurrence is represented by a variable vertex that as outgoing edge has a backlink to the abstraction vertex that binds the occurrence.
		\item[conditions on the scope function:] While the conditions \cref{lambdahotg:root}, \cref{lambdahotg:self}, \cref{lambdahotg:nest}, and \cref{lambdahotg:closed} on the scope function in higher-order term graphs in \cite[Definition\hspace*{2pt}3.2.2]{blom:2001} correspond directly to the respective conditions in \cref{def:lambdahotg}, the difference between the condition (scope) there and \cref{lambdahotg:scope1} in \cref{def:lambdahotg} reflects the difference described in the previous item.
		\item[free variables:] Whereas the higher-order term graphs in \cite{blom:2001} cater for the presence of free variables, free variables have been excluded from the basic format of λ-ho-term-graphs.
	\end{description}
\end{remark}

In the following, let \m{i ∈ \set{0,1}} and let \m{\ahotg_1} and \m{\ahotg_2} be λ-ho-term-graphs over \sigTGi{i} with \m{\ahotg_k = \tuple{\V_k,\labi{k}{},\argsi{k}{},\r_k,\scopei{k}{}}} for \m{k ∈ \set{1,2}}.

\begin{definition}[homomorphism]\label{lambdahotg:homo}
	\par A \emph{homomorphism} from \m{\ahotg_1} to \m{\ahotg_2} is a morphism from the structure \m{\tuple{\V_1,\labi{1}{},\argsi{1}{},\r_1,\scopei{1}{}}} to the structure \m{\tuple{\V_2,\labi{2}{},\argsi{2}{},\r_2,\scopei{2}{}}}, i.e.\ a function \m{h : \V_1 → \V_2} such that \h is a homomorphism from \m{\altg_1} to \m{\altg_2}, the term graphs underlying \m{\ahotg_1} and \m{\ahotg_2} respectively, and additionally, for all \m{\v ∈ \vertsiof{1}{λ}} 
	\begin{equation}\label{lambdahotg:homo:scope}
		\bar{h}(\scopei{1}{\v}) = \scopei{2}{h(\v)}
	\end{equation}
	where \bar{h} is the homomorphic extension of \h to sets over \m{\V_1}, i.e.\ \m{\bar{h} : \powerset(\V_1) → \powerset(\V_2)}, \m{A ↦ \setcompr{h(a)}{a ∈ A}}.
	\par If there exists a homomorphism \h from \m{\ahotg_1} to \m{\ahotg_2}, then we write \m{\ahotg_1 \funbisim_{\h} \ahotg_2} or \m{\ahotg_2 \convfunbisim_{\h} \ahotg_1}, or, dropping \h as subscript, \m{\ahotg_1 \funbisim \ahotg_2} or \m{\ahotg_2 \convfunbisim \ahotg_1}.
\end{definition}

\begin{definition}[isomorphism]\label{def:lambdahotg:iso}
	An \emph{isomorphism} from \m{\ahotg_1} to \m{\ahotg_2} is a homomorphism from \m{\ahotg_1} to \m{\ahotg_2} that, as a function from \m{\V_1} to \m{\V_2}, is bijective.
	\par If there exists an isomorphism \m{i : \V_1 → \V_2} from \m{\ahotg_1} to \m{\ahotg_2}, then we say that \m{\ahotg_1} and \m{\ahotg_2} are \emph{isomorphic}, and write \m{\ahotg_1 \iso_i \ahotg_2} or simply \m{\ahotg_1 \iso \ahotg_2}. The property of existence of an isomorphism between two λ-ho-term-graphs forms an equivalence relation. We denote by \classlhotgsisoi{0} and \classlhotgsisoi{1} the isomorphism equivalence classes of λ-ho-term-graphs over \sigTGi{0} and \sigTGi{1}, respectively.
\end{definition}

\begin{definition}[bisimulation]\label{def:lambdahotg:bisim}
	A \emph{bisimulation} between \m{\ahotg_1} and \m{\ahotg_2} is a (λ-ho-term-graph-like) structure \m{\ahotg = \tuple{\abisim,\lab{},\args{},\r,\scope{}}} where \m{\tuple{\abisim,\lab{},\args{},\r} ∈ \tgsminover{\sig}}, \m{\abisim \subseteq \V_1\times\V_2} and \m{\r = \pair{\r_1}{\r_2}} such that \m{\ahotg_1 \convfunbisim_{\aproj_1} \ahotg \funbisim_{\aproj_2} \ahotg_2} where \m{\aproj_1} and \m{\aproj_2} are projection functions, defined, for \m{i ∈ \set{1,2}}, by \m{\aproj_i : \V_1\times\V_2 → \V_i}, \m{\pair{\v_1}{\v_2} ↦ \v_i}.
	If there a bisimulation \abisim exists between \m{\ahotg_1} and \m{\ahotg_2}, then we write \m{\ahotg_1 \bisim_\abisim \ahotg_2}, or just \m{\ahotg_1 \bisim \ahotg_2}.
\end{definition}

\section{Abstraction-prefix based λ-ho-term-graphs}\label{sec:lambdaaphotgs}

\begin{para}[overview]
	By an `abstraction-prefix based λ-higher-order-term-graph' we will mean a term-graph over \sigTGi{i} for \m{i ∈ \set{0,1}} that is endowed with a correct abstraction prefix function. Such a function \abspre{} maps abstraction vertices \w to words \abspre{\w} consisting of all those abstraction vertices that have \w in their \extscope, in the order of their nesting from outermost to innermost abstraction vertex. If \m{\abspre{\w} = \v_1 \dots \v_n}, then \m{\v_1,\dots,\v_n} are the abstraction vertices that have \w in their scope, with \m{\v_1} the outermost and \m{\v_n} the innermost such abstraction vertex.
\end{para}

\begin{para}[comparison to λ-ho-term-graphs]
	So the conceptual difference between the scope functions of λ-ho-term-graphs defined in the previous section, and abstraction-prefix functions of the λ-ap-ho-term-graphs defined below is the following: A scope function \scope{} associates with every abstraction vertex \v the information on its \extscope \scope{\v} and makes it available at \v. In contrast, an abstraction-prefix function \abspre{} gathers all the scoping information that is relevant to a vertex \v (in the sense that it contains all abstraction vertices in whose scope \v is) and makes it available at \v in the form \abspre{\v}. The fact that abstraction-prefix functions make relevant scope information locally available at all vertices leads to simpler correctness conditions.
\end{para}


\begin{definition}[abstraction-prefix function for \sigTGi{i}-term-graphs]\label{def:abspre:function:siglambdai}
	Let \m{i ∈ \set{0,1}} and \m{\atg = \tuple{\V,\lab{},\args{},\r}} be a \sigTGi{i}-term-graph. A function \m{\abspre{} : \V → \V^*} from vertices of \atg to words of vertices is called an \emph{abstraction-prefix function} for \atg. Such a function is called \emph{correct} if for all \m{v,w ∈ \V} and \m{k ∈ \set{0,1}}:
	\begin{align}
		\tag{root}\label{ap-function:root} & ~ \abspre{\r} = \emptyword \\
		\tag{λ}\label{ap-function:abs} v ∈ \vertsof{λ} \;∧\; v \tgsucc w ~ ⇒ & ~ \abspre{w} ≤ \abspre{v} v \\
		\tag{@}\label{ap-function:app} v ∈ \vertsof{@} \;∧\; v \tgsucc w ~ ⇒ & ~ \abspre{w} ≤ \abspre{v} \\
		\tag{\m{\0_0}}\label{ap-function:00}
			v ∈ \vertsof{\0} ~ ⇒ & ~ \abspre{v} ≠ \emptyword \\
		\tag{\m{\0_1}}\label{ap-function:01}
			v ∈ \vertsof{\0} \;∧\; v \tgsucc w ~ ⇒ & ~ w ∈ \vertsof{λ} \;∧\; \abspre{w} \prefixcon w = \abspre{v}
	\end{align}
	Analogous to \cref{def:lambdahotg}, if \m{i=0}, then \cref{ap-function:00} is trivially true and hence superfluous, and if \m{i=1}, then \cref{ap-function:01} is redundant, because it follows from \cref{ap-function:01}.
	\par We say that \atg\ \emph{admits} a correct abstraction-prefix function if such a function exists for \atg.
\end{definition}

\begin{definition}[λ-ap-ho-term-graph]\label{def:aplambdahotg}
	Let \m{i ∈ \set{0,1}}. A \emph{λ-ap-ho-term-graph} (short for \emph{abstraction-prefix based λ-higher-order-term-graph}) over signature \sigTGi{i} is a tuple \m{\ahotg = \tuple{\V,\lab{},\args{},\r,\abspre{}}} where \m{\atg_{\ahotg} = \tuple{\V,\lab{},\args{},\r}} is a \sigTGi{i}-term-graph, called the term graph \emph{underlying} \ahotg, and \abspre{} is a correct abstraction-prefix function for \m{\atg_{\ahotg}}. The classes of λ-ap-ho-term-graphs over \sigTGi{i} will be denoted by \classaphotgsi{i}.
\end{definition}

\begin{example}
	See \cref{fig:lambdaaphotg} for two λ-ap-ho-term-graphs, which correspond (see \cref{ex:corr:lhotgs:aphotgs}) to the λ-ho-term-graphs in \cref{fig:lambdahotg}.
\end{example}

\begin{figure}
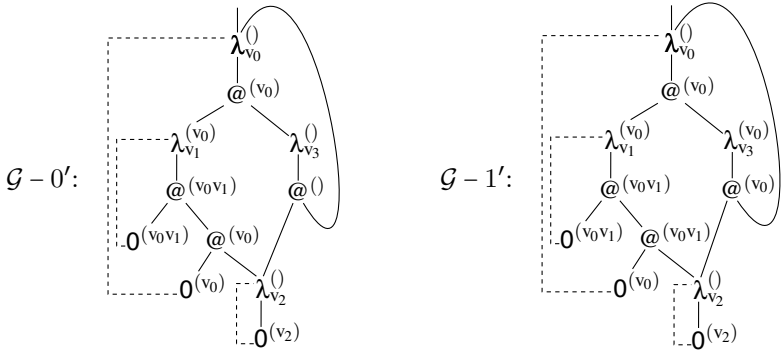

	\begin{hspread}
		\graph{\aaphotg-0'}{running-prefixed-ho-tg-eager} &
		\graph{\aaphotg-1'}{running-prefixed-ho-tg-non-eager}
	\end{hspread}
	\caption{The λ-ap-ho-term-graphs corresponding to the λ-ho-term-graphs in \cref{fig:lambdahotg}. The subscripts of abstraction vertices indicate their names. The super-scripts of vertices indicate their abstraction-prefixes.}
	\label{fig:lambdaaphotg}
\end{figure}

The following lemma states some basic properties of the scope function in λ-ap-ho-term-graphs.

\begin{lemma}\label{lem:aphotgs}
	Let \m{i ∈ \set{0,1}} and let \m{\aaphotg = \tuple{\V,\lab{},\args{},\r,\abspre{}}} be a λ-ap-ho-term-graph over \sigTGi{i}. Then the following statements hold:
	\begin{enumerate}[(i)]
		\item\label{lem:aphotg:i} Suppose that, for some \m{\v,\w ∈ \V}, \v occurs in \abspre{\w}. Then \m{\v ∈ \vertsof{λ}}, occurs in \abspre{\w} only once, and every access path of \w passes through \v, but does not end there, and thus \m{\w ≠ \v}. Furthermore it holds that \m{\abspre{\v} \prefixcon \v ≤ \abspre{\w}}. And conversely, if \m{\abspre{\w} = \apre \prefixcon \v \prefixcon \bpre} for some \m{\apre,\bpre ∈ \V^*}, then \m{\abspre{\v} = \apre}.
		\item\label{lem:aphotg:ii} Vertices in abstraction prefixes are abstraction vertices, and hence \abspre{} is of the form \m{\abspre{} : \V→(\vertsof{λ})^*}.
		\item\label{lem:aphotg:iii} For all \m{\v ∈ \vertsof{λ}} it holds that \m{\v∉\abspre{\v}}.
		\item\label{lem:aphotg:iv} While access paths might end in vertices in \vertsof{\0}, they pass only through vertices in \m{\vertsof{λ} ∪ \vertsof{@}}.
	\end{enumerate}
\end{lemma}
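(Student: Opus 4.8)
The plan is to prove all four statements at once by first establishing a single structural invariant about abstraction prefixes along access paths, and then reading off \cref{lem:aphotg:i}--\cref{lem:aphotg:iv} as consequences. The invariant I would prove is the following claim, by induction on the length $n$ of an access path $\apath : \r = w_0 \tgsucc w_1 \tgsucc \dots \tgsucc w_n$ in the term graph underlying $\aaphotg$: the word $\abspre{w_n}$ has the form $w_{j_1} w_{j_2} \dots w_{j_m}$ with $j_1 < j_2 < \dots < j_m < n$ (so in particular the entries are pairwise distinct vertices visited strictly before $w_n$), each $w_{j_l} \in \vertsof{λ}$, and $\abspre{w_{j_l}} = w_{j_1} \dots w_{j_{l-1}}$ for every $l \in \set{1,\dots,m}$; moreover, in any access path only the final vertex can carry label $\0$.

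The base case $n = 0$ is immediate since $\abspre{\r} = \emptyword$ by \cref{ap-function:root}. For the induction step, consider an access path $w_0 \dots w_{n+1}$; its truncation $w_0 \dots w_n$ is again an access path, to which the induction hypothesis applies. If $w_n \in \vertsof{λ}$, then by \cref{ap-function:abs} the word $\abspre{w_{n+1}}$ is a prefix of $\abspre{w_n}\,w_n$, and if $w_n \in \vertsof{@}$, then by \cref{ap-function:app} it is a prefix of $\abspre{w_n}$; in both cases the required form is preserved, since a prefix of a word of the right shape again has the right shape, and the abstraction prefixes of the retained entries are unchanged. The remaining case $w_n \in \vertsof{\0}$ is the crux: for $i = 0$ such a vertex has arity zero, hence no successor, so the path cannot be extended; for $i = 1$, by \cref{ap-function:01} the unique successor $w_{n+1}$ is a $λ$-vertex with $\abspre{w_{n+1}}\,w_{n+1} = \abspre{w_n}$, and since $\abspre{w_n} \neq \emptyword$ (by \cref{ap-function:00}, or directly from \cref{ap-function:01}), the induction hypothesis forces $w_{n+1} = w_{j_m}$, a vertex already visited on $\apath$ --- contradicting that an access path visits no vertex twice. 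Hence a $\0$-vertex occurs only as the last vertex of an access path, which is exactly statement \cref{lem:aphotg:iv} (the signature is $\set{@,λ,\0}$).

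With the invariant established, the rest is routine bookkeeping. Every vertex of a root-connected term graph (\cref{def:root-connected}) has an access path; so for \cref{lem:aphotg:ii}, if $v$ occurs in $\abspre{w}$ then by the invariant $v$ is one of the $w_{j_l}$, hence $v \in \vertsof{λ}$, and therefore $\abspre{}$ has codomain $(\vertsof{λ})^*$. For \cref{lem:aphotg:i}, fix any access path of $w$: by the invariant $\abspre{w} = w_{j_1}\dots w_{j_m}$, with the $w_{j_l}$ pairwise distinct vertices of the path of index strictly below that of $w$, so $v = w_{j_l}$ occurs exactly once, the path passes through $v$ and continues past it (hence $w \neq v$), and since $\abspre{}$ is a function, the same conclusion holds for \emph{every} access path of $w$; the relation $\abspre{v}\,v \leq \abspre{w}$ and the converse claim (if $\abspre{w} = \apre \prefixcon v \prefixcon \bpre$ then $\abspre{v} = \apre$) follow directly from $\abspre{w_{j_l}} = w_{j_1}\dots w_{j_{l-1}}$. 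Finally \cref{lem:aphotg:iii} is immediate from \cref{lem:aphotg:i} applied with $w := v$, which would force $v \neq v$.

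I expect the main obstacle to be precisely the $\0$-vertex case of the induction: one must recognise that \cref{ap-function:01} pins the successor of a variable vertex down to an abstraction vertex that has already been visited, so that the no-repetition property of access paths is what actually rules out continuing the path --- and hence what yields \cref{lem:aphotg:iv}. Everything else reduces to manipulating prefixes of words and invoking the correctness conditions of \cref{def:abspre:function:siglambdai}.
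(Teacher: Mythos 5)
Your proof is correct and follows essentially the same route as the paper's: the paper likewise tracks how the abstraction prefix evolves along an access path (growing only at $λ$-vertices, by appending that vertex on the right) to get \cref{lem:aphotg:i}--\cref{lem:aphotg:iii}, and proves \cref{lem:aphotg:iv} by observing that condition \cref{ap-function:01} forces the successor of a $\0$-vertex to be an already-visited abstraction vertex, contradicting the no-repetition property of access paths. The only difference is presentational — you package the path-walking argument as an explicit induction invariant, whereas the paper states it informally and derives \cref{lem:aphotg:iv} from item \cref{lem:aphotg:i}.
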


\begin{proof}
	Let \m{i ∈ \set{0,1}} and let \m{\aaphotg = \tuple{\V,\lab{},\args{},\r,\abspre{}}} be a λ-ap-ho-term-graph over \sigTGi{i}.
	\par For showing \cref{lem:aphotg:i}, let \m{\v,\w ∈ \V} be such that \v occurs in \abspre{\w}. Suppose further that \apath is an access path of \w. Note that when walking through \apath the abstraction prefix starts out empty (due to \cref{ap-function:root}), and is expanded only in steps from vertices \m{\v' ∈ \vertsof{λ}} (due to \cref{ap-function:abs}, \cref{ap-function:app}, and \cref{ap-function:01}) in which just \m{\v'} is added to the prefix on the right (due to \cref{ap-function:abs}). Since \v occurs in \abspre{\w}, it follows that \m{\v ∈ \vertsof{λ}}, that \v must be visited on \apath, and that \apath continues after the visit to \v. That \apath is an access path also entails that \v is not visited again on \apath, hence that \m{\w ≠ \v} and that \v occurs only once in \abspre{\w}, and that \m{\abspre{\v} \prefixcon \v}, the abstraction prefix of the successor vertex of \v on \apath, is a prefix of the abstraction prefix of every vertex that is visited on \apath after \v.
	\par Statements \cref{lem:aphotg:ii} and \cref{lem:aphotg:iii} follow directly from statement \cref{lem:aphotg:i}.
	\par For showing \cref{lem:aphotg:iv}, consider an access path \m{\apath : \r = \w_0 \tgsucc \dots \tgsucc \w_n} that leads to a vertex \m{\w_n ∈ \vertsof{\0}}. If \m{i=0}, then there is no path that extends \apath properly beyond \m{\w_n}. So suppose \m{i=1}, and let \m{\w_{n+1} ∈ \V} be such that \m{\w_n \tgsucc_0 \w_{n+1}}. Then \cref{ap-function:01} implies that \m{\abspre{\w_n} = \abspre{\w_{n+1}} \prefixcon \w_{n+1}}, from which it follows by \cref{lem:aphotg:i} that \m{\w_{n+1}} is visited already on \apath. Hence \apath does not extend to a longer path that is again an access path.
\end{proof}

In the following, let \m{i ∈ \set{0,1}} and let \m{\aaphotg_1} and \m{\aaphotg_2} be λ-ap-ho-term-graphs over \sigTGi{i} with \m{\aaphotg_k = \tuple{\V_k,\labi{k}{},\argsi{k}{},\r_k,\absprei{k}{}}} for \m{k ∈ \set{1,2}}.

\begin{definition}[homomorphism]\label{def:homom:aplambdahotg}
	A \emph{homomorphism} from \m{\aaphotg_1} to \m{\aaphotg_2} is a morphism from the structure \m{\tuple{\V_1,\labi{1}{},\argsi{1}{},\r_1,\absprei{1}{}}} to the structure \m{\tuple{\V_2,\labi{2}{},\argsi{2}{},\r_2,\absprei{2}{}}}, i.e.\ a function \m{h : \V_1 → \V_2} such that \h is a homomorphism from \m{\altg_1} to \m{\altg_2}, the term graphs underlying \m{\aaphotg_1} and \m{\aaphotg_2} respectively, and additionally, for all \m{v ∈ \V_1} \[h^*(\absprei{1}{v}) = \absprei{2}{h(v)}\] where \m{{h^*}} is the homomorphic extension of \h to words over \m{\V_1}.
	\par We then write \m{\aaphotg_1 \funbisim_{\h} \aaphotg_2}, or \m{\aaphotg_2 \convfunbisim_{\h} \aaphotg_1}. And we write \m{\aaphotg_1 \funbisim \aaphotg_2}, or for that matter \m{\aaphotg_2 \convfunbisim \aaphotg_1}, if there is a homomorphism from \m{\aaphotg_1} to \m{\aaphotg_2}.
\end{definition}

\begin{para}[isomorphism, bisimulation]
	Analogous to \cref{def:lambdahotg:iso} and \cref{def:lambdahotg:bisim}.
	We denote by \classaphotgsisoi{0} and \classaphotgsisoi{1} the isomorphism equivalence classes of λ-ho-term-graphs over \sigTGi{0} and \sigTGi{1}, respectively.
\end{para}



\begin{para}[relating λ-ho-term-graphs and λ-ap-ho-term-graphs]
	The following proposition defines mappings between λ-ho-term-graphs and λ-ap-ho-term-graphs by which we establish a bijective correspondence between the two classes. For both directions the underlying λ-term-graph remains unchanged. \lhotgstoaphotgsi{i}{} derives an abstraction-prefix function \abspre{} from a scope function by assigning to each vertex a word of its binders in the correct nesting order. \aphotgstolhotgsi{i}{} defines its scope function \scope{} by assigning to each λ-vertex \v the set of vertices that have \v in their prefix (along with \v since a vertex never has itself in its abstraction prefix).
\end{para}

\begin{proposition}[relating λ-ho-term-graphs and λ-ap-ho-term-graphs]\label{prop:mappings:lhotgs:aphotgs}
	For each \m{i ∈ \set{0,1}}, the mappings \lhotgstoaphotgsi{i}{} and \aphotgstolhotgsi{i}{} are well-defined between the class of λ-ho-term-graphs over \sigTGi{i} and the class of λ-ap-ho-term-graphs over \sigTGi{i}:
	\begin{align*}
		\lhotgstoaphotgsi{i}{} :~ & \classlhotgsi{i} → \classaphotgsi{i} \\
		& \ahotg = \tuple{\V,\lab{},\args{},\r,\scope{}} ↦ \lhotgstoaphotgsi{i}{\ahotg} := \tuple{\V,\lab{},\args{},\r,\abspre{}} \\
		& \hspace*{2ex}\text{where }
			\begin{aligned}
				&\abspre{} : \V→\V^*\\&v ↦ \v_0 \dots \v_n
			\end{aligned}
			\text{ ~with~ } \begin{aligned} & \binders{v} \setminus \set{v} = \set{\v_0,\dots,\v_n} \text{ and} \\& \scope{\v_n} \subset \scope{\v_{n-1}} \dots \subset \scope{\v_0} \end{aligned}
		\\[1em]
		\aphotgstolhotgsi{i}{} :~ & \classaphotgsi{i} → \classlhotgsi{i} \\
		& \ahotg = \tuple{\V,\lab{},\args{},\r,\abspre{}} ↦ \lhotgstoaphotgsi{i}{\ahotg} := \tuple{\V,\lab{},\args{},\r,\scope{}} \\
		& \hspace*{2ex}\text{where } \begin{aligned} \scope{} :~& \vertsof{λ}→\powerset(\V) \\& \v ↦ \setcompr{\w ∈ \V}{\v\text{ occurs in }\abspre{\w}} ∪ \set{\v} \end{aligned}
	\end{align*}
	On the isomorphism equivalence classes \classlhotgsisoi{i} of \classlhotgsi{i} , and \classaphotgsisoi{i} of \classaphotgsi{i}, the functions \lhotgstoaphotgsi{i}{} and \aphotgstolhotgsi{i}{} induce the functions \m{\slhotgsisotoaphotgsisoi{i} : \classlhotgsisoi{i} → \classaphotgsisoi{i}}, \m{\eqcl{\ahotg}{\iso} ↦ \eqcl{\lhotgstoaphotgsi{i}{\ahotg}}{\iso}} and \m{\saphotgsisotolhotgsisoi{i} : \classaphotgsisoi{i} → \classlhotgsisoi{i}}, \m{\eqcl{\aaphotg}{\iso} ↦ \eqcl{\aphotgstolhotgsi{i}{\aaphotg}}{\iso}}.
\end{proposition}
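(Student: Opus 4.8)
The plan is to verify four things for each $i \in \set{0,1}$: that $\lhotgstoaphotgsi{i}{}$ sends every λ-ho-term-graph over $\sigTGi{i}$ to a genuine λ-ap-ho-term-graph over $\sigTGi{i}$; that $\aphotgstolhotgsi{i}{}$ sends every λ-ap-ho-term-graph over $\sigTGi{i}$ to a genuine λ-ho-term-graph over $\sigTGi{i}$; and that both maps preserve isomorphism, so that the induced maps $\slhotgsisotoaphotgsisoi{i}$ and $\saphotgsisotolhotgsisoi{i}$ on isomorphism equivalence classes are well-defined. Since both constructions keep the underlying term graph $\tuple{\V,\lab{},\args{},\r}$ unchanged, everything reduces to (a) checking that the derived scope-/abstraction-prefix data really is a function of the stated shape, and (b) checking the correctness clauses of \cref{lhotg:scope-function}, respectively of \cref{def:abspre:function:siglambdai}.

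First I would treat $\lhotgstoaphotgsi{i}{}$. The word $\abspre{v}$ is well-defined: by \cref{lem:lhotg}~\cref{lem:lhotg:iii} the proper binders $\binders{v} \setminus \set{v}$ of $v$ are linearly ordered by strict scope inclusion, so there is a unique way to list them from outermost to innermost, and \cref{lem:lhotg}~\cref{lem:lhotg:i} shows they are all λ-vertices; hence $\abspre{}$ has type $\V \to (\vertsof{λ})^*$ as required. Then I would check the clauses one by one: \cref{ap-function:root} is immediate from \cref{lambdahotg:root}; \cref{ap-function:abs} and \cref{ap-function:app} follow by comparing the binder sets of $v$ and of a successor $w$ using \cref{lambdahotg:closed} and \cref{lambdahotg:self} (a successor's binder captures $v$, and if $v$ is itself a λ-vertex it may additionally enter its own scope), together with the nesting property \cref{lem:lhotg}~\cref{lem:lhotg:iii} that fixes the order; and \cref{ap-function:00}, \cref{ap-function:01} are translations of \cref{lambdahotg:scope0}, \cref{lambdahotg:scope1}, again using \cref{lem:lhotg}~\cref{lem:lhotg:iii} to read off that the innermost binder of a $\0$-vertex is exactly the abstraction vertex it points to. For $i = 0$ clause \cref{ap-function:01} is vacuous, for $i = 1$ clause \cref{ap-function:00} is subsumed, mirroring the remark in \cref{def:abspre:function:siglambdai}.

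Next I would treat $\aphotgstolhotgsi{i}{}$. Here $\scope{}$ is given by an explicit set-builder formula, so well-definedness is immediate, and only the clauses of \cref{lhotg:scope-function} need checking. Clause \cref{lambdahotg:root} follows from \cref{ap-function:root} ($\r$ has empty prefix, so lies in no $\scopemin{x}$), \cref{lambdahotg:self} is built into the definition via the $\cup \set{v}$, and \cref{lambdahotg:nest}, \cref{lambdahotg:closed} follow from \cref{lem:aphotgs}~\cref{lem:aphotg:i}: if $x$ occurs in $\abspre{w}$ then $\abspre{x}\,x \le \abspre{w}$, which yields both the nesting of scopes and closure of a scope under $\tgsucc$-predecessors, since along one edge the prefix can only stay the same, lose its last letter, or gain one letter equal to the source vertex (clauses \cref{ap-function:abs}, \cref{ap-function:app}, \cref{ap-function:01}). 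The remaining clauses \cref{lambdahotg:scope0} (for $i=0$) and \cref{lambdahotg:scope1} (for $i=1$) come from \cref{ap-function:00}, \cref{ap-function:01} together with \cref{lem:aphotgs}~\cref{lem:aphotg:i}, which locates the successor of a $\0$-vertex as the last letter of its prefix.

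Finally, for the induced maps on isomorphism classes I would note that a graph isomorphism $i$ of the underlying term graphs that transports one scope function to another, in the sense of \cref{lambdahotg:homo:scope}, automatically transports the derived abstraction-prefix functions, because $\abspre{}$ is built from $\scope{}$ purely in terms of scope-set inclusion and the $\tgsucc$-structure, both preserved by $i$; and symmetrically an isomorphism respecting abstraction-prefix functions respects the derived scope functions by the set-builder definition. Hence $\lhotgstoaphotgsi{i}{}$ and $\aphotgstolhotgsi{i}{}$ descend to the claimed maps $\slhotgsisotoaphotgsisoi{i}$, $\saphotgsisotolhotgsisoi{i}$. I expect the only genuinely delicate point to be the verification of \cref{ap-function:abs}/\cref{ap-function:app} and of \cref{lambdahotg:closed}, where one must track precisely how the abstraction prefix (respectively the binder set) changes across a single edge and keep the $i=0$ and $i=1$ cases apart; the rest is bookkeeping on top of \cref{lem:lhotg} and \cref{lem:aphotgs}.
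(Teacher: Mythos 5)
Your proposal is correct; the paper in fact states this proposition without giving a proof, and what you supply is exactly the direct verification one would expect: the underlying term graph is untouched, so everything reduces to checking the correctness clauses of \cref{lhotg:scope-function} and \cref{def:abspre:function:siglambdai} against each other using \cref{lem:lhotg} and \cref{lem:aphotgs}. Two small glosses worth tightening when you write it out: for clause \cref{ap-function:01} you need the \emph{biconditional} in \cref{lambdahotg:scope1} together with \cref{lambdahotg:nest} (not just \cref{lem:lhotg}~\cref{lem:lhotg:iii}) to see that the target of a variable backlink is the \emph{innermost} binder; and your description of how the prefix changes along a single edge (``stay the same, lose its last letter, or gain one letter'') is the stack discipline of λ-term-graphs (\cref{lem:ltg}~\cref{lem:ltg:iv}), whereas in λ-ap-ho-term-graphs the conditions \cref{ap-function:abs} and \cref{ap-function:app} are only prefix inequalities and may drop several letters at once — though the inequality is all your argument for \cref{lambdahotg:closed} actually uses, so the conclusion stands.
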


\begin{theorem}[correspondence between λ-ho-term-graphs and λ-ap-ho-term-graphs]\label{thm:corr:lhotgs:aphotgs}
	For each \m{i ∈ \set{0,1}} it holds that the mappings \lhotgstoaphotgsi{i}{} and \aphotgstolhotgsi{i}{} as defined in \cref{prop:mappings:lhotgs:aphotgs} are each other's inverse; thus they define a bijective correspondence between the class of λ-ho-term-graphs over \sigTGi{i} and the class of λ-ap-ho-term-graphs over \sigTGi{i}. Furthermore, they preserve and reflect the sharing orders on \classlhotgsi{i} and on \classaphotgsi{i}:
	\begin{align*}
		& ∀ \ahotg_1,\ahotg_2 ∈ \classlhotgsi{i} ~ &
		\ahotg_1 \funbisim \ahotg_2
		~ &\Longleftrightarrow ~
		\lhotgstoaphotgsi{i}{\ahotg_1} \funbisim \lhotgstoaphotgsi{i}{\ahotg_1}
		\\
		& ∀ \aaphotg_1,\aaphotg_2 ∈ \classaphotgsi{i} ~ &
		\aphotgstolhotgsi{i}{\aaphotg_1} \funbisim \aphotgstolhotgsi{i}{\aaphotg_1}
		~ &\Longleftrightarrow ~
		\ahotg_1 \funbisim \ahotg_2
	\end{align*}
\end{theorem}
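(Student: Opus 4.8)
The plan is to prove the theorem's two assertions in turn, exploiting throughout that neither $\lhotgstoaphotgsi{i}{}$ nor $\aphotgstolhotgsi{i}{}$ changes the underlying $\sigTGi{i}$-term-graph: only the extra structure (a scope function on one side, an abstraction-prefix function on the other) is recomputed. The problem therefore decouples into (a) checking that scope functions and abstraction-prefix functions round-trip, and (b) checking that, once a vertex map $h$ is known to be a homomorphism of the common underlying term graphs, $h$ respects the scope functions of two given $\ahotg_k$ exactly when it respects the abstraction-prefix functions of the images $\lhotgstoaphotgsi{i}{\ahotg_k}$.

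For (a) both round-trips are governed by one dictionary between scopes and prefixes. If $\scope{}$ is a correct scope function, then $\lhotgstoaphotgsi{i}{}$ assigns to a vertex $w$ the word listing $\binders{w}\setminus\set{w}=\setcompr{x\in\vertsof{λ}}{w\in\scope{x},\ x\ne w}$, which by \cref{lem:lhotg}~\cref{lem:lhotg:iii} is strictly totally ordered by $\scope{\cdot}$-inclusion, so this word is canonical; hence for $v\in\vertsof{λ}$, $v$ occurs in it iff $v\ne w$ and $w\in\scope{v}$. Combined with \cref{lambdahotg:self} this gives at once that $\aphotgstolhotgsi{i}{}$ recovers the original $\scope{}$, i.e.\ $\aphotgstolhotgsi{i}{}$ after $\lhotgstoaphotgsi{i}{}$ is the identity. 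For the reverse composite, starting from a correct $\abspre{}$, one verifies that the binder-word recomputed from the derived scopes reproduces $\abspre{w}$ entry by entry: that the correct \emph{set} of λ-vertices comes back is immediate, and that they come back in the right order follows from \cref{lem:aphotgs}~\cref{lem:aphotg:i} — if $\abspre{w}=\apre\,v\,\bpre$ then $\abspre{v}=\apre$ — which forces the chain $\scope{v_0}\supset\dots\supset\scope{v_n}$ for $\abspre{w}=v_0\dots v_n$; \cref{lem:aphotgs}~\cref{lem:aphotg:iii} handles the exclusion of self-occurrences.

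For (b), the core of the proof, fix $\ahotg_1,\ahotg_2\in\classlhotgsi{i}$, set $\aaphotg_k=\lhotgstoaphotgsi{i}{\ahotg_k}$, and let $h\colon\V_1\to\V_2$ be a homomorphism of the underlying term graphs, hence surjective with the path-lifting property of \cref{prop:hom:tgs:paths}. I would prove $\bar{h}\circ\scopei{1}{}=\scopei{2}{}\circ h$ on $\vertsiof{1}{λ}$ iff $h^{*}\circ\absprei{1}{}=\absprei{2}{}\circ h$ on $\V_1$; with (a) this yields the sharing-order claim for $\lhotgstoaphotgsi{i}{}$, and reading the same equivalence backwards gives it for $\aphotgstolhotgsi{i}{}$. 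The forward inclusion of each implication is routine via the dictionary of (a): if $v$ occurs in $\absprei{1}{w}$ then $h(v)$ occurs in $h^{*}(\absprei{1}{w})=\absprei{2}{h(w)}$, so $h(w)\in\scopei{2}{h(v)}$, and conversely. The substantial part is the other inclusion, e.g.\ proving that prefix-compatibility of $h$ forces $\scopei{2}{h(v)}\subseteq\bar{h}(\scopei{1}{v})$. Given $u\in\scopei{2}{h(v)}$ with $u\ne h(v)$, one picks an access path of $u$, which must pass through $h(v)$ (by \cref{lem:aphotgs}~\cref{lem:aphotg:i}) and, by \cref{lambdahotg:closed} applied in $\aphotgstolhotgsi{i}{\aaphotg_2}$, keeps $h(v)$ in scope from there on to $u$; lifting its $h(v)$-to-$u$ portion along $h$, starting the lift at $v$ (\cref{prop:hom:tgs:paths}~\cref{prop:hom:tgs:paths:ii}), produces a path in $\aaphotg_1$ from $v$ to some preimage $w_1$ of $u$. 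One then shows, by induction along this path, that $v$ remains in the abstraction prefix throughout, so $v\in\absprei{1}{w_1}$ and $w_1\in\scopei{1}{v}$. The induction step crucially uses that each vertex occurs at most once in any abstraction prefix (\cref{lem:aphotgs}~\cref{lem:aphotg:i}) — hence $h$ is injective on the entries of any single prefix — together with \cref{lem:aphotgs}~\cref{lem:aphotg:iii} ($x\notin\absprei{}{x}$), which rules out a spurious preimage of $h(v)$ sitting in the prefix in place of $v$. The scope-compatibility$\Rightarrow$prefix-compatibility direction runs dually, via \cref{lem:lhotg}.

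The remaining bookkeeping is light: bijectivity of $h$ is untouched by the recomputation of the extra structure, so the two mappings send isomorphisms to isomorphisms (matching the induced maps on isomorphism classes from \cref{prop:mappings:lhotgs:aphotgs}), and anti-symmetry of $\funbisim$ (\cref{prop:funbisim:anti:symmetric}) together with the proven sharing-order equivalence gives reflection of $\iso$ as well. I expect the main obstacle to be exactly the inductive propagation step in (b): that a homomorphism compatible with one of the two extra structures is compatible with the other cannot be seen at the level of binder-\emph{sets}; one must track how $h$ acts on the individual binders encountered along a lifted access path, and the structural lemmas \cref{lem:lhotg} and \cref{lem:aphotgs} — in particular uniqueness of prefix entries and the pinning-down of prefix positions furnished by \cref{lem:aphotgs}~\cref{lem:aphotg:i} — are what make the induction go through.
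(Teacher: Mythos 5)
Your round-trip argument (that \lhotgstoaphotgsi{i}{} and \aphotgstolhotgsi{i}{} are mutually inverse) is fine, and so is the implication ``prefix-compatible \m{\Rightarrow} scope-compatible'', where your path-lifting via \cref{prop:hom:tgs:paths} and the propagation of the binder \m{v} through the lifted prefix is the right idea. The gap is in the converse implication, which you dismiss as running ``dually, via \cref{lem:lhotg}''. It is not dual: the homomorphism condition \cref{lambdahotg:homo:scope} constrains only the \emph{image} \bar{h}(\scopei{1}{v}) of a scope as a set, and therefore degenerates when \h identifies vertices, whereas the prefix condition forces \m{\length{\absprei{1}{v}} = \length{\absprei{2}{h(v)}}}. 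Already your ``routine'' inclusion needs that \m{x ∈ \binders{w}\setminus\set{w}} implies \m{h(x) ≠ h(w)} (otherwise \m{h(x)} need not occur in \absprei{2}{h(w)} at all, by \cref{lem:aphotgs}~\cref{lem:aphotg:iii}), and it needs \h to be injective on the binder chain of \w; neither follows from \cref{lambdahotg:homo:scope}.

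Under the stated definitions this is not merely an omitted argument. Take \m{\ahotg_1} with two abstraction vertices \m{x,w}, root \x, edges \m{x \tgsucc w} and \m{w \tgsucc x}, and \m{\scopei{1}{x} = \set{x,w}}, \m{\scopei{1}{w} = \set{w}}: all of \cref{lambdahotg:root}--\cref{lambdahotg:scope1} hold (there are no variable vertices, and the only nontrivial instances of \cref{lambdahotg:nest} and \cref{lambdahotg:closed} concern \m{w ∈ \scopei{1}{x}\setminus\set{x}} and are satisfied). Take \m{\ahotg_2} with a single abstraction vertex \y, \m{y \tgsucc y}, \m{\scopei{2}{y} = \set{y}}, and let \h be the constant map. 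Then \h is a homomorphism of the underlying term graphs and \m{\bar{h}(\scopei{1}{x}) = \bar{h}(\scopei{1}{w}) = \set{y} = \scopei{2}{y}}, so \m{\ahotg_1 \funbisim_{\h} \ahotg_2}. But \lhotgstoaphotgsi{i}{\ahotg_1} has \m{\absprei{1}{w} = x} while \lhotgstoaphotgsi{i}{\ahotg_2} has \m{\absprei{2}{y} = \emptyword}, so \m{h^*(\absprei{1}{w}) ≠ \absprei{2}{h(w)}}, and no other vertex map exists. Hence the direction you leave to duality fails for this pair. Either some additional argument must exclude such collapses (I do not see one issuing from \cref{lem:lhotg}), or the homomorphism notion of \cref{lambdahotg:homo} has to be read more strongly (e.g.\ as \m{\scopei{1}{v} = h^{-1}(\scopei{2}{h(v)})}, which this \h violates); in either case this direction is where the real work lies and cannot be waved through.
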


\begin{example}\label{ex:corr:lhotgs:aphotgs}
The λ-ho-term-graphs in \cref{fig:lambdahotg} correspond to the λ-ap-ho-term-graphs in \cref{fig:lambdaaphotg}
via the mappings \lhotgstoaphotgsi{i}{} and \aphotgstolhotgsi{i}{} as follows:
	\begin{hspread}
		\m{\lhotgstoaphotgsi{i}{\ahotg_0} = \aaphotg_0'}
		&
		\m{\lhotgstoaphotgsi{i}{\ahotg_1} = \aaphotg_1'}
		&
		\m{\aphotgstolhotgsi{i}{\aaphotg_0} = \ahotg_0'}
		&
		\m{\lhotgstoaphotgsi{i}{\aaphotg_1} = \ahotg_1'}
	\end{hspread}
\end{example}

\begin{para}[higher-order term graphs]
	Due to this correspondence will henceforth sometimes say `higher-order term graph' to refer to either λ-ho-term-graphs or λ-ap-ho-term-graphs.
\end{para}

\section{λ-Term-Graphs without Scope Delimitiers}

\begin{para}[Overview]
	In this section we examine (and dismiss) a naive approach to implement of functional bisimulation on higher-order term graphs as functional bisimulation on their underlying term graph, hoping that this application extends to the higher-order term graph without further ado. We demonstrate that this approach fails, concluding that a faithful first-order implementation of functional bisimulation must not neglect the scoping information that the higher-order term graphs carry.
\end{para}

\begin{para}[variable backlinks]
	For higher-order term graphs over the signature \sigTGi{0} (i.e.\ without variable backlinks) essential binding information is lost when looking only at the underlying term graph, to the extent that λ-terms cannot be unambiguously represented anymore. For instance the higher-order term graphs that represent the λ-terms \abs{xy}{\app{x}{y}} and \abs{xy}{\app{x}{x}} have the same underlying term graph.
	\par This is different for higher-order term graphs over \sigTGi{1}, because the abstraction vertex to which a variable vertex belongs is uniquely identified by the variable backlink. This is why in this section we only consider term graphs and higher-order term graphs over \sigTGi{1}.
\end{para}

\begin{definition}[λ-term-graphs over \sigTGi{1}]\label{def:classltgs}
	A term graph over \sigTGi{1} is called a \emph{λ-term-graph} over \sigTGi{1} if it admits a correct abstraction-prefix function (or equivalently a correct scope function). By \classltgsi{1} we denote the class of λ-term-graphs over \sigTGi{1}.
\end{definition}

\begin{definition}[functional bisimulation on the underlying term graph]
	Let \ahotg be a higher-order term graph over \sigTGi{i} for \m{i ∈ \set{0,1}} with underlying term graph \altg. And suppose that there is a homomorphism \h from \altg to \m{\atg'} for some term graph \m{\atg'} over \sigTGi{i}.
	\par We say that \h\ \emph{extends to a homomorphism on} \ahotg if there is a higher-order term graph \m{\ahotg'} over \sigTGi{i} which has \m{\altg'} as its underlying term graph and \h is a homomorphism from \ahotg to \m{\ahotg'}.
	\par We say that a class \aclass of higher-order term graphs is \emph{closed under functional bisimulation on the underlying term graphs} if for every \m{\ahotg ∈ \aclass} with underlying term graph \altg, every homomorphism \h on \altg extends to a homomorphism on \ahotg.
\end{definition}

\begin{proposition}\label{prop:no:extension:funbisim:siglambda1}
	Neither the class \classlhotgsi{1} of λ-ho-term-graphs nor the class \classaphotgsi{1} of λ-ap-ho-term-graphs is closed under functional bisimulation on the underlying term graphs.
\end{proposition}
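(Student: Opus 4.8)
The plan is to refute closure by a single infinite counterexample, built from the λ-term $M$ of \cref{ex:entangled}: it is regular, but not strongly regular, and it carries an infinite binding--capturing chain (see \cref{ex:bindcaptchain}). First I would let $H$ be the first-order term graph over $\sigTGi{1}$ obtained as the (infinite) syntax tree of $M$ with a variable-backlink edge from each variable-occurrence vertex to its binding λ-vertex, and equip it with the scope function that assigns to every λ-vertex its extended scope in the sense of \cref{def:scope:extscope}. Since $M$ is a genuine λ-term, its extended scopes are properly nested (\cref{prop:scope:extscope}) and every variable occurrence lies below its binder, so this scope function is correct; hence $H$ underlies a member of $\classlhotgsi{1}$, and by \cref{thm:corr:lhotgs:aphotgs} it also underlies the corresponding λ-ap-ho-term-graph, whose (correct) abstraction-prefix function $\abspre{}$ assigns to each vertex the word of those λ-vertices in whose extended scope it lies.

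Next I would exploit regularity of $M$: as a first-order term graph, $H$ is regular, so there is a functional bisimulation $H \funbisim_h H'$ onto a \emph{finite} term graph $H'$ over $\sigTGi{1}$ (its minimal first-order folding), and $h$ is necessarily non-injective, identifying infinitely many vertices. Suppose, for contradiction, that $h$ extends to a homomorphism on the λ-ap-ho-term-graph with underlying graph $H$: then $H'$ would have to carry the abstraction-prefix function $\abspre{}'$ determined by $h^*(\abspre{v}) = \abspre{}'(h(v))$ for all vertices $v$ of $H$. Since $h^*$ preserves word length, $|\abspre{v}|$ would be constant on each $h$-fibre, so the map $v \mapsto |\abspre{v}|$ would take at most $|H'| < \infty$ distinct values. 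But the infinite binding--capturing chain $p_0 \binds p_1 \capturedby p_2 \binds \cdots$ in $M$ makes these lengths unbounded: by \cref{prop:bindcaptchains}, applied to the suffixes $\langle p_{2j}, p_{2j+1}, \ldots \rangle$ of the chain (each of which is again a binding--capturing chain), the vertex $v_k$ at position $p_{2k}$ lies in the extended scopes of the λ-vertices at $p_0, p_2, \ldots, p_{2k}$, whence $|\abspre{v_k}| \geq k+1$. This contradiction shows that $h$ does not extend, so $\classaphotgsi{1}$ is not closed under functional bisimulation on the underlying term graphs; and since the bijection of \cref{thm:corr:lhotgs:aphotgs} preserves underlying term graphs and matches higher-order homomorphisms of the two kinds, the same $h$ fails to extend on the λ-ho-term-graph with underlying graph $H$, so $\classlhotgsi{1}$ is not closed either.

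The main obstacle I anticipate is the justification that $H$ — the syntax tree of $M$ \emph{with backlink edges} — is a regular first-order term graph. The familiar argument (finitely many distinct subtrees once $M$ is written with only two variable names) has to be recast, because in a term graph a sub-term-graph includes its backlink targets and hence, for a term such as $M$, typically the whole graph; so one should instead build a finite first-order term graph over $\sigTGi{1}$ directly and verify it unfolds to $H$, reading its cyclic structure off a finite description of $M$ such as the recursive program scheme used in \cref{ex:entangled-infinite-path}. A secondary, purely bookkeeping obstacle is checking all clauses of \cref{def:lambdahotg} and \cref{def:aplambdahotg} for the higher-order term graph over $H$; this is routine given \cref{prop:scope:extscope}.
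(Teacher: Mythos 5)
There is a genuine gap, and it is fatal to the counterexample you chose. Your whole argument rests on the claim that $H$ — the infinite syntax tree of the term $M$ from \cref{ex:entangled} equipped with variable backlinks — admits a functional bisimulation onto a \emph{finite} term graph $H'$. That claim is false, and the repair you sketch (``build a finite first-order term graph over \sigTGi{1} and verify it unfolds to $H$'') is impossible for the same reason. Write $L_0, L_1, L_2, \dots$ for the successive abstraction vertices of $H$, and $A_k$, $V_k$ for the application and variable vertices below $L_k$ (for $k \geq 1$), so that $L_0 \tgsucc_0 L_1$, $L_k \tgsucc_0 A_k$, $A_k \tgsucc_0 L_{k+1}$, $A_k \tgsucc_1 V_k$, and $V_k \tgsucc_0 L_{k-1}$ (the backlink). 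Then $L_0 \not\bisim L_k$ for $k \geq 1$, since their unique successors carry different labels; and $L_j \bisim L_k$ with $1 \leq j < k$ would force $A_j \bisim A_k$, hence $V_j \bisim V_k$, hence $L_{j-1} \bisim L_{k-1}$, and by descent $L_0 \bisim L_{k-j}$, a contradiction. So all abstraction vertices of $H$ — and consequently all vertices — are pairwise non-bisimilar. Since a homomorphism can only identify vertices rooting bisimilar sub-term-graphs, every homomorphic image of $H$ is infinite; worse, the only homomorphism defined on $H$ is (up to isomorphism) the identity, which extends trivially. Your higher-order term graph over $H$ therefore cannot witness non-closure at all: there is no $h$ from which to derive the intended contradiction, even though the length/prefix bookkeeping in your second paragraph would be correct if such an $h$ existed.

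The root of the problem is a conflation of regularity of $M$ as a λ-term (\cref{def:reg:streg}) — finiteness of the set of prefixed generated subterms under a scope-delimiting strategy that may use the \rulep{\del}-rule — with regularity of the first-order graph with variable backlinks. The backlink from $V_k$ records the exact binder $L_{k-1}$, which is precisely the information the \rulep{\del}-rule discards; and the fact that $M$ is regular but not strongly regular is exactly the statement that no finite scope-respecting graph represents it. The paper's own proof is far more economical: it exhibits two small \emph{finite} term graphs $\atg_1 \funbisim_{h} \atg_0$ over \sigTGi{1} (\cref{fig:siglambda1-not-closed}) such that $\atg_1$ underlies a λ-ho-term-graph $\ahotg_1$, the image $\atg_0$ admits exactly one correct scope function, and the resulting $\ahotg_0$ is not a homomorphic image of $\ahotg_1$ because condition \cref{lambdahotg:homo:scope} fails; the transfer between $\classlhotgsi{1}$ and $\classaphotgsi{1}$ via \cref{thm:corr:lhotgs:aphotgs} is the one part of your plan that matches. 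If you want a counterexample, you need an underlying term graph that genuinely folds — two bisimilar subgraphs sitting in incompatible scopes — not a bisimulation-rigid infinite tree.
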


\begin{proof}
	In view of \cref{thm:corr:lhotgs:aphotgs} it suffices to show the statement for one of the two classes, say \classlhotgsi{1}. We show by example that not every homomorphism on the term graph underlying a λ-ho-term-graph over \sigTGi{1} extends to a homomorphism on \classlhotgsi{1}.
	\par Consider the λ-ho-term-graphs \m{\ahotg_1} and \m{\ahotg_0} in \cref{fig:siglambda1-not-closed} and their respective underlying term graphs \m{\atg_1} and \m{\atg_0}.
	\begin{figure}
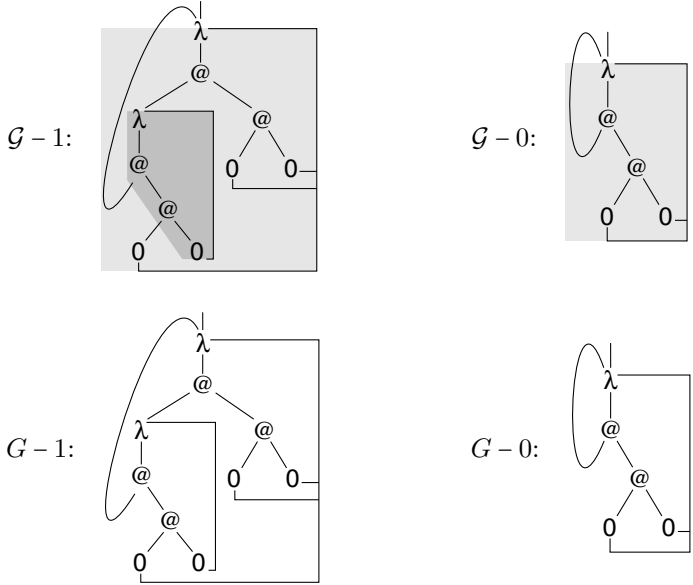

		\begin{hspread}
			\graph{\ahotg-1}{no-extension-funbisim-siglambda1-g1}
			\hsep
			\graph{\ahotg-0}{no-extension-funbisim-siglambda1-g0}
			\\
			\graph{\atg-1}{no-extension-funbisim-siglambda1-g1-tg}
			\hsep
			\graph{\atg-0}{no-extension-funbisim-siglambda1-g0-tg}
		\end{hspread}
	\caption{There is no homomorphism between the λ-ho-term-graphs \m{\ahotg_1} and \m{\ahotg_0} because of the scoping condition in \cref{lambdahotg:homo}~\cref{lambdahotg:homo:scope}. There is a homomorphism between their underlying term graphs \m{\altg_1} and \m{\altg_0} but it does not extend to \m{\ahotg_1}.}
		\label{fig:siglambda1-not-closed}
	\end{figure}
	There is an homomorphism \h from \m{\atg_1} to \m{\atg_0}. However, \h does not extend to a homomorphism on \m{\ahotg_1}, since there is no homomorphism from \m{\ahotg_1} to \m{\ahotg_0}, and \m{\ahotg_0} is the only λ-ho-term-graph with \m{\atg_0} as its underlying term graph (\m{\atg_0} admits only one scope function).
\end{proof}

The next proposition is merely a reformulation of \cref{prop:no:extension:funbisim:siglambda1}.

\begin{proposition}\label{prop:forgetful}
	\newfunction\underlying{UL}
	The scope-forgetful mapping \underlying{}{}, that maps higher-order term graphs to their underlying term graphs, preserves but does not reflect the sharing orders on the classes \classaphotgsi{1} and \classlhotgsi{1}. In particular for \m{\aclass ∈ \set{\classaphotgsi{1}, \classlhotgsi{1}}} and \m{\underlying{}{} : \aclass → \classtgssiglambdai{1}} it holds:
	\begin{align*}
		∀ \aaphotg_1,\aaphotg_2 ∈ \aclass ~ &&
		\aaphotg_1 \funbisim \aaphotg_2
		& ~⇒~
		\underlying{\aaphotg_1} \funbisim \underlying{\aaphotg_2}
		\\
		\neg \forall \aaphotg_1,\aaphotg_2 ∈ \aclass ~ &&
		\underlying{\aaphotg_1} \funbisim \underlying{\aaphotg_2}
		& ~⇒~
		\aaphotg_1 \funbisim \aaphotg_2
	\end{align*}
\end{proposition}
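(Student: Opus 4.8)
\providecommand{\underlying}[1]{\m{{\fun{UL}}\protect\ifempty{#1}{}{\left(#1\right)}}}

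The plan is to establish the two halves of the proposition — preservation and non-reflection of the sharing order under $\underlying{}{}$ — separately; both are short given the material already in place. For the preservation statement I would unfold the definition of homomorphism between higher-order term graphs. By \cref{def:homom:aplambdahotg} (and by \cref{lambdahotg:homo} in the scope-function variant), a homomorphism $h$ from $\aaphotg_1$ to $\aaphotg_2$ is, by definition, a homomorphism $h$ from the underlying term graph $\altg_1$ to the underlying term graph $\altg_2$ subject to one additional compatibility condition, namely $h^*(\absprei{1}{v}) = \absprei{2}{h(v)}$ for all vertices $v$ (respectively \cref{lambdahotg:homo:scope}). Applying $\underlying{}{}$ simply forgets the abstraction-prefix (scope) function and hence drops this extra condition, so the very same function $h$ witnesses $\underlying{\aaphotg_1} \funbisim \underlying{\aaphotg_2}$. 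The argument makes no use of which of the two classes we are in, so it proves the preservation claim uniformly for $\aclass \in \set{\classaphotgsi{1}, \classlhotgsi{1}}$.

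For the non-reflection statement I would reuse the witnesses from the proof of \cref{prop:no:extension:funbisim:siglambda1}: the λ-ho-term-graphs $\ahotg_1$ and $\ahotg_0$ over $\sigTGi{1}$ from \cref{fig:siglambda1-not-closed}, with underlying term graphs $\atg_1$ and $\atg_0$. In that proof it is shown that there is a homomorphism $\atg_1 \funbisim \atg_0$, while there is no homomorphism from $\ahotg_1$ to $\ahotg_0$ — the obstruction being that $\atg_0$ admits exactly one correct scope function, so that $\ahotg_0$ is the unique λ-ho-term-graph over $\sigTGi{1}$ with underlying term graph $\atg_0$, and the scoping requirement \cref{lambdahotg:homo:scope} cannot be met. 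Since $\underlying{\ahotg_1} = \atg_1$ and $\underlying{\ahotg_0} = \atg_0$, this gives simultaneously $\underlying{\ahotg_1} \funbisim \underlying{\ahotg_0}$ and the non-existence of a homomorphism from $\ahotg_1$ to $\ahotg_0$, contradicting the universally quantified reflection statement for $\aclass = \classlhotgsi{1}$. For $\aclass = \classaphotgsi{1}$ I would transport the counterexample along the correspondence of \cref{thm:corr:lhotgs:aphotgs}, exactly as done in the proof of \cref{prop:no:extension:funbisim:siglambda1}: set $\aaphotg_j := \lhotgstoaphotgsi{1}{\ahotg_j}$ for $j \in \set{0,1}$; these λ-ap-ho-term-graphs have the same underlying term graphs $\atg_1$ and $\atg_0$, so again $\underlying{\aaphotg_1} \funbisim \underlying{\aaphotg_0}$, whereas $\aaphotg_1 \funbisim \aaphotg_0$ would, by \cref{thm:corr:lhotgs:aphotgs}, yield $\ahotg_1 \funbisim \ahotg_0$ — a contradiction.

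I do not expect a real obstacle here: the preservation half is immediate from the definitions, and the non-reflection half is essentially a repackaging of the already proven \cref{prop:no:extension:funbisim:siglambda1}, as the proposition itself anticipates by calling it a reformulation. The only things that take a little care are bookkeeping: verifying that $\underlying{}{}$ maps the chosen higher-order term graphs to precisely the term graphs $\atg_1$ and $\atg_0$ used there, and recalling that the uniqueness of the correct scope function on $\atg_0$ is exactly what prevents the term-graph homomorphism from being extended. Should a self-contained argument be preferred over the citation, the bulk of the work would be re-verifying that $\atg_0$ admits a single correct abstraction-prefix (equivalently, scope) function while the scope structure of $\ahotg_1$ is strictly less shared than that of $\ahotg_0$; but since \cref{prop:no:extension:funbisim:siglambda1} is available this can simply be invoked.
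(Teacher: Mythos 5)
Your proposal is correct and follows essentially the same route as the paper, which presents \cref{prop:forgetful} as a mere reformulation of \cref{prop:no:extension:funbisim:siglambda1}: preservation is immediate because a higher-order homomorphism is by definition a homomorphism of the underlying term graphs with an extra prefix/scope condition, and non-reflection is witnessed by the counterexample of \cref{fig:siglambda1-not-closed}, transported to \classaphotgsi{1} via \cref{thm:corr:lhotgs:aphotgs}.
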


\begin{para}[conclusion]\label{term-graph-without-scope-delimiters-conclusion}
	As a consequence of this proposition it is not possible to faithfully implement functional bisimulation on higher-order term graphs by only considering their underlying term graphs. It seems that we cannot simply discard the scoping information that is present in the higher-order term graphs. In the next section we refine the above approach by relying on a class of first-order term graphs that accounts for scoping by means of scope delimiter vertices.
\end{para}

\section{λ-Term-Graphs with Scope Delimiters}\label{sec:ltgs}

\begin{para}[overview]
	Considering \cref{term-graph-without-scope-delimiters-conclusion} we now enrich the signature for λ-term-graphs by an scope-delimiter symbol \S (cf.\ \S-steps of the decomposition systems). Delimiter vertices signify the end of a scope. We define variations of that signature where scope-delimiter vertices and/or variable vertices can have backlinks.
\end{para}

\begin{para}[signature for λ-term-graphs with scope delimiters]
	For all \m{i ∈ \set{0,1}} and \m{j ∈ \set{1,2}} we define the signature \m{\sigTGij{i}{j} := \sigTG ∪ \set{\0, \S}} as an extension of the signature \sigTG where \m{\arity{\0} = i} and \m{\arity{\S} = j}, and we denote by \m{\classtgssiglambdaij{i}{j} := \tgsover{\sigTGij{i}{j}}} the class of term graphs over signature \sigTGij{i}{j}.
\end{para}

\begin{para}[variable and scope-delimiter backlinks]
	Analogous to the classes \classlhotgsi{i} and \classaphotgsi{i}, the index \i determines whether variable vertices have backlinks to their corresponding abstraction. Here, the additional index \j determines whether scope-delimiter vertices have such backlinks (if \m{j=2}) or not (if \m{j=1}).
\end{para}

\begin{para}[relating scope delimiters and scope]
	As the scope-delimiter vertices are meant to reproduce the scoping information in the higher-order term graphs, in order to relate the placements of \S-vertices and the scoping information we formulate abstraction-prefix conditions for λ-term-graphs (even though they do not carry abstraction prefixes). These conditions are based on \cref{def:aplambdahotg} and adapted to account for \S-vertices which are to decrease the abstraction prefix by one variable.
\end{para}

\begin{definition}[abstraction-prefix function for \sigTGij{i}{j}-term-graphs]\label{ltg:ap-function}
	Let \m{\atg = \tuple{\V,\lab{},\args{},\r}} be a \sigTGij{i}{j}-term-graph for an \m{i ∈ \set{0,1}} and an \m{j ∈ \set{1,2}}. A function \m{\abspre{} : \V → \V^*} from vertices of \atg to words of vertices is called an \emph{abstraction-prefix function} for \atg. Such a function is called \emph{correct} if for all \m{v,w ∈ \V} and \m{k ∈ \set{0,1}} the following holds:
	\newcommand\Implies{&⇒~~~&}
	\begin{align}
		\tag{root}\label{ltg:ap-function:root} &&& \abspre{\r} = \emptyword \\
		\tag{λ}\label{ltg:ap-function:abs} v ∈ \vertsof{λ} &~∧~ v \tgsucc_0 w \Implies \abspre{w} = \abspre{v} \prefixcon v \\
		\tag{@}\label{ltg:ap-function:app} v ∈ \vertsof{@} &~∧~ v \tgsucc_k w \Implies \abspre{w} = \abspre{v} \\
		\tag{\m{\0_0}}\label{ltg:ap-function:00} v ∈ \vertsof{\0} & \Implies \abspre{v} ≠ \emptyword \\
		\tag{\m{\0_1}}\label{ltg:ap-function:01} v ∈ \vertsof{\0} &~∧~ v \tgsucc_0 w \Implies w ∈ \vertsof{λ} ~∧~ \abspre{w} \prefixcon w = \abspre{v} \\
		\tag{\m{\S_0}}\label{ltg:ap-function:S0} v ∈ \vertsof{\S} &~∧~ v \tgsucc_0 w \Implies \abspre{w} \prefixcon u = \abspre{v} ~ \text{for some \m{u ∈ \V}}\\
		\tag{\m{\S_1}}\label{ltg:ap-function:S1} v ∈ \vertsof{\S} &~∧~ v \tgsucc_1 w \Implies w ∈ \vertsof{λ} ~∧~ \abspre{w} \prefixcon w = \abspre{v}
	\end{align}
	Note that analogous to \cref{def:lambdahotg} and \cref{def:aplambdahotg}, if \m{i=0}, then \cref{ltg:ap-function:01} is trivially true, and if \m{i=1} then \cref{ltg:ap-function:00} is redundant, because it follows from \cref{ltg:ap-function:01}. Additionally, if \m{j=1}, then \cref{ltg:ap-function:S1} is trivially true.
\end{definition}

\begin{definition}[{λ-term-graphs over \sigTGij{i}{j}}]\label{def:lambdatg:siglambdaij}
	Let \m{i ∈ \set{0,1}} and \m{j ∈ \set{1,2}}. A \emph{λ-term-graph} over \sigTGij{i}{j} is a \sigTGij{i}{j}-term-graph that admits a correct abstraction-prefix function. The class of λ-term-graphs over \sigTGij{i}{j} is denoted by \classltgsij{i}{j}, and the class of isomorphism equivalence classes of λ-term-graphs over \sigTGij{i}{j} by \classltgsisoij{i}{j}.
\end{definition}

\begin{example}
	See \cref{fig:lambdatg} for examples, that, as we will see, correspond to the ho-term-graphs in \cref{fig:lambdahotg} and in \cref{fig:lambdaaphotg}. A precise formulation of this correspondence is given in \cref{ex:corr:aphotgs:ltgs}.
\end{example}

\begin{figure}
	\begin{hspread}
		\graph{\atg-0}{running-snodes-eager}
		\hsep
		\graph{\atg-1}{running-snodes-non-eager}
	\end{hspread}
	\caption{The λ-term-graphs corresponding to the λ-ap-ho-term-graphs from \cref{fig:lambdaaphotg} and the λ-ho-term-graphs from \cref{fig:lambdahotg}.}
	\label{fig:lambdatg}
\end{figure}

The following lemma states some basic properties of λ-term-graphs.

\begin{lemma}\label{lem:ltg}
	Let \m{i ∈ \set{0,1}}, and \m{j ∈ \set{1,2}}. Let \m{\altg = \tuple{\V,\lab{},\args{},\r}} be a λ-term-graph over \sigTGij{i}{j}, and let \abspre{} be a correct abstraction-prefix function for \altg. Then the statements \cref{lem:aphotg:i}--\cref{lem:aphotg:iii} from \cref{lem:aphotgs} hold as items (i)--(iii) of this lemma, and additionally:
	\begin{enumerate}[(i)]\setcounter{enumi}{3}
		\item\label{lem:ltg:iv} When stepping through \altg along a path, \abspre{} behaves like a stack data structure, in the sense that in a step \m{v \tgsucc w} it holds:
		\begin{itemize}
			\item if \m{v ∈ \vertsof{λ}}, then \m{\abspre{w} = \abspre{v} \prefixcon v}, that is, \abspre{w} is obtained by adding \v to \abspre{v} on the right (\v is `pushed on the stack');
			\item if \m{v ∈ \vertsof{\0} ∪ \vertsof{\S}}, then \m{\abspre{w} \prefixcon \v = \abspre{v}} for some \m{\v ∈ \vertsof{λ}}, that is, \abspre{w} is obtained by removing the rightmost vertex \v from \abspre{v} (\v is `popped from the stack');
			\item if \m{v ∈ \vertsof{@}}, then \m{\abspre{w} = \abspre{v}}, i.e.\ the abstraction prefix remains the same.
		\end{itemize}
		\item\label{lem:ltg:v} Access paths may end in vertices in \vertsof{\0}, but only pass through vertices in \m{\vertsof{λ} ∪ \vertsof{@} ∪ \vertsof{\S}}, and depart from vertices in \vertsof{\S} only via indexed edges \tgsuccis{0}{\S}.
		\item\label{lem:ltg:vi} There is precisely one correct abstraction-prefix function for \altg.
	\end{enumerate}
\end{lemma}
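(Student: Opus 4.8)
The plan is to prove item \cref{lem:ltg:iv} first, as a purely local consequence of the correctness conditions in \cref{ltg:ap-function}, and then to bootstrap items (i)--(iii), \cref{lem:ltg:v}, and \cref{lem:ltg:vi} from it, largely mirroring the proof of \cref{lem:aphotgs}. For a single edge $\v \tgsucc \w$ in $\altg$ I would case-split on $\lab{\v}$: if $\v \in \vertsof{λ}$, then $\v$ has only one successor and \cref{ltg:ap-function:abs} gives $\abspre{\w} = \abspre{\v} \prefixcon \v$ (a push); if $\v \in \vertsof{@}$, then \cref{ltg:ap-function:app} gives $\abspre{\w} = \abspre{\v}$ for both successors (no change); and if $\v \in \vertsof{\0} \cup \vertsof{\S}$, then whichever successor edge is taken, the conditions \cref{ltg:ap-function:01} (case $i = 1$), \cref{ltg:ap-function:S0}, and \cref{ltg:ap-function:S1} (case $j = 2$) all force $\abspre{\v} = \abspre{\w} \prefixcon x$, where $x$ is the last letter of $\abspre{\v}$ (a pop of the top of the stack), and $x \in \vertsof{λ}$ by item (ii) once that is available. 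This is precisely the stack discipline asserted by \cref{lem:ltg:iv}.

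Next I would redo the access-path walk from the proof of \cref{lem:aphotgs} to obtain items (i)--(iii). Fix $\w$ and an access path $\apath : \rootpos = \w_0 \tgsucc \dots \tgsucc \w_n = \w$; by \cref{ltg:ap-function:root} we start from $\abspre{\w_0} = \emptyword$, and by \cref{lem:ltg:iv} the abstraction prefix evolves along $\apath$ only by the push/no-change/pop discipline. A routine induction along $\apath$ then shows that $\abspre{\w}$ consists of λ-vertices, each pushed at some earlier edge of $\apath$ and not popped since. Since access paths are simple, such a λ-vertex is visited only once, hence pushed at most once, hence occurs at most once in $\abspre{\w}$, and it is visited strictly before $\w$ (so $\w \ne \v$ for every $\v$ occurring in $\abspre{\w}$; item (iii) is the instance $\w = \v$ of item (i)). Moreover, once a λ-vertex $\v$ has been pushed, everything strictly below it in the stack is frozen until $\v$ is popped, and $\v$ is never popped on the way to $\w$ (else it would have to be pushed again, contradicting simplicity of $\apath$); this yields $\abspre{\v} \prefixcon \v \le \abspre{\w}$, and, combining the single-occurrence property with a length count, the converse clause ``$\abspre{\w} = \apre \prefixcon \v \prefixcon \bpre$ implies $\abspre{\v} = \apre$''. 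Item (ii) follows since stack entries are pushed only at λ-vertices. Item \cref{lem:ltg:v} is then immediate from item (i): if an access path passed through a $\0$-vertex $\w_k$ (so $i = 1$ and $\w_k \tgsucc_0 \w_{k+1}$), or departed from an $\S$-vertex $\w_k$ along its backlink edge $\tgsuccis{1}{\S}$, then \cref{ltg:ap-function:01} respectively \cref{ltg:ap-function:S1} would put the λ-vertex $\w_{k+1}$ into $\abspre{\w_k}$, so by item (i) $\w_{k+1}$ would already occur on $\apath$ before $\w_k$, contradicting simplicity; hence non-terminal vertices of an access path are never $\0$-vertices, and $\S$-vertices are left only along $\tgsuccis{0}{\S}$.

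Finally, \cref{lem:ltg:vi} follows from \cref{lem:ltg:iv} together with root-connectedness: along any access path of a vertex $\w$ the value $\abspre{\w}$ is forced --- it is $\emptyword$ at the root, and at each step it is uniquely determined by the previous value and the label of the source vertex (at a pop step, correctness additionally forces the previous value to be non-empty, and the new value is obtained by deleting its last letter), so any two correct abstraction-prefix functions agree everywhere. The bulk of the work is routine bookkeeping parallel to \cref{lem:aphotgs}; the part that needs genuine care is the stack argument for item (i) --- making precise that a pushed λ-vertex keeps a fixed depth with everything below it frozen until it is popped, and that no pop-then-push can insert a different vertex into that slot. This is exactly where simplicity of access paths (each λ-vertex visited only once) does the decisive work, and it also fixes the order of the argument: the refinement of \cref{lem:ltg:iv} that names the popped vertex as a λ-vertex rests on item (ii), which in turn rests on item (i), so the ``shape'' of the stack discipline must be isolated and proved before the items it is used to prove.
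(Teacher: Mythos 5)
Your proof is correct and follows essentially the same route as the paper, which likewise reduces items (i)--(iii) and \cref{lem:ltg:v} to the access-path argument of \cref{lem:aphotgs}, checks \cref{lem:ltg:iv} directly from the conditions in \cref{ltg:ap-function}, and derives \cref{lem:ltg:vi} from the fact that the prefix is forced stepwise along any access path from the root. Your choice to isolate the stack discipline of \cref{lem:ltg:iv} first and use it to drive the other items is a reasonable way of organising the details the paper leaves implicit.
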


\begin{proof}
	That also here \cref{lem:aphotg:i}--\cref{lem:aphotg:iii} from \cref{lem:aphotgs} hold, and that \cref{lem:ltg:v} holds, can be shown analogous to the proof of the respective items of \cref{lem:aphotgs}. Statement \cref{lem:aphotg:iv} is easy to check from the definition (see \cref{ltg:ap-function} of a correct abstraction-prefix function for a λ-term-graph). For \cref{lem:ltg:vi} it suffices to observe that if \abspre{} is a correct abstraction-prefix function for \altg, then, for all \m{v ∈ \V}, the value \abspre{v} of \abspre{} at \v can be computed by choosing an arbitrary access path \apath from \r to \v and using the conditions \cref{ltg:ap-function:abs}, \cref{ltg:ap-function:app}, and \cref{ltg:ap-function:S0} to determine in a stepwise manner the values of \abspre{} at the vertices that are visited on \apath. Hereby note that in every transition along an edge on \apath the length of the abstraction prefix only changes by at most 1.
\end{proof}

\begin{terminology}
	\Cref{lem:ltg}~\cref{lem:ltg:vi} allows us to speak of \emph{the} abstraction-prefix function of a λ-term-graph.
\end{terminology}

\begin{para}[λ-term-graphs are first-order]
	Although the requirement of the existence of a correct abstraction-prefix function restricts their possible forms, λ-term-graphs are first-order term graphs, and as such the definitions of homomorphism, isomorphism, and bisimulation for first-order term graphs from \cref{sec:representations:prelims} apply to them.
\end{para}

\begin{para}
	The question arises now how a homomorphism \h between λ-term-graphs \m{\altg_1} and \m{\altg_2} relates their abstraction-prefix functions \absprei{1}{} and \absprei{2}{}. As it turns out (\cref{prop:hom:image:absprefix:function:ltgs}) \absprei{2}{} is the `homomorphic image' of \absprei{1}{} under \h in the sense of the following definition.
\end{para}

\begin{definition}[homomorphic image]\label{def:hom:image:absprefix:function:ltgs}
	Let \m{\atg_1 = \tuple{\V_1,\labi{1}{},\argsi{1}{},\r_1}} and \m{\atg_2 = \tuple{\V_2,\labi{2}{},\argsi{2}{},\r_2}} be term graphs over \sigTGij{i}{j} for some \m{i ∈ \set{0,1}} and \m{j ∈ \set{1,2}}, and let \absprei{1}{} and \absprei{2}{} be abstraction-prefix functions (not necessarily correct) for \m{\atg_1} and \m{\atg_2}, respectively. Furthermore, let \m{h : \V_1 → \V_2} be a homomorphism from \m{\altg_1} to \m{\altg_2}.
	We say that \absprei{2}{} is the \emph{homomorphic image of} \absprei{1}{} \emph{under} \h if it holds:
	\begin{equation}\label{eq:def:hom:image:absprefix:function:ltgs}
		{h^*} ∘ {\absprei{1}{}} = {\absprei{2}{}} ∘ {\h}
	\end{equation}
	where \m{h^*} is the homomorphic extension of \h to words over \m{\V_1}.
\end{definition}

\begin{para}[homomorphisms between λ-ap-ho-term-graphs]
	Note that for λ-ap-ho-term-graphs \cref{eq:def:hom:image:absprefix:function:ltgs} (or rather an equivalent thereof) holds by definition (see \cref{def:homom:aplambdahotg}). This is a consequence of the fact that homomorphisms between λ-ap-ho-term-graphs are defined as morphisms between λ-ap-ho-term-graphs when viewed as algebraical structures. As the abstraction-prefix function is a part of λ-ap-ho-term-graphs it has to be respected by morphisms.
\end{para}

\begin{para}[homomorphisms between λ-term-graphs]
	λ-term-graphs however, do not include an abstraction-prefix function as part of their formalisation. Here the existence of an abstraction-prefix function is merely a mathematical property that distinguishes them from among the term graphs over the same signature. Homomorphisms between λ-term-graphs are defined as morphisms between the structures that underlie their formalisation, i.e.\ first-order term graphs, and therefore do not by definition respect abstraction-prefix functions.
\end{para}

However, it turns out that, that due the correctness conditions for λ-term-graphs, homomorphisms between λ-term-graphs do in fact respect their abstraction-prefix function:

\begin{proposition}\label{prop:hom:image:absprefix:function:ltgs}
	Let \m{\atg_1} and \m{\atg_2} be λ-term-graphs, and let \absprei{1}{} and \absprei{2}{} be their abstraction-prefix functions, respectively. Suppose that \h is a homomorphism from \m{\atg_1} to \m{\atg_2}. Then \absprei{2}{} is the homomorphic image of \absprei{1}{} under \h.
\end{proposition}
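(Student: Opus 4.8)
The plan is to establish the equation \cref{eq:def:hom:image:absprefix:function:ltgs}, i.e.\ $h^* \circ \absprei{1}{} = \absprei{2}{} \circ h$, pointwise: I will show $h^*(\absprei{1}{v}) = \absprei{2}{h(v)}$ for every vertex $v$ of $\atg_1$, by induction on the length of a shortest access path of $v$ in $\atg_1$. Throughout I use \cref{lem:ltg}~\cref{lem:ltg:vi}, which says that $\absprei{1}{}$ and $\absprei{2}{}$ are \emph{the} correct abstraction-prefix functions of $\atg_1$ and $\atg_2$, so that all of the conditions \cref{ltg:ap-function:root}--\cref{ltg:ap-function:S1} are available for both graphs; and I use that $h$, being a homomorphism of term graphs, preserves the root and labels and (by the arguments condition, cf.\ \cref{prop:hom:tgs}, property \cref{eq:args-forward}) maps an indexed edge $v \tgsucc_k v'$ of $\atg_1$ to the indexed edge $h(v) \tgsucc_k h(v')$ of $\atg_2$.

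In the base case $v = \r_1$: by \cref{ltg:ap-function:root} we have $\absprei{1}{\r_1} = \emptyword$, hence $h^*(\absprei{1}{\r_1}) = \emptyword$; and $h(\r_1) = \r_2$, so $\absprei{2}{h(\r_1)} = \absprei{2}{\r_2} = \emptyword$ by \cref{ltg:ap-function:root} again. In the induction step, let $v \neq \r_1$ and fix a shortest access path $\r_1 \tgsucc \dots \tgsucc u \tgsucc_{k} v$ of $v$ in $\atg_1$; its prefix up to $u$ is an access path of $u$ of strictly smaller length, so the induction hypothesis gives $h^*(\absprei{1}{u}) = \absprei{2}{h(u)}$. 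Since $u$ is an interior vertex of an access path, \cref{lem:ltg}~\cref{lem:ltg:v} forces $\labi{1}{u} \in \set{λ, @, \S}$ (in particular the case $\labi{1}{u} = \0$ cannot occur), and by label preservation $\labi{2}{h(u)} = \labi{1}{u}$ with $h(u) \tgsucc_{k} h(v)$ in $\atg_2$. Then a three-way case split finishes it: if $\labi{1}{u} = λ$ (so $k=0$), \cref{ltg:ap-function:abs} in $\atg_1$ and in $\atg_2$ give $\absprei{1}{v} = \absprei{1}{u} \prefixcon u$ and $\absprei{2}{h(v)} = \absprei{2}{h(u)} \prefixcon h(u)$, whence $h^*(\absprei{1}{v}) = h^*(\absprei{1}{u}) \prefixcon h(u) = \absprei{2}{h(u)} \prefixcon h(u) = \absprei{2}{h(v)}$; if $\labi{1}{u} = @$, \cref{ltg:ap-function:app} gives $\absprei{1}{v} = \absprei{1}{u}$ and $\absprei{2}{h(v)} = \absprei{2}{h(u)}$, so $h^*(\absprei{1}{v}) = h^*(\absprei{1}{u}) = \absprei{2}{h(u)} = \absprei{2}{h(v)}$; and if $\labi{1}{u} = \S$, then by \cref{lem:ltg}~\cref{lem:ltg:v} an access path departs from an $\S$-vertex only via the $0$-indexed edge, so $k=0$, and \cref{ltg:ap-function:S0} yields $t \in \V_1$, $t' \in \V_2$ with $\absprei{1}{v} \prefixcon t = \absprei{1}{u}$ and $\absprei{2}{h(v)} \prefixcon t' = \absprei{2}{h(u)}$; applying $h^*$ to the first equation and using the induction hypothesis gives $h^*(\absprei{1}{v}) \prefixcon h(t) = \absprei{2}{h(u)} = \absprei{2}{h(v)} \prefixcon t'$, and since deleting the rightmost letter of a word is well-defined (right-cancellation of a single letter), $h^*(\absprei{1}{v}) = \absprei{2}{h(v)}$.

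The argument is mostly bookkeeping, so I expect no serious obstacle; the step requiring the most care is the $\S$-case, where I must (i) invoke \cref{lem:ltg}~\cref{lem:ltg:v} to guarantee that access paths leave $\S$-vertices along the index-$0$ edge only — without this the case would not close up — and (ii) be precise about the right-cancellation, since \cref{ltg:ap-function:S0} pins down the abstraction prefix only up to its last letter. I should also double-check the well-foundedness of the induction, namely that a shortest access path of the predecessor $u$ really is strictly shorter than one of $v$. An alternative, slightly slicker route is to observe via \cref{lem:ltg}~\cref{lem:ltg:iv} that $\absprei{1}{}$ and $\absprei{2}{}$ are computed in stack-like fashion along paths from the root, and that $h$ sends an access path of $v$ to a path ending in $h(v)$ preserving labels and edge indices (\cref{prop:hom:tgs:paths}~\cref{prop:hom:tgs:paths:i}), so the two stack computations run in lockstep modulo $h$; this is the same induction in disguise.
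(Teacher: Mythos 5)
Your proof is correct and follows essentially the same route as the paper's: the paper establishes this proposition via \cref{lem:hom:image:absprefix:function:ltgs}~\cref{lem:hom:image:absprefix:function:ltgs:ii}, whose proof is exactly your pointwise induction on the length of an access path, with the base case at the root and a case split on the label ($λ$, $@$, $\S$) of the predecessor vertex using the correctness conditions of \cref{ltg:ap-function} together with the homomorphism conditions. The only cosmetic difference is that the paper's $\S$-case also treats a departure via the index-$1$ backlink using \cref{ltg:ap-function:S1}, whereas you rule that subcase out up front by \cref{lem:ltg}~\cref{lem:ltg:v}; both are fine.
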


This proposition follows from statement \cref{lem:hom:image:absprefix:function:ltgs:ii} of the following lemma. \Cref{lem:hom:image:absprefix:function:ltgs:i} states that functional bisimulation on term graphs over \sigTGij{i}{j} preserves and reflects correctness of the abstraction-prefix functions.

\begin{lemma}\label{lem:hom:image:absprefix:function:ltgs}
	Let \m{\atg_1} and \m{\atg_2} be term graphs over \sigTGij{i}{j} for some \m{i ∈ \set{0,1}} and \m{j ∈ \set{1,2}}. Let \h be a homomorphism from \m{\atg_1} to \m{\atg_2}. Furthermore, let \absprei{1}{} and \absprei{2}{} be their abstraction-prefix functions, respectively. Then the following two statements hold:
	\begin{enumerate}[(i)]
		\item\label{lem:hom:image:absprefix:function:ltgs:ii} If \absprei{1}{} and \absprei{2}{} are correct for \m{\atg_1} and \m{\atg_2}, respectively, then \absprei{2}{} is the homomorphic image of \absprei{1}{}.
		\item\label{lem:hom:image:absprefix:function:ltgs:i} If \absprei{2}{} is the homomorphic image of \absprei{1}{}, then \absprei{1}{} is correct for \m{\atg_1} if and only if \absprei{2}{} is correct for \m{\atg_2}.
	\end{enumerate}
\end{lemma}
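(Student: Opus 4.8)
The plan is to prove both statements by leaning on the ``stack computation along access paths'' characterisation of abstraction-prefix functions that underlies \cref{lem:ltg}, combined with the standard facts that a homomorphism $h\colon\atg_1\to\atg_2$ preserves labels, maps edges to edges, is surjective, and enjoys the forward/backward matching properties of \cref{prop:hom:tgs} and \cref{prop:hom:tgs:paths}. Throughout I write the homomorphic-image identity as $h^{*}\circ\absprei{1}{} = \absprei{2}{}\circ h$.

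For the first statement I would argue by induction on the length $\ell(v)$ of a shortest access path of a vertex $v$ in $\atg_1$, proving $h^{*}(\absprei{1}{v}) = \absprei{2}{h(v)}$. The base case $v=\r_1$ is immediate from the two (root) conditions and $h(\r_1)=\r_2$. For the induction step, write the last edge of a shortest access path of $v$ as $v'\tgsucc_k v$; since $h$ preserves labels, $\lab{v'}=\lab{h(v')}$ and $h(v')\tgsucc_k h(v)$ in $\atg_2$. A case distinction on $\lab{v'}$ reduces the goal to the induction hypothesis at $v'$ together with the matching correctness condition applied on both sides: for $\lab{v'}\in\set{λ,@}$ the identity is immediate, and for $\lab{v'}=\S$ one uses the ($\S_0$) conditions on both sides, in the form $\absprei{1}{v'}=\absprei{1}{v}\prefixcon u$ and $\absprei{2}{h(v')}=\absprei{2}{h(v)}\prefixcon u'$, and cancels the last letters of the resulting non-empty words (which also pins down $u'=h(u)$). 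The case $\lab{v'}=\0$ — and likewise the index-$1$ edge in the $\S$-case — cannot occur on a shortest access path: by the ($\0_1$) (resp.\ ($\S_1$)) condition the target vertex would then occur in $\absprei{1}{v'}$, hence on every access path of $v'$ by \cref{lem:ltg}~(i), which would yield an access path of $v$ strictly shorter than $\ell(v)$.

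For the ``$\Rightarrow$'' half of the second statement ($\absprei{1}{}$ correct $\implies$ $\absprei{2}{}$ correct) I would verify each of the (at most seven, depending on $i$ and $j$) correctness conditions for $\absprei{2}{}$ pointwise. Given an edge $v\tgsucc_k w$ in $\atg_2$, surjectivity of $h$ and the backward-matching property of \cref{prop:hom:tgs} provide a pre-image edge $\bar v\tgsucc_k\bar w$ in $\atg_1$ with $h(\bar v)=v$ and $h(\bar w)=w$; since $h$ preserves labels, the correctness condition for $\absprei{1}{}$ that matches $\lab{\bar v}$ applies to $\bar v\tgsucc_k\bar w$, and applying $h^{*}$ together with the homomorphic-image identity transports it to the required condition at $v\tgsucc_k w$ (using in addition that $h^{*}$ sends the empty word only to the empty word, for the ($\0_0$) clause, and the root identity for (root)).

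The main obstacle is the ``$\Leftarrow$'' half of the second statement ($\absprei{2}{}$ correct $\implies$ $\absprei{1}{}$ correct), because the homomorphic-image identity only fixes $h^{*}(\absprei{1}{v})$ while $h^{*}$ need not be injective on words of vertices. I would first recover enough injectivity: from correctness of $\absprei{2}{}$ and \cref{lem:ltg}~(i) for $\atg_2$, the map $h$ cannot identify two distinct λ-vertices occurring in one and the same word $\absprei{1}{w}$ (that would make their common image occur twice in $\absprei{2}{h(w)}$); and along a \emph{fixed} access path of $v$ in $\atg_1$ the pushed vertices are exactly the λ-vertices visited, while the popped vertices are determined by their $h$-images (read off from the corresponding pop in $\atg_2$, where $\absprei{2}{}$ is a genuine function). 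With this in hand, an induction on $\ell(v)$ parallel to that of the first statement shows that $\absprei{1}{v}$ must equal the word produced by the stack computation along any access path of $v$ — a computation that is well-defined (no pop of an empty stack, since at every $\0$- or $\S$-vertex its $h^{*}$-image is the non-empty prefix $\absprei{2}{h(v')}$) and independent of the chosen access path (its $h^{*}$-image $\absprei{2}{h(v)}$ is); the correctness conditions for $\absprei{1}{}$ then read off directly from this stack description. I expect the delicate point to be making the ``pushed/popped vertices are pinned down'' step rigorous, especially when the target of an edge already lies on the chosen access path (so that no fresh access-path extension is available), which is precisely where the injectivity-recovery and path-independence arguments must be combined.
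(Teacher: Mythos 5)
Your proofs of the first statement and of the ``$\Rightarrow$'' half of the second are correct and essentially the paper's own: the same induction on access-path length with a case split on the label of the penultimate vertex (the paper excludes the \0-case and the \S-backlink case by citing \cref{lem:ltg}~\cref{lem:ltg:v} rather than by your shortest-path minimality argument, which is only a cosmetic difference), and the same pointwise transport of each correctness condition along edges lifted via surjectivity and the backward-matching property.

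The gap is in the ``$\Leftarrow$'' half of the second statement, and it cannot be closed, because that implication is false as stated. The identity \m{{h^*} ∘ {\absprei{1}{}} = {\absprei{2}{}} ∘ {h}} fixes only the \m{h^*}-image of each word \m{\absprei{1}{v}}; the letters themselves remain free up to an arbitrary choice of \m{h}-preimages, and your injectivity-recovery step (no two letters of a single word \m{\absprei{1}{w}} are identified by \h) does not determine that choice. Concretely: let \m{\atg_1} be the term graph of \m{\app{(\abs{x}{x})}{(\abs{x}{x})}} with root @-vertex \m{r_1}, two λ-vertices \m{x_1,x_2} and variable vertices \m{z_1,z_2} with \m{x_i \tgsucc_0 z_i} and backlinks \m{z_i \tgsucc_0 x_i}; let \m{\atg_2} be its collapse with a single λ-vertex \m{x'} and variable vertex \m{z'}, and \h the evident homomorphism. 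Take \m{\absprei{2}{}} correct (so \m{\absprei{2}{z'} = x'} and \emptyword elsewhere) and define \m{\absprei{1}{z_1} := x_2}, \m{\absprei{1}{z_2} := x_1}, and \emptyword elsewhere. Then \m{\absprei{2}{}} is the homomorphic image of \m{\absprei{1}{}} and is correct, yet \m{\absprei{1}{}} violates condition \cref{ltg:ap-function:abs} at the edge \m{x_1 \tgsucc_0 z_1}. This is exactly where your induction first breaks: at a λ-step \m{v' \tgsucc_0 v} you obtain \m{h^*(\absprei{1}{v}) = h^*(\absprei{1}{v'} \prefixcon v')} but cannot conclude \m{\absprei{1}{v} = \absprei{1}{v'} \prefixcon v'} once \h identifies two λ-vertices. (The paper dismisses this direction with a one-line ``analogously'' and never uses it --- only the ``$\Rightarrow$'' half enters the proof of \cref{thm:preserve:ltgs} --- so your instinct that this was the delicate point was sound; but no amount of care will repair it, and only the ``$\Rightarrow$'' half should be claimed.)
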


\begin{proof}
	Let \m{\atg_1 = \tuple{\V_1,\labi{1}{},\argsi{1}{},\r_1}} and \m{\atg_2 = \tuple{\V_2,\labi{2}{},\argsi{2}{},\r_2}}.
	In this proof we denote by \m{\tgsucc'} the directed-edge relation in \m{\atg_2}, thus if we for instance write \m{v \tgsucc'_0 w} we mean to say that \w is the first child of \v in \m{\atg_2}.
	\par For proving \cref{lem:hom:image:absprefix:function:ltgs:ii}, we assume that the abstraction-prefix functions \absprei{1}{} and \absprei{2}{} are correct for \m{\atg_1} and \m{\atg_2}, respectively. We establish that \absprei{2}{} is the homomorphic image under \h of \absprei{1}{} by showing that
	\begin{equation}\label{eq2:prf:lem:hom:image:absprefix:function:ltgs}
		∀ v ∈ \V_1~~ h^*(\absprei{1}{v}) = \absprei{2}{h(v)}
	\end{equation}
	by induction on the length of an access path \m{\apath : \r_1 = v_0 \tgsucc v_1 \tgsucc \dots \tgsucc v_n = v} of \v in \m{\atg_1}.
	\par If \m{\length{\apath} = 0}, then \m{v = \r_1}. It follows from the correctness condition \cref{ltg:ap-function:root} for abstraction-prefix functions and the condition \cref{ltgs:homomorphism:cond:roots} for homomorphisms that \m{h^*(\absprei{1}{v}) = h^*(\absprei{1}{\r_1}) = h^*(\emptyword) = \emptyword = \absprei{2}{\r_2} = \absprei{2}{h(\r_1)}}.
	\par If \m{\length{\apath} = n+1}, then \apath is of the form \m{\apath : \r_1 = v_0 \tgsucc v_1 \tgsucc \dots \tgsucc v_n \tgsucc_i v_{n+1} = v} for some \m{i ∈ \set{0,1}}. We have to show that \m{h^*(\absprei{1}{v}) = \absprei{2}{h(v)}} holds, with \m{h^*(\absprei{1}{v_n}) = \absprei{2}{h(v_n)}} as an induction hypothesis. We will do so by distinguishing the three possible labels \m{v_n} can have, namely λ, @, and \S (see \cref{lem:ltg}~\cref{lem:ltg:v}) and by applying the correctness conditions from \cref{ltg:ap-function}.
	\begin{description}
		\item[\m{v_n ∈ \vertsiof{1}{λ}:}] Since \h is a homomorphism also \m{h(v_n) ∈ \vertsiof{2}{λ}} holds. Applying the correctness condition \cref{ltg:ap-function:abs} to \m{v_n} and \m{h(v_n)} yields that \m{\absprei{1}{v} = \absprei{1}{v_n} \prefixcon v_n} and \m{\absprei{2}{h(v)} = \absprei{2}{h(v_n)} \prefixcon h(v_n)}. From this we now obtain \m{h^*(\absprei{1}{v}) = h^*(\absprei{1}{v_n} \prefixcon v_n) = h^*(\absprei{1}{v_n}) \prefixcon h(v_n) = \absprei{2}{h^*(v_n)} \prefixcon h(v_n) = \absprei{2}{h(v)}} by using the induction hypothesis.
		\item[\m{v_n ∈ \vertsiof{1}{@}:}] Since \h is a homomorphism also \m{h(v_n) ∈ \vertsiof{2}{@}} holds. Applying the correctness condition \cref{ltg:ap-function:app} to \m{v_n} and \m{h(v_n)} yields that \m{\absprei{1}{v} = \absprei{1}{v_n}} and \m{\absprei{2}{h(v)} = \absprei{2}{h(v_n)}}. Then we obtain \m{h^*(\absprei{1}{v}) = h^*(\absprei{1}{v_n}) = \absprei{2}{h(v_n)} = \absprei{2}{h(v)}} by using the induction hypothesis.
		\item[\m{v_n ∈ \vertsiof{1}{\S}:}] Since \h is a homomorphism also \m{h(v_n) ∈ \vertsiof{2}{\S}} holds. We distinguish two cases for the last step of \apath (is it via the backlink or not?):
			\begin{description}
				\item[\m{v_n \tgsucc_1 v:}] This implies \m{h(v_n) \tgsucc'_1 h(v)}. Applying the correctness condition \cref{ltg:ap-function:S1} to \m{v_n} and to \m{h(v_n)} yields \m{\absprei{1}{v_n} = \absprei{1}{v} \prefixcon v} and \m{\absprei{2}{h(v)} \prefixcon h(v) = \absprei{2}{h(v_n)}}. Applying the induction hypothesis we get \m{\absprei{2}{h(v)} \prefixcon h(v) = \absprei{2}{h(v_n)} = h^*(\absprei{1}{v_n}) = h^*(\absprei{1}{v} \prefixcon v) = h^*(\absprei{1}{v}) \prefixcon h(v)}. From this we conclude \m{\absprei{2}{h(v)} = h^*(\absprei{1}{v})}.
				\item[\m{v_n \tgsucc_0 v:}] Analogously, but we rely on \cref{ltg:ap-function:S0} instead of \cref{ltg:ap-function:S1}.
			\end{description}
	\end{description}
	\par For showing statement \cref{lem:hom:image:absprefix:function:ltgs:i}, we assume that -- as we have just shown -- \absprei{2}{} is the homomorphic image of \absprei{1}{}.
	\par For the direction ``⇒'' of the equivalence in \cref{lem:hom:image:absprefix:function:ltgs:ii} we assume that \absprei{1}{} is correct for \m{\atg_1}, and we show that \absprei{2}{} is correct for \m{\atg_2}, according to the conditions \cref{ltg:ap-function:root}, \cref{ltg:ap-function:abs}, \cref{ltg:ap-function:app}, \cref{ltg:ap-function:00}, and \cref{ltg:ap-function:01} from \cref{ltg:ap-function}.
	\par For the two conditions that do not involve transitions this is easy to show: The condition \cref{ltg:ap-function:root} for \m{\atg_2} follows from the condition \cref{ltgs:homomorphism:cond:roots} from \cref{def:ltgs:homomorphism} and \cref{eq:def:hom:image:absprefix:function:ltgs}. And the condition \cref{ltg:ap-function:00} follows similarly by using that every pre-image under \h of a variable vertex in \m{\atg'} must be a variable vertex in \altg since \h is a homomorphism.
	\par Let us only look at one of the remaining cases, and look at the condition \cref{ltg:ap-function:S0}. For this, let \m{v',v'_0 ∈ \V_2} such that \m{v' ∈ \vertsiof{2}{\S}} with \m{v' \tgsucc'_0 v'_0}. Then as a consequence of the \cref{eq:args-forward} condition for \h from \cref{prop:hom:tgs} there exist \m{v,v_0 ∈ \V_1} with \m{v ∈ \vertsiof{1}{\S}}, \m{v \tgsucc_0 v_0}, and with \m{h(v) = v'}, \m{h(v_0) = v'_0}. Since \cref{ltg:ap-function:S0} is satisfied for \m{\atg_1}, there exists a vertex \m{w ∈ \V_1} such that \m{\absprei{1}{v_0} w = \absprei{1}{v}}. From this and by \cref{eq:def:hom:image:absprefix:function:ltgs} we obtain \m{\absprei{2}{v'_0} \prefixcon h(w) = \absprei{2}{h(v_0)} \prefixcon h(w) = h^*(\absprei{1}{v_0}) \prefixcon h(w) = h^*(\absprei{1}{v_0} \prefixcon w) = h^*(\absprei{1}{v}) = \absprei{2}{h(v)} = \absprei{2}{v'}}. This shows the existence of \m{w' ∈ \V_2} (to wit \m{w' := h(w)}) with \m{\absprei{2}{v'_0} \prefixcon w' = \absprei{2}{v'}}. In this way we have established the correctness condition \cref{ltg:ap-function:S0} for \m{\atg_2}.
	\par The direction ``⇐'' of the equivalence in \cref{lem:hom:image:absprefix:function:ltgs:ii} can be established analogously by recognizing that the correctness conditions from \cref{ltg:ap-function} carry over also from \m{\atg_2} to \m{\atg_1} via \h due to the homomorphic image property \cref{eq:def:hom:image:absprefix:function:ltgs}. The arguments for the individual correctness conditions are analogous to the ones used above, but they depend on using the \cref{eq:args-backward} property from \cref{prop:hom:tgs} of \h.
\end{proof}

\begin{figure}
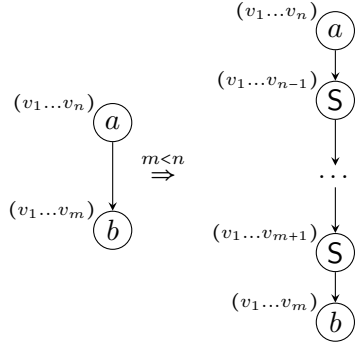

	\transpicture{
		\ltgnode{a}{\a};
		\addPrefix{a}{v_1 \dots v_n};
		\ltgnode[node distance=9mm,below=of a]{b}{\m{b}};
		\addPrefix{b}{v_1 \dots v_m};
		\draw[->](a) to (b);
	}
	\vcentered{\m{\overset{m<n}{⇒}}}
	\transpicture{
		\ltgnode{a}{\a};
		\addPrefix{a}{v_1 \dots v_n};
		\ltgnode[below=of a]{s1}{\S}; \addPrefix{s1}{v_1 \dots v_{n-1}}; \draw[->](a) to (s1);
		\node[node distance=6mm,below=of s1](si){\dots}; \draw[->](s1) to (si);
		\ltgnode[node distance=6mm,below=of si]{sn}{\S}; \addPrefix{sn}{v_1 \dots v_{m+1}}; \draw[->](si) to (sn);
		\ltgnode[below=of sn]{b}{\m{b}};
		\addPrefix{b}{v_1 \dots v_m};
		\draw[->](sn) to (b);
	}
	\caption{Definition of \aphotgstoltgsij{i}{j}{} by inserting \S-vertices, between edge-connected vertices of a λ-ap-ho-term-graph}
	\label{fig:insertS}
\end{figure}

\begin{para}[relating λ-term-graphs defined above and λ-ap-ho-term-graphs]
	We now proceed to define a precise relationship between λ-term-graphs and λ-ap-ho-term-graphs via mappings that translate between these classes:
	\begin{description}
		\item[The mapping \aphotgstoltgsij{i}{j}{}] produces a λ-term-graph for any given λ-ap-ho-term-graph by adding to the original set of vertices a number of delimiter vertices at the appropriate places. That is, at every position where the abstraction prefix decreases by \n elements, \n \S-vertices are inserted as depicted in \cref{fig:insertS}. In the image, the original abstraction prefix is retained as part of the vertices, which we use for defining the edges of the image.
		\item[The mapping \ltgstoaphotgsij{i}{j}{}] back to λ-ap-ho-term-graphs is simpler because it only has to erase the \S-vertices, and add the correct abstraction prefix that we already know to exists for the λ-term-graph.
	\end{description}
\end{para}

\begin{proposition}[from λ-ap-ho-term-graphs to λ-term-graphs]\label{prop:mappings:aphotgs:to:ltgs}
	\newcommand\snodels{\m{\#\mathrm{del}}}
	\newcommand\nodels[2]{\m{\snodels(#1, #2)}}
	Let \m{i ∈ \set{0,1}} and \m{j ∈ \set{1,2}}. The mapping \aphotgstoltgsij{i}{j}{} defined below is well-defined between the class of λ-term-graphs over \sigTGij{i}{j} and the class of λ-ap-ho-term-graphs over \sigTGi{i}:
	\[
		\aphotgstoltgsij{i}{j}{} : \classaphotgsi{i} → \classltgsij{i}{j}, ~
		\tuple{V,\lab{},\args{},r,\abspre{}} ↦ \tuple{V',\lab{}',\args{}',r'}
	\]
	where:
	\begin{align*}
		V' :=\:
		& \setcompr{\pair{v}{\abspre{v}}}{v ∈ V}\\
		∪\: & \setcompr{\tuple{v,k,v',\apre}}{v,v' ∈ V, v \tgsucc_k v',
		\begin{aligned}[c]
		& \vertsof{v} = λ ∧ \abspre{v'} < \apre ≤ \abspre{v} \prefixcon v \\
		& ∨\: \vertsof{v} = @ ∧ \abspre{v'} < \apre ≤ \abspre{v}
		\end{aligned}}
	\end{align*}
	\begin{align*}
		\newfunction\labp{lab'}
		r' := \pair{r}{\emptyword}
		\hspace*{5ex}
		\begin{gathered}[t]
		\labp{} : V' → \sigTGij{i}{j}, \;
		\begin{aligned}[c]
			\pair{v}{\abspre{v}} & ↦ \lab{v} \\
			\tuple{v,k,v',\apre} & ↦ \S
		\end{aligned}
		\end{gathered}
	\end{align*}
	and \m{\args{}' : V' → (V')^*} is defined such that for the induced indexed successor relation \m{\tgsucc'_{\smash{(\cdot)}}} it holds:
	\begin{align*}
		& v \tgsucc_k w \:∧\: \nodels{v}{k} = 0 ~\implies~ \pair{v}{\abspre{v}} \tgsucc'_k \pair{w}{\abspre{w}} \\
		& v \tgsucc_0 w \:∧\: \nodels{v}{0} > 0 \:∧\: \lab{v} = λ \:∧\: \abspre{v} = \abspre{w} \prefixcon u \prefixcon \apre \\
		&\indent\implies~ \pair{v}{\abspre{v}} \tgsucc'_0 \tuple{v,0,w,\abspre{v} \prefixcon v} \:∧\: \tuple{v,0,w,\abspre{w} \prefixcon u} \tgsucc'_0 \pair{w}{\abspre{w}} \\
		& v \tgsucc_k w \:∧\: \nodels{v}{k} > 0 \:∧\: \lab{v} = @ \:∧\: \abspre{v} = \abspre{w} \prefixcon u \prefixcon \apre \\
		&\indent\implies~ \pair{v}{\abspre{v}} \tgsucc'_k \tuple{v,k,w,\abspre{v}} \:∧\: \tuple{v,k,w,\abspre{w} \prefixcon u} \tgsucc'_0 \pair{w}{\abspre{w}} \\
		& v \tgsucc_k w \:∧\: \nodels{v}{k} > 0 \:∧\: \tuple{v,k,w,\apre \prefixcon u}, \, \tuple{v,k,w,\apre} ∈ V' \\
		&\indent\implies~ \tuple{v,k,w,\apre \prefixcon u} \tgsucc'_0 \tuple{v,k,w,\apre} \\
		&v \tgsucc_k w \:∧\: \nodels{v}{k} > 0 \:∧\: \tuple{v,k,w,\apre \prefixcon u} ∈ V' \:∧\: j = 2\\
		&\indent\implies~ \tuple{v,k,w,\apre \prefixcon u} \tgsucc'_1 \pair{w}{\abspre{w}}
	\end{align*}
	for all \m{v,w,u ∈ V}, \m{k ∈ \set{0,1}}, \m{\apre ∈ V^*},
	and where the function \snodels is defined as:
	\begin{equation*}
	\nodels{v}{k}
	:=
	\begin{cases}
	\length{\abspre{v}} - \length{\abspre{v'}}
	& \text{if } v ∈ \vertsof{@} ∧ v \tgsucc_k v'
	\\[-0.5ex]
	\length{\abspre{v}} + 1 - \length{\abspre{v'}}
	& \text{if } v ∈ \vertsof{λ} ∧ v \tgsucc_k v'
	\\[-0.5ex]
	0 & \text{otherwise}
	\end{cases}
	\end{equation*}
	\aphotgstoltgsij{i}{j}{} induces the function~
	\m{\aphotgsisotoltgsisoij{i}{j}{} : \classaphotgsisoi{i} → \classltgsisoij{i}{j}},
	\m{\aaphotgiso = \eqcl{\aaphotg}{\iso} ↦ \eqcl{\aphotgstoltgsij{i}{j}{\aaphotg}}{\iso}}
	on the respective isomorphism equivalence classes.
\end{proposition}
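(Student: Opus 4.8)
The plan is to prove, for a fixed $i \in \set{0,1}$ and $j \in \set{1,2}$ and an arbitrary $\aaphotg = \tuple{V,\lab{},\args{},r,\abspre{}} \in \classaphotgsi{i}$, that the structure $\altg' := \tuple{V',\lab{}',\args{}',r'}$ defined in the statement is (a) a well-defined term graph over $\sigTGij{i}{j}$, (b) admits a correct abstraction-prefix function, hence lies in $\classltgsij{i}{j}$, and (c) that the construction respects isomorphisms, so it descends to the announced map $\aphotgsisotoltgsisoij{i}{j}{}$ on isomorphism classes. The candidate abstraction-prefix function $p'$ for $\altg'$ is the \emph{lift} of $\abspre{}$: it maps $\tuple{v,\abspre{v}}$ to the word obtained from $\abspre{v}$ by replacing each entry $x$ with $\tuple{x,\abspre{x}}$, and maps an inserted scope-delimiter vertex $\tuple{v,k,v',\apre}$ to the word obtained from its fourth component $\apre$ in the same way. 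As a preliminary I would note, using \cref{lem:aphotgs}~\cref{lem:aphotg:ii}, that every entry of $\abspre{v}$ is a $\lambda$-vertex, and that the same holds for every word $\apre$ occurring as a fourth component of a vertex of $V'$: by construction such an $\apre$ satisfies $\abspre{v'} < \apre \le \abspre{v} \prefixcon v$ (or $\abspre{v'} < \apre \le \abspre{v}$), hence is a prefix of $\abspre{v}\prefixcon v$ and so is built from entries of $\abspre{v}\prefixcon v$. Consequently $p'$ is well-defined, with values that are words of $\lambda$-vertices of $\altg'$.

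For part (a) the core task is to check that $\args{}'$ is a total function producing, for every $u' \in V'$, exactly $\arity{\lab{}'(u')}$ successors. I would argue by a case distinction on the label of $u'$. For an application vertex $\tuple{v,\abspre{v}}$ and each $k \in \set{0,1}$, exactly one of the two relevant clauses applies (according to whether the number of delimiters to be inserted along $v \tgsucc_k w$ is zero or positive), giving a single $k$-successor; in the ``positive'' case one verifies $\tuple{v,k,w,\abspre{v}} \in V'$ using \cref{ap-function:app} together with positivity, which forces $\abspre{w} < \abspre{v}$ strictly. For an abstraction vertex one argues analogously, using \cref{ap-function:abs} to see that the first inserted delimiter $\tuple{v,0,w,\abspre{v}\prefixcon v}$ lies in $V'$. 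For a variable vertex: if $i = 0$ there is no out-edge (no clause fires, since every clause has a premise $v \tgsucc_k w$), and if $i = 1$ the unique out-edge goes directly to $\tuple{w,\abspre{w}}$, since no delimiter is inserted by \cref{ap-function:01}. For a scope-delimiter vertex $\tuple{v,k,v',\apre}$ I would observe that $\apre$ strictly extends $\abspre{v'}$, write $\apre = \apre_0 \prefixcon u$ with $\apre_0 \ge \abspre{v'}$, and note that its $0$-successor is $\tuple{v',\abspre{v'}}$ when $\apre_0 = \abspre{v'}$ and the delimiter $\tuple{v,k,v',\apre_0}$ otherwise (which is in $V'$ because $\apre_0$ still satisfies the defining inequalities); when $j = 2$ there is in addition the backlink $1$-edge. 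The residual checks --- matching of arities, distinctness of the new vertices (they are $4$-tuples recording the edge $v \tgsucc_k v'$ and the position $\apre$ in its delimiter chain, hence distinct from the $2$-tuples $\tuple{v,\abspre{v}}$ and from one another unless equal), and root-connectedness --- are routine: an access path of any vertex $v$ in $\aaphotg$ lifts to a path in $\altg'$ from $r' = \tuple{r,\emptyword}$ to $\tuple{v,\abspre{v}}$ by splicing in the delimiter chains along each edge (as in \cref{fig:insertS}), and every new vertex lies on such a chain (here \cref{ap-function:abs}/\cref{ap-function:app} ensure $\abspre{v'}$ is a prefix of $\abspre{v}\prefixcon v$ resp. $\abspre{v}$, so the chain really passes through all the relevant $\apre$), so $\altg'$ is root connected in the sense of \cref{def:root-connected}.

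For part (b) I would verify the seven conditions of \cref{ltg:ap-function} for $p'$. Condition \cref{ltg:ap-function:root} holds since $\abspre{r} = \emptyword$ by \cref{ap-function:root}; conditions \cref{ltg:ap-function:abs}, \cref{ltg:ap-function:app}, \cref{ltg:ap-function:00}, \cref{ltg:ap-function:01} transfer directly from \cref{ap-function:abs}, \cref{ap-function:app}, \cref{ap-function:00}, \cref{ap-function:01} for $\aaphotg$, once one notes that the ``zero delimiters inserted'' case forces the successor's prefix to have exactly the length those conditions predict. The genuinely new conditions are \cref{ltg:ap-function:S0} and, for $j = 2$, \cref{ltg:ap-function:S1}. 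Condition \cref{ltg:ap-function:S0} is immediate from the observation in part (a) that the $0$-successor of a delimiter vertex $\tuple{v,k,v',\apre_0 \prefixcon u}$ carries prefix $\apre_0$, one entry shorter. Condition \cref{ltg:ap-function:S1} requires identifying the $\lambda$-vertex targeted by the backlink and evaluating $p'$ on it; here \cref{lem:aphotgs}~\cref{lem:aphotg:i} is decisive: since $\apre_0 \prefixcon u$ is a prefix of $\abspre{v} \prefixcon v$ (resp. $\abspre{v}$), that lemma yields $\abspre{u} = \apre_0$, so the value of $p'$ at the associated $\lambda$-vertex, extended by that vertex, equals the lift of $\apre_0 \prefixcon u$, as required. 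For part (c), an isomorphism between $\aaphotg_1$ and $\aaphotg_2$ induces a map between $\aphotgstoltgsij{i}{j}{\aaphotg_1}$ and $\aphotgstoltgsij{i}{j}{\aaphotg_2}$ acting componentwise on the $2$- and $4$-tuples; since isomorphisms preserve labels, the argument function, the root and --- by \cref{def:homom:aplambdahotg} --- abstraction prefixes, they preserve the number of inserted delimiters along each edge, hence the entire structure of $\altg'$, so the induced map is a bijective homomorphism. Thus $\aphotgstoltgsij{i}{j}{}$ respects $\iso$, and $\aphotgsisotoltgsisoij{i}{j}{}$ is well defined.

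The step I expect to be the main obstacle is part (a): showing that $\args{}'$ is genuinely a \emph{function}. This reduces to checking that the five edge-defining clauses never conflict and never leave a successor slot unfilled, which rests on careful bookkeeping of the prefix intervals $\abspre{v'} < \apre \le \abspre{v}\prefixcon v$ (or $\le \abspre{v}$) that parametrize the inserted delimiter chains --- in effect, proving that the construction correctly unrolls each abstraction-prefix decrease along an edge into a linear chain of single-element pops starting at the source vertex's post-$\lambda$ prefix and ending exactly at the target vertex's prefix. The second most delicate point is pinning down the backlink target of a delimiter vertex for \cref{ltg:ap-function:S1}, where the structural lemma \cref{lem:aphotgs}~\cref{lem:aphotg:i} does the real work.
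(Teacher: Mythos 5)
The paper states this proposition without giving a proof, so there is no in-text argument to compare against; judged on its own, your proposal is correct and is the proof one would expect. The decomposition into (a) totality and arity-correctness of $\args{}'$ plus root-connectedness, (b) verification of the conditions of \cref{ltg:ap-function} for the lifted abstraction-prefix function, and (c) isomorphism-invariance is the right one; the candidate prefix function (each $\pair{v}{\abspre{v}}$ receives the lift of $\abspre{v}$, each delimiter $\tuple{v,k,v',\apre}$ the lift of its fourth component $\apre$) is exactly what the edge clauses force; and you correctly identify \cref{lem:aphotgs}~\cref{lem:aphotg:i} as the structural fact that pins down the backlink target and makes \cref{ltg:ap-function:S1} go through. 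Your account of why the delimiter chains neither clash nor leave a successor slot empty (the last delimiter $\tuple{v,k,w,\abspre{w}\prefixcon u}$ falls out of the scope of the chain-continuation clause precisely because $\tuple{v,k,w,\abspre{w}}\notin V'$) is the essential bookkeeping, and you state it correctly.

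Two caveats you should make explicit rather than leave implicit. First, your verification of \cref{ltg:ap-function:S1} assumes the delimiter's $1$-edge targets $\pair{u}{\abspre{u}}$, the lift of the abstraction vertex $u$ being popped; the displayed clause in the statement literally sends it to $\pair{w}{\abspre{w}}$, the target of the original edge, for which \cref{ltg:ap-function:S1} fails whenever $w \ne u$. Your reading is the only one under which the proposition is true, so you have in effect repaired a typo in the definition — say so, since otherwise a reader checking your proof against the displayed clauses will find a mismatch. Second, the premise $\abspre{v} = \abspre{w}\prefixcon u\prefixcon\apre$ of the $\lambda$-clause is unsatisfiable in the boundary case $\#\mathrm{del}(v,0)=1$, where $\abspre{w}=\abspre{v}$ and the single inserted delimiter is $\tuple{v,0,w,\abspre{v}\prefixcon v}$ with $u=v$; the decomposition must be read as one of $\abspre{v}\prefixcon v$. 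Your description of the chain handles this case correctly, but your claim that for each outgoing slot exactly one clause fires — the crux of part (a), as you yourself note — depends on this repaired reading, so the totality argument should record it. With those two points made explicit, the proof is complete.
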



\begin{proposition}[from λ-term-graphs to λ-ap-ho-term-graphs]\label{prop:mappings:ltgs:to:aphotgs}
	Let \m{i ∈ \set{0,1}} and \m{j ∈ \set{1,2}}. The mapping \ltgstoaphotgsij{i}{j}{} defined below is well-defined between the class of λ-term-graphs over \sigTGij{i}{j} and the class of λ-ap-ho-term-graphs over \sigTGi{i}:
	\[
		\ltgstoaphotgsij{i}{j}{} : \classltgsij{i}{j} → \classaphotgsi{i}, ~
		\altg = \tuple{V,\lab{},\args{},r} ↦ \tuple{V',\srestrictto{\lab{}}{V'},\args{}',r,\abspre{}'}
	\]
	where
	\begin{gather*}
		V' := \vertsof{λ} ∪ \vertsof{@} ∪ \vertsof{\0}
		\\
		\args{}' : V' → (V')^*,~\multilinebox{so that for the induced indexed\\successor relation \m{\tgsucc'_{(\cdot)}} on \m{V'} it holds:}
		\\
		v \tgsucc'_k w \;⇔\; v \tgsucc_k ⋅ ~ \tgsuccisstar{0}{\S} w ~~\sideCondition{\m{∀ v,w ∈ V', k ∈ \set{0,1}}}
		\\
		\abspre{}' := \srestrictto{\abspre{}}{V'} ~\text{where \abspre{} is the abstraction-prefix function of \altg.}
	\end{gather*}
	\ltgstoaphotgsij{i}{j}{} induces the function \m{\ltgsisotoaphotgsisoij{i}{j}{} : \classltgsisoij{i}{j} → \classaphotgsisoi{i}, \atgiso = \eqcl{\altg}{\iso} ↦ \eqcl{\ltgstoaphotgsij{i}{j}{\altg}}{\iso}} on the isomorphism equivalence classes.
\end{proposition}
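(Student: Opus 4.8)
The statement asserts two things: that every λ-term-graph $\altg$ over $\sigTGij{i}{j}$ is sent by $\ltgstoaphotgsij{i}{j}{}$ to a genuine λ-ap-ho-term-graph over $\sigTGi{i}$, and that isomorphic inputs yield isomorphic outputs, so that the induced map $\ltgsisotoaphotgsisoij{i}{j}{}$ on isomorphism classes is well-defined. I would prove this in three steps: (a) $\ltgstoaphotgsij{i}{j}{\altg}$ is a well-defined, root-connected term graph over $\sigTGi{i}$; (b) the restriction to $V'$ of the (by \cref{lem:ltg}~\cref{lem:ltg:vi} unique) abstraction-prefix function $\abspre{}$ of $\altg$ is a correct abstraction-prefix function for it; (c) every isomorphism $h : \altg_1 \to \altg_2$ of $\sigTGij{i}{j}$-term-graphs restricts to an isomorphism $\ltgstoaphotgsij{i}{j}{\altg_1} \iso \ltgstoaphotgsij{i}{j}{\altg_2}$. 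All three steps draw on the structure of λ-term-graphs from \cref{lem:ltg}, above all the stack discipline of $\abspre{}$ and the fact (\cref{lem:ltg}~\cref{lem:ltg:v}) that access paths pass only through vertices in $\vertsof{λ} \cup \vertsof{@} \cup \vertsof{\S}$ and leave \S-vertices only along the $0$-indexed edge.

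For (a), the crux is that the \S-erasing successor relation $v \tgsucc'_k w \Longleftrightarrow v \tgsucc_k \cdot \tgsuccisstar{0}{\S} w$ is a function of the correct arity. Given $v \in V'$ and an index $k < \arity{\lab{v}}$ (arity now taken in $\sigTGi{i}$), I take the unique $k$-th successor $v_1$ of $v$ in $\altg$ and then follow $0$-indexed edges as long as the current vertex has label \S. By \cref{lem:ltg}~\cref{lem:ltg:v} this chain is uniquely determined; by \cref{ltg:ap-function:S0} each step through an \S-vertex strictly shortens the abstraction prefix, so the chain is finite and terminates at a vertex $w$ whose label is not \S. Since $V'$ is precisely the set of non-\S-vertices of $\altg$, we have $w \in V'$. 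Thus $\tgsucc'$ assigns to each $v \in V'$ exactly $\arity{\lab{v}}$ successors in $V'$, so $\ltgstoaphotgsij{i}{j}{\altg}$ is a $\sigTGi{i}$-term-graph; note in passing that for $i = 1$ the $0$-successor of a \0-vertex is by \cref{ltg:ap-function:01} already a λ-vertex, so no chain is traversed in that case. Root-connectedness follows because $\abspre{\r} = \emptyword$ while every \S-vertex has a nonempty abstraction prefix by \cref{ltg:ap-function:S0}, whence $\r \in V'$, and any access path in $\altg$ of a vertex $w \in V'$ decomposes, by \cref{lem:ltg}~\cref{lem:ltg:v}, into successive segments each consisting of one edge out of a $V'$-vertex followed by a (possibly empty) $0$-indexed \S-chain ending at the next $V'$-vertex — each such segment being precisely one $\tgsucc'$-step — so $w$ is reachable from $\r$ in $\ltgstoaphotgsij{i}{j}{\altg}$.

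For (b), the abstraction-prefix function of $\ltgstoaphotgsij{i}{j}{\altg}$ is to be the restriction of $\abspre{}$ to $V'$, and I would verify the clauses of \cref{def:abspre:function:siglambdai} for it; each follows at once from the corresponding clause of \cref{ltg:ap-function} together with the prefix-shortening of $0$-indexed \S-chains. Clause \cref{ap-function:root} is \cref{ltg:ap-function:root}, and \cref{ap-function:00} is \cref{ltg:ap-function:00}. For \cref{ap-function:abs}, if $v \in \vertsof{λ}$ with $v \tgsucc' w$, then the first edge of the underlying chain yields the prefix $\abspre{v}\prefixcon v$ by \cref{ltg:ap-function:abs}, and the ensuing \S-chain only shortens it, so $\abspre{w} \leq \abspre{v}\prefixcon v$; \cref{ap-function:app} is analogous via \cref{ltg:ap-function:app}. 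Clause \cref{ap-function:01} follows from \cref{ltg:ap-function:01}, which already forces the $0$-successor of a \0-vertex to be a λ-vertex, so no \S-chain is traversed and $\abspre{w}\prefixcon w = \abspre{v}$ carries over verbatim. Finally, $\abspre{}$ takes values in $(V')^*$ on $V'$ because abstraction prefixes in $\altg$ consist of λ-vertices (\cref{lem:ltg}), all of which lie in $V'$.

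For (c), let $h$ be an isomorphism of $\sigTGij{i}{j}$-term-graphs from $\altg_1$ to $\altg_2$. Since $h$ preserves labels and is bijective, it restricts to a bijection $V'_1 \to V'_2$, and it carries $0$-indexed \S-chains to $0$-indexed \S-chains, whence $v \tgsucc'_k w$ in $\altg_1$ if and only if $h(v) \tgsucc'_k h(w)$ in $\altg_2$; moreover $h(\r_1) = \r_2$ with both roots in $V'$. By \cref{prop:hom:image:absprefix:function:ltgs}, $h$ maps the abstraction-prefix function of $\altg_1$ to that of $\altg_2$, i.e.\ $h^*(\abspre{v}) = \abspre{h(v)}$ for all vertices $v$ of $\altg_1$ (each $\abspre{}$ being that of the respective graph); this in particular holds for $v \in V'_1$, where $h(v) \in V'_2$, which is exactly the extra condition on homomorphisms in \cref{def:homom:aplambdahotg}. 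Hence the restriction of $h$ to $V'_1$ is a bijective homomorphism of λ-ap-ho-term-graphs, i.e.\ an isomorphism $\ltgstoaphotgsij{i}{j}{\altg_1} \iso \ltgstoaphotgsij{i}{j}{\altg_2}$, so $\ltgsisotoaphotgsisoij{i}{j}{}$ is well-defined. I expect the only real obstacle to be step (a) — pinning down that $\tgsucc'$ is total of the correct arity and that root-connectedness is preserved — which is careful bookkeeping rather than genuine difficulty and rests wholly on the stack discipline of $\abspre{}$ (finiteness of \S-chains, shape of access paths) from \cref{lem:ltg}; once (a) is in hand, (b) and (c) are routine clause-by-clause checks.
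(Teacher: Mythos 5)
Your proof is correct, and it is exactly the routine verification that the paper leaves implicit: the paper states this proposition without proof, and your argument rests on precisely the supporting facts the paper provides for this purpose (the stack discipline and uniqueness of the abstraction-prefix function from \cref{lem:ltg}, and the compatibility of homomorphisms with abstraction prefixes from \cref{prop:hom:image:absprefix:function:ltgs}). In particular you correctly identify and discharge the only non-trivial point, namely that the \S-erasing successor relation is total of the right arity because each pass through an \S-vertex strictly shortens the abstraction prefix, so the chains are finite and end in \m{V'}.
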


The mappings \ltgstoaphotgsij{i}{j}{} and \aphotgstoltgsij{i}{j}{}
now define a correspondence between the class \classaphotgsi{i} of λ-ap-ho-term-graphs
and the class \classltgsij{i}{j} of λ-term-graphs in then sense as stated by the following theorem.

\begin{theorem}[correspondence between λ-ap-ho-term-graphs and λ-term-graphs]\label{thm:corr:aphotgs:ltgs}
	Let \m{i ∈ \set{0,1}} and \m{j ∈ \set{1,2}}. The mappings \ltgstoaphotgsij{i}{j}{} from \cref{prop:mappings:ltgs:to:aphotgs} and \aphotgstoltgsij{i}{j}{} from \cref{prop:mappings:aphotgs:to:ltgs} define a correspondence between the class of λ-term-graphs over \sigTGij{i}{j} and the class of λ-ap-ho-term-graphs over \sigTGi{i} with the following properties:
	\begin{enumerate}[(i)]
		\item\label{thm:corr:aphotgs:ltgs:i} \m{{\ltgstoaphotgsij{i}{j}{}} ∘ {\aphotgstoltgsij{i}{j}{}} = \idon{\classaphotgsi{i}}{}}. 
		\item\label{thm:corr:aphotgs:ltgs:ii} For all \m{\altg ∈ \classltgsij{i}{j}}: \m{(\aphotgstoltgsij{i}{j}{} ∘ {\ltgstoaphotgsij{i}{j}{})}(\altg) \funbisim^{\S} \altg}.
		\item\label{thm:corr:aphotgs:ltgs:iii} \ltgstoaphotgsij{i}{j}{} and \aphotgstoltgsij{i}{j}{} preserve and reflect the sharing orders on \classaphotgsi{i} and on \classltgsij{i}{j}:
		\begin{align*}
			& ∀ \aaphotg_1,\aaphotg_2 ∈ \classaphotgsi{i} ~ &
			\aphotgstoltgsij{i}{j}{\aaphotg_1} \funbisim \aphotgstoltgsij{i}{j}{\aaphotg_2}
			~ & \Longleftrightarrow~
			\aaphotg_1 \funbisim \aaphotg_2
			\\
			& ∀ \altg_1,\altg_2 ∈ \classltgsij{i}{j}) ~ &
			\ltgstoaphotgsij{i}{j}{\altg_1} \funbisim \ltgstoaphotgsij{i}{j}{\altg_2}
			~ & \Longleftrightarrow~
			\altg_1 \funbisim \altg_2
		\end{align*}
	\end{enumerate}
	Furthermore, statements analogous to \cref{thm:corr:aphotgs:ltgs:i}~\cref{thm:corr:aphotgs:ltgs:ii}, and \cref{thm:corr:aphotgs:ltgs:iii} hold for the correspondences \aphotgsisotoltgsisoij{i}{j}{} and \ltgsisotoaphotgsisoij{i}{j}{}, induced by \aphotgstoltgsij{i}{j}{} and \ltgstoaphotgsij{i}{j}{}, between the classes \classaphotgsisoi{i} and \classltgsisoij{i}{j} of isomorphism equivalence classes of graphs in \classaphotgsisoi{i} and \classltgsisoij{i}{j}, respectively.
\end{theorem}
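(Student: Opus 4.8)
The plan is to prove the three numbered assertions at the level of term graphs, and then to read off the statements for the induced maps on isomorphism equivalence classes, since isomorphisms are exactly the bijective homomorphisms and all constructions below commute with composition of homomorphisms. For brevity write $A := \aphotgstoltgsij{i}{j}{}$ and $B := \ltgstoaphotgsij{i}{j}{}$; both are well-defined by \cref{prop:mappings:aphotgs:to:ltgs} and \cref{prop:mappings:ltgs:to:aphotgs}. The key inputs are: every λ-term-graph carries a \emph{unique} correct abstraction-prefix function which behaves like a stack along edges (\cref{lem:ltg}~\cref{lem:ltg:iv},~\cref{lem:ltg:vi}); there are no infinite \S-chains; and, crucially for the reflection directions, homomorphisms between λ-term-graphs automatically respect the abstraction-prefix function (\cref{prop:hom:image:absprefix:function:ltgs}), so such a homomorphism sends non-\S vertices to non-\S vertices, maps each maximal \S-chain onto a maximal \S-chain, and preserves the length of the prefix-drop along every edge.

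For \cref{thm:corr:aphotgs:ltgs:i} I would compute $B(A(\aaphotg))$ directly: applying $A$ to $\aaphotg = \tuple{V,\lab{},\args{},r,\abspre{}}$ replaces each edge $v \tgsucc_k w$, along which the abstraction prefix shrinks by $\length{\abspre{v}} - \length{\abspre{w}}$ entries (or by one more when $v$ is a λ-vertex), with a fresh non-shared chain of that many scope-delimiter vertices $\tuple{v,k,w,\apre}$; applying $B$ discards exactly these \S-vertices and reinstates $v \tgsucc'_k w$ whenever $v \tgsucc_k \cdot \tgsuccisstar{0}{\S} w$ — which simply traces such a chain — while the recovered abstraction-prefix function is the restriction of the unique one of $A(\aaphotg)$, which on the surviving vertices $\pair{v}{\abspre{v}}$ is $\abspre{}$. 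So $B(A(\aaphotg))$ is $\aaphotg$ up to the canonical vertex renaming $v \mapsto \pair{v}{\abspre{v}}$, which on isomorphism classes is literally $\ltgsisotoaphotgsisoij{i}{j}{} \circ \aphotgsisotoltgsisoij{i}{j}{} = \idon{\classaphotgsisoi{i}}{}$. For \cref{thm:corr:aphotgs:ltgs:ii} I would take $\altg \in \classltgsij{i}{j}$, observe that $A(B(\altg))$ reinserts between each pair of edge-connected non-\S vertices a fresh, non-shared \S-chain whose length is forced by the prefix-drop, and define $h : A(B(\altg)) \to \altg$ by the identity on non-\S vertices and by sending an inserted vertex $\tuple{v,k,w,\apre}$ to the unique \S-vertex of $\altg$ reached from $v$ via its $k$-th edge and then $\tgsuccis{0}{\S}$-edges until the abstraction prefix equals $\apre$ (well-defined by the stack discipline, \cref{lem:ltg}~\cref{lem:ltg:iv}); verifying the homomorphism clauses of \cref{prop:hom:tgs} is routine, and since $h$ only ever identifies \S-vertices it is an \S-homomorphism, so $A(B(\altg)) \funbisim^{\S} \altg$.

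For \cref{thm:corr:aphotgs:ltgs:iii}, the tractable half is the $A$-direction together with ``$\Rightarrow$'' of the $B$-direction. Given $h : \aaphotg_1 \funbisim \aaphotg_2$, I would define $\tilde h : A(\aaphotg_1) \to A(\aaphotg_2)$ by $\pair{v}{\abspre{v}} \mapsto \pair{h(v)}{\abspre{h(v)}}$ on non-\S vertices (using $h^{*} \circ \absprei{1}{} = \absprei{2}{} \circ h$) and $\tuple{v,k,v',\apre} \mapsto \tuple{h(v),k,h(v'),h^{*}(\apre)}$ on the inserted \S-vertices; since $h$ preserves the prefix-drop, the inserted \S-chains match in length and $\tilde h$ is a homomorphism, so $A(\aaphotg_1) \funbisim A(\aaphotg_2)$. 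Conversely, a homomorphism $g : A(\aaphotg_1) \funbisim A(\aaphotg_2)$ respects labels and prefixes (\cref{prop:hom:image:absprefix:function:ltgs}), hence maps non-\S to non-\S and \S-chains to \S-chains, so its restriction to non-\S vertices is a homomorphism between $B(A(\aaphotg_1))$ and $B(A(\aaphotg_2))$, i.e.\ by \cref{thm:corr:aphotgs:ltgs:i} a homomorphism $\aaphotg_1 \funbisim \aaphotg_2$; and $B$ preserves $\funbisim$ simply by restricting homomorphisms of λ-term-graphs to their non-\S vertices. The induced statements for $\aphotgsisotoltgsisoij{i}{j}{}$ and $\ltgsisotoaphotgsisoij{i}{j}{}$ on isomorphism classes then follow as in the first paragraph.

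The main obstacle is the reflection direction for $B$: inferring $\altg_1 \funbisim \altg_2$ from $\ltgstoaphotgsij{i}{j}{\altg_1} \funbisim \ltgstoaphotgsij{i}{j}{\altg_2}$. This does not follow formally from \cref{thm:corr:aphotgs:ltgs:i}, \cref{thm:corr:aphotgs:ltgs:ii} and the $A$-direction, since those only yield a span $\altg_1 \convfunbisim A(B(\altg_1)) \funbisim \altg_2$, and a span need not collapse to a single homomorphism $\altg_1 \to \altg_2$ — \S-vertices may be shared differently in $\altg_1$ and $\altg_2$. The plan is to handle this by working with canonical representatives for \S-sharing: by \cref{thm:corr:aphotgs:ltgs:ii} each λ-term-graph $\funbisim^{\S}$-dominates the \S-maximally-unshared form $A(B(\altg))$, and by the complete-lattice property of \cref{prop:funbisim:succs:of:tgs:iso:complete:lattice} (applied to \S-bisimulation) it is $\funbisim^{\S}$-dominated by an \S-maximally-shared form; $B$ is bijective on either family of representatives, and on those the reflection is immediate from \cref{thm:corr:aphotgs:ltgs:i} — so the correspondence reflects $\funbisim$ exactly in the ``bijective up to sharing of scope-delimiter vertices'' sense announced for $\ltgstoaphotgsij{i}{j}{}$ and $\aphotgstoltgsij{i}{j}{}$. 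The remaining work — pushing the abstraction-prefix bookkeeping through and checking that the case distinctions $i \in \set{0,1}$ (variable backlinks) and $j \in \set{1,2}$ (\S-backlinks) only make some homomorphism clauses vacuous — is routine.
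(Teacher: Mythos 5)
Your constructions for parts \cref{thm:corr:aphotgs:ltgs:i} and \cref{thm:corr:aphotgs:ltgs:ii}, for both implications of the \aphotgstoltgsij{i}{j}{}-half of \cref{thm:corr:aphotgs:ltgs:iii}, and for the preservation half of the \ltgstoaphotgsij{i}{j}{}-direction are the right ones and go through: the unique, stack-like abstraction-prefix function (\cref{lem:ltg}) pins down the lengths and positions of the inserted \S-chains, \cref{prop:hom:image:absprefix:function:ltgs} guarantees that a homomorphism between λ-term-graphs preserves labels and prefix-drops and hence carries \S-chains onto \S-chains, and extending or restricting homomorphisms along such chains is then mechanical. (The thesis states this theorem without proof, so there is nothing to compare your route against; none of these parts is in doubt.)

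The genuine issue is the reflection half for \ltgstoaphotgsij{i}{j}{}, which you have correctly diagnosed — but your proposed repair cannot close it, because the literal statement is refuted by the paper's own \cref{prop:ltgstoaphotgsij-not-injective} and \cref{ex:ltgstoaphotgsij-not-injective}. Take $\altg_1 := \altg$ (the \S-shared graph there) and $\altg_2 := \altg'$ (its \S-unshared variant): then $\ltgstoaphotgsij{0}{1}{\altg_1} = \ltgstoaphotgsij{0}{1}{\altg_2}$, so $\ltgstoaphotgsij{0}{1}{\altg_1} \funbisim \ltgstoaphotgsij{0}{1}{\altg_2}$ holds trivially, yet $\altg_1 \funbisim \altg_2$ would be a homomorphism from the more \S-shared to the less \S-shared graph, which together with $\altg_2 \funbisim^{\S} \altg_1$ and \cref{prop:funbisim:anti:symmetric}~\cref{prop:funbisim:anti:symmetric:ii} would force $\altg_1 \iso \altg_2$ — false. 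What your span argument actually delivers is $\altg_1 \convfunbisim^{\S} (\aphotgstoltgsij{i}{j}{} \circ \ltgstoaphotgsij{i}{j}{})(\altg_1) \funbisim \altg_2$, i.e.\ reflection only up to \S-bisimilarity, and your closing paragraph on canonical representatives proves exactly this weakened form and nothing more. So the work you defer as routine is not routine but impossible: the second equivalence in \cref{thm:corr:aphotgs:ltgs:iii} has to be read modulo $\bisim^{\S}$ (equivalently, restricted to the image of \aphotgstoltgsij{i}{j}{}, where it does follow from \cref{thm:corr:aphotgs:ltgs:i} as you note). This is precisely the ``almost bijective'' caveat the thesis itself makes via \cref{prop:funbisimS:bisimS:ltgstoaphotgs} and the remark following it; you should state that weakening explicitly rather than leaving it implicit in the last paragraph.
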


\begin{example}\label{ex:corr:aphotgs:ltgs}
	The λ-ap-ho-term-graphs in \cref{fig:lambdaaphotg} correspond to the λ-ap-ho-term-graphs in \cref{fig:lambdatg} via \aphotgstoltgsij{i}{j}{} and \ltgstoaphotgsij{i}{j}{} as follows:
	\begin{hspread}
		\m{\aphotgstoltgsij{i}{j}{\aaphotg_0} = \altg_0'} &
		\m{\aphotgstoltgsij{i}{j}{\aaphotg_1} = \altg_1'} &
		\m{\ltgstoaphotgsij{i}{j}{\altg_0} = \aaphotg_0'} &
		\m{\ltgstoaphotgsij{i}{j}{\altg_1} = \aaphotg_1'}
	\end{hspread}
\end{example}

\begin{proposition}\label{prop:ltgstoaphotgsij-not-injective}
	The mapping \ltgstoaphotgsij{i}{j}{} from λ-term-graphs to λ-ap-ho-term-graphs is not injective.
	\begin{proof}
		This is witnessed by the following example.
	\end{proof}
\end{proposition}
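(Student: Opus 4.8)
The plan is to prove the proposition by exhibiting an explicit counterexample: two non-isomorphic λ-term-graphs over \sigTGij{i}{j} that \ltgstoaphotgsij{i}{j}{} sends to the same λ-ap-ho-term-graph. The conceptual reason this must be possible is that \ltgstoaphotgsij{i}{j}{} erases all \S-vertices and routes edges through them (it keeps only the $λ$-, $@$-, and \0-vertices, see \cref{prop:mappings:ltgs:to:aphotgs}); consequently, any two λ-term-graphs that agree on their $λ$-, $@$-, \0-vertices and differ only in how scope-delimiter vertices are \emph{shared} are collapsed to one and the same higher-order term graph, whereas by \cref{thm:corr:aphotgs:ltgs}~\cref{thm:corr:aphotgs:ltgs:ii} the round-trip $\aphotgstoltgsij{i}{j}{} ∘ \ltgstoaphotgsij{i}{j}{}$ only recovers a graph up to \S-functional bisimulation, not up to isomorphism.

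Concretely, I would take \aaphotg to be the λ-ap-ho-term-graph over \sigTGi{i} associated (via the correspondences of \cref{thm:corr:lhotgs:aphotgs} and \cref{thm:corr:aphotgs:ltgs}) with the λ-term \m{\abs{x}{\abs{y}{\app{x}{x}}}}, with the scope of \m{\abs{y}} chosen \emph{not} eagerly but so as to extend down through the application vertex (so that $y$ is still in scope at the $@$-vertex but no longer at either occurrence of $x$). Set $\altg := \aphotgstoltgsij{i}{j}{\aaphotg}$. Reading off the definition of \aphotgstoltgsij{i}{j}{} in \cref{prop:mappings:aphotgs:to:ltgs}, the term graph $\altg$ contains \emph{two distinct} scope-delimiter vertices, $s_0$ on the edge from the $@$-vertex to its $0$-th successor and $s_1$ on the edge to its $1$-st successor; both "pop" \m{\abs{y}}, both have the $x$-occurrence vertex as their $0$-th successor, and for $j=2$ both carry \m{\abs{y}} as backlink. (Here it is essential to use the non-eager scope: with the eager scope there is only a single \S-vertex, placed directly below \m{\abs{y}}, which cannot be merged with anything.) Then I would form $\altg'$ from $\altg$ by identifying $s_0$ and $s_1$ into one vertex $s$, with the $@$-vertex now pointing to $s$ via both indices. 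Since $s_0$ and $s_1$ carry the same abstraction prefix (the prefix of the $@$-vertex) and, for $j=2$, the same backlink, one checks that $\altg'$ still admits a correct abstraction-prefix function — i.e.\ it is again a λ-term-graph over \sigTGij{i}{j} — either directly against the conditions \cref{ltg:ap-function:app}, \cref{ltg:ap-function:S0}, \cref{ltg:ap-function:S1}, or by noting that \m{\altg \funbisim^{\S} \altg'} and invoking closure of \classltgsij{i}{j} under functional bisimulation (a consequence of \cref{lem:hom:image:absprefix:function:ltgs}~\cref{lem:hom:image:absprefix:function:ltgs:i}). Clearly \m{\altg \not\iso \altg'}, since they have different numbers of vertices.

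Finally I would conclude: because \ltgstoaphotgsij{i}{j}{} retains exactly the $λ$-, $@$-, \0-vertices (with their names) and defines $\tgsucc'$ by composing through \S-vertices, and $\altg$ and $\altg'$ agree on all those vertices, on the composed successor relation restricted to them, and on their abstraction prefixes restricted to them, we get \m{\ltgstoaphotgsij{i}{j}{\altg} = \ltgstoaphotgsij{i}{j}{\altg'}} (and this common value is \aaphotg, using \cref{thm:corr:aphotgs:ltgs}~\cref{thm:corr:aphotgs:ltgs:i}), while \m{\altg \not\iso \altg'}. Hence \ltgstoaphotgsij{i}{j}{} is not injective. The main obstacle is the construction in the middle step: one has to choose the counterexample so that the two scope-delimiter vertices are genuinely \emph{mergeable} into a \emph{valid} λ-term-graph. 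Naive attempts such as \m{\abs{x}{\app{(\abs{y}{x})}{(\abs{z}{x})}}} fail, because there the two delimiter vertices would be forced to carry incompatible abstraction prefixes (or, for $j=2$, incompatible backlinks); placing both delimiters immediately below a single application vertex and having them pop the \emph{same} abstraction is what makes the merge legal.
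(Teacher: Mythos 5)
Your proposal is correct and takes essentially the same route as the paper: the paper's proof is likewise just a counterexample (\cref{ex:ltgstoaphotgsij-not-injective}) consisting of two λ-term-graphs that differ only in whether their scope-delimiter vertices are shared, both mapped by \ltgstoaphotgsij{i}{j}{} to the same λ-ap-ho-term-graph. One small caveat: your fallback justification via ``closedness of \classltgsij{i}{j} under functional bisimulation'' appeals to a statement the paper explicitly refutes (\cref{ltgs-not-closed-under-funbisim}), but your primary direct verification of the conditions \cref{ltg:ap-function:app}, \cref{ltg:ap-function:S0}, \cref{ltg:ap-function:S1} for the merged graph is sound, so the argument stands.
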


\begin{example}[\ltgstoaphotgsij{i}{j}{} is not injective]\label{ex:ltgstoaphotgsij-not-injective}
	Consider the λ-ap-ho-term-graph \m{\aaphotg ∈ \classaphotgsi{0}}, and the λ-term-graphs \m{\altg,\altg' ∈ \classltgsij{0}{1}} below. It holds that \m{\ltgstoaphotgsij{0}{1}{\atg} = \ahotg = \ltgstoaphotgsij{0}{1}{\atg'}}.
	\begin{hspread}
		\graph{\altg}{rem-corr-aphotgs-ltgs-nobij-snodes-shared} &
		\graph{\aaphotg}{rem-corr-aphotgs-ltgs-nobij-aphotg} &
		\graph{\altg'}{rem-corr-aphotgs-ltgs-nobij-snodes-unshared}
	\end{hspread}
\end{example}

\begin{para}[a weaker than bijective correspondence]
	In contrast to the correspondence between λ-ho-term-graphs and λ-ap-ho-term-graphs, which is bijective (\cref{thm:corr:lhotgs:aphotgs}), due to \cref{prop:ltgstoaphotgsij-not-injective} we have no bijective correspondence between λ-ap-ho-term-graph and λ-term-graphs. However, note that in \cref{ex:ltgstoaphotgsij-not-injective} it holds that \m{\altg' \funbisim^{\S} \altg}, and consequently \m{\altg \bisim^\S \altg'}. We can say that \atg and \m{\altg'} only differ in their `degree of \S-sharing'. Since \ltgstoaphotgsij{i}{j}{} ignores \S-vertices and their sharing, the degree of \S-sharing cannot be reflected in the corresponding λ-ap-ho-term-graph. This observation leads us to look for a weaker (than bijective) correspondence between λ-ap-ho-term-graph and λ-term-graphs: the weakening consists of equating \S-bisimilar λ-term-graphs as in the following proposition.
\end{para}

\begin{proposition}\label{prop:funbisimS:bisimS:ltgstoaphotgs}
	Let \m{i ∈ \set{0,1}} and \m{j ∈ \set{1, 2}}. The mapping \ltgstoaphotgsij{i}{j}{} in \cref{prop:mappings:ltgs:to:aphotgs} maps two λ-term-graphs that are \S-bisimilar to the same λ-ap-ho-term-graph. That is, for all λ-term-graphs \m{\atg_1,\atg_2} over \sigTGij{i}{j} it holds: \[\atg_1 \bisim^{\S} \atg_2 ~ \Longrightarrow~ \ltgstoaphotgsij{i}{j}{\atg_1} \iso \ltgstoaphotgsij{i}{j}{\atg_2}\]
\end{proposition}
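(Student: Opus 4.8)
The plan is to show that the ``non-$\S$ part'' of an $\S$-bisimulation descends to an \emph{isomorphism} of the image $\lambda$-ap-ho-term-graphs. So let $\abisim$ be an $\S$-bisimulation between $\atg_1$ and $\atg_2$, and let $\h$ be the restriction of $\abisim$ to pairs of non-$\S$-vertices. By \cref{def:bisimulation_ltgs}, $\h$ is a bijective function from the vertex set $\vertsiof{1}{\lambda} \cup \vertsiof{1}{@} \cup \vertsiof{1}{\0}$ of $\ltgstoaphotgsij{i}{j}{\atg_1}$ onto the corresponding vertex set of $\ltgstoaphotgsij{i}{j}{\atg_2}$ (here I would first note that a root of a $\lambda$-term-graph is never an $\S$-vertex, since by \cref{ltg:ap-function:S0} an $\S$-vertex has a non-empty abstraction prefix, whereas \cref{ltg:ap-function:root} forces $\abspre{\r} = \emptyword$; so both roots lie in these vertex sets). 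It then remains to verify that $\h$ meets the homomorphism conditions for $\lambda$-ap-ho-term-graphs of \cref{def:homom:aplambdahotg}: preservation of root, labels, argument lists, and abstraction prefixes. The first two are immediate, since $\abisim$ relates $\r_1$ to $\r_2$ and preserves labels.

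For the argument condition I would first establish a transfer property for maximal $\S$-chains. Following an index-$0$ successor out of an $\S$-vertex strictly decreases the length of the abstraction prefix (by \cref{ltg:ap-function:S0}, cf.\ \cref{lem:ltg}), hence such chains are finite and end at a unique non-$\S$-vertex; this is exactly the well-definedness of the argument function built into $\ltgstoaphotgsij{i}{j}{}$ in \cref{prop:mappings:ltgs:to:aphotgs}. Now suppose $\v_1 \abisim \v_2$ and $\v_1 \tgsucc_k \cdot \tgsuccisstar{0}{\S} \w_1$ in $\atg_1$ with $\w_1$ non-$\S$, i.e.\ $\v_1 \tgsucc'_k \w_1$ in $\ltgstoaphotgsij{i}{j}{\atg_1}$. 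Since $\abisim$ preserves labels and relates the ordered successor words pointwise, $\v_2$ has an index-$k$ successor $\abisim$-related to that of $\v_1$; iterating this along the index-$0$ $\S$-chain (an induction on its finite length) yields $\w_2$ with $\w_1 \abisim \w_2$ and $\v_2 \tgsucc_k \cdot \tgsuccisstar{0}{\S} \w_2$, so $\w_2$ is non-$\S$ and $\w_2 = \h(\w_1)$, and $\v_2 \tgsucc'_k \w_2$ in $\ltgstoaphotgsij{i}{j}{\atg_2}$. The converse inclusion follows symmetrically, using that $\converse{\abisim}$ is an $\S$-bisimulation between $\atg_2$ and $\atg_1$. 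Since in both image graphs the arity of a vertex equals the arity of its label in $\sigTGi{i}$, these two inclusions give that $\h$ maps the successor list of $\v$ in $\ltgstoaphotgsij{i}{j}{\atg_1}$ exactly onto that of $\h(\v)$ in $\ltgstoaphotgsij{i}{j}{\atg_2}$.

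For the abstraction-prefix condition I would prove, by induction along access paths, that $\abisim$-related vertices carry abstraction prefixes of equal length whose entries are pairwise $\abisim$-related: the root has empty prefix; a step over a $\lambda$-vertex $\v$ extends both prefixes on the right, by $\v$ and by its $\abisim$-partner; a step over an $@$-vertex changes neither; and a step over a $\0$- or $\S$-vertex pops the rightmost entry on both sides. As prefix entries are $\lambda$-vertices (\cref{lem:ltg}), hence non-$\S$, pairwise $\abisim$-relatedness of the entries means precisely that $\h$ sends the prefix of $\v_1$ entrywise to the prefix of its $\abisim$-partner; and since the abstraction-prefix functions of $\ltgstoaphotgsij{i}{j}{\atg_k}$ are just the restrictions of those of $\atg_k$, this is the required condition $\h$ must respect. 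Combining the four conditions, $\h$ is a bijective homomorphism, i.e.\ an isomorphism, of $\lambda$-ap-ho-term-graphs from $\ltgstoaphotgsij{i}{j}{\atg_1}$ to $\ltgstoaphotgsij{i}{j}{\atg_2}$, which yields $\ltgstoaphotgsij{i}{j}{\atg_1} \iso \ltgstoaphotgsij{i}{j}{\atg_2}$.

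The step I expect to be the main obstacle is the $\S$-chain transfer: one must ensure that an $\S$-bisimulation, which a priori is not required to be ``$\S$-sharing-aware'', nevertheless matches up the two deterministic index-$0$ $\S$-chains exactly, so that the first non-$\S$-vertices they reach correspond under $\h$. The two facts that make this work are the termination of those chains (via the strictly decreasing abstraction-prefix length) and the pointwise $\abisim$-relatedness of the successor words guaranteed by the definition of term-graph bisimulation; the rest is routine bookkeeping over the possible labels $\lambda$, $@$, $\0$, $\S$.
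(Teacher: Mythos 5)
The paper states this proposition without a proof, so there is no official argument to compare against; your proposal is correct and is exactly the argument one would expect. The two points that need care — that an $\S$-bisimulation forces the maximal index-$0$ $\S$-chains on the two sides to have equal length and $\abisim$-related endpoints (via label preservation plus termination of such chains by strictly decreasing prefix length), and that $\abisim$-related vertices carry entrywise $\abisim$-related abstraction prefixes (by induction along access paths, which suffices because every pair $(\v, \h(\v))$ is reached by a matched path from the roots) — are both identified and handled correctly.
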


\begin{remark}[λ-term-graphs over \sigTGij{i}{j} up to \S-bisimilarity]
	In the original paper \cite[5.15]{grab:roch:representations}, we use this insight to develop a variation on \classltgsij{i}{j}, where the correctness condition from \cref{ltg:ap-function} is relaxed to non-delimiter vertices. Consequently an \S-vertex is allowed to be a delimiter for two different scopes (as \G in \cref{ex:ltgstoaphotgsij-not-injective}). We show that these λ-term-graphs are always \S-bisimilar to graphs in \classltgsij{i}{j} and that there is a bijective correspondence with λ-ap-ho-term-graphs if we consider equivalence classes of these λ-term-graphs up to isomorphism and \S-bisimilarity. We omit this part this thesis. Instead we lean on the not quite bijective correspondence from \cref{prop:funbisimS:bisimS:ltgstoaphotgs} as is.
\end{remark}

\begin{para}[outlook]
	Be reminded that we seek to implement bisimulation and functional bisimulation on higher-order term graphs via (functional) bisimulation on λ-term-graphs (which are first-order term graphs). Now that we have a correspondence result that relates λ-term-graphs and higher-order term graphs, we investigate how λ-term-graphs behave under bisimulation and functional bisimulation, specifically which classes of λ-term-graphs are closed under (functional) bisimulation.
\end{para}

\section{Not closed under (functional) bisimulation}\label{sec:not:closed}

\begin{para}[overview]
	In this section we collect negative results concerning closedness under bisimulation and functional bisimulation for the classes of λ-term-graphs as introduced in the previous section.
\end{para}

\begin{proposition}\label{prop:ltgs:not:closed:under:bisim}
	None of the classes \classltgsi{1} and \classltgsij{i}{j}, for\/ \m{i ∈ \set{0,1}} and \m{j ∈ \set{1,2}}, of λ-term-graphs are closed under bisimulation.
\end{proposition}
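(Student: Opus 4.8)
To show that none of the classes $\classltgsi{1}$ and $\classltgsij{i}{j}$ (for $i \in \set{0,1}$, $j \in \set{1,2}$) are closed under bisimulation, I plan to exhibit, for each relevant signature, a pair of bisimilar term graphs of which one is a $\lambda$-term-graph (admits a correct abstraction-prefix function) and one is not. The natural place to look for such a counterexample is where bisimulation can collapse two vertices that carry \emph{different} abstraction prefixes in a correct assignment. Since a correct abstraction-prefix function is uniquely determined (\cref{lem:ltg}~\cref{lem:ltg:vi}), it suffices to find a bisimilar partner graph in which the enforced prefix values lead to a contradiction with one of the correctness conditions from \cref{ltg:ap-function} — for instance, a $\0$-vertex whose backlink is forced to point to an abstraction not in its prefix, or an $\S$-vertex whose two access paths demand incompatible prefixes.

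\textbf{Key steps.} First I would treat the simplest case, say $\classltgsij{0}{1}$, and build a small $\lambda$-term-graph $\altg$ — e.g.\ a graph for something like $\abs{x}{\abs{y}{\app{(\app{x}{x})}{y}}}$ or $\abs{x}{\app{x}{x}}$ in which a subgraph appears twice at positions with \emph{different} surrounding abstraction prefixes (this is exactly the `entanglement' phenomenon from \cref{ex:entangled} and \cref{ex:expressible:simpleletrec}, which is what makes scope handling delicate). Then I would identify a bisimulation $\abisim$ on $\altg$ that relates those two occurrences; factoring $\altg$ by $\abisim$ (or, more precisely, exhibiting a term graph $\altg'$ with $\altg \bisim \altg'$ in which the two occurrences are merged into a single vertex) yields a term graph $\altg'$ over the same signature. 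Next I would argue that $\altg'$ admits \emph{no} correct abstraction-prefix function: the merged vertex would have to carry a single prefix value, but tracing access paths through $\altg'$ via the conditions \cref{ltg:ap-function:abs}, \cref{ltg:ap-function:app}, \cref{ltg:ap-function:S0} forces two incompatible values (or, in the $i=1$ case, forces a backlink target via \cref{ltg:ap-function:01} that violates \cref{lem:ltg}~\cref{lem:aphotg:i}, since the target abstraction does not occur in the prefix). Hence $\altg' \notin \classltgsij{0}{1}$ while $\altg \in \classltgsij{0}{1}$ and $\altg \bisim \altg'$. Finally I would note that the same underlying construction, decorated with variable backlinks (for $i=1$) and/or scope-delimiter backlinks (for $j=2$), works verbatim for $\classltgsi{1}$, $\classltgsij{1}{1}$, $\classltgsij{0}{2}$, and $\classltgsij{1}{2}$, since adding backlinks only makes the correctness conditions \emph{harder} to satisfy, so the negative graph $\altg'$ remains outside the class; it may be cleanest to present one figure with the optional backlink edges drawn dotted, mirroring the presentation style of \cref{fig:siglambda1-not-closed}.

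\textbf{Main obstacle.} The hard part will be choosing a counterexample small enough to verify by hand yet genuinely forcing the prefix clash — the merge must create a vertex reachable by two access paths whose prefix computations (which by \cref{lem:ltg}~\cref{lem:ltg:iv} behave like a stack) are provably incompatible, and I must be careful that the candidate bisimulation really is a bisimulation (the $(\text{arguments})$ condition must be checked on all related pairs, including the merged $\S$- or $\0$-successors). A secondary subtlety is making sure the argument covers \emph{all six} listed classes uniformly rather than by six separate ad hoc pictures; I expect a single parametrised construction together with the observation ``extra backlinks only add constraints'' to suffice, but writing that observation rigorously — i.e.\ that non-existence of a correct abstraction-prefix function persists when the signature is enriched — needs a short lemma-style remark referencing the redundancy notes in \cref{ltg:ap-function} (namely that for $i=1$, \cref{ltg:ap-function:00} follows from \cref{ltg:ap-function:01}, and for $j=1$, \cref{ltg:ap-function:S1} is vacuous).
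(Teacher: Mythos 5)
Your overall strategy --- a small explicit counterexample in which one of a bisimilar pair of term graphs admits a correct abstraction-prefix function and the other does not, the failure being forced by a prefix clash at a merged vertex with two incompatible access paths --- is exactly how the paper argues: it packages such counterexamples as instances of non-closure under functional and converse functional bisimulation (\cref{ltgs-not-closed-under-funbisim}) and obtains the present proposition as an immediate corollary. For the four classes \classltgsij{i}{j} your ``collapse'' construction (quotient a valid λ-term-graph by a bisimulation merging two occurrences that carry different prefixes) is viable, and your observation that enriching the signature with backlinks only adds correctness conditions, so that non-existence of a correct abstraction-prefix function persists, is sound.

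The gap is the claim that the same construction works ``verbatim'' for \classltgsi{1}. It cannot: the quotient of a term graph by a bisimulation is a homomorphic image, i.e.\ your \m{\altg'} satisfies \m{\altg \funbisim \altg'}, and the class \classltgsi{1} \emph{is} closed under functional bisimulation (this is stated in \cref{sec:closed}; it is precisely the one class that survives the negative results of \cref{ltgs-not-closed-under-funbisim} in the collapse direction). So for \classltgsi{1} your collapsed graph \m{\altg'} would again be a λ-term-graph and the argument would prove nothing. For that class the counterexample must run in the converse direction: start from a valid λ-term-graph \m{\atg_0} and exhibit an \emph{unsharing} \m{\atg_1} with \m{\atg_1 \funbisim \atg_0} (hence \m{\atg_1 \bisim \atg_0}) in which an abstraction vertex of \m{\atg_0} is split into two, producing overlapping scopes, so that \m{\atg_1} admits no correct abstraction-prefix function --- this is part (ii) of the proof of \cref{ltgs-not-closed-under-funbisim}. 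With that one repair (collapse, or a chain \m{\atg_2 \funbisim \atg_1 \funbisim \atg_0} whose middle element fails, for the classes \classltgsij{i}{j}; unsharing for \classltgsi{1}) your argument goes through, since both directions are special cases of bisimilarity.
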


\begin{para}
	This proposition is an immediate consequence of the following proposition, which can be viewed as a refinement, because it formulates non-closedness of classes of λ-term-graphs under specialisations of bisimulation, namely for functional bisimulation (under which some classes are not closed), and for converse functional bisimulation (under which none of the classes considered here is closed).
\end{para}

\begin{proposition}\label{ltgs-not-closed-under-funbisim}
	None of the classes \classltgsij{i}{j} of λ-term-graphs for\/ \m{i ∈ \set{0,1}} and \m{j ∈ \set{1,2}} are closed under functional bisimulation, or under converse functional bisimulation. Additionally, the class \classltgsi{1} of λ-term-graphs is not closed under converse functional bisimulation.
	\begin{proof}
		We proof the following statements by giving counterexamples:
		\begin{enumerate}[(i)]
			\item\label{ltgs-not-closed-under-funbisim:i}
				None of the classes \classltgsij{0}{j} for \m{j ∈ \set{1,2}} are closed under \funbisim, or under \convfunbisim.
			\item\label{ltgs-not-closed-under-funbisim:ii}
				None of the classes \classltgsi{1} and \classltgsij{1}{j} for \m{j ∈ \set{1,2}} of λ-term-graphs are closed under converse functional bisimulation \convfunbisim.
			\item\label{ltgs-not-closed-under-funbisim:iii}
				The class \classltgsij{1}{1} of λ-term-graphs is not closed \funbisim.
			\item\label{ltgs-not-closed-under-funbisim:iv}
				The class \classltgsij{1}{2} of λ-term-graphs is not closed under \funbisim.
		\end{enumerate}
		The counterexamples:
		\begin{enumerate}[(i)]
			\item\label{ltgs-not-closed-under-funbisim:i:proof}
				Let \Delta be one of the signatures \sigTGij{0}{j} \m{j ∈ \set{1,2}}. Consider the following term graphs over \Delta:
				\begin{hspread}
					\graph{\atg-2}{lambdatgs-not-closed-under-funbisim-convfunbisim-item-i-g2} &
					\graph{\atg-1}{lambdatgs-not-closed-under-funbisim-convfunbisim-item-i-g1} &
					\graph{\atg-0}{lambdatgs-not-closed-under-funbisim-convfunbisim-item-i-g0}
				\end{hspread}
				It holds that \m{\atg_2 \funbisim \atg_1 \funbisim \atg_0}. But while \m{\atg_2} and \m{\atg_0} admit correct abstraction-prefix functions over \Delta (since the implied scopes (drawn shaded) are nested), this is not the case for \m{\atg_1} (overlapping scopes). Therefore, the class of λ-term-graphs over \Delta is closed neither under functional bisimulation nor under converse functional bisimulation.
			\item\label{ltgs-not-closed-under-funbisim:ii:proof}
				Let \Delta be one of the signatures \sigTGi{1} and \sigTGij{1}{j}. Consider the following term graphs over \Delta:
				\begin{hspread}
					\graph{\atg-1}{lambdatgs-not-closed-under-funbisim-convfunbisim-item-ii-g1} &
					\graph{\atg-0}{lambdatgs-not-closed-under-funbisim-convfunbisim-item-ii-g0}
				\end{hspread}
				It holds that \m{\atg_1 \funbisim \atg_0}. But while \m{\atg_0} admits a correct abstraction-prefix function, this is not the case for \m{\atg_1} (overlapping scopes). Therefore, the class of λ-term-graphs over \Delta is not closed under converse functional bisimulation.
			\item\label{ltgs-not-closed-under-funbisim:iii:proof}
				Consider the following term graphs over \sigTGij{1}{1}:
				\begin{hspread}
					\graph{\atg-1}{lambdatgs-not-closed-under-funbisim-convfunbisim-item-iii-g1} &
					\graph{\atg-0}{lambdatgs-not-closed-under-funbisim-convfunbisim-item-iii-g0}
				\end{hspread}
				It holds that \m{\atg_1 \funbisim \atg_0}. However, while \m{\atg_1} admits a correct abstraction-prefix function, this is not the case for \m{\atg_0} (overlapping scopes). Therefore the class of λ-term-graphs over \sigTGij{1}{1} is not closed under functional bisimulation.
			\item\label{ltgs-not-closed-under-funbisim:iv:proof}
				Consider the following term graphs over \sigTGij{1}{2}:
				\begin{hspread}
					\graph{\atg-1}{lambdatgs-over-siglambda12-not-closed-under-funcbisim-g1} &
					\graph{\atg-0}{lambdatgs-over-siglambda12-not-closed-under-funcbisim-g0}
				\end{hspread}
				It holds that \m{\atg_1 \funbisim \atg_0}. However, while \m{\atg_1} admits a correct abstraction-prefix function, this is not the case for \m{\atg_0} (overlapping scopes). Therefore the class of λ-term-graphs over \sigTGij{1}{2} is not closed under functional bisimulation.
		\end{enumerate}
	\end{proof}
\end{proposition}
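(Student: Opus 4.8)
The plan is to prove all the non-closure claims by exhibiting explicit counterexamples, exploiting the fact that the defining property of each class in question — the existence of a correct abstraction-prefix function, equivalently a well-nested scope structure — is fragile under the vertex-merging performed by a functional bisimulation and under the unmerging performed by a converse functional bisimulation. First I would isolate the key obstruction: by the correctness conditions \cref{ltg:ap-function:abs}, \cref{ltg:ap-function:01}, \cref{ltg:ap-function:S0}, \cref{ltg:ap-function:S1}, the abstraction prefix behaves like a stack along every access path (\cref{lem:ltg}~\cref{lem:ltg:iv}), so once two vertices carrying incomparable abstraction prefixes are identified, the implied scopes of two abstractions both reaching the merged vertex would be forced to partially overlap, which \cref{lem:lhotg}~\cref{lem:lhotg:ii} (transported along the correspondence of \cref{thm:corr:aphotgs:ltgs}) forbids. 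This is the single phenomenon driving every counterexample: "overlapping scopes" arising after a sharing or unsharing operation.

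Next I would construct the witnessing term graphs case by case, always of the shape $\altg_2 \funbisim \altg_1 \funbisim \altg_0$ or $\altg_1 \funbisim \altg_0$, and for each check two things: (i) the displayed map is genuinely a homomorphism of first-order term graphs over the relevant signature, hence respects labels, arities, the root, and — crucially when $i=1$ or $j=2$ — the variable and scope-delimiter backlinks; and (ii) a scope/abstraction-prefix analysis showing precisely which of the graphs admit a correct abstraction-prefix function and which provably do not. For the signatures $\sigTGij{0}{j}$ (no variable backlinks) a single three-graph chain $\altg_2 \funbisim \altg_1 \funbisim \altg_0$ in which the endpoints have nested scopes but the middle graph has overlapping scopes simultaneously refutes closure under $\funbisim$ (from $\altg_2$) and under $\convfunbisim$ (from $\altg_0$). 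For $\classltgsi{1}$ and $\sigTGij{1}{j}$ the variable backlinks rigidify the admissible homomorphisms, so the two directions need separate examples: to defeat $\convfunbisim$ one exhibits a $\altg_1$ with overlapping scopes that maps onto a valid $\altg_0$ with nested scopes; to defeat $\funbisim$ (for $\sigTGij{1}{1}$ and $\sigTGij{1}{2}$) one exhibits a valid $\altg_1$ whose merging $\altg_0$ develops overlapping scopes. The presence versus absence of delimiter backlinks ($j=2$ against $j=1$) changes which merges are admissible, so slightly different pictures are required for the two values of $j$.

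Finally, since every class listed fails closure under at least one of $\funbisim$ or $\convfunbisim$, and both are special cases of bisimulation (each witnessing pair is in particular bisimilar), the companion statement \cref{prop:ltgs:not:closed:under:bisim} drops out immediately, and I would record it as a corollary of the examples above rather than re-argue it.

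The main obstacle I anticipate is combinatorial rather than conceptual: designing minimal term graphs that simultaneously (a) admit the intended homomorphism as first-order term graphs while respecting all backlink edges dictated by $i$ and $j$, and (b) have exactly the intended scope-nesting status, one valid and one invalid. The delicate point is the interaction with backlinks — over $\sigTGij{1}{j}$ a homomorphism cannot merge a variable (or, for $j=2$, a delimiter) pointing at one abstraction with one pointing at another unless those abstractions are also merged — which is precisely what pins down which of the two non-closure directions a given small picture can witness and forces the split on $i$ and $j$. Verifying the scope analyses themselves is routine given \cref{lem:ltg} and \cref{lem:lhotg}, but must be carried out for each picture.
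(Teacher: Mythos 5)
Your proposal follows essentially the same route as the paper: the identical case split (a single chain $\altg_2 \funbisim \altg_1 \funbisim \altg_0$ with a scope-violating middle graph for the $i=0$ signatures, separate counterexamples against $\convfunbisim$ for the classes with variable backlinks, and separate counterexamples against $\funbisim$ for $\classltgsij{1}{1}$ and $\classltgsij{1}{2}$), with each witness refuted by the same "overlapping scopes" obstruction to a correct abstraction-prefix function. The only thing left to supply is the concrete pictures themselves, which you correctly identify as the routine combinatorial part.
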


As an easy consequence of \cref{prop:ltgs:not:closed:under:bisim}, and of \cref{ltgs-not-closed-under-funbisim}~\cref{ltgs-not-closed-under-funbisim:i} and \cref{ltgs-not-closed-under-funbisim:ii}, together with the examples used in the proof, we obtain the following two propositions.

\begin{proposition}\label{prop:not:closed:under:bisim:lambdahotgs:aplambdahotgs:siglambdai}
	Let \m{i ∈ \set{0,1}}. None of the classes \classlhotgsi{i} of λ-ho-term-graphs, or \classaphotgsi{i} of λ-ap-ho-term-graphs are closed under bisimulation on the underlying term graphs.
\end{proposition}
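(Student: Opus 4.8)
The plan is to reduce the statement, for both classes of higher-order term graphs and for each \m{i ∈ \set{0,1}}, to the already-established failure of closedness under bisimulation for first-order λ-term-graphs, and to read off the witnessing graphs as term graphs over \sigTGi{i}.

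First I would unwind the notion at stake: by the evident bisimulation analogue of the definition of closedness under functional bisimulation on the underlying term graphs, the class \classlhotgsi{i} (respectively \classaphotgsi{i}) is closed under bisimulation on the underlying term graphs exactly when, for every member of the class with underlying term graph \altg and every \sigTGi{i}-term-graph \m{\altg'} with \m{\altg \bisim \altg'}, some member of the class has \m{\altg'} as its underlying term graph. By \cref{def:lambdahotg} a \sigTGi{i}-term-graph underlies a member of \classlhotgsi{i} precisely when it admits a correct scope function; and since the mappings of \cref{prop:mappings:lhotgs:aphotgs} leave the underlying term graph unchanged and are mutually inverse by \cref{thm:corr:lhotgs:aphotgs}, a \sigTGi{i}-term-graph underlies a member of \classaphotgsi{i} precisely when it admits a correct abstraction-prefix function, and these two conditions are equivalent. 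Hence it suffices, for each \m{i}, to produce two \sigTGi{i}-term-graphs \m{\altg_0} and \m{\altg_1} with \m{\altg_0 \bisim \altg_1} such that \m{\altg_0} admits a correct scope function but \m{\altg_1} admits none: then equipping \m{\altg_0} with such a function gives a member of \classlhotgsi{i} whose underlying term graph is bisimilar to \m{\altg_1}, while \m{\altg_1} underlies no member of \classlhotgsi{i}, and the conclusion transfers to \classaphotgsi{i} by the equivalence above.

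For \m{i = 1} I would take for \m{\altg_0,\altg_1} the term graphs \m{\altg_1 \funbisim \altg_0} exhibited in the proof of \cref{ltgs-not-closed-under-funbisim}~\cref{ltgs-not-closed-under-funbisim:ii}: they are term graphs over \sigTGi{1}, functional bisimulation is in particular a bisimulation so \m{\altg_0 \bisim \altg_1}, and there \m{\altg_0} admits a correct abstraction-prefix (hence scope) function whereas \m{\altg_1} does not, its scopes overlapping. Equivalently, one may just quote \cref{prop:ltgs:not:closed:under:bisim}, since by \cref{def:classltgs} the class \classltgsi{1} is exactly the class of \sigTGi{1}-term-graphs admitting a correct scope function. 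For \m{i = 0} there is no pre-named class, but the \S-free term graphs of the kind used in the proof of \cref{ltgs-not-closed-under-funbisim}~\cref{ltgs-not-closed-under-funbisim:i} suffice: concretely, take \m{\altg_0} to be the term graph over \sigTGi{0} underlying \app{(\abs{x}{x})}{(\abs{y}{y})}, and \m{\altg_1} the term graph obtained from \m{\altg_0} by merging its two variable vertices into a single vertex. The merging morphism is a \0-homomorphism from \m{\altg_0} to \m{\altg_1}, so \m{\altg_0 \bisim \altg_1}; \m{\altg_0} admits the correct scope function assigning to each abstraction vertex the scope consisting of itself together with the variable vertex immediately below it (two disjoint, hence trivially nested, scopes); but \m{\altg_1} admits no correct scope function, since by \cref{lambdahotg:scope0} its single variable vertex would have to lie in the strict scope of one of the two abstraction vertices, which, via \cref{lambdahotg:closed} applied along the edge from the other abstraction vertex, would force that other abstraction vertex into this scope as well, contradicting \cref{lem:lhotg}~\cref{lem:lhotg:i} (an abstraction vertex must be visited on every access path of any vertex in its scope).

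The only point that needs real care — the \emph{main obstacle}, such as it is — is the \m{i = 0} case: one must verify that the merged term graph admits \emph{no} correct scope function, not merely that the obvious one fails; this is precisely the scope-nesting argument just sketched, using \cref{lambdahotg:scope0}, \cref{lambdahotg:closed}, and \cref{lem:lhotg}~\cref{lem:lhotg:i}. Everything else is routine unwinding of definitions together with the correspondence result \cref{thm:corr:lhotgs:aphotgs}.
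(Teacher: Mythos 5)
Your proposal is correct and follows essentially the same route as the paper, which derives the proposition from the first-order counterexamples of \cref{prop:ltgs:not:closed:under:bisim} and \cref{ltgs-not-closed-under-funbisim}: in both cases one exhibits bisimilar underlying \sigTGi{i}-term-graphs of which one admits a correct scope (equivalently, abstraction-prefix) function and the other does not. Your explicit \m{i=0} witness (the term graph of \app{(\abs{x}{x})}{(\abs{y}{y})} with its two variable vertices merged, ruled out via \cref{lambdahotg:scope0}, \cref{lambdahotg:closed} and \cref{lem:lhotg}~\cref{lem:lhotg:i}) correctly fills in the detail that the paper leaves to the figures accompanying \cref{ltgs-not-closed-under-funbisim}~\cref{ltgs-not-closed-under-funbisim:i}.
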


\begin{proposition}
	\label{prop:not:closed:under:funbisim:convfunbisim:lambdahotgs:aplambdahotgs:siglambdai}
	The following statements hold:
	\begin{enumerate}[(i)]
		\item\label{prop:not:closed:under:funbisim:convfunbisim:lambdahotgs:aplambdahotgs:siglambdai:i}
			Neither \classlhotgsi{0} nor \classaphotgsi{0} is closed under functional bisimulation or converse functional bisimulation on the underlying term graphs.
		\item\label{prop:not:closed:under:funbisim:convfunbisim:lambdahotgs:aplambdahotgs:siglambdai:ii}
			Neither \classlhotgsi{1} nor \classaphotgsi{1} is closed under converse functional bisimulation on the underlying term graphs.
	\end{enumerate}
\end{proposition}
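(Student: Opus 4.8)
The plan is to transfer the counterexamples constructed in the proofs of \cref{ltgs-not-closed-under-funbisim}~\cref{ltgs-not-closed-under-funbisim:i} and \cref{ltgs-not-closed-under-funbisim:ii} from the first-order setting of λ-term-graphs over \sigTGi{i} to the higher-order setting. Two observations make this work. First, by \cref{prop:mappings:lhotgs:aphotgs} and \cref{thm:corr:lhotgs:aphotgs}, the correspondence between \classlhotgsi{i} and \classaphotgsi{i} is bijective, keeps the underlying term graph fixed, and preserves and reflects homomorphisms; hence it suffices to argue for one of the two classes in each case, and a counterexample for one class — a higher-order term graph together with a non-extending (converse) homomorphism on its underlying term graph — is automatically a counterexample for the other, since an extension transported along the correspondence would contradict non-extension on the original side. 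Second, unwinding \cref{def:lambdahotg} and \cref{def:aplambdahotg}: a \sigTGi{i}-term-graph \altg is the term graph underlying some \m{\ahotg ∈ \classlhotgsi{i}} (equivalently some \m{\aaphotg ∈ \classaphotgsi{i}}) \emph{if and only if} \altg admits a correct scope function (equivalently a correct abstraction-prefix function). Consequently, if \h is a homomorphism from the term graph \altg underlying some \m{\ahotg ∈ \classlhotgsi{i}} onto a term graph \m{\atg'} over \sigTGi{i} that admits \emph{no} correct scope function, then \h cannot extend to a homomorphism on \ahotg, for any such extension would exhibit a λ-ho-term-graph over \sigTGi{i} with underlying term graph \m{\atg'}; dually for a homomorphism \emph{into} the term graph underlying some \m{\ahotg ∈ \classlhotgsi{i}}.

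For \cref{prop:not:closed:under:funbisim:convfunbisim:lambdahotgs:aplambdahotgs:siglambdai:ii} I would take the two term graphs \m{\atg_1 \funbisim \atg_0} over \sigTGi{1} exhibited in the proof of \cref{ltgs-not-closed-under-funbisim}~\cref{ltgs-not-closed-under-funbisim:ii}, where \m{\atg_0} admits a correct abstraction-prefix function (so \m{\atg_0 ∈ \classltgsi{1}}) while \m{\atg_1} does not, its implied scopes overlapping. By the above, \m{\atg_0} is the term graph underlying some \m{\ahotg_0 ∈ \classlhotgsi{1}}. The homomorphism witnessing \m{\atg_1 \funbisim \atg_0} is a converse functional bisimulation on the term graph underlying \m{\ahotg_0} that does not extend to a converse homomorphism on \m{\ahotg_0}, because its source \m{\atg_1} admits no correct scope function. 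Hence \classlhotgsi{1} is not closed under converse functional bisimulation on the underlying term graphs, and, via the bijective correspondence, neither is \classaphotgsi{1}.

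For \cref{prop:not:closed:under:funbisim:convfunbisim:lambdahotgs:aplambdahotgs:siglambdai:i} I would likewise use the three term graphs \m{\atg_2 \funbisim \atg_1 \funbisim \atg_0} from the proof of \cref{ltgs-not-closed-under-funbisim}~\cref{ltgs-not-closed-under-funbisim:i}, where \m{\atg_2} and \m{\atg_0} admit correct abstraction-prefix functions but \m{\atg_1} does not. Then \m{\atg_2} and \m{\atg_0} underlie λ-ho-term-graphs \m{\ahotg_2,\ahotg_0 ∈ \classlhotgsi{0}}; the homomorphism \m{\atg_2 \funbisim \atg_1} is a functional bisimulation on the term graph underlying \m{\ahotg_2} that does not extend (its target \m{\atg_1} admits no correct scope function), and the homomorphism \m{\atg_1 \funbisim \atg_0} is a converse functional bisimulation on the term graph underlying \m{\ahotg_0} that does not extend (its source \m{\atg_1} admits no correct scope function). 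So \classlhotgsi{0} is closed neither under functional nor under converse functional bisimulation on the underlying term graphs, and the bijection transfers this to \classaphotgsi{0}.

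The one point that needs care — and the main obstacle — is to make sure that the counterexample term graphs of \cref{ltgs-not-closed-under-funbisim}~\cref{ltgs-not-closed-under-funbisim:i} and \cref{ltgs-not-closed-under-funbisim:ii} genuinely live over \sigTGi{0} and \sigTGi{1}, i.e.\ contain no scope-delimiter vertices, so that they really are \emph{underlying} term graphs of higher-order term graphs in the relevant classes; this is the case, their scoping being depicted as shaded regions rather than realised by \S-vertices. Were \S-vertices present in those counterexamples, one would first have to strip them via the mapping \ltgstoaphotgsij{i}{j}{} from \cref{prop:mappings:ltgs:to:aphotgs} and re-establish the requisite (non-)existence of homomorphisms between the resulting \S-free term graphs — which is avoidable here, and makes the whole argument the promised easy consequence of \cref{ltgs-not-closed-under-funbisim}.
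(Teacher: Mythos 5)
Your proposal is correct and follows exactly the route the paper takes: the paper derives this proposition as an immediate consequence of the counterexamples in the proof of \cref{ltgs-not-closed-under-funbisim}~\cref{ltgs-not-closed-under-funbisim:i} and \cref{ltgs-not-closed-under-funbisim:ii}, using precisely the observation that a \sigTGi{i}-term-graph underlies a higher-order term graph iff it admits a correct scope (equivalently abstraction-prefix) function, together with the correspondence of \cref{thm:corr:lhotgs:aphotgs} to pass between \classlhotgsi{i} and \classaphotgsi{i}. You in fact supply more detail than the paper, which leaves the transfer implicit; your closing check that the counterexample graphs are \S-free is the right thing to verify and does hold for those examples.
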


\begin{remark}
	Note that \cref{prop:not:closed:under:funbisim:convfunbisim:lambdahotgs:aplambdahotgs:siglambdai}, \cref{prop:not:closed:under:funbisim:convfunbisim:lambdahotgs:aplambdahotgs:siglambdai:i} is a strengthening of the statement of \cref{prop:no:extension:funbisim:siglambda1} earlier.
\end{remark}

\section{Closed under functional bisimulation}\label{sec:closed}

\begin{para}[recapitulation]
	The negative results gathered in the last section leave the impression that our enterprise in a quite poor state: For the classes of λ-term-graphs we introduced, \cref{ltgs-not-closed-under-funbisim} leaves \classltgsi{1} as the only class of λ-term-graphs which still might be closed under functional bisimulation. Actually, \classltgsi{1} \emph{is} closed (we do not prove this here), but that does not help us any further, because the correspondences in \cref{thm:corr:aphotgs:ltgs} do not apply to this class, and worse still, \cref{prop:forgetful} rules out simple correspondences for \classltgsi{1}. So in this case we are left without any satisfying correspondences to higher-order term graphs that we have for the other classes of λ-term-graphs; those however are not closed under functional bisimulation.
\end{para}

\begin{para}[eager scope-closure]
	But in this section we establish that the class \classltgsij{1}{2} is very useful after all: we find that its restriction to λ-term-graphs with eager scope-closure (\cref{def:eager-scope} below) is in fact closed under functional bisimulation.
\end{para}

\begin{example}[eager-scope λ-term-graphs]\label{example:classeagltgsij12}
	Let us look at two \classltgsij{1}{2}-term-graphs from earlier and see whether we can fix closedness under functional bisimulation in that instance by making them eager-scope. Consider the λ-term-graph \m{\atg_1} from the \hyperref[ltgs-not-closed-under-funbisim:iv:proof]{proof} of \cref{ltgs-not-closed-under-funbisim}~\cref{ltgs-not-closed-under-funbisim:iv}. The scopes of the two topmost abstractions are not closed on the paths to variable occurrences belonging to the bottommost abstractions, although both scopes could have been closed immediately before the bottommost abstractions. When this is actually done, and the following variation \m{\tilde{\atg}_1} of \m{\atg_1} with eager scope-closure is obtained, then the problem disappears:
	\begin{hspread}
		\graph{\tilde{\atg}-1}{lambdatgs-over-siglambda12-closed-under-funcbisim-eager-backlinks-g1}&
		\funbisim &
		\graph{\tilde{\atg}-0}{lambdatgs-over-siglambda12-closed-under-funcbisim-eager-backlinks-g0}
	\end{hspread}
	\m{\tilde{\atg}_0} has again a correct abstraction-prefix function and is therefore a λ-term-graph.
\end{example}

For λ-term-graphs over \sigTGij{1}{1} and \sigTGij{1}{2} we define the eager-scope property.

\begin{definition}[eager-scope λ-term-graphs]\label{def:eager-scope}
	Let \m{\atg = \tuple{V,\lab{},\args{},r}} be a λ-term-graph over \sigTGij{1}{j} for \m{j ∈ \set{1,2}} with abstraction-prefix function \m{\abspre{} : V → V^*}. We say that \atg is an \emph{eager-scope λ-term-graph}, or that \atg\ \emph{is eager-scope} if:
	\begin{equation}\label{eq:def:eager:scope}
		\left.
		\begin{aligned}
			& ∀ v,w ∈ V~ ∀ p ∈ V^* \\
			& \indent\abspre{v} = p \prefixcon w ~∧~ v∉\vertsof{\S} \\
			& \indent\indent ⇒ ~
			\begin{aligned}[t]
				& ∃ n ∈ ℕ ~ ∃ v_1,\dots,v_n ∈ V \\
				& \indent v \tgsucc v_1 \tgsucc \dots \tgsucc v_n \tgsucc_0 w ~∧ \\
				& \indent v_n ∈ \vertsof{\0} ~∧~ ∀ i ∈ \set{1,\dots,n} ~ \abspre{v} ≤ \abspre{v_i}
			\end{aligned}
		\end{aligned}
		~~\right\}
	\end{equation}
	In words: from every non-delimiter vertex \v with non-empty abstraction-prefix \abspre{v} ending with \w there is a path to \w via vertices with abstraction-prefixes that extend \abspre{v} (a path within the scope of \w) and via a variable vertex just before reaching \w.
	\par By \eag{\classltgsij{1}{j}} we denote the subclass of \classltgsij{1}{j} of all eager-scope λ-term-graphs.
\end{definition}

\begin{para}[fully backlinked λ-term-graphs]
	We will find that not only the class of eager λ-term-graphs is closed under functional bisimulation, but also a super-class thereof, called `fully backlinked' λ-term-graphs.
\end{para}

\begin{definition}[fully backlinked λ-term-graphs]\label{fully-backlinked}
	Let \m{\atg = \tuple{V,\lab{},\args{},r}} be a λ-term-graph over \sigTGij{1}{j} for \m{j ∈ \set{1,2}} with abstraction-prefix function \m{\abspre{} : \V → \V^*}. We say that \atg is \emph{fully backlinked} if:
	\begin{equation}\label{eq:def:fully:backlinked}
		\left.
		\begin{aligned}
			& ∀ v,w ∈ V ~ ∀ \apre ∈ V^*\\
			& \indent \abspre{v} = \apre \prefixcon w ~⇒~
			\begin{aligned}[t]
				&∃ n ∈ ℕ ~ ∃ v_1,\dots,v_n ∈ V\\
				&\indent v \tgsucc v_1 \tgsucc \dots \tgsucc v_n \tgsucc w \\
				&\indent ∧~ ∀ i ∈ \set{1,\dots,n} ~ \abspre{v} ≤ \abspre{v_i}
			\end{aligned}
		\end{aligned}
		~~\right\}
	\end{equation}
	In words: from every vertices \v with non-empty abstraction prefix \abspre{v} that ends with \w, there is a path from \v to \w via vertices with abstraction-prefixes that extend \abspre{v} (a path within the scope of \w).
	\par By \fbl{\classltgsij{1}{j}} we denote the subclass of \classltgsij{1}{j} that consists of all fully backlinked λ-term-graphs.
\end{definition}

\begin{proposition}[~\m{\eag{\classltgsij{1}{2}} \subseteq \fbl{\classltgsij{1}{2}}}~]\label{prop:eager:scope:fully:backlinked}
	Every eager-scope λ-term-graph over \sigTGij{1}{2} is fully backlinked.
\end{proposition}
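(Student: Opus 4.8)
The plan is a short, direct verification from the definitions. Let $\atg$ be an eager-scope λ-term-graph over $\sigTGij{1}{2}$, with vertex set $V$ and abstraction-prefix function $\abspre{}$ (unique by \cref{lem:ltg}~\cref{lem:ltg:vi}); I want to check that $\atg$ satisfies the fully-backlinked condition \cref{eq:def:fully:backlinked}, so that indeed $\eag{\classltgsij{1}{2}} \subseteq \fbl{\classltgsij{1}{2}}$. So fix $v,w \in V$ and $\apre \in V^*$ with $\abspre{v} = \apre \prefixcon w$. The argument splits on whether $v$ is a scope-delimiter vertex; the point on which the proposition genuinely depends on $j = 2$ (rather than $j = 1$) is that for $j = 2$ an $\S$-vertex carries an index-$1$ backlink, which by \cref{ltg:ap-function:S1} targets exactly the innermost abstraction in its prefix.

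\emph{Case $v \notin \vertsof{\S}$.} Then the eager-scope condition \cref{eq:def:eager:scope} applies to $v, w$ and yields some $n \in ℕ$ together with $v_1, \dots, v_n \in V$ such that $v \tgsucc v_1 \tgsucc \dots \tgsucc v_n \tgsucc_0 w$ and $\abspre{v} \le \abspre{v_i}$ for all $i \in \{1, \dots, n\}$. Since $\tgsucc_0 \:\subseteq\: \tgsucc$, this very sequence of vertices already witnesses \cref{eq:def:fully:backlinked} for $v, w$; the extra content of the eager-scope condition (that $v_n \in \vertsof{\0}$ and that the last edge is the index-$0$ backlink) is simply discarded. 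In other words, restricted to non-$\S$-vertices the eager-scope condition is a strengthening of the fully-backlinked condition, and there is nothing to prove here.

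\emph{Case $v \in \vertsof{\S}$.} Since $\arity{\S} = 2$ in the signature $\sigTGij{1}{2}$, the vertex $v$ has a successor along its index-$1$ edge, say $v \tgsucc_1 w'$. By the correctness condition \cref{ltg:ap-function:S1} for $\abspre{}$ we have $w' \in \vertsof{λ}$ and $\abspre{w'} \prefixcon w' = \abspre{v} = \apre \prefixcon w$; by uniqueness of word decomposition this forces $w' = w$ (and $\abspre{w'} = \apre$). Hence $v \tgsucc_1 w$, so $v \tgsucc w$, and choosing $n = 0$ -- so that there are no intermediate vertices and the clause ``$\forall i \in \{1, \dots, 0\}$'' is vacuously true -- witnesses \cref{eq:def:fully:backlinked} for $v, w$. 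As every vertex of $\atg$ carries one of the labels $@, λ, \0, \S$, these two cases are exhaustive, so $\atg$ is fully backlinked.

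I do not expect a genuine obstacle: the statement is essentially an unfolding of the definitions, with the only care needed in the bookkeeping around the degenerate value $n = 0$ in both conditions, and in using \cref{ltg:ap-function:S1} to supply the backlink edge for delimiter vertices. (For contrast, one can note that the proposition fails over $\sigTGij{1}{1}$, where $\S$-vertices have no backlink and a delimiter vertex $v$ with $\abspre{v} = \apre \prefixcon w$ has no edge leading back into the scope of $w$ -- which is exactly why the hypothesis $j = 2$ cannot be dropped.)
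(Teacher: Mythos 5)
Your proof is correct and follows essentially the same route as the paper's: for non-delimiter vertices the eager-scope condition already supplies the required path, and for $\S$-vertices the correctness condition \cref{ltg:ap-function:S1} yields the backlink $v \tgsucc_1 w$ directly. Your version merely makes the case split and the $n=0$ bookkeeping explicit where the paper is terser.
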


\begin{proof}
	Let \atg be an λ-term-graph over \sigTGij{1}{2} with abstraction-prefix function \abspre{} and vertex set \V. Suppose that \atg is an eager-scope λ-term-graph. This means that the condition \cref{eq:def:eager:scope} from \cref{def:eager-scope} holds for \atg. Now note that, for all \m{\w,\v ∈ \V} and \m{\apre ∈ \V^*}, if \m{\abspre{\w} = \apre \prefixcon \v} and \m{\w ∈ \vertsof{\S}} holds, then the correctness condition \cref{ltg:ap-function:S1} on the abstraction-prefix function \abspre{} entails \m{\w \tgsucc_1 \v}. It follows that \atg also satisfies the condition \cref{eq:def:fully:backlinked} from \cref{fully-backlinked}, and therefore, that \atg is fully backlinked.
\end{proof}

\begin{para}[backlinks and fully backlinkedness]\label{backlinkedness}
	The intuition for the property of a λ-term-graph \atg to be `fully backlinked' is that, for every vertex \w of \atg with a non-empty abstraction-prefix \m{\abspre{\w} = \apre \prefixcon \v}, it is possible to get back to the final abstraction vertex \v in the abstraction-prefix of \w by a directed path via vertices in the scope of \v and via a last edge that is a backlink from a variable or a delimiter vertex to the abstraction vertex \v. Therefore the presence of both sorts of backlink in λ-term-graphs is crucial for this concept. Indeed, at least backlinks for variables have to be present so that the property to be fully backlinked can make sense for a λ-term-graph: all λ-term-graphs with variable vertices but without variable backlinks (for example, consider a representation of the λ-term \abs{x}{x} by such a term graph) are not fully backlinked.
\end{para}

\begin{para}[eager-scope λ-term-graphs without variable backlinks]\label{rem:eag-scope:lambdatgs:siglambdaij:0}
	The situation is different for the eager-scope property. While the presence of backlinks was used for the definition of `eager-scope λ-term-graphs' in \cref{eq:def:eager:scope}, the assumption that all variable vertices have backlinks is not essential there. In fact the condition \cref{eq:def:eager:scope} can be generalised to apply also to λ-term-graphs over \sigTGij{0}{1} and \sigTGij{0}{2} as follows:
	\begin{equation}\label{eq:def:eager:scope:without:var:backlinks}
		\left.
		\begin{aligned}
			& ∀ v,w ∈ V~ ∀ p ∈ V^* \\
			& \indent\abspre{v} = p \prefixcon w ~∧~ v∉\vertsof{\S} \\
			& \indent\indent ⇒ ~
			\begin{aligned}[t]
				& ∃ n ∈ ℕ ~ ∃ v_1,\dots,v_n ∈ V ~ \\
				& \indent v \tgsucc v_1 \tgsucc \dots \tgsucc v_n \\
				& \indent ∧ ~ v_n ∈ \vertsof{\0} ~∧~ \abspre{v_n} = p \prefixcon w \\
				& \indent ∧ ~ ∀ i ∈ \set{1,\dots,n} ~ \abspre{v} ≤ \abspre{v_i}
			\end{aligned}
		\end{aligned}
		~~\right\}
	\end{equation}
	For λ-term-graphs over \sigTGij{1}{1} and \sigTGij{1}{2} the conditions \cref{eq:def:eager:scope} and \cref{eq:def:eager:scope:without:var:backlinks} coincide: For the implication \m{\cref{eq:def:eager:scope:without:var:backlinks} ⇒ \cref{eq:def:eager:scope}} note that the correctness condition \cref{ltg:ap-function:00} on the abstraction-prefix functions of these λ-term-graphs yields that the statements \m{v_n ∈ \vertsof{\0}} and \m{\abspre{v_n} = \abspre{v} = \apre \prefixcon w} imply that \m{v_n \tgsucc_0 w}. For the implication \m{\text{\cref{eq:def:eager:scope}} ⇒ \text{\cref{eq:def:eager:scope:without:var:backlinks}}} observe that \m{v_n ∈ \vertsof{\0}}, \m{v_n \tgsucc_0 w}, and \m{\abspre{v_n} ≤ \apre \prefixcon w} implies \m{\abspre{v_n} = \abspre{v} = \apre \prefixcon w} in view of the condition \cref{ltg:ap-function:00} and \cref{lem:ltg}.
\end{para}

\begin{proposition}\label{prop:hom:ltgs:preserve:reflect:eagscope:fb}
	Functional bisimulation on λ-term-graphs in \classltgsij{1}{j} with \m{j ∈ \set{1,2}} preserves and reflects the properties `eager scope' and `fully backlinked'. 
\end{proposition}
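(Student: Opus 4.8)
The plan is to establish the four implications contained in the statement — preservation and reflection of \emph{eager scope}, and preservation and reflection of \emph{fully backlinked} — by transporting the witnessing paths along the homomorphism \m{h}. Two facts make this feasible. First, a homomorphism between λ-term-graphs automatically respects their (unique, by \cref{lem:ltg}~\cref{lem:ltg:vi}) abstraction-prefix functions, i.e.\ \m{h^* \circ \absprei{1}{} = \absprei{2}{} \circ h} by \cref{prop:hom:image:absprefix:function:ltgs}. Second, \m{h} transports paths forward (\cref{prop:hom:tgs:paths}~\cref{prop:hom:tgs:paths:i}), reflects paths from a fixed source vertex (\cref{prop:hom:tgs:paths}~\cref{prop:hom:tgs:paths:ii}), and is surjective (\cref{prop:hom:tgs:paths}~\cref{prop:hom:tgs:paths:iii}). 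In addition I will repeatedly use that \m{h^*} preserves word length and the prefix order on words, and that \m{h} preserves labels, so that membership in \vertsof{\0}, \vertsof{\S}, \vertsof{λ}, \vertsof{@} is both preserved and reflected. I would treat \emph{eager scope} first, since its defining path is forced to end with a variable backlink and is therefore cleaner, and then reuse the same ideas for \emph{fully backlinked}.

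For a \textbf{preservation} direction, suppose \m{h : \atg_1 \funbisim \atg_2} with \m{\atg_1} eager-scope (resp.\ fully backlinked), and let \m{v',w'} be vertices of \m{\atg_2} and \m{p'} a word of vertices with \m{\absprei{2}{v'} = p' \prefixcon w'} (and, in the eager case, \m{v' \notin \vertsof{\S}}). By surjectivity choose \m{v} with \m{h(v) = v'}; then \m{\absprei{1}{v} = p \prefixcon w} for suitable \m{p,w} with \m{h^*(p) = p'}, \m{h(w) = w'}, and \m{v \notin \vertsof{\S}} whenever \m{v' \notin \vertsof{\S}}. Invoking the property of \m{\atg_1} yields a path \m{v \tgsucc v_1 \tgsucc \dots \tgsucc v_n \tgsucc w} — ending with an edge \m{v_n \tgsucc_0 w} out of a vertex \m{v_n \in \vertsof{\0}} in the eager case — all of whose vertices have abstraction prefix \m{\ge \absprei{1}{v}}. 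Mapping this path forward by \m{h} produces a path from \m{v'} to \m{w'} of the same shape (same final index, same source label), and the inequalities \m{\absprei{1}{v} \le \absprei{1}{v_i}} transport to \m{\absprei{2}{v'} \le \absprei{2}{h(v_i)}} since \m{h^*} is monotone for the prefix order. This witnesses eager scope (resp.\ full backlinkedness) of \m{\atg_2}; this direction is essentially bookkeeping.

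For a \textbf{reflection} direction, suppose \m{h : \atg_1 \funbisim \atg_2} with \m{\atg_2} eager-scope (resp.\ fully backlinked), and let \m{v,w} be vertices of \m{\atg_1} and \m{p} a word with \m{\absprei{1}{v} = p \prefixcon w} (and \m{v \notin \vertsof{\S}} in the eager case). Then \m{\absprei{2}{h(v)} = h^*(p) \prefixcon h(w) =: p' \prefixcon w'}, and moreover \m{\absprei{2}{w'} = h^*(\absprei{1}{w}) = h^*(p) = p'} by \cref{lem:aphotg:i}. I would apply the property of \m{\atg_2} to obtain a within-scope path from \m{h(v)} to \m{w'}, and pull it back along \m{h} (\cref{prop:hom:tgs:paths}~\cref{prop:hom:tgs:paths:ii}) to a path \m{v \tgsucc v_1 \tgsucc \dots \tgsucc v_n \tgsucc v_{n+1}} in \m{\atg_1} starting at \m{v}. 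Two things remain to be checked: that each \m{v_i} with \m{1 \le i \le n} has abstraction prefix \m{\ge p \prefixcon w}, and that \m{v_{n+1} = w}. For the first I would use a low-water-mark argument: along any path the abstraction prefix behaves like a stack (\cref{lem:ltg}~\cref{lem:ltg:iv}), so if some \m{\absprei{1}{v_i}} failed to extend \m{p \prefixcon w} then at the first offending step the prefix would have length \m{\le \length{p} < \length{p \prefixcon w}}; but its \m{h^*}-image is \m{\absprei{2}{h(v_i)}}, which the \m{\atg_2}-property forces to extend \m{p' \prefixcon w'} and hence to have length \m{\ge \length{p}+1}, contradicting length-preservation under \m{h^*}. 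The same length bookkeeping then forces \m{\absprei{1}{v_n} = p \prefixcon w} exactly, so the last step pops \m{w}; and then the correctness conditions \cref{ltg:ap-function:01} (for variable vertices) and, when \m{j=2}, \cref{ltg:ap-function:S1} (for delimiter backlinks) pin \m{v_{n+1} = w}. This completes the eager-scope case (where \m{v_n \in \vertsof{\0}} is guaranteed in any event) and most of the fully-backlinked case.

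The hard part is the residual sub-case of fully-backlinked reflection in which the pulled-back path ends via the index-\m{0} edge \m{v_n \tgsucc_0 v_{n+1}} out of a \emph{delimiter} vertex \m{v_n}: the condition \cref{ltg:ap-function:S0} only constrains \m{\absprei{1}{v_{n+1}}}, not \m{v_{n+1}} itself, so it does not immediately give \m{v_{n+1} = w}. For \m{j = 2} this is avoided by choosing the witnessing within-scope path in \m{\atg_2} to end with a genuine backlink: a short length/prefix computation (using \m{\absprei{2}{w'} = p'} against the fact that every within-scope vertex has prefix of length \m{\ge \length{p'}+1}) shows that the final vertex \m{u'_n} of a minimal such path is neither a λ- nor an @-vertex, hence lies in \vertsof{\0} — in which case \cref{ltg:ap-function:01} already makes the last edge the variable backlink — or is an \S-vertex with \m{\absprei{2}{u'_n} = p' \prefixcon w'}, for which \cref{ltg:ap-function:S1} supplies an index-\m{1} backlink edge to \m{w'} that we substitute in before pulling back. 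The genuinely delicate point is the signature \sigTGij{1}{1}, where \S-vertices carry no backlink and such a detour is not available; there one must argue separately, again from the stack/length discipline together with \cref{ltg:ap-function:00} and \cref{ltg:ap-function:S0}, either that in a fully-backlinked graph a within-scope path into \m{w'} can always be rerouted to arrive via a variable vertex bound to \m{w'}, or that the offending configuration in \m{\atg_1} cannot coexist with \m{\atg_2} being fully backlinked. I expect this \m{j = 1} bookkeeping, rather than any conceptual difficulty, to absorb the bulk of the effort; everything else is a disciplined application of the transport lemmas above, and \cref{prop:eager:scope:fully:backlinked} can be used to cross-check the \m{j=2} results against the eager-scope ones.
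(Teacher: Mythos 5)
Your preservation argument is exactly the paper's proof: choose a pre-image of the given vertex via surjectivity, use that labels and the (unique) abstraction-prefix functions are transported by \m{h} and \m{h^*} (\cref{prop:hom:image:absprefix:function:ltgs}), apply the hypothesis in \m{\atg_1}, and push the witnessing path forward with \cref{prop:hom:tgs:paths}~\cref{prop:hom:tgs:paths:i}. For reflection the paper only remarks that path pre-images (\cref{prop:hom:tgs:paths}~\cref{prop:hom:tgs:paths:ii}) ``can be used''; your version is more careful and isolates precisely the two points this glosses over: that \m{h^*(\absprei{1}{v_i}) \geq h^*(p \prefixcon w)} does not by itself yield \m{\absprei{1}{v_i} \geq p \prefixcon w} (your low-water-mark argument via the stack discipline of \cref{lem:ltg}~\cref{lem:ltg:iv} closes this correctly), and that the endpoint of the pulled-back path is only forced to equal \m{w} when the final edge is governed by \cref{ltg:ap-function:01} or \cref{ltg:ap-function:S1}. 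Your reflection arguments for eager scope (both \m{j}) and for fully backlinked with \m{j=2} (rerouting the last step through a genuine backlink) are sound.

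The case you defer, however, cannot be closed: reflection of `fully backlinked' is \emph{false} for \m{j=1}, so your expectation that it is ``bookkeeping'' is misplaced. Take \m{\atg_1} over \sigTGij{1}{1} with an @-vertex \m{r} as root, \m{r \tgsucc_0 w} and \m{r \tgsucc_1 z} for two λ-vertices \m{w,z} with empty prefix, \m{w \tgsucc_0 s_w} and \m{z \tgsucc_0 s_z} for \S-vertices with \m{\abspre{s_w} = w} and \m{\abspre{s_z} = z}, and crossed successors \m{s_w \tgsucc_0 z}, \m{s_z \tgsucc_0 w} (essentially \m{\letin{f=\abs{x}{g} \eqsep g=\abs{y}{f}}{\app{f}{g}}}). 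All conditions of \cref{ltg:ap-function} hold, so \m{\atg_1 ∈ \classltgsij{1}{1}}, but \m{\atg_1} is not fully backlinked: the only successor of \m{s_w} is \m{z \neq w}, and \m{\abspre{s_w} = w} is not a prefix of \m{\abspre{z} = \emptyword}. Identifying \m{w} with \m{z} and \m{s_w} with \m{s_z} is a homomorphism onto a three-vertex λ-term-graph \m{\atg_2} whose single \S-vertex points directly at the single λ-vertex, and \m{\atg_2} \emph{is} fully backlinked. Neither of your proposed escapes applies: \m{\atg_2} contains no vertex in \vertsof{\0} through which the witnessing path could be rerouted, and the offending configuration in \m{\atg_1} does coexist with \m{\atg_2} being fully backlinked. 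The root cause is that for \m{j=1} the condition \cref{ltg:ap-function:S0} constrains only the prefix, not the identity, of a delimiter's \m{0}-successor. (The paper's own proof is silent on this case too; only preservation for \m{j=2} is used downstream, in \cref{thm:preserve:ltgs}, so nothing else is affected — but you should restrict the fully-backlinked reflection claim to \m{j=2}, in line with \cref{prop:eager:scope:fully:backlinked} being stated only there.)
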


\begin{proof}
	We only show preservation under homomorphic image of the eager-scope property for λ-term-graphs, since preservation of the property `fully backlinked' can be shown analogously and involves less technicalities. Also, reflection of these properties under functional bisimulation can be demonstrated similarly.
	\par Let \m{j ∈ \set{1,2}}. Let \m{\altg_1 = \tuple{\V_1,\labi{1}{},\argsi{1}{},\r_1}} and \m{\altg_2 = \tuple{\V_2,\labi{2}{},\argsi{2}{},\r_2}} be λ-term-graphs over \sigTGij{1}{j} with (correct) abstraction-prefix functions \absprei{1}{} and \absprei{2}{}, respectively. Let \m{h : \V_1 → \V_2} be a homomorphism from \m{\altg_1} to \m{\altg_2}, and suppose that \m{\altg_1} is eager-scope. We show that also \m{\altg_2} is an eager-scope λ-term-graph.
	\par For this, let \m{\w' ∈ \V_2} such that \m{\w'∉\vertsof{\S}} and \m{\absprei{2}{\w'} = \apre' \prefixcon \v'} for some \m{\v' ∈ \V_2} and \m{\apre' ∈ (\V_2)^*}. Since \h is surjective by \cref{prop:hom:tgs:paths}~\cref{prop:hom:tgs:paths:iii}, there exists a vertex \m{\w ∈ \V_1} such that \m{h(\w) = \w'}. Now note that, due to \cref{prop:hom:image:absprefix:function:ltgs}, \absprei{2}{} is the homomorphic image of \absprei{1}{}. It follows that \m{h^*(\absprei{1}{\w}) = \absprei{2}{h(\w)} = \absprei{2}{\w'} = \apre' \prefixcon \v'}. Hence there exist \m{\v ∈ \V_1} and \m{\apre ∈ (\V_1)^*} such that \m{\absprei{1}{\w} = \apre \prefixcon \v} and \m{h^*(\apre) = \apre'} and \m{h(\v) = \v'}. Since \m{\atg_1} is an eager-scope λ-term-graph, there exists a path in \m{\altg_1} of the form: \[\apath :~ \w = \w_0 \tgsucc \w_1 \tgsucc \dots \tgsucc \w_n \tgsucc_0 \v\] such that \m{\w_n ∈ \vertsiof{1}{\0}} and \m{\absprei{1}{\w_i} ≥ \apre \prefixcon \v} for all \m{i ∈ \set{0,\dots,n}}. As \h is a homomorphism, it follows from \cref{prop:hom:tgs:paths}~\cref{prop:hom:tgs:paths:i} that \apath has an image \m{h(\apath)} in \m{\altg_2} of the form: \[ h(\apath) :~ \w' = h(\w) = h(\w_0) \tgsucc' h(\w_1) \tgsucc' \dots \tgsucc' h(\w_n) \tgsucc'_0 h(\v) = \v'\] where \m{\tgsucc'} is the directed-edge relation in \m{\altg_2}. Using again that \h is a homomorphism, it follows that \m{h(\w_n) ∈ \vertsiof{2}{\0}}. Due to the fact that \absprei{2}{} is the homomorphic image of \absprei{1}{} it follows that for all \m{i ∈ \set{0,\dots,n}} it holds: \m{\absprei{2}{h(\w_i)} = h^*(\absprei{1}{\w_i}) ≥ h^*(\apre \prefixcon \v) = h^*(\apre) \prefixcon h(\v) = \apre' \prefixcon \v'}. Hence we have shown that for \m{\w'_i := h(\w_i) ∈ \V_2} with \m{i ∈ \set{0,\dots,n}} it holds that \m{\w' = \w'_0 \tgsucc' \w'_1 \tgsucc' \dots \tgsucc' \w'_n \tgsucc'_0 \v'} such that \m{\w'_n ∈ \vertsiof{2}{\0}} and \m{\absprei{2}{\w'_i} ≥ \apre' \prefixcon \w'} for all \m{i ∈ \set{0,\dots,n}}. In this way we have shown that also \m{\altg_2} is eager-scope.
	\par For showing that the eager-scope property is reflected by a homomorphism \h from \m{\altg_1} to \m{\altg_2} the fact that paths in \m{\altg_2} have pre-images under \h in \altg (\cref{prop:hom:tgs:paths}~\cref{prop:hom:tgs:paths:ii}) can be used.
\end{proof}

\begin{para}[generalised conditions for eager-scope and fully backlinked]
	The following proposition states that the defining conditions for a λ-term-graph to be eager-scope (\cref{def:eager-scope}) or fully backlinked (\cref{fully-backlinked}) can be generalised. The generalised condition for eager-scopedness requires that for every non-delimiter vertex \v with \m{\abspre{}{\v} = \apre \prefixcon \w \prefixcon \bpre} there exists a path from \v to \w within the scope of \w that only transits variable-vertex backlinks, but not delimiter-vertex backlinks. In the generalised condition for fully-backlinkedness the conclusion, in which the path may also proceed via delimiter-vertex backlinks, holds for all vertices.
\end{para}

\begin{proposition}[generalised conditions for eager-scope and fully backlinked]\label{prop:eager:scope:fully:backlinked:pumped}
	Let \atg be a λ-term-graph over \sigTGij{1}{2}, and let \abspre{} be its abstraction-prefix function.
	\begin{enumerate}[(i)]
		\item\label{prop:eager:scope:fully:backlinked:pumped:eager:scope} \atg is an eager-scope λ-term-graph if and only if:
	\end{enumerate}
	\begin{equation}\label{eq:eager:scope:pumped}
		\left.
		\begin{aligned}
			& ∀ v,w ∈ V~ ∀ p,q ∈ V^*\\
			& \indent \abspre{v} = \apre \prefixcon w \prefixcon \bpre ~∧~ v∉\vertsof{\S} \\
			& \indent \indent ⇒ ~
			\begin{aligned}[t]
				& ∃ n ∈ ℕ ~ ∃ v_1,\dots,v_n ∈ V \\
				& \indent v \tgsucc v_1 \tgsucc \dots \tgsucc v_n \tgsucc w ~∧ \\
				& \indent v_n ∈ \vertsof{\0} ~∧~ ∀ 0 ≤ i ≤ n ~ \apre \prefixcon w ≤ \abspre{v_i}
			\end{aligned}
		\end{aligned}
		~~\right\}
	\end{equation}
	\begin{enumerate}[(i)]\setcounter{enumi}{1}
		\item\label{prop:eager:scope:fully:backlinked:pumped:fully:backlinked}
		The λ-term-graph \atg is fully backlinked if and only if:
	\end{enumerate}
	\begin{equation}\label{eq:fully:backlinked:pumped}
		\left.
		\begin{aligned}
			& ∀ v,w ∈ V ~ ∀ \apre,\bpre ∈ V^* \\
			& \indent \abspre{v} = \apre \prefixcon w \prefixcon \bpre
			~⇒~
			\begin{aligned}[t]
				& ∃ n ∈ ℕ ~ ∃ v_1,\dots,v_n ∈ V \\
				& \indent v \tgsucc v_1 \tgsucc \dots \tgsucc v_n \tgsucc w \\
				& \indent ∧~ ∀ 0 ≤ i ≤ n ~ \apre \prefixcon w ≤ \abspre{v_i}
			\end{aligned}
		\end{aligned}
		~~\right\}
	\end{equation}
\end{proposition}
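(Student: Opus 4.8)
The plan is to prove both equivalences by an induction that peels the abstraction prefix one binder at a time, thereby reducing the generalised conditions \cref{eq:eager:scope:pumped} and \cref{eq:fully:backlinked:pumped} to the original defining conditions \cref{eq:def:eager:scope} and \cref{eq:def:fully:backlinked}. I carry out the argument for statement~\cref{prop:eager:scope:fully:backlinked:pumped:eager:scope}; statement~\cref{prop:eager:scope:fully:backlinked:pumped:fully:backlinked} is entirely analogous and in fact technically lighter, since it imposes no constraint that $v$ be a non-delimiter vertex and no requirement that the vertex immediately preceding $w$ on the produced path be a variable vertex. The direction ``$\Leftarrow$'' is immediate in both cases: instantiating \cref{eq:eager:scope:pumped} with $\bpre = \emptyword$ yields, for every non-delimiter $v$ with $\abspre{v} = \apre \prefixcon w$, a path $v \tgsucc v_1 \tgsucc \dots \tgsucc v_n \tgsucc w$ with $v_n ∈ \vertsof{\0}$ and $\abspre{v_i} ≥ \apre \prefixcon w = \abspre{v}$ for all $i$; since over \sigTGij{1}{2} a variable vertex has exactly one successor, the final edge $v_n \tgsucc w$ is in fact $v_n \tgsucc_0 w$, so this is \cref{eq:def:eager:scope} (and \cref{eq:fully:backlinked:pumped} with $\bpre=\emptyword$ is literally \cref{eq:def:fully:backlinked}).

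For ``$\Rightarrow$'' I would fix the abstraction-prefix function \abspre{} of \atg (unique by \cref{lem:ltg}~\cref{lem:ltg:vi}) and prove by induction on $k := \length{\bpre}$ the statement: for all $v, w ∈ V$ and $\apre, \bpre ∈ V^*$ with $\abspre{v} = \apre \prefixcon w \prefixcon \bpre$, $\length{\bpre} = k$, and $v ∉ \vertsof{\S}$, the conclusion of \cref{eq:eager:scope:pumped} holds. The base case $k = 0$ is exactly \cref{eq:def:eager:scope}. For the step, let $\length{\bpre} = k+1$ and write $\bpre = \bpre_0 \prefixcon u$, so the rightmost binder in $\abspre{v}$ is $u$ and $\abspre{v} = (\apre \prefixcon w \prefixcon \bpre_0) \prefixcon u$. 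Applying \cref{eq:def:eager:scope} to $v$ (with $u$ playing the role of the final prefix binder) produces a path $v \tgsucc v_1 \tgsucc \dots \tgsucc v_n \tgsucc_0 u$ with $v_n ∈ \vertsof{\0}$ and $\abspre{v_i} ≥ \abspre{v} ≥ \apre \prefixcon w$ for all $i$. By \cref{lem:aphotgs}~\cref{lem:aphotg:ii} (which carries over to λ-term-graphs by \cref{lem:ltg}) the prefix element $u$ is an abstraction vertex, hence $u ∉ \vertsof{\S}$, and by the converse clause of \cref{lem:aphotgs}~\cref{lem:aphotg:i} we get $\abspre{u} = \apre \prefixcon w \prefixcon \bpre_0$, whose suffix after $w$ has length $k$. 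The induction hypothesis applied to $u$ (with the same $\apre$, $w$, and with $\bpre_0$ in place of $\bpre$) yields a path $u \tgsucc u_1 \tgsucc \dots \tgsucc u_m \tgsucc w$ with $u_m ∈ \vertsof{\0}$ and $\abspre{u_j} ≥ \apre \prefixcon w$ for all $j$. Concatenating the two paths gives $v \tgsucc v_1 \tgsucc \dots \tgsucc v_n \tgsucc_0 u \tgsucc u_1 \tgsucc \dots \tgsucc u_m \tgsucc w$, a path from $v$ to $w$ whose vertex just before $w$ is the variable vertex $u_m$ and all of whose intermediate vertices $v_1, \dots, v_n, u, u_1, \dots, u_m$ have abstraction prefix $≥ \apre \prefixcon w$; together with $\abspre{v} ≥ \apre \prefixcon w$ this is precisely the conclusion of \cref{eq:eager:scope:pumped}.

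The only point that genuinely needs care — and hence the main obstacle — is the prefix-length bookkeeping that keeps every vertex of the concatenated path inside the scope of $w$. Applying \cref{eq:def:eager:scope} at $v$ only delivers the weaker bound $\abspre{v_i} ≥ \abspre{v}$, but this suffices because $\abspre{v} = \apre \prefixcon w \prefixcon \bpre ≥ \apre \prefixcon w$, whereas the inductive call at $u$ already delivers the sharper bound $≥ \apre \prefixcon w$ directly. One must also use that binders occur at most once in an abstraction prefix (\cref{lem:aphotgs}~\cref{lem:aphotg:i}), so that the decomposition $\abspre{v} = \apre \prefixcon w \prefixcon \bpre$ is unambiguous and so that the successor of the terminal variable vertex $v_n$ of the first path is forced to be exactly $u$ (via the correctness condition \cref{ltg:ap-function:01} together with $\abspre{v_n} = \abspre{v}$). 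Finally, the ``$\Rightarrow$'' direction of \cref{prop:eager:scope:fully:backlinked:pumped:fully:backlinked} runs the identical induction, using \cref{eq:def:fully:backlinked} in place of \cref{eq:def:eager:scope} and omitting the \vertsof{\0}- and \vertsof{\S}-clauses throughout.
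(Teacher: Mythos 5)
Your proof is correct, and it follows the same overall strategy as the paper's: both argue the ``$\Leftarrow$'' direction by specialising to $\bpre = \emptyword$, and both prove ``$\Rightarrow$'' by induction on $\length{\bpre}$, concatenating two paths and using the prefix-stack behaviour (\cref{lem:ltg}) to keep every vertex inside the scope of $w$. The one genuine difference is the orientation of the induction step: the paper peels off the binder \emph{adjacent to} $w$, writing $\bpre = w' \prefixcon \bpre_0$, applies the induction hypothesis first (path from $v$ to $w'$) and then the base condition at $w'$ (path from $w'$ to $w$); you peel off the \emph{innermost} binder $u$ of $\abspre{v}$, apply the base condition first (path from $v$ to $u$) and then the induction hypothesis at $u$. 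The two decompositions are mirror images and buy the same thing; yours has the mild advantage that the vertex at which the induction hypothesis is invoked is automatically an abstraction vertex, so the side condition $u \notin \vertsof{\S}$ needed for the eager-scope case is immediate, whereas the paper sidesteps this by proving statement (ii) in detail (where no such side condition exists) and declaring (i) analogous — exactly the complementary choice to yours. Your bookkeeping of the prefix bounds and of the final variable-vertex edge is accurate, so no gap remains.
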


\begin{proof}
	We will only prove the statement \cref{prop:eager:scope:fully:backlinked:pumped:fully:backlinked}, since statement \cref{prop:eager:scope:fully:backlinked:pumped:eager:scope} can be proven in much the same way.
	\par Let \atg be an arbitrary λ-term-graph over \sigTGij{1}{j} with \m{j ∈ \set{1,2}}. If \cref{eq:fully:backlinked:pumped} holds for \atg, then so does its special case \cref{eq:def:fully:backlinked}, and it follows that \atg is fully backlinked.
	\par For showing the converse we assume that \atg is fully backlinked. Then \cref{eq:def:fully:backlinked} holds. We show \cref{eq:fully:backlinked:pumped}, which is universally quantified over \m{\bpre ∈ \V^*}, by induction on the length \length{\bpre} of \bpre.
	\par In the base case we have \m{\bpre = \emptyword}, and the statement to be establish follows immediately from \cref{eq:def:fully:backlinked}.
	\par For the induction step, let \m{\apre,\bpre ∈ \V^*} and \m{\v,\w ∈ \V} be such that \m{\abspre{\v} = \apre \prefixcon \w \prefixcon \bpre} with \m{\bpre = \w' \prefixcon \bpre_0} for \m{\w' ∈ \V} and \m{\bpre_0 ∈ \V^*}. Due to \m{\length{\bpre_0} < \length{\bpre}}, the induction hypothesis can be applied to \m{\abspre{\v} = (\apre \prefixcon \w) \prefixcon \w' \prefixcon \bpre_0}, yielding a path \m{\v \pathto \v' \tgsucc \w'}, for some \m{\v' ∈ \V}, that in its part between \v and \m{\v'} visits vertices with \m{\apre \prefixcon \w \prefixcon \w'} as prefix of their abstraction prefixes. Due to \m{\apre \prefixcon \w \prefixcon \w' ≤ \abspre{\v'}} it follows that \m{\abspre{\w'} = \apre \prefixcon \w} by \cref{lem:ltg}, (i) (which entails that \m{\apre \prefixcon \w \prefixcon \w' = \abspre{\v'}} and that \m{\v' \tgsucc \w'} must be a backlink). From this the condition \cref{eq:def:fully:backlinked} yields a path \m{\w' \pathto \v'' \tgsucc \w}, for some \m{\v'' ∈ \V}, that on the way from \m{\w'} to \m{\v''} visits vertices with \m{\apre \prefixcon \w} as prefix of their abstraction prefix. Combining these two paths, we obtain a path \m{\v \pathto \v' \tgsucc \w' \pathto \v'' \tgsucc \w} that before it reaches \m{\v''} visits only vertices with \m{\apre \prefixcon \w} as prefix of their abstraction prefixes. This establishes the statement to show for the induction step.
\end{proof}

\begin{para}
	In order to show that eager-scope and fully backlinked λ-term-graphs are closed under functional bisimulation, we first establish the following lemma: if two vertices of such a λ-term-graph have the same image under a homomorphism \h, then so do their abstraction prefixes.
\end{para}

\begin{lemma}\label{lem:preserve:ltgs}
	Let \m{\atg ∈ \classltgsij{1}{2}} be a fully backlinked λ-term-graph with vertex set \V and abstraction-prefix function \abspre{}. Let \m{\atg' ∈ \classtgssiglambdaij{1}{2}} be a term graph over \sigTGij{1}{2} such that \m{\atg \funbisim_{\h} \atg'} for a homomorphism \h. Then it holds:
	\begin{equation}\label{eq:lem:preserve:ltgs}
		∀ \v_1,\v_2 ∈ \V ~ \; h(\v_1) = h(\v_2) ~⇒~ h^*(\abspre{\v_1}) = h^*(\abspre{\v_2})
	\end{equation}
	where \m{h^*} is the homomorphic extension of \h to words over \V.
\end{lemma}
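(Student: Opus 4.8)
The plan is to prove \cref{eq:lem:preserve:ltgs} by induction on $\length{\abspre{\v_1}} + \length{\abspre{\v_2}}$, or more naturally, by reasoning about the structure of abstraction prefixes. The key insight is that \cref{lem:ltg}~\cref{lem:aphotg:i} (inherited from \cref{lem:aphotgs}) tells us that a vertex $u$ appearing in $\abspre{\v}$ satisfies $\abspre{u} \prefixcon u \le \abspre{\v}$; so prefixes of prefixes are themselves prefixes of the abstraction prefix. Combining this with full backlinkedness, via the generalised condition \cref{prop:eager:scope:fully:backlinked:pumped}~\cref{prop:eager:scope:fully:backlinked:pumped:fully:backlinked} (equation \cref{eq:fully:backlinked:pumped}), gives us the tool we need: for each vertex $w$ occurring in $\abspre{\v}$, say $\abspre{\v} = \apre \prefixcon w \prefixcon \bpre$, there is a path from $\v$ back to $w$ that stays within the scope of $w$, i.e.\ through vertices whose abstraction prefixes all extend $\apre \prefixcon w$.

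First I would set up the argument as follows. Suppose $h(\v_1) = h(\v_2) =: \v'$. Write $\abspre{\v_1} = w_1 w_2 \dots w_m$ and $\abspre{\v_2} = u_1 u_2 \dots u_n$; by \cref{lem:ltg}~\cref{lem:aphotg:ii} all these vertices are abstraction vertices. I would show $m = n$ and $h(w_k) = h(u_k)$ for each $k$ by downward (or upward) induction. For the outermost element, apply \cref{eq:fully:backlinked:pumped} with $\apre = \emptyword$, $w = w_1$: there is a path $\v_1 \pathto v_n^{(1)} \tgsucc w_1$ whose internal vertices all have $w_1$ as a prefix of their abstraction prefix. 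By \cref{prop:hom:tgs:paths}~\cref{prop:hom:tgs:paths:i} this path maps under $h$ to a path from $\v'$ ending in a backlink edge to $h(w_1)$. The same argument applied to $\v_2$ and $u_1$ gives a path from $\v'$ ending in a backlink edge to $h(u_1)$. The crucial point is then that the homomorphic-image property (\cref{prop:hom:image:absprefix:function:ltgs}, so $h^* \circ \abspre{} = \abspre{}' \circ h$ on the target term graph, which need not itself be a $\lambda$-term-graph but nonetheless inherits $\abspre{}'$ as the homomorphic image) together with the fact that along the path the abstraction prefixes shrink only by popping and grow only by pushing forces the final backlink on both images to target the \emph{same} vertex — namely the last entry of $\abspre{}'(\v')$ — hence $h(w_1) = h(u_1)$.

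Having matched $w_1$ with $u_1$, I would then descend into the scope: restrict attention to $\v_1$ and $\v_2$ relative to the already-matched outer prefix $w_1 = u_1$ (up to $h$), and repeat the argument with $\apre = w_1$, $w = w_2$ (and $\apre = u_1$, $w = u_2$ on the other side). Inductively this yields $h(w_k) = h(u_k)$ for all $k$ up to $\min(m,n)$, and a length mismatch would contradict $h^*(\abspre{\v_1})$ and $h^*(\abspre{\v_2})$ both equalling $\abspre{}'(\v')$ (which has a single well-defined length), so $m = n$; this gives \cref{eq:lem:preserve:ltgs}. I expect the main obstacle to be the bookkeeping that makes ``descend into the scope and repeat'' precise: one must argue that the path segment from $\v_i$ staying within $\scope{w_k}$ for the outer part behaves, relative to the inner prefix, exactly like the base case — essentially invoking \cref{eq:fully:backlinked:pumped} with a nonempty $\bpre$ and carefully tracking which backlink edge the homomorphic image selects. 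The technical heart is ensuring the two paths, after applying $h$, cannot diverge to different abstraction vertices, which rests on \cref{lem:ltg}~\cref{lem:ltg:iv} (the stack discipline of $\abspre{}$ along edges) being transported to the target via the homomorphic-image identity.
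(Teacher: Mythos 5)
You have assembled the right toolkit---full backlinkedness to produce a path from a vertex back to each of its prefix entries, \cref{prop:hom:tgs:paths} to transport paths along $h$, and the stack discipline of \cref{lem:ltg}~\cref{lem:ltg:iv}---but the pivotal identification step is circular. To argue that the images of your two independently chosen backlink paths (one from $v_1$ down to $w_1$, one from $v_2$ down to $u_1$) end at the \emph{same} vertex of $\atg'$, you appeal to an abstraction-prefix function $\abspre{}'$ on $\atg'$ with $h^*\circ\abspre{} = \abspre{}'\circ h$. But $\atg'$ is only assumed to lie in \classtgssiglambdaij{1}{2}, not in \classltgsij{1}{2}, so \cref{prop:hom:image:absprefix:function:ltgs} does not apply; and defining $\abspre{}'(v'):=h^*(\abspre{v})$ for \emph{some} preimage $v$ of $v'$ is well defined exactly when the lemma you are proving holds (this is precisely how \cref{thm:preserve:ltgs} later uses the lemma). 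The paper avoids this by using only \emph{one} path: take the path from $v_1$ to $w_1$ in $\atg$, map it into $\atg'$, and then lift that image path \emph{back} to a path starting at $v_2$ via \cref{prop:hom:tgs:paths}~\cref{prop:hom:tgs:paths:ii}. The two preimage paths then end at vertices with the same $h$-image for free, and all prefix reasoning takes place inside $\atg$, where a correct abstraction-prefix function actually exists (this is \cref{lem:preserve:paths:ltgs}).

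The second gap is the alignment of your induction. Matching the \emph{outermost} entries $w_1$ and $u_1$ first cannot get off the ground before you know $\length{\abspre{v_1}} = \length{\abspre{v_2}}$: the path from $v_1$ to its outermost binder changes the prefix length by $-\length{\abspre{v_1}}$, so the lifted path from $v_2$, having the same label sequence, ends at a vertex of prefix length $\length{\abspre{v_2}} - \length{\abspre{v_1}}$---which is the outermost entry of $\abspre{v_2}$ only if the lengths already agree. The correspondence the path argument actually yields is right-aligned: the entry at distance $k$ from the \emph{right} end of $\abspre{v_1}$ is matched with the entry at distance $k$ from the right end of $\abspre{v_2}$ (the condition $\length{\bpre_2}=\length{\bpre_1}$ in \cref{lem:preserve:paths:ltgs}). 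Equality of the two prefix lengths is then deduced at the very end by exchanging the roles of $v_1$ and $v_2$ in the resulting suffix statement, not assumed at the outset.
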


To prove this lemma, as a stepping stone, we first prove the following technical lemma. To this end we use the generalised condition for fully backlinkedness from \cref{prop:eager:scope:fully:backlinked:pumped} to relate the abstraction-prefixes of vertices on paths departing from \m{v_1} and \m{v_2}.

\begin{lemma}\label{lem:preserve:paths:ltgs}
	Let \m{\atg = \tuple{V,\lab{},\args{},r} ∈ \classltgsij{1}{j}} for \m{j ∈ \set{1,2}} with abstraction-prefix function \abspre{}. Let \m{\atg' = \tuple{V',\lab{}',\args{}',r'} ∈ \classtgssiglambdaij{1}{j}} (i.e.\ \m{\atg' } is not necessarily a λ-term-graph) such that \m{\atg \funbisim_{h} \atg'} for a homomorphism \h. Let \m{v_1,v_2 ∈ V} be such that \m{h(v_1) = h(v_2)}.
	\par Suppose that \m{\abspre{v_1} = \apre_1 \prefixcon w_1 \prefixcon \bpre_1} with \m{w_1 ∈ V} and \m{\apre_1,\bpre_1 ∈ V^*} and that \m{\apath_1} is a path from \m{v_1} to \m{w_1} in \atg of the form
	\[\apath_1 : v_1 = v_{1,0} \tgsucc v_{1,1} \tgsucc \dots \tgsucc v_{1,n-1} \tgsucc v_{1,n} = w_1\]
	such that furthermore \m{\abspre{v_{1,n-1}} = \apre_1 \prefixcon w_1} holds.
	\par Then there are \m{w_2 ∈ V} and \m{\apre_2,\bpre_2 ∈ V^*} with \m{\length{\bpre_2} = \length{\bpre_1}} such that \m{\abspre{v_2} = \apre_2 \prefixcon w_2 \prefixcon \bpre_2}, and a path \m{\apath_2} in \atg from \m{v_2} to \m{w_1} of the form 
	\[\apath_2 : v_2 = v_{2,0} \tgsucc v_{2,1} \tgsucc \dots \tgsucc v_{2,n-1} \tgsucc v_{2,n} = w_2\]
	such that \m{\abspre{v_{2,n-1}} = \apre_2 \prefixcon w_2}, and that \m{h(v_{1,j}) = h(v_{2,j})} for all \m{j ∈ \set{0,\dots,n}}, and in particular \m{h(w_1) = h(w_2)}.
\end{lemma}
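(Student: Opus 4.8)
The plan is to transport the path $\apath_1$ along the homomorphism, read off $w_2$ and $\apre_2$ from the final (backlink) step, and then propagate the prefix decomposition of $\abspre{v_1}$ to $\abspre{v_2}$ by exploiting the stack-like behaviour of the abstraction-prefix function. First I would lift $\apath_1$: since $h(v_1) = h(v_2)$, the image $h(\apath_1)$ is a path in $\atg'$ starting at $h(v_2)$, so by \cref{prop:hom:tgs:paths}~\cref{prop:hom:tgs:paths:ii} there is a unique pre-image path $\m{\apath_2 : v_2 = v_{2,0} \tgsucc v_{2,1} \tgsucc \dots \tgsucc v_{2,n}}$ in $\atg$, with $w_2 := v_{2,n}$, that uses the same edge indices as $\apath_1$ and satisfies $h(v_{1,j}) = h(v_{2,j})$ for all $j$; in particular $h(w_1) = h(w_2)$. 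Since $h$ preserves labels, $\m{\lab{v_{1,j}} = \lab{v_{2,j}}}$ for all $j$, so $\apath_1$ and $\apath_2$ pass through vertices carrying the same sequence of labels and are traversed along the same sequence of edge indices.

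Next I would pin down $\apre_2$ and $w_2$ using the correctness conditions of \cref{ltg:ap-function}. As $w_1$ occurs in $\abspre{v_1}$, it is a $\lambda$-vertex (\cref{lem:ltg}); since $\m{\abspre{v_{1,n-1}} = \apre_1 \prefixcon w_1}$ ends with a $\lambda$-vertex, the conditions \cref{ltg:ap-function:abs} and \cref{ltg:ap-function:app} forbid $v_{1,n-1}$ from being a $\lambda$- or $@$-vertex, so $\m{v_{1,n-1} \in \vertsof{\0} \cup \vertsof{\S}}$ and the step $\m{v_{1,n-1} \tgsucc w_1}$ is a backlink; whichever of \cref{ltg:ap-function:01}, \cref{ltg:ap-function:S0}, \cref{ltg:ap-function:S1} applies forces $\m{\abspre{w_1} = \apre_1}$. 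With the matching label and edge index on the $\atg$-side, the analogous argument gives $\m{w_2 \in \vertsof{\lambda}}$ and a word $\apre_2$ with $\m{\abspre{v_{2,n-1}} = \apre_2 \prefixcon w_2}$ and $\m{\apre_2 = \abspre{w_2}}$. The $\S$-vertex transitions require a small case distinction here, since for an $\S$-down-edge the relevant correctness condition only determines the popped vertex up to the data already fixed by $\apath_1$; this is where most of the care in this step goes.

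It then remains to determine $\abspre{v_2}$. By \cref{lem:ltg}~\cref{lem:ltg:iv}, along any path the abstraction prefix behaves like a stack whose operation at each step — push the current vertex, pop, or leave unchanged — is determined by the label of the current vertex and the traversed edge index. Since $\apath_1$ and $\apath_2$ agree in both, they induce the same sequence of stack operations, and since in a $\lambda$-term-graph the abstraction prefix is total this sequence runs without underflow on both sides. The net length change of the sequence is intrinsic, so $\m{\length{\abspre{v_2}} - \length{\abspre{v_{2,n-1}}} = \length{\abspre{v_1}} - \length{\abspre{v_{1,n-1}}} = \length{\bpre_1}}$; hence it suffices to show that $\abspre{v_2}$ begins with $\m{\abspre{v_{2,n-1}} = \apre_2 \prefixcon w_2}$, for then the remaining suffix $\bpre_2$ has length $\m{\length{\bpre_1}}$, as required. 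To establish this I would argue by induction on $n$: one uses that $\m{\abspre{v_{1,n-1}} = \apre_1 \prefixcon w_1}$ is a word-prefix of $\m{\abspre{v_1} = \apre_1 \prefixcon w_1 \prefixcon \bpre_1}$, so each entry of $\abspre{v_{1,n-1}}$ at a position the traversal never pops below is literally the original entry of $\abspre{v_1}$, while at a position that is popped and later re-exposed the re-pushed $\lambda$-vertex must coincide with the original entry; then, using $h(v_{1,j}) = h(v_{2,j})$ together with the determinacy of the popped and pushed vertices in \cref{ltg:ap-function:01}, \cref{ltg:ap-function:S0}, \cref{ltg:ap-function:S1}, one checks that the $\atg$-side traversal re-pushes exactly the corresponding original entry of $\abspre{v_2}$ at every re-exposed position. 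Peeling off the outermost maximal excursion of the stack below a given level and applying the induction hypothesis to the shorter remaining path organises this bookkeeping.

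I expect this last synchronized stack bookkeeping — in particular handling portions of $\apath_1$ that temporarily leave the scope of $w_1$, together with the $\S$-vertex case analysis — to be the main obstacle; the path lifting, the label/index matching, and reading off $w_2$ and $\apre_2$ from the backlink conditions are routine applications of \cref{prop:hom:tgs:paths} and the correctness conditions of \cref{ltg:ap-function}.
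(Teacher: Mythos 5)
Your proposal is correct and takes essentially the same route as the paper's proof: lift \m{\apath_1} over \h to a pre-image path \m{\apath_2} starting at \m{v_2} via \cref{prop:hom:tgs:paths}, use the fact that matching labels and edge indices make the abstraction prefix evolve in lockstep along both paths (\cref{lem:ltg}~\cref{lem:ltg:iv}), and pin down \m{w_2} from the final backlink step via \cref{ltg:ap-function:01} or \cref{ltg:ap-function:S1}. The only difference is presentational: where you set up an explicit induction for the prefix bookkeeping, the paper dispatches it by noting that \abspre{} ``quantitatively behaves the same'' along \m{\apath_2} as along \m{\apath_1}, concluding \m{\abspre{v_2} = \apre_2 \prefixcon w \prefixcon \bpre_2} and \m{\abspre{v_{2,n-1}} = \apre_2 \prefixcon w} directly, and only then identifying \m{w = w_2} from the label of \m{v_{2,n-1}}.
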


\begin{proof}
	\par By \cref{lem:ltg}~(i), from \m{\abspre{\v_{1,n-1}} = \apre_1 \prefixcon \w_1} it follows that \m{\abspre{\w_1} = \apre_1}. Hence the final transition in \m{\apath_1} must be either via the backlink of a variable vertex or a delimiter vertex and is therefore either of the form ~\m{v_{1,n-1} \tgsuccis{0}{\0} v_{1,n}}~ or of the form ~\m{v_{1,n-1} \tgsuccis{1}{\S} v_{1,n}}.
	\par Furthermore by \cref{prop:hom:tgs:paths}~\cref{prop:hom:tgs:paths:i}, it follows that there is a path \m{h(\apath_1)} in \m{\atg'} from \m{h(v_1)} to \m{h(w_1)} of the form:
	\[h(\apath_1) : h(v_1) = h(v_{1,0}) \tgsucc h(v_{1,1}) \tgsucc \dots \tgsucc h(v_{1,n-1}) \tgsucc h(v_{1,n}) = h(w_1)\]
	From this path, \cref{prop:hom:tgs:paths}~\cref{prop:hom:tgs:paths:ii}, yields a path \m{\apath_2} in \atg from \m{v_2} of the form:
	\[\apath_2 : v_2 = v_{2,0} \tgsucc v_{2,1} \tgsucc \dots \tgsucc v_{2,n-1} \tgsucc v_{2,n} = w_2\]
	such that \m{h(\apath_2) = h(\apath_1)}. Therefore it holds \m{h(v_{1,j}) = h(v_{2,j})} for all \m{j ∈ \set{1,\dots,n}}, and in particular \m{h(w_1) = h(w_2)}. Since \h is a homomorphism also \m{\lab{v_{1,j}} = \lab{v_{2,j}}} follows for all \m{j ∈ \set{0,\dots,n}}. Therefore, and due to \cref{lem:ltg}~\cref{lem:ltg:iv}, the abstraction-prefix function \abspre{} quantitatively behaves the same when stepping through \m{\apath_2} as when stepping through \m{\apath_1}. It follows that \m{\abspre{v_2} = \apre_2 \prefixcon w \prefixcon \bpre_2}, \m{\abspre{v_{2,n-1}} = \apre_2 \prefixcon w}, and \m{\abspre{w_2} = \apre_2} for some \m{w ∈ V}, and \m{\apre_2,\bpre_2 ∈ V^*} with \m{\length{\bpre_2} = \length{\bpre_1}}. Due to \m{\lab{v_{2,n-1}} = \lab{v_{1,n-1}}}, and since the final transition \m{v_{1,n-1} \tgsucc w_1} in \m{\apath_1} is one along a backlink of a variable or delimiter vertex, this also holds for the final transition \m{v_{2,n-1} \tgsucc w_2} in \m{\apath_2}. Then it follows by the correctness condition \cref{ltg:ap-function:01} or \cref{ltg:ap-function:S1}, respectively, that \m{w = w_2}, and therefore that \m{\abspre{v_2} = \apre_2 \prefixcon w_2 \prefixcon \bpre_2}, and \m{\abspre{v_{2,n-1}} = \apre_2 \prefixcon w_2}.
\end{proof}

Relying on this lemma, we can now give a rather straightforward proof of \cref{lem:preserve:ltgs}.

\begin{proof}[Proof of \cref{lem:preserve:ltgs}]
	Let \atg, \m{\atg'} be as assumed in the lemma, and let \h be a homomorphism that witnesses \m{\atg \funbisim_{\h} \atg'}. We first show:
	\begin{equation}\label{eq1:prf:lem:preserve:ltgs}
	\left.
	\begin{aligned}
		& ∀ \v_1,\v_2 ∈ \V ~ ∀ \w_1 ∈ \V ~ ∀ \apre_1,\bpre_1 ∈ \V^* \\
		& \indent h(\v_1) = h(\v_2) ~∧~ \abspre{\v_1} = \apre_1 \prefixcon \w_1 \prefixcon \bpre_1 ~\Longrightarrow \\
		& \indent \indent ∃ \w_2 ∈ \V ~ ∃ \apre_2,\bpre_2 ∈ \V^* ~
		\begin{aligned}[t]
			& \abspre{\v_2} = \apre_2 \prefixcon \w_2 \prefixcon \bpre_2 ~∧ \\
			& h(\w_1) = h(\w_2) ~∧~ \length{\bpre_2} = \length{\bpre_1}
		\end{aligned}
	\end{aligned}
	~~\right\}
	\end{equation}
	For this, let \m{\v_1,\v_2 ∈ \V} be such that \m{h(\v_1) = h(\v_2)} and \m{\abspre{\v_1} = \apre_1 \prefixcon \w_1 \prefixcon \bpre_1} for \m{\apre_1,\bpre_1 ∈ \V^*} and \m{\w_1 ∈ \V}. Now since \atg is fully backlinked, there is a path
	\m{\apath_1 : \v_1 = \v_{1,0} \tgsuccstar \v_{1,n-1} \tgsucc \v_{1,n} = \w_1} in \atg with \m{\abspre{\v_{1,n-1}} = \apre_1 \prefixcon \w_1}.
	Then \cref{lem:preserve:paths:ltgs} yields the existence of \m{\apre_1,\bpre_1 ∈ \V^*} and \m{\w_1 ∈ \V} with \m{\length{\bpre_2} = \length{\bpre_1}} such that \m{h(\w_1) = h(\w_2)}, and
	\m{\abspre{\v_2} = \apre_2 \prefixcon \w_2 \prefixcon \bpre_2}.
	This establishes \cref{eq1:prf:lem:preserve:ltgs}.
	\par As an direct consequence of \cref{eq1:prf:lem:preserve:ltgs} we obtain:
	\begin{equation*}\label{eq2:prf:lem:preserve:ltgs}
		\begin{aligned}
			∀ \v_1,\v_2 ∈ \V ~ ∀ \cpre_1 ∈ \V^* ~
			& h(\v_1) = h(\v_2) ~∧~ \abspre{\v_1} = \cpre_1 ~\Longrightarrow \\
			& \indent ∃ \dpre_2,\cpre_2 ∈ \V^* ~ \abspre{\v_2} = \dpre_2 \prefixcon \cpre_2 \;∧\; h^*(\cpre_1) = h^*(\cpre_2)
		\end{aligned}
	\end{equation*}
	But since in this statement the roles of \m{\v_1} and \m{\v_2} can be exchanged,
	it follows that
	\m{h(\v_1) = h(\v_2)}
	always entails
	\m{\length{\abspre{\v_1}} = \length{\abspre{\v_2}}},
	and hence
	\m{h^*(\abspre{\v_1}) = h^*(\abspre{\v_2})}.
	This establishes \cref{eq:lem:preserve:ltgs}.
\end{proof}

\Cref{lem:preserve:ltgs} is the crucial stepping stone for the proof of the main theorem of this chapter.

\begin{theorem}[preservation of the properties `eager scope' and `fully backlinked' under functional bisimulation on \classltgsij{1}{2}]\label{thm:preserve:ltgs}
	Let \altg be a λ-term-graph over \sigTGij{1}{2} and suppose that \h is a homomorphism from \altg to a term graph \m{\atg' ∈ \classtgssiglambdaij{1}{2}}. Then the following two statements hold:
	\begin{enumerate}[(i)]
		\item\label{thm:preserve:ltgs:i} If \atg is fully backlinked, then also \m{\atg'} is a λ-term-graph, which is fully backlinked.
		\item\label{thm:preserve:ltgs:ii} If \atg is eager-scope, then also \m{\atg'} is a λ-term-graph, which is eager scope.
	\end{enumerate}
\end{theorem}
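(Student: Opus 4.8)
The key technical machinery has already been assembled: \cref{lem:preserve:ltgs} says that a homomorphism $h$ from a fully backlinked $\atg ∈ \classltgsij{1}{2}$ to a term graph $\atg'$ over $\sigTGij{1}{2}$ identifies the (lifted) abstraction prefixes of any two vertices that $h$ identifies, i.e.\ $h(v_1) = h(v_2) \implies h^*(\abspre{v_1}) = h^*(\abspre{v_2})$. This is precisely what is needed to transport the abstraction-prefix function across $h$. So the overall strategy for \cref{thm:preserve:ltgs} is: first show $\atg'$ is a $λ$-term-graph by exhibiting a correct abstraction-prefix function for it (obtained as the homomorphic image of $\atg$'s), then invoke \cref{prop:hom:ltgs:preserve:reflect:eagscope:fb} to conclude that $\atg'$ inherits `eager scope' / `fully backlinked'.

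\textbf{Step 1: $\atg'$ admits an abstraction-prefix function.} Since $h$ is a homomorphism it is surjective (\cref{prop:hom:tgs:paths}~\cref{prop:hom:tgs:paths:iii}). Define $\abspre{}' : V' → (V')^*$ on $\atg'$ by $\abspre{}'(v') := h^*(\abspre{v})$ for any $v ∈ V$ with $h(v) = v'$; this is well-defined by \cref{lem:preserve:ltgs} (here I use that $\atg$ is fully backlinked, which in case \cref{thm:preserve:ltgs:ii} follows from the hypothesis `eager scope' via \cref{prop:eager:scope:fully:backlinked}). By construction $\abspre{}'$ is then the homomorphic image of $\abspre{}$ under $h$ in the sense of \cref{def:hom:image:absprefix:function:ltgs}, i.e.\ $h^* ∘ \abspre{} = \abspre{}' ∘ h$.

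\textbf{Step 2: $\abspre{}'$ is correct, hence $\atg'$ is a $λ$-term-graph.} Now apply \cref{lem:hom:image:absprefix:function:ltgs}~\cref{lem:hom:image:absprefix:function:ltgs:i}: since $\abspre{}$ is correct for $\atg$ and $\abspre{}'$ is its homomorphic image under $h$, it follows that $\abspre{}'$ is correct for $\atg'$. Therefore $\atg'$ is a $λ$-term-graph over $\sigTGij{1}{2}$ by \cref{def:lambdatg:siglambdaij}.

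\textbf{Step 3: transport the eager-scope / fully-backlinked property.} Having established that both $\atg$ and $\atg'$ lie in $\classltgsij{1}{2}$, \cref{prop:hom:ltgs:preserve:reflect:eagscope:fb} directly gives that functional bisimulation preserves both properties `eager scope' and `fully backlinked'. Hence if $\atg$ is fully backlinked then so is $\atg'$ (statement \cref{thm:preserve:ltgs:i}), and if $\atg$ is eager-scope then so is $\atg'$ (statement \cref{thm:preserve:ltgs:ii}). This completes the proof.

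\textbf{Where the difficulty lies.} Almost all of the real work has already been pushed into \cref{lem:preserve:ltgs} (and its supporting \cref{lem:preserve:paths:ltgs}) and into \cref{lem:hom:image:absprefix:function:ltgs}, so the remaining proof is a short assembly. The one point that requires care is the well-definedness of $\abspre{}'$ in Step~1 in the eager-scope case: $\atg$ is only assumed eager-scope there, not fully backlinked, so one must first pass through \cref{prop:eager:scope:fully:backlinked} to know $\atg$ is fully backlinked before \cref{lem:preserve:ltgs} is applicable. (Note this implication is specific to the signature $\sigTGij{1}{2}$, which is exactly the signature of the theorem.) A secondary subtlety worth flagging is that $\atg'$ is a priori only a term graph over $\sigTGij{1}{2}$, not assumed to be a $λ$-term-graph; it is Step~2 that upgrades it, and this is essential because \cref{prop:hom:ltgs:preserve:reflect:eagscope:fb} is stated only for homomorphisms between $λ$-term-graphs.
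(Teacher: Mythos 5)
Your proof is correct and follows essentially the same route as the paper's: define the abstraction-prefix function on $\atg'$ as the homomorphic image of $\atg$'s, justify well-definedness via \cref{lem:preserve:ltgs} together with surjectivity of $h$, obtain correctness from \cref{lem:hom:image:absprefix:function:ltgs}, and transfer the two properties with \cref{prop:hom:ltgs:preserve:reflect:eagscope:fb}, reducing the eager-scope case to the fully-backlinked one via \cref{prop:eager:scope:fully:backlinked}. The subtleties you flag (that $\atg'$ is only upgraded to a $λ$-term-graph in the course of the proof, and that the eager-scope case must first pass through fully-backlinkedness) are exactly the points the paper's proof also handles.
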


\begin{proof}
	\newfunction\absprep{P'}
	Let \m{\altg = \tuple{\V,\lab{},\args{},\r} ∈ \classltgsij{1}{2}} and \m{\atg' = \tuple{\V',\lab{}',\args{}',\r'} ∈ \classtgssiglambdaij{1}{2}}, and \h a homomorphism from \altg to \m{\atg'}.
	\par For showing statement \cref{thm:preserve:ltgs:i} of the theorem, we assume that \altg is fully backlinked. On \m{\atg'} we define the following abstraction-prefix function:
	\begin{equation}\label{eq1:prf:thm:preserve:ltgs}
		\begin{aligned}[c]
			& \absprep{}~: & \V' &→ (\V')^*, \\
			&              & \w' &↦ h^*(\abspre{\w}) ~~ \text{for some \m{\w ∈ \V} with \m{h(\w) = \w'}}
		\end{aligned}
	\end{equation}
	This function is well-defined because: for every \m{\w' ∈ \V'} there exists a \m{\w ∈ \V} with \m{\w' = h(\w)}, due to \cref{prop:hom:tgs:paths}~\cref{prop:hom:tgs:paths:iii}; and due to \cref{lem:preserve:ltgs} the \m{\absprep{\w'}} is the same for all \m{\w ∈ \V} with \m{\w' = h(\w)}.
	\par For the thus defined abstraction-prefix function \absprep{} it holds:
	\[∀ \w ∈ \V ~ \;\, h^*(\abspre{\w}) = \absprep{h(\w)}\] and hence \absprep{} is the homomorphic image of \abspre{} under \h in the sense of \cref{def:hom:image:absprefix:function:ltgs}.
	\par It remains to show that \absprep{} is a correct abstraction-prefix function for \m{\atg'}, and that \m{\atg'} is fully backlinked. Both properties follow from statements established earlier by using that \m{\atg'} and \absprep{} are the homomorphic images under \h of \atg and \abspre{}, respectively: That \absprep{} is a correct abstraction-prefix function for \m{\atg'} follows from \cref{lem:hom:image:absprefix:function:ltgs}~\cref{lem:hom:image:absprefix:function:ltgs:i} which entails that the homomorphic image of a correct abstraction-prefix function is correct. And that \m{\atg'} is fully backlinked follows from \cref{prop:hom:ltgs:preserve:reflect:eagscope:fb}, which states that the homomorphic image of a fully backlinked λ-term-graph is again fully backlinked.
	\par In this way we have established statement \cref{thm:preserve:ltgs:i} of the theorem.
	\par For showing statement \cref{thm:preserve:ltgs:ii}, let \atg be an eager-scope λ-term-graph over \sigTGij{1}{2} . By \cref{prop:eager:scope:fully:backlinked} it follows that \atg is also fully backlinked. Therefore the just established statement \cref{thm:preserve:ltgs:i} of the theorem is applicable, and it yields that \m{\atg'} is a λ-term-graph. Since by \cref{prop:hom:ltgs:preserve:reflect:eagscope:fb} also the eager-scope property is preserved by homomorphism, it follows that \m{\atg'} is eager-scope, too. This establishes statement \cref{thm:preserve:ltgs:ii} of the theorem.
\end{proof}

\begin{corollary}\label{cor:closed:under:fun:bisim}
	\fbl{\classltgsij{1}{2}} and \eag{\classltgsij{1}{2}} are closed under functional bisimulation.
\end{corollary}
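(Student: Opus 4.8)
The plan is to obtain this corollary as a direct consequence of \cref{thm:preserve:ltgs}, which carries the entire technical weight; what remains is merely to match that theorem's statement against the definition of closedness under functional bisimulation (\cref{def:closedness}).

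First I would unwind \cref{def:closedness}: a subclass $\aclass$ of the term graphs over a signature $\sig$ is closed under functional bisimulation if and only if for all term graphs $\atg,\atg' \in \tgsover{\sig}$ with $\atg \in \aclass$ and $\atg \funbisim \atg'$ one also has $\atg' \in \aclass$. Accordingly, for $\fbl{\classltgsij{1}{2}}$ I would fix an arbitrary fully backlinked λ-term-graph $\atg$ over $\sigTGij{1}{2}$ and a term graph $\atg' \in \classtgssiglambdaij{1}{2}$ with $\atg \funbisim \atg'$; by the definition of $\funbisim$ there is then a homomorphism $h$ from $\atg$ to $\atg'$. Now \cref{thm:preserve:ltgs}~\cref{thm:preserve:ltgs:i} applies verbatim: it yields that $\atg'$ is again a λ-term-graph over $\sigTGij{1}{2}$, and moreover that it is fully backlinked; hence $\atg' \in \fbl{\classltgsij{1}{2}}$, which is exactly closedness under functional bisimulation.

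For $\eag{\classltgsij{1}{2}}$ I would argue in the same way, fixing $\atg \in \eag{\classltgsij{1}{2}}$ together with a homomorphism $h$ witnessing $\atg \funbisim \atg'$, and invoking \cref{thm:preserve:ltgs}~\cref{thm:preserve:ltgs:ii} in place of part~\cref{thm:preserve:ltgs:i}; this delivers that $\atg'$ is a λ-term-graph that is eager-scope, i.e.\ $\atg' \in \eag{\classltgsij{1}{2}}$. There is no genuine obstacle at the level of this corollary: the only points worth noting are that the codomain $\atg'$ of the homomorphism ranges over the same signature $\sigTGij{1}{2}$ (which is built into what $\funbisim$ means for term graphs over a fixed signature), and that all the substance — preservation of the ``λ-term-graph'' property via correctness of the homomorphic image of the abstraction-prefix function (\cref{lem:hom:image:absprefix:function:ltgs}), and preservation of the two scoping properties via \cref{lem:preserve:ltgs} and \cref{prop:hom:ltgs:preserve:reflect:eagscope:fb} — has already been discharged inside \cref{thm:preserve:ltgs}. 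So the ``hard part'' is entirely upstream, and the corollary itself is a one-line unwinding of definitions.
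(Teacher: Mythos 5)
Your proposal is correct and matches the paper's intent exactly: the corollary is stated as a direct consequence of \cref{thm:preserve:ltgs}, with no further argument given, and your unwinding of \cref{def:closedness} followed by an application of parts \cref{thm:preserve:ltgs:i} and \cref{thm:preserve:ltgs:ii} respectively is precisely that one-line derivation. All the technical content is indeed discharged upstream in the theorem, as you observe.
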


\begin{para}[backlinks are required]
	Note that statements analogous to \cref{thm:preserve:ltgs} and \cref{cor:closed:under:fun:bisim} do not hold for λ-term-graphs over \sigTGij{1}{1}: The classes \fbl{\classltgsij{1}{1}} and \eag{\classltgsij{1}{1}} are not closed under functional bisimulation. This is witnessed by the counterexample in the \hyperref[ltgs-not-closed-under-funbisim:iii:proof]{proof} of \cref{ltgs-not-closed-under-funbisim} \cref{ltgs-not-closed-under-funbisim:iii}, which maps an eager-scope (and hence fully-backlinked) λ-term-graph to a term graph that is not a λ-term-graph.
	\par Similarly the statements of \cref{thm:preserve:ltgs} and \cref{cor:closed:under:fun:bisim} do not carry over to λ-term-graphs over \sigTGij{0}{1} and \sigTGij{0}{2} that are eager-scope in the sense of \cref{rem:eag-scope:lambdatgs:siglambdaij:0}. This is witnessed by the eager-scope term graphs in the \hyperref[ltgs-not-closed-under-funbisim:i:proof]{proof} of \cref{ltgs-not-closed-under-funbisim}~\cref{ltgs-not-closed-under-funbisim:i} and \cref{ltgs-not-closed-under-funbisim:ii}.
\end{para}

Another direct consequence of \cref{thm:preserve:ltgs} is the following corollary. Recall the notations from \cref{def:ltgs:iso} and \cref{def:bisim-classes}.

\begin{corollary}\label{cor:2:thm:preserve:ltgs}
	The following statements holds:
	\begin{enumerate}[(i)]
		\item For every fully backlinked λ-term-graph \altg over \sigTGij{1}{2} it holds:
			\[\succsoford{\atgiso}{\funbisim} = \succsofordin{\atgiso}{\funbisim}{\fbl{\classltgsij{1}{2}}} = \succsofordin{\atgiso}{\funbisim}{\classltgsij{1}{2}}\]
	\item For every eager-scope λ-term-graph \altg over \sigTGij{1}{2} it holds: \[\succsoford{\atgiso}{\funbisim} = \succsofordin{\atgiso}{\funbisim}{\eag{\classltgsij{1}{2}}} = \succsofordin{\atgiso}{\funbisim}{\classltgsij{1}{2}}\]
	\end{enumerate}
\end{corollary}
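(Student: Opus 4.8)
The plan is to derive both chains of equalities directly from \cref{thm:preserve:ltgs}, together with the trivial monotonicity of the operation $\succsofordin{(\cdot)}{\funbisim}{\aclass}$ in its class parameter $\aclass$ under the containments $\eag{\classltgsij{1}{2}} \subseteq \fbl{\classltgsij{1}{2}} \subseteq \classltgsij{1}{2} \subseteq \tgsover{\sigTGij{1}{2}}$ (the last of these holds by \cref{def:lambdatg:siglambdaij}, and the first by \cref{prop:eager:scope:fully:backlinked}). Recall from \cref{def:ltgs:iso} and \cref{def:bisim-classes} that $\succsofordin{\atgiso}{\funbisim}{\aclass}$ collects exactly those isomorphism classes $\atgiso'$ that possess a representative $\altg' \in \aclass$ with $\altg \funbisim \altg'$; since the defining property ``admits a correct abstraction-prefix function'' as well as the eager-scope and fully-backlinked conditions are all isomorphism-invariant, all four classes above are closed under isomorphism, so membership of $\atgiso'$ does not depend on the choice of representative.

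For the fully backlinked case I would argue as follows. The inclusions among the four classes give for free
\[\succsofordin{\atgiso}{\funbisim}{\fbl{\classltgsij{1}{2}}} \subseteq \succsofordin{\atgiso}{\funbisim}{\classltgsij{1}{2}} \subseteq \succsoford{\atgiso}{\funbisim}.\]
For the reverse inclusion $\succsoford{\atgiso}{\funbisim} \subseteq \succsofordin{\atgiso}{\funbisim}{\fbl{\classltgsij{1}{2}}}$, I would take an arbitrary $\atgiso' \in \succsoford{\atgiso}{\funbisim}$, fix a representative $\altg'$ of $\atgiso'$ and a homomorphism $h$ witnessing $\altg \funbisim_h \altg'$; a priori $\altg'$ is only known to lie in $\tgsover{\sigTGij{1}{2}}$. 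Since $\altg$ is assumed to be a fully backlinked $\lambda$-term-graph, \cref{thm:preserve:ltgs}~\cref{thm:preserve:ltgs:i} applies and shows that $\altg'$ is in fact a $\lambda$-term-graph that is fully backlinked, i.e.\ $\altg' \in \fbl{\classltgsij{1}{2}}$. Hence $\atgiso' \in \succsofordin{\atgiso}{\funbisim}{\fbl{\classltgsij{1}{2}}}$, which closes the circle of inclusions and establishes the first chain of equalities.

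The eager-scope case is entirely analogous, invoking \cref{thm:preserve:ltgs}~\cref{thm:preserve:ltgs:ii} in place of \cref{thm:preserve:ltgs}~\cref{thm:preserve:ltgs:i}: every $\funbisim$-successor of an eager-scope $\lambda$-term-graph over $\sigTGij{1}{2}$ is again an eager-scope $\lambda$-term-graph, whence $\succsoford{\atgiso}{\funbisim} \subseteq \succsofordin{\atgiso}{\funbisim}{\eag{\classltgsij{1}{2}}}$, while the reverse containments are immediate from $\eag{\classltgsij{1}{2}} \subseteq \classltgsij{1}{2} \subseteq \tgsover{\sigTGij{1}{2}}$. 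I do not expect any genuine obstacle at this stage: all the substance — namely that the homomorphic image of a fully backlinked (resp.\ eager-scope) $\lambda$-term-graph over $\sigTGij{1}{2}$ is again of that kind and, in particular, still a $\lambda$-term-graph — has already been carried out in \cref{thm:preserve:ltgs} (which itself rests on \cref{lem:preserve:ltgs} and the generalised conditions of \cref{prop:eager:scope:fully:backlinked:pumped}). The only care needed is the bookkeeping with isomorphism classes noted in the first paragraph, and that is harmless.
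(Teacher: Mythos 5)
Your proof is correct and follows exactly the route the paper intends: the paper states the corollary as ``another direct consequence of \cref{thm:preserve:ltgs}'' without writing out the argument, and your derivation --- trivial inclusions among the classes for one direction, \cref{thm:preserve:ltgs}~\cref{thm:preserve:ltgs:i}/\cref{thm:preserve:ltgs:ii} for the reverse containment, plus the harmless bookkeeping with isomorphism classes --- is precisely that direct consequence spelled out.
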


This corollary will be central to proving in \cref{sec:transfer} the complete-lattice property for \funbisim-successors of (\iso-equivalence classes of) λ-term-graphs over \sigTGij{1}{2} and λ-ap-ho-term-graphs over \sigTGi{1}.

\begin{remark}[an alternative remedy]\label{rem:generalisation:to:non:eagscope}
	To recapitulate, we identify subclasses of \classltgsij{1}{2} that are closed under functional bisimulation. These subclasses are \fbl{\classltgsij{1}{2}} and the subclass \eag{\classltgsij{1}{2}} thereof. The reason why some λ-term-graphs are not fully backlinked is revealed by the counterexample of \cref{ltgs-not-closed-under-funbisim}~\cref{ltgs-not-closed-under-funbisim:iv}: in the scopes of the two topmost λ-abstractions in \m{\atg_1}, the two subgraphs representing \abs{x}{x} are `dangling' since the scopes of the topmost λ-abstractions are not closed; therefore the mentioned subgraphs can be shared in the homomorphic image \m{\atg_0}, thus leading to overlapping scopes.
	\par In the original paper \cite[7.12]{grab:roch:representations}, we explore an approach to mend this blemish of \classltgsij{1}{2} by changing its definition. We ensure that \emph{all} λ-term-graphs are fully backlinked. This is facilitated by allowing trailing \S-vertices as successors of \0-vertices. Thereby scopes that remain open at a variable occurrence are still closed afterwards. This solution is omitted in this thesis.
\end{remark}

\section{Transfer of the complete-lattice property to λ-ho-term-graphs}\label{sec:transfer}

\begin{para}[overview]
	In this section we establish that sets of \funbisim-successors of (the isomorphism equivalence class of) a given λ-term-graph over \sigTGij{1}{2} form a complete lattice under the sharing order. For this we use the fact that this is the case for first-order term graphs in general (see \cref{prop:funbisim:succs:of:tgs:iso:complete:lattice}), and we apply the results developed so far. Subsequently we transfer this complete-lattice property to the higher-order λ-ap-ho-term-graphs over \sigTGi{1} via the correspondences established in \cref{sec:ltgs}.
\end{para}


The following proposition is a specialisation \cref{prop:funbisim:succs:of:tgs:iso:complete:lattice}.

\begin{proposition}[complete-lattice property of \classtgssiglambdaij{i}{j}]\label{prop:funbisim:succs:of:ltgs:iso:complete:lattice}
	\pair{\succsoford{\atgiso}{\funbisim}}{\funbisim} is a complete lattice for every term graph \altg over \sigTGij{1}{2} with \m{i ∈ \set{0,1}} and \m{j ∈ \set{1,2}}.
\end{proposition}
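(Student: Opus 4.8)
The plan is to derive \cref{prop:funbisim:succs:of:ltgs:iso:complete:lattice} as a direct specialisation of the general result for first-order term graphs, namely \cref{prop:funbisim:succs:of:tgs:iso:complete:lattice}, which states that \pair{\succsoford{\atgiso}{\funbisim}}{\funbisim} is a complete lattice for \emph{every} term graph \altg over \emph{any} signature \sig. Since \sigTGij{i}{j} is just a particular signature, the statement follows immediately by instantiating \m{\sig := \sigTGij{i}{j}}. So the proof is essentially one line: apply \cref{prop:funbisim:succs:of:tgs:iso:complete:lattice} with \sig taken to be \sigTGij{i}{j}.

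There is, however, one subtlety worth addressing explicitly: \cref{prop:funbisim:succs:of:ltgs:iso:complete:lattice} speaks of a term graph \altg over \sigTGij{1}{2} (or, reading the quantifiers, over \sigTGij{i}{j} for \m{i \in \set{0,1}} and \m{j \in \set{1,2}}), but it does \emph{not} restrict \altg to be a \emph{λ-term-graph} — i.e.\ we do not require \altg to admit a correct abstraction-prefix function. This is deliberate and important: the class of all term graphs over \sigTGij{i}{j} is trivially closed under functional bisimulation (homomorphisms only change labels according to \cref{ltgs:homomorphism:cond:labels}, so the signature is preserved), and hence \succsoford{\atgiso}{\funbisim} stays inside \classtgssiglambdaij{i}{j}. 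Thus the general complete-lattice theorem applies without any of the closure issues that plagued the λ-term-graph subclasses in \cref{sec:not:closed}; no appeal to \cref{cor:2:thm:preserve:ltgs} is needed here. I would state this in the proof to forestall confusion, noting that the proposition as phrased is about \classtgssiglambdaij{i}{j} rather than \classltgsij{i}{j}, and that the stronger statement restricted to eager-scope or fully backlinked λ-term-graphs is what \cref{cor:2:thm:preserve:ltgs} will be combined with downstream.

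Concretely, the proof I would write is: ``Let \m{i \in \set{0,1}}, \m{j \in \set{1,2}}, and let \altg be a term graph over \sigTGij{i}{j}. Since \sigTGij{i}{j} is a signature, \cref{prop:funbisim:succs:of:tgs:iso:complete:lattice}, applied with \m{\sig := \sigTGij{i}{j}}, yields that \pair{\succsoford{\atgiso}{\funbisim}}{\funbisim} is a complete lattice.'' That is all that is required.

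I do not anticipate any real obstacle: the only thing to be careful about is not to over-read the hypothesis (treating \altg as a λ-term-graph and then worrying about closure), which would send one down an unnecessary detour through \cref{sec:closed}. The work in this section is really in the subsequent transfer to λ-ap-ho-term-graphs over \sigTGi{1}, which uses this proposition together with \cref{cor:2:thm:preserve:ltgs} and the correspondence \cref{thm:corr:aphotgs:ltgs}; the present proposition itself is a one-step corollary.
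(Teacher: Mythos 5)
Your proposal is correct and matches the paper exactly: the paper introduces this proposition with the single remark that it is a specialisation of \cref{prop:funbisim:succs:of:tgs:iso:complete:lattice} to the signature \sigTGij{i}{j}, which is precisely your one-line instantiation. Your additional observation that the statement concerns all term graphs over \sigTGij{i}{j} (not the λ-term-graph subclass), so no closure argument is needed here, is accurate and consistent with how the paper defers the closure issues to \cref{thm:complete:lattice:fbl:eagscope:ltgs}.
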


Combining this proposition with the result from \cref{sec:closed} that the classes of fully backlinked and eager-scope λ-term-graphs over \sigTGij{1}{2} are closed under functional bisimulation (\cref{cor:closed:under:fun:bisim}) and with \cref{cor:2:thm:preserve:ltgs} we obtain the following theorem.

\begin{theorem}[complete-lattice property of \fbl{\classltgsij{1}{2}} and \eag{\classltgsij{1}{2}}]\label{thm:complete:lattice:fbl:eagscope:ltgs}
	\pair{\succsofordin{\atgiso}{\funbisim}{\smash{\classltgsij{1}{2}}}}{\funbisim} is a complete lattice for all \m{\altg ∈ \fbl{\classltgsij{1}{2}} ∪ \eag{\classltgsij{1}{2}}}.
\end{theorem}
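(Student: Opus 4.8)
The plan is to derive the theorem directly from the complete-lattice property for arbitrary term graphs over $\sigTGij{1}{2}$ (\cref{prop:funbisim:succs:of:ltgs:iso:complete:lattice}), transported along the set equalities of successor classes supplied by \cref{cor:2:thm:preserve:ltgs}. First I would reduce to a single case: by \cref{prop:eager:scope:fully:backlinked} every eager-scope λ-term-graph over $\sigTGij{1}{2}$ is fully backlinked, so $\eag{\classltgsij{1}{2}} \subseteq \fbl{\classltgsij{1}{2}}$, and it suffices to prove the statement for $\altg \in \fbl{\classltgsij{1}{2}}$.

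For such $\altg$, \cref{cor:2:thm:preserve:ltgs} yields the set equality
\[
  \succsofordin{\atgiso}{\funbisim}{\classltgsij{1}{2}} = \succsoford{\atgiso}{\funbisim} ,
\]
that is, the $\funbisim$-successors of $\atgiso$ within $\classltgsij{1}{2}$ are exactly all of its $\funbisim$-successors among term graphs over $\sigTGij{1}{2}$. This is the load-bearing step, and it rests on closedness of $\fbl{\classltgsij{1}{2}}$ under functional bisimulation (\cref{cor:closed:under:fun:bisim}, via \cref{thm:preserve:ltgs}): a homomorphic image of a fully backlinked λ-term-graph over $\sigTGij{1}{2}$ is again a fully backlinked λ-term-graph, so no $\funbisim$-successor escapes the class, while conversely every λ-term-graph successor is trivially a term-graph successor. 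Since the two sides of the displayed equation are literally the same set of isomorphism equivalence classes, they carry the identical $\funbisim$-order, which is a partial order by \cref{prop:funbisim:anti:symmetric}.

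It then remains to invoke \cref{prop:funbisim:succs:of:ltgs:iso:complete:lattice} --- a specialization of the general term-graph statement \cref{prop:funbisim:succs:of:tgs:iso:complete:lattice} --- to conclude that $\langle \succsoford{\atgiso}{\funbisim}, \funbisim \rangle$ is a complete lattice. By the set equality above this is the same ordered set as $\langle \succsofordin{\atgiso}{\funbisim}{\classltgsij{1}{2}}, \funbisim \rangle$, which is therefore a complete lattice, as claimed. One may additionally record that, via the full chain of equalities in \cref{cor:2:thm:preserve:ltgs}, the expressions $\succsofordin{\atgiso}{\funbisim}{\fbl{\classltgsij{1}{2}}}$ and --- for eager-scope $\altg$ --- $\succsofordin{\atgiso}{\funbisim}{\eag{\classltgsij{1}{2}}}$ denote the very same complete lattice.

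At the level of this theorem there is no real obstacle; the genuine difficulty has already been dispatched in \cref{lem:preserve:paths:ltgs}, \cref{lem:preserve:ltgs}, and \cref{thm:preserve:ltgs}, which establish closedness under functional bisimulation. Without that closedness the least upper bounds and greatest lower bounds computed in the ambient term-graph lattice could fail to be λ-term-graphs, so the lattice structure would not restrict to the subclass. The only thing to double-check here is that the hypotheses of \cref{cor:2:thm:preserve:ltgs} really cover the stated hypothesis $\altg \in \fbl{\classltgsij{1}{2}} \cup \eag{\classltgsij{1}{2}}$, which they do after the reduction in the first paragraph.
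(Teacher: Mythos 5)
Your proposal is correct and follows essentially the same route as the paper, which obtains the theorem by combining \cref{prop:funbisim:succs:of:ltgs:iso:complete:lattice} with the closedness results (\cref{cor:closed:under:fun:bisim}) and the set equalities of \cref{cor:2:thm:preserve:ltgs}. Your preliminary reduction of the eager-scope case to the fully-backlinked case via \cref{prop:eager:scope:fully:backlinked} is a harmless (and correct) simplification, since \cref{cor:2:thm:preserve:ltgs} already covers both cases directly.
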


\begin{para}[transfer to eager-scope higher-order term graphs]
	Considering this theorem we cannot expect all higher-order term graphs to form a complete lattice, but only those subclasses that correspond to fully backlinked or eager-scope λ-term-graphs. However, the `fully-backlinked' property does not have a natural equivalent on the higher-order term graphs, since it is based on the existence of paths that depend on backlinks departing from delimiter vertices, which the higher-order term graphs do not have. The eager-scope property on the other hand has an obvious counterpart in the higher-order term graphs. Its definition below is analogous to the eager-scope property of λ-term-graphs as described in \cref{rem:eag-scope:lambdatgs:siglambdaij:0}.
\end{para}

\begin{definition}[eager-scope λ-ap-ho-term-graphs]\label{def:eager-scope:aphotgs}
	Let \m{i ∈ \set{0,1}} and let \m{\aaphotg = \tuple{\V,\lab{},\args{},\r,\abspre{}}} be a λ-ap-ho-term-graph over \sigTGi{1}. We say that \aaphotg is an \emph{eager-scope λ-term-graph}, or that \aaphotg is \emph{eager-scope}, if:
	\begin{equation*}
		\begin{aligned}
			& ∀ w,v ∈ V ~ ∀ \apre ∈ V^* ~~ \abspre{w} = \apre \prefixcon v ~ ⇒ \\
			& \indent ∃ n ∈ ℕ ~ ∃ w_1,\dots,w_n ∈ V \\
			& \indent \indent
			\begin{aligned}[t]
				& w \tgsucc w_1 \tgsucc \dots \tgsucc w_n ~∧~ w_n ∈ \vertsof{\0} ~∧~ \abspre{w_n} = \apre \prefixcon v \\
				& ∧~ ∀ i ∈ \set{1,\dots,n-1} ~~ \apre \prefixcon v ≤ \abspre{w_i}
			\end{aligned}
		\end{aligned}
	\end{equation*}
	Or in other words, if for every vertex \w of \atg with a non-empty abstraction-prefix \abspre{\w} ending with \v there exists a path from \w via vertices with abstraction-prefixes that start with \abspre{\w} (i.e.\ a path within the scope of \v) to a variable vertex \m{\w_n} that is bound by the abstraction vertex \v. (Note that if \m{k=1}, then \v is directly reachable from \m{\w_n} via its backlink.)
	\par By \eag{\classaphotgsi{1}} we denote the subclass of \classaphotgsi{1} that consists of all eager-scope λ-ap-ho-term-graphs.
\end{definition}

\begin{proposition}[uniqueness of eager-scope λ-ap-ho-term-graphs]
	Let \m{\ahotg_i = \tuple{\V,\lab{},\args{},\r,\absprei{i}{}}} with \m{i ∈ \set{1,2}} be λ-ap-ho-term-graphs with the same underlying term graph. If  \m{\ahotg_1} is eager-scope, then \m{\length{\absprei{1}{\w}} ≤ \length{\absprei{2}{\w)}}} for all \m{\w ∈ \V}. If, in addition, also \m{\ahotg_2} is eager-scope, then \m{\absprei{1}{} = \absprei{2}{}}. Hence eager-scope λ-ap-ho-term-graphs over the same underlying term graph are unique.
\end{proposition}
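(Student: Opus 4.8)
The plan is to reduce the proposition to a single lemma about the eager-scope property and then read off both stated claims from it. Throughout, let \m{\ahotg_1} and \m{\ahotg_2} share the underlying term graph \m{\altg = \tuple{\V,\lab{},\args{},\r}}, with abstraction-prefix functions \absprei{1}{} and \absprei{2}{}. The central lemma I would prove is: \emph{if \m{\ahotg_1} is eager-scope, then for every \m{\w ∈ \V} and every \m{\v ∈ \vertsof{λ}}, if \v occurs in \absprei{1}{\w} then \v occurs in \absprei{2}{\w}}. Granting this, the first claim is immediate: entries of an abstraction prefix are pairwise distinct by \cref{lem:aphotgs}~\cref{lem:aphotg:i}, so the lemma shows the set of entries of \absprei{1}{\w} is contained in that of \absprei{2}{\w}, whence \m{\length{\absprei{1}{\w}} ≤ \length{\absprei{2}{\w}}}. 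If in addition \m{\ahotg_2} is eager-scope, the lemma applied both ways gives that \absprei{1}{\w} and \absprei{2}{\w} have the same set of entries, hence equal length; to upgrade this to equality \emph{as words}, I would induct on the length of a shortest access path of \w. Writing \m{\absprei{1}{\w} = v_1 \dots v_n}, its last entry \m{v_n} also occurs in \absprei{2}{\w}; by \cref{lem:aphotgs}~\cref{lem:aphotg:i} the portions of \absprei{1}{\w} and \absprei{2}{\w} before \m{v_n} are \m{\absprei{1}{v_n}} and \m{\absprei{2}{v_n}}, respectively; and since \m{v_n} lies on every access path of \w (and \m{v_n ≠ \w}), it has a strictly shorter shortest access path, so the induction hypothesis gives \m{\absprei{1}{v_n} = \absprei{2}{v_n}}, which forces \m{v_n} to occupy the last position of \absprei{2}{\w} as well, and hence \m{\absprei{1}{\w} = \absprei{2}{\w}}.

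For the central lemma I would induct on the number of entries of \absprei{1}{\w} lying strictly after \v. In the base case \v is the last entry, say \m{\absprei{1}{\w} = \apre \prefixcon \v}. The eager-scope condition of \cref{def:eager-scope:aphotgs} for \m{\ahotg_1} supplies a directed path \m{\w = \w_0 \tgsucc \w_1 \tgsucc \dots \tgsucc \w_n} with \m{\w_n ∈ \vertsof{\0}}, \m{\absprei{1}{\w_n} = \apre \prefixcon \v}, and \m{\apre \prefixcon \v ≤ \absprei{1}{\w_i}} for all \m{i}. From \cref{ap-function:01} for \m{\ahotg_1} one reads off \m{\w_n \tgsucc_0 \v}, and then \cref{ap-function:01} for \m{\ahotg_2} gives \m{\absprei{2}{\v} \prefixcon \v = \absprei{2}{\w_n}}, so \v occurs in \absprei{2}{\w_n}. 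I would then propagate this backwards, proving \m{\v ∈ \absprei{2}{\w_i}} for \m{i = n, n-1, \dots, 0} by a case distinction on \m{\lab{\w_i}} via the correctness conditions \cref{ap-function:abs}, \cref{ap-function:app}, \cref{ap-function:01} of \m{\ahotg_2}: at an @-vertex \m{\w_i} one has \m{\absprei{2}{\w_{i+1}} ≤ \absprei{2}{\w_i}}, at a variable vertex \m{\absprei{2}{\w_i} = \absprei{2}{\w_{i+1}} \prefixcon \w_{i+1}}, so in both cases \m{\v ∈ \absprei{2}{\w_{i+1}}} carries over; at a λ-vertex \m{\w_i} the word \m{\absprei{2}{\w_{i+1}}} is a prefix of \m{\absprei{2}{\w_i} \prefixcon \w_i}, so \m{\v ∈ \absprei{2}{\w_{i+1}}} yields \m{\v ∈ \absprei{2}{\w_i}} unless \m{\v = \w_i}, and the latter is impossible because \m{\apre \prefixcon \v ≤ \absprei{1}{\w_i}} together with \cref{lem:aphotgs}~\cref{lem:aphotg:iii} would be contradictory. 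For the induction step, let \m{\v'} be the last entry of \absprei{1}{\w}; the base case gives \m{\v' ∈ \absprei{2}{\w}}, and by \cref{lem:aphotgs}~\cref{lem:aphotg:i} the word \m{\absprei{1}{\v'}} is \absprei{1}{\w} with its last entry removed and \m{\absprei{2}{\v'}} is the portion of \absprei{2}{\w} before \m{\v'}; since \v occurs in \m{\absprei{1}{\v'}} with one fewer entry after it, the induction hypothesis yields \m{\v ∈ \absprei{2}{\v'}}, hence \m{\v ∈ \absprei{2}{\w}}.

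The main obstacle I expect is precisely the backward propagation in the base case: because \m{\ahotg_2} may close scopes earlier than \m{\ahotg_1}, one must rule out that \v is dropped from \m{\ahotg_2}'s abstraction prefix somewhere along the witness path and then re-introduced — re-introduction would require the path to pass through the λ-vertex \v, which it does not, since by \m{\apre \prefixcon \v ≤ \absprei{1}{\w_i}} the path stays within the scope of \v in \m{\ahotg_1}, and \cref{lem:aphotgs} then keeps \v out of the abstraction prefixes of the intermediate vertices, so \v is never a freshly pushed binder on the path. This is in effect a λ-ap-ho-term-graph analogue of the generalised eager-scope condition of \cref{prop:eager:scope:fully:backlinked:pumped}; once it is in place, the remaining steps are routine manipulations with the properties collected in \cref{lem:aphotgs}.
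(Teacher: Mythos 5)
Your argument is correct, but it takes a genuinely different route from the paper's (very terse) proof sketch. The paper argues contrapositively: a non-eager-scope abstraction-prefix function can always be replaced by a strictly shorter one in which the offending last entry is dropped, so the eager-scope function is the minimal one; turning that sketch into a full proof of \emph{both} stated claims (in particular the pointwise length inequality against an \emph{arbitrary} correct \absprei{2}{}) still requires some work. Your proof instead establishes a direct containment lemma — every entry of the eager prefix \absprei{1}{\w} occurs in \absprei{2}{\w} — by following the witness path supplied by \cref{def:eager-scope:aphotgs} down to the variable vertex, using the backlink condition \cref{ap-function:01} to seed the occurrence of \v in \m{\absprei{2}{\w_n}}, and propagating it back up the path through the correctness conditions; the only delicate point, ruling out \m{\v = \w_i} at λ-vertices, is correctly dispatched via \cref{lem:aphotgs}~\cref{lem:aphotg:iii}. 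This buys you both conclusions cleanly: the length bound is immediate from distinctness of prefix entries, and word equality follows by your induction on shortest access paths (or, alternatively, from the fact that \cref{lem:aphotgs}~\cref{lem:aphotg:i} forces the entries to appear in their access-path order, which is determined by the shared underlying term graph). The one presentational caveat is that your base case genuinely needs variable backlinks (\m{i=1}), but that matches the signature \sigTGi{1} fixed in \cref{def:eager-scope:aphotgs}, so no harm is done.
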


\begin{proof}[Proof sketch]
	If a λ-ap-ho-term-graph is not eager-scope, then it contains a vertex \w with abstraction-prefix \m{\v_1 \dots \v_n} from which \m{\v_n} is only reachable (if at all) by leaving the scope of \m{\v_n}. It can be shown that in this case another abstraction-prefix function with shorter prefixes exists in which \m{\v_n} does not occur in the prefix of \w.
\end{proof}

As stated in the proposition below, the eager-scope property for λ-ap-ho-term-graphs and for λ-term-graphs correspond to each other via the correspondence mappings from \cref{prop:mappings:aphotgs:to:ltgs} and \cref{prop:mappings:ltgs:to:aphotgs}.

\begin{proposition}[\aphotgstoltgsij{i}{j}{} and \ltgstoaphotgsij{i}{j}{} preserve and reflect the eager-scope property]\label{prop:preserve:reflect:eagscope}
	Let \m{i ∈ \set{0,1}}, and \m{j ∈ \set{1,2}}. The correspondences \aphotgstoltgsij{i}{j}{} and \ltgstoaphotgsij{i}{j}{} between preserve and reflect the eager-scope property i.e., for all \m{\aaphotg ∈ \classaphotgsi{i}}, and for all \m{\altg ∈ \classltgsij{i}{j}} it holds:
	\begin{align*}
		\text{\aaphotg is eager-scope} ~ & ⇔~ \text{\aphotgstoltgsij{i}{j}{\aaphotg} is eager-scope} \\
		\text{\ltgstoaphotgsij{i}{j}{\altg} is eager-scope} ~ & ⇔~ \text{\altg is eager-scope}
	\end{align*}
	Consequently, the restriction of the domains of \aphotgstoltgsij{i}{j}{} and \ltgstoaphotgsij{i}{j}{} to eager-scope term graphs -- and also of \aphotgsisotoltgsisoij{i}{j}{} and \ltgsisotoaphotgsisoij{i}{j}{} -- are functions of following types:
	\begin{align*}
		\srestrictto{\aphotgstoltgsij{i}{j}{}}{\eag{\classaphotgsi{i}}} & : \eag{\classaphotgsi{i}} → \eag{\classltgsij{i}{j}} &
		\srestrictto{\ltgstoaphotgsij{i}{j}{}}{\eag{\classltgsij{i}{j}}} & : \eag{\classltgsij{i}{j}} → \eag{\classaphotgsi{i}} \\
		\srestrictto{\aphotgsisotoltgsisoij{i}{j}{}}{\eag{\classaphotgsisoi{i}}} & : \eag{\classaphotgsisoi{i}} → \eag{\classltgsisoij{i}{j}} &
		\srestrictto{\ltgsisotoaphotgsisoij{i}{j}{}}{\eag{\classltgsisoij{i}{j}}} & : \eag{\classltgsisoij{i}{j}} → \eag{\classaphotgsisoi{i}}
	\end{align*}
\end{proposition}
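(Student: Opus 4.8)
The plan is to prove the two equivalences that together constitute the proposition,
\[
\aaphotg \text{ eager-scope} ~\Longleftrightarrow~ \aphotgstoltgsij{i}{j}{\aaphotg} \text{ eager-scope}
\qquad (\aaphotg ∈ \classaphotgsi{i})
\]
\[
\altg \text{ eager-scope} ~\Longleftrightarrow~ \ltgstoaphotgsij{i}{j}{\altg} \text{ eager-scope}
\qquad (\altg ∈ \classltgsij{i}{j}) ,
\]
and afterwards to read off the statements about the restricted maps and their isomorphism-class versions. The core idea is to transport, across each of the two translations, the \emph{within-scope witness paths} supplied by the eager-scope conditions (\cref{def:eager-scope}, \cref{def:eager-scope:aphotgs}, in the equivalent generalised forms of \cref{prop:eager:scope:fully:backlinked:pumped} and \cref{rem:eag-scope:lambdatgs:siglambdaij:0}). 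The preparatory observations I would establish first are: the unique abstraction-prefix function (\cref{lem:ltg}~\cref{lem:ltg:vi}) of \m{\aphotgstoltgsij{i}{j}{\aaphotg}} takes the value \abspre{v} at a copy \m{\pair{v}{\abspre{v}}} of an original vertex and the value \apre at an inserted delimiter vertex \m{\tuple{v,k,v',\apre}} (by checking the conditions of \cref{ltg:ap-function}, or via \cref{thm:corr:aphotgs:ltgs}~\cref{thm:corr:aphotgs:ltgs:i} and the definition of \ltgstoaphotgsij{i}{j}{}); the abstraction-prefix function of \m{\ltgstoaphotgsij{i}{j}{\altg}} is the restriction of that of \altg; a \0-vertex never spawns a delimiter vertex, so the backlink of a \0-vertex and the terminal \m{\tgsucc_0}-edge of a witness path survive both translations intact; and an edge \m{v \tgsucc_k v'} of \m{\aaphotg} unfolds in \m{\aphotgstoltgsij{i}{j}{\aaphotg}} to a path \m{\pair{v}{\abspre{v}} \tgsucc'_k s_1 \tgsucc'_0 \dots \tgsucc'_0 s_r \tgsucc'_0 \pair{v'}{\abspre{v'}}} whose delimiter vertices \m{s_1,\dots,s_r} all carry abstraction prefixes that extend \m{\abspre{v'}} and are extended by \m{\abspre{v}\prefixcon v}, whereas dually in \m{\ltgstoaphotgsij{i}{j}{\altg}} an edge \m{v \tgsucc'_k v'} means precisely \m{v \tgsucc_k ⋅ \tgsuccisstar{0}{\S} v'} in \altg.

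With these in hand the argument has a routine part and a crux. The routine part is the \emph{lifting} step, which gives the forward direction of the \aphotgstoltgsij{i}{j}{}-equivalence and the backward direction of the \ltgstoaphotgsij{i}{j}{}-equivalence: starting from a within-scope witness path on the ``lower'' side, one inserts the interpolating delimiter chains (for \aphotgstoltgsij{i}{j}{}) respectively recognises the \m{\tgsuccisstar{0}{\S}}-subpaths already present (for \ltgstoaphotgsij{i}{j}{}); since every delimiter vertex occurring on such a chain carries an abstraction prefix sandwiched between those of its two non-delimiter neighbours, the within-scope inequality \m{\abspre{v} ≤ \abspre{v_i}} is maintained along the whole path, and the terminal \0-vertex with its backlink is unaffected, so a within-scope witness path on the ``upper'' side results. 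The remaining two implications are then packaged around one crux lemma \m{(\ast)}: every eager-scope \m{\altg ∈ \classltgsij{i}{j}} has \m{\ltgstoaphotgsij{i}{j}{\altg}} eager-scope. Here \m{(\ast)} is literally the forward direction of the \ltgstoaphotgsij{i}{j}{}-equivalence, and the backward direction of the \aphotgstoltgsij{i}{j}{}-equivalence follows from \m{(\ast)} applied to \m{\altg := \aphotgstoltgsij{i}{j}{\aaphotg}} (which lies in \classltgsij{i}{j} by \cref{prop:mappings:aphotgs:to:ltgs}) together with \m{\ltgstoaphotgsij{i}{j}{\aphotgstoltgsij{i}{j}{\aaphotg}} = \aaphotg} from \cref{thm:corr:aphotgs:ltgs}~\cref{thm:corr:aphotgs:ltgs:i}.

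I expect the crux lemma \m{(\ast)} to be the main obstacle. The difficulty is that \emph{erasing} the delimiter vertices (the effect of \ltgstoaphotgsij{i}{j}{}) turns a witness path into a sequence of non-delimiter vertices that a priori need not be a path of \m{\ltgstoaphotgsij{i}{j}{\altg}}: when \m{j=2} a witness path in \altg may legitimately leave a delimiter vertex through its backlink \m{\tgsuccis{1}{\S}}, a step which is not of the form \m{\tgsuccisstar{0}{\S}}, so the run of delimiter vertices in question does not collapse to a single edge \m{v \tgsucc'_k v'}. The sub-lemma to establish is therefore a path normalisation: in an eager-scope λ-term-graph the eager-scope condition at a vertex \v can always be witnessed by a path that uses no \m{\tgsuccis{1}{\S}}-edge — a witness path that descends into the scope of an abstraction \m{u} and then climbs back to \m{u} through a backlink of one of \m{u}'s delimiter vertices can be short-circuited, using the stack behaviour of the abstraction-prefix function (\cref{lem:ltg}~\cref{lem:ltg:iv}), the correctness conditions \cref{ltg:ap-function:S0} and \cref{ltg:ap-function:S1}, and the generalised witness condition of \cref{prop:eager:scope:fully:backlinked:pumped}. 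Given a so normalised witness path, erasing its delimiter vertices yields a genuine within-scope witness path of \m{\ltgstoaphotgsij{i}{j}{\altg}}, which proves \m{(\ast)}. For the sub-case \m{i=1}, \m{j=2} the normalisation can be avoided: by \cref{prop:eager:scope:fully:backlinked} an eager-scope graph over \sigTGij{1}{2} is fully backlinked, so \cref{thm:corr:aphotgs:ltgs}~\cref{thm:corr:aphotgs:ltgs:ii} supplies an \S-homomorphism \m{\aphotgstoltgsij{1}{2}{\ltgstoaphotgsij{1}{2}{\altg}} \funbisim^{\S} \altg}, and \cref{prop:hom:ltgs:preserve:reflect:eagscope:fb} transfers the eager-scope property across it; combined with the already-established lifting direction this yields \m{(\ast)} in that sub-case. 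Finally, once both equivalences are proven, the typed restrictions \m{\srestrictto{\aphotgstoltgsij{i}{j}{}}{\eag{\classaphotgsi{i}}}} and \m{\srestrictto{\ltgstoaphotgsij{i}{j}{}}{\eag{\classltgsij{i}{j}}}} are immediate from the well-definedness of \aphotgstoltgsij{i}{j}{} and \ltgstoaphotgsij{i}{j}{} (\cref{prop:mappings:aphotgs:to:ltgs}, \cref{prop:mappings:ltgs:to:aphotgs}) together with the two equivalences, and the isomorphism-class versions follow because the two translations and the eager-scope property are all invariant under isomorphism.
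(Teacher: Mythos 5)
The paper states this proposition without proof, so your argument has to carry its own weight. Its architecture is sound: the two ``lifting'' implications are indeed routine (the inserted delimiter vertices on an edge carry abstraction prefixes sandwiched between those of the edge's endpoints, hence stay within the relevant scope, and variable vertices spawn no delimiters, so the terminal backlink survives both translations), and reducing the two remaining implications to the single claim \m{(\ast)} via \m{{\ltgstoaphotgsij{i}{j}{}} ∘ {\aphotgstoltgsij{i}{j}{}} = \idon{\classaphotgsi{i}}{}} is correct. Your normalisation sub-lemma is true and is the right crux, but the one observation that actually makes the short-circuit work is left implicit and should be spelled out. Take the \emph{first} \tgsuccis{1}{\S}-step \m{s \tgsucc_1 u} on a witness path for \v with \m{\abspre{\v} = \apre \prefixcon \w}. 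First, \m{u ≠ \w}, since otherwise \w would occur as an intermediate vertex of the path with prefix \apre, violating the within-scope condition; hence \m{\apre\prefixcon\w ≤ \abspre{u}}. Second, \m{u} occurs in \m{\abspre{s} = \abspre{u}\prefixcon u} but not in \m{\abspre{\v} ≤ \abspre{u}} (else \m{u} would occur in its own prefix). Since along a path the abstraction prefix gains an element only at a step out of a λ-vertex, which pushes that very vertex, \m{u} must itself be visited on the backlink-free initial segment from \v to \m{s}. Truncating there and appending the tail of the original path after the backlink removes one backlink without leaving \w's scope; induction on the number of backlinks completes the normalisation, and collapsing the remaining \m{\tgsuccisstar{0}{\S}}-chains then yields a witness path in \m{\ltgstoaphotgsij{i}{j}{\altg}}.

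The one genuine flaw is the proposed shortcut for \m{i=1}, \m{j=2}: it is circular. From \altg eager-scope you obtain, via the \S-homomorphism of \cref{thm:corr:aphotgs:ltgs}~\cref{thm:corr:aphotgs:ltgs:ii} and \cref{prop:hom:ltgs:preserve:reflect:eagscope:fb}, that \m{(\aphotgstoltgsij{1}{2}{} ∘ {\ltgstoaphotgsij{1}{2}{}})(\altg)} is eager-scope; but to pass from there to \m{\ltgstoaphotgsij{1}{2}{\altg}} being eager-scope you need exactly the implication ``\m{\aphotgstoltgsij{1}{2}{\aaphotg}} eager-scope implies \aaphotg eager-scope'', i.e.\ the backward direction of the first equivalence, which you have deferred to \m{(\ast)}. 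Neither of the two lifting directions supplies it. Drop this aside and rely on the normalisation argument, which covers all \m{i,j} uniformly (for \m{j=1} it is vacuous, as there are no delimiter backlinks to eliminate).
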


Furthermore, we can specialise \cref{thm:corr:aphotgs:ltgs} concerning the correspondences on the isomorphism equivalence classes (in particular \cref{thm:corr:aphotgs:ltgs:i} and \cref{thm:corr:aphotgs:ltgs:iii}) as follows. Recall \cref{def:bisim-classes}.

\begin{lemma}\label{lem:left-inverse:restrictions:to:bisim:equivclasses}
	Let \m{i ∈ \set{0,1}}, and \m{j ∈ \set{1,2}} and let \m{\aaphotg ∈ \classaphotgsi{i}} be a λ-ap-ho-term-graph. Then
	\begin{align*}
		\srestrictto{\aphotgsisotoltgsisoij{i}{j}{}}{\succsoford{\aaphotgiso}{\funbisim}} & ~:~ \succsoford{\aaphotgiso}{\funbisim} ~→~ \succsofordin{\aphotgsisotoltgsisoij{i}{j}{\aaphotgiso}}{\funbisim}{\classltgsij{i}{j}} \\
		\srestrictto{\ltgsisotoaphotgsisoij{i}{j}{}}{\succsofordin{\aphotgsisotoltgsisoij{i}{j}{\aaphotgiso}}{\funbisim}{\classltgsij{i}{j}}} & ~:~ \succsofordin{\aphotgsisotoltgsisoij{i}{j}{\aaphotgiso}}{\funbisim}{\classltgsij{i}{j}} ~→~ \succsoford{\aaphotgiso}{\funbisim}
	\end{align*}
	and it holds that \srestrictto{\ltgsisotoaphotgsisoij{i}{j}{}}{\succsofordin{\aphotgsisotoltgsisoij{i}{j}{\aaphotgiso}}{\funbisim}{\classltgsij{i}{j}}} is a left-inverse of \srestrictto{\aphotgsisotoltgsisoij{i}{j}{}}{\succsoford{\aaphotgiso}{\funbisim}}:
	\begin{equation*}
		\srestrictto{\ltgsisotoaphotgsisoij{i}{j}{}}{\succsofordin{\aphotgsisotoltgsisoij{i}{j}{\aaphotgiso}}{\funbisim}{\classltgsij{i}{j}}} ~∘~ \srestrictto{\aphotgsisotoltgsisoij{i}{j}{}}{\succsoford{\aaphotgiso}{\funbisim}} ~=~ \idon{\succsoford{\aaphotgiso}{\funbisim}}{}
	\end{equation*}
\end{lemma}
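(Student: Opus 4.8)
The plan is to derive this lemma entirely from \cref{thm:corr:aphotgs:ltgs}, specifically from parts \cref{thm:corr:aphotgs:ltgs:i} and \cref{thm:corr:aphotgs:ltgs:iii} transported to isomorphism equivalence classes, together with the fact that $\aphotgstoltgsij{i}{j}{}$ always produces a $\lambda$-term-graph (\cref{prop:mappings:aphotgs:to:ltgs}) and $\ltgstoaphotgsij{i}{j}{}$ always produces a $\lambda$-ap-ho-term-graph (\cref{prop:mappings:ltgs:to:aphotgs}). The argument is pure bookkeeping with the sharing preorder; there is no genuinely hard step.

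First I would check well-definedness of $\srestrictto{\aphotgsisotoltgsisoij{i}{j}{}}{\succsoford{\aaphotgiso}{\funbisim}}$, i.e.\ that it lands in $\succsofordin{\aphotgsisotoltgsisoij{i}{j}{\aaphotgiso}}{\funbisim}{\classltgsij{i}{j}}$. Let $\aaphotgiso' \in \succsoford{\aaphotgiso}{\funbisim}$, witnessed by a homomorphism $\aaphotg \funbisim \aaphotg'$ between representatives; note that $\aaphotg'$ is again a $\lambda$-ap-ho-term-graph, since homomorphisms between $\lambda$-ap-ho-term-graphs are structure morphisms, so the unrestricted successor set on the $\lambda$-ap-ho-term-graph side needs no class-restriction. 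Then $\aphotgstoltgsij{i}{j}{\aaphotg'} \in \classltgsij{i}{j}$, and the ``$\Leftarrow$''-direction of the first sharing-order equivalence in \cref{thm:corr:aphotgs:ltgs}~\cref{thm:corr:aphotgs:ltgs:iii} gives $\aphotgstoltgsij{i}{j}{\aaphotg} \funbisim \aphotgstoltgsij{i}{j}{\aaphotg'}$. Hence $\aphotgsisotoltgsisoij{i}{j}{\aaphotgiso'} \in \succsofordin{\aphotgsisotoltgsisoij{i}{j}{\aaphotgiso}}{\funbisim}{\classltgsij{i}{j}}$, as required. Symmetrically, for $\srestrictto{\ltgsisotoaphotgsisoij{i}{j}{}}{\succsofordin{\aphotgsisotoltgsisoij{i}{j}{\aaphotgiso}}{\funbisim}{\classltgsij{i}{j}}}$: take $\atgiso' = \eqcl{\altg'}{\iso}$ with $\altg' \in \classltgsij{i}{j}$ and $\aphotgstoltgsij{i}{j}{\aaphotg} \funbisim \altg'$; then $\ltgstoaphotgsij{i}{j}{\altg'} \in \classaphotgsi{i}$, the ``$\Leftarrow$''-direction of the second equivalence in \cref{thm:corr:aphotgs:ltgs}~\cref{thm:corr:aphotgs:ltgs:iii} gives $\ltgstoaphotgsij{i}{j}{\aphotgstoltgsij{i}{j}{\aaphotg}} \funbisim \ltgstoaphotgsij{i}{j}{\altg'}$, and \cref{thm:corr:aphotgs:ltgs}~\cref{thm:corr:aphotgs:ltgs:i} rewrites the left-hand side as $\aaphotg$, so $\aaphotg \funbisim \ltgstoaphotgsij{i}{j}{\altg'}$ and $\ltgsisotoaphotgsisoij{i}{j}{\atgiso'} \in \succsoford{\aaphotgiso}{\funbisim}$.

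For the left-inverse claim I would invoke that the global identity $\ltgstoaphotgsij{i}{j}{} \circ \aphotgstoltgsij{i}{j}{} = \idon{\classaphotgsi{i}}{}$ from \cref{thm:corr:aphotgs:ltgs}~\cref{thm:corr:aphotgs:ltgs:i} transports to isomorphism equivalence classes as $\ltgsisotoaphotgsisoij{i}{j}{} \circ \aphotgsisotoltgsisoij{i}{j}{} = \idon{\classaphotgsisoi{i}}{}$ (this transport is already asserted in the final sentence of \cref{thm:corr:aphotgs:ltgs}). Restricting both sides of this identity to the subdomain $\succsoford{\aaphotgiso}{\funbisim}$ is legitimate precisely by the two well-definedness facts just established — $\aphotgsisotoltgsisoij{i}{j}{}$ maps $\succsoford{\aaphotgiso}{\funbisim}$ into the domain $\succsofordin{\aphotgsisotoltgsisoij{i}{j}{\aaphotgiso}}{\funbisim}{\classltgsij{i}{j}}$ of $\ltgsisotoaphotgsisoij{i}{j}{}$ — and yields exactly the asserted equality $\srestrictto{\ltgsisotoaphotgsisoij{i}{j}{}}{\succsofordin{\aphotgsisotoltgsisoij{i}{j}{\aaphotgiso}}{\funbisim}{\classltgsij{i}{j}}} \circ \srestrictto{\aphotgsisotoltgsisoij{i}{j}{}}{\succsoford{\aaphotgiso}{\funbisim}} = \idon{\succsoford{\aaphotgiso}{\funbisim}}{}$.

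The only subtlety — and hence the ``main obstacle'', though it is minor — is keeping the two flavours of successor set straight: on the $\lambda$-term-graph side one genuinely needs the $\classltgsij{i}{j}$-restriction, because that class is not closed under functional bisimulation (\cref{ltgs-not-closed-under-funbisim}), whereas on the $\lambda$-ap-ho-term-graph side no restriction is needed. The well-definedness steps above are exactly what is required to make sure both maps land in the correctly (un)restricted target sets, so that the composition identity makes sense and reduces to the restriction of \cref{thm:corr:aphotgs:ltgs}~\cref{thm:corr:aphotgs:ltgs:i}.
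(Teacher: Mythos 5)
Your proof is correct and follows exactly the route the paper intends: the lemma is presented there as a direct specialisation of \cref{thm:corr:aphotgs:ltgs}~\cref{thm:corr:aphotgs:ltgs:i} and \cref{thm:corr:aphotgs:ltgs:iii} (the paper gives no further written proof), and your well-definedness checks plus the restriction of the global identity supply precisely the omitted bookkeeping. Your remark that only the $\lambda$-term-graph side needs the class-restricted successor set is the right subtlety to flag.
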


The final tool for transfering the complete-lattice property to higher-order term graphs is the following general lemma about partial orders. It states that the complete-lattice property of partial orders is reflected by order homomorphisms with left-inverses.

\begin{lemma}[reflection of the complete-lattice property under order homomorphisms with left-inverses]\label{lem:reflect:complete:lattice}
	Let \m{\pair{\A}{≤_{\A}}} and \m{\pair{\B}{≤_{\B}}} be partial orders. Suppose that \m{h : \A → \B}, and \m{i : \B → \A} are order homomorphisms such that \i is a left-inverse of \h, i.e.: \m{{\i} ∘ {\h} = \idon{\A}{}}. Then if \m{\pair{\B}{≤_{\B}}} is a complete lattice, then so is \m{\pair{\A}{≤_{\A}}}.
\end{lemma}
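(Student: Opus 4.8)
The plan is to show directly that every subset $S \subseteq \A$ has a least upper bound in $\pair{\A}{\le_{\A}}$; the argument for greatest lower bounds is entirely dual. So let $S \subseteq \A$ be arbitrary. First I would push $S$ forward along $h$ to obtain $h(S) := \setcompr{h(a)}{a \in S} \subseteq \B$, and use that $\pair{\B}{\le_{\B}}$ is a complete lattice to get a least upper bound $b_0 := \lub{h(S)}$ in $\B$. The natural candidate for $\lub S$ in $\A$ is then $a_0 := i(b_0)$. I would verify the two defining properties of a least upper bound for $a_0$.

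For the \emph{upper bound} property, I would take an arbitrary $a \in S$; then $h(a) \le_{\B} b_0$ since $b_0$ is an upper bound of $h(S)$, and applying the order homomorphism $i$ gives $i(h(a)) \le_{\A} i(b_0) = a_0$; but $i(h(a)) = a$ because $i \circ h = \idon{\A}{}$, so $a \le_{\A} a_0$. For the \emph{least} property, I would take an arbitrary upper bound $a'$ of $S$ in $\A$; then for every $a \in S$ we have $a \le_{\A} a'$, hence $h(a) \le_{\B} h(a')$ by $h$ being an order homomorphism, so $h(a')$ is an upper bound of $h(S)$ in $\B$, whence $b_0 \le_{\B} h(a')$ by minimality of $b_0$; applying $i$ yields $a_0 = i(b_0) \le_{\A} i(h(a')) = a'$, again using $i \circ h = \idon{\A}{}$. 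This shows $a_0 = \lub S$.

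The dual argument for $\glb S$ proceeds the same way: set $b_1 := \glb{h(S)}$ in $\B$, put $a_1 := i(b_1)$, and use that $h$ and $i$ are order homomorphisms together with $i \circ h = \idon{\A}{}$ to check that $a_1$ is the greatest lower bound of $S$ in $\A$. Since every subset of $\A$ then has both a least upper bound and a greatest lower bound, $\pair{\A}{\le_{\A}}$ is a complete lattice.

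I do not anticipate a genuine obstacle here; the lemma is a routine transfer argument and the only thing to be careful about is that one uses $i \circ h = \idon{\A}{}$ (and not the reverse composition, which is not assumed) and that one never needs $h$ or $i$ to preserve joins or meets — only to be monotone. The slight subtlety worth flagging explicitly in the writeup is that the candidate $\lub S = i\bigl(\lub{h(S)}\bigr)$ need not equal any element expressible purely in terms of $S$ inside $\A$; its correctness rests entirely on the left-inverse identity collapsing $i(h(a))$ back to $a$ for the elements of $S$.
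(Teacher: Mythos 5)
Your proposal is correct and coincides with the paper's own proof: both take the candidate $\lub S = i\bigl(\lub{h(S)}\bigr)$ and verify the upper-bound and minimality properties using only monotonicity of $h$ and $i$ together with $i \circ h = \idon{\A}{}$, handling greatest lower bounds dually. No gaps.
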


\begin{proof}
	Let \m{S \subseteq \A} be arbitrary. We have to show the existence of a least upper bound (l.u.b.) \lub{S} and a greatest lower bound (g.l.b.) \glb{S} of \m{S} in \m{\pair{\A}{≤_{\A}}}. We only show the existence of the l.u.b., because the argument for the g.l.b.\ is analogous. Since \m{\pair{\B}{≤_{\B}}} is a complete lattice, we know that  a l.u.b.\ \m{\lub{h(S)}} exists. We will show that \m{\lub{S} = i(\lub{h(S)})}.
	\par \m{i(\lub{h(S)})} is an upper bound of \m{S}: Let \m{s ∈ S} be arbitrary. Then \m{h(s) ∈ h(S)}, and \m{h(s) ≤_{\B} \lub{h(S)}}. Since \i is an order homomorphism, and a left-inverse of \h, it follows that \m{s = i(h(s)) ≤_{\A} i(\lub{h(S)})}.
	\par \m{i(\lub{h(S)})} is less or equal to all upper bounds of \m{S}: Let \m{u ∈ \A} be an upper bound of \m{S}. As \h is an order homomorphism, it follows that \m{h(u)} is an upper bound of \m{h(S)}. Consequently, \m{\lub{h(S)} ≤_{\B} h(u)}. Again, since \i is an order homomorphism, and a left-inverse of \h, this entails \m{i(\lub{h(S)}) ≤_{\A} i(h(u)) = u}.
\end{proof}

Now we can formulate and prove the main result of this section.

\begin{theorem}\label{hotgs-complete-lattice}
	For every \m{\ahotg ∈ \eag{\classlhotgsi{1}}} it holds that \pair{\succsoford{\aaphotgiso}{\funbisim}}{\funbisim} is a complete lattice.
\end{theorem}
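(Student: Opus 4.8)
The strategy is to transport the complete-lattice property from first-order term graphs down to $\eag{\classlhotgsi{1}}$ through the two correspondences established earlier, using \cref{lem:reflect:complete:lattice} at the decisive step. First I would pass from higher-order to abstraction-prefix based higher-order term graphs: by \cref{thm:corr:lhotgs:aphotgs} the maps $\lhotgstoaphotgsi{1}{}$ and $\aphotgstolhotgsi{1}{}$ are mutually inverse bijections between $\classlhotgsi{1}$ and $\classaphotgsi{1}$ that preserve and reflect the sharing order, so the induced maps $\slhotgsisotoaphotgsisoi{1}$ and $\saphotgsisotolhotgsisoi{1}$ are mutually inverse order isomorphisms on isomorphism equivalence classes; since they leave the underlying term graph unchanged they also carry the eager-scope property across, so it suffices to prove the statement for $\aaphotg := \lhotgstoaphotgsi{1}{\ahotg} \in \eag{\classaphotgsi{1}}$ with iso-class $\aaphotgiso$ (and the case of $\ahotg$ itself then follows by applying $\saphotgsisotolhotgsisoi{1}$, an order isomorphism, which preserves the complete-lattice property).

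Next I would go one level further down to λ-term-graphs with scope delimiters. By \cref{prop:preserve:reflect:eagscope} the correspondence $\aphotgstoltgsij{1}{2}{}$ sends the eager-scope λ-ap-ho-term-graph $\aaphotg$ to an eager-scope λ-term-graph $\altg := \aphotgstoltgsij{1}{2}{\aaphotg} \in \eag{\classltgsij{1}{2}}$, with iso-class $\atgiso = \aphotgsisotoltgsisoij{1}{2}{\aaphotgiso}$. Since $\altg \in \eag{\classltgsij{1}{2}} \subseteq \fbl{\classltgsij{1}{2}} \cup \eag{\classltgsij{1}{2}}$ (cf.\ \cref{prop:eager:scope:fully:backlinked}), \cref{thm:complete:lattice:fbl:eagscope:ltgs} applies and yields that $\pair{\succsofordin{\atgiso}{\funbisim}{\classltgsij{1}{2}}}{\funbisim}$ is a complete lattice.

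Now I would invoke \cref{lem:left-inverse:restrictions:to:bisim:equivclasses}: the restrictions
\[
\srestrictto{\aphotgsisotoltgsisoij{1}{2}{}}{\succsoford{\aaphotgiso}{\funbisim}} : \succsoford{\aaphotgiso}{\funbisim} \to \succsofordin{\atgiso}{\funbisim}{\classltgsij{1}{2}}
\]
and
\[
\srestrictto{\ltgsisotoaphotgsisoij{1}{2}{}}{\succsofordin{\atgiso}{\funbisim}{\classltgsij{1}{2}}} : \succsofordin{\atgiso}{\funbisim}{\classltgsij{1}{2}} \to \succsoford{\aaphotgiso}{\funbisim}
\]
are well-defined, the second is a left-inverse of the first, and both are order homomorphisms because $\aphotgsisotoltgsisoij{1}{2}{}$ and $\ltgsisotoaphotgsisoij{1}{2}{}$ preserve and reflect the sharing order by \cref{thm:corr:aphotgs:ltgs}\cref{thm:corr:aphotgs:ltgs:iii}. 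Applying \cref{lem:reflect:complete:lattice} with $\A = \succsoford{\aaphotgiso}{\funbisim}$, $\B = \succsofordin{\atgiso}{\funbisim}{\classltgsij{1}{2}}$, $h$ the first restriction and $i$ the second, the complete-lattice property of $\B$ is reflected along $h$ to $\A$, giving that $\pair{\succsoford{\aaphotgiso}{\funbisim}}{\funbisim}$ is a complete lattice — which is the assertion (and, via $\saphotgsisotolhotgsisoi{1}$ as noted above, also yields it for $\ahotg$).

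\textbf{Main obstacle.} There is no deep new argument needed; all the heavy lifting (closure of $\eag{\classltgsij{1}{2}}$ under functional bisimulation, the complete-lattice property there, the left-inverse structure of the correspondences, and reflection of complete lattices along order homomorphisms with left-inverses) is already in place. The one point requiring care is the bookkeeping of \emph{types} across the three levels — in particular tracking that the eager-scope property is genuinely carried along each correspondence (this is exactly \cref{prop:preserve:reflect:eagscope} together with the unchanged underlying term graph for $\lhotgstoaphotgsi{1}{}$), and being precise about which ambient class $\succsoford{\cdot}{\funbisim}$ refers to at each stage so that \cref{lem:left-inverse:restrictions:to:bisim:equivclasses} and \cref{lem:reflect:complete:lattice} are applied with matching domains.
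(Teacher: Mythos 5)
Your proof is correct and follows essentially the same route as the paper's: push the eager-scope graph down to $\eag{\classltgsij{1}{2}}$ via \cref{prop:preserve:reflect:eagscope}, invoke \cref{thm:complete:lattice:fbl:eagscope:ltgs} there, and pull the complete-lattice property back with \cref{lem:left-inverse:restrictions:to:bisim:equivclasses} and \cref{lem:reflect:complete:lattice}. The only difference is that you make explicit the initial passage from $\eag{\classlhotgsi{1}}$ to $\eag{\classaphotgsi{1}}$ via \cref{thm:corr:lhotgs:aphotgs}, a step the paper's proof elides by treating the two levels interchangeably — your version is the more careful one.
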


\begin{proof}
	Let \m{\ahotg ∈ \eag{\classlhotgsi{1}}} Then by \cref{prop:preserve:reflect:eagscope} it follows that \m{\aphotgstoltgsij{1}{2}{\aaphotg} ∈ \eag{\classltgsij{1}{2}}} and also that \m{\aphotgsisotoltgsisoij{1}{2}{\aaphotgiso} ∈ \eag{\classltgsisoij{1}{2}}}. Now \cref{thm:complete:lattice:fbl:eagscope:ltgs} yields that \succsofordin{\aphotgsisotoltgsisoij{1}{2}{\aaphotgiso}}{\funbisim}{\classltgsij{1}{2}} w.r.t.\ \funbisim is a complete lattice. Furthermore note that \srestrictto{\ltgsisotoaphotgsisoij{1}{2}{}}{\succsofordin{\aphotgsisotoltgsisoij{1}{2}{\aaphotgiso}}{\funbisim}{\classltgsij{1}{2}}} is a left-inverse of \srestrictto{\aphotgsisotoltgsisoij{1}{2}{}}{\succsoford{\aaphotgiso}{\funbisim}} due to \cref{lem:left-inverse:restrictions:to:bisim:equivclasses}. Hence \cref{lem:reflect:complete:lattice} can be used to show that the complete-lattice property of \succsofordin{\aphotgsisotoltgsisoij{1}{2}{\aaphotgiso}}{\funbisim}{\classltgsij{1}{2}} w.r.t.\ \funbisim is reflected by \srestrictto{\aphotgsisotoltgsisoij{1}{2}{}}{\succsoford{\aaphotgiso}{\funbisim}}, yielding that \pair{\succsoford{\aaphotgiso}{\funbisim}}{\funbisim} is a complete lattice.
\end{proof}

\section{Summary}\label{sec:representations:conclusion}

\begin{para}[summary]
	We defined higher-order term graph representations for strongly regular λ-terms:
	\begin{itemize}
		\item λ-ho-term-graphs \classlhotgsi{i}, an adaptation of Blom's `higher-order term graphs' \cite{blom:2001}, which possess a scope function that maps every abstraction vertex \v to the set of vertices that are in the scope of \v.
		\item λ-ap-ho-term-graphs \classaphotgsi{i}, which instead of a scope function carry an abstraction-prefix function that assigns to every vertex information about the scoping structure. Abstraction prefixes are closely related to the notion of `generated subterms' for λ-terms (\cref{def:ST}). The correctness conditions here are simpler and more intuitive than for λ-ho-term-graphs.
	\end{itemize}
	These classes are defined for \m{i ∈ \set{0,1}}, according to whether variable occurrences have backlinks
	to abstractions or not.
	Our main statements about these classes are:
	\begin{itemize}
		\item a bijective correspondence between \classlhotgsi{i} and \classaphotgsi{i} via mappings \lhotgstoaphotgsi{i}{} and \aphotgstolhotgsi{i}{} that preserve and reflect the sharing order (\cref{thm:corr:lhotgs:aphotgs});
		\item the naive approach to implementing functional bisimulation on theses classes (ignoring all scoping information and using only the underlying first-order term graphs) fails (\cref{prop:forgetful}).
	\end{itemize}
	The latter was the motivation to consider first-order term graphs with scope delimiters:
	\begin{itemize}
		\item λ-term-graphs \classltgsij{i}{j} (with \m{i ∈ \set{0,1}} for whether there are variable backlinks and \m{j ∈ \set{1,2}} for whether there are delimiter backlinks), which are plain first-order term graphs, but which require as a correctness condition the existence of an abstraction-prefix function.
	\end{itemize}
	The most important results linking these classes with λ-ap-ho-term-graphs are:
	\begin{itemize}
		\item an `almost bijective' correspondence between \classaphotgsi{i} via mappings \aphotgstoltgsij{i}{j}{} and \ltgstoaphotgsij{i}{j}{} that preserve and reflect the sharing order (\cref{thm:corr:aphotgs:ltgs});
		\item the subclass \eag{\classltgsij{1}{2}} of eager-scope λ-term-graphs is closed under functional bisimulation (\cref{cor:closed:under:fun:bisim}).
	\end{itemize}
	\begin{figure}
		\begin{center}
			\begin{tikzpicture}
				\matrix[row sep=-5.35mm,column sep=1.5cm]{
					\node(tl){\phantom{I}};&
					\node(ml){\phantom{I}};&
					\node(bl){\phantom{I}};\\
					\node(t){\eag{\classlhotgsi{1}}};&
					\node(m){\eag{\classaphotgsi{1}}};&
					\node(b){\eag{\classltgsij{1}{2}}};\\
					\node(tr){\phantom{I}};&
					\node(mr){\phantom{I}};&
					\node(br){\phantom{I}};\\
				};
				\draw[->]($($(tl)!.5!(ml)$)!0.75cm!(tl)$) to node[above]{$\lhotgstoaphotgsi{1}{}$}  ($($(tl)!.5!(ml)$)!0.5cm!(ml)$);
				\draw[->]($($(bl)!.5!(ml)$)!0.65cm!(ml)$) to node[above]{$\aphotgstoltgsij{1}{2}{}$}  ($($(bl)!.5!(ml)$)!0.6cm!(bl)$);
				\draw[->]($($(br)!.5!(mr)$)!0.6cm!(br)$) to node[below]{$\ltgstoaphotgsij{1}{2}{}$} ($($(br)!.5!(mr)$)!0.65cm!(mr)$);
				\draw[->]($($(tr)!.5!(mr)$)!0.5cm!(mr)$) to node[below]{$\aphotgstolhotgsi{1}{}$} ($($(tr)!.5!(mr)$)!0.75cm!(tr)$);
			\end{tikzpicture}
			\\[2ex]
			\begin{tikzpicture}
				\matrix[row sep=1.4cm,column sep=1.4cm]{
					\node(tl){\ahotg};&
					\node(ml){\m{\aaphotg'}};&
					\node(bl){\altg};&\\
					\node(tr){\m{\ahotg_0}};&
					\node(mr){\m{\aaphotg_0'}};&
					\node(br){\m{\altg_0}};\\
				};
				\draw[funbisim](tl) to (tr);
				\draw[funbisim](ml) to (mr);
				\draw[funbisim](bl) to (br);
				\draw[|->](tl) to node[above]{\lhotgstoaphotgsi{1}{}} (ml);
				\draw[|->](ml) to node[above]{\aphotgstoltgsij{1}{2}{}} (bl);
				\draw[|->](br) to node[above]{\ltgstoaphotgsij{1}{2}{}} (mr);
				\draw[|->](mr) to node[above]{\aphotgstolhotgsi{1}{}} (tr);
			\end{tikzpicture}
		\end{center}
		\caption{The correspondences (top) permit an implementation of functional bisimulation on higher-order term graphs using functional bisimulation on first-order term graphs (bottom).}
		\label{fig:implementation-of-ho-fb}
	\end{figure}
	The correspondences together with the closedness result allow us to handle functional bisimulation between eager-scope higher-order term graphs in a straightforward manner by implementing them via functional bisimulation between first-order term graphs as shown in \cref{fig:implementation-of-ho-fb}.
\end{para}

\begin{para}[outlook: maximal sharing]
	The findings from this chapter are a toehold for the concept of maximal sharing, which we develop in the following chapter. In particular the complete-property of the classes \eag{\classlhotgsi{1}}, \eag{\classltgsij{1}{2}}, and \classtgssiglambdaij{1}{2} implies that for every graph in these classes there is a unique maximally compact version of that element -- its bisimulation collapse -- which we call its maximally shared form. For an eager λ-ho-term-graph \ahotg the maximally shared form can be computed as:
	\[\collC{\eag{\classlhotgsi{1}}}{\ahotg} = (\aphotgstolhotgsi{1}{} ∘ {{\ltgstoaphotgsij{1}{2}{}} ∘ {{\scollC{\classtgssiglambdaij{1}{2}}} ∘ {{\aphotgstoltgsij{1}{2}{}} ∘ {\lhotgstoaphotgsi{1}{}}}}})(\ahotg)\]
	For implementing \scollC{\classtgssiglambdaij{1}{2}} fast algorithms are available.
	\par Analogous to \cref{rem:generalisation:to:non:eagscope} this can be generalised to term graphs without eager scope-closure. For our intent of getting a grip on maximal sharing in \lambdaletrec, however, only eager scope-closure is practically relevant, because it facilitates the highest degree of sharing.
\end{para}


\chapter{Maximal Sharing in \texorpdfstring{\lambdaletreccal}{λletrec}}\label{chap:maxsharing}


\section{Overview}

\setcounter{theorem}{-1}
\begin{para}[teaser]\label{maxsharing:teaser}
	Can we transform \abs{f}{\letin{r=\app{f}{(\app{f}{r})}}{r}} into the more efficient form \abs{f}{\letin{r=\app{f}{r}}{r}} and can we prove that these terms are equivalent?
\end{para}

\begin{para}[subject matter]
	Increasing sharing in programs is generally desirable. It results in more compact code and avoids duplication of reduction work at run-time, thereby speeding up execution. We show how a maximal degree of sharing can be obtained for programs expressed as terms in the λ-calculus with \letrec. We introduce a notion of `maximal compactness' for \lambdaletrec-terms among all terms with the same unfolding. We translate \lambdaletrec-terms into a term graph representation which respects the unfolding semantics in the sense that bisimilarity preserves and reflects unfolding equivalence on the \lambdaletrec-terms. Compactness of the term graphs can then be compared and increased by means of functional bisimulation.
\end{para}

\begin{para}[methods and formalisms]
	We lean heavily on the results from the previous chapters. While in \cref{chap:expressibility} we completely focused on the λ-terms expressed by \lambdaletrec-terms, in \cref{chap:representations} we exclusively investigated possible suitable graph representations for \lambdaletrec-terms. In this chapter we put things together and connect the graph representations to the unfolding semantics. We provide a translation into higher-order and first-order term graphs and we show that it respects the unfolding semantics in the sense that bisimulation preserves and reflects the unfolding semantics.
\end{para}

\begin{para}[results]
	We obtain practical and efficient methods for the following two problems: transforming a \lambdaletrec-term into a maximally compact form; and deciding whether two \lambdaletrec-terms have the same unfolding.
	\par The transformation of a \lambdaletrec-term \L into maximally compact form \m{\L_0} proceeds in three steps:
	\begin{enumerate}[(i)]
		\item translate \L into a term graph \m{\atg = \graphsem{\L}};
		\item compute the maximally shared form of \atg as its bisimulation collapse \m{\atg_0};
		\item read back a \lambdaletrec-term \m{\L_0} from \m{\atg_0} with the property \m{\graphsem{\L_0} = \atg_0}.
	\end{enumerate}
	The transformation is sound in the sense that \m{\L_0} and \L have the same λ-term as their unfolding.
	\par The procedure for deciding whether two given \lambdaletrec-terms \m{\L_1} and \m{\L_2} are unfolding-equivalent computes their term graph interpretations \graphsem{\L_1} and \graphsem{\L_2},
	and checks whether these term graphs are bisimilar.
	\par We also provide an implementation.
\end{para}

\section{Preliminaries}





\begin{definition}[bisimulation collapse]\label{def:bisimulation_collapse}
	Let \m{\atg = \tuple{\V,\lab{},\args{},\r}} be a term graph. A \emph{bisimulation collapse} of \atg is a maximal element in the class \setcompr{G'}{\atg \funbisim G'} up to \iso, that is, a term graph \m{G'_0} with \m{\atg \funbisim G'_0} such that if \mbox{\m{G'_0 \funbisim G''_0}} for some term graph \m{G''_0}, then \mbox{\m{G''_0 \iso G'_0}}. Every two bisimulation collapses of \atg are isomorphic. This justifies the common abbreviation of saying that `the bisimulation collapse' of \atg is unique up to isomorphism.
\end{definition}

\section{Introduction}

\begin{para}[sharing by \letrec]
	Explicit sharing in pure functional programming languages is typically expressed by means of the \letrec-construct, which facilitates cyclic definitions (see also \cref{the-letrec-construct}).
	For the programmer the \letrec-construct offers the possibility to write a program more compactly by utilising subterm sharing. \letrec-expressions bind subterms to variables; these variables then denote occurrences of the respective subterms and can be used anywhere inside of the \letrec-expression (also recursively).
	In this way, instead of repeating a subterm multiple times,
	a single definition can be given which is then referenced from multiple positions.
\end{para}

\begin{example}[horizontal sharing]\label{ex-cse}
	Consider the λ-term \m{\app{(\abs{x}{x})}{(\abs{x}{x})}} with two occurrences of the subterm \abs{x}{x}. These occurrences can be shared as done in the \lambdaletrec-term \m{\letin{id=\abs{x}{x}}{\app{id}{id}}}. Obviously it holds: \m{\unfsem{\letin{id=\abs{x}{x}}{\app{id}{id}}} = \app{(\abs{x}{x})}{(\abs{x}{x})}}
\end{example}

As \Let-expressions permit cyclic definitions, sharing can not only occur horizontally but also vertically.

\begin{example}[vertical sharing]\label{ex:fix}
	Consider the \lambdaletrec-terms \L and \P and the λ-term \M from \cref{ex:expressibility:fix}:
	\begin{align*}
		&
		\begin{aligned}
			\L &:= \abs{f}{\letin{r=\app{f}{r}}{r}} \\
			\P &:= \abs{f}{\letin{r=\app{f}{(\app{f}{r})}}{r}}
		\end{aligned}
		&
		\M  &:=  \abs{f}{\app{f}{(\app{f}{(\dots)})}}
	\end{align*}
	Both \L and \P have \M as their infinite unfolding: \m{\unfsem{\L} = \unfsem{P}  = \M}. Note that \L represents \M in a more compact way than \P. It is intuitively clear that there is no \lambdaletrec-term that represents \M more compactly than \L. So \L can be called a `maximally shared form' of \P (and of \M).
\end{example}

\begin{remark}[twisted sharing]\label{horiz-vert-twisted-sharing}
	Besides horizontal and vertical sharing also a hybrid form of sharing can occur, a sort of superimposition of both kinds, which is called `twisted sharing' in \cite[Definition 4.1.7]{blom:2001}.
\end{remark}

\begin{para}[dynamic vs.\ static sharing]
	In the context of functional programming `sharing' can refer to two different -- albeit related -- notions. Here, we call them \emph{static} and \emph{dynamic} sharing, while in the literature about static and dynamic sharing, this distinction is usually not made explicitly.
	\par \emph{Static sharing} simply refers to a trait of some (or graph) languages in which a term (graph) with multiple occurrences of the same subterm (subgraph) can also be written more compactly, where the subterm (subgraph) is written out only once and referenced from multiple points. Static sharing is possible in most programing languages and most optimising compilers perform common subexpression elimination at compile time to increase sharing. \lambdaletreccal and the graph formalisms from the previous chapter with their unfolding semantics are of course typical examples of languages with static sharing. This thesis focuses (almost exclusively) on static sharing.
	\emph{Dynamic sharing} refers to the degree of `static` sharing an evaluator is able to maintain during evaluation. This is an important issue because unsharing is an integral part of evaluation. A shared function typically behaves differently in different contexts (i.e.\ with different input), therefore it is impossible to evaluate the entire function only once if the result is required for two different inputs. However some portion of the computation may very well be shared. The degree of sharing of an evaluator refers to how clever it performs unsharing, and therefore how much of the computation can be shared. Terms as `call-by-need', `full laziness' \cite{wads:1971}, `complete laziness', and `optimal evaluation' \cite{levy:1978,aspe:guer:1998} all refer to dynamic sharing. An overview can be found in \cite[3.4]{thyer}. While our maximal sharing is a priori a method for increasing static, not dynamic, sharing, we do envisage applying it as part of an evaluator, collapsing the program's graph representation periodically at run-time (see \cref{sec:applications}).
\end{para}

\begin{para}[`maximal sharing' in ATERM]
	The term `maximal sharing' stems from work on the ATERM library \cite{bran:klin:2007}. It describes a technique for minimising memory usage when representing a set of terms in a first-order term rewrite system (TRS). The terms are kept in an aggregate directed acyclic graph by which their syntax trees are shared as much as possible. Thereby terms are created only if they are entirely new; otherwise they are referenced by pointers to roots of sub-dags. Our use of the expression `maximal sharing' is inspired by that work, but our results generalise that approach in the following ways:
	\begin{itemize}
		\item Instead of first-order terms we consider terms in higher-order languages.
		\item Since \letrec can express cyclic sharing, we interpret terms as cyclic graphs instead of just dags.
		\item We increase sharing by bisimulation collapse instead of by identifying isomorphic sub-dags.
	\end{itemize}
\end{para}

\begin{para}[common subexpression elimination]
	ATERM only checks for equality of subexpressions. Therefore it only introduces horizontal sharing (for a definition see \cite{blom:2001}) and implements a form of \emph{common subexpression elimination (CSE)} \cite[14.7.2]{peyt:jone:1987}. Our approach is stronger than CSE: while \cref{ex-cse} can be handled by CSE, this is not the case for \cref{ex:fix}. In contrast to CSE, our approach increases also vertical and twisted sharing (see \cref{horiz-vert-twisted-sharing}).
\end{para}

\section{Overview: Methods and Formalisms}

Here we will quickly introduce mathematical symbols for the central formalisms (which are later properly defined) so that we can sketch a complete picture of the algorithms we develop in this chapter.

\begin{para}[graph formalisms]
	The methods that we introduce in this chapter rely heavily on the term graph formalisms developed in \cref{chap:representations}. As a main result of \cref{chap:representations} we have identified suitable classes of term graphs for representing regular λ-terms. In particular we will use eager-scope λ-ap-ho-term-graphs\footnote{Note that λ-ho-term-graphs would be just as suitable due to \cref{thm:corr:lhotgs:aphotgs}} with variable backlinks and eager-scope λ-term-graphs with variable and scope-delimiter backlinks, however over a slightly modified signature which includes black holes. In this chapter we will use an abbreviated notation for these classes. λ-ap-ho-term-graphs (originally \classaphotgsi{1}) amended with black holes are now denoted by \classlhotgs. λ-term-graphs (originally \classltgsij{1}{2}) amended with black holes are now denoted by \classltgs.
\end{para}

\begin{para}[graph semantics]
	We provide a higher-order graph semantics \graphsemC{\classlhotgs}{} for interpreting \lambdaletrec-terms as eager-scope λ-ap-ho-term-graphs. Together with the correspondence from \cref{thm:corr:aphotgs:ltgs} it induces a first-order graph semantics \graphsemC{\classltgs}{}. Specifically we use the mapping \aphotgstoltgsij{1}{2}{} from λ-ap-ho-term-graphs to λ-term-graphs from \cref{prop:mappings:aphotgs:to:ltgs}, which (amended to handling black holes) we call \lhotgstoltgs{} in this chapter.
\end{para}

\begin{para}[readback]
	In order to be able to compute the \lambdaletrec-term that a λ-term-graph stands for, we provide a readback function \readback{} from λ-term-graphs to \lambdaletrec-terms with the property that it is a right inverse of \graphsemC{\classltgs}{} up to graph isomorphism.
	A \emph{readback} function \readback{} from λ-term-graphs to \lambdaletrec-terms that, for every λ-term-graph \altg, computes a \lambdaletrec-term \L from the set of \lambdaletrec-terms that have \altg as their first-order interpretation via \graphsemC{\classlhotgs}{} and \lhotgstoltgs{} (i.e.\ a \lambdaletrec-term for which it holds that \m{\lhotgstoltgs{\graphsemC{\classlhotgs}{\L}} = \atg}).
\end{para}

\begin{figure}
	\begin{tikzpicture}
		\matrix[row sep=0.8cm,column sep=1.2cm]{
			\node(L){\L};&
			\node(G){\ahotg};&
			\node(g){\altg};
			\\
			\node[xshift=3mm](M){\M};
			\\
			\node(L0){\m{\L_0}};&
			\node(G0){\m{\ahotg_0}};&
			\node(g0){\m{\altg_0}};
			\\
		};
		\draw[|->] (L) to node[left]{\unfsem{}} (M);
		\draw[|->] (L0) to node[left]{\unfsem{}} (M);
		\draw[|->] (L) to node[above]{\graphsemC{\classlhotgs}{}} (G);
		\draw[|->] (G) to node[above]{\lhotgstoltgs{}} (g);
		\draw[bend left=40,|->] (L) to node[above]{\graphsemC{\classltgs}{}} (g);
		\draw[|->] (L0) to node[above]{\graphsemC{\classlhotgs}{}} (G0);
		\draw[|->] (G0) to node[above]{\lhotgstoltgs{}} (g0);
		\draw[bend left=32,|->] (g0) to node[below]{\readback{}} (L0);
		\draw[bend left=41,|->] (L0) to node[above]{\graphsemC{\classltgs}{}\hspace*{5ex}} (g0);
		\draw[funbisim] (G) to node[right]{\scoll} (G0);
		\draw[funbisim] (g) to node[right]{\scoll} (g0);
	\end{tikzpicture}
	\hfill
	\begin{tikzpicture}
		\matrix[row sep=0.8cm,column sep=1.2cm]{
			\node(L1){\m{\L_1}};&
			\node(G1){\m{\ahotg_1}};&
			\node(g1){\m{\altg_1}};
			\\
			\node[xshift=3mm](M){\M};
			\\
			\node(L2){\m{\L_2}};&
			\node(G2){\m{\ahotg_2}};&
			\node(g2){\m{\altg_2}};
			\\
		};
		\draw[|->] (L1) to node[left]{\unfsem{}} (M);
		\draw[|->] (L2) to node[left]{\unfsem{}} (M);
		\draw[|->] (L1) to node[above]{\graphsemC{\classlhotgs}{}} (G1);
		\draw[|->] (G1) to node[above]{\lhotgstoltgs{}} (g1);
		\draw[|->] (L2) to node[above]{\graphsemC{\classlhotgs}{}} (G2);
		\draw[|->] (G2) to node[above]{\lhotgstoltgs{}} (g2);
		\draw[bisim] (G1) to (G2);
		\draw[bisim] (g1) to (g2);
		\draw[bend left=40,|->] (L1) to node[above]{\graphsemC{\classltgs}{}} (g1);
		\draw[bend left=40,|->] (L2) to node[above]{\m{\graphsemC{\classltgs}{}\hspace*{6ex}}} (g2);
		\draw[color=white,bend left=32,|->] (g2) to node[below]{\readback{}} (L2); 
	\end{tikzpicture}
	\caption{On the left: computing the maximally shared form \m{\L_0} of a \lambdaletrec-term \L via bisimulation collapse \scoll. On the right: deciding unfolding equivalence of \lambdaletrec-terms \m{\L_1} and \m{\L_2} via bisimilarity \bisim.}
	\label{fig:methods}
\end{figure}
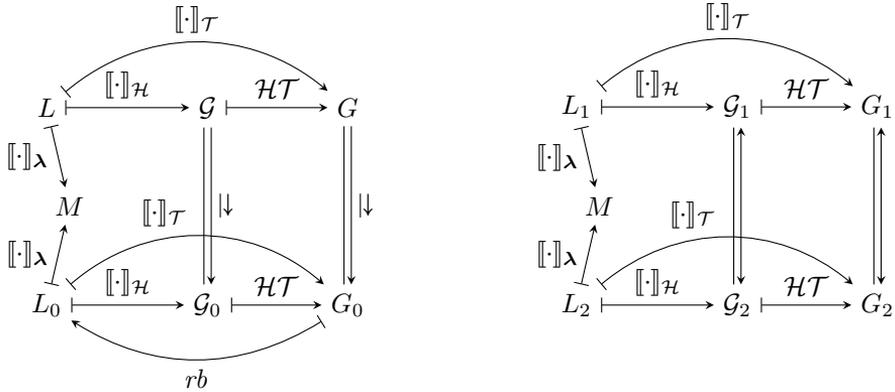

\begin{para}[methods]\label{pipeline}
	Putting the above formalisms together we obtain the followings methods, illustrated in \cref{fig:methods}):
	\begin{itemize}
		\item \emph{Maximal sharing}:
			for a given \lambdaletrec-term, a maximally shared form can be obtained by collapsing its first-order term graph interpretation, and then reading back the collapse: \m{{{\readback{}} ∘ {\scoll}} ∘ {\graphsemC{\classltgs}{}}}
		\item \emph{Unfolding equivalence}:
			for given \lambdaletrec-terms \L and \P, it can be decided whether \m{\unfsem{\L} = \unfsem{P}} by checking whether their term graph interpretations \graphsemC{\classltgs}{\L} and \graphsemC{\classltgs}{P} are bisimilar.
	\end{itemize}
	See \cref{fig:ex:compact} for an illustration of the application of the maximal sharing method to the \lambdaletrec-terms \L and \P from \cref{ex:fix}.
\end{para}

\begin{figure}
	\begin{tikzpicture}
		\matrix[row sep=-2mm,column sep=0.3cm]{
			\node(L1){\m{\abs{f}{\letin{r=\app{f}{r}}{r}}}}; &&
			\node(L2){\m{\abs{f}{\letin{r=\app{f}{(\app{f}{r})}}{r}}}};\\[5ex]
			&\node(M){\m{\abs{f}{\app{f}{(\app{f}{(\dots)})}}}}; \\
			\node(G1){\fig{fix-eff}}; && \node(G2){\fig{fix-big}};\\
		};
		\draw[funbisim] (G2) to node[above]{\scoll} (G1);
		\draw[|->] (L1) to node[above,near end]{\unfsem{}} (M);
		\draw[|->] (G1.100) to node[left]{\readback{}} (L1.235);
		\draw[|->] (L2) to node[above,near end]{\unfsem{}} (M);
		\draw[|->] (L1) to node[right]{\graphsemC{\classltgs}{}} (G1);
		\draw[|->] (L2) to node[right]{\graphsemC{\classltgs}{}} (G2);
	\end{tikzpicture}
	\caption{Computing the maximally shared version of the term \P (on the right) from \cref{ex:fix} yielding \L (on the left).}
	\label{fig:ex:compact}
\end{figure}
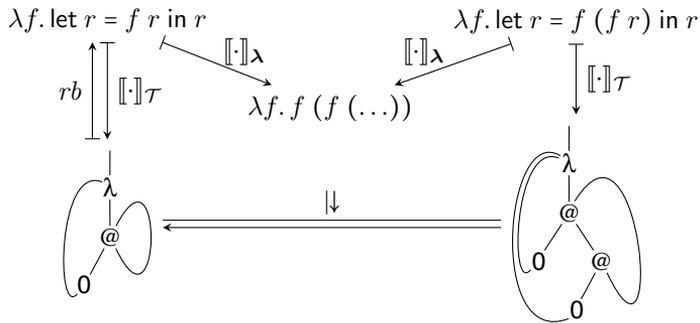

\begin{para}[correctness and practicality]
	The correctness of these methods hinges on the fact that the translation and the readback satisfy the following properties:
	\begin{enumerate}[(P1)]
		\item\label[property]{methods:properties:correctness}
			\lambdaletrec-terms \L and \P have the same infinite unfolding if and only if the term graphs \graphsemC{\classltgs}{\L} and \graphsemC{\classltgs}{P} are bisimilar. 
		\item\label[property]{methods:properties:closedness}
			The class \classltgs of λ-term-graphs is closed under functional bisimulation.
		\item\label[property]{methods:properties:readback}
			The readback \readback{} is a right inverse of \graphsemC{\classltgs}{} up to isomorphism \iso, that is, for all term graphs \m{\atg ∈ \classltgs} it holds: \m{\graphsemC{\classltgs}{\readback{\atg}} \iso \atg}.
	\end{enumerate}
	Furthermore, practicality of these methods depends on the property:
	\begin{enumerate}[(P1)]\setcounter{enumi}{3}
		\item\label[property]{methods:properties:efficiency}
			Translation \graphsemC{\classltgs}{} and readback \readback{} are efficiently computable.
	\end{enumerate}
\end{para}

\begin{para}[applications]\label{maxsharing:applications}
	Our approach holds promise for a number of practical applications:
	\begin{itemize}
		\item Increasing the efficiency of the execution of programs by transforming them into their maximally shared form at compile-time.
		\item Increasing the efficiency of the execution of programs by periodically compactifying the program at run time.
		\item Improving systems for recognising program equivalence.
		\item Providing feedback to the programmer, along the lines: `This code has identical fragments and can be written more compactly.'
	\end{itemize}
	These and a number of other potential applications are discussed in more detail in \cref{sec:applications}.
\end{para}


\section{Interpretion of \lambdaletrec-terms as λ-ap-ho-term-graphs}\label{sec:lhotgs:subsec:trans}

\begin{para}[natural first-order semantics]
	First we will quickly look at a naive translation of \lambdaletrec-terms into first-order term graphs over \sigTGi{1} which we call the \emph{natural first-order semantics} of \lambdaletreccal and see that it does not work. We will not give a formal definition of that translation but only the following informal description: consider the syntax tree of a \lambdaletrec-term and resolve variable occurrences as \0-vertices with variable backlinks; furthermore resolve occurrences of function variables as an edge to the root of the corresponding function's subgraph in the term graph translation.
\end{para}

\begin{example}[incorrectness of the natural first-order semantics]\label{ex:counternat}
	For the terms \L, \m{L_1}, \m{L_2} below, it holds that \m{\unfsem{\L_1} = \unfsem{\L} ≠ \unfsem{\L_2}}:
	\begin{equation*}
		\begin{aligned}
			\L_1 & ~=~ \letin{f = \abs{x}{\app{(\abs{y}{\app{f}{y}})}{x}}}{f} \\
			\L   & ~=~ \letin{f = \abs{x}{\app{f}{x}}}{f} \\
			\L_2 & ~=~ \letin{f = \abs{x}{\app{(\abs{y}{\app{f}{x}})}{x}}}{f}
		\end{aligned}
	\end{equation*}
	Their term graph interpretation under the natural first-order semantics \m{G_1}, \G, and \m{G_2} are however all bisimilar:
	\begin{hspread}
		\vcentered{\fig{counterex-onlyvarbl-y}} & \vcentered{\funbisim} & \vcentered{\fig{counterex-onlyvarbl-collapse}} & \vcentered{\invfunbisim} & \vcentered{\fig{counterex-onlyvarbl-x}} \\
		\m{G_1} & & \G & & \m{G_2}
	\end{hspread}
	This violates \cref{methods:properties:correctness} as under this semantics bisimilarity does not guarantee unfolding equivalence. Therefore the natural first-order semantics is incorrect.
\end{example}

Let us consider the translation of the above example into λ-ap-ho-term-graphs, before we give a formal definition thereof.

\begin{example}[the λ-ap-ho-term-graphs of the terms in \cref{ex:counternat}]\label{ex:counternat_lhotgs}
	\hfill
	\begin{hspread}
		\vcentered{\fig{counterex-onlyvarbl-y-prefixed}} & \vcentered{\funbisim} & \vcentered{\fig{counterex-onlyvarbl-collapse-prefixed}} & \vcentered{\not\invfunbisim} & \vcentered{\fig{counterex-onlyvarbl-x-prefixed}} \\
		\graphsemC{\classlhotgs}{L_1} & & \graphsemC{\classlhotgs}{\L} & & \graphsemC{\classlhotgs}{L_2}
	\end{hspread}
	We see that the abstraction prefixes prevent \graphsemC{\classlhotgs}{\L} and \graphsemC{\classlhotgs}{L_2} to be bisimilar, and that \cref{methods:properties:correctness} is satisfied in this instance.
\end{example}

\begin{para}[cf.\ bisimulation on the underlying term graphs]
	The shortcoming of the natural first-order semantics is reflected in \cref{prop:no:extension:funbisim:siglambda1} which states that higher-order term graphs are not closed under functional bisimulation on their underlying term graphs.
\end{para}

\begin{para}[λ-ap-ho-term-graphs with black holes]\label{laphotgs-blackholes}
	Before we finally define the higher-order term graph semantics \graphsemC{\classlhotgs}{} for \lambdaletrec-terms, we first have to slightly adapt the graph formalism. The reason is that in \cref{chap:representations} we introduced higher-order term graphs as a representation for strongly regular λ-terms, while here we want them to represent \lambdaletrec-terms. We know these two classes to coincide (\cref{thm:ll-expressible:2:streg}), but only if we exclude meaningless \lambdaletrec-terms (see \cref{meaningless_bindings}). Here, however, we want to be able to handle all \lambdaletrec-terms. Therefore in this chapter we need to consider an amended definition of λ-ap-ho-term-graphs over an extended signature \sigTGbh which includes black hole vertices.
\end{para}

\begin{definition}[signature for λ-ap-ho-term-graphs with black holes]\label{sigTGbh}
	By \sigTGbh we denote the signature \sigTGi{1} extended by a black-hole symbol \bh with arity \m{0}, i.e.\ \m{\sigTGbh = \set{@, λ, \0, \bh}} with \m{\arity{@} = 2}, \m{\arity{λ} = 1}, \m{\arity{\0} = 1}, and \m{\arity{\bh} = 0}.
\end{definition}

We also have to amend the correctness conditions \cref{def:abspre:function:siglambdai} for the abstraction-prefix function by a black-hole case.

\begin{definition}[abstraction-prefix function for \sigTGbh-term-graphs]\label{abs-pre-function:sigTGbh}
	As \cref{def:abspre:function:siglambdai} but over signature \sigTGbh and with the following condition added:
	\begin{align}
		\tag{black hole}\label{ap-function:blackhole}
		& ~ \abspre{\bh} = \emptyword
	\end{align}
\end{definition}

\begin{definition}[λ-ap-ho-term-graph with black holes]\label{def:aplambdahotg-bh}
	As \cref{def:aplambdahotg}, but over signature \sigTGbh and using \cref{abs-pre-function:sigTGbh}.
\end{definition}

\begin{terminology}[λ-ap-ho-term-graph := λ-ap-ho-term-graph with black holes]
	Henceforth, if we speak of λ-ap-ho-term-graphs, we refer to this specific variant with black holes and variable backlinks. All the relevant properties that hold for ordinary λ-ap-ho-term-graphs carry over; accounting for the black holes is generally easy.
\end{terminology}

\begin{figure}
	\translation{λ}{
		\node[draw,shape=transbox](lhs){\recprefixed{\vec{p}}{\abs{x}\L}};
		\draw[<-](lhs.north) to +(0mm,12mm);
	}{
		\node[draw,shape=transbox](rhs){\recprefixed{\vec{p}\prefixcon x^v\fs{}}{\L}};
		\ltgnode[node distance=3mm,above=of rhs.north]{abs}{λ}; \addPos{abs}{v};
		\addPrefix{abs}{\vs{\vec{p}}};
		\draw[<-](abs.north) to +(0mm,4mm);
		\draw[->](abs) to (rhs.north);
	}
	\hfill
	\translation{@}{
		\node[draw,shape=transbox](lhs){\recprefixed{\vec{p}}{\app{\L_0}{\L_1}}};
		\draw[<-](lhs.north) to +(0,11mm);
	}{%
		\ltgnode{app}{@};
		\node[node distance=4mm,below=of app](middle){};
		\node[node distance=1mm,draw,shape=transbox,left=of middle](l){\recprefixed{\vec{p}}{\L_0}};
		\node[node distance=1mm,draw,shape=transbox,right=of middle](r){\recprefixed{\vec{p}}{\L_1}};
		\addPrefix{app}{\vs{\vec{p}}};
		\draw[->](app) to (l.north);
		\draw[->](app) to (r.north);
		\draw[<-](app.north) to +(0,4mm);
	}
	\\[4ex]
	\translation{\f}{
		\node[draw,shape=transbox](lhs){\recprefixed{\vec{p}\prefixcon x^\v\fs{\dots,f^{\w},\dots}}{f}};
		\draw[<-](lhs.north) to +(0,5mm);
	}{
		\ltgnode[densely dashed]{indir}{\indir};
		\addPos{indir}{w};
		\addPrefix{indir}{\vs{\vec{p}}\prefixcon v};
		\draw[<-](indir.north) to +(0,5mm);
	}
	\\[4ex]
	\translation{\0}{
		\node[draw,shape=transbox](lhs){\recprefixed{\stdprefix}{x_n}};
		\draw[<-](lhs.north) to +(0,5mm);
	}{
		\ltgnode{var}{\0}; \draw[<-](var.north) to +(0,5mm);
		\addPrefix{var}{\v_1 \dots \v_n};
		\node[node distance=4mm,right=of var](left){};
		\ltgnode[node distance=3mm,above=of left,densely dashed]{abs}{λ}; \addPos{abs}{v_n};
		\draw[->](var) -| (abs);
	}
	\\[4ex]
	\translationif{\S}{
		\node[draw,shape=transbox](lhs){\recprefixed{\vec{p}\prefixcon x^\v\fs{f_1^{\v_1},\dots,f_n^{\v_n}}}\L};
		\draw[<-](lhs.north) to +(0,5mm);
	}{
		\node[draw,shape=transbox](rhs){\recprefixed{\vec{p}}\L};
		\draw[<-](rhs.north) to +(0,5mm);
	}
	{\eqAnnotation{~~~ \m{x ∉ \freevars{\L}}~~~~}}
	{\eqAnnotation{\m{f_i ∉ \freevars{\L}}}}
	\\[4.5ex]
	\translationv{\Let}{
		\node[draw,shape=transbox](lhs){\recprefixed{\stdprefix}{\letin{f_1=\L_1,\dots,f_k=\L_k}{\L_0}}};
		\draw[<-](lhs.north) to +(0,5mm);
	}{
		\node[draw,shape=transbox](in){\recprefixed{x_0^{\v_0}\fs{\vec{f}'_0} \dots x_n^{\v_n}\fs{\vec{f}'_n}}{\L_0}};
		\node[node distance=13mm,below=of in](bi){\dots};
		\node[node distance=9mm,left=of bi,draw,shape=transbox](b1){\recprefixed{x_0^{\v_0}\fs{\vec{f}'_0} \dots x_{l_1}^{\v_{l_1}}\fs{\vec{f}'_{l_1}}}{\L_1}};
		\node[node distance=9mm,right=of bi,draw,shape=transbox](bn){\recprefixed{x_0^{\v_0}\fs{\vec{f}'_0} \dots x_{l_k}^{v_{l_{\!k}}}\fs{\vec{f}'_{l_k}}}{\L_k}};
		\draw[<-](in.north) to +(0,5mm);
		\ltgnode[node distance=7mm,above=of b1.north,xshift=-5mm]{b1indir}{\indir};
		\addPrefix{b1indir}{\v_1 \dots \v_{l_{\!1}}};
		\addPos{b1indir}{w_1};
		\draw[->](b1indir) to +(0,-9.5mm);
		\ltgnode[node distance=7mm,above=of bn.north,xshift=5mm]{bnindir}{\indir};
		\addPrefix{bnindir}{\v_1 \dots \v_{l_{\!n}}};
		\addPos{bnindir}{w_k};
		\draw[->](bnindir) to +(0,-9.5mm);
		\node[node distance=5mm,above=of bi]{\m{\vec{f}'_i = \vec{f}_i ∪ \setcompr{f_j^{w_j}}{l_j = i, ~ j ∈ \set{1,\dots,k}}}};
		\node[node distance=4mm,below=of bi]{
			\multilinebox{
				\m{l_1,\dots,l_k ∈ \set{0,\dots,n}} such that \\
				~~~~\m{∀ i ∈ \set{1,\dots,k}~ \freevars{L_i} \subseteq \set{x_0,\dots,x_{l_i}} ∪ \vec{f}'_0 ∪ \dots ∪ \vec{f}'_{l_i}}
			}
		};
	}
	\caption{Translation rules \rulestranslambdaletreccaltolhotgs for interpreting \lambdaletrec-terms as λ-ap-ho-term-graphs. See \cref{sec:lhotgs:subsec:trans} for explanations.}
	\label{fig:def:graphsem:lhotgs}
\end{figure}

\begin{para}[translating \lambdaletrec-terms into λ-ap-ho-term-graphs]
	In order to interpret a \lambdaletrec-term \L as a λ-ap-ho-term-graph, the translation rules \rulestranslambdaletreccaltolhotgs from \cref{fig:def:graphsem:lhotgs} are applied to a `translation box' \adjustbox{fbox={\fboxrule} 1pt 0.5pt}{\recprefixed{\starfs{}}{\L}}. It contains \L furnished with a prefix consisting of a dummy variable \m{*} annotated with an empty set \set{} of function variables. The translation process proceeds by induction on the syntactical structure \lambdaletrec-expression. Ultimately, a term graph \atg over \sigTGbh is produced, together with a correct abstraction-prefix function for \atg.
	\par For reading the rules \rulestranslambdaletreccaltolhotgs in \cref{fig:def:graphsem:lhotgs} correctly, take notice of the explanations below. For illustration of their application, please refer to \cref{app:translation} where several \lambdaletrec-terms are translated into λ-ap-ho-term-graphs.
	\begin{itemize}
		\item A translation box \adjustbox{fbox={\fboxrule} 1pt 0.5pt}{\recprefixed{\vec{p}}{\L}} contains a prefixed, partially decomposed \lambdaletrec-term \L. The prefix contains a vector \vec{p} of annotated λ-abstractions that have already been translated and whose scope typically extends into \L. Every variable in the prefix is annotated with a set of function variables that are defined at its level. There is a special dummy variable \m{*} as the very first entry of the prefix that carries function variables for top-level function definitions, i.e.\ definitions that do not reside under any enclosing λ-abstraction. The λ-rule strips off an abstraction from the body of the expression, and pushes the abstraction variable into the prefix, which initially contains an empty set of function variables.
		\item Names of abstraction vertices are indicated to the right, and abstraction-prefixes to the left of the created vertices. In the following example the λ-abstraction vertex \m{v} has the abstraction-prefix \vec{p}:
			\vspace{-2ex}\begin{center}\begin{tikzpicture}
				\ltgnode{x}{λ};
				\addPos{x}{v};
				\addPrefix{x}{\vec{p}};
			\end{tikzpicture}\end{center}\vspace{-2ex}
		\item In order to refer to the vertices in the prefix we use the following notation: \m{\vs{\vec{\apre}} = \v_1\,\dots\,\v_n} given that \m{\vec{\apre} = \starfs{\vec{f}_0}\prefixcon x_1^{\v_1}\fs{\vec{f}_1}\prefixcon \dots\prefixcon x_n^{\v_n}\fs{\vec{f}_n}}.
		\item Vertices drawn with dashed lines have been created earlier during the translation, and are referenced by new edges in the current translation step.
		\item \freevars{\L} is the set of free variables in \L.
		\item The \Let-rule for translating \Let-expressions creates a box for the body as well as for each of its function definitions. For each function definition an \emph{indirection vertex} is created. These vertices guarantee the well-definedness of the process when it translates meaningless bindings such as \m{f = f}, or \m{g = h \eqsep h = g}, which would otherwise give rise to loops without vertices; the result would not be a term graph. Indirection vertices are eliminated by an erasure process at the end: Every indirection vertex that does not point to itself is removed, redirecting all incoming edges to its successor vertex. Finally every loop on a single indirection vertex is replaced by a \emph{black hole} vertex with an empty abstraction prefix that represents a meaningless binding.
		\item The \Let-rule is non-deterministic as there is some freedom on choosing the prefix-lengths used for the translation of each function definition. Say, a function \f does not use the rightmost variable \x in the current abstraction prefix. Then this freedom allows the translation to either remove \x from the prefix within the translation of \f's definition, or alternatively at every use site of \f outside of \f's translation. This freedom is limited by the scoping condition at the bottom of the rule: function definitions may only depend on variables and functions that occur in their respective prefix. In this context also note, that the choice of the prefix-lengths used for some function \f also determines the position of \f within the prefixes used in the translation of the other functions (and the body of the \Let-expression).
	\end{itemize}
\end{para}

\begin{definition}[\rulestranslambdaletreccaltolhotgs-generated term graphs]\label{def:rulestranslambdaletreccaltolhotgsgenerated}
	We say that a term graph \atg over \sigTGbh and an abstraction-prefix function \abspre{} is \emph{\rulestranslambdaletreccaltolhotgs-generated from} a \lambdaletrec-term \L if \atg and \abspre{} are obtained by applying the rules \rulestranslambdaletreccaltolhotgs from \cref{fig:def:graphsem:lhotgs} to \adjustbox{fbox={\fboxrule} 1pt 0.5pt}{\recprefixed{\starfs{}}{\L}}.
\end{definition}

\begin{figure}
	\proofsystem{
		\begin{bprooftree}
			\axiom{\recprefixed{\vec{p}\prefixcon x^\v\fs{}}{\L}}
			\infLabel{λ}
			\unaryInf{\recprefixed{\vec{p}}{\abs{x}{\L}}}
		\end{bprooftree}
		\hsep
		\begin{bprooftree}
			\axiom{\recprefixed{\vec{p}}{\L_0}}
			\axiom{\recprefixed{\vec{p}}{\L_1}}
			\infLabel{@}
			\binaryInf{\recprefixed{\vec{p}}{\app{\L_0}{\L_1}}}
		\end{bprooftree}
		\\
		\begin{bprooftree}
			\emptyAxiom
			\infLabel{\rec}
			\unaryInf{\recprefixed{\vec{p}\prefixcon x^v\fs{\dots,f^{w},\dots}}{f}}
		\end{bprooftree}
		\hsep
		\begin{bprooftree}
			\emptyAxiom
			\infLabel{\0}
			\unaryInf{\recprefixed{\stdprefix}{x_n}}
		\end{bprooftree}
		\\
		\begin{bprooftree}
			\axiom{\recprefixed{\vec{p}}{\L}}
			\infLabel{\S~~\sideCondition{if \m{x ∉ \freevars{\L}} and \m{f_i ∉ \freevars{\L}}}}
			\unaryInf{\recprefixed{\vec{p}\prefixcon x^\v\fs{f_1^{\v_1},\dots,f_n^{\v_n}}}{\L}}
		\end{bprooftree}
		\\
		\begin{bprooftree}
			\axiom{\recprefixed{x_0^{\v_0}\fs{\vec{f}'_0} \dots x_{l_0}^{\v_{l_0}}\fs{\vec{f}'_{l_0}}}{\L_0}}
			\insertBetweenHyps{\dots}
			\axiom{\recprefixed{x_0^{\v_0}\fs{\vec{f}'_0} \dots x_{l_k}^{\v_{l_k}}\fs{\vec{f}'_{l_k}}}{\L_k}}
			\infLabel{\Let}
			\binaryInf{\recprefixed{\stdprefix}{\letin{f_1,\dots,f_k}{\L_0}}}
			\noLine
			\unaryInf{\text{with \m{l_0 = n} and \m{l_1,\dots,l_k}, \m{\vec{f}'_0,\dots,\vec{f}'_n} as in rule \Let in \cref{fig:def:graphsem:lhotgs}}}
		\end{bprooftree}
	}
	\caption{Alternative formulation as inference rules of the translation rules in \cref{fig:def:graphsem:lhotgs} for the interpretation of \lambdaletrec-terms as λ-ap-ho-term-graphs.}
	\label{fig:trans-lambdaletreccal-lhotgs-proof-system}
\end{figure}

\begin{para}[inference rule formulation of \rulestranslambdaletreccaltolhotgs]
	See also \cref{fig:trans-lambdaletreccal-lhotgs-proof-system} for inference rules that correspond to the deconstruction of prefixed terms in \rulestranslambdaletreccaltolhotgs.
\end{para}

\begin{proposition}[λ-ap-ho-term-graph translation of \lambdaletrec-terms]\label{prop:trans_lhotgs_correct}
	Let \L be a \lambdaletrec-term. Suppose that a term graph \atg over \sigTGbh, and an abstraction-prefix function \abspre{} are \rulestranslambdaletreccaltolhotgs-generated from \L. Then \abspre{} is a correct abstraction-prefix function for \atg, and consequently, \atg and \abspre{} together form a λ-ap-ho-term-graph.
\end{proposition}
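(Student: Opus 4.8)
The plan is to prove a strengthened invariant about the \emph{intermediate} structures produced by \rulestranslambdaletreccaltolhotgs\ --- that is, before the concluding erasure of indirection vertices --- by induction on the translation, and then to check that the two post-processing steps preserve it. I would work from the inference-rule presentation of the translation in \cref{fig:trans-lambdaletreccal-lhotgs-proof-system}: a successful \rulestranslambdaletreccaltolhotgs-translation of a (necessarily closed) \lambdaletrec-term \L, in the sense of \cref{def:rulestranslambdaletreccaltolhotgsgenerated}, corresponds to a finite derivation \Deriv with root box \recprefixed{\starfs{}}{\L}, every node of which is a box \recprefixed{\vec{p}}{M} and every rule instance of which contributes some labelled vertices (with prescribed abstraction-prefix annotations) and some edges to an intermediate term graph \m{\Hat{\atg}} over \m{\sigTGbh ∪ \set{\indir}}, together with its abstraction-prefix function. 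Recall that \vs{\vec{p}} excludes the leading dummy entry, so at the root box \m{\vs{\starfs{}} = \emptyword}.

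For each box \recprefixed{\vec{p}}{M} in \Deriv\ I would carry along the following invariant for the part of \m{\Hat{\atg}} generated \emph{from that box}: \textbf{(fv)} every ordinary variable free in \m{M} occurs among the \m{x_i} listed in \vec{p}, and every function variable free in \m{M} occurs among the annotations \m{f^w} in \vec{p}; \textbf{(entry)} the generated sub-structure has a distinguished \emph{entry vertex} \m{e_M} --- the target of the edge corresponding to the box's incoming edge --- with \m{\abspre{e_M} ≤ \vs{\vec{p}}}, where \m{\vs{\vec{p}}} and \m{\abspre{e_M}} differ only by a suffix consisting of entries not free in \m{M} (this ``early scope closure'' via \ruleref{\S} is exactly why the conditions of \cref{abs-pre-function:sigTGbh} are stated as inequalities); \textbf{(loc)} every edge created inside the sub-structure satisfies the pertinent condition among \cref{ap-function:abs,ap-function:app,ap-function:00,ap-function:01}, and moreover every indirection vertex has exactly one successor and carries the same abstraction prefix as that successor. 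The proof is then a case analysis on the last rule of \Deriv: for \ruleref{\0} the created \0-vertex has prefix \vs{\vec{p}}, which is non-empty because the dummy entry is always present (giving \cref{ap-function:00}), and its backlink targets the λ-vertex supplied by \textbf{(fv)} (giving \cref{ap-function:01}); for \ruleref{\rec} the incoming edge is redirected onto an indirection vertex that exists by \textbf{(fv)}; for \ruleref{λ} the new λ-vertex has prefix \vs{\vec{p}} and its successor is \m{e_L} of the premise box \recprefixed{\vec{p}\prefixcon x^v\fs{}}{\L}, which by \textbf{(entry)} has prefix \m{≤ \vs{\vec{p}} \prefixcon v}, matching \cref{ap-function:abs}; for \ruleref{@} both premise entries have prefix \m{≤ \vs{\vec{p}}}, matching \cref{ap-function:app}; for \ruleref{\S} the side conditions \m{x ∉ \freevars{\L}} and \m{f_i ∉ \freevars{\L}} are precisely what \textbf{(entry)} requires for the shortened prefix; and for \ruleref{\Let} the scoping side condition of that rule (see \cref{fig:def:graphsem:lhotgs}) delivers \textbf{(fv)} for the body box and every definition box, while the indirection vertices introduced for the definitions are given prefixes matching the entries of the corresponding definition boxes, preserving \textbf{(loc)}. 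Applied at the root box, where \m{\vs{\starfs{}} = \emptyword}, this also yields \cref{ap-function:root}, and \m{\Hat{\atg}} contains no black hole, so \cref{ap-function:blackhole} is vacuous there.

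It then remains to pass from \m{\Hat{\atg}} to the final \atg\ by (i) erasing every indirection vertex that does not point to itself, redirecting its incoming edges onto its successor, and (ii) replacing every remaining self-loop on an indirection vertex by a black-hole vertex with empty abstraction prefix. For (i): by \textbf{(loc)} an indirection vertex and its unique successor share the same abstraction prefix, so redirecting \m{v \tgsucc u} to \m{v \tgsucc u'} leaves the target's prefix unchanged and hence preserves whichever of \cref{ap-function:abs,ap-function:app,ap-function:01} governs that edge; termination of the erasure is argued by noting that an indirection cycle of length \m{>1} (as produced e.g.\ from \letin{f = g \eqsep g = f}{\dots}) first contracts to a single self-loop, which step (ii) then removes. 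For (ii): a black hole is nullary, hence never a source, and after step (i) no indirection vertices remain, so an edge into a black hole can only emanate from a λ-, @- or \0-vertex; the \0-case is impossible since \0-backlinks target λ-vertices, and for λ- and @-sources the conditions \cref{ap-function:abs} and \cref{ap-function:app} hold trivially because \m{\abspre{\bh} = \emptyword}, while \cref{ap-function:blackhole} holds by construction. Finally, \atg\ is a genuine root-connected term graph over \sigTGbh\ in the sense of \cref{def:root-connected}: after erasure every vertex is a λ-, @-, \0- or \bh-vertex carrying the correct out-degree (this is exactly the purpose of the indirection vertices), redirection and self-loop replacement preserve root-connectedness, and the root of \atg\ is never an indirection vertex. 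Putting the pieces together, \abspre{} is a correct abstraction-prefix function for \atg, so \m{\tuple{\atg,\abspre{}}} is a λ-ap-ho-term-graph.

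I expect the main obstacle to lie in the \ruleref{\Let} case of the induction in combination with the indirection-erasure step: the invariant \textbf{(entry)}/\textbf{(loc)} must be phrased robustly enough to survive the non-deterministic choice of the prefix lengths \m{l_1,\dots,l_k} in the \ruleref{\Let}-rule \emph{and} to remain intact after collapsing indirection vertices, the delicate points being that indirection vertices always inherit the abstraction prefix of their successor, that the erasure terminates even when indirection vertices form cycles --- precisely the meaningless mutually recursive bindings that the black holes are designed to absorb --- and that the collapsed graph is still a bona fide term graph. Everything else is bookkeeping with the prefix notation of \cref{fig:def:graphsem:lhotgs}.
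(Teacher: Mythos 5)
The paper states \cref{prop:trans_lhotgs_correct} without proof, so there is no argument of the authors' to compare yours against; judged on its own, your plan --- induction over the translation derivation with the invariants \textbf{(fv)}/\textbf{(entry)}/\textbf{(loc)}, followed by a check that indirection-vertex erasure and black-hole insertion preserve the correctness conditions --- is the natural way to establish the claim, and the case analysis correctly exploits that the conditions of \cref{abs-pre-function:sigTGbh} are stated with \m{≤} rather than equality. Two details need repair, though neither breaks the argument. First, your justification for \cref{ap-function:00} is wrong as stated: \m{\vs{\vec{p}}} \emph{excludes} the dummy entry \m{*}, so the presence of the dummy does not make the prefix non-empty; the correct reason is that the \ruleref{\0}-rule only applies when the body is the variable \m{x_n} bound in the prefix, which by your own invariant \textbf{(fv)} forces at least one genuine entry. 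Second, the clause of \textbf{(loc)} asserting that an indirection vertex carries \emph{the same} abstraction prefix as its successor is not maintainable: the \ruleref{\Let}-rule gives the indirection vertex for \m{f_j} the prefix \m{\v_1 \dots \v_{l_j}}, but if an \ruleref{\S}-step fires at the top of the corresponding definition box (which the side condition of \ruleref{\Let} does not exclude), the entry vertex of that box has a strictly shorter prefix. You should weaken the invariant to \m{\abspre{\text{succ}} ≤ \abspre{\indir}}; the erasure step still goes through because the conditions \cref{ap-function:abs} and \cref{ap-function:app} are downward closed in the target's prefix under \m{≤}, and the equality condition \cref{ap-function:01} is never at stake since \0-backlinks target λ-vertices directly, never indirection vertices. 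With these two repairs the proof is complete.
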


\begin{para}[non-determinism in \rulestranslambdaletreccaltolhotgs]\label{non-determinism-in-R}
	There are two sources of non-determinism in this translation: The \S-rule for shortening prefixes can be applicable at the same time as other rules. And the \Let-rule does not fix the lengths \m{l_1,\dots,l_k} of the abstraction prefixes used in the translations of the function definitions of the \Let-expression. Neither kind of non-determinism affects the underlying term graph that is produced, but induces different abstraction-prefix functions, and thus different λ-ap-ho-term-graphs.
\end{para}

\subsection{Interpretation as eager-scope λ-ap-ho-term-graphs}\label{sec:lhotgs:subsec:eagscope}

\begin{para}[eager-scope closure induces a higher degree of sharing]
	Of the different translations due to \cref{non-determinism-in-R} we are most interested in the one using the shortest possible abstraction prefixes, thus the translation yielding eager-scope λ-ap-ho-term-graphs (\cref{def:eager-scope:aphotgs}). The reason for this choice is illustrated in \cref{fig:eager_more_sharing}: eager-scope closure allows for more sharing.
\end{para}

\begin{figure}
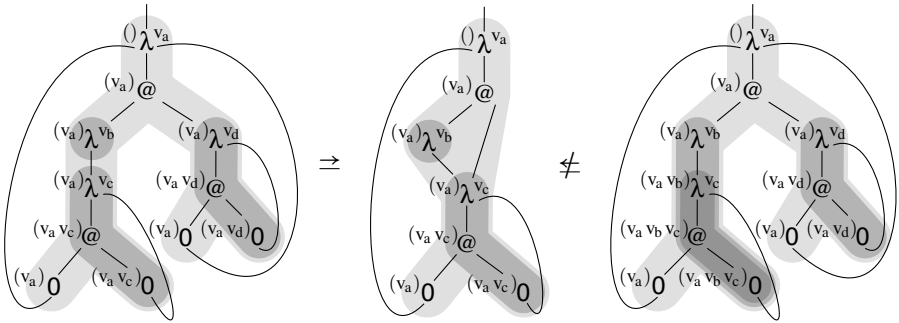

	\begin{hspread}
		\vcentered{\fig{eager-more-sharing-eager}}
		& \vcentered{\funbisim} &
		\vcentered{\fig{eager-more-sharing-collapse}}\hspace{-1ex}
		& \vcentered{\not\convfunbisim} &
		\vcentered{\fig{eager-more-sharing-lazy}}
	\end{hspread}
	\caption{Translation of \m{\abs{a}{\app{(\abs{b}{\abs{c}{\app{a}{c}}})}{(\abs{d}{\app{a}{d}})}}} with eager scope-closure (left), and with lazy scope-closure (right). While on the left four vertices can be shared, on the right only a single variable occurrence can be shared.}
	\label{fig:eager_more_sharing}
\end{figure}


Also, we will call a \emph{translation process} `eager-scope' if it resolves the non-determinism in \rulestranslambdaletreccaltolhotgs in such a way that it always yields eager-scope λ-ap-ho-term-graphs. In order to obtain an eager-scope\ translation we have to consider the following aspects.

\begin{para}[garbage removal]\label{garbage-removal}
	In the presence of garbage -- unused function bindings -- a translation process cannot be eager-scope. Consider the term \abs{x}{\abs{y}{\letin{f=x}{y}}}. The variable \x occurs solely in the unused binding \m{f=x}, which prevents the application of the \S-rule, and hence the closure of the scope of \abs{x}{}, directly below \abs{x}{}. Therefore we henceforth assume that \emph{all unused function bindings are removed} prior to applying the rules \rulestranslambdaletreccaltolhotgs.
\end{para}


\begin{para}[short enough prefix lengths in the \Let-rule]\label{short-enough-prefixes}
	For obtaining an eager-scope translation we will stipulate that the \S-rule is applied eagerly, i.e.\ it is given precedence over the other rules. This is clearly necessary for keeping the abstraction prefixes minimal. But how do we choose the prefix lengths \m{l_1,\dots,l_k} in the \Let-rule? The prefix lengths \m{l_i} determine at which position a binding \m{f_i=\L_i} is inserted into the abstraction prefixes. Therefore \m{l_i} may not be chosen too short; otherwise a function \f depending on a function \g may end up to the right of g, and hence may be removed from the prefix by the \S-rule prematurely, preventing completion of the translation. Yet simply choosing \m{l_i = n} may prevent scopes from being minimal. For example, when translating the term \m{\abs{a}{\abs{b}{\letin{f=a}{\app{\app{\app{a}{a}}{(\app{f}{a})}}{b}}}}}, it is crucial to allow shorter prefixes for the binding than for the body. This is illustrated in \cref{fig-rigid_let_rule} where the graph on the left does not have eager scope-closure even if the \S-rule is applied eagerly. Consequently the opportunity for sharing the lower application vertices is lost.
\end{para}

\begin{figure}
	\begin{hspread}
		\fig{let-prefixes-same} & \fig{let-prefixes-shorter}
	\end{hspread}
	\caption{Translation of \m{\abs{a}{\abs{b}{\letin{f=a}{\app{\app{\app{a}{a}}{(\app{f}{a})}}{b}}}}} with equal (left) and with minimal prefix lengths (right) in the \Let-rule. See also \cref{showcase:rigid_let_rule}.}
	\label{fig-rigid_let_rule}
\end{figure}

\begin{para}[required variable analysis]
	For choosing the prefixes in the \Let-rule correctly, the translation process must know for each function binding which λ-variable are `required' on the right-hand side of its definition. For this we use an analysis obtaining the required variables for positions in a \lambdaletrec-term as employed by algorithms for λ-lifting \cite{john:1985,danv:schu:2004}. The term `required variables' was coined by Morazán and Schultz \cite{mora:schu:2008}. A λ-variable \x is called \emph{required at a position \p} in a \lambdaletrec-term \L if \x is bound by an abstraction above \p, and has a free occurrence in the complete unfolding of \L below \p (also function variables from above \p are unfolded). The required variables at position \p in \L can be computed as those λ-variable with free occurrences that are reachable from \p by a downwards traversal with the stipulations: on encountering a \Let-expression the body is entered; when encountering a function variable the traversal continues at the right-hand side of the corresponding function binding (even if it is defined above \p).
\end{para}

With the result of the required variable analysis at hand, we now define properties of the translation process that can guarantee that the resulting λ-ap-ho-term-graph is eager-scope.

\begin{definition}[eager-scope and minimal-prefix generated]\label{def:eagscope:minprefix:generated}
	\par Let \L be a \lambdaletrec-term, and let \ahotg be a λ-ap-ho-term-graph.
	\par We say that \ahotg is \emph{eager-scope} \rulestranslambdaletreccaltolhotgs-generated from \L if \ahotg is \rulestranslambdaletreccaltolhotgs-generated from \L by a translation process with the following property: for every translation box reached during the process with label \prefixed{\vec{p}\prefixcon x^\v[f]}{P}, where \P is a subterm of \L at position \q, it holds that if \x is not a required variable at \q in \L, then in the next translation step performed to this box either one of the rules \f or \Let is applied, or the prefix is shortened by the \S-rule.
	\par We say that \ahotg is \rulestranslambdaletreccaltolhotgs-generated \emph{with minimal prefixes} from \L if \ahotg is \rulestranslambdaletreccaltolhotgs-generated from \L by a translation process in which minimal prefix lengths are achieved by giving applications of the \S-rule precedence over applications of all other rules, and by always choosing prefixes minimally in applications of the \Let-rule.
\end{definition}

\begin{proposition}\label{prop:eagscope:generated:minimal:prefixes}
	Let \ahotg be a λ-ap-ho-term-graph that is \rulestranslambdaletreccaltolhotgs-generated from a garbage-free \lambdaletrec-term \L. The following statements hold:
	\begin{enumerate}[(i)]
		\item\label{prop:eagscope:generated:minimal:prefixes:i}
			If \ahotg is eager-scope \rulestranslambdaletreccaltolhotgs-generated from \L, then \ahotg is eager-scope.
		\item\label{prop:eagscope:generated:minimal:prefixes:ii}
			If \ahotg is \rulestranslambdaletreccaltolhotgs-generated with minimal prefixes from \L, then \ahotg is eager-scope \rulestranslambdaletreccaltolhotgs-generated from \L, hence by \cref{prop:eagscope:generated:minimal:prefixes:i} \ahotg is eager-scope.
	\end{enumerate}
\end{proposition}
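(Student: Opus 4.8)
The plan is to prove the two statements of \cref{prop:eagscope:generated:minimal:prefixes} separately, with the second reducing fairly directly to the first once the appropriate bookkeeping is in place.

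For statement \cref{prop:eagscope:generated:minimal:prefixes:i}, I would first invoke \cref{prop:trans_lhotgs_correct} to know that the result \ahotg of the translation is indeed a λ-ap-ho-term-graph, so it carries a correct abstraction-prefix function \abspre{}. Then I would verify the eager-scope condition from \cref{def:eager-scope:aphotgs} directly: given a vertex \w with non-empty abstraction prefix \m{\abspre{\w} = \apre \prefixcon \v}, I need a path from \w through vertices whose prefixes start with \apre \prefixcon \v, ending in a variable vertex bound by \v. The key observation is that \v, being the last entry of \w's prefix, was pushed onto the prefix at the translation step corresponding to the λ-rule applied at the abstraction vertex \v, and that \v stays in the prefix of every box on the access path from that point down to \w precisely because those boxes all contain subterms at positions where \x (the variable abstracted at \v) is \emph{required}. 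This is where the eager-scope generation hypothesis does its work: whenever \x becomes non-required at a box, the process is forced to either resolve the box (rules \f, \Let) or shorten the prefix via \S, so the prefix cannot retain \v once \x is no longer required. Consequently, tracing \emph{any} access path from \w downward and continuing into the subgraph, I would argue by induction on term structure that one eventually reaches a leaf box \recprefixed{\vec{p}}{x_m} (handled by the \0-rule) where the body is exactly the variable \x bound by \v — because \x is required at \w's position, hence has a free occurrence in the unfolding below, hence the downward `required-variable traversal' must hit an actual occurrence of \x — and all intermediate boxes have prefixes extending \apre \prefixcon \v. Translating that chain of boxes into the chain of created vertices and edges (@-vertices, λ-vertices, indirection vertices which get erased, and finally the \0-vertex with its backlink to \v) gives the required path.

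For statement \cref{prop:eagscope:generated:minimal:prefixes:ii}, the plan is to show that a minimal-prefix translation process satisfies the condition defining `eager-scope \rulestranslambdaletreccaltolhotgs-generated'. By definition of minimal-prefix generation, the \S-rule is given precedence over all other rules, so whenever a box \recprefixed{\vec{p}\prefixcon x^\v[f]}{P} is reached in which \x is not required at \P's position \q, I must show that \S is applicable (i.e.\ \m{x \notin \freevars{P}} and no \m{f_i \in \freevars{P}}) — or else one of \f, \Let fires, which the condition also permits. Here I would use the required-variable analysis: if \x is not required at \q, then by the characterisation of required variables (the downward traversal entering \Let-bodies and following function variables to their right-hand sides) \x has no free occurrence in the complete unfolding of \P, and in particular — since \L is garbage-free, so every binding reachable from \P is used — \x does not occur free in \P itself, and every function variable \m{f_i} still in the prefix that \P could reference would, if it led to an occurrence of \x, contradict non-requiredness; one has to check that \x also does not sneak in via some \m{f_i}. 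Combined with the stipulation that prefixes are chosen minimally in the \Let-rule (so that bindings are placed exactly as far right as their required variables allow), this yields exactly the condition of \cref{def:eagscope:minprefix:generated}, and then \cref{prop:eagscope:generated:minimal:prefixes:i} applies.

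The main obstacle I anticipate is the careful handling of function variables in the prefix when arguing that \x is not free in \P and not reachable via any \m{f_i}: the required-variable notion is defined via a traversal that follows function variables even to right-hand sides defined \emph{above} the current position, so non-requiredness of \x at \q is a genuinely non-local statement, and one must relate it precisely to the syntactic condition (\m{x \notin \freevars{P}} and \m{f_i \notin \freevars{P}}) under which the \S-rule is stated. The garbage-freeness hypothesis is essential here — without it, as the example in \cref{garbage-removal} shows, \x could be `required' in the traversal sense only through a dead binding — so I would be explicit that \emph{all} bindings in scope at \P are genuinely used, which lets me equate `not required at \q' with the local applicability condition of \S. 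A secondary, more routine obstacle is the precise correspondence between the chain of translation boxes on an access path and the chain of vertices/edges in the produced term graph, accounting for the erasure of indirection vertices at the end of the process; this I expect to be tedious but straightforward by tracking how each rule in \cref{fig:def:graphsem:lhotgs} acts on boxes and prefixes.
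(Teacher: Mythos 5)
The paper states \cref{prop:eagscope:generated:minimal:prefixes} without any proof, so there is nothing to compare against line by line; judged on its own, your reconstruction is sound and uses exactly the machinery the surrounding text sets up for this purpose (the required-variable analysis, the garbage-removal stipulation of \cref{garbage-removal}, and the discussion of prefix lengths in \cref{short-enough-prefixes}). Your identification of the central point — that non-requiredness of \x at \q must be equated with the \emph{local} side condition of the \S-rule (\m{x \notin \freevars{P}} and no \m{f_i \in \freevars{P}}), with garbage-freeness handling the direct occurrences and the minimal-prefix placement of bindings handling the indirect ones via function variables — is precisely the crux, and your treatment of it is correct.

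Two small points worth tightening if you write this out in full. First, for part \cref{prop:eagscope:generated:minimal:prefixes:i}, note that both the generation condition in \cref{def:eagscope:minprefix:generated} and the eager-scope property of \cref{def:eager-scope:aphotgs} speak only about the \emph{last} entry of the abstraction prefix; your argument should make explicit that a \m{λ}-, \m{@}- or \m{\0}-vertex whose prefix ends in \v can only be created at a box whose last prefix entry is \m{x^\v}, and that the generation condition then forces \x to be required there (otherwise the next step would have been \f, \Let or \S rather than the vertex-creating rule). Second, in part \cref{prop:eagscope:generated:minimal:prefixes:ii} your parenthetical ``placed exactly as far right as their required variables allow'' should read ``as far left'' (i.e.\ at the lowest admissible level); the substance of the argument — that under minimal prefixes a binding sits at \x's level only if \x is required in its unfolded right-hand side, so \m{f_i \in \freevars{P}} would contradict non-requiredness of \x at \q — is unaffected.
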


\begin{definition}[higher-order term graph semantics]\label{def:graphsem:lhotgs}
	The semantics \graphsemC{\classlhotgs}{} of \lambdaletrec-terms as λ-ap-ho-term-graphs is defined as \m{\graphsemC{\classlhotgs}{} : \Ter{\lambdaletreccal} → \classlhotgs}, \m{\L ↦ \graphsemC{\classlhotgs}{\L} :=\,} the λ-ap-ho-term-graph that is \rulestranslambdaletreccaltolhotgs-generated with minimal prefixes from a garbage-free version of \L.
\end{definition}

\begin{proposition}\label{prop:trans_eager}
	For every \lambdaletrec-term \L, \graphsemC{\classlhotgs}{\L} is eager-scope.
\end{proposition}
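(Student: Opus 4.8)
The plan is to derive \Cref{prop:trans_eager} as an immediate corollary of \Cref{prop:eagscope:generated:minimal:prefixes} together with the definition of the semantics \graphsemC{\classlhotgs}{} in \Cref{def:graphsem:lhotgs}. The argument is essentially a matter of unfolding definitions, so I expect the proof to be short; the substantive content is carried by \Cref{prop:eagscope:generated:minimal:prefixes}~\cref{prop:eagscope:generated:minimal:prefixes:ii}, which has already been stated (and whose proof, while not displayed here, is assumed available).

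First I would recall that by \Cref{def:graphsem:lhotgs}, for a given \lambdaletrec-term \L, the λ-ap-ho-term-graph \graphsemC{\classlhotgs}{\L} is obtained by taking a garbage-free version \m{\L'} of \L and then producing the λ-ap-ho-term-graph that is \rulestranslambdaletreccaltolhotgs-generated \emph{with minimal prefixes} from \m{\L'}. So \graphsemC{\classlhotgs}{\L} is \rulestranslambdaletreccaltolhotgs-generated with minimal prefixes from the garbage-free \lambdaletrec-term \m{\L'}. This is precisely the hypothesis under which \Cref{prop:eagscope:generated:minimal:prefixes}~\cref{prop:eagscope:generated:minimal:prefixes:ii} applies: it tells us that \graphsemC{\classlhotgs}{\L} is then eager-scope \rulestranslambdaletreccaltolhotgs-generated from \m{\L'}, and hence, by \cref{prop:eagscope:generated:minimal:prefixes:i} of the same proposition, that \graphsemC{\classlhotgs}{\L} is eager-scope. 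Since \L was arbitrary, this establishes the claim.

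The only point requiring a word of care is the passage from \L to a garbage-free version \m{\L'}: \Cref{prop:eagscope:generated:minimal:prefixes} is stated for a garbage-free \lambdaletrec-term, and \Cref{def:graphsem:lhotgs} builds the semantics on such a garbage-free version, so the hypotheses line up exactly and there is nothing to check beyond observing that a garbage-free version always exists (the garbage-collection rules \rulep{\reduce} and \rulep{\nil} from \cref{sec:unfolding_informal} terminate, cf.\ the discussion of garbage removal in \cref{garbage-removal}). I do not anticipate a real obstacle here — the heavy lifting is entirely inside \Cref{prop:eagscope:generated:minimal:prefixes}, which relies in turn on the required-variable analysis and on the propagation of the eager-scope invariant through the translation rules \rulestranslambdaletreccaltolhotgs; the present statement is merely the specialisation of that result to the canonical translation process fixed by \Cref{def:graphsem:lhotgs}. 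The proof I would write is therefore just: by \Cref{def:graphsem:lhotgs}, \graphsemC{\classlhotgs}{\L} is \rulestranslambdaletreccaltolhotgs-generated with minimal prefixes from a garbage-free version of \L; hence by \Cref{prop:eagscope:generated:minimal:prefixes}~\cref{prop:eagscope:generated:minimal:prefixes:ii} it is eager-scope.
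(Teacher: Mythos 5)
Your proof is correct and follows exactly the route the paper intends: \cref{prop:trans_eager} is an immediate consequence of \cref{def:graphsem:lhotgs} together with \cref{prop:eagscope:generated:minimal:prefixes}~\cref{prop:eagscope:generated:minimal:prefixes:ii} (and \cref{prop:eagscope:generated:minimal:prefixes:i}), which is why the paper states it without a separate proof. Your added remark that a garbage-free version always exists is a harmless and reasonable point of care.
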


In preparation of proving \cref{methods:properties:correctness} in \cref{sec:trans:ltgs}, we establish that the semantics \graphsemC{\classlhotgs}{} is correct with respect to the unfolding semantics of \lambdaletreccal.

\begin{theorem}[correctness of \graphsemC{\classlhotgs}{}]\label{thm:graphrep:classlhotgs}
	\m{\unfsem{\L_1} = \unfsem{\L_2}} if and only if \m{\graphsemC{\classlhotgs}{\L_1} \bisim \graphsemC{\classlhotgs}{\L_2}}, for all \lambdaletrec-terms \m{\L_1} and \m{\L_2}.
\end{theorem}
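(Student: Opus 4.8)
The plan is to establish \cref{thm:graphrep:classlhotgs} by relating the higher-order graph semantics \graphsemC{\classlhotgs}{} back to the unfolding semantics \unfsem{} via the \emph{decomposition} and \emph{parsing} machinery of \cref{chap:expressibility}. The key observation is that an eager-scope λ-ap-ho-term-graph is, up to presentation, precisely the quotient under bisimilarity of the reduction graph \GeneratedSubARSi{\eagStratPlus}{\prefixed{}{\unfsem{\L}}} of the unfolded term — that is, its λ-transition-graph with respect to the eager \extscope-delimiting strategy (\cref{def:lambda-transition-graph}, \cref{prop:s-is-a-lambda-ltg}). Once this bridge is in place, both directions of the equivalence follow: bisimilarity of higher-order term graphs corresponds to bisimilarity of λ-transition-graphs, which by the coinduction principle \cref{thm:coinduction-principle} is equivalent to equality of the underlying λ-terms.

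Concretely, the first step is to show that for every \lambdaletrec-term \L the vertices of \graphsemC{\classlhotgs}{\L} are in label- and argument-preserving correspondence with the bisimulation-equivalence classes of the \m{\mred_{\eagStratPlus}}-reducts of \prefixed{}{\unfsem{\L}}, with abstraction-prefix vectors on vertices matching the prefixes \prefixed{\vec{x}}{\M} of the generated subterms. For this I would use \cref{lem:unfolding:versus:strats} (which ties productivity of \extscope-delimiting strategies on \L to \L expressing an infinite λ-term, or its finite-unfolding analogue) together with \cref{lem:proj:strat:letrec:lambda}, which says that a scope-delimiting strategy on \RegletrecARS projects under \unfsem{} onto one on \stRegARS with the generated sub-ARS of \unfsem{\L} being the projection of that of \L. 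The translation rules \rulestranslambdaletreccaltolhotgs with minimal prefixes (\cref{def:eagscope:minprefix:generated}, \cref{prop:eagscope:generated:minimal:prefixes}) implement exactly the eager \extscope-delimiting choices used in \eagStratPlus — in particular the `required variable' analysis in the \Let-rule mirrors eager application of \rulep{\S} combined with \rulep{\unfoldpre{\reduce}} (cf.\ \cref{prop:Regletrec:stRegletrec}) — so the term graph \rulestranslambdaletreccaltolhotgs-generated with minimal prefixes, after the indirection-vertex erasure that handles meaningless bindings (black holes), is the maximal shared form of that reduction graph quotient. The black-hole case is handled separately: a meaningless binding yields \bh with empty prefix on both sides, and \unfbhsem{} maps it to the black hole, so the equivalence is immediate there.

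With the bridge established, the proof concludes as follows. For ``\m{\Rightarrow}'': if \m{\unfsem{\L_1} = \unfsem{\L_2} =: \M}, then \M is strongly regular by \cref{thm:ll-expressible:2:streg}, so by \cref{prop:eager:strat:in:def:reg:streg} the set \ST{\eagStratPlus}{\M} is finite and both \graphsemC{\classlhotgs}{\L_1} and \graphsemC{\classlhotgs}{\L_2} are the (same, up to \iso) bisimulation collapse of the finite reduction graph of \prefixed{}{\M}, hence bisimilar. (One also needs the well-definedness of \graphsemC{\classlhotgs}{} — independence from the choice of garbage-free version and from the residual non-determinism in \rulestranslambdaletreccaltolhotgs — which is \cref{prop:trans_lhotgs_correct} together with the remark in \cref{non-determinism-in-R} that non-determinism only affects abstraction prefixes, fixed here by minimality.) For ``\m{\Leftarrow}'': a bisimulation \m{\graphsemC{\classlhotgs}{\L_1} \bisim \graphsemC{\classlhotgs}{\L_2}} transports, via the bridge, to a bisimulation between the λ-transition-graphs \ltg{\eagStratPlus}{\unfsem{\L_1}} and \ltg{\eagStratPlus}{\unfsem{\L_2}}, so by \cref{thm:coinduction-principle} (\cref{thm:coinduction-principle:iii} \m{\Rightarrow} \cref{thm:coinduction-principle:i}) we get \m{\unfsem{\L_1} = \unfsem{\L_2}}.

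The main obstacle I anticipate is the first step — making precise and proving that \graphsemC{\classlhotgs}{\L} \emph{is} the bisimulation collapse of the eager λ-transition-graph of \unfsem{\L}. This requires carefully checking that the per-node bookkeeping done by the translation rules (the prefixes \m{\vs{\vec{p}}}, the function-variable annotations \fs{\vec{f}}, the indirection-vertex erasure) corresponds step-for-step to the combined unfolding-plus-decomposition rewrite sequences in \stRegletrecCRS projected by \unfsem{}, and that eager scope-closure on the graph side (\cref{prop:trans_eager}) matches eagerness of \rulep{\S} on the rewriting side. Much of the needed infrastructure is already available — \cref{lem:commute:unfoldomegared:stregred} gives the commutation of \m{\omeganfred_\unfold} with the decomposition steps, and \cref{lem:proj:strat:letrec:lambda} gives the strategy projection — so the work is essentially assembling these into a faithful simulation argument and then invoking the complete-lattice / collapse results to pin down the maximally shared representative. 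The remaining direction-specific reasoning and the black-hole bookkeeping should then be routine.
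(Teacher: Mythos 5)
Your route is genuinely different from the paper's. The paper's proof stays entirely inside the term-graph world: it extends \graphsemC{\classlhotgs}{} to infinite λ-terms, observes that a single \m{\red_\unfold}-step induces a converse functional bisimulation between translations (so \m{\graphsemC{\classlhotgs}{\L} \convfunbisim \graphsemC{\classlhotgs}{\unfsem{\L}}}), that translations of infinite λ-terms are tree-form graphs on which bisimilarity collapses to isomorphism, and that the translation is injective (up to \iso) on infinite λ-terms; both directions then fall out without ever mentioning λ-transition-graphs. Your detour through \ltg{\eagStratPlus}{\unfsem{\L}} and \cref{thm:coinduction-principle} is viable in principle, but note that the formal link between λ-ap-ho-term-graphs and λ-transition-graphs is exactly what the paper declines to establish (it is the missing ingredient in the conjectured coinduction principle for \lambdaletreccal at the end of \cref{sec:lambda-transition-graphs}), so your ``bridge'' is not a routine assembly of existing lemmas — it is the bulk of the proof, and the two formalisms are of different kinds (an LTS with edge labels versus a term graph with vertex labels, ordered successors and an abstraction-prefix function), which makes ``up to presentation'' do a lot of work.

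There is also one concretely false step. \graphsemC{\classlhotgs}{\L} is \emph{not} the bisimulation collapse of the eager reduction graph of \prefixed{}{\unfsem{\L}}, and in particular \graphsemC{\classlhotgs}{\L_1} and \graphsemC{\classlhotgs}{\L_2} need not coincide up to \iso when \m{\unfsem{\L_1} = \unfsem{\L_2}}: for \m{\L = \abs{f}{\letin{r=\app{f}{r}}{r}}} and \m{\P = \abs{f}{\letin{r=\app{f}{(\app{f}{r})}}{r}}} the translations have one and two application vertices respectively, so they are bisimilar but not isomorphic — this is the entire point of applying \scoll in \cref{chap:maxsharing}. The translation reproduces only the sharing already present in \L; what is true is that each \graphsemC{\classlhotgs}{\L_i} admits a functional bisimulation onto the collapsed reduction graph of \M (equivalently, \m{\graphsemC{\classlhotgs}{\L_i} \convfunbisim \graphsemC{\classlhotgs}{\M}}), and that weaker statement still yields \m{\graphsemC{\classlhotgs}{\L_1} \bisim \graphsemC{\classlhotgs}{\L_2}} by transitivity. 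So the ``\m{\Rightarrow}'' direction is repairable by weakening ``is the collapse'' to ``is bisimilar to''; the ``\m{\Leftarrow}'' direction via the coinduction principle is fine once the bridge is in place.
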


\begin{proof}[Sketch of Proof]
	\newcommand\classlhotreetgs{\m{\mathcal{H}_T}}
	\newcommand\alhotreetg{\m{T}}
	\newfunction\Tree{Tree}
	For the proof we introduce a class of λ-ap-ho-term-graphs that have tree form, i.e.\ they only contain variable backlinks but no other backlinks. We denote that class by \m{\classlhotreetgs\subset\classlhotgs}. Every \m{\ahotg ∈ \classlhotgs} has a unique `tree unfolding' \m{\Tree{\ahotg} ∈ \classlhotreetgs}. We make use of the following statements. For all \m{\L,\L_1,\L_2 ∈ \Ter{\lambdaletreccal}}, \m{\M,\M_1,\M_2 ∈ \Ter{\lambdabhcal}}, \m{\ahotg,\ahotg_1,\ahotg_2 ∈ \classlhotgs}, and \m{\alhotreetg,\alhotreetg_1,\alhotreetg_2 ∈ \classlhotreetgs} it can be shown that:
	\begin{gather}
		\L_1 \m{\red_\unfold} \L_2 ~⇒~ \graphsemC{\classlhotgs}{\L_1} \convfunbisim \graphsemC{\classlhotgs}{\L_2}
		\label{eq:correctness:llter:unfoldred:llter:then:graphsem:funbisim:graphsem}
		\\
		\L \m{\infred_\unfold} \M ~ \text{(hence \m{\unfsem{\L} = \M})} ~⇒~ \graphsemC{\classlhotgs}{\L} \convfunbisim \graphsemC{\classlhotgs}{\M}
		\label{eq:correctness:graphsem:unfsem:llter:funbisim:graphsem:llter}
		\\
		\graphsemC{\classlhotgs}{\M} ∈ \classlhotreetgs
		\label{eq:correctness:graphsem:aiter:in:lhotreetgs}
		\\
		\graphsemC{\classlhotgs}{\M_1} \iso \graphsemC{\classlhotgs}{\M_2} ~⇒~ \M_1 = \M_2
		\label{eq:correctness:graphsem:iter:equality}
		\\
		\ahotg \convfunbisim \Tree{\ahotg}
		\label{eq:correctness:treeunfolding:funbisim}
		\\
		\alhotreetg_1 \bisim \alhotreetg_2 ~⇒~ \alhotreetg_1 \iso \alhotreetg_2
		\label{eq:correctness:tree:bisim:tree:then:tree:iso:tree}
		\\
		\ahotg_1 \bisim \ahotg_2 ~⇒~ \Tree{\ahotg_1} \iso \Tree{\ahotg_2}
		\label{eq:correctness:lhotg:bisim:lhotg:then:treeunfolding:iso:treeunfolding}
	\end{gather}
	We can use \cref{eq:correctness:llter:unfoldred:llter:then:graphsem:funbisim:graphsem} for proving \cref{eq:correctness:graphsem:unfsem:llter:funbisim:graphsem:llter}, and we can use \cref{eq:correctness:treeunfolding:funbisim} with \cref{eq:correctness:tree:bisim:tree:then:tree:iso:tree} for proving \cref{eq:correctness:lhotg:bisim:lhotg:then:treeunfolding:iso:treeunfolding}. Now for proving the theorem, let \m{\L_1} and \m{\L_2} be arbitrary \lambdaletrec-terms.
	\par For ``⇒'', suppose \m{\unfsem{\L_1} = \unfsem{\L_2}}. Let \M be the infinite unfolding of \m{\L_1} and \m{\L_2}, i.e., \m{\graphsemC{\classlhotgs}{\L_1} = \M = \graphsemC{\classlhotgs}{\L_2}}. Then by \cref{eq:correctness:graphsem:unfsem:llter:funbisim:graphsem:llter} it follows \m{\graphsemC{\classlhotgs}{\L_1} \convfunbisim \graphsemC{\classlhotgs}{\M} \funbisim \graphsemC{\classlhotgs}{\L_2}}, and hence \m{\graphsemC{\classlhotgs}{\L_1} \bisim \graphsemC{\classlhotgs}{\L_2}}.
	\par For ``⇐'', suppose \m{\graphsemC{\classlhotgs}{\L_1} \bisim \graphsemC{\classlhotgs}{\L_2}}. Then by \cref{eq:correctness:lhotg:bisim:lhotg:then:treeunfolding:iso:treeunfolding} it follows that \m{\Tree{\graphsemC{\classlhotgs}{\L_1}} \iso \Tree{\graphsemC{\classlhotgs}{\L_2}}}. Let \m{\M_1,\M_2 ∈ \Ter{\lambdabhcal}} be the infinite unfoldings of \m{\L_1} and \m{\L_2}, i.e.\ \m{\M_1 = \unfsem{\L_1}}, and \m{\M_2 = \unfsem{\L_2}}. Then \cref{eq:correctness:graphsem:unfsem:llter:funbisim:graphsem:llter} together with the assumption entails \m{\graphsemC{\classlhotgs}{\M_1} \bisim \graphsemC{\classlhotgs}{\M_2}}. Since \m{\graphsemC{\classlhotgs}{\M_1}, \graphsemC{\classlhotgs}{\M_2} ∈ \classlhotreetgs} by \cref{eq:correctness:graphsem:aiter:in:lhotreetgs}, it follows by \cref{eq:correctness:tree:bisim:tree:then:tree:iso:tree} that \m{\graphsemC{\classlhotgs}{\M_1} \iso \graphsemC{\classlhotgs}{\M_2}}. Finally, by using \cref{eq:correctness:graphsem:iter:equality} we obtain \m{\M_1 = \M_2}, and hence \m{\unfsem{\L_1} = \M_1 = \M_2 = \unfsem{\L_2}}.
\end{proof}


\section{Interpretion of \lambdaletrec-terms as λ-term-graphs}\label{sec:trans:ltgs}

\begin{para}[overview]
	We have found a way to model sharing in \lambdaletreccal by interpreting \lambdaletrec-terms as eager-scope λ-ap-ho-term-graphs. We have also identified eager-scope λ-term-graphs as a class of first-order term graphs that faithfully implements (functional) bisimulation on λ-ap-ho-term-graphs. Since we have a higher-order term graph semantics for interpreting \lambdaletrec-terms as λ-ap-ho-term-graphs, as well as translation from λ-ap-ho-term-graphs to λ-term-graphs and back, we seem to be able to turn our attention to the readback function as the last required ingredient of the methods described in \cref{pipeline}. However, we first need to add black holes to the formalisation of λ-term-graphs as done for λ-ap-ho-term-graphs. And then we also need to deal with a few intricacies of λ-term-graphs with regards to readback.
\end{para}

Analogously to \cref{laphotgs-blackholes} we extend the signature \sigTGij{1}{2} and the correctness conditions in \cref{ltg:ap-function} by a black-hole case.

\begin{definition}[signature for λ-term-graphs with black holes]\label{sigTGSbh}
	By \sigTGSbh we denote the signature \sigTGij{1}{2} extended by a black-hole symbol \bh with arity \m{0}, thus \m{\sigTGSbh = \set{@, λ, \0, \S, \bh}} with \m{\arity{@} = 2}, \m{\arity{λ} = \arity{\0} = 1}, \m{\arity{\S} = 2}, and \m{\arity{\bh} = 0}.
\end{definition}

\begin{definition}[abstraction-prefix function for \sigTGSbh-term-graphs]\label{abs-pre-function:sigTGSbh}
	As \cref{ltg:ap-function} but over signature \sigTGSbh and with the following condition added:
	\begin{align}
		\tag{black hole}\label{ltg:ap-function:blackhole}
		& ~ \abspre{\bh} = \emptyword
	\end{align}
\end{definition}

\begin{definition}[λ-term-graph with black holes]\label{def:ltg}
	As \cref{def:classltgs}, but over signature \sigTGSbh and using \cref{abs-pre-function:sigTGSbh}.
\end{definition}

\begin{terminology}[λ-term-graph := λ-term-graph with black holes]
	Henceforth, if we speak of λ-term-graphs, we refer to this specific variant with black holes and variable and scope-delimiter vertices with backlinks. All the relevant properties that hold for ordinary λ-term-graphs carry over.
\end{terminology}

\begin{remark}[relaxed black-hole condition]
	The condition from \cref{abs-pre-function:sigTGSbh} could also be relaxed to say \m{\abspre{\bh} = w} where \m{w} is an arbitrary word of vertices. In that case instead of one single shared function definition of the form \m{f=f} one would obtain multiple such definitions, but at most one for each scope.
\end{remark}

\begin{para}[\lhotgstoltgs{}]
	\lhotgstoltgs{}, the function that maps λ-ap-ho-term-graphs to their corresponding λ-term-graphs is defined as \aphotgstoltgsij{1}{2}{} from \cref{prop:mappings:aphotgs:to:ltgs} but amended to also translate black holes (in the obvious way).
\end{para}

\begin{para}[two interpretations as λ-term-graphs]
	We will consider in fact two interpretations of \lambdaletrec-terms as λ-term-graphs: first we define \graphsemCmin{\classltgs}{} as the composition of \graphsemC{\classlhotgs}{} and \lhotgstoltgs{}; then we define the semantics \graphsemC{\classltgs}{} with more fine-grained \S-sharing, which is necessary for defining a readback with \cref{methods:properties:readback}.
	\par By composing the interpretation \lhotgstoltgs{} of λ-ap-ho-term-graphs as λ-term-graphs with the λ-ap-ho-term-graph semantics \graphsemC{\classlhotgs}{}, a semantics of \lambdaletrec-terms as λ-term-graphs is obtained. There is, however, a more direct way to define this semantics: by using an adaptation of the translation rules \rulestranslambdaletreccaltolhotgs in \cref{fig:def:graphsem:lhotgs}, on which \graphsemC{\classlhotgs}{} is based. For this, let \rulestranslambdaletreccaltoltgs be the result of replacing the rule \S in \rulestranslambdaletreccaltolhotgs by the version in \cref{fig:def:graphsem:ltgs}. While applications of this variant of the \S-rule also shorten the abstraction-prefix, they additionally produce a delimiter vertex.
	\par Here, at the end of the translation process, every loop on an indirection vertex with a prefix of length \n is replaced by a chain of \n \S-vertices followed by a black hole vertex. Note that, while the system \rulestranslambdaletreccaltoltgs inherits all of the non-determinism of \rulestranslambdaletreccaltolhotgs, the possible degrees of freedom have additional impact on the result, because now they also determine the precise degree of \S-vertex sharing.
	By analogous stipulations as in \cref{def:eagscope:minprefix:generated} we define the conditions under which a λ-term-graph is called \emph{eager-scope} \rulestranslambdaletreccaltoltgs-generated, or \rulestranslambdaletreccaltoltgs-generated \emph{with minimal prefixes}, from a \lambdaletrec-term. For these notions, statements entirely analogous to \cref{prop:eagscope:generated:minimal:prefixes} hold.
\end{para}

\begin{figure}
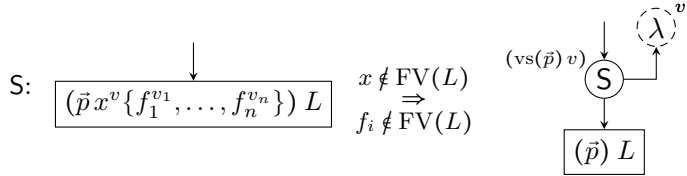

	\translationif{\S}{
		\node[draw,shape=transbox](lhs){\recprefixed{\vec{p}\prefixcon x^v\fs{f_1^{v_1},\dots,f_n^{v_n}}}\L};
		\draw[<-](lhs.north) to +(0,5mm);
	}{
		\node[draw,shape=transbox](rhs){\recprefixed{\vec{p}}\L};
		\ltgnode[node distance=4mm,above=of rhs.north]{succ}{\S};
		\addPrefix{succ}{\vs{\vec{p}}\prefixcon \v};
		\draw[->](succ) to (rhs.north);
		\draw[<-](succ.north) to +(0,5mm);
		\node[node distance=3mm,right=of succ](right){};
		\ltgnode[node distance=3mm,above=of right,densely dashed]{abs}{λ}; \addPos{abs}{v};
		\addPos{abs}{v};
		\draw[->](succ) -| (abs);
	}
	{\text{\small \m{x ∉ \freevars{\L}}}}
	{\text{\small \m{f_i ∉ \freevars{\L}}}}
	\caption{Delimiter-vertex producing version of the \S-rule from \cref{fig:def:graphsem:lhotgs}}
	\label{fig:def:graphsem:ltgs}
\end{figure}

\begin{definition}[\graphsemCmin{\classltgs}{}]\label{def:graphsem-min:ltgs}
  	The semantics \graphsemCmin{\classltgs}{} for \lambdaletrec-terms as λ-term-graphs is defined as \m{\graphsemCmin{\classltgs}{} : \Ter{\lambdaletreccal} → \eag{\classltgs}}, \m{\L ↦ \graphsemCmin{\classltgs}{\L} :=\,} the eager-scope term graph that is \rulestranslambdaletreccaltoltgs-generated with minimal prefixes from a garbage-free version \L.
\end{definition}

\begin{para}[no \S-sharing]
	For an example, see \cref{ex:translations} below. In \graphsemCmin{\classltgs}{}, `\noSsh' also indicates that λ-term-graphs obtained via this semantics exhibit minimal (in fact no) sharing (two or more incoming edges) of \S-vertices. This is substantiated by the next proposition, in the light of the fact that \lhotgstoltgs{} does not create any shared \S-vertices.
\end{para}

\begin{proposition}\label{prop:graphsem:ltgs:min:graphsem:ltgs}
  	\m{\graphsemCmin{\classltgs}{} = {\lhotgstoltgs{}} ∘ {\graphsemC{\classlhotgs}{}}}.
\end{proposition}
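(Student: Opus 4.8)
The statement \m{\graphsemCmin{\classltgs}{} = {\lhotgstoltgs{}} ∘ {\graphsemC{\classlhotgs}{}}} asserts that two routes from \lambdaletrec-terms to λ-term-graphs coincide: on the one hand the direct translation via the rule system \rulestranslambdaletreccaltoltgs (with minimal prefixes, from a garbage-free version of the term), and on the other hand first translating via \rulestranslambdaletreccaltolhotgs to an eager-scope λ-ap-ho-term-graph and then applying \lhotgstoltgs{}. Since both \graphsemCmin{\classltgs}{} and \graphsemC{\classlhotgs}{} are defined as the result of a translation \emph{with minimal prefixes} applied to a garbage-free version of the input term, the plan is to fix one such garbage-free version and one choice resolving the (prefix-length) non-determinism, show the two procedures produce isomorphic results, and then observe that both sides are in fact functions valued in isomorphism classes so that the equality holds on the nose.

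\textbf{Key steps.} First I would unfold the definitions: \cref{def:graphsem-min:ltgs} says \m{\graphsemCmin{\classltgs}{\L}} is the eager-scope term graph \rulestranslambdaletreccaltoltgs-generated with minimal prefixes from a garbage-free version of \L, and \cref{def:graphsem:lhotgs} together with the definition of \lhotgstoltgs{} (which is \aphotgstoltgsij{1}{2}{} extended to black holes) describes the composite. The core observation is that \rulestranslambdaletreccaltoltgs differs from \rulestranslambdaletreccaltolhotgs only in the \S-rule, which in \rulestranslambdaletreccaltoltgs additionally emits a scope-delimiter vertex with a backlink to the abstraction being closed (\cref{fig:def:graphsem:ltgs}), and in the final indirection-erasure phase, where a loop on an indirection vertex with prefix of length \n is replaced by a chain of \n \S-vertices followed by a \bh-vertex rather than a single \bh-vertex. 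I would then argue by induction on the translation process that running \rulestranslambdaletreccaltoltgs is exactly the same as running \rulestranslambdaletreccaltolhotgs and, in each step that closes a scope (shortens the abstraction prefix by one abstraction \v), inserting a fresh \S-vertex with \tgsuccis{0}{\S} pointing to the remainder and \tgsuccis{1}{\S} backlinking to \v, with abstraction prefix \m{\vs{\vec{p}}\prefixcon\v}. This is precisely the effect of the construction of \aphotgstoltgsij{1}{2}{} in \cref{prop:mappings:aphotgs:to:ltgs}, which for every edge along which the abstraction prefix drops by \n elements inserts \n delimiter vertices (\cref{fig:insertS}), retaining the original prefixes to define the new edges. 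Matching the bookkeeping — abstraction-prefix annotations on both sides, the chain order of successive \S-vertices when a prefix drops by more than one in a single \rulestranslambdaletreccaltolhotgs-step (which happens only when the eager \S-rule fires repeatedly, or at an \0-vertex / \Let-boundary) — gives the needed isomorphism on the non-black-hole part. For the black holes, I would check the two special cases match: a loop on an indirection vertex with prefix length \n becomes, on the left via \rulestranslambdaletreccaltoltgs, a chain of \n \S-vertices plus a \bh-vertex with empty prefix, and on the right via \rulestranslambdaletreccaltolhotgs it becomes a single \bh-vertex with empty prefix, which \lhotgstoltgs{} then expands (exactly as it expands any prefix drop) into the same \n \S-vertices followed by the \bh-vertex. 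Finally I would invoke \cref{thm:corr:aphotgs:ltgs}, in particular the fact that \lhotgstoltgs{} restricted to the image of \graphsemC{\classlhotgs}{} lands in \eag{\classltgs} (via \cref{prop:trans_eager} and \cref{prop:preserve:reflect:eagscope}), to conclude that the composite also produces an eager-scope, minimal-prefix λ-term-graph; hence the uniqueness-up-to-isomorphism clauses that make \graphsemCmin{\classltgs}{} and the composite well-defined functions force the two to be literally equal.

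\textbf{Main obstacle.} The delicate point is the \emph{ordering and placement} of the inserted \S-vertices and making sure both procedures place them identically — in particular when the abstraction prefix shrinks by more than one abstraction at once (the eager-scope discipline can force several consecutive \S-steps, and at \0-vertices the prefix drops to the prefix of the target abstraction), and at the boundaries of \Let-expressions where the prefixes of the function definitions are chosen minimally but generally differ in length from the body's prefix. I expect the proof to require a careful statement of an invariant relating the state of a \rulestranslambdaletreccaltoltgs-translation-box to the corresponding state of a \rulestranslambdaletreccaltolhotgs-translation-box together with the partially-constructed \aphotgstoltgsij{1}{2}{}-image, preserved through all seven translation rules and through the indirection-erasure phase; once that invariant is formulated correctly the verification of each rule is routine. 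A secondary, milder concern is confirming that the minimal-prefix resolution of the non-determinism in \rulestranslambdaletreccaltoltgs corresponds step-for-step to the minimal-prefix resolution in \rulestranslambdaletreccaltolhotgs — but since the underlying term graph and the abstraction-prefix function produced by \rulestranslambdaletreccaltolhotgs are unaffected by the degree of \S-sharing (as noted in \cref{non-determinism-in-R} and around \cref{def:graphsem-min:ltgs}), the prefix-length choices in the two systems are driven by the same required-variable analysis and hence agree.
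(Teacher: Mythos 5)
The paper states this proposition without proof, treating it as immediate from the definitions, and your plan is exactly the argument the authors intend: the rule systems differ only in the \S-rule and the black-hole replacement, and the extra \S-vertices emitted by \rulestranslambdaletreccaltoltgs coincide with those inserted by \lhotgstoltgs{} at every prefix drop. Your elaboration (including the chain-ordering of consecutive \S-vertices and the black-hole case) is correct and matches the intended reasoning.
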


\begin{para}[\graphsemCmin{\classltgs}{} is partial]
	Hence \graphsemCmin{\classltgs}{} only yields λ-term-graphs without sharing of \S-vertices, and therefore its image cannot be all of \eag{\classltgs}. As a consequence, we cannot hope to define a readback function \readback{} with respect to \graphsemCmin{\classltgs}{} that adheres to \cref{methods:properties:readback}, because that requires that the image of the semantics is \eag{\classltgs} in its entirety.
\end{para}

\begin{para}[eager-scope \rulestranslambdaletreccaltolhotgs-generated with maximal prefixes]
	Therefore we provide an alternative λ-term-graph semantics \graphsemC{\classltgs}{} with \m{\image{\graphsemC{\classltgs}{}} = \eag{\classltgs}}. We achieve this by letting the \Let-binding-structure of the \lambdaletrec-term influence the degree of \S-sharing as much as possible, while still remaining eager-scope.
	\par By `letting the \Let-binding-structure determine the degree of \S-sharing' we mean to eliminate the freedom of choice with respect to closing scopes: one can either place a scope-delimiter vertex at the top of the translation of a function binding in which case it is shared, or right above every use site of that function in which case it is not. See \cref{fig:let-structure} for illustration.
	\begin{figure}
		\begin{hspread}
			\fig{let-structure1} & \fig{let-structure2} & \fig{let-structure3} \\
			\abs{x}{\letin{f=x}{\abs{y}{\app{f}{(\app{f}{y})}}}} & \abs{x}{\abs{y}{\letin{f=x}{\app{f}{(\app{f}{y})}}}} & \abs{x}{\abs{y}{\letin{f=x}{\app{\app{f}{f}}{y}}}}
		\end{hspread}
		\captionsetup{singlelinecheck=off}
		\caption[]{
			In \graphsemC{\classltgs}{} the degree of \S-sharing is determined by the \Let-structure as much as possible. Consider the above \lambdaletrec-terms and their term graph interpretation w.r.t.\ \graphsemC{\classltgs}{}.
			\begin{description}
				\item[left]: \f is bound outside of \y's scope. Therefore it is not \f's responsibility to close \y's scope but the responsibility of \f's two use sites.
				\item[middle]: \f is bound within \y's scope so it is \f's responsibility to close it.
				\item[right]: \f is bound within \y's scope. Therefore it should be \f's responsibility to close \y's scope, but putting the scope delimiter underneath the lower application would violate eager scope-closure.
			\end{description}}
		\label{fig:let-structure}
	\end{figure}
\end{para}

\begin{definition}[eager-scope \rulestranslambdaletreccaltolhotgs-generated with maximal prefixes]
	\par We say that a λ-ap-ho-term-graph \ahotg is \emph{eager-scope \rulestranslambdaletreccaltolhotgs-generated with maximal prefixes} from a \lambdaletrec-term \L if \ahotg is \rulestranslambdaletreccaltolhotgs-generated from \L by a translation process in which in applications of the \Let-rule the prefixes are chosen maximally, but so that the eager-scope property of the process is not compromised. It can be shown that this condition fixes the prefix lengths per application of the \Let-rule.
\end{definition}

\begin{definition}[\graphsemC{\classltgs}{}]\label{def:graphsem:ltgs}
  	The semantics \graphsemC{\classltgs}{} for \lambdaletrec-terms as λ-term-graphs is defined as \m{\graphsemC{\classltgs}{} : \Ter{\lambdaletreccal} → \eag{\classltgs}}, \m{\L ↦ \graphsemC{\classltgs}{\L} :=\,} the λ-term-graph that is eager-scope \rulestranslambdaletreccaltoltgs-generated with maximal prefixes from a garbage-free version of \L.
\end{definition}

\begin{figure}
	\begin{minipage}[b]{3.9cm}
		\begin{flushright}
			\graphsemC{\classltgs}{\L_1}\\
			\m{= \graphsemCmin{\classltgs}{\L_1}
			= \graphsemCmin{\classltgs}{\L_2}}\\
			\m{= \graphsemCmin{\classltgs}{\L_3}
			= \graphsemCmin{\classltgs}{\L'_3}}
			\\[0.9cm]
			\graphsemC{\classltgs}{\L_2}
			\\[1.2cm]
			\m{\graphsemC{\classltgs}{\L_3} = \graphsemC{\classltgs}{\L'_3}}
			\\[5mm]
			\mbox{}
		\end{flushright}
	\end{minipage}
	\hfill
	\fig{good-trans}
	\caption{Translation of the \lambdaletrec-terms from \cref{ex:translations} with the semantics \graphsemCmin{\classltgs}{} and \graphsemC{\classltgs}{}. For legibility some backlinks are merged.}
	\label{fig:translations}
\end{figure}

\begin{proposition}\label{prop:graphsem:ltgs:min:graphsem:ltgs:S}
  \m{\graphsemCmin{\classltgs}{\L} \funbisim^S \graphsemC{\classltgs}{\L}}
  holds for all \lambdaletrec-terms \L.
\end{proposition}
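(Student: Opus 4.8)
The plan is to compare the two semantics \graphsemCmin{\classltgs}{} and \graphsemC{\classltgs}{} by exhibiting an explicit $\S$-homomorphism between the $\rulestranslambdaletreccaltoltgs$-generated term graphs. Both semantics are defined via $\rulestranslambdaletreccaltoltgs$-generated translations that are eager-scope: \graphsemCmin{\classltgs}{\L} is $\rulestranslambdaletreccaltoltgs$-generated from (a garbage-free version of) $\L$ with \emph{minimal} prefixes in the $\Let$-rule, whereas \graphsemC{\classltgs}{\L} is eager-scope $\rulestranslambdaletreccaltoltgs$-generated from the same garbage-free version with \emph{maximal} prefixes (subject to preserving eager-scopedness). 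The key observation is that the only difference between these two translation processes lies in the choice of the prefix lengths $l_1,\dots,l_k$ in applications of the $\Let$-rule, and that this choice affects \emph{only} how far the scope-closing $\rulep{\S}$-steps are pushed down into the translations of the function bindings versus being replicated at the use sites. In both cases the same scopes are being closed (the closure points being dictated by the required-variable analysis, since both processes are eager-scope); the difference is merely whether a given scope-delimiter is created once (shared, near the top of a binding's translation) or several times (unshared, at each use site). Concretely, a maximal-prefix choice pushes a $\rulep{\S}$-step inside a binding's subgraph so that the resulting $\S$-vertex is shared by all edges pointing into that binding, while the minimal-prefix choice keeps that variable in the prefix of every use site, leading to one $\S$-vertex per use site.

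First I would make precise the correctness conditions under which $\rulestranslambdaletreccaltoltgs$-generation yields a λ-term-graph, and recall (the analogue for $\rulestranslambdaletreccaltoltgs$ of \cref{prop:trans_lhotgs_correct} and \cref{prop:eagscope:generated:minimal:prefixes}) that both $\graphsemCmin{\classltgs}{\L}$ and $\graphsemC{\classltgs}{\L}$ are indeed eager-scope λ-term-graphs over $\sigTGSbh$. Then I would proceed by induction on the structure of the (garbage-free) $\lambdaletrec$-term $\L$, tracking the two parallel translation processes side by side. The induction hypothesis states that for every translation box $\recprefixed{\vec{p}}{\L'}$ reached in the minimal-prefix process, the corresponding box in the maximal-prefix process has a prefix $\vec{p}'$ with the same underlying variables $\vs{\vec{p}} = \vs{\vec{p}'}$ (only the distribution of function-variable annotations across the prefix entries differs), and that there is an $\S$-homomorphism from the partial minimal-prefix graph onto the partial maximal-prefix graph that is the identity on all non-$\S$ vertices constructed so far. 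The cases for $\lambdaprefixcal$-constructors ($\rulep{λ}$, $\rulep{@}$, $\rulep{\0}$, $\rulep{\f}$) are routine: the two processes apply the same rule and create the same (non-$\S$) vertex, extending the homomorphism by the identity. The $\rulep{\S}$-case is where the two processes may produce different numbers of $\S$-vertices — but because the \emph{target} of an $\S$-vertex (the λ-vertex being closed) and its position relative to the surrounding structure are determined by the scoping data which is common to both processes, several $\S$-vertices in the minimal graph collapse onto one $\S$-vertex in the maximal graph, and the homomorphism conditions \cref{eq:args-forward} and \cref{eq:args-backward} from \cref{prop:hom:tgs} are checked directly using the $\S_0$/$\S_1$ correctness conditions (\cref{ltg:ap-function:S0}, \cref{ltg:ap-function:S1}) together with \cref{lem:ltg}. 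The $\Let$-case is where the difference in prefix lengths is introduced, and one verifies that the maximal choice differs from the minimal choice only by moving certain prefix entries from use-site boxes into binding boxes, which corresponds precisely to sharing the associated $\S$-vertices; the homomorphism is extended by mapping each copy of such a (to-be-created) delimiter onto the single shared one. Finally, the end-of-translation erasure step (replacing each indirection-vertex loop by a chain of $\S$-vertices followed by $\bh$) is handled identically in both processes modulo the already-established $\S$-identifications, and the homomorphism extends again by the identity on all non-$\S$ vertices.

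Since by construction the homomorphism $h$ from $\graphsemCmin{\classltgs}{\L}$ to $\graphsemC{\classltgs}{\L}$ is injective on vertices with label different from $\S$ — it is in fact the identity on those, being built as the identity at every constructor step — $h$ is an $\S$-homomorphism, which establishes $\graphsemCmin{\classltgs}{\L} \funbisim^{S} \graphsemC{\classltgs}{\L}$ as required. I expect the main obstacle to be the bookkeeping in the $\Let$-case: one must carefully formalise the correspondence between ``a function-variable annotation sits in prefix-entry $x_j$'' on the one hand, and ``a scope-delimiter for $x_j$ is (not) created at the top of that binding's translation'' on the other, and verify that the maximal-with-eager-scope choice is well-defined and yields exactly the sharing pattern that $h$ needs to be surjective on $\S$-vertices in the right way. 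A secondary subtlety is making sure the garbage-free reduction is the \emph{same} in both semantics (it is, since both take ``a garbage-free version of $\L$'' as input, so one fixes one such version once and for all before running either translation), so that the structural induction is genuinely over a single term. Once these points are pinned down, the remaining verifications are mechanical applications of \cref{prop:hom:tgs}, \cref{lem:ltg}, and the $\rulestranslambdaletreccaltoltgs$-generation definitions.
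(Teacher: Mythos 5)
The paper states this proposition without giving a proof, so there is nothing official to compare against; judged on its own terms, your strategy --- run the minimal-prefix and the maximal-prefix \rulestranslambdaletreccaltoltgs-translations in parallel and exhibit an \S-homomorphism that is a bijection on the non-\S{} vertices --- is sound, and the direction of your homomorphism (from the unshared graph into the more-shared one) is the one required by the definition of \m{\funbisim^{\S}}. One imprecision needs repair: your induction hypothesis, that a box reached in the minimal-prefix run and the corresponding box in the maximal-prefix run carry prefixes with \emph{the same} underlying \m{λ}-variables, is not literally maintainable. At the root of a binding's translation the maximal-prefix run starts with a strictly longer \m{λ}-prefix, and the two runs only resynchronise after the initial shared chain of \rulep{\S}-steps of length \m{l_i^{\max}-l_i^{\min}} that eager-scopedness forces there. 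So the invariant must be stated modulo pending \S-steps (e.g.\ ``the \S-normal forms of corresponding boxes have the same \m{λ}-prefix''), and on \S-vertices the homomorphism is the map sending the \m{j}-th vertex of a use-site chain in the minimal graph to the \m{j}-th vertex of the use-site chain followed by the shared chain at the binding root in the maximal graph; the argument conditions then follow because both chains pop the same \m{λ}-vertices in the same order, so total lengths agree and the backlinks match. You flag exactly this bookkeeping as the main obstacle, which is fair. Worth noting: the paper's own machinery offers a shorter route. By \cref{prop:graphsem:ltgs:min:graphsem:ltgs} one has \m{\graphsemCmin{\classltgs}{\L} = \lhotgstoltgs{\graphsemC{\classlhotgs}{\L}}}; once one verifies \m{\ltgstoaphotgsij{1}{2}{\graphsemC{\classltgs}{\L}} \iso \graphsemC{\classlhotgs}{\L}} --- which is precisely the ``same non-\S{} skeleton, same abstraction prefixes'' fact your induction establishes --- the claim is immediate from \cref{thm:corr:aphotgs:ltgs}~\cref{thm:corr:aphotgs:ltgs:ii}. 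Both routes rest on the same core lemma, so your direct construction is acceptable; the indirect one simply reuses more of the already-proved correspondence results.
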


Now due to this, and due to \cref{thm:corr:aphotgs:ltgs}~\cref{thm:corr:aphotgs:ltgs:iii}, the statement of \cref{thm:graphrep:classlhotgs} can be transferred to \classltgs, yielding \cref{methods:properties:correctness} for \graphsemC{\classltgs}{}.


\begin{theorem}\label{thm:graphrep:classltgs}
  For all \lambdaletrec-terms \m{\L_1} and \m{\L_2}
  the following holds:
  \m{\unfsem{\L_1} = \unfsem{\L_2}}
    if and only if
  \m{\graphsemC{\classltgs}{\L_1} \bisim \graphsemC{\classltgs}{\L_2}}.
\end{theorem}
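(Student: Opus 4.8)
The statement is an immediate consequence of results already at hand, once the bridge between the two first-order semantics $\graphsemCmin{\classltgs}{}$ and $\graphsemC{\classltgs}{}$ is exploited. The plan is to reduce the claim about $\graphsemC{\classltgs}{}$ to the corresponding claim about $\graphsemC{\classlhotgs}{}$, which is exactly \cref{thm:graphrep:classlhotgs}. The key intermediate facts are \cref{prop:graphsem:ltgs:min:graphsem:ltgs}, which says $\graphsemCmin{\classltgs}{} = {\lhotgstoltgs{}} \circ {\graphsemC{\classlhotgs}{}}$, \cref{prop:graphsem:ltgs:min:graphsem:ltgs:S}, which says $\graphsemCmin{\classltgs}{\L} \funbisim^{\S} \graphsemC{\classltgs}{\L}$ for every \lambdaletrec-term $\L$, and \cref{thm:corr:aphotgs:ltgs}~\cref{thm:corr:aphotgs:ltgs:iii}, which states that $\lhotgstoltgs{}$ (i.e.\ $\aphotgstoltgsij{1}{2}{}$, suitably amended with black holes) preserves and reflects the sharing order, hence also bisimilarity between λ-ap-ho-term-graphs and their λ-term-graph images.

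\textbf{Main chain of reasoning.} First I would record the easy observation that $\funbisim^{\S}$-related term graphs are in particular bisimilar, so $\graphsemCmin{\classltgs}{\L} \bisim \graphsemC{\classltgs}{\L}$ for all $\L$ by \cref{prop:graphsem:ltgs:min:graphsem:ltgs:S}; since bisimilarity $\bisim$ is an equivalence relation on term graphs, this means $\graphsemC{\classltgs}{\L_1} \bisim \graphsemC{\classltgs}{\L_2}$ if and only if $\graphsemCmin{\classltgs}{\L_1} \bisim \graphsemCmin{\classltgs}{\L_2}$. Next, using \cref{prop:graphsem:ltgs:min:graphsem:ltgs} to rewrite $\graphsemCmin{\classltgs}{\L_i}$ as $\lhotgstoltgs{\graphsemC{\classlhotgs}{\L_i}}$, it remains to show that $\lhotgstoltgs{\graphsemC{\classlhotgs}{\L_1}} \bisim \lhotgstoltgs{\graphsemC{\classlhotgs}{\L_2}}$ is equivalent to $\graphsemC{\classlhotgs}{\L_1} \bisim \graphsemC{\classlhotgs}{\L_2}$. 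This is where \cref{thm:corr:aphotgs:ltgs} enters: $\lhotgstoltgs{}$ preserves and reflects $\funbisim$ between the corresponding isomorphism classes, and together with the bisimulation-as-span characterisation of $\bisim$ this lifts to an equivalence on the level of bisimilarity — the standard argument being that $\ahotg_1 \bisim \ahotg_2$ in $\classlhotgs$ iff there is a common $\funbisim$-successor (by the complete-lattice / collapse machinery, or simply by transitivity through the tree unfolding), and $\lhotgstoltgs{}$ maps $\funbisim$-spans to $\funbisim$-spans and back. Finally, chaining these equivalences with \cref{thm:graphrep:classlhotgs}, which gives $\unfsem{\L_1} = \unfsem{\L_2}$ iff $\graphsemC{\classlhotgs}{\L_1} \bisim \graphsemC{\classlhotgs}{\L_2}$, yields $\unfsem{\L_1} = \unfsem{\L_2}$ iff $\graphsemC{\classltgs}{\L_1} \bisim \graphsemC{\classltgs}{\L_2}$, as desired.

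\textbf{Where the work sits.} Most of this is bookkeeping once \cref{thm:graphrep:classlhotgs} is granted; the only point requiring genuine care is the transfer of \emph{bisimilarity} (as opposed to \emph{functional} bisimilarity) across $\lhotgstoltgs{}$. The correspondence theorem \cref{thm:corr:aphotgs:ltgs} is phrased in terms of $\funbisim$, and one must check that reflecting and preserving $\funbisim$ in both directions suffices to reflect and preserve $\bisim$. I expect the cleanest way is to use that two term graphs are bisimilar iff they have a common functional-bisimulation successor (for $\classlhotgs$ this follows from the tree-unfolding construction used in the proof of \cref{thm:graphrep:classlhotgs}, and for $\classltgs$ from \cref{thm:corr:aphotgs:ltgs:i}~and~\cref{thm:corr:aphotgs:ltgs:ii} together with the complete-lattice results of \cref{chap:representations}), and then to push the common successor through $\lhotgstoltgs{}$ and $\ltgstoaphotgsij{1}{2}{}$ in the respective directions. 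A subtlety to flag is that $\lhotgstoltgs{}$ is not injective (\cref{prop:ltgstoaphotgsij-not-injective}), so one only gets agreement up to $\bisim^{\S}$ on the λ-term-graph side; but since the target statement is itself modulo $\bisim$ and we have already absorbed $\funbisim^{\S}$ into $\bisim$ at the first step, this causes no trouble. I also need to make sure that the black-hole extensions of $\graphsemC{\classlhotgs}{}$, $\lhotgstoltgs{}$, and of the correspondence theorem behave exactly as their non-black-hole counterparts, which the text asserts is routine; I would state this once explicitly rather than re-prove anything.
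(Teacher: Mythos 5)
Your proposal is correct and follows essentially the same route as the paper: the paper's own argument is precisely the one-line transfer of \cref{thm:graphrep:classlhotgs} to \classltgs via \cref{prop:graphsem:ltgs:min:graphsem:ltgs}, \cref{prop:graphsem:ltgs:min:graphsem:ltgs:S}, and \cref{thm:corr:aphotgs:ltgs}~\cref{thm:corr:aphotgs:ltgs:iii}. Your additional care about lifting the preservation/reflection of $\funbisim$ to full bisimilarity $\bisim$ across \lhotgstoltgs{} is a detail the paper leaves implicit, and your treatment of it is sound.
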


\begin{example}\label{ex:translations}
	Consider the following four \lambdaletrec-terms:
	\[
	\begin{aligned}
  		\L_1 &= \letin{I=\abs{z}{z}}{\abs{x}{\abs{y}{\letin{f=x}{\app{(\app{(\app{y}{I})}{(\app{I}{y})})}{(\app{f}{f})}}}}}\\
  		\L_2 &= \abs{x}{\letin{I=\abs{z}{z}}{\abs{y}{\letin{f=x}{\app{(\app{(\app{y}{I})}{(\app{I}{y})})}{(\app{f}{f})}}}}}\\
  		\L_3 &= \abs{x}{\abs{y}{\letin{I=\abs{z}{z} \eqsep f=x}{\app{(\app{(\app{y}{I})}{(\app{I}{y})})}{(\app{f}{f})}}}}\\
  		\L'_3 &= \abs{x}{\letin{I=\abs{z}{z}}{\abs{y}{\letin{f=x,\ g=I}{\app{(\app{(\app{y}{g})}{(\app{g}{y})})}{(\app{f}{f})}}}}}\\
	\end{aligned}
	\]
	The three possible fillings of the dashed area in \cref{fig:translations}
	depict the translations \graphsemC\classltgs{\L_1}, \graphsemC\classltgs{\L_2}, and \m{\graphsemC\classltgs{\L_3} = \graphsemC\classltgs{\L'_3}}. The translations of the four terms with \graphsemCmin{\classltgs}{} are identical:\\ \m{\graphsemCmin{\classltgs}{\L_1} = \graphsemCmin{\classltgs}{\L_2} = \graphsemCmin{\classltgs}{\L_3} = \graphsemCmin{\classltgs}{\L'_3} = \graphsemC{\classltgs}{\L_1}}.
	\par See \cref{translation:translations} for a fully-worked out stepwise translation of \m{L_2} and also \cref{showcase:translations}.
\end{example}

\section{Readback of λ-term-graphs}\label{sec:readback}

\begin{para}[overview]
	In this section we describe how from a given λ-term-graph \altg a \lambdaletrec-term \L that represents \altg (i.e.\ for which \m{\graphsemC{\classltgs}{\L} = \altg} holds) can be `read back'. For this purpose we define a process based on synthesis rules. It defines a readback function from λ-term-graphs to \lambdaletrec-terms. We illustrate this process by an example, formulate its most important properties, and sketch the proof of \cref{methods:properties:readback}.
\end{para}

\begin{para}[\graphsemC{\classltgs}{} is not invertible]\label{readback-inverse}
	Note that we state as the desired \cref{methods:properties:readback} of the readback \readback{} to be a right-inverse and not a `full' inverse of \graphsemC{\classltgs}{}. This is because \graphsemC{\classltgs}{} is not injective. Consider the \lambdaletrec-terms \abs{x}{\letin{f=x}{\abs{y}{f}}} and \abs{x}{\abs{y}{\letin{f=x}{f}}}. They only differ in the position of the function definition; this information is lost in the translation; \graphsemC{\classltgs}{} maps them to the same term graph. In particular, this kind of relocation of function-bindings (called \emph{let-floating} \cite{grab:roch:letfloating}) preserves the λ-term-graph interpretation.
\end{para}

\begin{para}[approach]
	The idea underlying the definition of the readback procedure is the following: For a given λ-term-graph \altg, a spanning tree \m{T} for \altg (augmented with a dedicated root node) is constructed that severs cycles due to recursive bindings and variable and scope-delimiter backlinks. Now the spanning tree \m{T} facilitates an inductive bottom-up (from the leafs upwards) synthesis process along \m{T}, which labels the edges of \atg with prefixed \lambdaletrec-terms. For this process we use local rules (see \cref{fig:readback:rules}) that synthesise labels for incoming edges of a vertex from the labels of its outgoing edges. Eventually the readback of \altg is obtained as the label for the edge that singles out the root of term graph.
\end{para}

\begin{para}[placement of function definitions]
	In the design of the readback rules, there is some freedom in where to place the function bindings in the synthesised term (\cref{readback-inverse}). 
	Here, function bindings will be put into a \Let-expression placed as high up in the term as possible: a binding arising from the term synthesised for a shared vertex \v is placed in a \Let-expression that is created at the enclosing λ-abstraction of \v (the rightmost vertex in the abstraction-prefix \abspre{\v} of \v).
\end{para}

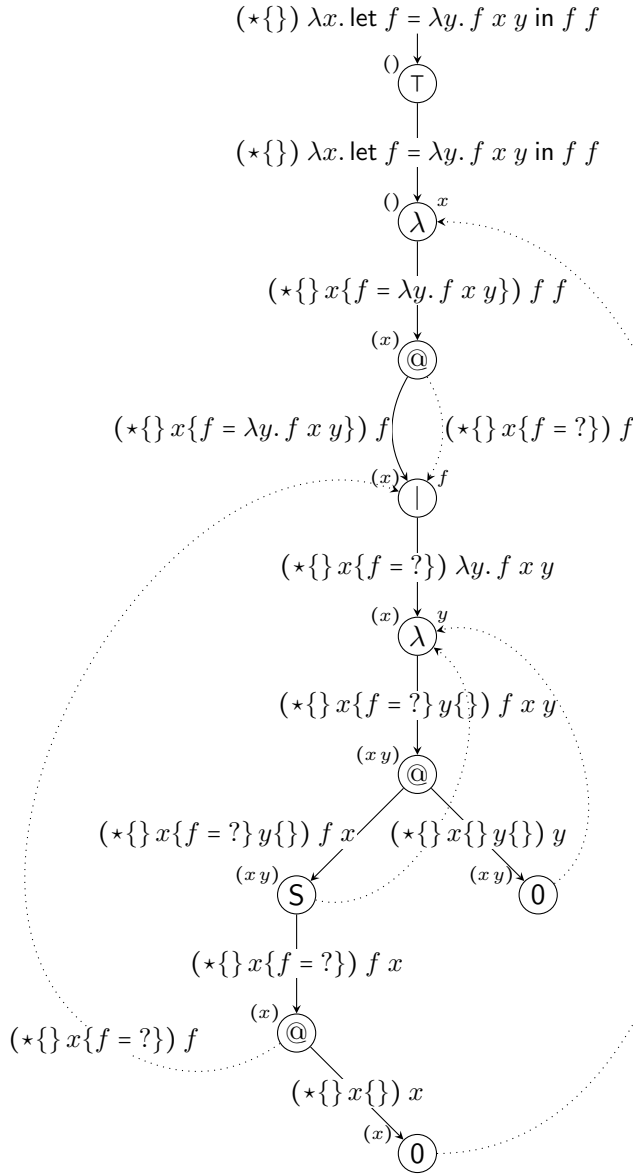
\begin{figure}
	\begin{tikzpicture}[node distance=13mm,inner sep=0.1mm]
		\node(rootrb){\recprefixed{\starfs{}}{\abs{x}{\letin{f=\abs{y}{\app{\app{f}{x}}{y}}}{\app{f}{f}}}}};
		\ltgnode[node distance=4mm,below=of rootrb]{root}{\top}; \draw[->] (rootrb) to (root);
		\ltgnode[below=of root]{absx}{λ};
		\addPos{absx}{~x};
		\rbedge{\recprefixed{\starfs{}}{\abs{x}{\letin{f=\abs{y}{\app{\app{f}{x}}{y}}}{\app{f}{f}}}}}{root}{absx};
		\addPrefix[node distance=2mm]{root}{};
		\ltgnode[below=of absx]{app1}{@};
		\rbedge{\recprefixed{\starfs{}\prefixcon x\fs{f=\abs{y}{\app{\app{f}{x}}{y}}}}{\app{f}{f}}}{absx}{app1};
		\addPrefix[node distance=2mm]{absx}{};
		\ltgnode[below=of app1]{indir}{\indir};
		\rbedge[bend right,left]{\recprefixed{\starfs{}\prefixcon x\fs{f=\abs{y}{\app{\app{f}{x}}{y}}}}{f}}{app1}{indir};
		\addPrefix[node distance=2mm]{app1}{x};
		\addPos{indir}{~f};
		\rbedge[bend left,right,dotted]{\recprefixed{\starfs{}\prefixcon x\fs{f=\nodef}}{f}}{app1}{indir};
		\ltgnode[below=of indir]{absy}{λ};
		\addPos{absy}{~y};
		\rbedge{\recprefixed{\starfs{}\prefixcon x\fs{f=\nodef}}{\abs{y}{\app{\app{f}{x}}{y}}}}{indir}{absy};
		\addPrefix[node distance=2mm]{indir}{x};
		\ltgnode[below=of absy]{app2}{@};
		\rbedge{\recprefixed{\starfs{}\prefixcon x\fs{f=\nodef}\prefixcon y\fs{}}{\app{\app{f}{x}}{y}}}{absy}{app2};
		\addPrefix[node distance=2mm]{absy}{x};
		\node[below=of app2](bapp2){};
		\ltgnode[left=of bapp2]{s}{\S};
		\rbedge[left]{\recprefixed{\starfs{}\prefixcon x\fs{f=\nodef}\prefixcon y\fs{}}{\app{f}{x}}}{app2}{s};
		\addPrefix[node distance=2mm]{app2}{x\prefixcon y};
		\ltgnode[below=of s]{app3}{@};
		\rbedge{\recprefixed{\starfs{}\prefixcon x\fs{f=\nodef}}{\app{f}{x}}}{s}{app3};
		\addPrefix[node distance=2mm]{s}{x\prefixcon y};
		\draw[dotted,->,bend right=80](s) to (absy);
		\node[below=of app3](bapp3){};
		\ltgnode[right=of bapp3]{x}{\0};
		\addPrefix[node distance=3mm]{x}{x};
		\draw[dotted,->,bend right=90](x) to (absx);
		\rbedge{\recprefixed{\starfs{}\prefixcon x\fs{}}{x}}{app3}{x};
		\addPrefix[node distance=2mm]{app3}{x};
		\ltgnode[right=of bapp2]{y}{\0};
		\addPrefix[node distance=3mm]{y}{x\prefixcon y};
		\draw[dotted,->,bend right=80](y) to (absy);
		\rbedge{\recprefixed{\starfs{}\prefixcon x\fs{}\prefixcon y\fs{}}{y}}{app2}{y};
		\node[left=of app3,yshift=-1mm,xshift=3mm](lapp3){\recprefixed{\starfs{}\prefixcon x\fs{f=\nodef}}{f}};
		\draw[dotted,bend left](app3) to (lapp3);
		\draw[dotted,->,bend left=80](lapp3) to (indir);
	\end{tikzpicture}
	\caption{Example of the readback synthesis from a λ-term-graph.}
	\label{fig:ex:readback}
\end{figure}

\begin{definition}[readback of λ-term-graphs]\label{def:readback}
	Let \m{\altg ∈ \eag{\classltgs}} be an eager-scope λ-term-graph. The process of computing the readback of \altg (a \lambdaletrec-term) consists of the following five steps, starting on \altg:
	\begin{enumerate}[(Rb-1)]
		\item\label[Rb]{readback:step:absprefix} Determine the abstraction-prefix function \abspre{} for \altg by performing a traversal over \altg, and associate with every vertex \v of \altg its abstraction-prefix \abspre{\v}.
		\item\label[Rb]{readback:step:top} Add a new vertex on top with label \top, arity~1, and empty abstraction prefix. Let \m{\altg'} be the resulting term graph, and \m{\abspre{}'} its abstraction-prefix function.
		\item\label[Rb]{readback:step:indirection} Introduce indirection vertices to organise sharing: For every vertex \v of \m{\altg'} with two or more incoming non-variable-backlink edges, add an indirection vertex \m{\v_0}, redirect the incoming edges of \v that are not variable backlinks to \m{\v_0}, and direct the outgoing edge from \m{\v_0} to \v. In the resulting term graph \m{\altg''} only indirection vertices are shared ; their names will be used. Extend \m{\abspre{}'} to an abstraction-prefix function \m{\abspre{}''} for \m{\altg''} so that every indirection vertex \m{\v_0} gets the prefix of its successor \v.
		\item\label[Rb]{readback:step:spantree} Construct a spanning tree \m{T''} of \m{\altg''} by using a depth-first search (DFS) on \m{\altg''}. Note that all variable backlinks and \S-backlinks, and some of the recursive backlinks, of \m{\altg''}, are not contained in \m{T''}, because they are back-edges of the DFS.
		\item\label[Rb]{readback:step:synthesis} Apply the readback synthesis rules from \cref{fig:readback:rules} to \m{\altg''} with respect to \m{T''}. By this a complete labelling of the edges of \m{\altg''} by prefixed \lambdaletrec-terms is constructed. The rules define how the labelling for an incoming edge (on top) of a vertex \v is synthesised under the assumption of an already determined labelling of an outgoing edge of (and below) \v. If the outgoing edge in the rule does not carry a label, then the labelling of the incoming edge can happen regardless. Note that in these rules:
		\begin{itemize}
			\item full (dotted) lines indicate spanning tree (non-spanning tree) edges; broken lines match either kind;
			\item abstraction prefixes of vertices are crucial for the \0-vertex, and the second indirection vertex rule, where the prefixes in the synthesised terms are created; in the other rules the prefix of the assumed term is used; for indicating a correspondence between a term's and a vertex's abstraction prefix we denote by \vs{\vec{p}} the word of vertices occurring in a term's prefix \vec{p};
			\item the rule for indirection vertices with incoming non-spanning tree edge introduces an unfinished function binding \m{f = \nodef} for \f; unfinished bindings are to be filled in later;
			\item the @-vertex rule applies only if \m{\vs{\vec{p}_0} = \vs{\vec{p}_1}}; the operation \pcup used in the synthesised term's prefix builds the union per prefix variable of the pertaining bindings; if the prefixed terms \recprefixed{\vec{\apre}_0}{\L_0} and \recprefixed{\vec{\apre}_1}{\L_1} assumed in this rule contain a yet unfinished function binding \m{f = \nodef} and a completed binding \m{f = P} at a λ-variable \m{z}, the synthesised term contains the completed binding \m{f = P} at \m{z};
		\end{itemize}
	\end{enumerate}
	If this process yields the label \recprefixed{\starfs{}}{\L} at the root of \m{\altg''}, we call \L the \emph{readback of \altg}.
\end{definition}

\begin{figure}
	\begin{hspread}
		\transpicture{
			\ltgnode{root}{\top};
			\addPos{root}{\star};
			\addPrefix{root}{};
			\node[above=of root](rb){\recprefixed{\starfs{}}{\L}}; \draw[->](rb) to (root);
			\node[below=of root](body){\recprefixed{\starfs{}}{\L}}; \draw[->](root) to (body);
		}
		&
		\transpicture{
			\ltgnode{root}{\top};
			\node[right=of root]{\sideCondition{\m{B≠\emptyset}}};
			\addPos{root}{\star};
			\addPrefix{root}{};
			\node[above=of root](rb){\recprefixed{\starfs{}}{\letin{B}{\L}}}; \draw[->](rb) to (root);
			\node[below=of root](body){\recprefixed{\starfs{B}}{\L}}; \draw[->](root) to (body);
		}
		&
		\transpicture{
			\ltgnode{abs}{λ};
			\addPos{abs}{\v_n};
			\addPrefix{abs}{\vs{\vec{p}}};
			\node[above=of abs](rb){\recprefixed{\vec{p}}{\abs{v_n}{L}}}; \draw[->](rb) to (abs);
			\node[below=of abs](body){\recprefixed{\vec{p}\prefixcon \v_n\fs{}}{L}}; \draw[->](abs) to (body);
		}
		&
		\transpicture{
			\ltgnode{abs}{λ};
			\node[right=of abs]{\sideCondition{\m{B≠\emptyset}}};
			\addPos{abs}{\v_n};
			\addPrefix{abs}{\vs{\vec{p}}};
			\node[above=of abs](rb){\recprefixed{\vec{p}}{\abs{v_n}{\letin{B}{\L}}}}; \draw[->](rb) to (abs);
			\node[below=of abs](body){\recprefixed{\vec{p}\prefixcon \v_n\fs{B}}{\L}}; \draw[->](abs) to (body);
		}
		\\
		\transpicture{
			\ltgnode{app}{@};
			\addPrefix{app}{\vec\v};
			\node[above=of app](rb){\recprefixed{\vec{p}_0 \pcup \vec{p}_1}{\app{\L_0}{\L_1}}}; \draw[->](rb) to (app);
			\node[below=of app](args){};
			\node[node distance=0mm,left=of args] (l){\recprefixed{\vec{p}_0}{\L_0}}; \draw[->,dashed](app) to (l);
			\node[node distance=0mm,right=of args](r){\recprefixed{\vec{p}_1}{\L_1}}; \draw[->,dashed](app) to (r);
		}
		&
		\transpicture{
			\ltgnode{var}{\0};
			\addPrefix{var}{v_1 \prefixcon\dots\prefixcon v_n};
			\node[node distance=10mm,above=of var](rb){\recprefixed{\starfs{}\prefixcon \v_1\fs{}\mcdots\v_n\fs{}}{\v_n}};
			\draw[->](rb) to (var);
			\ltgnode[right=of var,yshift=5ex]{abs}{λ}; \draw[->,dotted](var) -| (abs);
			\addPos{abs}{\v_n};
		}
		&
		\transpicture{
			\ltgnode{s}{\S};
			\addPrefix{s}{\vs{\vec{p}} \prefixcon \v_n};
			\node[above=of s](rb){\recprefixed{\vec{p}\prefixcon \v_n\fs{}}{\L}}; \draw[->](rb) to (s);
			\ltgnode[node distance=6mm,right=of s,yshift=4ex]{abs}{λ}; \draw[->,dotted](s) -| (abs);
			\addPos{abs}{\v_n};
			\node[below=of s](succ){\recprefixed{\vec{p}}\L}; \draw[->,dashed] (s) to (succ);
		}
		\\
		\transpicture{
			\ltgnode{bh}{\bh};
			\addPrefix{bh}{};
			\node[above=of bh](rb){\recprefixed{\starfs{f=f}}{f}}; \draw[->](rb) to (bh);
		}
		&
		\transpicture{
			\ltgnode{I}{\indir};
			\addPrefix{I}{\vs{\vec{p}}\prefixcon \v_{n+1}};
			\addPos{I}{f};
			\node[above=of I](rb){\recprefixed{\vec{p}\prefixcon \v_{n+1}\fs{B,f=\L}}{f}}; \draw[->](rb) to (I);
			\node[below=of I](succ){\recprefixed{\vec{p}\prefixcon \v_{n+1}\fs{B,(f=\nodef)}}{\L}}; \draw[->] (I) to (succ);
		}
		&
		\transpicture{
			\ltgnode{I}{\indir};
			\addPrefix{I}{v_1 \prefixcon\dots\prefixcon v_n \prefixcon \v_{n+1}};
			\addPos{I}{f};
			\node[above=of I](rb){\recprefixed{\starfs{}\prefixcon \v_1\fs{}\mcdots\v_n\fs{}\prefixcon \v_{n+1}\fs{f=\nodef}}{f}};
			\draw[->,dotted](rb) to (I);
			\node[below=of I](succ){\m{\phantom{\recprefixed{\vec{p}\prefixcon \v_{n+1}\fs{B}}{\L}}}}; \draw[->] (I) to (succ);
		}
	\end{hspread}
	\caption{Readback synthesis rules for computing a \lambdaletrec-term from a λ-term-graph. For explanations, see \cref{def:readback}~\cref{readback:step:synthesis}.}
	\label{fig:readback:rules}
\end{figure}


\begin{para}[readback is deterministic]
	For every edge \e the synthesis rule to be applied is uniquely determined by the label of the target vertex \v of \e together with side conditions for \m{λ}, \top, and \indir. In the last case it depends on whether \m{e} is a spanning-tree edge or not.
\end{para}

\begin{proposition}[readback function]\label{prop:readback}
	For every λ-term-graph \altg the process from \cref{def:readback} produces a complete edge labelling of the (modified) term graph, with label \recprefixed{\starfs{}}{\L} for the root edge, where \L is a \lambdaletrec-term. Hence it yields \L as the readback of \altg. Thus \cref{def:readback} defines a function \m{\readback{} : \classltgs → \Ter{\lambdaletreccal}}, the \emph{readback function}.
\end{proposition}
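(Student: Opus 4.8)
The statement to prove is \cref{prop:readback}: that the five-step readback process from \cref{def:readback}, applied to an arbitrary eager-scope λ-term-graph $\altg \in \eag{\classltgs}$, always terminates and produces a complete edge labelling in which the root edge carries a label of the form $\recprefixed{\starfs{}}{\L}$ with $\L$ a genuine \lambdaletrec-term, so that $\readback{}$ is a well-defined total function from $\classltgs$ to $\Ter{\lambdaletreccal}$. The plan is to treat each of the steps \cref{readback:step:absprefix}--\cref{readback:step:synthesis} in turn, verifying that each is well-defined and produces an object of the expected shape, and then to concentrate on the genuinely delicate step, \cref{readback:step:synthesis}, the edge labelling by the synthesis rules in \cref{fig:readback:rules}.

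First I would dispense with the easy preliminaries. Step \cref{readback:step:absprefix} is well-defined by \cref{lem:ltg}~\cref{lem:ltg:vi}: a λ-term-graph has precisely one correct abstraction-prefix function, computable by a traversal. Step \cref{readback:step:top} trivially yields a term graph $\altg'$; the new $\top$-vertex has empty prefix, so correctness of $\abspre{}'$ is inherited from $\abspre{}$ together with the vacuous prefix condition at the root. Step \cref{readback:step:indirection} inserts one indirection vertex above every vertex with at least two incoming non-variable-backlink edges; this is a finite, deterministic modification (λ-term-graphs need not be finite, but the construction is still well-defined pointwise), after which the only shared vertices are indirection vertices, and the extension $\abspre{}''$ is forced by giving each indirection vertex the prefix of its unique successor. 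Step \cref{readback:step:spantree} is a standard depth-first spanning tree construction; the key observation to record here is that \emph{every} variable backlink and \emph{every} $\S$-backlink of $\altg''$ is a DFS back-edge (because, by \cref{lem:ltg}~\cref{lem:ltg:i}, the target of such a backlink lies on every access path of its source, hence is a DFS-ancestor), together with some recursive backlinks; this is exactly what the synthesis rules in \cref{fig:readback:rules} presuppose when they distinguish full lines (spanning-tree edges) from dotted lines (non-spanning-tree edges).

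The heart of the proof is step \cref{readback:step:synthesis}. Here I would argue by a well-founded induction over the spanning tree $T''$, processed bottom-up: I would show that for each vertex $v$, once the outgoing spanning-tree edges of $v$ carry labels of the correct form (a prefixed \lambdaletrec-term whose abstraction prefix corresponds, via $\vs{\cdot}$, to the vertex's abstraction prefix), the applicable rule --- which is uniquely determined by $\lab{v}$ together with the side conditions for $\top$, $λ$, $\indir$ --- synthesises a label of the correct form for the incoming (spanning-tree) edge of $v$, and also for incoming non-spanning-tree edges where relevant. The invariant to maintain along the way has several conjuncts: (a) the abstraction prefix of the synthesised term at an edge into $v$ matches $\abspre{v}$ read through $\vs{\cdot}$; (b) every unfinished binding $f = \nodef$ introduced at an $\indir$-vertex with an incoming non-spanning-tree edge is eventually completed, exactly once, when the matching spanning-tree edge into that same indirection vertex is processed and the @-rule's $\pcup$ operation merges the completed binding $f = P$ into all contexts where $f = \nodef$ occurs; (c) the eager-scope hypothesis on $\altg$ guarantees that whenever a binding group $B$ must be discharged at a $λ$-vertex or at the $\top$-vertex (the two rules with side condition $B \neq \emptyset$), all free occurrences of the bound function variables actually lie below that vertex, so that the emitted $\letin{B}{\cdot}$ is well-scoped and the resulting term genuinely lies in $\Ter{\lambdaletreccal}$; and (d) the $\S$-rule and $\0$-rule correctly shorten the abstraction prefix by one, consistently with the stack behaviour from \cref{lem:ltg}~\cref{lem:ltg:iv}.

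I expect the main obstacle to be establishing conjunct (b) together with the compatibility clause in the @-rule: one must show that the $\pcup$-union of prefix-indexed binding environments never produces a conflict (two different completed bindings for the same $f$ at the same λ-variable), that every $\nodef$-placeholder is matched by exactly one completion, and that the placement of the \Let --- at the rightmost vertex of the indirection vertex's abstraction prefix --- always encloses all the relevant occurrences. This is where the abstraction-prefix correctness conditions \cref{ltg:ap-function:01} and \cref{ltg:ap-function:S1} and the eager-scope property do the real work, and where the argument has to be carried out carefully rather than by appeal to earlier results. A secondary, more routine obstacle is bookkeeping termination of the synthesis: since $\altg''$ may be infinite, "bottom-up along $T''$" must be phrased as a well-founded recursion producing, for each edge, a finite label --- this follows because each synthesis rule produces a term of bounded syntactic contribution per tree edge and each vertex's label depends only on its finitely many children, so the label of any given edge is determined after finitely many rule applications below it. Once all this is in place, the root edge necessarily carries $\recprefixed{\starfs{}}{\L}$ (the $\top$-rule is the only one that can label the root edge, and it produces empty prefix), $\L$ is a \lambdaletrec-term by the invariant, and determinism of rule selection makes $\readback{}$ a function, completing the proof of \cref{prop:readback}.
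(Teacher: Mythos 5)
The paper states \cref{prop:readback} without any proof, so there is no official argument to compare yours against; I can only judge your outline on its own merits. Your decomposition is the natural one: steps \cref{readback:step:absprefix}--\cref{readback:step:spantree} are indeed routine (uniqueness of the abstraction-prefix function is \cref{lem:ltg}~\cref{lem:ltg:vi}, and your observation that variable and \S-backlinks are necessarily DFS back-edges because their targets lie on every access path of their sources is exactly the right justification for the spanning-tree step), and you correctly locate the crux in step \cref{readback:step:synthesis}: every placeholder \m{f = \nodef} introduced at a non-tree edge into an indirection vertex must be matched by exactly one completed binding (there is exactly one tree edge into each indirection vertex, so no conflict can arise in the \pcup-merge), and the \Let must be placed at a vertex dominating all use sites of \m{f} --- which follows because the rightmost vertex of the indirection vertex's abstraction prefix lies, by the prefix correctness conditions, on every access path of each of its predecessors. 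That, worked out with your invariants (a)--(d), is the substance of a correct proof.

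The one genuine flaw is your treatment of possibly infinite graphs. You claim the bottom-up synthesis can be phrased as a well-founded recursion even when \m{\altg''} is infinite because ``the label of any given edge is determined after finitely many rule applications below it''; this is false. On an infinite λ-term-graph the spanning tree contains an infinite descending path, the recursion has no base beneath the edges on that path, and no finite label is ever determined for the root edge --- indeed no finite \lambdaletrec-term could be the readback of such a graph. The proposition is only meaningful for finite λ-term-graphs, which is the implicit setting here (the graphs in question arise as translations \graphsemC{\classltgs}{\L} of finite \lambdaletrec-terms and are therefore finite); you should impose this restriction explicitly rather than argue around it. With finiteness assumed the spanning tree is finite, the bottom-up induction is well-founded, and the rest of your argument goes through.
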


\begin{example}\label{ex:readback}
	See \cref{fig:ex:readback} for the illustration of the synthesis of the readback from an example λ-term-graph. Full lines depict spanning tree edges, dotted lines depict non-spanning-tree edges.
\end{example}

The following theorem validates \cref{methods:properties:readback}.

\begin{theorem}\label{thm:readback}
	For all \m{\altg ∈ \eag{\classltgs}} it holds: \m{((\graphsemC{\classltgs}{} ∘ {\readback{})})(\altg) = \graphsemC{\classltgs}{\readback{\altg}} \iso \atg}, i.e., \readback{} is a right-inverse of \graphsemC{\classltgs}{}, and \graphsemC{\classltgs}{} a left-inverse of \readback{}, up to \iso. Hence \readback{} is injective, and \graphsemC{\classltgs}{} is surjective, thus \m{\image{\graphsemC{\classltgs}{}} = \eag{\classltgs}}.
\end{theorem}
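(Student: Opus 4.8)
\textbf{Proof proposal for \cref{thm:readback}.}

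The plan is to prove the single central claim $\graphsemC{\classltgs}{\readback{\altg}} \iso \altg$ for every $\altg \in \eag{\classltgs}$, since all the other assertions of the theorem (injectivity of $\readback{}$, surjectivity of $\graphsemC{\classltgs}{}$, and $\image{\graphsemC{\classltgs}{}} = \eag{\classltgs}$) are immediate formal consequences: if $\graphsemC{\classltgs}{} \circ \readback{}$ is the identity on $\eag{\classltgs}$ up to $\iso$, then $\readback{}$ has a left inverse and hence is injective, $\graphsemC{\classltgs}{}$ has a right inverse and hence is surjective onto $\eag{\classltgs}$; and $\image{\graphsemC{\classltgs}{}} \subseteq \eag{\classltgs}$ already holds by \cref{prop:trans_eager} together with \cref{prop:graphsem:ltgs:min:graphsem:ltgs:S}. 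So everything reduces to showing that translating-back-then-translating-forward reproduces the original $\altg$.

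The approach I would take is to exploit the tight structural correspondence between the readback synthesis rules in \cref{fig:readback:rules} and the translation rules $\rulestranslambdaletreccaltoltgs$ (the delimiter-producing variant of $\rulestranslambdaletreccaltolhotgs$ from \cref{fig:def:graphsem:lhotgs} and \cref{fig:def:graphsem:ltgs}). Fix $\altg \in \eag{\classltgs}$, and let $\altg''$, $T''$, and the edge-labelling be as produced by \cref{def:readback}. The key observation is that the readback labelling is the translation $\rulestranslambdaletreccaltoltgs$ run \emph{in reverse}: each readback rule, read from the labelled edge below a vertex $\v$ to the labelled edge above $\v$, is precisely an inverse of the corresponding $\rulestranslambdaletreccaltoltgs$-rule read from the box above to the boxes below. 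Concretely: the $\top$-rule corresponds to the initial translation box $\recprefixed{\starfs{}}{\L}$; the two $λ$-rules and the two $\top$-with-$B$ rules correspond to rule $λ$ combined with placement of a $\Let$ at the enclosing abstraction; the $@$-rule (with the $\pcup$-merge of binding groups and the side condition $\vs{\vec{p}_0} = \vs{\vec{p}_1}$) inverts rule $@$; the $\0$-rule inverts rule $\0$; the $\S$-rule inverts the delimiter-producing $\S$-rule of \cref{fig:def:graphsem:ltgs}; the $\bh$-rule inverts the end-of-translation replacement of indirection loops by $\S$-chains-plus-black-hole; and the two $\indir$-rules correspond to the indirection-vertex bookkeeping of the $\Let$-rule (one producing the unfinished binding $f = \nodef$ at a non-spanning-tree back-edge, the other completing it). I would make this precise by an induction on $T''$ from the leaves upward: for every vertex $\v$ of $\altg''$ with abstraction-prefix $\abspre{\v} = \v_1 \dots \v_n$, if the readback synthesis assigns the label $\recprefixed{\vec{p}}{\L_{\v}}$ to the incoming spanning-tree edge of $\v$, then the sub-run of $\rulestranslambdaletreccaltoltgs$ on the translation box $\recprefixed{\vec{p}}{\L_{\v}}$ reconstructs exactly the sub-term-graph of $\altg$ rooted at $\v$ (with the correct abstraction prefix, which is available because of \cref{lem:ltg}~\cref{lem:ltg:vi}, and with the correct $\S$-sharing degree because $\altg$ is eager-scope and the $\Let$-placement-at-the-enclosing-abstraction convention forces maximal prefixes in the sense of \cref{def:graphsem:ltgs}). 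Taking $\v$ to be the root gives $\graphsemC{\classltgs}{\readback{\altg}} \iso \altg$.

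The main obstacle I expect is the \emph{global bookkeeping of the function bindings}, i.e.\ matching the side-by-side $\pcup$-merging of binding groups in the $@$-rule and the "place the $\Let$ as high as possible'' convention of the readback against the freedom-eliminating "eager-scope $\rulestranslambdaletreccaltoltgs$-generated with maximal prefixes'' choice in \cref{def:graphsem:ltgs}. In particular I must check that: (i) the spanning tree $T''$ constructed by DFS severs cycles exactly at variable backlinks, $\S$-backlinks, and the "extra'' recursive backlinks, so that the unfinished bindings $f = \nodef$ introduced at non-spanning-tree edges are each completed exactly once, at the right enclosing abstraction, by the rule for indirection vertices with an incoming spanning-tree edge; (ii) the abstraction prefix at the point where a completed binding $f = \L$ is "deposited'' (the rightmost vertex of $\abspre{}$ of the shared indirection vertex) is precisely the prefix at which the $\Let$-rule of $\rulestranslambdaletreccaltoltgs$ would re-extract that binding with maximal prefix length, so that applying $\graphsemC{\classltgs}{}$ does not re-float the binding to a different position; and (iii) the $\S$-sharing produced by $\readback{}$ (an $\S$-vertex shared iff the responsible function binding lies within the closed scope, as in \cref{fig:let-structure}) is exactly the $\S$-sharing that eager-scope-with-maximal-prefixes translation produces, so that the two term graphs are isomorphic and not merely $\S$-bisimilar. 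Each of these is a finite case analysis over the rule shapes, but (ii) is the delicate one: it is essentially the statement that the particular "maximal prefix'' discipline of $\graphsemC{\classltgs}{}$ was designed to be the unique inverse of the "$\Let$ as high as possible'' discipline of $\readback{}$, and proving it amounts to verifying that the required-variable analysis (\cref{short-enough-prefixes}) governing both processes yields matching prefix lengths at every $\Let$-node. I would isolate this as a lemma ("the readback labelling of $\altg''$ is an admissible eager-scope-maximal-prefix $\rulestranslambdaletreccaltoltgs$-derivation read in reverse'') and then the theorem follows by determinism of $\rulestranslambdaletreccaltoltgs$ under the maximal-prefix discipline (\cref{def:graphsem:ltgs}).
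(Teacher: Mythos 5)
Your proposal is correct and follows essentially the same route as the paper: establish the single claim $\graphsemC{\classltgs}{\readback{\altg}} \iso \altg$ by a stepwise correspondence in which translation steps reverse readback steps, proved by induction along the tree (the paper inducts on access paths) with an invariant tying the localised eager-scope property to the placement of bindings. The one device worth borrowing from the paper's proof is that, rather than matching the readback directly against the maximal-prefix discipline of $\graphsemC{\classltgs}{}$, it introduces an auxiliary \emph{locally-operating} variant of the translation (rules \Let{} and \f{} replaced by the versions in \cref{fig:trans-local}, which store entire function bindings in the prefix and attach them to the rightmost prefix variable) and then observes that this variant coincides with $\graphsemC{\classltgs}{}$ on terms in the image of $\readback{}$ — this is exactly what discharges your delicate point~(ii).
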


\begin{figure}
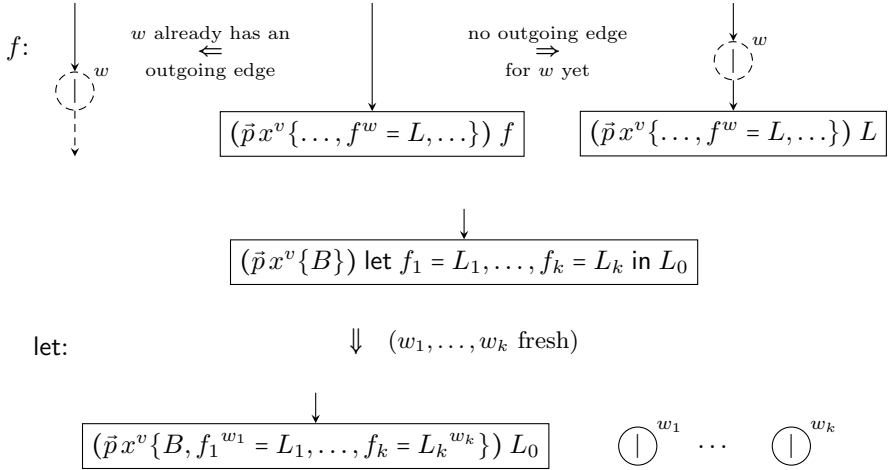

	\begin{hspread}
		\vcentered{\transname{\f}}
		&
		\vcentered{\transpicture{
			\ltgnode[densely dashed]{indir}{\indir};
			\addPos{indir}{w};
			\draw[<-](indir.north) to +(0,9mm);
			\draw[->,densely dashed](indir.south) to +(0,-6mm);
		}}
		&
		\hspace{-3mm}
		\vcentered{\m{\overset{\text{\w already has an}}{\underset{\text{outgoing edge}}{⇐}}}}
		&
		\hspace{-15mm}
		\vcentered{\transpicture{
			\node[draw,shape=transbox](lhs){\recprefixed{\vec{p}\prefixcon x^\v\fs{\dots,f^{\w}=\L,\dots}}{f}};
			\draw[<-](lhs.north) to +(0,14.2mm);
		}}
		&
		\hspace{-12mm}
		\vcentered{\m{\overset{\text{no outgoing edge}}{\underset{\text{for \w yet}}{⇒}}}}
		&
		\hspace{-12mm}
		\vcentered{\transpicture{
			\ltgnode[densely dashed]{indir}{\indir};
			\addPos{indir}{\w};
			\draw[<-](indir.north) to +(0,5mm);
			\node[node distance=4mm,draw,shape=transbox,below=of indir](rhs){\recprefixed{\vec{p}\prefixcon x^\v\fs{\dots,f^{\w}=\L,\dots}}{\L}};
			\draw[->](indir) to (rhs.north);
		}}
	\end{hspread}
	\\[3ex]
	\translationvif{\Let}{
		\node[draw,shape=transbox](lhs){\recprefixed{\vec{p}\prefixcon x^\v\fs{B}}{\letin{f_1=\L_1,\dots,f_k=\L_k}{\L_0}}};
		\draw[<-](lhs.north) to +(0,4mm);
	}
	{\m{\w_1,\dots,\w_k} fresh}
	{
		\node[draw,shape=transbox](in){\recprefixed{\vec{p}\prefixcon x^\v\fs{B,{f_1}^{\w_1}=\L_1,\dots,f_k={\L_k}^{\w_k}}}{\L_0}};
		\draw[<-](in.north) to +(0,4mm);
		\ltgnode[node distance=9mm, right=of in]{bg1}{\indir};
		\addPos{bg1}{\w_1};
		\node[right=of bg1](bgmid){\dots};
		\ltgnode[right=of bgmid]{bgk}{\indir};
		\addPos{bgk}{\w_k};
	}
	\caption{Augmented version of two of the translation rules from \cref{fig:def:graphsem:lhotgs} for an alternative definition of the λ-term-graph semantics of \lambdaletrec-terms.}
	\label{fig:trans-local}
\end{figure}

\begin{figure}
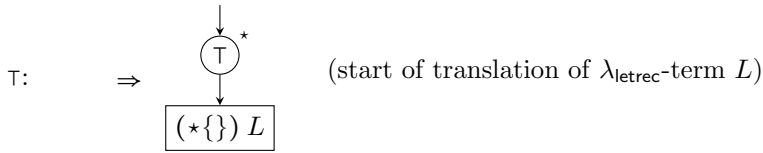

	\proofsystem{
		\begin{bprooftree}
			\emptyAxiom
			\infLabel{\rec}
			\unaryInf{\recprefixed{\vec{p}\prefixcon x^\v\fs{\dots,f^{\w}=\L,\dots}}{f}}
		\end{bprooftree}
		\\
		\begin{bprooftree}
			\axiom{\recprefixed{\vec{p}\prefixcon x^\v\fs{\dots,f^{\w}=\L,\dots}}{\L}}
			\infLabel{\rec~~\sideCondition{\w is fresh}}
			\unaryInf{\recprefixed{\vec{p}\prefixcon x^\v\fs{\dots,f=\L,\dots}}{f}}
		\end{bprooftree}
		\\
		\begin{bprooftree}
			\axiom{\recprefixed{\vec{p} \prefixcon x^\v\fs{B,f_1=\L_1,\dots,f_k=\L_k}}{\L_0}}
			\infLabel{\Let}
			\unaryInf{\recprefixed{\vec{p} \prefixcon x^\v\fs{B}}{\letin{f_1=\L_1,\dots,f_k=\L_k}{\L_0}}}
		\end{bprooftree}
	}
	\caption{Formulation of the local translation rules in \cref{fig:trans-local} in the form of inference rules.}
	\label{fig:trans-local-proof-system}
\end{figure}

\begin{proof}[Proof Idea]
	Graph translation steps can be linked with corresponding readback steps in order to establish that the former roughly reverse the latter. Roughly, because e.g.\ reversing a λ-readback step necessitates both a λ- and a \Let-translation step. Therefore, this correspondence holds only for a modification of the translation rules \rulestranslambdaletreccaltoltgs from \cref{fig:def:graphsem:lhotgs}, shown in \cref{fig:def:graphsem:ltgs} where the rules \Let (for \Let-expressions) and \f (for occurrences of function variables) are replaced by the locally-operating versions in \cref{fig:trans-local}. Moreover, the translation rules need to store entire function bindings in the prefix (not just function variables) as the readback rules do. To the augmented rules we also add a initiating rule
	\begin{center}
	\translation
	{\top}
	{\node(empty){\rule{2ex}{0pt}};}
	{
  	  \ltgnode{root}{\top};
  	  \addPos{root}{\star};
  	  \node[below=of root,shape=transbox,draw](body){\recprefixed{\starfs{}}{\L}};
  	  \draw[->](root) to (body.north);
  	  \draw[<-](root.north) to +(0mm,4mm);
	}
	\hspace*{3ex}
	(start of translation of \lambdaletrec-term \L)
	\end{center}
	for creating a top vertex. Now the translation of a \Let-expression does no longer directly spawn translations of the bindings, but the bindings will only be translated later once their calls have been reached during the translation process of the body, or of the definitions of other already translated bindings. Note that in the \Let-rule in \cref{fig:trans-local} function bindings are associated with the rightmost variable in the prefix, which corresponds to choosing \m{l_i = n} in the \Let-rule in \cref{fig:def:graphsem:lhotgs}. While such a stipulation does not guarantee the eager-scope translation of every term, it actually does so for all \lambdaletrec-terms that are obtained by the readback (on these terms the such defined translation coincides with \graphsemC{\classltgs}{} from \cref{def:graphsem:lhotgs}).
	\par Please find in \cref{fig:proof:thm:readback} a graphical argument for the stepwise reversal of readback steps through (augmented) translation steps. This establishes that graph translation steps reverse readback steps, and is the crucial step in the proof of the theorem. The proof uses induction on access paths, and an invariant that relates the eager-scope property localised for a vertex \v with the applicability of the \S-rule to the readback term synthesised at \v. Note that in most cases the sets of function bindings on the left-hand side (\m{\bar{B}_i}) and the right-hand side (\m{B_i}) differ, due to the freedoms of the function definition's positions in the translated \lambdaletrec-term (see \cref{readback-inverse}).
	\par Note that \m{\cupdot} denotes a join of two sets of function definitions where a defined function always overrules an undefined function, such that for instance the following holds: \m{\set{f=x, g=?} \cupdot \set{f=?, g=?} = \set{f=x, g=?}}.
\end{proof}

\begin{figure}
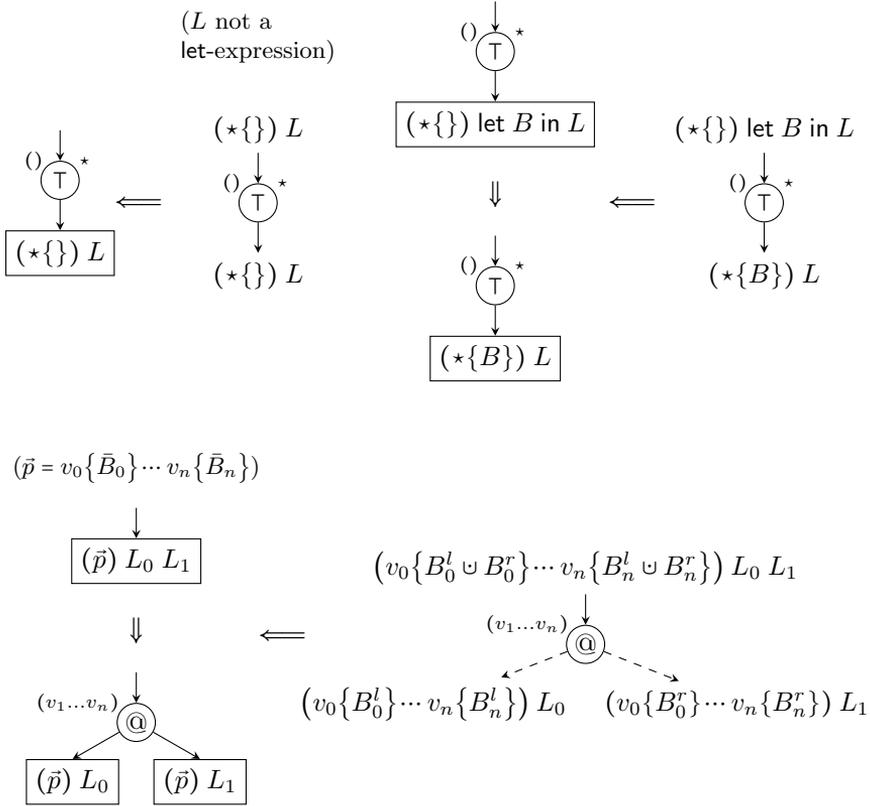

	\begin{hspread}
		\fusionstepsimple{
			\ltgnode{root}{\top};
			\addPrefix{root}{};
			\addPos{root}{\star};
			\node[below=of root,shape=transbox,draw](body){\recprefixed{\starfs{}}{L}};
			\draw[->](root) to (body.north);
			\draw[<-](root.north) to +(0mm,4mm);
		}{
			\ltgnode{root}{\top};
			\addPos{root}{\star};
			\addPrefix{root}{};
			\node[above=of root](rb){\recprefixed{\starfs{}}{L}}; \draw[->](rb) to (root);
			\node[below=of root](body){\recprefixed{\starfs{}}{L}}; \draw[->](root) to (body);
			\node[above=of rb](text){\mlSideCondition{\L not a\\\Let-expression}};
			\node[below=of body](antitext){\phantom{\mlSideCondition{\L not a\\\Let-expression}}};
		}
		&
		\fusionstep{
			\ltgnode{root}{\top};
			\addPrefix{root}{};
			\addPos{root}{\star};
			\draw[<-](root.north) to +(0mm,4mm);
			\node[below=of root,shape=transbox,draw](body){\recprefixed{\starfs{}}{\letin{B}{L}}};
			\draw[->](root.south) to (body.north);
		}{
			\ltgnode{root}{\top};
			\addPrefix{root}{};
			\addPos{root}{\star};
			\draw[<-](root.north) to +(0mm,4mm);
			\node[below=of root,shape=transbox,draw](body){\recprefixed{\starfs{B}}{L}};
			\draw[->](root.south) to (body.north);
		}{
			\ltgnode{root}{\top};
			\addPrefix{root}{};
			\addPos{root}{\star};
			\node[above=of root](rb){\recprefixed{\starfs{}}{\letin{B}{L}}}; \draw[->](rb) to (root);
			\node[below=of root](body){\recprefixed{\starfs{B}}{L}}; \draw[->](root) to (body);
		}
		\\
		\fusionstepclap{
			\node[draw,shape=transbox](lhs){\recprefixed{\vec{p}}{\app{\L_0}{\L_1}}};
			\draw[<-](lhs.north) to +(0,4mm);
			\node[node distance=6mm, above=of lhs]{\sideCondition{\m{\vec{p} = \v_0\fs{\bar{B}_0}\mcdots\v_n\fs{\bar{B}_n}}}};
		}{
			\ltgnode{app}{@}; \node[node distance=4mm,below=of app](middle){};
			\node[node distance=1mm,draw,shape=transbox,left=of middle](l){\recprefixed{\vec{p}}{\L_0}};
			\node[node distance=1mm,draw,shape=transbox,right=of middle](r){\recprefixed{\vec{p}}{\L_1}};
			\addPrefix{app}{\v_1 \dots \v_n};
			\draw[->](app) to (l.north); \draw[->](app) to (r.north); \draw[<-](app.north) to +(0,4mm);
		}{
			\ltgnode{app}{@};
			\addPrefix{app}{\v_1 \dots \v_n};
			\node[above=of app](rb){\recprefixed{\v_0\fs{B^l_0 \cupdot B^r_0}\mcdots\v_n\fs{B^l_n \cupdot B^r_n}}{\app{\L_0}{\L_1}}}; \draw[->](rb) to (app);
			\node[below=of app](args){};
			\node[node distance=0mm,left=of args] (l){\recprefixed{\v_0\fs{B^l_0}\mcdots\v_n\fs{B^l_n}}{\L_0}}; \draw[->,dashed](app) to (l);
			\node[node distance=0mm,right=of args](r){\recprefixed{\v_0\fs{B^r_0}\mcdots\v_n\fs{B^r_n}}{\L_1}}; \draw[->,dashed](app) to (r);
		}
	\end{hspread}
	\caption{Reversal of readback steps through translation steps}
	\label{fig:proof:thm:readback}
\end{figure}

\begin{figure}
	\ContinuedFloat
	\begin{hspread}
		\fusionstep
		{
			\node[draw,shape=transbox](lhs){\recprefixed{v_0\fs{\bar{B}_0}\mcdots v_{n-1}\fs{\bar{B}_{n-1}}}{\abs{v_n}{L}}};
			\draw[<-](lhs.north) to +(0mm,4mm);
		}{
			\node[draw,shape=transbox](body){\recprefixed{v_0\fs{\bar{B}_0}\mcdots v_{n-1}\fs{\bar{B}_{n-1}}\prefixcon v_n\fs{}}{L}};
			\ltgnode[node distance=3mm,above=of body.north]{abs}{λ};
			\addPos{abs}{v_n};
			\addPrefix{abs}{v_1 \dots v_{n-1}};
			\draw[<-](abs.north) to +(0mm,4mm);
			\draw[->](abs) to (body.north);
		}{
			\ltgnode{abs}{λ};
			\addPos{abs}{v_n};
			\addPrefix{abs}{v_1 \dots v_{n-1}};
			\node[above=of abs](rb){\recprefixed{v_0\fs{B_0}\mcdots v_{n-1}\fs{B_{n-1}}}{\abs{v_n}{L}}}; \draw[->](rb) to (abs);
			\node[below=of abs](body){\recprefixed{v_0\fs{B_0}\mcdots v_{n-1}\fs{B_{n-1}} \prefixcon v_n\fs{}}{L}}; \draw[->](abs) to (body);
		}
		\\
		\\
		\fusiontwostep{
			\node[draw,shape=transbox](lhs){\recprefixed{\v_0\fs{\bar{B}_0}\mcdots\v_{n-1}\fs{\bar{B}_{n-1}}}{\abs{v_n}{\letin{\bar{B}}{\L}}}};
			\draw[<-](lhs.north) to +(0mm,4mm);
		}{
			\node[draw,shape=transbox](rhs){\recprefixed{\v_0\fs{\bar{B}_0}\mcdots\v_{n-1}\fs{\bar{B}_{n-1}}\prefixcon\v_n\fs{}}{\letin{\bar{B}}{\L}}};
			\ltgnode[node distance=3mm,above=of rhs.north]{abs}{λ}; \addPos{abs}{\v_n};
			\addPrefix{abs}{\v_1 \dots\,\v_{n-1}};
			\draw[<-](abs.north) to +(0mm,4mm);
			\draw[->](abs) to (rhs.north);
		}{
			\node[draw,shape=transbox](rhs){\recprefixed{\v_0\fs{\bar{B}_0}\mcdots\v_{n-1}\fs{\bar{B}_{n-1}}\prefixcon\v_n\fs{\bar{B}}}{\L}};
			\ltgnode[node distance=3mm,above=of rhs.north]{abs}{λ}; \addPos{abs}{\v_n};
			\addPrefix{abs}{\v_1 \dots\,\v_{n-1}};
			\draw[<-](abs.north) to +(0mm,4mm);
			\draw[->](abs) to (rhs.north);
		}{
			\ltgnode{abs}{λ};
			\addPos{abs}{\v_n};
			\addPrefix{abs}{\v_1 \dots\,\v_{n-1}};
			\node[above=of abs](rb){\recprefixed{\vec{p}}{\abs{v_n}{\letin{B}{\L}}}}; \draw[->](rb) to (abs);
			\node[below=of abs](body){\recprefixed{\vec{p} \prefixcon \v_n\fs{B}}{\L}}; \draw[->](abs) to (body);
			\node[below=of body]{\sideCondition{\m{\vec{p} = \v_0\fs{B_0}\mcdots\v_{n-1}\fs{B_{n-1}}}}};
		}
	\end{hspread}
	\caption[]{(continued) Reversal of readback steps through translation steps}
\end{figure}

\begin{figure}
	\ContinuedFloat
	\begin{hspread}
		\fusionstep
		{
			\node[draw,shape=transbox](lhs){\recprefixed{\starfs{B_{0}}\:v_{1}\fs{B_{1}}\mcdots v_{n}\fs{B_{n}}}{v_{n}}};
			\draw[<-](lhs.north) to +(0,4mm); 
			} 
		{
			\ltgnode{var}{\0}; \draw[<-](var.north) to +(0,5mm);
			\addPrefix{var}{v_{1}\mcdots v_{n}};
			\node[node distance=4mm,right=of var](left){};
			\ltgnode[node distance=3mm,above=of left,densely dashed]{abs}{λ}; \addPos{abs}{v_{n}};
			\draw[->](var) -| (abs);
			}
		{
			\ltgnode{var}{\0};
			\addPrefix{var}{v_1 \dots v_n};
			\node[node distance=10mm,above=of var](rb){\m{\recprefixed{\starfs{}\prefixcon v_{1}\fs{}\mcdots v_{n}\fs{}}{v_{n}}}}; 
			\draw[->](rb) to (var);
			\ltgnode[right=of var,yshift=5ex]{abs}{λ}; \draw[->,dotted](var) -| (abs);
			\addPos{abs}{v_{n}};
		} 
		\\
		\\
		\fusionthreestep{
			\node[draw,shape=transbox](rhs){\m{\recprefixed{\starfs{\dots,f=f,\dots}}{f}}};
			\draw[<-](rhs.north) to +(0,4mm); 
		}{
			\ltgnode{indir}{\indir};
			\addPos{indir}{f};
			\addPrefix{indir}{};
			\draw[<-](indir.north) to +(0,4mm);
			\node[node distance=3mm,draw,shape=transbox,below=of indir](rhs){\m{\recprefixed{\starfs{\dots,f=f,\dots}}{f}}};
			\draw[->](indir) to (rhs.north);
		}{
			\ltgnode{indir}{\indir};
			\draw[->,bend right=120, looseness=8] (indir.south) to (indir.north);
			\phantom{\draw[->,bend left=120, looseness=8] (indir.south) to (indir.north);}
			\addPos{indir}{f};
			\addPrefix{indir}{};
			\draw[<-](indir.north) to +(0,4mm);
		}{
			\ltgnode{bh}{\bh};
			\addPrefix{bh}{};
			\addPrefixswphantom{bh}{}; 
			\draw[<-](bh.north) to +(0,4mm); 
		}{
			\ltgnode{bh}{\bh};
			\addPrefix{bh}{};
			\node[above=of bh](rb){\m{\recprefixed{\starfs{f=f}}{f}}}; \draw[->](rb) to (bh);
			\node[below=of bh](bhb){\phantom{\m{\recprefixed{\starfs{f=f}}{f}}}}; \draw[color=white](bh) to (bhb);
		}
	\end{hspread}
	\caption[]{(continued) Reversal of readback steps through translation steps}
\end{figure}

\begin{figure}
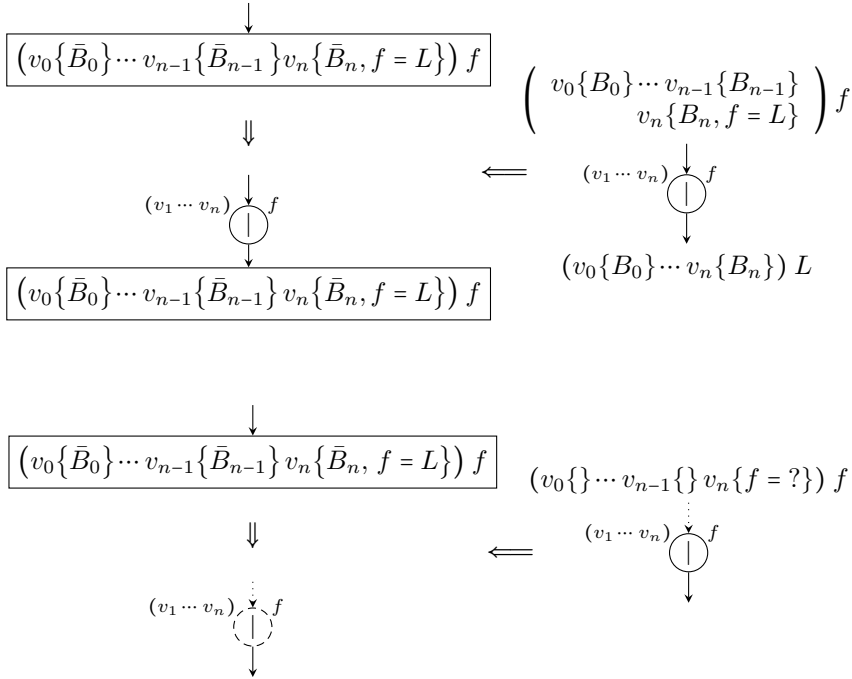

	\ContinuedFloat
	\begin{hspread}
		\fusionstepclap{
			\node[draw,shape=transbox](rhs){\m{\recprefixed{v_0\fs{\bar{B}_{0}}\mcdots v_{n-1}\fs{\bar{B}_{n-1} \prefixcon }v_n\fs{\bar{B}_{n},f=L}}{f}}};
			\draw[<-](rhs.north) to +(0,4mm);
		}{
			\ltgnode{indir}{\indir};
			\addPos{indir}{f};
			\addPrefix{indir}{v_{1}\mcdots v_{n}};
			\draw[<-](indir.north) to +(0,4mm);
			\node[node distance=3mm,draw,shape=transbox,below=of indir](rhs){\m{\recprefixed{v_0\fs{\bar{B}_{0}}\mcdots v_{n-1}\fs{\bar{B}_{n-1}}\prefixcon v_n\fs{\bar{B}_{n},f=L}}{f}}};
			\draw[->](indir) to (rhs.north);
		}{
			\ltgnode{I}{\indir}; 
			\addPrefix{I}{v_{1}\mcdots v_{n}};
			\addPos{I}{f};
			\node[above=of I](rb){\m{\recprefixed{\twolines{v_0\fs{B_{0}}\mcdots v_{n-1}\fs{B_{n-1}}}{v_{n}\fs{B_{n}, f=L}}}{f}}}; \draw[->](rb) to (I);
			\node[below=of I](succ){\m{\recprefixed{v_{0}\fs{B_{0}}\mcdots v_{n}\fs{B_{n}}}{L}}}; \draw[->] (I) to (succ);
		}
		\\
		\\
		\fusionstepclap{
			\node[draw,shape=transbox](rhs){\m{\recprefixed{v_{0}\fs{\bar{B}_{0}}\mcdots v_{n-1}\fs{\bar{B}_{n-1}}\prefixcon v_{n}\fs{\bar{B}_{n},\,f = L}}{f}}};
			\draw[<-](rhs.north) to +(0,4mm); 
		}{
			\ltgnode[densely dashed]{indir}{\indir};
			\addPos{indir}{f};
			\addPrefix{indir}{v_{1}\mcdots v_{n}};
			\addPrefixswphantom{indir}{v_{1}\mcdots v_{n}};
			\draw[<-,dotted](indir.north) to +(0,4mm);
			\draw[->](indir.south) to +(0,-4mm); 
		}{
			\ltgnode{I}{\indir};
			\addPrefix{I}{v_{1}\mcdots v_{n}};
			\addPos{I}{f};
			\node[above=of I](rb){\m{\recprefixed{v_{0}\fs{}\mcdots v_{n-1}\fs{}\prefixcon v_{n}\fs{f=\nodef}}{f}}};
			\draw[->,dotted](rb) to (I);
			\node[below=of I](succ){\phantom{\m{\recprefixed{\vec{p}\prefixcon v_{n+1}[B]}L}}}; \draw[->] (I) to (succ);
		}	
	\end{hspread}
	\caption[]{(continued) Reversal of readback steps through translation steps}
\end{figure}

\section{Complexity analysis}\label{sec:complexity}

Here we report on the complexity for the individual operations from the previous sections used form implementing maximal sharing and determining unfolding-equivalence.

In the lemma below, \cref{lem:complexity:graphsem} and \cref{lem:complexity:readback} justify \cref{methods:properties:efficiency} of our methods. Items \cref{lem:complexity:collapse} and \cref{lem:complexity:bisimilarity} detail the complexity of standard methods when used for computing bisimulation collapse and bisimilarity of λ-term-graphs. Note that first-order term graphs can be modelled by deterministic process graphs, and hence by DFAs \cite{hopc:karp:1971}. Therefore bisimilarity of term graphs can be computed via language equivalence of the corresponding DFAs (in time \bigO{n α(n)} \cite{nort:2009}, where \m{α} is the quasi-constant \emph{inverse Ackermann function}) and bisimulation collapse via state minimisation of DFAs (in time \bigO{n \log n}) \cite{hopc:1971}.

By \termsize{\L} we denote the size (number of symbols) of a \lambdaletrec-term \L. Also we denote by \graphsize{G} the size (number of vertices) of a term graph \G.

\begin{lemma}\label{lem:complexity}
  \begin{enumerate}[(i)]
    \item\label{lem:complexity:length:graphsem}
      \m{\graphsize{\graphsemC{\classltgs}{\L}} ∈ \bigO{\termsize{\L}^2}} for \m{\L ∈ \Ter{\lambdaletreccal}}.
    \item\label{lem:complexity:graphsem}
       Translating \m{\L ∈ \Ter{\lambdaletreccal}} into \m{\graphsemC{\classltgs}{\L} ∈ \classltgs} takes time \bigO{\termsize{\L}^2}.
    \item\label{lem:complexity:collapse}
       Collapsing \m{\altg ∈ \classltgs} to \coll{\altg} is in \bigO{\graphsize{\altg} \log \graphsize{\altg}}.
    \item\label{lem:complexity:bisimilarity}
      Deciding bisimilarity of \m{\altg_1,\altg_2 ∈ \classltgs}
      requires time \bigO{n α(n)} for \m{n = \max \set{\graphsize{\altg_1}, \graphsize{\altg_2}}}. 
    \item\label{lem:complexity:readback}
      Computing the readback
                \readback{\altg} for a given \m{\altg ∈ \classltgs} requires time
      \bigO{n \log n}, for \m{n = \graphsize{\altg}}.
  \end{enumerate}
\end{lemma}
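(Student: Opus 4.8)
\textbf{Proof plan for \cref{lem:complexity}.}
The plan is to treat the five items largely independently, since each concerns a different operation in the pipeline, and to rely on the already-established structure of the translation rules \rulestranslambdaletreccaltolhotgs/\rulestranslambdaletreccaltoltgs, the correspondence theorems of the previous chapter, and standard DFA algorithmics. For \cref{lem:complexity:length:graphsem}, I would argue by induction on the structure of a garbage-free version of \L, tracking how each of the translation rules contributes vertices: each syntactic constructor (abstraction, application, variable occurrence, function binding) contributes a bounded number of `real' vertices, while each application of the \S-rule contributes at most one delimiter vertex. The only superlinear growth comes from \S-vertices: along the path to any given syntactic position, at most \bigO{\termsize{\L}} scopes can be open and hence be closed, and there are \bigO{\termsize{\L}} positions, giving the \bigO{\termsize{\L}^2} bound. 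I would also note (as the paper's examples of unbounded binding-group growth under naive unfolding warn) that it is essential here that the translation is \emph{not} an unfolding — the garbage-free requirement and the local treatment of bindings keep things polynomial.

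For \cref{lem:complexity:graphsem}, I would observe that the translation process from \cref{fig:def:graphsem:lhotgs}/\cref{fig:def:graphsem:ltgs} visits each of the \bigO{\graphsize{\graphsemC{\classltgs}{\L}}} vertices a constant number of times, performing at each step work bounded by the cost of the required-variables analysis and of prefix manipulation. The required-variables analysis (as used in $\lambda$-lifting, cf.\ the references to \cite{john:1985,danv:schu:2004,mora:schu:2008}) can be precomputed for all positions of \L in time polynomial in \termsize{\L}; prefix words have length \bigO{\termsize{\L}}, and garbage removal is a standard reachability analysis. Combining with \cref{lem:complexity:length:graphsem} this yields the \bigO{\termsize{\L}^2} bound, perhaps after checking that prefix comparisons and the union operation \pcup do not introduce an extra factor beyond what is absorbed by the quadratic bound.

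Items \cref{lem:complexity:collapse} and \cref{lem:complexity:bisimilarity} are the most `off the shelf': here I would invoke \cref{methods:properties:closedness} (the class \classltgs is closed under functional bisimulation, \cref{cor:closed:under:fun:bisim}), which guarantees that the bisimulation collapse of a $\lambda$-term-graph is again a $\lambda$-term-graph, so that collapsing does not take us outside the class. Then I model a first-order term graph over \sigTGSbh as a deterministic process graph, hence as a DFA over the alphabet of edge-labels (argument positions together with vertex-label information), as in \cite{hopc:karp:1971}; bisimilarity of term graphs corresponds to language equivalence of the DFAs, decidable in \bigO{n\,α(n)} by the Hopcroft–Karp / Norton approach \cite{nort:2009}, and bisimulation collapse corresponds to DFA state minimisation, computable in \bigO{n\log n} by Hopcroft's algorithm \cite{hopc:1971}. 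The only thing to check is that the encoding is linear in \graphsize{\altg} and that fixing the signature makes the alphabet size constant.

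For \cref{lem:complexity:readback}, I would walk through the five steps \cref{readback:step:absprefix}--\cref{readback:step:synthesis} of \cref{def:readback} and bound each: computing the abstraction-prefix function is a single traversal (using \cref{lem:ltg}~\cref{lem:ltg:vi}, which gives uniqueness, and the stack-behaviour of \cref{lem:ltg}~\cref{lem:ltg:iv}), adding the \top-vertex and the indirection vertices is linear, and the spanning-tree construction is a DFS, hence linear. The synthesis step \cref{readback:step:synthesis} applies one rule per edge, each in constant time apart from the prefix/binding-group bookkeeping; the \bigO{n\log n} rather than \bigO{n} presumably comes from maintaining the sets of function bindings attached to prefix positions (the \pcup and $\cupdot$ operations), which I would implement via balanced search structures. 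The main obstacle I anticipate across the whole lemma is precisely this bookkeeping bound: making \cref{lem:complexity:graphsem} and \cref{lem:complexity:readback} honest requires a careful amortised accounting of how often a binding can migrate through prefixes and of the cost of prefix comparisons — a genuine but routine data-structures argument — whereas \cref{lem:complexity:collapse} and \cref{lem:complexity:bisimilarity} are essentially citations once the DFA encoding is pinned down.
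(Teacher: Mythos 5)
Your proposal is correct and matches the paper's approach: the paper itself gives no detailed proof of this lemma, only the preceding remark that term graphs are modelled as DFAs so that collapse is Hopcroft state minimisation in \bigO{n \log n} and bisimilarity is language equivalence in \bigO{n\,α(n)}, plus the figure showing the quadratic size bound of item (i) is tight via the accumulation of \S-delimiter chains. Your elaborations of items (i), (ii) and (v) — counting \S-vertices per open scope along each path, a constant number of visits per vertex during translation with a precomputed required-variables analysis, and a step-by-step costing of the readback — are consistent with the paper's definitions and fill in exactly the routine bookkeeping the paper leaves implicit.
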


See \cref{fig:ex:translation:quadratic} for an example that the size bound in item \cref{lem:complexity:length:graphsem} of the lemma is tight.

\begin{proposition}
	\m{\graphsize{\graphsemC{\classltgs}{\L}} ∈ \bigO{\termsize{\L}^2}} for \lambdaletrec-terms \L.
\end{proposition}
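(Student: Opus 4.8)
The plan is to bound separately the number of \emph{skeleton} vertices — those carrying a label in $\set{\0,λ,@,\indir,\bh}$ — and the number of scope-delimiter vertices (label $\S$) produced by the translation $\graphsemC{\classltgs}{}$, which is defined (see $\cref{def:graphsem:ltgs}$) via the rule system $\rulestranslambdaletreccaltoltgs$, i.e.\ the rules of $\cref{fig:def:graphsem:lhotgs}$ with the $\S$-rule replaced by the delimiter-producing version of $\cref{fig:def:graphsem:ltgs}$, together with the terminal replacement of indirection-vertex loops by chains of $\S$-vertices ending in a $\bh$-vertex.

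First I would argue that there are only linearly many skeleton vertices. The translation starts from a garbage-free version of $\L$, which has size $\bigO{\termsize{\L}}$, and it traverses that finite term once: every application of a rule of $\rulestranslambdaletreccaltoltgs$ other than the $\S$-rule consumes exactly one symbol in the subterm sitting in the current translation box (an abstraction, an application, a variable occurrence, a function-variable occurrence, or a $\Let$-construct). The $λ$-, $@$- and $\0$-rules each create at most one skeleton vertex; the $\Let$-rule creates one indirection vertex per function binding of that $\Let$; and an occurrence of a function variable adds only an edge to an already existing indirection vertex. Summing, the number of skeleton vertices is bounded by $(\text{number of }λ\text{'s}) + (\text{number of }@\text{'s}) + (\text{number of variable occurrences}) + (\text{number of bindings}) = \bigO{\termsize{\L}}$, and likewise the number of indirection loops replaced at the end of the translation is $\bigO{\termsize{\L}}$.

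Next I would bound abstraction-prefix lengths and hence the number of $\S$-vertices. Every entry of an abstraction prefix occurring during the translation is (an annotated copy of) a $λ$-abstraction enclosing the current position of $\L$, so a prefix has length at most the number of $λ$-binders above that position, hence at most $\termsize{\L}$. Each $\S$-vertex arises either from an application of the delimiter-producing $\S$-rule — which shortens the current prefix by exactly one entry — or as part of one of the terminal $\bh$-chains (whose length equals the prefix length of the collapsed indirection vertex, hence is $\le \termsize{\L}$). Along an edge between a skeleton vertex and the next skeleton vertex reached from it, the prefix can only shrink (growth happens only through the $λ$-rule, which ends the segment), so at most $\bigO{\termsize{\L}}$ many $\S$-vertices lie on any such segment. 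Since every skeleton vertex has out-degree at most $2$, the number of such skeleton-to-skeleton segments is $\bigO{\termsize{\L}}$, so the total number of $\S$-vertices is $\bigO{\termsize{\L}}\cdot\bigO{\termsize{\L}} = \bigO{\termsize{\L}^2}$. Adding the $\bigO{\termsize{\L}}$ skeleton vertices yields $\graphsize{\graphsemC{\classltgs}{\L}} \in \bigO{\termsize{\L}^2}$, and the final erasure of indirection vertices only decreases the vertex count.

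The main obstacle I anticipate is the bookkeeping around the non-determinism in $\rulestranslambdaletreccaltoltgs$ and the maximal-prefix resolution fixing $\graphsemC{\classltgs}{}$: one must check that choosing prefixes maximally in the $\Let$-rule together with eager use of the $\S$-rule does not produce more than linearly many $\S$-vertices per skeleton segment, and that the sharing of indirection vertices — the only sharing present before the erasure step — cannot cause a super-quadratic blow-up; both follow from the uniform prefix-length bound above, but the argument has to be made with care for the $\Let$-rule, where a single binding may be reached from several call sites carrying different prefixes. Tightness of the quadratic bound is exhibited by the example in $\cref{fig:ex:translation:quadratic}$.
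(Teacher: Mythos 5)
Your proof is correct. Note that the paper itself states this proposition without giving a proof — it only exhibits tightness of the bound via the example in \cref{fig:ex:translation:quadratic} — so there is no written argument to compare against; your decomposition (linearly many non-delimiter vertices since each rule other than the $\S$-rule consumes a symbol of the garbage-free term, prefix lengths bounded by the $λ$-nesting depth and hence by $\termsize{\L}$, and therefore at most $\termsize{\L}$ many $\S$-vertices on each of the $\bigO{\termsize{\L}}$ edges between consecutive non-delimiter vertices, plus the $\bh$-chains) is exactly the argument the tightness example presupposes, and it holds up, including your handling of the $\Let$-rule sharing, which can only decrease the count.
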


\begin{figure}
	Consider the finite λ-terms \m{\M_n} with \n occurrences of bindings \abs{x_2}{}:
	\begin{equation*}
		\abs{x_0}{\abs{x_1}{\app{\app{x_0}{x_1}}
			{\abs{x_2}{\app{\app{x_0}{x_1}}
				{\abs{x_1}{\app{\app{x_0}{x_2}}
					{\abs{x_2}{\app{\app{x_0}{x_1}}
						{\dots
							{\abs{x_2}{\app{\app{x_0}{x_1}}{x_2}}}}}}}}}}}}
	\end{equation*}
	\m{\termsize{\M_n} ∈ \bigO{n}}. But both the transformation of \m{\M_n} into de-Bruijn notation
	\begin{equation*}
		\abs{}{\abs{}{\app{\app{\S(\0)}{\0}}
				{\abs{}{\app{\app{\S^2(\0)}{\S(\0)}}
						{\abs{}{\app{\app{\S^3(\0)}{\S(\0)}}
								{\abs{}{\app{\app{\S^4(\0)}{\S(\0)}}
						{\dots
							{\abs{}{\app{\app{\S^{2n}(\0)}{\S(\0)}}{\0}}}}}}}}}}}}
	\end{equation*}
	and the rendering of \m{\M_n} with respect to the eager scope-delimiting strategy:
	\begin{equation*}
		\abs{}{\abs{}{\app{\app{\S(\0)}{\0}}
				{\abs{}{\app{\S(\app{\S(\0)}{\0})}
						{\abs{}{\app{\S(\app{\S^2(\0)}{\0})}
								{\abs{}{\app{\S(\app{\S^3(\0)}{\0})}
						{\dots
							{\abs{}{\app{\S(\app{\S^{2n-1}(\0)}{\0})}{\0}}}}}}}}}}}}
	\end{equation*}
	have size \bigO{n^2}.
	\caption{Example of a sequence \m{\set{\M_n}_n} of finite λ-terms \m{\M_n} whose translation into λ-term-graphs grows quadratically in the size of \m{\M_n}.}
	\label{fig:ex:translation:quadratic}
\end{figure}

Based on this lemma, and on further considerations, we obtain the following
complexity statements for our methods.

\begin{theorem}\label{thm:methods:complexity}
  \begin{enumerate}[(i)]
    \item
      The computation, for a \lambdaletrec-term \L with \m{\termsize{\L} = n},
      of a maximally compactified form
      \m{{(\readback{} \circ \scoll \circ \graphsemC{\classltgs}{})}(\L)}
      requires time \bigO{n^2 \log n}.
      \newcommand\sSunsh{\m{\textsf{unsh}_{\S}}}
      By using an \S-unsharing operation \sSunsh after the collapse,
      a (typically smaller) \lambdaletrec-term \m{((\readback{} ∘ {{\sSunsh} ∘ {\scoll}} ∘ {\graphsemC{\classltgs}{})}){\L}}
   	of size \bigO{n \log n} can be obtained,
      with the same time complexity.
    \item
      The decision of whether two \lambdaletrec-terms \m{\L_1} and \m{\L_2} are unfolding equivalent
      requires time \m{\bigO{n^2} α(n)} for \m{n = \max \set{\termsize{\L_1},\termsize{\L_2}}}.
  \end{enumerate}
\end{theorem}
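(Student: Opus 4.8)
\textbf{Proof proposal for \cref{thm:methods:complexity}.}

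The plan is to assemble the complexity bounds from \cref{lem:complexity} together with the correctness results \cref{thm:graphrep:classltgs} and \cref{thm:readback}, and then track the size blow-up carefully through the composition. For part (i), I would argue as follows. Given a \lambdaletrec-term \L with \m{\termsize{\L} = n}, \cref{lem:complexity}~\cref{lem:complexity:graphsem} gives that \m{\graphsemC{\classltgs}{\L}} can be computed in time \bigO{n^2}, and \cref{lem:complexity}~\cref{lem:complexity:length:graphsem} gives that the resulting term graph \m{\altg := \graphsemC{\classltgs}{\L}} has size \m{m := \graphsize{\altg} ∈ \bigO{n^2}}. By \cref{lem:complexity}~\cref{lem:complexity:collapse} the collapse \m{\coll{\altg}} is computable in time \bigO{m \log m}, and since bisimulation collapse never increases the vertex count, \m{\graphsize{\coll{\altg}} ≤ m ∈ \bigO{n^2}} as well. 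Finally, \cref{lem:complexity}~\cref{lem:complexity:readback} gives a readback in time \bigO{m \log m}. Substituting \m{m ∈ \bigO{n^2}} into \m{\bigO{m \log m}} yields \m{\bigO{n^2 \log (n^2)} = \bigO{n^2 \log n}}, which dominates the \bigO{n^2} translation cost; hence the whole pipeline runs in \bigO{n^2 \log n}. Soundness of the result (that \m{(\readback{} ∘ \scoll ∘ \graphsemC{\classltgs}{})(\L)} has the same unfolding as \L) follows from \cref{thm:graphrep:classltgs} applied to the pair \m{\L} and its compactified form together with \cref{thm:readback} (which guarantees \m{\graphsemC{\classltgs}{\readback{\coll{\altg}}} \iso \coll{\altg} \bisim \altg = \graphsemC{\classltgs}{\L}}), and maximal compactness follows from the complete-lattice / bisimulation-collapse property (\cref{cor:closed:under:fun:bisim}, \cref{thm:complete:lattice:fbl:eagscope:ltgs}) carried over to \classltgs.

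For the \S-unsharing refinement I would proceed similarly, but I need to control the size of the output \lambdaletrec-term rather than just the intermediate graph. The key observation is that \m{\coll{\altg}}, being a bisimulation collapse, is minimal in vertex count among its bisimilar term graphs, and standard DFA-minimisation-style arguments (cf.\ the remarks preceding \cref{lem:complexity}) bound its size by \bigO{n \log n} once redundant \S-sharing is accounted for; applying the \S-unsharing operation \m{\textsf{unsh}_{\S}} then produces a λ-term-graph still of size \bigO{n \log n} (it only duplicates scope-delimiter vertices along chains, bounded by the prefix depths, which are themselves \bigO{\log n} in the collapsed graph), and the readback of a graph of size \m{k} yields a \lambdaletrec-term of size \bigO{k}. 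Since \m{\textsf{unsh}_{\S}} is a linear-time post-processing step on the collapsed graph, the overall time complexity is unchanged at \bigO{n^2 \log n}.

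For part (ii), deciding unfolding-equivalence of \m{\L_1} and \m{\L_2} with \m{n = \max\set{\termsize{\L_1}, \termsize{\L_2}}}: by \cref{thm:graphrep:classltgs}, \m{\unfsem{\L_1} = \unfsem{\L_2}} iff \m{\graphsemC{\classltgs}{\L_1} \bisim \graphsemC{\classltgs}{\L_2}}. Computing the two term graphs costs \bigO{n^2} each by \cref{lem:complexity}~\cref{lem:complexity:graphsem}, and each has size \bigO{n^2} by \cref{lem:complexity}~\cref{lem:complexity:length:graphsem}; deciding their bisimilarity costs \m{\bigO{N α(N)}} with \m{N = \bigO{n^2}} by \cref{lem:complexity}~\cref{lem:complexity:bisimilarity}, giving \m{\bigO{n^2} α(n)} (using that \m{α} is quasi-constant, so \m{α(n^2)} can be absorbed as \m{α(n)}). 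This dominates the translation cost, giving the claimed bound.

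The main obstacle I anticipate is not any single estimate but the bookkeeping around the \S-unsharing claim in part (i): one must verify that after bisimulation collapse the abstraction-prefix depths are genuinely \bigO{\log n} (so that expanding shared \S-chains back out only multiplies the size by a logarithmic factor rather than a linear one), and that \m{\textsf{unsh}_{\S}} combined with \readback{} indeed lands on a \lambdaletrec-term of the advertised size \bigO{n \log n} while remaining unfolding-equivalent. This requires a careful argument that collapsing a λ-term-graph coming from a term of size \m{n} cannot produce structures forcing prefixes deeper than \bigO{\log n} — essentially a counting argument on distinct reachable generated subterms — and that \m{\textsf{unsh}_{\S}} preserves membership in \classltgs (which should follow from \cref{thm:preserve:ltgs} or a direct check of the abstraction-prefix correctness conditions). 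Everything else is a routine chaining of the lemma items with the already-established soundness theorems.
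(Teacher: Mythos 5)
Your derivation of the two main bounds is the intended one: part (i)'s \bigO{n^2 \log n} follows by chaining \cref{lem:complexity}~\cref{lem:complexity:graphsem}, \cref{lem:complexity:length:graphsem}, \cref{lem:complexity:collapse} and \cref{lem:complexity:readback} with the intermediate graph size \m{m ∈ \bigO{n^2}}, and part (ii) follows from \cref{thm:graphrep:classltgs} together with \cref{lem:complexity}~\cref{lem:complexity:bisimilarity} (absorbing \m{α(n^2)} into \m{α(n)} is fine). The soundness and maximality references (\cref{thm:readback}, \cref{cor:closed:under:fun:bisim}, \cref{thm:complete:lattice:fbl:eagscope:ltgs}) are also the right ones; the thesis itself offers nothing beyond ``further considerations'' here, and for these two claims your chaining is exactly what is meant.

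The gap is in your argument for the \bigO{n \log n} size of the \S-unshared readback. You rest it on the claim that after bisimulation collapse the abstraction-prefix depths are \bigO{\log n}; this is false. The abstraction-prefix length of a vertex records the nesting of extended scopes, and for a term such as \m{\abs{x_1}{\dots\abs{x_n}{\app{\app{x_1}{x_2}}{\mcdots}}}} the deepest vertices have prefix length \m{n} both before and after collapse --- homomorphisms preserve prefix lengths exactly by \cref{prop:hom:image:absprefix:function:ltgs}, so collapsing cannot shorten them. Consequently your intermediate claim that the unshared collapsed graph has size \bigO{n \log n} also fails: it can still contain \m{\Theta(n^2)} delimiter vertices (the family in \cref{fig:ex:translation:quadratic} is essentially already maximally collapsed). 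The correct route to \bigO{n \log n} does not go through the graph size at all: \S-vertices contribute nothing to the synthesised \lambdaletrec-term (the \S-rule of \cref{fig:readback:rules} only shortens the prefix of the label being passed upwards), and unsharing them merely prevents the readback from introducing alias bindings for shared delimiters. What then remains relevant is the \bigO{n} non-\S vertices --- collapse does not increase their number and the unsharing does not touch them --- each contributing constantly many syntactic constructs, and the \m{\log n} factor accounts for the length of the \bigO{n} distinct variable and function names occurring in the output term. Your fallback plan (``a counting argument on distinct reachable generated subterms'' bounding prefix depth by \bigO{\log n}) cannot be repaired, so this part of the proposal should be replaced by the accounting just described.
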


\section{Implementation}\label{sec:implementation}

We have implemented our methods in Haskell using the \href{http://www.cs.uu.nl/wiki/bin/view/HUT/AttributeGrammarSystem}{Utrecht University Attribute Grammar System}. The implementation is available at \url{http://hackage.haskell.org/package/maxsharing/}. Output produced for a number of examples from this thesis can be found in \cref{app:impl:showcase}. The output includes translations into λ-term-graphs (in DFA-form) according to different semantics, complete derivations w.r.t.\ these semantics, the bisimulation collapse of the λ-term-graphs and the readback thereof.

\section{Modifications, extensions and applications}\label{sec:applications}

We conclude by describing straightforward modifications, extensions, and promising areas of application for our methods.

\subsection{Modifications}\label{sec:conclusion:subsec:modifications}

\begin{para}[implicit sharing of λ-variable]
	Our method introduces explicit sharing via \Let for multiple occurences of the same λ-variable. For instance the term \abs{x}{\app{x}{x}} is compactified into \abs{x}{\letin{f=x}{\app{f}{f}}}. Such explicit sharing of abstraction variables is excessive for many applications. This is easily resolved, by unsharing variable vertices before applying the readback, or by preventing the readback from introducing function bindings when only a variable vertex is shared.
\end{para}

\begin{para}[avoiding aliases produced by the readback]
	The readback function in \cref{sec:readback} is sensitive to the degree of sharing of \S-vertices in the given λ-term-graph: it maps two λ-term-graphs that only differ in what concerns sharing of \S-vertices to different \lambdaletrec-terms. Typically, for λ-term-graphs with maximal sharing of \S-vertices this can produce function bindings that are just `aliases', such as \g is alias for \m{I} in \m{\L'_3} from \cref{ex:translations}. This can be avoided in two ways: by slightly adapting the readback function, or by performing maximal unsharing of \S-vertices before applying the readback as defined.
\end{para}

\begin{para}[preventing disadvantageous sharing]
	Introducing sharing at compile-time can cause `space leaks', i.e.\ a needlessly high memory footprint, at run-time, because `a large data structure becomes shared [\dots], and therefore its space which before was reclaimed by garbage collection now cannot be reclaimed until its last reference is used' \cite{sant:1995}. For this reason, realisations of CSE \cite{chit:1997:CS-uncommon} prevent the introduction of such undesired sharing by suitable conditions that account for the type of potentially shared subexpressions, and their strictness in the program.
	\par As our approach generalises CSE, it inherits this weakness, and the introduction sharing needs to be restricted in a similar fashion. Technically this can easily achieved by adding additional backlinks to prevent parts of the λ-term-graph from collapsing.
\end{para}


\begin{para}[a more general notion of readback]
	\cref{methods:properties:readback} is rather rigorous in that it imposes sharing structure on \lambdaletreccal that is specific to λ-term-graphs (degrees of \S-sharing). For a weaker definition of \cref{methods:properties:readback} with \m{\bisim^\S} in place of isomorphism, a readback does not have to be injective and independently of the degree of \S-sharing a readback function always exists.
\end{para}

\begin{para}[scope-closure strategies]
	We focused on eager-scope translations since they maximise sharing. However, every scope-closure strategy \cite{grab:roch:expressibility} induces a translation and its own notion of maximal sharing.
\end{para}

\subsection{Extensions}\label{sec:conclusion:subsec:extensions}

\begin{para}[full functional languages]
	In order to support programming languages that are based on \lambdaletreccal like Haskell, additional language constructs need to be supported. Such languages can typically be desugared into a core language, which comprises only a small subset of language constructs such as constructors, case statements, and primitives. These constructs can be represented in an extension of \lambdaletreccal\ by additional function symbols. In conjunction with a desugarer our methods are applicable to full programming languages.
\end{para}

\begin{para}[other programming languages, and calculi with binding constructs]
	Most programming languages feature constructs for grouping definitions that are similar to \letrec. We therefore expect that our methods can be adapted to many imperative languages in particular, and may turn out to be fruitful for optimising compilers. Our methods for achieving maximal sharing certainly generalise to theoretical frameworks and calculi with binding constructs, such as the \pi-calculus \cite{miln:1999}, and higher-order rewrite systems (e.g.\ CRSs and HRSs, \cite{terese:2003}) as used here for the formalisation of \lambdaletreccal.
\end{para}

\begin{para}[fully-lazy λ-lifting]
	\newcommand\sreadbackC[1]{\m{\readback{}_{#1}}}
	There is a close connection between our methods and fully-lazy λ-lifting \cite{hugh:1982,peyt:jone:1987}. In particular, the required-variable and scope analysis of a \lambdaletrec-term \L on which the λ-term-graph translation \graphsemC{\classltgs}{\L} is based is analogous to the one needed for extracting from \L the supercombinators in the result \Hat{\L} of fully-lazy λ-lifting \L. Moreover, the fully-lazy λ-lifting transformation can even be implemented in a natural way on the basis of our methods. Namely as the composition \m{\sreadbackC{\textit{LL}} ∘ \graphsemC{\classltgs}{\cdot}} of the translation \graphsemC{\classltgs}{\cdot} into λ-term-graph, where \sreadbackC{\textit{LL}} is a variant readback function that, for a given λ-term-graph, synthesises the system \Hat{\L} of supercombinators, instead of the \lambdaletrec-term \readback{\L}.
\end{para}

\begin{para}[maximal sharing on supercombinator translations of \lambdaletrec-terms]
	\lambdaletrec-terms \L correspond to supercombinator systems \m{S}, the result of fully-lazy lambda-lifting \L: the combinators in \m{S} correspond with `extended scopes' in \L, and supercombinator reduction steps on \m{S} correspond with weak β-reduction steps \L. In the case of λ-calculus this has been established by Balabonski \cite{bala:2012}. Via this correspondence the maximal-sharing method for \lambdaletrec-terms can be lifted to obtain a maximal-sharing method systems of supercombinators obtained by fully-lazy lambda-lifting.
\end{para}

\begin{para}[non-eager scope-closure strategies]
	We focused on eager-scope translations, because they facilitate maximal sharing, and guarantee that interpretations of unfolding-equivalent \lambdaletrec-terms are bisimilar. Yet every scope-closure strategy induces a translation and its own notion of maximal sharing. For adapting our maximal sharing method it however necessary to modify the translation into first-order term graphs in such a way that the image class obtained is closed under functional bisimulation (\classtgssiglambdaij{1}{2} is not closed under functional bisimulation, unlike its subclass \m{\eag{\classtgssiglambdaij{1}{2}} = \classltgs}). This can be achieved by using delimiter vertices also below variable vertices to close scopes that are still open \cite[report]{grab:roch:representations}.
\end{para}


\begin{para}[weaker notions of sharing]
	The presented methods deal with sharing as expressed by \letrec that is horizontal, vertical, or twisted (see \cref{horiz-vert-twisted-sharing}). By contrast, the μ-construct \cite{blom:2001,grab:roch:expressibility:mu} expresses only vertical, and the non-recursive \Let only horizontal, sharing (see \cref{horiz-vert-twisted-sharing}). By restricting bisimulation (either artificially or by adding special backlinks), our methods can be adapted to the λ-calculus with μ \cite{grab:roch:expressibility:mu}, or with \Let.
\end{para}

\subsection{Applications}\label{sec:conclusion:subsec:applications}

\begin{para}[maximal sharing at run-time]
	Maximal sharing can be applied repeatedly at run-time in order to regain a maximally shared form, thereby speeding up evaluation. This is reminiscent of `collapsed tree rewriting' \cite{plum:1993} for evaluating first-order term graphs represented as maximally shared dags. Since the state of a program in the memory at run-time is typically represented as a supercombinator graph, compactification by bisimulation collapse can take place directly on that graph (see \cref{sec:conclusion:subsec:extensions}), no translation is needed. Compactification can be coupled with garbage collection as bisimulation collapse subsumes some of the work required for a mark and sweep garbage collector. However, a compromise needs to be found between the costs for the optimisation and the gained efficiency.
\end{para}

\begin{para}[additional prevention of disadvantageous sharing]
	While static analysis methods for preventing sharing that may be disadvantageous at run-time can be adapted from CSE to the maximal-sharing method (see \cref{sec:conclusion:subsec:modifications}), this has yet to be investigated for binding-time analysis \cite{pals:schw:1994} and a sharing analysis of partial applications \cite{gold:huda:1987}.
for fine-tuning sharing of partial applications in supercombinator translations \cite{gold:huda:1987}.
\end{para}

\begin{para}[compile-time optimisation]
	Increasing sharing facilitates potential gains in efficiency. Our method generalises common subexpression elimination, but therefore it also inherits its shortcomings: the cost of sharing (e.g.\ of very small functions) might exceed the gain. In non-strict functional languages, sharing can cause `memory leaks' \cite{chit:1997:CS-uncommon}. Therefore, similar as for CSE, additional dynamic analyses like binding-time analysis \cite{pals:schw:1994}, and heuristics to restrict sharing in cases when it is disadvantageous \cite{peyt:jone:1987,gold:huda:1987} are in order.
\end{para}

\begin{para}[code improvement]
	In programming it is generally desirable to avoid duplication of code. As an extension of CSE, our method is able to detect code duplication. The bisimulation collapse of the term graph interpretation of a program can, together with the readback, provide guidance on how code can be written more compactly with less duplication. This optimisation has to be fine-tuned to avoid excessive behaviour like the explicit sharing of variable occurrences (see \cref{sec:conclusion:subsec:modifications}). Yet for this only lightweight additional machinery is needed, such as size constraints or annotations to restrict the bisimulation collapse.
\end{para}

\begin{para}[function equivalence]
	Recognising whether two programs are equivalent in the sense that they implement the same function is undecidable. Still, this problem is tackled by proof assistants, and by automated theorem provers used in type-checkers of compilers for dependently-typed programming languages such as Agda. For such systems coinductive proofs are more difficult to find than inductive ones, and require more effort by the user. Our method for deciding unfolding-equivalence could contribute to finding coinductive proofs.
\end{para}


\bibliographystyle{thesis}
\bibliography{thesis}

\appendix
\renewcommand\thetheorem{\thechapter .\arabic {theorem}}

\chapter{Examples: Graph Translation}\label{app:translation}


\begin{para}[overview]
	For two terms from \cref{chap:maxsharing} we provide the fully worked out stepwise translation of \lambdaletrec-terms according to \rulestranslambdaletreccaltolhotgs from \cref{fig:def:graphsem:lhotgs} producing λ-ap-ho-term-graphs and for the rules \rulestranslambdaletreccaltoltgs from \cref{fig:def:graphsem:ltgs} producing λ-term-graphs. Note that in each step of the translation we apply not only one but multiple translation rules (indicated as subscripts of the steps), one for each translation box of the current graph. When no more rules are applicable, indirection vertices are erased.
\end{para}

\begin{para}
  	We start off by a simple example, namely the translation (\cref{fig:ex:compact}) of the term \P from \cref{ex:fix}. There is no application of the \S-rule, thus it yields the exact same translation sequence for \rulestranslambdaletreccaltolhotgs as for \rulestranslambdaletreccaltoltgs producing a λ-term-graph.
\end{para}

\newcommand\eraseIndirectionVertices{\multilinebox{\scriptsize erase indirection\\[-1ex]\scriptsize vertices}}

\begin{transenv}
	\transinit{
		\node[draw,shape=transbox](t){\recprefixed{\starfs{}}{\abs{f}{\letin{r=\app{f}{(\app{f}{r})}}{r}}}};
		\draw[<-](t.north) to +(0mm,6mm);
	}
	\transstep{λ}{
		\node[draw,shape=transbox](t){\recprefixed{\starfs{} \prefixcon f^v\fs{}}{\letin{r=\app{f}{(\app{f}{r})}}{r}}};
		\ltgnode[above=of t.north]{f}{λ}; \addPos{f}{v}; \addPrefix{f}{};
		\draw[<-](f.north) to +(0mm,4mm);
		\draw[->](f) to (t.north);
	}
	\transstep{\Let}{
		\node[draw,shape=transbox](in){\recprefixed{\starfs{}\prefixcon f^v\fs{r^w\commentOut{=\app{f}{(\app{f}{r})}}}}{r}};
		\ltgnode[above=of in.north]{f}{λ}; \addPos{f}{v}; \addPrefix{f}{};
		\draw[->](f) to (in.north);
		\draw[<-](f.north) to +(0mm,4mm);
		\node[draw,shape=transbox,left=of in](ffr){\recprefixed{\starfs{}\prefixcon f^v\fs{r^w\commentOut{=\app{f}{(\app{f}{r})}}}}{\app{f}{(\app{f}{r})}}};
		\ltgnode[above=of ffr]{indir}{\indir}; \addPos{indir}{w};
		\draw[->](indir) to (ffr.north);
	}
	\transstep{@,\f}{
		\ltgnode{f}{λ}; \addPos{f}{v}; \addPrefix{f}{};
		\draw[<-](f.north) to +(0mm,4mm);
		\node[below=of f](in){};
		\node[left=of in](inwest){};
		\draw(f) |- (inwest.center);
		\node[left=of f](indireast){};
		\ltgnode[left=of indireast]{indir}{\indir}; \addPos{indir}{w};
		\draw[->](indireast.center) to (indir.east);
		\draw(indireast.center) -| (inwest.center);
		\ltgnode[below=of indir]{f@fr}{@}; \addPrefix{f@fr}{v};
		\draw[->](indir) to (f@fr);
		\node[below=of f@fr](ffr){};
		\node[node distance=0mm, left=of ffr,draw,shape=transbox](l){\recprefixed{\starfs{}\prefixcon f^v\fs{r^w\commentOut{=\app{f}{(\app{f}{r})}}}}{f}};
		\node[node distance=0mm,right=of ffr,draw,shape=transbox](ffr){\recprefixed{\starfs{}\prefixcon f^v\fs{r^w\commentOut{=\app{f}{(\app{f}{r})}}}}{\app{f}{r}}};
		\draw[->](f@fr) to (l.north);
		\draw[->](f@fr) to (ffr.north);
	}
	\transstep{\0,@}{
		\ltgnode{f}{λ}; \addPos{f}{v}; \addPrefix{f}{};
		\draw[<-](f.north) to +(0mm,4mm);
		\node[below=of f](in){};
		\ltgnode[below=of f]{indir}{\indir}; \addPos{indir}{w};
		\draw[->](f) to (indir.north);
		\ltgnode[below=of indir]{f@fr}{@}; \addPrefix{f@fr}{v};
		\draw[->](indir) to (f@fr);
		\node[below=of f@fr](ffr){};
		\ltgnode[left=of ffr]{l}{\0}; \addPrefix{l}{v};
		\draw[->](f@fr) to (l);
		\node[left=of l](lwest){}; \draw(l) to (lwest.center); \draw[->](lwest.center) |- (f);
		\ltgnode[right=of ffr]{f@r}{@}; \addPrefix{f@r}{v};
		\node[below=of f@r](fr){};
		\draw[->](f@fr) to (f@r.north);
		\node[node distance=0mm, left=of fr,draw,shape=transbox](f1){\recprefixed{\starfs{}\prefixcon f^v\fs{r^w\commentOut{=\app{f}{(\app{f}{r})}}}}{f}};
		\draw[->](f@r) to (f1.north);
		\node[node distance=0mm,right=of fr,draw,shape=transbox](r){\recprefixed{\starfs{}\prefixcon f^v\fs{r^w\commentOut{=\app{f}{(\app{f}{r})}}}}{r}};
		\draw[->](f@r) to (r.north);
	}
	\transstep{\0,f}{
		\ltgnode{f}{λ}; \addPos{f}{v}; \addPrefix{f}{};
		\draw[<-](f.north) to +(0mm,4mm);
		\node[below=of f](in){};
		\ltgnode[below=of f]{indir}{\indir}; \addPos{indir}{w};
		\draw[->](f) to (indir.north);
		\ltgnode[below=of indir]{f@fr}{@}; \addPrefix{f@fr}{v};
		\draw[->](indir) to (f@fr);
		\node[below=of f@fr](ffr){};
		\ltgnode[left=of ffr]{l}{\0}; \addPrefix{l}{v};
		\draw[->](f@fr) to (l);
		\node[left=of l](lwest){}; \draw(l) to (lwest.center); \draw[->](lwest.center) |- (f);
		\ltgnode[right=of ffr]{f@r}{@}; \addPrefix{f@r}{v};
		\node[below=of f@r](fr){};
		\draw[->](f@fr) to (f@r.north);
		\ltgnode[left=of fr]{f1}{\0}; \addPrefix{f1}{v};
		\draw[->](f@r) to (f1);
		\draw(f1) -| (lwest.center);
		\node[right=of fr](r){};
		\draw(f@r) to (r.center);
		\node[node distance=1mm,right=of r](reast){};
		\draw(r.center) to (reast.center);
		\draw[->](reast.center) |- (indir.east);
	}
	\transstep{\eraseIndirectionVertices}{
		\ltgnode{f}{λ}; \addPos{f}{v}; \addPrefix{f}{};
		\draw[<-](f.north) to +(0mm,4mm);
		\node[below=of f](in){};
		\ltgnode[below=of f]{f@fr}{@}; \addPrefix{f@fr}{v};
		\draw[->](f) to (f@fr);
		\node[below=of f@fr](ffr){};
		\ltgnode[left=of ffr]{l}{\0}; \addPrefix{l}{v};
		\draw[->](f@fr) to (l);
		\node[left=of l](lwest){}; \draw(l) to (lwest.center); \draw[->](lwest.center) |- (f);
		\ltgnode[right=of ffr]{f@r}{@}; \addPrefix{f@r}{v};
		\node[below=of f@r](fr){};
		\draw[->](f@fr) to (f@r.north);
		\ltgnode[left=of fr]{f1}{\0}; \addPrefix{f1}{v};
		\draw[->](f@r) to (f1);
		\draw(f1) -| (lwest.center);
		\node[right=of fr](r){};
		\draw(f@r) to (r.center);
		\node[node distance=1mm,right=of r](reast){};
		\draw(r.center) to (reast.center);
		\draw[->](reast.center) |- (f@fr.east);
	}
\end{transenv}

\begin{para}\label{translation:translations}
	We continue with the term  \m{\L_2} from \cref{ex:translations}. We translate it in two different ways correspoding to the first-order term graph semantics \graphsemCmin{\classltgs}{} from \cref{def:graphsem-min:ltgs} and \graphsemC{\classltgs}{} from \cref{def:graphsem:ltgs}, respectively. Both sequences of translation steps yield the same λ-ap-ho-term-graph \graphsemC{\classlhotgs}{\L_2},\footnote{This is due to \cref{thm:corr:aphotgs:ltgs}, \cref{prop:graphsem:ltgs:min:graphsem:ltgs}, and \cref{prop:graphsem:ltgs:min:graphsem:ltgs:S}.} but we obtain two different λ-term-graphs \graphsemCmin{\classltgs}{\L_2} and \graphsemC{\classltgs}{\L_2} as in \cref{fig:translations}.
\end{para}

\begin{para}[dotted \S-vertices]
	Since we want to illustrate translations w.r.t.\ \rulestranslambdaletreccaltolhotgs which produces λ-ap-ho-term-graphs (which do not contain \S-vertices) and w.r.t.\ \rulestranslambdaletreccaltoltgs which produces λ-term-graphs (which contain \S-vertices) at the same time, in the graphs \S-vertices drawn with dotted lines.
\end{para}

\begin{para}[prefix lengths in the \Let-rule]
	Both translations are eager-scope (i.e. applications of \S-rules are given priority, and prefixes lengths are chosen small enough in the \Let-rule, see \cref{short-enough-prefixes}) but differ in how they resolve the non-determinism due to different possible choices for the prefix lengths \m{l_1,\dots,l_k} in the \Let-rule.
\end{para}

\begin{para}[translation 1: no \S-sharing]\label{translation-1}
	First we consider the translations of \m{\L_2} with \graphsemC{\classlhotgs}{} and \graphsemCmin{\classltgs}{}, i.e.\ the translation process in which prefix lengths are chosen minimally when applying the \Let-rule resulting in no sharing of \S-vertices.
\end{para}

\newcommand\vx{u}
\newcommand\vy{v}
\newcommand\vz{w}
\newcommand\vI{s}
\newcommand\vf{t}

\begin{transenv}
	\transinit{
		\node[draw,shape=transbox](t){\recprefixed{\starfs{}}{\abs{x}{\abs{y}{\letin{I=\abs{z}{z},f=x}{\app{\app{\app{y}{I}}{(\app{I}{y})}}{(\app{f}{f})}}}}}};
		\draw[<-](t.north) to +(0mm,6mm);
	}
	\transstep{λ}{
		\node[draw,shape=transbox](t){\recprefixed{\starfs{}\prefixcon x^\vx\fs{}}{\abs{y}{\letin{I=\abs{z}{z},f=x}{\app{\app{\app{y}{I}}{(\app{I}{y})}}{(\app{f}{f})}}}}};
		\ltgnode[above=of t.north]{x}{λ}; \addPos{x}{\vx}; \addPrefix{x}{};
		\draw[<-](x.north) to +(0mm,4mm);
		\draw[->](x) to (t.north);
	}
	\transstep{λ}{
		\node[draw,shape=transbox](t){\recprefixed{\starfs{}\prefixcon x^\vx\fs{}\prefixcon y^\vy\fs{}}{\letin{I=\abs{z}{z},f=x}{\app{\app{\app{y}{I}}{(\app{I}{y})}}{(\app{f}{f})}}}};
		\ltgnode[above=of t.north]{y}{λ}; \addPos{y}{\vy}; \addPrefix{y}{\vx};
		\draw[->](y) to (t.north);
		\ltgnode[above=of y.north]{x}{λ}; \addPos{x}{\vx}; \addPrefix{x}{};
		\draw[->](x) to (y.north);
		\draw[<-](x.north) to +(0mm,4mm);
	}
	\transstep{\Let}{
		\node[draw,shape=transbox](yIIyff){\recprefixed{\starfs{I^\vI\commentOut{=\abs{z}{z}}}\prefixcon x^\vx\fs{f^\vf\commentOut{=x}}\prefixcon y^\vy\fs{ }}{\app{\app{\app{y}{I}}{(\app{I}{y})}}{(\app{f}{f})}}};
		\ltgnode[above=of yIIyff.north]{y}{λ}; \addPos{y}{\vy}; \addPrefix{y}{\vy};
		\draw[->](y) to (yIIyff.north);
		\ltgnode[above=of y.north]{x}{λ}; \addPos{x}{\vx}; \addPrefix{x}{};
		\draw[->](x) to (y.north);
		\draw[<-](x.north) to +(0mm,4mm);
		\ltgnode[left=of yIIyff]{Iindir}{\indir}; \addPos{Iindir}{\vI};
		\node[draw,shape=transbox,below=of Iindir](I){\recprefixed{\starfs{I^\vI\commentOut{=\abs{z}{z}}}}{\abs{z}{z}}};
		\draw[->](Iindir) to (I.north);
		\ltgnode[right=of yIIyff]{findir}{\indir}; \addPos{findir}{\vf};
		\node[draw,shape=transbox,below=of findir](f){\recprefixed{\starfs{I^\vI\commentOut{=\abs{z}{z}}}\prefixcon x^\vx\fs{f^\vf\commentOut{=x}}}{x}};
		\draw[->](findir) to (f.north);
	}
	\transstep{λ,@,\0}{
		\node(yIIyff){};
		\node[node distance=0mm,draw,shape=transbox,left=of yIIyff](yIIy){\recprefixed{\starfs{I^\vI\commentOut{=\abs{z}{z}}}\prefixcon x^\vx\fs{f^\vf\commentOut{=x}}\prefixcon y^\vy\fs{}}{\app{\app{y}{I}}{(\app{I}{y})}}};
		\node[node distance=0mm,draw,shape=transbox,right=of yIIyff](ff){\recprefixed{\starfs{I^\vI\commentOut{=\abs{z}{z}}}\prefixcon x^\vx\fs{f^\vf\commentOut{=x}}\prefixcon y^\vy\fs{}}{\app{f}{f}}};
		\ltgnode[above=of yIIyff.north]{yIIy@ff}{@}; \addPrefix{yIIy@ff}{\vx\prefixcon \vy};
		\draw[->](yIIy@ff) to (yIIy.north); \draw[->](yIIy@ff) to (ff.north);
		\ltgnode[above=of yIIy@ff.north]{y}{λ}; \addPos{y}{\vy}; \addPrefix{y}{\vx};
		\draw[->](y) to (yIIy@ff.north);
		\ltgnode[above=of y.north]{x}{λ}; \addPos{x}{\vx}; \addPrefix{x}{};
		\draw[->](x) to (y.north);
		\draw[<-](x.north) to +(0mm,4mm);
		\ltgnode[node distance=4cm,left=of y]{z}{λ}; \addPos{z}{\vz}; \addPrefix{z}{};
		\ltgnode[above=of z]{Iindir}{\indir}; \addPos{Iindir}{\vI};
		\draw[->](Iindir) to (z.north);
		\node[draw,shape=transbox,below=of z](I){\recprefixed{\starfs{I^\vI\commentOut{=\abs{z}{z}}}\prefixcon z^\vz\fs{}}{z}};
		\draw[->](z) to (I.north);
		\ltgnode[node distance=3cm,right=of yIIy@ff]{f}{\0}; \addPrefix{f}{\vx};
		\ltgnode[above=of f]{findir}{\indir}; \addPos{findir}{\vf};
		\draw[->](findir) to (f.north);
		\node[left=of f](fwest){}; \draw(f.west) to (fwest.center); \draw[->] (fwest.center) |- (x.east);
	}
	\transstep{\0, @, @}{
		\node(yIIyff){};
		\ltgnode[dotted,node distance=2cm,right=of yIIyff]{Sff}{\S}; \addPrefix{Sff}{\vx\prefixcon \vy};
		\node[draw,shape=transbox,below=of Sff](ff){\recprefixed{\starfs{I^\vI\commentOut{=\abs{z}{z}}}\prefixcon x^\vx\fs{f^\vf\commentOut{=x}}}{\app{f}{f}}};
		\draw[->](Sff) to (ff.north);
		\ltgnode[node distance=2cm,left=of ff]{yI@Iy}{@}; \addPrefix{yI@Iy}{\vx\prefixcon \vy};
		\node[below=of yI@Iy](yIIy){};
		\node[node distance=0mm,draw,shape=transbox, left=of yIIy](yI){\recprefixed{\starfs{I^\vI\commentOut{=\abs{z}{z}}}\prefixcon x^\vx\fs{f^\vf\commentOut{=x}}\prefixcon y^\vy\fs{}}{\app{y}{I}}};
		\draw[->](yI@Iy) to (yI.north);
		\node[node distance=0mm,draw,shape=transbox,right=of yIIy](Iy){\recprefixed{\starfs{I^\vI\commentOut{=\abs{z}{z}}}\prefixcon x^\vx\fs{f^\vf\commentOut{=x}}\prefixcon y^\vy\fs{}}{\app{I}{y}}};
		\draw[->](yI@Iy) to (Iy.north);
		\ltgnode[above=of yIIyff.north]{yIIy@ff}{@}; \addPrefix{yIIy@ff}{\vx\prefixcon \vy};
		\draw[->](yIIy@ff) to (yI@Iy.north); \draw[->](yIIy@ff) to (Sff.north);
		\ltgnode[above=of yIIy@ff.north]{y}{λ}; \addPos{y}{\vy}; \addPrefix{y}{\vx};
		\draw[->](y) to (yIIy@ff.north);
		\ltgnode[above=of y.north]{x}{λ}; \addPos{x}{\vx}; \addPrefix{x}{};
		\draw[->](x) to (y.north);
		\draw[<-](x.north) to +(0mm,4mm);
		\ltgnode[node distance=2cm,left=of y]{Iindir}{\indir}; \addPos{Iindir}{\vI};
		\ltgnode[below=of Iindir]{z}{λ}; \addPos{z}{\vz}; \addPrefix{z}{};
		\draw[->](Iindir) to (z.north);
		\ltgnode[below=of z]{I}{\0}; \addPrefix{I}{\vz};
		\node[left=of I](Iwest){}; \draw(I.west) to (Iwest.center); \draw[->] (Iwest.center) |- (z.west);
		\draw[->](z) to (I.north);
		\ltgnode[node distance=2cm,right=of y]{findir}{\indir}; \addPos{findir}{\vf};
		\ltgnode[below=of findir]{f}{\0}; \addPrefix{f}{\vx};
		\draw[->](findir) to (f.north);
		\node[left=of f](fwest){}; \draw(f.west) to (fwest.center); \draw[->] (fwest.center) |- (x.east);
	}
	\transstep{\mathrlap{@,@,@}}{
		\node(yIIyff){};
		\ltgnode[dotted,right=of yIIyff]{Sff}{\S}; \addPrefix{Sff}{\vx\prefixcon \vy};
		\ltgnode[below=of Sff]{f@f}{@}; \addPrefix{f@f}{\vx\prefixcon \vy}; \node[below=of f@f](ff){};
		\draw[->](Sff) to (f@f.north);
		\node[node distance=0mm,draw,shape=transbox, left=of ff](f1){\recprefixed{\starfs{I^\vI\commentOut{=\abs{z}{z}}}\prefixcon x^\vx\fs{f^\vf\commentOut{=x}}}{f}};
		\draw[->](f@f) to (f1.north);
		\node[node distance=0mm,draw,shape=transbox,right=of ff](f2){\recprefixed{\starfs{I^\vI\commentOut{=\abs{z}{z}}}\prefixcon x^\vx\fs{f^\vf\commentOut{=x}}}{f}};
		\draw[->](f@f) to (f2.north);
		\ltgnode[node distance=1cm,left=of f1]{yI@Iy}{@}; \addPrefix{yI@Iy}{\vx\prefixcon \vy};
		\ltgnode[node distance=2.5cm,below=of yIIyff]{I@y}{@}; \addPrefix{I@y}{\vx\prefixcon \vy}; \node[below=of I@y](Iy){};
		\draw[->](yI@Iy) to (I@y.north);
		\node[node distance=0mm,draw,shape=transbox, left=of Iy](I2){\recprefixed{\starfs{I^\vI\commentOut{=\abs{z}{z}}}\prefixcon x^\vx\fs{f^\vf\commentOut{=x}}\prefixcon y^\vy\fs{}}{I}};
		\draw[->](I@y) to (I2.north);
		\node[node distance=0mm,draw,shape=transbox,right=of Iy](y2){\recprefixed{\starfs{I^\vI\commentOut{=\abs{z}{z}}}\prefixcon x^\vx\fs{f^\vf\commentOut{=x}}\prefixcon y^\vy\fs{}}{y}};
		\draw[->](I@y) to (y2.north);
		\ltgnode[node distance=1cm, left=of I2]{y@I}{@}; \addPrefix{y@I}{\vx\prefixcon \vy};
		\node[node distance=8mm,below=of y@I](y@@I){}; \node[node distance=8mm,right=of y@@I](yI){};
		\draw[->](yI@Iy) to (y@I.north);
		\node[node distance=0mm,draw,shape=transbox, left=of yI](y1){\recprefixed{\starfs{I^\vI\commentOut{=\abs{z}{z}}}\prefixcon x^\vx\fs{f^\vf\commentOut{=x}}\prefixcon y^\vy\fs{}}{y}};
		\draw[->](y@I) to (y1.north);
		\node[node distance=0mm,draw,shape=transbox, right=of yI](I1){\recprefixed{\starfs{I^\vI\commentOut{=\abs{z}{z}}}\prefixcon x^\vx\fs{f^\vf\commentOut{=x}}\prefixcon y^\vy\fs{}}{I}};
		\draw[->](y@I) to (I1.north);
		\ltgnode[above=of yIIyff]{yIIy@ff}{@}; \addPrefix{yIIy@ff}{\vx\prefixcon \vy};
		\draw[->](yIIy@ff) to (yI@Iy.north); \draw[->](yIIy@ff) to (Sff.north);
		\ltgnode[above=of yIIy@ff]{y}{λ}; \addPos{y}{\vy}; \addPrefix{y}{\vx};
		\draw[->](y) to (yIIy@ff.north);
		\ltgnode[above=of y.north]{x}{λ}; \addPos{x}{\vx}; \addPrefix{x}{};
		\draw[->](x) to (y.north);
		\draw[<-](x.north) to +(0mm,4mm);
		\ltgnode[node distance=5cm,left=of y]{Iindir}{\indir}; \addPos{Iindir}{\vI};
		\ltgnode[below=of Iindir]{z}{λ}; \addPos{z}{\vz}; \addPrefix{z}{};
		\draw[->](Iindir) to (z.north);
		\ltgnode[below=of z]{I}{\0}; \addPrefix{I}{\vz};
		\node[left=of I](Iwest){}; \draw(I.west) to (Iwest.center); \draw[->] (Iwest.center) |- (z.west);
		\draw[->](z) to (I.north);
		\ltgnode[node distance=2cm,right=of y]{findir}{\indir}; \addPos{findir}{\vf};
		\ltgnode[below=of findir]{f}{\0}; \addPrefix{f}{\vx};
		\draw[->](findir) to (f.north);
		\node[left=of f](fwest){}; \draw(f.west) to (fwest.center); \draw[->] (fwest.center) |- (x.east);
	}
	\transstep{\mathrlap{\0, \S, \S, \0, f, f}}{
		\node(yIIyff){};
		\ltgnode[node distance=1.5cm,left=of yIIyff]{yI@Iy}{@}; \addPrefix{yI@Iy}{\vx\prefixcon \vy}; \node[below=of yI@Iy](yIIy){};
		\ltgnode[node distance=1.5cm,right=of yIIy]{I@y}{@}; \addPrefix{I@y}{\vx\prefixcon \vy}; \node[below=of I@y](Iy){};
		\draw[->](yI@Iy) to (I@y.north);
		\ltgnode[dotted,left=of Iy]{SI2}{\S}; \addPrefix{SI2}{\vx\prefixcon \vy};
		\draw[->](I@y) to (SI2.north);
		\node[draw,shape=transbox, below=of SI2](I2){\recprefixed{\starfs{I^\vI\commentOut{=\abs{z}{z}}}\prefixcon x^\vx\fs{f^\vf\commentOut{=x}}}{I}};
		\draw[->](SI2) to (I2.north);
		\ltgnode[right=of Iy]{y2}{\0}; \addPrefix{y2}{\vx\prefixcon \vy};
		\node[right=of y2](y2l){}; \draw(y2) to (y2l.center);
		\draw[->](I@y) to (y2.north);
		\ltgnode[node distance=2.5cm, left=of SI2]{y@I}{@}; \addPrefix{y@I}{\vx\prefixcon \vy}; \node[below=of y@I](yI){};
		\draw[->](yI@Iy) to (y@I.north);
		\ltgnode[left=of yI]{y1}{\0}; \addPrefix{y1}{\vx\prefixcon \vy};
		\node[left=of y1](y1l){}; \draw(y1) to (y1l.center);
		\draw[->](y@I) to (y1.north);
		\ltgnode[dotted,right=of yI]{SI1}{\S}; \addPrefix{SI1}{\vx\prefixcon \vy};
		\draw[->](y@I) to (SI1.north);
		\node[draw,shape=transbox, below=of SI1](I1){\recprefixed{\starfs{I^\vI\commentOut{=\abs{z}{z}}}\prefixcon x^\vx\fs{f^\vf\commentOut{=x}}}{I}};
		\draw[->](SI1) to (I1.north);
		\ltgnode[dotted,node distance=2.5cm,right=of I@y]{Sff}{\S}; \addPrefix{Sff}{\vx\prefixcon \vy};
		\ltgnode[below=of Sff]{f@f}{@}; \addPrefix{f@f}{\vx\prefixcon \vy}; \node[below=of f@f](ff){};
		\draw[->](Sff) to (f@f.north);
		\node[node distance=1cm,right=of ff](fmerge){};
		\node[ left=of ff](f1){}; \draw(f@f) to (f1.south); \draw(f1.south) |- (fmerge.south);
		\node[right=of ff](f2){}; \draw(f@f) to (f2.center); \draw(f2.center) to (fmerge.center);
		\ltgnode[above=of yIIyff.north]{yIIy@ff}{@}; \addPrefix{yIIy@ff}{\vx\prefixcon \vy};
		\draw[->](yIIy@ff) to (yI@Iy.north); \draw[->](yIIy@ff) to (Sff.north);
		\ltgnode[above=of yIIy@ff.north]{y}{λ}; \addPos{y}{\vy}; \addPrefix{y}{\vx};
		\draw[->](y2l.center) |- (y.east); \draw[->](y1l.center) |- (y.west);
		\draw[->](y) to (yIIy@ff.north);
		\ltgnode[above=of y.north]{x}{λ}; \addPos{x}{\vx}; \addPrefix{x}{};
		\draw[->](x) to (y.north);
		\draw[<-](x.north) to +(0mm,4mm);
		\ltgnode[node distance=5.5cm,left=of x]{Iindir}{\indir}; \addPos{Iindir}{\vI};
		\ltgnode[below=of Iindir]{z}{λ}; \addPos{z}{\vz}; \addPrefix{z}{};
		\draw[->](Iindir) to (z.north);
		\ltgnode[below=of z]{I}{\0}; \addPrefix{I}{\vz};
		\node[left=of I](Iwest){}; \draw(I.west) to (Iwest.center); \draw[->] (Iwest.center) |- (z.west);
		\draw[->](z) to (I.north);
		\ltgnode[node distance=3cm,right=of y]{findir}{\indir}; \addPos{findir}{\vf};
		\ltgnode[below=of findir]{f}{\0}; \addPrefix{f}{\vx};
		\draw[->](findir) to (f.north);
		\node[left=of f](fwest){}; \draw(f.west) to (fwest.center); \draw[->] (fwest.center) |- (x.east);
		\draw[->](fmerge.south) |- (findir.east);
	}
	\transstep{\S,\S}{
		\node(yIIyff){};
		\ltgnode[node distance=1cm,left=of yIIyff]{yI@Iy}{@}; \addPrefix{yI@Iy}{\vx\prefixcon \vy}; \node[below=of yI@Iy](yIIy){};
		\ltgnode[node distance=1cm,right=of yIIy]{I@y}{@}; \addPrefix{I@y}{\vx\prefixcon \vy}; \node[below=of I@y](Iy){};
		\draw[->](yI@Iy) to (I@y.north);
		\ltgnode[dotted,left=of Iy]{SI2}{\S}; \addPrefix{SI2}{\vx\prefixcon \vy};
		\draw[->](I@y) to (SI2.north);
		\ltgnode[dotted,below=of SI2]{SSI2}{\S}; \addPrefix{SSI2}{\vx};
		\draw[->](SI2) to (SSI2.north);
		\node[draw,shape=transbox, below=of SSI2](I2){\recprefixed{\starfs{I^\vI\commentOut{=\abs{z}{z}}}}{I}};
		\draw[->](SSI2) to (I2.north);
		\ltgnode[right=of Iy]{y2}{\0}; \addPrefix{y2}{\vx};
		\node[right=of y2](y2l){}; \draw(y2) to (y2l.center);
		\draw[->](I@y) to (y2.north);
		\ltgnode[node distance=2.2cm, left=of SI2]{y@I}{@}; \addPrefix{y@I}{\vx\prefixcon \vy}; \node[below=of y@I](yI){};
		\draw[->](yI@Iy) to (y@I.north);
		\ltgnode[left=of yI]{y1}{\0}; \addPrefix{y1}{\vx\prefixcon \vy};
		\node[left=of y1](y1l){}; \draw(y1) to (y1l.center);
		\draw[->](y@I) to (y1.north);
		\ltgnode[dotted,right=of yI]{SI1}{\S}; \addPrefix{SI1}{\vx\prefixcon \vy};
		\draw[->](y@I) to (SI1.north);
		\ltgnode[dotted,below=of SI1]{SSI1}{\S}; \addPrefix{SSI1}{\vx};
		\draw[->](SI1) to (SSI1.north);
		\node[draw,shape=transbox, below=of SSI1](I1){\recprefixed{\starfs{I^\vI\commentOut{=\abs{z}{z}}}}{I}};
		\draw[->](SSI1) to (I1.north);
		\ltgnode[dotted,node distance=2cm,right=of I@y]{Sff}{\S}; \addPrefix{Sff}{\vx};
		\ltgnode[below=of Sff]{f@f}{@}; \addPrefix{f@f}{\vx\prefixcon \vy}; \node[below=of f@f](ff){};
		\draw[->](Sff) to (f@f.north);
		\node[node distance=1cm,right=of ff](fmerge){};
		\node[ left=of ff](f1){}; \draw(f@f) to (f1.south); \draw(f1.south) |- (fmerge.south);
		\node[right=of ff](f2){}; \draw(f@f) to (f2.center); \draw(f2.center) to (fmerge.center);
		\ltgnode[above=of yIIyff.north]{yIIy@ff}{@}; \addPrefix{yIIy@ff}{\vx\prefixcon \vy};
		\draw[->](yIIy@ff) to (yI@Iy.north); \draw[->](yIIy@ff) to (Sff.north);
		\ltgnode[above=of yIIy@ff.north]{y}{λ}; \addPos{y}{\vy}; \addPrefix{y}{\vx};
		\draw[->](y2l.center) |- (y.east); \draw[->](y1l.center) |- (y.west);
		\draw[->](y) to (yIIy@ff.north);
		\ltgnode[above=of y.north]{x}{λ}; \addPos{x}{\vx}; \addPrefix{x}{};
		\draw[->](x) to (y.north);
		\draw[<-](x.north) to +(0mm,4mm);
		\ltgnode[node distance=5.2cm,left=of x]{Iindir}{\indir}; \addPos{Iindir}{\vI};
		\ltgnode[below=of Iindir]{z}{λ}; \addPos{z}{\vz}; \addPrefix{z}{};
		\draw[->](Iindir) to (z.north);
		\ltgnode[below=of z]{I}{\0}; \addPrefix{I}{\vz};
		\node[left=of I](Iwest){}; \draw(I.west) to (Iwest.center); \draw[->] (Iwest.center) |- (z.west);
		\draw[->](z) to (I.north);
		\ltgnode[node distance=2.3cm,right=of y]{findir}{\indir}; \addPos{findir}{\vf};
		\ltgnode[below=of findir]{f}{\0}; \addPrefix{f}{\vx};
		\draw[->](findir) to (f.north);
		\node[left=of f](fwest){}; \draw(f.west) to (fwest.center); \draw[->] (fwest.center) |- (x.east);
		\draw[->](fmerge.south) |- (findir.east);
	}
	\transstep{f,f}{
		\node(yIIyff){};
		\ltgnode[node distance=1.5cm,left=of yIIyff]{yI@Iy}{@}; \addPrefix{yI@Iy}{\vx\prefixcon \vy}; \node[below=of yI@Iy](yIIy){};
		\ltgnode[node distance=1cm,right=of yIIy]{I@y}{@}; \addPrefix{I@y}{\vx\prefixcon \vy}; \node[below=of I@y](Iy){};
		\draw[->](yI@Iy) to (I@y.north);
		\ltgnode[dotted,left=of Iy]{SI2}{\S}; \addPrefix{SI2}{\vx\prefixcon \vy};
		\draw[->](I@y) to (SI2.north);
		\ltgnode[dotted,below=of SI2]{SSI2}{\S}; \addPrefix{SSI2}{\vx};
		\draw[->](SI2) to (SSI2.north);
		\ltgnode[right=of Iy]{y2}{\0}; \addPrefix{y2}{\vx};
		\node[right=of y2](y2l){}; \draw(y2) to (y2l.center);
		\draw[->](I@y) to (y2.north);
		\ltgnode[node distance=1cm, left=of yIIy]{y@I}{@}; \addPrefix{y@I}{\vx\prefixcon \vy}; \node[below=of y@I](yI){};
		\draw[->](yI@Iy) to (y@I.north);
		\ltgnode[left=of yI]{y1}{\0}; \addPrefix{y1}{\vx\prefixcon \vy};
		\node[left=of y1](y1l){}; \draw(y1) to (y1l.center);
		\draw[->](y@I) to (y1.north);
		\ltgnode[dotted,right=of yI]{SI1}{\S}; \addPrefix{SI1}{\vx\prefixcon \vy};
		\draw[->](y@I) to (SI1.north);
		\ltgnode[dotted,below=of SI1]{SSI1}{\S}; \addPrefix{SSI1}{\vx};
		\draw[->](SI1) to (SSI1.north);
		\node[below=of SSI1](Imerge){};
		\ltgnode[dotted,node distance=2cm,right=of I@y]{Sff}{\S}; \addPrefix{Sff}{\vx};
		\ltgnode[below=of Sff]{f@f}{@}; \addPrefix{f@f}{\vx\prefixcon \vy}; \node[below=of f@f](ff){};
		\draw[->](Sff) to (f@f.north);
		\node[node distance=1cm,right=of ff](fmerge){};
		\node[ left=of ff](f1){}; \draw(f@f) to (f1.south); \draw(f1.south) |- (fmerge.south);
		\node[right=of ff](f2){}; \draw(f@f) to (f2.center); \draw(f2.center) to (fmerge.center);
		\ltgnode[above=of yIIyff.north]{yIIy@ff}{@}; \addPrefix{yIIy@ff}{\vx\prefixcon \vy};
		\draw[->](yIIy@ff) to (yI@Iy.north); \draw[->](yIIy@ff) to (Sff.north);
		\ltgnode[above=of yIIy@ff]{y}{λ}; \addPos{y}{\vy}; \addPrefix{y}{\vx};
		\draw[->](y2l.center) |- (y.east); \draw[->](y1l.center) |- (y.west);
		\draw[->](y) to (yIIy@ff.north);
		\ltgnode[above=of y]{x}{λ}; \addPos{x}{\vx}; \addPrefix{x}{};
		\draw[->](x) to (y.north);
		\draw[<-](x.north) to +(0mm,4mm);
		\ltgnode[node distance=5.5cm,left=of y]{Iindir}{\indir}; \addPos{Iindir}{\vI};
		\ltgnode[below=of Iindir]{z}{λ}; \addPos{z}{\vz}; \addPrefix{z}{};
		\draw[->](Iindir) to (z.north);
		\ltgnode[below=of z]{I}{\0}; \addPrefix{I}{\vz};
		\node[left=of I](Iwest){}; \draw(I.west) to (Iwest.center); \draw[->] (Iwest.center) |- (z.west);
		\draw[->](z) to (I.north);
		\ltgnode[node distance=1.8cm,right=of y]{findir}{\indir}; \addPos{findir}{\vf};
		\ltgnode[below=of findir]{f}{\0}; \addPrefix{f}{\vx};
		\draw[->](findir) to (f.north);
		\node[left=of f](fwest){}; \draw(f.west) to (fwest.center); \draw[->] (fwest.center) |- (x.east);
		\draw[->](fmerge.south) |- (findir.east);
		\draw(SSI1) to (Imerge.center);
		\draw(SSI2) |- (Imerge.center);
		\node[right=of Iindir](Iindireast){};
		\draw[->](Iindireast.center) to (Iindir.east);
		\draw(Imerge.center) -| (Iindireast.center);
	}
	\transstep{\eraseIndirectionVertices}{
		\node(yIIyff){};
		\ltgnode[node distance=1.5cm,left=of yIIyff]{yI@Iy}{@}; \addPrefix{yI@Iy}{\vx\prefixcon \vy}; \node[below=of yI@Iy](yIIy){};
		\ltgnode[node distance=1cm,right=of yIIy]{I@y}{@}; \addPrefix{I@y}{\vx\prefixcon \vy}; \node[below=of I@y](Iy){};
		\draw[->](yI@Iy) to (I@y.north);
		\ltgnode[dotted,left=of Iy]{SI2}{\S}; \addPrefix{SI2}{\vx\prefixcon \vy};
		\draw[->](I@y) to (SI2.north);
		\ltgnode[dotted,below=of SI2]{SSI2}{\S}; \addPrefix{SSI2}{\vx};
		\draw[->](SI2) to (SSI2.north);
		\ltgnode[right=of Iy]{y2}{\0}; \addPrefix{y2}{\vx};
		\node[right=of y2](y2l){}; \draw(y2) to (y2l.center);
		\draw[->](I@y) to (y2.north);
		\ltgnode[node distance=1cm, left=of yIIy]{y@I}{@}; \addPrefix{y@I}{\vx\prefixcon \vy}; \node[below=of y@I](yI){};
		\draw[->](yI@Iy) to (y@I.north);
		\ltgnode[left=of yI]{y1}{\0}; \addPrefix{y1}{\vx\prefixcon \vy};
		\node[left=of y1](y1l){}; \draw(y1) to (y1l.center);
		\draw[->](y@I) to (y1.north);
		\ltgnode[dotted,right=of yI]{SI1}{\S}; \addPrefix{SI1}{\vx\prefixcon \vy};
		\draw[->](y@I) to (SI1.north);
		\ltgnode[dotted,below=of SI1]{SSI1}{\S}; \addPrefix{SSI1}{\vx};
		\draw[->](SI1) to (SSI1.north);
		\ltgnode[dotted,node distance=1.6cm,right=of yIIyff]{Sff}{\S}; \addPrefix{Sff}{\vx};
		\ltgnode[below=of Sff]{f@f}{@}; \addPrefix{f@f}{\vx\prefixcon \vy}; \node[below=of f@f](ff){};
		\draw[->](Sff) to (f@f.north);
		\node[node distance=1.4cm,right=of ff](fmerge){};
		\ltgnode[above=of yIIyff]{yIIy@ff}{@}; \addPrefix{yIIy@ff}{\vx\prefixcon \vy};
		\draw[->](yIIy@ff) to (yI@Iy.north); \draw[->](yIIy@ff) to (Sff.north);
		\ltgnode[above=of yIIy@ff]{y}{λ}; \addPos{y}{\vy}; \addPrefix{y}{\vx};
		\draw[->](y2l.center) |- (y.east); \draw[->](y1l.center) |- (y.west);
		\draw[->](y) to (yIIy@ff.north);
		\ltgnode[above=of y]{x}{λ}; \addPos{x}{\vx}; \addPrefix{x}{};
		\draw[->](x) to (y.north);
		\draw[<-](x.north) to +(0mm,4mm);
		\ltgnode[node distance=2.5cm,below=of yIIy]{z}{λ}; \addPos{z}{\vz}; \addPrefix{z}{};
		\ltgnode[below=of z]{I}{\0}; \addPrefix{I}{\vz};
		\node[left=of I](Iwest){}; \draw(I.west) to (Iwest.center); \draw[->] (Iwest.center) |- (z.west);
		\draw[->](z) to (I.north);
		\ltgnode[below=of ff]{f}{\0}; \addPrefix{f}{\vx};
		\draw[->, bend right](f@f) to (f.north);
		\draw[->, bend left](f@f) to (f.north);
		\node[right=of f](feast){}; \draw(f.east) to (feast.center); \draw[->] (feast.center) |- (x.east);
		\draw[->](SSI1) to (z.north);
		\draw[->](SSI2) to (z.north);
	}
\end{transenv}

\begin{para}[translation 2: maximal \S-sharing]
	Second, we give the translation of the same term \m{\L_2} (from \cref{ex:translations}) with the process needed for the first-order term graph semantics \graphsemC{\classltgs}{}, yielding \graphsemC{\classltgs}{\L_2} in \cref{fig:def:graphsem:lhotgs}, and the corresponding λ-ap-ho-term-graph \graphsemC{\classlhotgs}{\L_2}. The obtained λ-term-graph \graphsemC{\classltgs}{\L_2} differs from the λ-term-graph \graphsemCmin{\classltgs}{\L_2} obtained in \cref{translation-1} by a higher degree of \S-sharing.
\end{para}

\begin{transenv}
	\transinit{
		\node[draw,shape=transbox](t){\recprefixed{\starfs{}}{\abs{x}{\abs{y}{\letin{I=\abs{z}{z},f=x}{\app{\app{\app{y}{I}}{(\app{I}{y})}}{(\app{f}{f})}}}}}};
		\draw[<-](t.north) to +(0mm,6mm);
	}
	\transstep{λ}{
		\node[draw,shape=transbox](t){\recprefixed{\starfs{}\prefixcon x^\vx\fs{}}{\abs{y}{\letin{I=\abs{z}{z},f=x}{\app{\app{\app{y}{I}}{(\app{I}{y})}}{(\app{f}{f})}}}}};
		\ltgnode[above=of t]{x}{λ}; \addPos{x}{\vx}; \addPrefix{x}{};
		\draw[<-](x.north) to +(0mm,4mm);
		\draw[->](x) to (t.north);
	}
	\transstep{λ}{
		\node[draw,shape=transbox](t){\recprefixed{\starfs{}\prefixcon x^\vx\fs{}\prefixcon y^\vy\fs{}}{\letin{I=\abs{z}{z},f=x}{\app{\app{\app{y}{I}}{(\app{I}{y})}}{(\app{f}{f})}}}};
		\ltgnode[above=of t]{y}{λ}; \addPos{y}{\vy}; \addPrefix{y}{\vx};
		\draw[->](y) to (t.north);
		\ltgnode[above=of y]{x}{λ}; \addPos{x}{\vx}; \addPrefix{x}{};
		\draw[->](x) to (y.north);
		\draw[<-](x.north) to +(0mm,4mm);
	}
	\transstep{\Let}{
		\node[draw,shape=transbox](yIIyff){\recprefixed{\starfs{}\prefixcon x^\vx\fs{f^\vf\commentOut{=x}}\prefixcon y^\vy\fs{I^\vI\commentOut{=\abs{z}{z}}}}{\app{\app{\app{y}{I}}{(\app{I}{y})}}{(\app{f}{f})}}};
		\ltgnode[above=of yIIyff]{y}{λ}; \addPos{y}{\vy}; \addPrefix{y}{\vy};
		\draw[->](y) to (yIIyff.north);
		\ltgnode[above=of y]{x}{λ}; \addPos{x}{\vx}; \addPrefix{x}{};
		\draw[->](x) to (y.north);
		\draw[<-](x.north) to +(0mm,4mm);
		\ltgnode[left=of yIIyff]{Iindir}{\indir}; \addPos{Iindir}{\vI};
		\node[draw,shape=transbox,below=of Iindir](I){\recprefixed{\starfs{}\prefixcon x^\vx\fs{f^\vf\commentOut{=x}}\prefixcon y^\vy\fs{I^\vI\commentOut{=\abs{z}{z}}}}{\abs{z}{z}}};
		\draw[->](Iindir) to (I.north);
		\ltgnode[right=of yIIyff]{findir}{\indir}; \addPos{findir}{\vf};
		\node[draw,shape=transbox,below=of findir](f){\recprefixed{\starfs{}\prefixcon x^\vx\fs{f^\vf\commentOut{=x}}}{x}};
		\draw[->](findir) to (f.north);
	}
	\transstep{\S,@,\0}{
		\node(yIIyff){};
		\node[node distance=0mm,draw,shape=transbox,left=of yIIyff](yIIy){\recprefixed{\starfs{}\prefixcon x^\vx\fs{f^\vf\commentOut{=x}}\prefixcon y^\vy\fs{I^\vI\commentOut{=\abs{z}{z}}}}{\app{\app{y}{I}}{(\app{I}{y})}}};
		\node[node distance=0mm,draw,shape=transbox,right=of yIIyff](ff){\recprefixed{\starfs{}\prefixcon x^\vx\fs{f^\vf\commentOut{=x}}\prefixcon y^\vy\fs{I^\vI\commentOut{=\abs{z}{z}}}}{\app{f}{f}}};
		\ltgnode[above=of yIIyff]{yIIy@ff}{@}; \addPrefix{yIIy@ff}{\vx\prefixcon \vy};
		\draw[->](yIIy@ff) to (yIIy.north); \draw[->](yIIy@ff) to (ff.north);
		\ltgnode[node distance=6mm,above=of yIIy@ff]{y}{λ}; \addPos{y}{\vy}; \addPrefix{y}{\vx};
		\draw[->](y) to (yIIy@ff.north);
		\ltgnode[above=of y]{x}{λ}; \addPos{x}{\vx}; \addPrefix{x}{};
		\draw[->](x) to (y.north);
		\draw[<-](x.north) to +(0mm,4mm);
		\ltgnode[dotted,node distance=4cm,left=of y]{z}{\S}; \addPrefix{z}{\vx\prefixcon \vy};
		\ltgnode[above=of z]{Iindir}{\indir}; \addPos{Iindir}{\vI};
		\draw[->](Iindir) to (z.north);
		\node[draw,shape=transbox,below=of z](I){\recprefixed{\starfs{}\prefixcon x^\vx\fs{f^\vf\commentOut{=x}}}{\abs{z}{z}}};
		\draw[->](z) to (I.north);
		\ltgnode[node distance=3cm,right=of y]{findir}{\indir}; \addPos{findir}{\vf};
		\ltgnode[below=of findir]{f}{\0}; \addPrefix{f}{\vx};
		\draw[->](findir) to (f.north);
		\node[left=of f](fwest){}; \draw(f.west) to (fwest.center); \draw[->] (fwest.center) |- (x.east);
	}
	\transstep{\S, @, \S}{
		\node(yIIyff){};
		\ltgnode[dotted,node distance=1.5cm,right=of yIIyff]{Sff}{\S}; \addPrefix{Sff}{\vx\prefixcon \vy};
		\node[draw,shape=transbox,below=of Sff](ff){\recprefixed{\starfs{}\prefixcon x^\vx\fs{f^\vf\commentOut{=x}}}{\app{f}{f}}};
		\draw[->](Sff) to (ff.north);
		\ltgnode[node distance=1.5cm,left=of ff]{yI@Iy}{@}; \addPrefix{yI@Iy}{\vx\prefixcon \vy};
		\node[node distance=7mm,below=of yI@Iy](yIIy){};
		\node[node distance=0mm,draw,shape=transbox, left=of yIIy](yI){\recprefixed{\starfs{}\prefixcon x^\vx\fs{f^\vf\commentOut{=x}}\prefixcon y^\vy\fs{I^\vI\commentOut{=\abs{z}{z}}}}{\app{y}{I}}};
		\draw[->](yI@Iy) to (yI.north);
		\node[node distance=0mm,draw,shape=transbox,right=of yIIy](Iy){\recprefixed{\starfs{}\prefixcon x^\vx\fs{f^\vf\commentOut{=x}}\prefixcon y^\vy\fs{I^\vI\commentOut{=\abs{z}{z}}}}{\app{I}{y}}};
		\draw[->](yI@Iy) to (Iy.north);
		\ltgnode[above=of yIIyff]{yIIy@ff}{@}; \addPrefix{yIIy@ff}{\vx\prefixcon \vy};
		\draw[->](yIIy@ff) to (yI@Iy.north); \draw[->](yIIy@ff) to (Sff.north);
		\ltgnode[above=of yIIy@ff]{y}{λ}; \addPos{y}{\vy}; \addPrefix{y}{\vx};
		\draw[->](y) to (yIIy@ff.north);
		\ltgnode[above=of y]{x}{λ}; \addPos{x}{\vx}; \addPrefix{x}{};
		\draw[->](x) to (y.north);
		\draw[<-](x.north) to +(0mm,4mm);
		\ltgnode[dotted,node distance=4cm,left=of y]{z}{\S}; \addPrefix{z}{\vx\prefixcon \vy};
		\ltgnode[above=of z]{Iindir}{\indir}; \addPos{Iindir}{\vI};
		\draw[->](Iindir) to (z.north);
		\ltgnode[dotted,below=of z]{zS}{\S}; \addPrefix{zS}{\vx};
		\node[draw,shape=transbox,below=of zS](I){\recprefixed{\starfs{}}{\abs{z}{z}}};
		\draw[->](z) to (zS.north);
		\draw[->](zS) to (I.north);
		\ltgnode[node distance=3cm,right=of y]{findir}{\indir}; \addPos{findir}{\vf};
		\ltgnode[below=of findir]{f}{\0}; \addPrefix{f}{\vx};
		\draw[->](findir) to (f.north);
		\node[left=of f](fwest){}; \draw(f.west) to (fwest.center); \draw[->] (fwest.center) |- (x.east);
	}
	\transstep{\mathrlap{λ,@,@,@}}{
		\node(yIIyff){};
		\ltgnode[dotted,node distance=2cm,right=of yIIyff]{Sff}{\S}; \addPrefix{Sff}{\vx\prefixcon \vy};
		\ltgnode[below=of Sff]{f@f}{@}; \addPrefix{f@f}{\vx\prefixcon \vy}; \node[below=of f@f](ff){};
		\draw[->](Sff) to (f@f.north);
		\node[node distance=0mm,draw,shape=transbox, left=of ff](f1){\recprefixed{\starfs{}\prefixcon x^\vx\fs{f^\vf\commentOut{=x}}}{f}};
		\draw[->](f@f) to (f1.north);
		\node[node distance=0mm,draw,shape=transbox,right=of ff](f2){\recprefixed{\starfs{}\prefixcon x^\vx\fs{f^\vf\commentOut{=x}}}{f}};
		\draw[->](f@f) to (f2.north);
		\ltgnode[node distance=1cm,left=of f1]{yI@Iy}{@}; \addPrefix{yI@Iy}{\vx\prefixcon \vy};
		\node[node distance=7mm,below=of yI@Iy](yIIy){};
		\ltgnode[node distance=2.4cm,right=of yIIy]{I@y}{@}; \addPrefix{I@y}{\vx\prefixcon \vy}; \node[below=of I@y](Iy){};
		\draw[->](yI@Iy) to (I@y.north);
		\node[node distance=0mm,draw,shape=transbox, left=of Iy](I2){\recprefixed{\starfs{}\prefixcon x^\vx\fs{f^\vf\commentOut{=x}}\prefixcon y^\vy\fs{I^\vI\commentOut{=\abs{z}{z}}}}{I}};
		\draw[->](I@y) to (I2.north);
		\node[node distance=0mm,draw,shape=transbox,right=of Iy](y2){\recprefixed{\starfs{}\prefixcon x^\vx\fs{f^\vf\commentOut{=x}}\prefixcon y^\vy\fs{I^\vI\commentOut{=\abs{z}{z}}}}{y}};
		\draw[->](I@y) to (y2.north);
		\ltgnode[node distance=1cm, left=of I2]{y@I}{@}; \addPrefix{y@I}{\vx\prefixcon \vy};
		\node[node distance=7mm,below=of y@I](y@@I){}; \node[node distance=8mm,right=of y@@I](yI){}; \draw[->](yI@Iy) to (y@I.north);
		\node[node distance=0mm,draw,shape=transbox, left=of yI](y1){\recprefixed{\starfs{}\prefixcon x^\vx\fs{f^\vf\commentOut{=x}}\prefixcon y^\vy\fs{I^\vI\commentOut{=\abs{z}{z}}}}{y}};
		\draw[->](y@I) to (y1.north);
		\node[node distance=0mm,draw,shape=transbox, right=of yI](I1){\recprefixed{\starfs{}\prefixcon x^\vx\fs{f^\vf\commentOut{=x}}\prefixcon y^\vy\fs{I^\vI\commentOut{=\abs{z}{z}}}}{I}};
		\draw[->](y@I) to (I1.north);
		\ltgnode[above=of yIIyff]{yIIy@ff}{@}; \addPrefix{yIIy@ff}{\vx\prefixcon \vy};
		\draw[->](yIIy@ff) to (yI@Iy.north); \draw[->](yIIy@ff) to (Sff.north);
		\ltgnode[above=of yIIy@ff]{y}{λ}; \addPos{y}{\vy}; \addPrefix{y}{\vx};
		\draw[->](y) to (yIIy@ff.north);
		\ltgnode[above=of y]{x}{λ}; \addPos{x}{\vx}; \addPrefix{x}{};
		\draw[->](x) to (y.north);
		\draw[<-](x.north) to +(0mm,4mm);
		\ltgnode[node distance=3.2cm,left=of x]{Iindir}{\indir}; \addPos{Iindir}{\vI};
		\ltgnode[dotted,below=of Iindir]{SSz}{\S}; \addPrefix{SSz}{\vx\prefixcon \vy};
		\draw[->](Iindir) to (SSz.north);
		\ltgnode[dotted,below=of SSz]{Sz}{\S}; \addPrefix{Sz}{\vx};
		\draw[->](SSz) to (Sz.north);
		\ltgnode[below=of Sz]{z}{λ}; \addPos{z}{\vz}; \addPrefix{z}{};
		\draw[->](Sz) to (z.north);
		\node[draw,shape=transbox,below=of z](I){\recprefixed{\starfs{}\prefixcon z^\vz\fs{}}{z}};
		\draw[->](z) to (I.north);
		\ltgnode[node distance=4cm,right=of y]{findir}{\indir}; \addPos{findir}{\vf};
		\ltgnode[below=of findir]{f}{\0}; \addPrefix{f}{\vx};
		\draw[->](findir) to (f.north);
		\node[left=of f](fwest){}; \draw(f.west) to (fwest.center); \draw[->] (fwest.center) |- (x.east);
	}
	\transstep{\0, \0, f, f, \0, f, f}{
		\node(yIIyff){};
		\ltgnode[node distance=1.2cm,left=of yIIyff]{yI@Iy}{@}; \addPrefix{yI@Iy}{\vx\prefixcon \vy}; \node[below=of yI@Iy](yIIy){};
		\ltgnode[right=of yIIy]{I@y}{@}; \addPrefix{I@y}{\vx\prefixcon \vy}; \node[below=of I@y](Iy){};
		\draw[->](yI@Iy) to (I@y.north);
		\node[left=of Iy](SI2){};
		\draw(I@y) to (SI2.center);
		\ltgnode[right=of Iy]{y2}{\0}; \addPrefix{y2}{\vx\prefixcon \vy};
		\node[right=of y2](y2l){}; \draw(y2) to (y2l.center);
		\draw[->](I@y) to (y2.north);
		\ltgnode[left=of yIIy]{y@I}{@}; \addPrefix{y@I}{\vx\prefixcon \vy}; \node[below=of y@I](yI){};
		\draw[->](yI@Iy) to (y@I.north);
		\ltgnode[left=of yI]{y1}{\0}; \addPrefix{y1}{\vx\prefixcon \vy};
		\node[left=of y1](y1l){}; \draw(y1) to (y1l.center);
		\draw[->](y@I) to (y1.north);
		\node[right=of yI](SI1){};
		\draw(y@I) to (SI1.center);
		\node[below=of SI1](Imerge){};
		\draw(SI1.center) to (Imerge.center);
		\draw(SI2.center) |- (Imerge.center);
		\ltgnode[dotted,node distance=1.3cm,right=of yIIyff]{Sff}{\S}; \addPrefix{Sff}{\vx\prefixcon \vy};
		\ltgnode[below=of Sff]{f@f}{@}; \addPrefix{f@f}{\vx\prefixcon \vy}; \node[below=of f@f](ff){};
		\draw[->](Sff) to (f@f.north);
		\node[node distance=1cm,right=of ff](fmerge){};
		\node[ left=of ff](f1){}; \draw(f@f) to (f1.south); \draw(f1.south) |- (fmerge.south);
		\node[right=of ff](f2){}; \draw(f@f) to (f2.center); \draw(f2.center) to (fmerge.center);
		\ltgnode[above=of yIIyff]{yIIy@ff}{@}; \addPrefix{yIIy@ff}{\vx\prefixcon \vy};
		\draw[->](yIIy@ff) to (yI@Iy.north); \draw[->](yIIy@ff) to (Sff.north);
		\ltgnode[above=of yIIy@ff]{y}{λ}; \addPos{y}{\vy}; \addPrefix{y}{\vx};
		\draw[->](y2l.center) |- (y.east); \draw[->](y1l.center) |- (y.west);
		\draw[->](y) to (yIIy@ff.north);
		\ltgnode[above=of y]{x}{λ}; \addPos{x}{\vx}; \addPrefix{x}{};
		\draw[->](x) to (y.north);
		\draw[<-](x.north) to +(0mm,4mm);
		\ltgnode[node distance=5cm,left=of y]{Iindir}{\indir}; \addPos{Iindir}{\vI};
		\node[right=of Iindir](Ieast){};
		\draw[->](Ieast.center) to (Iindir);
		\ltgnode[dotted,below=of Iindir]{SSz}{\S}; \addPrefix{SSz}{\vx\prefixcon \vy};
		\draw[->](Iindir) to (SSz.north);
		\ltgnode[dotted,below=of SSz]{Sz}{\S}; \addPrefix{Sz}{\vx};
		\draw[->](SSz) to (Sz.north);
		\ltgnode[below=of Sz]{z}{λ}; \addPos{z}{\vz}; \addPrefix{z}{};
		\draw[->](Sz) to (z.north);
		\ltgnode[below=of z]{I}{\0}; \addPrefix{I}{\vz};
		\node[left=of I](Iwest){}; \draw(I.west) to (Iwest.center); \draw[->] (Iwest.center) |- (z.west);
		\draw[->](z) to (I.north);
		\ltgnode[node distance=1.5cm,right=of y]{findir}{\indir}; \addPos{findir}{\vf};
		\ltgnode[below=of findir]{f}{\0}; \addPrefix{f}{\vx};
		\draw[->](findir) to (f.north);
		\node[left=of f](fwest){}; \draw(f.west) to (fwest.center); \draw[->] (fwest.center) |- (x.east);
		\draw[->](fmerge.south) |- (findir.east);
		\draw(Imerge.center) -| (Ieast.center);
	}
	\transstep{\eraseIndirectionVertices}{
		\node(yIIyff){};
		\ltgnode[node distance=1.2cm,left=of yIIyff]{yI@Iy}{@}; \addPrefix{yI@Iy}{\vx\prefixcon \vy}; \node[below=of yI@Iy](yIIy){};
		\ltgnode[right=of yIIy]{I@y}{@}; \addPrefix{I@y}{\vx\prefixcon \vy}; \node[below=of I@y](Iy){};
		\draw[->](yI@Iy) to (I@y.north);
		\node[left=of Iy](SI2){};
		\ltgnode[right=of Iy]{y2}{\0}; \addPrefix{y2}{\vx\prefixcon \vy};
		\node[right=of y2](y2l){}; \draw(y2) to (y2l.center);
		\draw[->](I@y) to (y2.north);
		\ltgnode[left=of yIIy]{y@I}{@}; \addPrefix{y@I}{\vx\prefixcon \vy}; \node[below=of y@I](yI){};
		\draw[->](yI@Iy) to (y@I.north);
		\ltgnode[left=of yI]{y1}{\0}; \addPrefix{y1}{\vx\prefixcon \vy};
		\node[left=of y1](y1l){}; \draw(y1) to (y1l.center);
		\draw[->](y@I) to (y1.north);
		\node[right=of yI](SI1){};
		\ltgnode[dotted,node distance=1.3cm,right=of yIIyff]{Sff}{\S}; \addPrefix{Sff}{\vx\prefixcon \vy};
		\ltgnode[below=of Sff]{f@f}{@}; \addPrefix{f@f}{\vx\prefixcon \vy}; \node[below=of f@f](ff){};
		\draw[->](Sff) to (f@f.north);
		\ltgnode[above=of yIIyff]{yIIy@ff}{@}; \addPrefix{yIIy@ff}{\vx\prefixcon \vy};
		\draw[->](yIIy@ff) to (yI@Iy.north); \draw[->](yIIy@ff) to (Sff.north);
		\ltgnode[above=of yIIy@ff]{y}{λ}; \addPos{y}{\vy}; \addPrefix{y}{\vx};
		\draw[->](y2l.center) |- (y.east); \draw[->](y1l.center) |- (y.west);
		\draw[->](y) to (yIIy@ff.north);
		\ltgnode[above=of y]{x}{λ}; \addPos{x}{\vx}; \addPrefix{x}{};
		\draw[->](x) to (y.north);
		\draw[<-](x.north) to +(0mm,4mm);
		\ltgnode[distance=2.5cm,dotted,below=of yIIy]{SSz}{\S}; \addPrefix{SSz}{\vx\prefixcon \vy};
		\draw[->](y@I) to (SSz);
		\draw[->](I@y) to (SSz);
		\ltgnode[dotted,below=of SSz]{Sz}{\S}; \addPrefix{Sz}{\vx};
		\draw[->](SSz) to (Sz.north);
		\ltgnode[below=of Sz]{z}{λ}; \addPos{z}{\vz}; \addPrefix{z}{};
		\draw[->](Sz) to (z.north);
		\ltgnode[below=of z]{I}{\0}; \addPrefix{I}{\vz};
		\node[left=of I](Iwest){}; \draw(I.west) to (Iwest.center); \draw[->] (Iwest.center) |- (z.west);
		\draw[->](z) to (I.north);
		\ltgnode[below=of ff]{f}{\0}; \addPrefix{f}{\vx};
		\node[right=of f](feast){}; \draw(f.east) to (feast.center); \draw[->] (feast.center) |- (x.east);
		\draw[->,bend right](f@f) to (f);
		\draw[->,bend left](f@f) to (f);
	}
\end{transenv}

\chapter{Implementation Showcase}\label{app:impl:showcase}

To demonstrate the realisability of our method, we have implemented the methods from \cref{chap:maxsharing}. The implementation is called \href{http://hackage.haskell.org/package/maxsharing/}{maxsharing} and is available on Hackage: \url{http://hackage.haskell.org/package/maxsharing/}. It is written in Haskell and therefore requires the \href{http://www.haskell.org/platform/}{Haskell Platform} to be installed. Then, maxsharing can be installed via \href{http://www.haskell.org/haskellwiki/Cabal-Install}{cabal-install} using the commands \texttt{cabal update} and \texttt{cabal install maxsharing} from the terminal. Invoke the executable \texttt{maxsharing} in your cabal-directory with a file as an argument that contains a \lambdaletrec-term. Run \texttt{maxsharing -h} for help on run-time flags.

For further illustration we provide the output of the \href{http://hackage.haskell.org/package/maxsharing/}{maxsharing} tool
for a number of examples from the thesis. Since the volume of the output is too large to include here it is supplied as an external appendix available at \url{http://rochel.info/thesis/}.

\chapter{Confluence of unfolding \lambdaletrec-terms}\label{app:conf_proof}

Here we present a proof of \cref{prop:unf-confluence} which is an adaptation of \cite{grab:roch:confunf} where we prove the confluence of a precursor of \unfCRS. Here the proof below is written out for \unfbhCRS, but works for \unfCRS as well if the critical pairs involving the rules \rulep{\tighten} and \rulep{\bh} are left out.

\begin{proof}[Proof of \cref{prop:unf-confluence}]
	\newcommand\TL{}
	\newcommand\TR{\m{\,\begin{picture}(-1,1)(-1,-2)\circle*{1.5}\end{picture}\ }} 
	\newcommand\BL{\m{\,\begin{picture}(-1,1)(-1,-2)\circle{3.5}\end{picture}\ }}
	\newcommand\BR{\m{\odot}}
	\newcommand\atlcxtBRTRTL[6]{\m{\begin{array}{l}\sacxt[\L_{#1}^{\BR},\dots,\L_{#2}^{\BR},\\\indent \phantom{\sacxt[}\L_{#3}^{\TR},\dots,\L_{#4}^{\TR},\\\indent\indent \phantom{\sacxt[}\L_{#5}^{\TL},\dots,\L_{#6}^{\TL}]\end{array}}}
	\newcommand\atlcxtBRBLTL[6]{\m{\begin{array}{l}\sacxt[\L_{#1}^{\BR},\dots,\L_{#2}^{\BR},\\\indent \phantom{\sacxt[}\L_{#3}^{\BL},\dots,\L_{#4}^{\BL},\\\indent\indent \phantom{\sacxt[}\L_{#5}^{\TL},\dots,\L_{#6}^{\TL}]\end{array}}}
	\newcommand\tilerowsep{1.2cm}
	\newcommand\tilecolsep{1.5cm}
	\newcommand\vpar{\m{=}}
	\newcommand\hpar{\m{||}}

	First of all, we cannot use Newman's Lemma to prove confluence, because \unfbhCRS is not terminating. To show confluence of \unfbhCRS we use the method of `decreasing diagrams' \cite[Section 2.3]{oost:1994} \cite[Section 14.2]{terese:2003}. We use it however not to prove confluence of the rewriting relation \m{\red_\unfold} induced by \unfbhCRS directly, but of the abstract reduction system \m{\aARS = (\Ter\lambdaletreccal, \setcompr{\parred_{\wDepth\arule{d}}}{(d,\arule) \in ℕ \times R})} with \m{R} as the set of rules of \unfbhCRS where \m{\parred_{\wDepth\arule{d}}} denotes the parallel rewriting relation on \Ter\lambdaletreccal induced by rule \arule at \letrec-depth \m{d}. As a precedence order we consider the order induced by the \letrec-depth:
	\[ \parred_{\wDepth\arule{d}} ~≥~ \parred_{\wDepth\brule{d'}} ~ \Longleftrightarrow ~ d \geq d'\]

	The \letrec-depth of a redex in \lambdaletrec-term denotes the number of \Let-nodes passed on the path from the root of the term tree to the corresponding position. We write \m{\red_{\wDepth\arule{d}}} to denote the relation induced by applying rule \arule contracting a redex at \letrec-depth \m{d}.

	Let us denote the rewriting relation induced by \aARS by \m{\red_\aARS}:
	\[ \red_\aARS = \bigcup\,\setcompr{\parred_{\wDepth\arule{d}}}{(d,\arule) \in ℕ \times R}\]

	If \m{\red_\aARS} is confluent then \m{\red_\unfold} is confluent because it holds: \m{\red_\unfold \subseteq \red_\aARS \subseteq \mred_\unfold} or equivalently \m{\mred_\aARS = \mred_\unfold} (see also \cite[Lemma 2.2.5]{oost:1994}).

	We use parallel steps because the preceding attempt to prove confluence of \unfbhCRS-steps themselves by decreasing diagrams was unsuccessful. As a precedent order we considered an ordering on the rules and lexicographic extensions of such orderings with the \letrec-depth of the contracted redex. We came to the conclusion that no such order could ensure decreasingness of the elementary diagrams of both the critical pairs as well as the strictly nested redexes. This was due to redex duplication induced by the diverging steps, so that joining the diagram required a many-step that disrupted decreasingness. In order to resolve this problem we considered parallel steps such that the problematic multi-step would become a single parallel step. This led to more intricate diagrams but turned out to be a viable solution.

	We will show that two diverging parallel steps in \unfbhCRS can be joined in an elementary diagram of the following form with \m{d \leq e}.

	\begin{figure}
		\begin{tikzpicture}
			\matrix[row sep=0.8cm,column sep=1.3cm]{
				\emptynode{tl};&&
				\emptynode{tr};\\
				&&\emptynode{mr};\\
				\emptynode{bl};&
				\emptynode{bm};&
				\emptynode{br};\\
			};
			\draw[->       ](tl) to node{\hpar} node[above]{\wDepth\arule{d}}   (tr);
			\draw[->       ](tl) to node{\vpar} node[left ]{\wDepth\brule{e}}   (bl);
			\draw[->,dotted](bl) to node{\hpar} node[below]{\wDepth\brule{e-1}} (bm);
			\draw[->,dotted](bm) to node{\hpar} node[below]{\wDepth\arule{d}}   (br);
			\draw[->,dotted](tr) to node{\vpar} node[right]{\wDepth\brule{e}}   (mr);
			\draw[->,dotted](mr) to node{\vpar} node[right]{\wDepth\brule{e-1}} (br);
		\end{tikzpicture}
		\caption{Elementary diagram}
		\label{fig:elem_dia}
	\end{figure}

	If we pick as the precedence order on the steps the order that is induced by their \letrec-depth, the diagram is decreasing. Note that in all the diagrams we implicitly assume the reflexive closure for all arrows. The rest of the proof is structured as follows. To justify the diagram we distinguish the cases \m{d = e} and \m{d < e}, for which we construct diagrams that are instances of the diagram in \cref{fig:elem_dia}.

	\section*{Case 1}

	For \m{d = e} we need to consider parallel diverging steps contracting redexes at the same \letrec-depth \m{d}. We construct the diagram below which is an instance of the diagram above where the diverging parallel steps are in sequentialised form. We write terms as fillings of a multihole context \sacxt with all its holes at \letrec-depth \m{d} such that the contracted \wDepth\arule{d}- and \wDepth\brule{d}-redexes are filled into these holes. In this way we can make explicit at which position a step takes place, i.e. at the root of the context hole fillings. The topmost row and the leftmost column are respective sequentialisations of the parallel diverging \wDepth\arule{d}- and \wDepth\brule{d}-steps into single steps.

	\noindent
	\begin{tikzpicture}
		\matrix[row sep=1.2cm,column sep=0.5cm]{
			\node(00){\acxt{\L_0^{\TL}, \dots, \L_n^{\TL}}};&
			\node(10){\acxt{\L_0^{\TR}, \L_1^{\TL} \dots, \L_n^{\TL}}};&
			\node(i0){\dots};&
			\node(n0){\acxt{\L_0^{\TR}, \dots, \L_n^{\TR}}};\\
			\node(01){\acxt{\L_0^{\BL}, \L_1^{\TL}, \dots, \L_n^{\TL}}};&
			\node(11){\acxt{\L_0^{\BR}, \L_1^{\TL}, \dots, \L_n^{\TL}}};&
			\node(i1){\dots};&
			\node(n1){\acxt{\L_0^{\BR}, \L_1^{\TR}, \dots, \L_n^{\TR}}};\\
			\node(0i){\vdots};&
			\node(1i){\vdots};&
			\node(ii){\ddots};&
			\node(ni){\vdots};\\
			\node(0n){\acxt{\L_0^{\BL}, \dots, \L_n^{\BL}}};&
			\node(1n){\acxt{\L_0^{\BR}, \L_1^{\BL}, \dots, \L_n^{\BL}}};&
			\node(in){\dots};&
			\node(nn){\acxt{\L_0^{\BR}, \dots, \L_n^{\BR}}};\\
		};
		\draw[->       ](00)  to            node[above]{\wDepth\arule{d}}    (10);
		\draw[->       ](10)  to            node[above]{\wDepth\arule{d}}    (i0);
		\draw[->       ](i0)  to            node[above]{\wDepth\arule{d}}    (n0);
		\draw[->,dotted](01)  to node{\hpar} node[above]{\wDepth\arule{d}}    (11);
		\draw[->,dotted](11)  to            node[above]{\wDepth\arule{d}}    (i1);
		\draw[->,dotted](i1)  to            node[above]{\wDepth\arule{d}}    (n1);
		\draw[->,dotted](0n)  to node{\hpar} node[above]{\wDepth\arule{d}}    (1n);
		\draw[->,dotted](1n)  to node{\hpar} node[above]{\wDepth\arule{d}}    (in);
		\draw[->,dotted](in)  to node{\hpar} node[above]{\wDepth\arule{d}}    (nn);
		\draw[->       ](00)  to            node[left ]{\wDepth\brule{d}}    (01);
		\draw[->,dotted](10)  to node{\vpar}  node[left ]{\wDepth\brule{d}}    (11);
		\draw[->,dotted](n0)  to node{\vpar}  node[left ]{\wDepth\brule{d}}    (n1);
		\draw[->       ](01)  to            node[left ]{\wDepth\brule{d}}    (0i);
		\draw[->,dotted](11)  to            node[left ]{\wDepth\brule{d}}    (1i);
		\draw[->,dotted](n1)  to node{\vpar}  node[left ]{\wDepth\brule{d}}    (ni);
		\draw[->       ](0i)  to            node[left ]{\wDepth\brule{d}}    (0n);
		\draw[->,dotted](1i)  to            node[left ]{\wDepth\brule{d}}    (1n);
		\draw[->,dotted](ni)  to node{\vpar}  node[left ]{\wDepth\brule{d}}    (nn);
	\end{tikzpicture}

	\noindent
	In this diagram the tiles at \m{(i,j)} for \m{i,j ∈ \set{0,\dots,n-1}} below the diagonal (\m{i<j}) look as follows:\\*
	\begin{tikzpicture}
		\matrix[row sep=\tilerowsep,column sep=\tilecolsep]{
			\node(00){\atlcxtBRBLTL{0}{i-1}{i}{j-1}{j}{n}};&
			\node(10){\atlcxtBRBLTL{0}{i}{i+1}{j-1}{j}{n}};\\
			\node(01){\atlcxtBRBLTL{0}{i-1}{i}{j}{j+1}{n}};&
			\node(11){\atlcxtBRBLTL{0}{i}{i+1}{j}{j+1}{n}};\\
		};
		\draw[->       ](00)  to node{\hpar} node[above]{\wDepth\arule{d}}    (10);
		\draw[->,dotted](01)  to node{\hpar} node[above]{\wDepth\arule{d}}    (11);
		\draw[->       ](00)  to              node[left ]{\wDepth\brule{d}}    (01);
		\draw[->,dotted](10)  to              node[left ]{\wDepth\brule{d}}    (11);
	\end{tikzpicture}

	\noindent
	The tiles at \m{(i,j)} for \m{i,j ∈ \set{0,\dots,n-1}} above the diagonal (\m{i>j}) look as follows:\\*
	\begin{tikzpicture}
		\matrix[row sep=\tilerowsep,column sep=\tilecolsep]{
			\node(00){\atlcxtBRTRTL{0}{j-1}{j}{i-1}{i}{n}};&
			\node(10){\atlcxtBRTRTL{0}{j-1}{j}{i}{i+1}{n}};\\
			\node(01){\atlcxtBRTRTL{0}{j}{j+1}{i-1}{i}{n}};&
			\node(11){\atlcxtBRTRTL{0}{j}{j+1}{i}{i+1}{n}};\\
		};
		\draw[->       ](00)  to             node[above]{\wDepth\arule{d}}    (10);
		\draw[->,dotted](01)  to             node[above]{\wDepth\arule{d}}    (11);
		\draw[->       ](00)  to node{\vpar} node[left ]{\wDepth\brule{d}}    (01);
		\draw[->,dotted](10)  to node{\vpar} node[left ]{\wDepth\brule{d}}    (11);
	\end{tikzpicture}

	\noindent
	For \m{i ∈ \set{0,\dots,n-1}} the \i-th diagonal tile looks like this:\\*
	\begin{tikzpicture}
		\matrix[row sep=\tilerowsep,column sep=\tilecolsep/2]{
			\node(00){\acxt{\L_0^{\BR}, \dots, \L_{i-1}^{\BR}, \L_i^{\TL}, \dots, \L_n^{\TL}}};&
			\node(10){\acxt{\L_0^{\BR}, \dots, \L_{i-1}^{\BR}, \L_i^{\TR}, \L_{i+1}^{\TL}, \dots, \L_n^{\TL}}};\\
			\node(01){\acxt{\L_0^{\BR}, \dots, \L_{i-1}^{\BR}, \L_i^{\BL}, \L_{i+1}^{\TL}, \dots, \L_n^{\TL}}};&
			\node(11){\acxt{\L_0^{\BR}, \dots, \L_i^{\BR}, \L_{i+1}^{\TL}, \dots, \L_n^{\TL}}};\\
		};
		\draw[->       ](00)  to            node[above]{\wDepth\arule{d}}    (10);
		\draw[->,dotted](01)  to node{\hpar} node[above]{\wDepth\arule{d}}    (11);
		\draw[->       ](00)  to            node[left ]{\wDepth\brule{d}}    (01);
		\draw[->,dotted](10)  to node{\vpar}  node[left ]{\wDepth\brule{d}}    (11);
	\end{tikzpicture}

	Only the tiles on the diagonal require closer attention because for all other tiles the vertical and horizontal steps take place in different holes of the context, therefore they are disjoint and consequently commute. In the tiles on the diagonal the diverging steps may be either due to a critical pair or to identical steps. In the latter case the diagram is easily joined. In case of a critical pair, since all steps take place at the same \letrec-depth any such critical pair must arise from a root overlap. Exhaustive scrutiny of all these critical pairs reveals that they can be joined in a way that conforms to the tiles on the diagonal. Note that the \letrec-depths of the steps have to be increased by \m{d} according to the lifting into a context with its hole at \letrec-depth \m{d}.

	\begin{remark}
		Note, that in the critical pair analysis below, we are not always faithful to the actual formulation of the rules, in the sense that we sometimes assume that a binding appears at a specific position in the list of bindings of a \Let-expression. This is merely to save page space and can be easily generalised.
	\end{remark}

	\subsection{Critical pairs due to \wDepth{λ}{0}/\wDepth{\arule}{0}}

	\noindent
	\begin{tikzpicture}
		\matrix{
			\node(tl){\letin{}{\abs{x}\L}};&
			\node(tr){\abs{x}{\letin{}\L}};\\
			\node(bl){\abs{x}\L};&
			\node(br){\abs{x}\L};\\
		};
		\draw[->       ](tl) to node[above]{\wDepth{λ}{0}}    (tr);
		\draw[->       ](tl) to node[left ]{\wDepth{\nil}{0}} (bl);
		\draw[->,dotted](tr) to node[right]{\wDepth{\nil}{0}} (br);
		\draw[double   ](bl) to                               (br);
	\end{tikzpicture}
	\hfill
	\begin{tikzpicture}
		\matrix{
			\node(tl){\letin{B}{\abs{x}\L}};&
			\node(tr){\abs{x}{\letin{B}\L}};\\
			\node(bl){\letin{B'}{\abs{x}\L}};&
			\node(br){\abs{x}{\letin{B'}\L}};\\
		};
		\draw[->       ](tl) to node[above]{\wDepth{λ}{0}}     (tr);
		\draw[->       ](tl) to node[left ]{\wDepth\reduce{0}} (bl);
		\draw[->,dotted](tr) to node[right]{\wDepth\reduce{0}} (br);
		\draw[->,dotted](bl) to node[below]{\wDepth{λ}{0}}     (br);
	\end{tikzpicture}

	\noindent
	\begin{tikzpicture}
		\matrix{
			\node(tl){\letin{B,f=g}{\abs{x}\L}};&
			\node(tr){\abs{x}{\letin{B,f=g}\L}};\\
			\node(bl){\letin{\subst{B}{f}{g}}{\abs{x}{\subst{\L}{f}{g}}}};&
			\node(br){\abs{x}{\letin{\subst{B}{f}{g}}{\subst{\L}{f}{g}}}};\\
		};
		\draw[->       ](tl) to node[above]{\wDepth{λ}{0}}      (tr);
		\draw[->       ](tl) to node[left ]{\wDepth\tighten{0}} (bl);
		\draw[->,dotted](tr) to node[right]{\wDepth\tighten{0}} (br);
		\draw[->,dotted](bl) to node[below]{\wDepth{λ}{0}}      (br);
	\end{tikzpicture}

	\noindent
	\begin{tikzpicture}
		\matrix{
			\node(tl){\letin{B,f=f}{\abs{x}\L}};&
			\node(tr){\abs{x}{\letin{B,f=f}\L}};\\
			\node(bl){\letin{\subst{B}{f}{\bh}}{\abs{x}{\subst{\L}{f}{\bh}}}};&
			\node(br){\abs{x}{\letin{\subst{B}{f}{\bh}}{\subst{\L}{f}{\bh}}}};\\
		};
		\draw[->       ](tl) to node[above]{\wDepth{λ}{0}} (tr);
		\draw[->       ](tl) to node[left ]{\wDepth\bh{0}} (bl);
		\draw[->,dotted](tr) to node[right]{\wDepth\bh{0}} (br);
		\draw[->,dotted](bl) to node[below]{\wDepth{λ}{0}} (br);
	\end{tikzpicture}

	\subsection{Critical pairs due to \wDepth{@}{0}/\wDepth{\brule}{0}}

	\noindent
	\begin{tikzpicture}
		\matrix{
			\node(tl){\letin{}{\app{L}{P}}};&
			\node(tr){\appbreak{\letin{}\L}{\letin{}{P}}};\\
			\node(bl){\app\L{P}};&
			\node(br){\app\L{P}};\\
		};
		\draw[->       ](tl) to           node[above]{\wDepth{@}{0}} (tr);
		\draw[->       ](tl) to           node[left ]{\wDepth\nil{0}}  (bl);
		\draw[->,dotted](tr) to node{\vpar} node[right]{\wDepth\nil{0}}  (br);
		\draw[double   ](bl) to                                              (br);
	\end{tikzpicture}
	\hfill
	\begin{tikzpicture}
		\matrix{
			\node(tl){\letin{B}{\app\L{P}}};&
			\node(tr){\appbreak{\letin{B}\L}{\letin{B}{P}}};\\
			\node(bl){\letin{B'}{\app\L{P}}};&
			\node(br){\appbreak{\letin{B'}\L}{\letin{B'}{P}}};\\
		};
		\draw[->       ](tl) to           node[above]{\wDepth{@}{0}}   (tr);
		\draw[->       ](tl) to           node[left ]{\wDepth\reduce{0}} (bl);
		\draw[->,dotted](tr) to node{\vpar} node[right]{\wDepth\reduce{0}} (br);
		\draw[->,dotted](bl) to           node[below]{\wDepth{@}{0}}   (br);
	\end{tikzpicture}

	\noindent
	\begin{tikzpicture}
		\matrix{
			\node(tl){\letin{B,f=g}{\app\L{P}}};&
			\node(tr){\appbreak{\letin{B,f=g}\L}{\letin{B}{P}}};\\
			\node(bl){\letin{\subst{B}{f}{g}}{\app{\subst{\L}{f}{g}}{\subst{P}{f}{g}}}};&
			\node(br){\appbreak{\letin{\subst{B}{f}{g}}{\subst{\L}{f}{g}}}{\letin{\subst{B}{f}{g}}{\subst{P}{f}{g}}}};\\
		};
		\draw[->       ](tl) to             node[above]{\wDepth{@}{0}}      (tr);
		\draw[->       ](tl) to             node[left ]{\wDepth\tighten{0}} (bl);
		\draw[->,dotted](tr) to node{\vpar} node[right]{\wDepth\tighten{0}} (br);
		\draw[->,dotted](bl) to             node[below]{\wDepth{@}{0}}      (br);
	\end{tikzpicture}

	\noindent
	\begin{tikzpicture}
		\matrix{
			\node(tl){\letin{B,f=f}{\app\L{P}}};&
			\node(tr){\appbreak{\letin{B,f=f}\L}{\letin{B}{P}}};\\
			\node(bl){\letin{\subst{B}{f}{\bh}}{\app{\subst{\L}{f}{\bh}}{\subst{P}{f}{\bh}}}};&
			\node(br){\appbreak{\letin{\subst{B}{f}{\bh}}{\subst{\L}{f}{\bh}}}{\letin{\subst{B}{f}{\bh}}{\subst{P}{f}{\bh}}}};\\
		};
		\draw[->       ](tl) to             node[above]{\wDepth{@}{0}} (tr);
		\draw[->       ](tl) to             node[left ]{\wDepth\bh{0}} (bl);
		\draw[->,dotted](tr) to node{\vpar} node[right]{\wDepth\bh{0}} (br);
		\draw[->,dotted](bl) to             node[below]{\wDepth{@}{0}} (br);
	\end{tikzpicture}

	\subsection{Critical pairs due to \wDepth{\merg}{0}/\wDepth{\brule}{0}}

	\begin{tikzpicture}
		\matrix{
			\node(tl){\letin{}{\letin{B}\L}};&
			\node(tr){\letin{B}\L};\\
			\node(bl){\letin{B}\L};&
			\node(br){\letin{B}\L};\\
		};
		\draw[->    ](tl) to node[above]{\wDepth\merg{0}} (tr);
		\draw[->    ](tl) to node[left ]{\wDepth\nil{0}}    (bl);
		\draw[double](tr) to                                      (br);
		\draw[double](bl) to                                      (br);
	\end{tikzpicture}

	\noindent
	\begin{tikzpicture}
		\matrix{
			\node(tl){\letin{B}{\letin{C}\L}};&
			\node(tr){\letin{B,C}\L};\\
			\node(bl){\letin{B'}{\letin{C}\L}};&
			\node(br){\letin{B',C}\L};\\
		};
		\draw[->       ](tl) to node[above]{\wDepth\merg{0}}   (tr);
		\draw[->       ](tl) to node[left ]{\wDepth\reduce{0}} (bl);
		\draw[->,dotted](tr) to node[right]{\wDepth\reduce{0}}   (br);
		\draw[->,dotted](bl) to node[below]{\wDepth\merg{0}}   (br);
	\end{tikzpicture}

	\noindent
	\begin{tikzpicture}
		\matrix{
			\node(tl){\letin{B,f=g}{\letin{C}\L}};&
			\node(tr){\letin{B,f=g,C}\L};\\
			\node(bl){\subst{(\letin{B}{\letin{C}{\L}})}{f}{g}};&
			\node(br){\subst{(\letin{B,C}{\L})}{f}{g}};\\
		};
		\draw[->       ](tl) to node[above]{\wDepth\merg{0}}   (tr);
		\draw[->       ](tl) to node[left ]{\wDepth\tighten{0}} (bl);
		\draw[->,dotted](tr) to node[right]{\wDepth\tighten{0}} (br);
		\draw[->,dotted](bl) to node[below]{\wDepth\merg{0}}   (br);
	\end{tikzpicture}

	\noindent
	\begin{tikzpicture}
		\matrix{
			\node(tl){\letin{B,f=f}{\letin{C}\L}};&
			\node(tr){\letin{B,f=f,C}\L};\\
			\node(bl){\subst{(\letin{B}{\letin{C}{\L}})}{f}{\bh}};&
			\node(br){\subst{(\letin{B,C}{\L})}{f}{\bh}};\\
		};
		\draw[->       ](tl) to node[above]{\wDepth\merg{0}} (tr);
		\draw[->       ](tl) to node[left ]{\wDepth\bh{0}}    (bl);
		\draw[->,dotted](tr) to node[right]{\wDepth\bh{0}}    (br);
		\draw[->,dotted](bl) to node[below]{\wDepth\merg{0}} (br);
	\end{tikzpicture}

	\subsection{Critical pairs due to \wDepth{\rec}{0}/\wDepth{\brule}{0}}

	\noindent
	\begin{tikzpicture}
		\matrix{
			\node(tl){\letin{B}{f_i}};&
			\node(tr){\letin{B}{\L_i}};\\
			\node(bl){\letin{B'}{f_i}};&
			\node(br){\letin{B'}{\L_i}};\\
		};
		\draw[->       ](tl) to            node[above]{\wDepth\rec{0}}    (tr);
		\draw[->       ](tl) to            node[left ]{\wDepth\reduce{0}} (bl);
		\draw[->,dotted](tr) to            node[right]{\wDepth\reduce{0}} (br);
		\draw[->,dotted](bl) to            node[below]{\wDepth\rec{0}}    (br);
	\end{tikzpicture}

	\noindent
	There is a \wDepth{\rec}{0}/\wDepth{\tighten}{0} critical pair. We distinguish the cases, whether the body consists of the `tightened' variable:\\*
	\begin{tikzpicture}
		\matrix{
			\node(tl){\letin{B,f=g}{f}};&
			\node(tr){\letin{B,f=g}{g}};\\
			\node(bl){\letin{\subst{B}{f}{g}}{g}};&
			\node(br){\letin{\subst{B}{f}{g}}{g}};\\
		};
		\draw[->       ](tl) to node[above]{\wDepth{\rec}{0}}     (tr);
		\draw[->       ](tl) to node[left ]{\wDepth{\tighten}{0}} (bl);
		\draw[->,dotted](tr) to node[right]{\wDepth{\tighten}{0}} (br);
		\draw[double   ](bl) to                                   (br);
	\end{tikzpicture}

	\noindent
	\begin{tikzpicture}
		\matrix{
			\node(tl){\letin{B,f=g,h=L}{h}};&
			\node(tr){\letin{B,f=g,h=L}{L}};\\
			\node(bl){\letin{\subst{B}{f}{g},h=\subst{L}{f}{g}}{h}};&
			\node(br){\subst{\letin{B,h=L}{L}}{f}{g}};\\
		};
		\draw[->       ](tl) to node[above]{\wDepth{\rec}{0}}     (tr);
		\draw[->       ](tl) to node[left ]{\wDepth{\tighten}{0}} (bl);
		\draw[->,dotted](tr) to node[right]{\wDepth{\tighten}{0}} (br);
		\draw[->,dotted](bl) to node[below]{\wDepth{\rec}{0}}     (br);
	\end{tikzpicture}

	\noindent
	Likewise, there is a \wDepth{\rec}{0}/\wDepth{\bh}{0} critical pair. We distinguish the cases, whether the body consists of the `blackholed' variable:\\*
	\begin{tikzpicture}
		\matrix{
			\node(tl){\letin{B,f=f}{f}};&
			\node(tr){\letin{B,f=f}{f}};\\
			\node(bl){\letin{\subst{B}{f}{\bh}}{\bh}};&
			\node(br){\letin{\subst{B}{f}{\bh}}{\bh}};\\
		};
		\draw[->       ](tl) to node[above]{\wDepth{\rec}{0}} (tr);
		\draw[->       ](tl) to node[left ]{\wDepth{\bh}{0}}  (bl);
		\draw[->,dotted](tr) to node[right]{\wDepth{\bh}{0}}  (br);
		\draw[double   ](bl) to                               (br);
	\end{tikzpicture}

	\noindent
	\begin{tikzpicture}
		\matrix{
			\node(tl){\letin{B,f=f,h=L}{h}};&
			\node(tr){\letin{B,f=f,h=L}{L}};\\
			\node(bl){\letin{\subst{B}{f}{\bh},h=\subst{L}{f}{\bh}}{h}};&
			\node(br){\subst{\letin{B,h=L}{L}}{f}{\bh}};\\
		};
		\draw[->       ](tl) to node[above]{\wDepth{\rec}{0}} (tr);
		\draw[->       ](tl) to node[left ]{\wDepth{\bh}{0}}  (bl);
		\draw[->,dotted](tr) to node[right]{\wDepth{\bh}{0}}  (br);
		\draw[->,dotted](bl) to node[below]{\wDepth{\rec}{0}} (br);
	\end{tikzpicture}

	\subsection{Critical pairs due to \wDepth{\nil}{0}/\wDepth{\brule}{0}}

	There are no additional \wDepth{\nil}{0}/\wDepth{\brule}{0} critical pairs on top of the ones already scrutinised above.

	\subsection{Critical pairs due to \wDepth{\reduce}{0}/\wDepth{\brule}{0}}

	There is a \wDepth{\reduce}{0}/\wDepth{\tighten}{0} critical pair, for which we distinguish the cases whether the `tightened' binding is used.\\
	\begin{tikzpicture}
		\matrix{
			\node(tl){\letin{B,f=g}{\L}};&
			\node(tr){\letin{B',f=g}{\L}};\\
			\node(bl){\letin{\subst{B}{f}{g}}{\subst{\L}{f}{g}}};&
			\node(br){\letin{\subst{B'}{f}{g}}{\subst{\L}{f}{g}}};\\
		};
		\draw[draw=none](tl) to node{\m{f ≠ g}} (br);
		\draw[->       ](tl) to node[above]{\wDepth{\reduce}{0}} (tr);
		\draw[->       ](tl) to node[left ]{\wDepth{\tighten}{0}} (bl);
		\draw[->,dotted](tr) to node[right]{\wDepth{\tighten}{0}} (br);
		\draw[->,dotted](bl) to node[below]{\wDepth{\reduce}{0}} (br);
	\end{tikzpicture}

	\begin{tikzpicture}
		\matrix{
			\node(tl){\letin{B,f=g}{\L}};&
			\node(tr){\letin{B'}{\L}};\\
			\node(bl){\letin{B}{\L}};&
			\node(br){\letin{B'}{\L}};\\
		};
		\draw[->       ](tl) to node[above]{\wDepth{\reduce}{0}} (tr);
		\draw[->       ](tl) to node[left ]{\wDepth{\tighten}{0}} (bl);
		\draw[->,dotted](tr) to node[right]{\wDepth{\tighten}{0}} (br);
		\draw[->,dotted](bl) to node[below]{\wDepth{\reduce}{0}} (br);
	\end{tikzpicture}

	Likewise, there is a \wDepth{\reduce}{0}/\wDepth{\bh}{0} critical pair, for which we distinguish the cases whether the `blackholed' binding is used.\\
	\begin{tikzpicture}
		\matrix{
			\node(tl){\letin{B,f=f}{\L}};&
			\node(tr){\letin{B',f=f}{\L}};\\
			\node(bl){\letin{\subst{B}{f}{\bh}}{\subst{\L}{f}{\bh}}};&
			\node(br){\letin{\subst{B'}{f}{\bh}}{\subst{\L}{f}{\bh}}};\\
		};
		\draw[->       ](tl) to node[above]{\wDepth{\reduce}{0}} (tr);
		\draw[->       ](tl) to node[left ]{\wDepth{\bh}{0}}     (bl);
		\draw[->,dotted](tr) to node[right]{\wDepth{\bh}{0}}     (br);
		\draw[->,dotted](bl) to node[below]{\wDepth{\reduce}{0}} (br);
	\end{tikzpicture}

	\begin{tikzpicture}
		\matrix{
			\node(tl){\letin{B,f=f}{\L}};&
			\node(tr){\letin{B'}{\L}};\\
			\node(bl){\letin{B}{\L}};&
			\node(br){\letin{B'}{\L}};\\
		};
		\draw[->       ](tl) to node[above]{\wDepth{\reduce}{0}} (tr);
		\draw[->       ](tl) to node[left ]{\wDepth{\bh}{0}} (bl);
		\draw[->,dotted](tr) to node[right]{\wDepth{\bh}{0}} (br);
		\draw[->,dotted](bl) to node[below]{\wDepth{\reduce}{0}} (br);
	\end{tikzpicture}

	\subsection{Critical pairs due to \wDepth{\tighten}{0}/\wDepth{\brule}{0}}

	\begin{tikzpicture}
		\matrix{
			\node(tl){\letin{B,f=g,h=i}{\L}};&
			\node(tr){\subst{(\letin{B,h=i}{\L})}{f}{g}};\\
			\node(bl){\subst{(\letin{B,f=g}{\L})}{h}{i}};&
			\node(br){\subst{\subst{(\letin{B}{\L})}{f}{g}}{h}{i}};\\
		};
		\draw[draw=none](tl) to node{\m{f ≠ g ~~ h ≠ i}} (br);
		\draw[->       ](tl) to node[above]{\wDepth\tighten{0}} (tr);
		\draw[->       ](tl) to node[left ]{\wDepth\tighten{0}} (bl);
		\draw[->,dotted](tr) to node[right]{\wDepth\tighten{0}} (br);
		\draw[->,dotted](bl) to node[below]{\wDepth\tighten{0}} (br);
	\end{tikzpicture}

	\noindent
	\begin{tikzpicture}
		\matrix{
			\node(tl){\letin{B,f=g,h=h}{\L}};&
			\node(tr){\subst{(\letin{B,h=h}{\L})}{f}{g}};\\
			\node(bl){\subst{(\letin{B,f=g}{\L})}{h}{\bh}};&
			\node(br){\subst{\subst{(\letin{B}{\L})}{f}{g}}{h}{\bh}};\\
		};
		\draw[draw=none](tl) to node{\m{f ≠ g}} (br);
		\draw[->       ](tl) to node[above]{\wDepth\tighten{0}} (tr);
		\draw[->       ](tl) to node[left ]{\wDepth\bh{0}}      (bl);
		\draw[->,dotted](tr) to node[right]{\wDepth\bh{0}}      (br);
		\draw[->,dotted](bl) to node[below]{\wDepth\tighten{0}} (br);
	\end{tikzpicture}

	\subsection{Critical pairs due to \wDepth{\bh}{0}/\wDepth{\brule}{0}}

	\noindent
	\begin{tikzpicture}
		\matrix{
			\node(tl){\letin{B,f=f,g=g}{\L}};&
			\node(tr){\subst{(\letin{B,g=g}{\L})}{f}{\bh}};\\
			\node(bl){\subst{(\letin{B,f=f}{\L})}{g}{\bh}};&
			\node(br){\subst{\subst{(\letin{B}{\L})}{f}{\bh}}{g}{\bh}};\\
		};
		\draw[->       ](tl) to node[above]{\wDepth\bh{0}} (tr);
		\draw[->       ](tl) to node[left ]{\wDepth\bh{0}} (bl);
		\draw[->,dotted](tr) to node[right]{\wDepth\bh{0}} (br);
		\draw[->,dotted](bl) to node[below]{\wDepth\bh{0}} (br);
	\end{tikzpicture}


	\section*{Case 2}

	For \m{d < e} we use the same approach as for \m{d = e}, the diagram is however more involved. Again, we use a context \sacxt with context holes at \letrec-depth \m{d}. But since \m{e > d}, more than one \wDepth\brule{e}-contraction may take place in one such hole. Therefore a per-hole partitioning of the vertical steps requires a sequence of parallel steps.

	The diagram below fits the scheme of the elementary diagram (\cref{fig:elem_dia}) when interleaving the \wDepth\brule{e}-steps with the \wDepth\brule{e-1}-steps in the rightmost column such that steps at depth \m{e} preceed those at depth \m{e-1}. Similarly for the bottommost row where the \wDepth\arule{e-1}-steps have to preceed the \wDepth\brule{d}-steps. These reorderings are possible since the segments represent contractions within different holes of \sacxt. As in the previous diagram the tiles which do not lie on the diagonal are unproblematic, which leaves us to complete the proof by constructing the tiles on the diagonal.

	\noindent
	\hspace{-1.2cm}
	\begin{tikzpicture}
		\matrix[row sep=0.7cm,column sep=0.45cm]{
			\node(00){\acxt{\L_0^{\TL}, \dots, \L_n^{\TL}}};&
			&
			\node(10){\acxt{\L_0^{\TR}, \L_1^{\TL} \dots, \L_n^{\TL}}};&
			&
			\node(i0){\dots};&
			\emptynode{ih0};&
			\node(n0){\acxt{\L_0^{\TR}, \dots, \L_n^{\TR}}};\\
			\emptynode{00h};&&
			\emptynode{10h};&&
			\emptynode{i0h};&
			\emptynode{ih0h};&
			\emptynode{n0h};\\
			\node(01){\acxt{\L_0^{\BL}, \L_1^{\TL}, \dots, \L_n^{\TL}}};&
			\emptynode{0h1};&
			\node(11){\acxt{\L_0^{\BR}, \L_1^{\TL}, \dots, \L_n^{\TL}}};&
			&
			\node(i1){\dots};&
			\emptynode{ih1};&
			\node(n1){\acxt{\L_0^{\BR}, \L_1^{\TR}, \dots, \L_n^{\TR}}};\\
			\emptynode{01h};&&
			\emptynode{11h};&&
			\emptynode{i1h};&
			\emptynode{ih1h};&
			\emptynode{n1h};\\
			\node(0i){\vdots};&&
			\node(1i){\vdots};&&
			\node(ii){\ddots};&
			\emptynode{ihi};&
			\node(ni){\vdots};\\
			\emptynode{0ih};&&
			\emptynode{1ih};&&
			\emptynode{iih};&
			\emptynode{ihih};&
			\emptynode{nih};\\
			\node(0n){\acxt{\L_0^{\BL}, \dots, \L_n^{\BL}}};&
			\emptynode{0hn};&
			\node(1n){\acxt{\L_0^{\BR}, \L_1^{\BL}, \dots, \L_n^{\BL}}};&
			\emptynode{1hn};&
			\node(in){\dots};&
			\emptynode{ihn};&
			\node(nn){\acxt{\L_0^{\BR}, \dots, \L_n^{\BR}}};\\
		};
		\draw[->       ](00)  to            node[above]{\wDepth\arule{d}}    (10);
		\draw[->       ](10)  to            node[above]{\wDepth\arule{d}}    (i0);
		\draw[->       ](i0)  to            node[above]{\wDepth\arule{d}}    (n0);
		\draw[->,dotted](01)  to            node[above]{\wDepth\brule{e-1}}    (0h1);
		\draw[->,dotted](0h1) to node{\hpar} node[above]{\wDepth\arule{d}}    (11);
		\draw[->,dotted](11)  to            node[above]{\wDepth\arule{d}}    (i1);
		\draw[->,dotted](i1)  to            node[above]{\wDepth\arule{d}}    (n1);
		\draw[->,dotted](0n)  to            node[above]{\wDepth\brule{e-1}}    (0hn);
		\draw[->,dotted](0hn) to node{\hpar} node[above]{\wDepth\arule{d}}    (1n);
		\draw[->,dotted](1n)  to            node[above]{\wDepth\brule{e-1}}    (1hn);
		\draw[->,dotted](1hn) to node{\hpar} node[above]{\wDepth\arule{d}}    (in);
		\draw[->,dotted](in)  to            node[above]{\wDepth\brule{e-1}}    (ihn);
		\draw[->,dotted](ihn) to node{\hpar} node[above]{\wDepth\arule{d}}    (nn);
		\draw[->       ](00)  to node{\vpar}  node[left ]{\wDepth\brule{e}}    (01);
		\draw[->,dotted](10)  to node{\vpar}  node[left ]{\wDepth\brule{e}}    (10h);
		\draw[->,dotted](10h) to node{\vpar}  node[left ]{\wDepth\brule{e-1}}  (11);
		\draw[->,dotted](n0)  to node{\vpar}  node[left ]{\wDepth\brule{e}}    (n0h);
		\draw[->,dotted](n0h) to node{\vpar}  node[left ]{\wDepth\brule{e-1}}  (n1);
		\draw[->       ](01)  to node{\vpar}  node[left ]{\wDepth\brule{e}}    (0i);
		\draw[->,dotted](11)  to node{\vpar}  node[left ]{\wDepth\brule{e}}    (1i);
		\draw[->,dotted](n1)  to node{\vpar}  node[left ]{\wDepth\brule{e}}    (n1h);
		\draw[->,dotted](n1h) to node{\vpar}  node[left ]{\wDepth\brule{e-1}}  (ni);
		\draw[->       ](0i)  to node{\vpar}  node[left ]{\wDepth\brule{e}}    (0n);
		\draw[->,dotted](1i)  to node{\vpar}  node[left ]{\wDepth\brule{e}}    (1n);
		\draw[->,dotted](ni)  to node{\vpar}  node[left ]{\wDepth\brule{e}}    (nih);
		\draw[->,dotted](nih) to node{\vpar}  node[left ]{\wDepth\brule{e-1}}  (nn);
	\end{tikzpicture}

	\noindent
	In this diagram the tiles at \m{(i,j)} for \m{i,j ∈ \set{0,\dots,n-1}} below the diagonal (\m{i<j}) look as follows:\\*
	\begin{tikzpicture}
		\matrix[row sep=\tilerowsep,column sep=\tilecolsep/2]{
			\node(00){\atlcxtBRBLTL{0}{i-1}{i}{j-1}{j}{n}};&&
			\node(10){\atlcxtBRBLTL{0}{i}{i+1}{j-1}{j}{n}};\\
			\node(01){\atlcxtBRBLTL{0}{i-1}{i}{j}{j+1}{n}};&
			\emptynode{0h1};&
			\node(11){\atlcxtBRBLTL{0}{i}{i+1}{j}{j+1}{n}};\\
		};
		\draw[->](00)  to              node[above]{\wDepth\arule{d}}    (10);
		\draw[->](00)  to node{\vpar}  node[left ]{\wDepth\brule{e}}    (01);
		\draw[->](01)  to              node[above]{\wDepth\brule{e-1}}    (0h1);
		\draw[->](0h1) to node{\hpar} node[above]{\wDepth\arule{d}}    (11);
		\draw[->](10)  to node{\vpar}  node[left ]{\wDepth\brule{e}}    (11);
	\end{tikzpicture}

	\noindent
	The tiles at \m{(i,j)} for \m{i,j ∈ \set{0,\dots,n-1}} above the diagonal (\m{i>j}) look as follows:\\*
	\begin{tikzpicture}
		\matrix[row sep=\tilerowsep/2,column sep=\tilecolsep/2]{
			\node(00){\atlcxtBRTRTL{0}{i-1}{i}{j-1}{j}{n}};&
			\node(10){\atlcxtBRTRTL{0}{i-1}{i}{j}{j+1}{n}};\\
			\emptynode{00h};&\emptynode{10h};\\
			\node(01){\atlcxtBRTRTL{0}{i}{i+1}{j-1}{j}{n}};&
			\node(11){\atlcxtBRTRTL{0}{i}{i+1}{j}{j+1}{n}};\\
		};
		\draw[->](00)  to             node[above]{\wDepth\arule{d}}    (10);
		\draw[->](00)  to node{\vpar} node[left ]{\wDepth\brule{e}}    (00h);
		\draw[->](00h) to node{\vpar} node[left ]{\wDepth\brule{e-1}}    (01);
		\draw[->](10)  to node{\vpar} node[left ]{\wDepth\brule{e}}    (10h);
		\draw[->](10h) to node{\vpar} node[left ]{\wDepth\brule{e-1}}    (11);
		\draw[->](01)  to             node[above]{\wDepth\arule{d}}    (11);
	\end{tikzpicture}

	\noindent
	For \m{i ∈ \set{0,\dots,n-1}} the \i-th diagonal tile looks like this:\\*
	\begin{tikzpicture}
		\matrix[row sep=\tilerowsep,column sep=\tilecolsep/2]{
			\node(00){\acxt{\L_0^{\BR}, \dots, \L_{i-1}^{\BR}, \L_i^{\TL}, \dots, \L_n^{\TL}}};&&
			\node(10){\acxt{\L_0^{\BR}, \dots, \L_{i-1}^{\BR}, \L_i^{\TR}, \L_{i+1}^{\TL}, \dots, \L_n^{\TL}}};\\
			&&\emptynode{10h};\\
			\node(01){\acxt{\L_0^{\BR}, \dots, \L_{i-1}^{\BR}, \L_i^{\BL}, \L_{i+1}^{\TL}, \dots, \L_n^{\TL}}};&
			\emptynode{0h1};&
			\node(11){\acxt{\L_0^{\BR}, \dots, \L_i^{\BR}, \L_{i+1}^{\TL}, \dots, \L_n^{\TL}}};\\
		};
		\draw[->](00)  to            node[above]{\wDepth\arule{d}}    (10);
		\draw[->](00)  to node{\vpar}  node[left ]{\wDepth\brule{e}}    (01);
		\draw[->](01)  to              node[above]{\wDepth\brule{e-1}}    (0h1);
		\draw[->](0h1)  to node{\hpar} node[above]{\wDepth\arule{d}}    (11);
		\draw[->](10)  to node{\vpar}  node[left ]{\wDepth\brule{e}}    (10h);
		\draw[->](10h)  to node{\vpar}  node[left ]{\wDepth\brule{e-1}}    (11);
	\end{tikzpicture}

	\vspace{2ex}
	Every hole on the diagonal is filled with at most one \wDepth\arule{d}-redex (at the root of the context hole fillings) but because of \m{d < e} with possibly many \wDepth\brule{e}-redexes (properly inside of the fillings). There may or may not be an overlap between the \wDepth\arule{d}-step and a \wDepth\brule{e}-step, but there can be at most one, which is due to the rules of \unfbhCRS.

	Therefore \wDepth\brule{e} contracts either an overlap and a number of nested redexes, or only nested redexes without an overlap. These constellations are depicted on the figure below. There is one \wDepth\arule{d}-redex and three \wDepth\brule{e}-redexes. On the left, one of the \wDepth\brule{e}-redexes overlaps with the \wDepth\arule{d}-redex while on the right all \wDepth\brule{e}-redexes are strictly nested inside the \wDepth\arule{d}-redex.

	\begin{hspread}
		\fig{overlap}
	\end{hspread}

	For the critical pairs due to a non-root overlap, and for all situations with nested redexes, we construct diagrams of the following shape, respectively:

	\noindent
	\begin{tikzpicture}
		\matrix[row sep=1.2cm,column sep=1.1cm]{
			\emptynode{tl};&&
			\emptynode{tr};\\
			\emptynode{bl};&
			\emptynode{bm};&
			\emptynode{br};\\
			\\
		};
		\draw[->       ](tl) to            node[above]{\wDepth\arule{0}}   (tr);
		\draw[->       ](tl) to            node[left ]{\wDepth\brule{1}}   (bl);
		\draw[->,dotted](bl) to            node[below]{\wDepth\brule{0}} (bm);
		\draw[->,dotted](bm) to node{\hpar} node[below]{\wDepth\arule{0}}   (br);
		\draw[->,dotted](tr) to            node[right]{\wDepth\brule{0}} (br);
	\end{tikzpicture}
	\hfill
	\begin{tikzpicture}
		\matrix[row sep=1.2cm,column sep=2.2cm]{
			\emptynode{tl};&
			\emptynode{tr};\\
			\emptynode{bl};&
			\emptynode{br};\\
			\\
		};
		\draw[->       ](tl) to            node[above]{\wDepth\arule{0}}   (tr);
		\draw[->       ](tl) to            node[left ]{\wDepth\brule{e}}   (bl);
		\draw[->,dotted](bl) to            node[below]{\wDepth\arule{0}}   (br);
		\draw[->,dotted](tr) to            node[right]{\m{\wDepth\brule{e'} ~~~~ e' \in \set{e,e-1}}} (br);
	\end{tikzpicture}

	When lifted into a context of \letrec-depth \m{d} both of the diagrams comply to the shape necessary for the diagonal tiles, but we need to be able to handle situations as on as on the left of the above figure, where both nested redexes as well as the overlapping redex are contracted. Firstly, since all \brule-redexes occur at the same \letrec-depth, it must hold that \m{d = 0} and \m{e = 1}, which is due to the rules of \unfbhCRS. Secondly, none of the involved redex contractions affect any of the nested redexes except for duplicating or erasing them, which means that the residuals of the \brule-steps after these steps are part of a parallel \wDepth\brule{e'}-step (mind that we assume the reflexive closure of all steps). Or as a diagram:

	\noindent
	\begin{tikzpicture}
		\matrix[row sep=0.5cm,column sep=1.4cm]{
			\emptynode{tl};&&
			\emptynode{tr};\\
			&&
			\node(ir){\vdots};\\
			\emptynode{ml};&&
			\emptynode{mr};\\
			\\
			\emptynode{bl};&
			\emptynode{bm};&
			\emptynode{br};\\
		};
		\draw[->       ](tl) to            node[above]{\wDepth\arule{0}}   (tr);
		\draw[->       ](tl) to node{\vpar} node[left ]{\wDepth\brule{1}}   (ml);
		\draw[->       ](ml) to            node[left ]{\wDepth\brule{1}}   (bl);
		\draw[->,dotted](ml) to            node[above]{\wDepth\arule{0}}   (mr);
		\draw[->,dotted](mr) to            node[right]{\wDepth\brule{0}}   (br);
		\draw[->,dotted](bl) to            node[below]{\wDepth\brule{0}}   (bm);
		\draw[->,dotted](bm) to node{\hpar} node[below]{\wDepth\arule{0}}   (br);
		\draw[->,dotted](tr) to node{\vpar} node[right]{\wDepth\brule{e_1}}   (ir);
		\draw[->,dotted](ir) to node{\vpar} node[right]{\m{\wDepth\brule{e_n} ~~~~~ e_i \in \set{0,1}}} (mr);
	\end{tikzpicture}

	The diagram is composed from the previous two diagrams. A parallel version of the right one constitutes the top part, while the bottom part is an exact replica of the left one. The top part settles the portion arising from the nested redexes, the bottom part settles the portion arising from the overlapping redex.

	At last in order to fit that diagram into the scheme of the diagonal tiles the steps on the right have to be reordered such that \wDepth\brule{e_i}-steps with \m{e_i = 1} preceed \wDepth\brule{e_i}-steps with \m{e_i = 0}. The reordering is viable because every \wDepth\brule{e_i}-step takes place in its own residual of the \wDepth\brule1-step from the left.

	We conclude the proof by a comprehensive analysis all critical pairs that arise from non-root overlaps in \unfbhCRS as well as the diagrams for joining nested redexes.

	\subsection{Diagrams for joining critical pairs}

	\wDepth{\arule}{0}/\wDepth{\brule}{0} critical pairs only arise for \m{\arule=\merg}.

	\noindent
	\begin{tikzpicture}
		\matrix{
			\node(tl){\letin{B}{\letin{C}{\abs{x}{\L}}}};&
			\node(tr){\letin{B,C}{\abs{x}{\L}}};\\
			\node(ml){\letin{B}{\abs{x}{\letin{C}\L}}};\\
			\node(bl){\abs{x}{\letin{B}{\letin{C}\L}}};&
			\node(br){\abs{x}{\letin{B,C}\L}};\\
		};
		\draw[->       ](tl) to node[above]{\wDepth\merg{0}} (tr);
		\draw[->       ](tl) to node[left ]{\wDepth{λ}{1}} (ml);
		\draw[->,dotted](tr) to node[right]{\wDepth{λ}{0}}   (br);
		\draw[->,dotted](ml) to node[left ]{\wDepth{λ}{0}}   (bl);
		\draw[->,dotted](bl) to node[below]{\wDepth\merg{0}} (br);
	\end{tikzpicture}

	\noindent
	\begin{tikzpicture}
		\matrix{
			\node(tl){\letin{B}{\letin{C}{\app\L{P}}}};&
			\node(tr){\letin{B,C}{\app\L{P}}};\\
			\node(ml){\letin{B}{\appbreak{\letin{C}\L}{\letin{C}{P}}}};\\
			\node(bl){\appbreak{\letin{B}{\letin{C}\L}}{\letin{B}{\letin{C}{P}}}};&
			\node(br){\appbreak{\letin{B,C}\L}{\letin{B,C}{P}}};\\
		};
		\draw[->       ](tl) to              node[above]{\wDepth\merg{0}} (tr);
		\draw[->       ](tl) to              node[left ]{\wDepth{@}{1}} (ml);
		\draw[->,dotted](tr) to              node[right]{\wDepth{@}{0}}   (br);
		\draw[->,dotted](ml) to              node[left ]{\wDepth{@}{0}}   (bl);
		\draw[->,dotted](bl) to node{\hpar} node[below]{\wDepth\merg{0}} (br);
	\end{tikzpicture}

	\noindent
	\begin{tikzpicture}
		\matrix{
			\node(tl){\letin{B}{\letin{C}{\letin{D}\L}}};&
			\node(tr){\letin{B}{\letin{C,D}\L}};\\
			\node(bl){\letin{B,C}{\letin{D}\L}};&
			\node(br){\letin{B,C,D}\L};\\
		};
		\draw[->       ](tl) to node[above]{\wDepth\merg{0}}   (tr);
		\draw[->       ](tl) to node[left ]{\wDepth\merg{1}}   (bl);
		\draw[->,dotted](tr) to node[right]{\wDepth\merg{0}}   (br);
		\draw[->,dotted](bl) to node[below]{\wDepth\merg{0}}   (br);
	\end{tikzpicture}

	\noindent
	\begin{tikzpicture}
		\matrix{
			\node(tl){\letin{B}{\letin{C}{f_i}}};&
			\node(tr){\letin{B,C}{f_i}};\\
			\node(bl){\letin{B}{\letin{C}{\L_i}}};&
			\node(br){\letin{B,C}{\L_i}};\\
		};
		\draw[->       ](tl) to node[above]{\wDepth\merg{0}} (tr);
		\draw[->       ](tl) to node[left ]{\wDepth\rec{1}}    (bl);
		\draw[->,dotted](tr) to node[right]{\wDepth\rec{0}}    (br);
		\draw[->,dotted](bl) to node[below]{\wDepth\merg{0}} (br);
	\end{tikzpicture}

	\noindent
	\begin{tikzpicture}
		\matrix{
			\node(tl){\letin{B}{\letin{}\L}};&
			\node(tr){\letin{B}\L};\\
			\node(bl){\letin{B}\L};&
			\node(br){\letin{B}\L};\\
		};
		\draw[->       ](tl) to node[above]{\wDepth\merg{0}} (tr);
		\draw[->       ](tl) to node[left ]{\wDepth\nil{1}}   (bl);
		\draw[double   ](tr) to                                   (br);
		\draw[double   ](bl) to                                   (br);
	\end{tikzpicture}

	\noindent
	\begin{tikzpicture}
		\matrix{
			\node(tl){\letin{B}{\letin{C}\L}};&
			\node(tr){\letin{B,C}\L};\\
			\node(bl){\letin{B}{\letin{C'}\L}};&
			\node(br){\letin{B,C'}\L};\\
		};
		\draw[->       ](tl) to node[above]{\wDepth\merg{0}}  (tr);
		\draw[->       ](tl) to node[left ]{\wDepth\reduce{1}} (bl);
		\draw[->,dotted](tr) to node[right]{\wDepth\reduce{0}} (br);
		\draw[->,dotted](bl) to node[below]{\wDepth\merg{0}}  (br);
	\end{tikzpicture}

	\noindent
	\begin{tikzpicture}
		\matrix{
			\node(tl){\letin{B}{\letin{C,f=g}\L}};&
			\node(tr){\letin{B,C,f=g}\L};\\
			\node(bl){\letin{B}{\letin{\subst{C}{f}{g}}{\subst{\L}{f}{g}}}};&
			\node(br){\letin{\subst{B,C}{f}{g}}{\subst{\L}{f}{g}}};\\
		};
		\draw[draw=none](tl) to node{\m{f ≠ g}} (br);
		\draw[->       ](tl) to node[above]{\wDepth\merg{0}}   (tr);
		\draw[->       ](tl) to node[left ]{\wDepth\tighten{1}} (bl);
		\draw[->,dotted](tr) to node[right]{\wDepth\tighten{0}} (br);
		\draw[->,dotted](bl) to node[below]{\wDepth\merg{0}}   (br);
	\end{tikzpicture}

	\noindent
	\begin{tikzpicture}
		\matrix{
			\node(tl){\letin{B}{\letin{C,f=f}\L}};&
			\node(tr){\letin{B,C,f=f}\L};\\
			\node(bl){\letin{B}{\letin{\subst{C}{f}{\bh}}{\subst{\L}{f}{\bh}}}};&
			\node(br){\letin{\subst{B,C}{f}{\bh}}{\subst{\L}{f}{\bh}}};\\
		};
		\draw[->       ](tl) to node[above]{\wDepth\merg{0}} (tr);
		\draw[->       ](tl) to node[left ]{\wDepth\bh{1}}    (bl);
		\draw[->,dotted](tr) to node[right]{\wDepth\bh{0}}    (br);
		\draw[->,dotted](bl) to node[below]{\wDepth\merg{0}} (br);
	\end{tikzpicture}

	\subsection{Diagrams for joining nested redexes}

	\noindent
	\begin{tikzpicture}
		\matrix{
			\node(tl){\letin{B}{\abs{x}\L}};&
			\node(tr){\abs{x}{\letin{B}\L}};\\
			\node(bl){\letin{B'}{\abs{x}{\L'}}};&
			\node(br){\abs{x}{\letin{B'}{\L'}}};\\
		};
		\draw[->       ](tl) to            node[above]{\wDepth{λ}{0}} (tr);
		\draw[->       ](tl) to            node[left ]{\wDepth\brule{e}}    (bl);
		\draw[->,dotted](tr) to            node[right]{\wDepth\brule{e}}    (br);
		\draw[->,dotted](bl) to            node[below]{\wDepth{λ}{0}} (br);
	\end{tikzpicture}

	\noindent
	\begin{tikzpicture}
		\matrix{
			\node(tl){\letin{B}{\app{\L_0}{\L_1}}};&
			\node(tr){\appbreak{\letin{B}{\L_0}}{\letin{B}{\L_1}}};\\
			\node(bl){\letin{B'}{\app{\L_0'}{\L_1'}}};&
			\node(br){\appbreak{\letin{B'}{\L_0'}}{\letin{B'}{\L_1'}}};\\
		};
		\draw[->       ](tl) to            node[above]{\wDepth{@}{0}} (tr);
		\draw[->       ](tl) to            node[left ]{\wDepth\brule{e}}    (bl);
		\draw[->,dotted](tr) to node{\vpar} node[right]{\wDepth\brule{e}}    (br);
		\draw[->,dotted](bl) to            node[below]{\wDepth{@}{0}} (br);
	\end{tikzpicture}

	\noindent
	\begin{tikzpicture}
		\matrix{
			\node(tl){\letin{B}{\letin{C}\L}};&
			\node(tr){\letin{B,C}\L};\\
			\node(bl){\letin{B'}{\letin{C'}{\L'}}};&
			\node(br){\letin{B',C'}{\L'}};\\
		};
		\draw[->       ](tl) to            node[above]{\wDepth\merg{0}} (tr);
		\draw[->       ](tl) to            node[left ]{\wDepth\brule{e}}      (bl);
		\draw[->,dotted](tr) to            node[right]{\wDepth\brule{e' ~~~~~~ e' \in \set{e-1,e}}}      (br);
		\draw[->,dotted](bl) to            node[below]{\wDepth\merg{0}} (br);
	\end{tikzpicture}

	\noindent
	\begin{tikzpicture}
		\matrix{
			\node(tl){\letin{B}{f}};&
			\node(tr){\letin{B}{\L}};\\&
			\node(mr){\letin{B'}{\L}};\\
			\node(bl){\letin{B'}{f}};&
			\node(br){\letin{B'}{\L'}};\\
		};
		\draw[->       ](tl) to node[above]{\wDepth\rec{0}} (tr);
		\draw[->       ](tl) to node[left ]{\wDepth\brule{e}} (bl);
		\draw[->,dotted](tr) to node[right]{\wDepth\brule{e}} (mr);
		\draw[->,dotted](mr) to node[right]{\wDepth\brule{e}} (br);
		\draw[->,dotted](bl) to node[below]{\wDepth\rec{0}} (br);
	\end{tikzpicture}

	\noindent
	\begin{tikzpicture}
		\matrix{
			\node(tl){\letin{}\L};&
			\node(tr){\L};\\
			\node(bl){\letin{}{\L'}};&
			\node(br){\m{\L'}};\\
		};
		\draw[->       ](tl) to node[above]{\wDepth\nil{0}} (tr);
		\draw[->       ](tl) to node[left ]{\wDepth\brule{e}}   (bl);
		\draw[->,dotted](tr) to node[right]{\wDepth\brule{e-1}}   (br);
		\draw[->,dotted](bl) to node[below]{\wDepth\nil{0}} (br);
	\end{tikzpicture}

	\noindent
	\begin{tikzpicture}
		\matrix{
			\node(tl){\letin{B}\L};&
			\node(tr){\letin{B^{\TR}}\L};\\
			\node(bl){\letin{B^{\BL}}{\L'}};&
			\node(br){\letin{B^{\BR}}{\L'}};\\
		};
		\draw[->       ](tl) to node[above]{\wDepth\reduce{0}} (tr);
		\draw[->       ](tl) to node[left ]{\wDepth\brule{e}}      (bl);
		\draw[->,dotted](tr) to node[right]{\wDepth\brule{e}}      (br);
		\draw[->,dotted](bl) to node[below]{\wDepth\reduce{0}} (br);
	\end{tikzpicture}

	\noindent
	\begin{tikzpicture}
		\matrix{
			\node(tl){\letin{B,f=g}{\L}};&
			\node(tr){\letin{\subst{B}{f}{g}}{\subst{\L}{f}{g}}};\\
			\node(bl){\letin{B',f=g}{\L'}};&
			\node(br){\letin{\subst{B'}{f}{g}}{\subst{\L'}{f}{g}}};\\
		};
		\draw[draw=none](tl) to node{\m{f ≠ g}} (br);
		\draw[->       ](tl) to node[above]{\wDepth\tighten{0}} (tr);
		\draw[->       ](tl) to node[left ]{\wDepth\brule{e}}      (bl);
		\draw[->,dotted](tr) to node[right]{\wDepth\brule{e}}      (br);
		\draw[->,dotted](bl) to node[below]{\wDepth\tighten{0}} (br);
	\end{tikzpicture}

	\noindent
	\begin{tikzpicture}
		\matrix{
			\node(tl){\letin{B,f=f}{\L}};&
			\node(tr){\letin{\subst{B}{f}{\bh}}{\subst{\L}{f}{\bh}}};\\
			\node(bl){\letin{B',f=g}{\L'}};&
			\node(br){\letin{\subst{B'}{f}{\bh}}{\subst{\L'}{f}{\bh}}};\\
		};
		\draw[->       ](tl) to node[above]{\wDepth\tighten{0}} (tr);
		\draw[->       ](tl) to node[left ]{\wDepth\brule{e}}      (bl);
		\draw[->,dotted](tr) to node[right]{\wDepth\brule{e}}      (br);
		\draw[->,dotted](bl) to node[below]{\wDepth\tighten{0}} (br);
	\end{tikzpicture}
\end{proof}

%
%
%
%
%

\chapter{Unfolding with a Single Rule}\label{app:single_rule}

Here, we present an alternative rewriting system \unfOne for unfolding \lambdaletrec-terms, which has only a single rule, similar to μ-unfolding. We will briefly argue that the system unfolds to the same λ-terms as \unfCRS. For simplicity we will not address the problem of meaningless bindings, which can be resolved with black holes and appropriate rewriting rules, as done for \unfCRS.

\newcommand\substcompr[3]{\m{\left[{#1}:={#2}~\middle|~{#3}\right]}}
\newcommand\substicompr[2]{\m{\left[{#1}:={#2}~\middle|~1≤i≤n\right]}}
\newcommand\substjcompr[2]{\m{\left[{#1}:={#2}~\middle|~1≤j≤n\right]}}

\begin{notation}
	Hereinafter, \B stands for \m{f_1 = L_1, \dots, f_n = L_n}.
	Furthermore \substcompr{f_i}{L_i}{1≤i≤n} stands for the substitution \m{[f_1 := L_1, \dots, f_n := L_n]}.
\end{notation}

\begin{definition}[One-rule rewriting system for unfolding \lambdaletrec]
	The rewriting system \unfOne consists of the single rule:
	\[ \letin{B}{L} ~→~ L\substicompr{f_i}{L_i\substjcompr{f_j}{\letin{B}{f_j}}} \]

	\noindent
	In CRS notation:
	\begin{gather*}
		\letCRS{n}{\vec{f}}{L_1(\vec{f}),\dots,L_n(\vec{f})),L_0(\vec{f})}
		→ L_0(L_1(\vec{B}), \dots, L_n(\vec{B}))
		\\
		\text{where \vec{B} stands for~} \left(
			\begin{array}{l}
				\letCRS{n}{\vec{f}}{L_1(\vec{f}), \dots, L_n(\vec{f}), L_1(\vec{f})},\\
				~~~\dots,\\
				\letCRS{n}{\vec{f}}{L_1(\vec{f}), \dots, L_n(\vec{f}), L_n(\vec{f})}
			\end{array}\right)
	\end{gather*}
\end{definition}

\begin{para}[an even simpler approach?]\label{simpler_unfolding}
	One might wonder why the nested substitutions are necessary, why a simpler rule as follows would not be adequate:
	\[ \letin{B}{L} → L\substicompr{f_i}{\letin{B}{f_i}} \]
	This rule is not sufficient because it cannot unfold a term ~\letin{B}{L}~ correctly if \L is a variable bound in \B; it reduces a term like \letin{f=x}{f} to itself.
\end{para}

\begin{proposition}\label{prop:one_is_confluent}
	\unfOne is confluent. (\textit{Proof:} \unfOne is orthogonal.)
\end{proposition}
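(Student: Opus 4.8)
\textbf{Proof plan for Proposition~\ref{prop:one_is_confluent}.}
The statement to prove is that \unfOne is confluent, and the parenthetical hint already tells us the intended route: show that \unfOne is orthogonal, and then invoke the standard result that orthogonal rewriting systems are confluent. So the plan has two components. First, argue orthogonality; second, cite (or, if the intended rewriting framework for \unfOne is an iCRS as in the rest of the thesis, check the applicability of) the orthogonality-implies-confluence theorem. Since \unfOne has exactly one rule, there is nothing to check regarding overlaps between \emph{different} rules; orthogonality reduces to (i) left-linearity of the single rule and (ii) the absence of nontrivial self-overlaps (no critical pairs of the rule with itself other than the trivial one at the root).

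For left-linearity I would inspect the left-hand side \letin{B}{L}, which in CRS notation is $\letCRS{n}{\vec{f}}{L_1(\vec{f}),\dots,L_n(\vec{f}),L_0(\vec{f})}$. The metavariables occurring are $L_0,\dots,L_n$, each appearing exactly once in the pattern (the repeated occurrences of $\vec{f}$ are bound abstraction variables of the CRS abstraction, not metavariable occurrences, so they do not violate left-linearity). Hence the rule scheme is left-linear. I would state this explicitly for each $n\in\mathbb{N}$, noting that \unfOne is really a rule \emph{scheme} indexed by $n$ but that for the purposes of orthogonality one treats the whole family, and distinct values of $n$ produce patterns with distinct root-symbol arities $\LetCRS$/$\InCRS{n}$, so there are no overlaps between instances with different $n$ either.

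Next, the self-overlap analysis: the left-hand side is headed by the $\LetCRS$ symbol (with an $\InCRS{n}$ immediately below), and every argument position of the pattern is simply a distinct metavariable applied to the bound abstraction variables — that is, the pattern is a ``flat'' pattern of depth essentially one (modulo the fixed $\LetCRS$/$\InCRS{n}$ layering and the CRS abstraction binding $\vec f$). Two redex occurrences can overlap only if one redex's pattern unifies with a non-variable subterm of the other's pattern; but the only non-variable part of the pattern is the fixed $\LetCRS(\CRSabs{\vec f}{\InCRS{n}(\dots)})$ skeleton, which contains no place where another $\LetCRS$-pattern could be matched except at the root. Thus the only overlap is the trivial root overlap of the rule with itself, giving no nontrivial critical pairs. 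Combined with left-linearity, this establishes orthogonality.

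The main (and really only) obstacle is making sure the orthogonality-to-confluence implication is actually licensed in the setting we are working in. For \emph{finite} CRS terms, orthogonality implies confluence is classical (Klop; see \cite{terese:2003}). If one wants \unfOne to act on possibly infinite \lambdaletrec-terms — though note \lambdaletrec-terms themselves are finite by \cref{def:set_of_lambdaletrec_terms}, so for the plain confluence statement here finite terms suffice — one would additionally need to be careful, but that is not required for Proposition~\ref{prop:one_is_confluent} as stated. So the write-up I envision is short: (1) recall that \unfOne consists of a single left-linear rule scheme; (2) observe that the left-hand-side patterns have no nontrivial (self-)overlaps, hence no nontrivial critical pairs; (3) conclude \unfOne is orthogonal; (4) invoke orthogonality $\Rightarrow$ confluence for CRSs. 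I would also, in passing, contrast with the naive rule of \cref{simpler_unfolding} to emphasise that the nested-substitution formulation does not disturb left-linearity (the nesting lives entirely on the right-hand side, which plays no role in the orthogonality check). If a reviewer wanted more detail on step (2), the fallback is an explicit (and very short) enumeration: the only way two $\LetCRS$-redexes can be ``on top of each other'' is for one to be a proper subterm strictly inside an argument position $L_i$ of the other, but then the inner redex is matched entirely within the substitution instance of a metavariable $L_i$, which is the textbook definition of a \emph{nested} (non-overlapping) pair of redexes — so no critical pair arises.
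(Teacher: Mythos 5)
Your proposal is correct and takes exactly the route the paper intends: the paper's entire proof is the parenthetical remark that \unfOne is orthogonal, and your write-up simply supplies the (routine but worth stating) details — the single rule scheme is left-linear since each metavariable $L_0,\dots,L_n$ occurs once in the pattern, the only non-variable part of the left-hand side is the fixed $\LetCRS$/$\InCRS{n}$ skeleton so there are no nontrivial overlaps, and orthogonality implies confluence for CRSs. No gaps.
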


In order to argue that \unfOne reduces \lambdaletrec-terms to the same infinite normal forms as \unfCRS (let us write this as \unfOne ≡ \unfCRS), we introduce an intermediate system \unfTwo consisting of the rule from \cref{simpler_unfolding}, and one additional rule from \unfCRS to handle the problematic case.

\begin{definition}[Two-rule rewriting system for unfolding \lambdaletrec]
	The rewriting system \unfTwo consists of the following two rules:
	\rulearray{
		\letin{B}{L} → L\substicompr{f_i}{\letin{B}{f_i}}
		\\~~\sideCondition{if \L is not a variable bound in \B}
		\\
		\letin{B}{f_i} → \letin{B}{L_i}
	}
\end{definition}

\begin{proposition}[\unfTwo is a refinement of \unfOne] \m{\red_\unfOne ~\subseteq~ \mred_\unfTwo}\label{prop:two_refines_one}
	\begin{proof}
		Let us consider such a redex \letin{B}{L}.

		\paragraph{Case 1} (\L is not a variable bound by \B).

		\paragraph{Case 1.1} (all the \m{L_i} below are not variables bound by \B)
		\begin{align*}
			\letin{B}{L} &  \red_\unfTwo && L\substicompr{f_i}{\letin{B}{f_i}}\\
			             & \mred_\unfTwo && L\substicompr{f_i}{\letin{B}{L_i}}\\
			             & \mred_\unfTwo && L\substicompr{f_i}{L_i\substjcompr{f_j}{\letin{B}{f_j}}}\\
			             & \convred_\unfOne && \letin{B}{L}\\
		\end{align*}
		\paragraph{Case 1.2} (all the \m{L_i} below are variables bound by \B (\m{L_i = f_{j_i}}))
		\begin{align*}
			\letin{B}{L} &  \red_\unfTwo && L\substicompr{f_i}{\letin{B}{f_i}}\\
			             & \mred_\unfTwo && L\substicompr{f_i}{\letin{B}{L_i}}\\
			             & =                 && L\substicompr{f_i}{\letin{B}{f_{j_i}}}\\
			             & =                 && L\substicompr{f_i}{f_{j_i}\substjcompr{f_j}{\letin{B}{f_j}}}\\
			             & =                 && L\substicompr{f_i}{L_i\substjcompr{f_j}{\letin{B}{f_j}}}\\
			             & \convred_\unfOne && \letin{B}{L}\\
		\end{align*}

		Actually, the case distinction of \textbf{Case 1} into \textbf{Case 1.1} and \textbf{Case 2.2} is non-exhaustive. Instead, all mixtures of \textbf{Case 1.1} and \textbf{Case 2.2} have to be considered, where some of the \m{L_i} are bound by \B and others are not. To write this out would however be merely a tedious exercise.

		\paragraph{Case 2} (\L is a variable bound by \B (\m{L = f_i})).

		\paragraph{Case 2.1} (\m{L_i} below is not a variable bound by \B)
		\begin{align*}
			\letin{B}{f_i} & \red_\unfTwo     && \letin{B}{L_i}\\
			               & \red_\unfTwo     && L_i\substicompr{f_i}{\letin{B}{f_i}}\\
			               & =                    && f_i\substicompr{f_i}{L_i\substjcompr{f_j}{\letin{B}{f_j}}}\\
			               & \convred_\unfOne && \letin{B}{f_i}\\
		\end{align*}
		\paragraph{Case 2.2} (\m{L_i} below is a variable bound by \B (\m{L_i = f_j}))
		\begin{align*}
			\letin{B}{f_i} & \red_\unfTwo     && \letin{B}{L_i}\\
			               & =                    && \letin{B}{f_j}\\
			               & =                    && f_j\substicompr{f_i}{\letin{B}{f_i}}\\
			               & =                    && L_i\substicompr{f_i}{\letin{B}{f_i}}\\
			               & =                    && f_i\substicompr{f_i}{L_i\substjcompr{f_j}{\letin{B}{f_j}}}\\
			               & \convred_\unfOne && \letin{B}{f_i}\\
		\end{align*}
	\end{proof}
\end{proposition}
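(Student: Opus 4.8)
The plan is to verify the inclusion one redex at a time. Since $\red_\unfOne$ is closed under contexts (it is the rewrite relation of a CRS) and $\mred_\unfTwo$ inherits context closure from $\red_\unfTwo$, it suffices to exhibit, for an arbitrary $\unfOne$-redex \letin{B}{L} with $B$ abbreviating $f_1 = L_1,\dots,f_n = L_n$, a $\red_\unfTwo$-derivation from \letin{B}{L} to the $\unfOne$-contractum $L\substicompr{f_i}{L_i\substjcompr{f_j}{\letin{B}{f_j}}}$; applying an arbitrary context to such a derivation then yields $\red_\unfOne \subseteq \mred_\unfTwo$. The only relevant asymmetry between the two systems is the side condition on the first $\unfTwo$-rule — ``the body is not a variable bound in $B$'' — which the $\unfOne$-rule lacks, so the argument splits on whether the body \m{L} is such a bound variable.

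In the case where \m{L} is not a \m{B}-bound variable, I would first fire the first $\unfTwo$-rule at the root, reaching $L\substicompr{f_i}{\letin{B}{f_i}}$. The inserted occurrences \letin{B}{f_i} lie at pairwise parallel positions inside \m{L}, so they can be rewritten independently: at each of them apply the second $\unfTwo$-rule to reach $L\substicompr{f_i}{\letin{B}{L_i}}$, and then, at each position where \m{L_i} is again not a \m{B}-bound variable, apply the first $\unfTwo$-rule once more, landing on the contractum. The point to dispatch is the subcase where some \m{L_i} \emph{is} a \m{B}-bound variable \m{f_{j_i}}: there $\letin{B}{L_i}$ already equals $L_i\substjcompr{f_j}{\letin{B}{f_j}}$ — both are $\letin{B}{f_{j_i}}$ — so no further step is needed, and this syntactic identity is exactly what makes all ``mixed'' configurations (some binding bodies bound variables, others not) go through uniformly without a separate argument.

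In the case where \m{L} is a \m{B}-bound variable \m{f_i}, the first $\unfTwo$-rule cannot be fired at the root — this is precisely the obstruction flagged in \cref{simpler_unfolding} — so I would start with the second $\unfTwo$-rule, $\letin{B}{f_i} \red_\unfTwo \letin{B}{L_i}$, and then split once more: if \m{L_i} is not a \m{B}-bound variable, the first $\unfTwo$-rule gives $L_i\substjcompr{f_j}{\letin{B}{f_j}} = f_i\substicompr{f_i}{L_i\substjcompr{f_j}{\letin{B}{f_j}}}$, the contractum; and if \m{L_i} is itself a \m{B}-bound variable \m{f_j}, then $\letin{B}{L_i} = \letin{B}{f_j}$ is already the contractum by the same kind of identity.

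I do not expect a real obstacle here; the substance is bookkeeping, but two points deserve care. First, one must check that the side condition of the first $\unfTwo$-rule holds at every point it is used — this is exactly what forces the case split and the uniform handling of the bound-variable subcases, and recognising the identities $\letin{B}{f_k} = f_k\substicompr{f_i}{\,\cdot\,}$ is the mechanism that collapses those subcases without extra rewrite steps. Second, the simulation should be organised so that it is visibly a single $\mred_\unfTwo$-sequence: using that the substituted occurrences sit at parallel positions, the order of their rewrites is immaterial, so the derivation takes the clean shape ``one first-rule (resp. second-rule) step at the root, then a batch of second-rule steps, then a batch of first-rule steps''.
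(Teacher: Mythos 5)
Your proposal is correct and follows essentially the same route as the paper's proof: a root case split on whether \L is a \B-bound variable, simulation of the \unfOne-step by one application of the first (resp.\ second) \unfTwo-rule followed by a batch of steps at the parallel inserted positions, and the syntactic identity \m{\letin{B}{f_k} = f_k\substjcompr{f_j}{\letin{B}{f_j}}} to absorb the subcases where an \m{L_i} is itself a bound variable. If anything, your uniform per-position treatment handles the ``mixed'' configurations that the paper's proof explicitly declines to write out, so nothing is missing.
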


\begin{proposition}[\unfTwo is a compatible refinement of \unfOne]\label{prop:two_is_compatible_to_one}
	\unfTwo is a refinement of \unfOne, and additionally it holds:
	\[∀M,N∈\Ter{\lambdaletreccal}~ (M \mred_\unfTwo N ~\implies~ M \mred_\unfOne . \convmred_\unfOne N)\]
	\begin{proof}
		Let us consider a \m{\red_\unfTwo}-redex \m{M = \letin{B}{L}} and the reduction \m{M \red_\unfTwo N}. We will show that \m{M \mred_\unfOne . \mred_\unfOne N}
		\paragraph{Case 1} (\L is not a variable bound by \B).
		\[M = \letin{B}{L} \red_\unfTwo L\substicompr{f_i}{\letin{B}{f_i}} = N\]
		\begin{align*}
			M            &  \red_\unfOne && L\substicompr{f_i}{L_i\substjcompr{f_j}{\letin{B}{f_j}}}\\
			             & =                && L\substicompr{f_i}{f_i\substicompr{f_i}{L_i\substjcompr{f_j}{\letin{B}{f_j}}}}\\
			             & \convred_\unfOne && N\\
		\end{align*}
		\paragraph{Case 2} (\L is a variable bound by \B (\m{L = f_i})).
		\[ M = \letin{B}{f_i} \red_\unfTwo \letin{B}{L_i} = N\]
		\begin{align*}
			M              & \red_\unfOne     && f_i\substicompr{f_i}{L_i\substjcompr{f_j}{\letin{B}{f_j}}}\\
			               & =                && L_i\substjcompr{f_j}{\letin{B}{f_j}}\\
			               & \red_\unfOne     && L_i\substjcompr{f_j}{f_j\substicompr{f_i}{L_i\substjcompr{f_j}{\letin{B}{f_j}}}}\\
			               & =                && L_i\substicompr{f_i}{L_i\substjcompr{f_j}{\letin{B}{f_j}}}\\
			               & \convred_\unfOne && \letin{B}{L_i}\\
		\end{align*}
	\end{proof}
\end{proposition}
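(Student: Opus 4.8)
\textbf{Proof proposal for \cref{prop:two_is_compatible_to_one}.}

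The plan is to establish both parts of the statement separately, but both by a local (single-step) analysis that is then lifted to many-step reductions. The first part, that \unfTwo is a refinement of \unfOne (i.e.\ \m{\red_\unfTwo~\subseteq~\mred_\unfOne\cdot\convmred_\unfOne}, hence in particular \m{\mred_\unfTwo~\subseteq~\mred_\unfOne\cdot\convmred_\unfOne}), is exactly the content of the stated display. Note that this is a \emph{weaker} notion of refinement than in \cref{prop:two_refines_one}: there, a single \m{\red_\unfOne}-step is simulated by \emph{several} \m{\red_\unfTwo}-steps; here a single \m{\red_\unfTwo}-step need only be reachable from the source by an \m{\unfOne}-reduction followed by a converse \m{\unfOne}-reduction, which is a span rather than a direct simulation. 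Since the excerpt already carries out the required single-step case analysis (this is precisely the content of the displayed calculations inside the proof sketch above, which split on whether the body \L of the contracted \Let is a \B-bound variable), the first task is simply to record that these cases are exhaustive for a single \m{\red_\unfTwo}-step, and then observe that the property lifts along \m{\mred_\unfTwo} by composing spans. The composition of spans is where finitary confluence of \unfOne (\cref{prop:one_is_confluent}) enters: given \m{M \mred_\unfOne M_1 \convmred_\unfOne N \mred_\unfOne N_1 \convmred_\unfOne P} from two consecutive single-step simulations, confluence lets us close the inner peak \m{M_1 \convmred_\unfOne N \mred_\unfOne N_1} to a valley \m{M_1 \mred_\unfOne \cdot \convmred_\unfOne N_1}, collapsing the whole diagram to \m{M \mred_\unfOne \cdot \convmred_\unfOne P}.

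Concretely I would proceed as follows. First I would spell out the single-step claim: for every \m{\red_\unfTwo}-redex \m{M = \letin{B}{L}} with \m{M \red_\unfTwo N}, we have \m{M \mred_\unfOne \cdot \convmred_\unfOne N}. I would verify this by the two cases already displayed — \textbf{Case 1}, \L not a \B-bound variable, where one \m{\red_\unfOne}-step on \m{M} lands on \m{L[f_i := L_i[f_j := \letin{B}{f_j}]]} and a single \m{\red_\unfOne}-step on \m{N = L[f_i := \letin{B}{f_i}]} reaches the same term (unfolding each copy of the nested \Let); and \textbf{Case 2}, \m{L = f_i}, where \m{M \red_\unfOne L_i[f_j := \letin{B}{f_j}]} and \m{N = \letin{B}{L_i} \red_\unfOne} the same term. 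This is a routine check of syntactic substitution identities; the only subtlety is the bookkeeping of the nested-substitution notation \substicompr{f_i}{\letin{B}{f_i}} versus \substicompr{f_i}{L_i\substjcompr{f_j}{\letin{B}{f_j}}}, which the paper's notation already handles. Second, I would note that by contextual closure the same span holds when \m{\red_\unfTwo} is applied at an arbitrary position inside a term (the \m{\unfOne}-steps needed can be performed in the same context, since \unfOne-redexes are \Let-expressions and the contractum of a \m{\red_\unfTwo}-step is a \Let-expression in exactly the positions needed). Third, I would lift to \m{\mred_\unfTwo} by induction on the length of the \unfTwo-reduction, using \cref{prop:one_is_confluent} to splice consecutive spans as sketched above.

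The main obstacle, and the only genuinely non-routine part, is the splicing step: making sure that the span property \m{\mred_\unfOne \cdot \convmred_\unfOne} composes. A naive induction gives \m{M \mred_\unfOne A \convmred_\unfOne N} and \m{N \mred_\unfOne B \convmred_\unfOne P}, and to conclude \m{M \mred_\unfOne \cdot \convmred_\unfOne P} one must close the peak \m{A \convmred_\unfOne N \mred_\unfOne B}; this is exactly what confluence of \unfOne (\cref{prop:one_is_confluent}, established via orthogonality) delivers. I would therefore state clearly at the start of the lifting argument that we use finitary confluence of \unfOne, and then the induction is a short diagram chase. A secondary, purely bureaucratic concern is that the contextual-closure step be stated carefully, since in \textbf{Case 2} the simulating \unfOne-reductions operate not at the root but inside the residual copies of \B; but because \unfTwo's second rule \m{\letin{B}{f_i} \to \letin{B}{L_i}} only rearranges within a single \Let, and \unfOne's rule handles an entire \Let at once, no position conflicts arise. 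I do not expect either of these to require more than a paragraph in the final write-up.
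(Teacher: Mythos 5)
Your proof is essentially the paper's: the same two-case analysis of a single root \m{\red_\unfTwo}-step, splitting on whether the body of the \Let is a variable bound in \B. One detail in your Case~2 needs correcting: the single \m{\red_\unfOne}-step out of \m{N = \letin{B}{L_i}} lands on \m{L_i\substicompr{f_i}{L_i\substjcompr{f_j}{\letin{B}{f_j}}}}, not on \m{L_i\substjcompr{f_j}{\letin{B}{f_j}}}, so on the \m{M}-side one further (parallel) contraction of the residual redexes \m{\letin{B}{f_j}} is needed before the two sides meet; this is harmless because your target is the span \m{\mred_\unfOne \cdot \convmred_\unfOne}, and it is exactly how the paper closes that case. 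Your explicit lifting to \m{\mred_\unfTwo} --- contextual closure, then splicing consecutive single-step spans using confluence of \unfOne (\cref{prop:one_is_confluent}) to flatten the resulting zigzag --- is a genuine addition: the paper's proof stops at the root single-step case and leaves this Church--Rosser-style composition implicit, so your write-up would be the more complete one. (A small terminological point: the clause ``\unfTwo is a refinement of \unfOne'' in the statement refers back to \cref{prop:two_refines_one}, i.e.\ \m{\red_\unfOne \subseteq \mred_\unfTwo}, which is already proved; the displayed span property is the only new content, and that is what you prove.)
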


\begin{proposition} \unfTwo is confluent.
	\begin{proof}
		This follows from \cref{prop:one_is_confluent} and \cref{prop:two_is_compatible_to_one} \cite{staples:1975}.
	\end{proof}
\end{proposition}

\begin{proposition}[\unfOne ≡ \unfTwo]\label{prop:unfOneequnfTwo}
	\[ ∀L∈\Ter{\lambdaletrec}~ L \m{\infred_{\unfOne}^!} L' ~\iff~ L \m{\infred_\unfTwo^!} L' \]
\end{proposition}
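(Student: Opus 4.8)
\textbf{Proof proposal for \cref{prop:unfOneequnfTwo}.}

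The plan is to establish that \unfOne and \unfTwo have the same infinite normal forms by routing everything through the confluence results and the refinement relationships already assembled in the appendix. The two directions of the equivalence follow from a common pattern, so I would first isolate the tool that does the work. Since \m{\red_\unfOne \subseteq \mred_\unfTwo} (\cref{prop:two_refines_one}), we have \m{\mred_\unfOne \subseteq \mred_\unfTwo}, and this inclusion lifts to the infinitary level: a strongly convergent \m{\red_\unfOne}-rewrite-sequence can be replayed as a strongly convergent \m{\red_\unfTwo}-rewrite-sequence by expanding each \m{\red_\unfOne}-step into the finite \m{\mred_\unfTwo}-sequence provided by the proof of \cref{prop:two_refines_one} — the depth of rewrite activity is non-decreasing under this expansion (the simulating \unfTwo-steps take place at the position of the simulated \unfOne-step or below), so strong convergence is preserved. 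Conversely, \cref{prop:two_is_compatible_to_one} gives \m{\mred_\unfTwo \subseteq \mred_\unfOne \cdot \convmred_\unfOne}, and again this commutation lifts infinitarily in the same way.

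For the direction ``\m{\Rightarrow}'': suppose \m{L \infnfred_\unfOne L'}, so \m{L'} is a \unfOne-normal-form and there is a strongly convergent sequence \m{L \infred_\unfOne L'}. By the lifting of \cref{prop:two_refines_one}, \m{L \infred_\unfTwo L'}. It remains to see that \m{L'} is a \unfTwo-normal-form: \unfOne-normal-forms are exactly the terms without \Let-expressions (a term is a \unfOne-redex iff it is a \Let-expression), and the same is true of \unfTwo (each of its two rules also requires a \Let at the redex position); hence \m{L'} is a \unfTwo-normal-form, and \m{L \infnfred_\unfTwo L'}. For the direction ``\m{\Leftarrow}'': suppose \m{L \infnfred_\unfTwo L'}. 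By the lifting of \cref{prop:two_is_compatible_to_one} we get \m{L \infred_\unfOne M \convinfred_\unfOne L'} for some \m{M}. Now I would use confluence of \unfOne (\cref{prop:one_is_confluent}), together with the fact that \m{L'}, being free of \Let-expressions, is a \unfOne-normal-form, to conclude \m{L \infnfred_\unfOne L'}: the infinitary normal form of \m{L} with respect to the confluent system \unfOne is unique, and \m{L'} both is reachable from \m{L} (via \m{M}, since \m{M \convinfred_\unfOne L'} means \m{L' \infred_\unfOne M}, wait — here I must be careful about the direction) and is a normal form. Concretely, from \m{L \infred_\unfOne M} and \m{L' \infred_\unfOne M} with \m{L'} a normal form, confluence of \unfOne forces \m{M \infred_\unfOne L'} as well, so \m{L \infred_\unfOne M \infred_\unfOne L'}, giving \m{L \infnfred_\unfOne L'}.

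The main obstacle, and the place where I would spend the most care, is the infinitary lifting of the finitary simulation and commutation lemmas: \cref{prop:two_refines_one} and \cref{prop:two_is_compatible_to_one} are stated for finite (single- and multi-step) reductions, whereas the statement of \cref{prop:unfOneequnfTwo} is about \m{\infred}, i.e.\ strongly convergent transfinite rewrite sequences. Rigorously one needs a projection/compression argument showing that the step-by-step replacements preserve strong convergence — in both directions the replacing sub-derivations act at a depth at least that of the step they replace, so the depth of activity at each limit ordinal still tends to infinity — plus, for the confluence step in the ``\m{\Leftarrow}'' direction, one must appeal to infinitary confluence of \unfOne rather than merely finitary confluence. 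Orthogonality of \unfOne gives finitary confluence directly; for infinitary confluence I would either invoke that orthogonal (i)CRSs without collapsing behaviour are infinitarily confluent, or mirror the decreasing-diagrams approach of \cref{prop:unf-confluence} (via \cref{app:conf_proof}) for this much simpler single-rule system. A cleaner alternative, which I would actually prefer, is to sidestep infinitary confluence entirely: observe that since both systems are confluent and have the same normal forms, and \m{\mred_\unfOne} and \m{\mred_\unfTwo} have the same reflexive-transitive closure on finite reducts (by \cref{prop:two_refines_one} and \cref{prop:two_is_compatible_to_one}), the two systems have the same (finitary and hence, by the standard metric-completion argument, infinitary) normal forms; this reduces the whole proposition to the already-available finitary facts plus the uncontroversial statement that \unfOne-normal-forms coincide with \unfTwo-normal-forms.
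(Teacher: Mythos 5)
The paper states \cref{prop:unfOneequnfTwo} without giving a proof, so your attempt can only be judged on its own terms. Your direction ``⇒'' is essentially right: inspecting the proof of \cref{prop:two_refines_one}, every \m{\red_\unfOne}-step at depth \m{d} is simulated by a finite \m{\mred_\unfTwo}-sequence whose activity stays at depth \m{\geq d}, so replacing steps block by block preserves strong convergence and the limit; and since the normal forms of both systems are exactly the \Let-free terms, \m{L \m{\infred^{!}_{\unfOne}} L'} indeed yields \m{L \m{\infred^{!}_{\unfTwo}} L'}.

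The gap is in ``⇐''. The ``infinitary lifting'' of \cref{prop:two_is_compatible_to_one} is not the routine projection you suggest: that lemma only joins a single \m{\red_\unfTwo}-step into a valley \m{M \mred_\unfOne P \convmred_\unfOne N}; to tile an entire (transfinite) \m{\red_\unfTwo}-sequence you must also join the successive valleys against one another, which needs finitary confluence of \unfOne to complete each tile plus depth bookkeeping to keep the assembled \m{\unfOne}-reduction strongly convergent — none of which is supplied. Your fallback ``cleaner alternative'' rests on a false premise: \m{\mred_\unfOne} and \m{\mred_\unfTwo} do \emph{not} coincide (for instance \m{\letin{B}{f_i} \red_\unfTwo \letin{B}{L_i}} is not a \m{\mred_\unfOne}-reduction, since the only \m{\red_\unfOne}-step at the root eliminates the outer \Let), and ``same finitary normal forms, hence same infinitary normal forms'' is not a standard metric-completion fact, because the infinitary normal form is in general not a finite reduct. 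A way to actually close ``⇐'': compress to an \m{\omega}-sequence, so that for every depth \m{d} there is a finite reduct \m{L \mred_\unfTwo T_d} agreeing with \m{L'} and \Let-free up to depth \m{d}; finitary joinability of \m{L} and \m{T_d} in \unfOne (from \cref{prop:two_is_compatible_to_one} together with \cref{prop:one_is_confluent}) yields \m{L \mred_\unfOne P_d} where \m{P_d} still agrees with \m{L'} up to depth \m{d}, because the joining \m{\unfOne}-steps only rewrite \Let-expressions, all of which lie strictly below depth \m{d} in \m{T_d}; from this family a strongly convergent \m{\red_\unfOne}-reduction to \m{L'} can then be assembled.
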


\begin{proposition}[\m{\unfCRS ≡ \unfTwo}]\label{prop:unfCRSequnfTwo}
	\[ ∀L∈\Ter{\lambdaletrec}~ L \m{\infred_{\unfold}^!} L' ~\iff~ L \m{\infred_\unfTwo^!} L' \]
	\begin{proof}[Proof idea]
		Only a rudimentary proof idea is given here, i.e.\ that the property
		\[\red_\unfold ~\subseteq~ \mred_\unfTwo . \convmred_\unfTwo\]
		can be used to construct a coinductive proof. In order to convince ourselves that the property holds we only look at three cases here. In each of these cases the leftmost term reduces via \m{\red_\unfold} to the rightmost term.
		\paragraph{Case: \m{λ}}
		\begin{align*}
			\letin{B}{\abs{x}{L}} & \red_\unfTwo && (\abs{x}{L}){\substicompr{f_i}{\letin{B}{f_i}}}\\
			                      & =            && \abs{x}{L\substicompr{f_i}{\letin{B}{f_i}}}
			                      & \convmred_\unfTwo \abs{x}{\letin{B}{L}}
		\end{align*}
		\paragraph{Case: \m{@}}
		\begin{align*}
			\letin{B}{\app{L}{P}} &&\red_\unfTwo& (\app{L}{P}){\substicompr{f_i}{\letin{B}{f_i}}}\\
										 &&=& \appbreak{L\substicompr{f_i}{\letin{B}{f_i}}}{P\substjcompr{f_j}{\letin{B}{f_j}}} \convmred_\unfTwo \appbreak{\letin{B}{L}}{\letin{B}{P}}
		\end{align*}

		\paragraph{Case: \merg}
		\begin{align*}
			& \letin{B}{\letin{C}{L}}\\
			  \red_\unfTwo      & (\letin{C}{L})\substicompr{f_i}{\letin{B}{f_i}} \\
			  \red_\unfTwo      & (L\substicompr{g_i}{\letin{C}{g_i}})\substjcompr{f_j}{\letin{B}{f_j}} \\
			  =                 & L(\substicompr{g_i}{\letin{C}{g_i}} \cup \substjcompr{f_j}{\letin{B}{f_j}}) \\
			  \convred_\unfTwo  & \letin{B,C}{L}
		\end{align*}
	\end{proof}
\end{proposition}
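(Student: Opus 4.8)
\textbf{Proof proposal for \cref{prop:unfCRSequnfTwo} (and via \cref{prop:unfOneequnfTwo} thus for \unfOne ≡ \unfCRS).}

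The plan is to establish the two inclusions $\infred_\unfold^! \subseteq \infred_\unfTwo^!$ and $\infred_\unfTwo^! \subseteq \infred_\unfold^!$ between the infinite-normal-form relations, where by $\infred^!$ I mean reduction to (infinite) normal form. Since both \unfCRS\ (by \cref{prop:unf-confluence}) and \unfTwo\ are confluent, and both have the same class of normal forms (namely $\lambda$-terms without \Let-expressions, since a term is a normal form of either system iff it contains no \Let), it suffices to show that the many-step closures agree on the level of reachability, i.e.\ that $\mred_\unfold$ and $\mred_\unfTwo$ generate the same abstract reduction system up to the usual $\infred$-machinery (strong convergence of outermost-fair sequences). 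The cleanest way to get this is to prove the two elementary simulation properties
\begin{align*}
	\red_\unfold &~\subseteq~ \mred_\unfTwo \cdot \convmred_\unfTwo \\
	\red_\unfTwo &~\subseteq~ \mred_\unfold \cdot \convmred_\unfold
\end{align*}
and then lift them to infinitary reductions by a coinductive / strong-convergence argument, exactly as hinted in the proof idea already given for \cref{prop:unfCRSequnfTwo}.

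First I would prove the inclusion $\red_\unfold \subseteq \mred_\unfTwo \cdot \convmred_\unfTwo$. This is a finite case analysis over the rule schemes of \unfCRS, i.e.\ \rulep{\lambda}, \rulep{@}, \rulep{\merg}, \rulep{\rec}, \rulep{\nil}, \rulep{\reduce} (and, in the black-hole version, \rulep{\tighten} and \rulep{\bh}, which would be handled by the same pattern). The excerpt already works out the three representative cases $\lambda$, $@$, and \merg; the remaining ones are similar bookkeeping: \rulep{\rec} unfolds one binding occurrence, which is directly a $\red_\unfTwo$-step of the second \unfTwo-rule (or, if the body is not a bound variable, is covered by expanding and re-contracting via the first rule); \rulep{\nil} erases an empty binding group, which in \unfTwo\ is an instance of the first rule with $n=0$ giving the identity substitution, so both sides reduce to $L$; \rulep{\reduce} deletes unreachable bindings, which does not change the set of reachable subterms and hence leaves the \unfTwo-normal-form unchanged — here one needs the observation that the nested substitution in the \unfOne/\unfTwo\ rules only ever re-expands reachable bindings. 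Symmetrically I would prove $\red_\unfTwo \subseteq \mred_\unfold \cdot \convmred_\unfold$: the second \unfTwo-rule is literally \rulep{\rec}; the first \unfTwo-rule, $\letin{B}{L} \to L\substicompr{f_i}{\letin{B}{f_i}}$, is simulated in \unfCRS\ by repeatedly pushing the \Let\ through the structure of $L$ with \rulep{\lambda}, \rulep{@}, \rulep{\merg} and finally resolving leaves with \rulep{\rec}, \rulep{\reduce}, \rulep{\nil} — essentially the confluence proof's normalisation of a single \Let-expression reused locally.

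The main obstacle, and the step I would spend the most care on, is the lift from the finite simulation diagrams to infinite reductions. The finite diagrams $\red_\unfold \subseteq \mred_\unfTwo\cdot\convmred_\unfTwo$ do \emph{not} immediately give a strongly convergent simulation, because the right-hand $\convmred_\unfTwo$-legs could in principle accumulate arbitrarily deep compensating reductions. The right way to handle this is to work with \emph{depth-annotated} versions of the two reduction relations — exactly the technique used in \cref{lem:commute:unfoldomegared:stregred} and in the confluence proof's decreasing-diagrams argument — recording the minimal \letrec-depth at which each step acts, and to check that the simulation diagrams are decreasing in this measure, or at least that they preserve the depth-tends-to-infinity property of outermost-fair sequences. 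Concretely: given a strongly convergent outermost-fair $\red_\unfold$-rewrite-sequence on $L$ converging to $L'$, one tiles it with the simulation diagrams to obtain a $\red_\unfTwo$-rewrite-sequence whose activity depth also tends to infinity, hence which is strongly convergent; since its terms agree with those of the original sequence at every finite depth in the limit, it converges to the same $L'$. The converse direction is analogous and slightly easier because each \unfTwo-step is simulated by finitely many \unfCRS-steps at non-decreasing depth. Finally, combining the two lifted inclusions with confluence of both systems and with the fact (from \cref{lem:unf-normalisation}, \cref{lem:unf-nf_well-formed}) that \unfCRS-normal-forms are $\lambda$-terms, one concludes $L \infred_\unfold^! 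L' \iff L \infred_\unfTwo^! L'$, which together with \cref{prop:unfOneequnfTwo} yields $\unfOne \equiv \unfCRS$, i.e.\ the single-rule system computes exactly the unfolding semantics.
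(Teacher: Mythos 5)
Your proposal follows essentially the same route as the paper's (deliberately rudimentary) proof idea: the elementary simulation \m{\red_\unfold \subseteq \mred_\unfTwo \cdot \convmred_\unfTwo}, verified case by case over the rule schemes, and then lifted coinductively to the infinitary normal-form relation; you additionally state the converse simulation and correctly identify the depth/strong-convergence bookkeeping in the lift as the place where the real work lies, which the paper leaves implicit. One caution on the \rulep{\reduce} case: the two reducts of \m{\letin{B,B''}{\L} \red_\reduce \letin{B}{\L}} are in general \emph{not} joinable by finitely many \m{\red_\unfTwo}-steps, because the discrepancy between the inner occurrences \m{\letin{B,B''}{f}} and \m{\letin{B}{f}} is only pushed deeper at each step; so this case cannot be closed by the finite local diagram \m{\mred_\unfTwo \cdot \convmred_\unfTwo} and must instead be handled at the level of infinite normal forms (or via an infinitary/coinductive joinability relation), which is compatible with your "unchanged reachable subterms" intuition but should be stated as such rather than as an instance of the finite simulation property.
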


\begin{proposition}[\m{\unfOne ≡ \unfCRS}]
	\[ ∀L∈\Ter{\lambdaletrec}~ L \m{\infred_{\unfOne}^!} L' ~\iff~ L \m{\infred_\unfold^!} L' \]
	\begin{proof}
		Follows from \cref{prop:unfCRSequnfTwo} and \cref{prop:unfOneequnfTwo}.
	\end{proof}
\end{proposition}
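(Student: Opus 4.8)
The plan is to observe that the final statement is an immediate corollary of the two preceding propositions, \cref{prop:unfCRSequnfTwo} (\m{\unfCRS ≡ \unfTwo}) and \cref{prop:unfOneequnfTwo} (\m{\unfOne ≡ \unfTwo}), by transitivity of the relation ``having the same infinitary normal form''. Concretely, I would argue as follows: fix \m{L ∈ \Ter{\lambdaletrec}}. By \cref{prop:unfOneequnfTwo}, \m{L \infred_{\unfOne}^! L'} holds if and only if \m{L \infred_\unfTwo^! L'}, and by \cref{prop:unfCRSequnfTwo}, \m{L \infred_\unfTwo^! L'} holds if and only if \m{L \infred_\unfold^! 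L'}. Chaining the two biconditionals yields \m{L \infred_{\unfOne}^! L' \iff L \infred_\unfold^! L'}, which is the claim.

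No genuinely new mathematical content is needed here; the work has already been carried out in the auxiliary lemmas. In particular, the step I would expect to be the main obstacle has already been dispatched in establishing \cref{prop:unfCRSequnfTwo}: the coinductive simulation argument showing that every \m{\red_\unfold}-step can be mimicked by a valley of \m{\red_\unfTwo}-steps (\m{\red_\unfold ~\subseteq~ \mred_\unfTwo ⋅ \convmred_\unfTwo}), combined with confluence of \unfTwo. The passage to \unfOne is then routine via the refinement \m{\red_\unfOne ~\subseteq~ \mred_\unfTwo} of \cref{prop:two_refines_one} and the compatibility property of \cref{prop:two_is_compatible_to_one}, which together give \m{\mred_\unfOne = \mred_\unfTwo} on the relevant (meaningless-free) fragment and hence agreement of infinitary normal forms (\cref{prop:unfOneequnfTwo}), using confluence of both systems (\cref{prop:one_is_confluent} and confluence of \unfTwo) to guarantee uniqueness of those normal forms.

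If one wanted a self-contained presentation rather than a bare citation, I would additionally spell out why uniqueness of the infinitary normal form is available in all three systems: \unfCRS is confluent by \cref{prop:unf-confluence} (hence \cref{lem:unique_normal_forms}), \unfOne is orthogonal and hence confluent (\cref{prop:one_is_confluent}), and \unfTwo is confluent as a compatible refinement of \unfOne (\cref{prop:two_is_compatible_to_one}, via the Staples-style argument). With uniqueness in place, the equivalences of reduction relations transfer to equivalences of the induced normalisation functions, and the composition above closes the argument. Since everything is a direct appeal to already-proved statements, there is no calculation to grind through.
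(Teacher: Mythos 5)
Your proof is correct and is exactly the argument the paper gives: chaining \cref{prop:unfOneequnfTwo} and \cref{prop:unfCRSequnfTwo} by transitivity of ``having the same infinitary normal form''. The additional remarks about where the real work lies (the coinductive simulation and the confluence/uniqueness facts) are accurate but not needed beyond the two citations.
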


\backmatter


\chapter{Curriculum Vitae}

\newcommand\cventry[4]{\item[#1]\textbf{#2}, \textit{#3}\ifempty{#4}{}{\\#4}}

\noindent%
birth: 20 July 1984 in Bretten, Germany

\begin{description}
\cventry{1990--1994}{Elementary school}{Grundschule Nussbaum}{}
\cventry{1994--2003}{Secondary school}{Melanchthon-Gymnasium Bretten}{Major subjects: mathematics, physics}
\cventry{2002--2003}{Schülerstudium Informatik}{Universität Karlsruhe}{As a `\plaintext{bragger}{student} with exceptional learning achievements' I was given the opportunity to study computer science at the university during my last year of secondary school.}
\cventry{2003--2004}{Research assistance}{Forschungszentrum Informatik, Karlsruhe}{}
\cventry{2003--2009}{Study of computer science}{Universität Karlsruhe}{Major subjects: system architecture, compiler construction and software engineering}
\cventry{2009}{Research semester}{Chalmers University of Technology, Göteborg}{Diploma thesis under the supervision of the functional programming research group}
\cventry{2009-2013}{Doctoral research}{Departement Informatica, Universiteit Utrecht}{Research and teaching assistance in the field of functional programming}
\end{description}

\chapter{Samenvatting in het Nederlands {\normalsize(Summary in Dutch)}}
\vspace{-1ex}
\let\i\iNoDot
Dit proefschrift bevat resultaten uit onderzoek van de ongetypeerde λ-calculus (lees: ,,lambda calculus''). De λ-calculus is een formeel systeem dat in de jaren 1930 is ontworpen en is sindsdien onderzoeksonderwerp in informatica en filosofie:
\begin{itemize}
	\setlength\itemsep{-2ex}
	\newcommand\subF{\m{>}}
	\item Filosofie \subF Formele Logica \subF Herschrijven \subF λ-Calculus\\
	\item Informatica \subF Programmeertalen en Compilerconstructie \subF Functionele Programmeertalen \subF λ-Calculus
\end{itemize}
Het onderzoek is voornamelijk gemotiveerd door de rol van de λ-calculus als basis van functionele programmeertalen. Het doel van het onderzoeksproject was om de uitvoering van in functionele programeertalen geschreven programma's efficiënter te maken door het verhogen van \emph{sharing}. Het Engelse begrip sharing duidt het fenomeen aan dat een berekening \emph{gedeeld} wordt, dat wil zeggen dat een waarde die op meerderen plekken nodig is niet iedere keer opnieuw wordt uitgerekend, maar dat die waarde slechts één keer wordt uitgerekend en dat het resultaat vervolgens op deze plekken wordt hergebruikt. Om dit te bereiken worden functionele talen doorgaans als graafherschrijfsysteem geïmplementeerd. In een graaf kan een knoop meerdere inkomende kanten hebben. Op die manier wordt die knoop en zijn opvolgers gedeeld. Maar vaak moet tijdens de uitvoering een gedeelde subgraaf worden ,,ontdeeld'' of ,,ontvouwd''. Hoe later en hoe voorzichtiger het ontdelen kan worden doorgevoerd des te hoger is de graad van sharing en de hierdoor bespaarde berekeningskosten. Er zijn verschillende evaluatiemodellen met verschillende graden van sharing. In dit proefschrift focussen wij niet op het dynamische delen tijdens de uitvoering maar op het statische delen \emph{voor} de uitvoerig van het programma. Het delen tijdens de uitvoering is namelijk gebaseerd op de initiële graad van sharing van de graaf die door de compiler van een programma wordt geconstrueerd. Deze graaf wordt vaak in een taal beschreven die gebaseerd is op de \emph{λ-calculus met \letrec} (of kort: \lambdaletrec). Zo noemen we de λ-calculus waarvan de termen niet alleen normale \lambda-termen zijn maar ook termen met voorkomens van de zogenoemde \letrec-constructie. De \letrec-constructie maakt het mogelijk om een graafstructuur en dus sharing direct in de programmeercode uit te drukken.

In dit proefschrift bestuderen we sharing in \lambdaletrec. Het centrale begrip voor hoe we een \lambdaletrec-term beschouwen is de \emph{ontvouwingssemantiek}: We beschouwen een \lambdaletrec-term als een representatie van een (potientieel) oneindige term. De rechtvaardiging hiervoor is dat dit overeenkomt met hoe functionele talen geëvalueerd worden: voordat een β-reductie plaatsvindt wordt het deel van de graaf dat de redex bevat eerst ontvouwd en het ontvouwen van een \lambdaletrec-term heeft (in de limiet) een (potentieel oneindige) λ-term als resultaat. We houden ons niet verder bezig met β-reductie maar onderzoeken uitsluitend ontvouwing van \lambdaletrec-termen.

Hoofdstuk~\labelcref{chap:introduction} introduceert het \lambdaletrec-formalisme samen met een herschrijfsysteem om \lambdaletrec-termen te ontvouwen. In de volgende drie hooftstukken bestuderen we dit herschrijfsysteem en beantwoorden o.\ a.\ de volgende vragen:
\begin{description}
	\item Welke oneindige λ-termen kunnen als eindige \lambdaletrec-termen worden uitgedrukt? Met anderen woorden: hoe expressief is de taal \lambdaletreccal? In hoofdstuk~\labelcref{chap:expressibility} karakteriseren we de verzameling van \lambdaletrec-uitdrukbaren λ-termen, dus die λ-termen die de ontvouwing van eindige \lambdaletrec-termen zijn.
	\item Wat zijn goede graafrepresentaties voor \lambdaletrec-uitdrukbare termen? De \letrec-constructie dient ertoe om een graafstructuur uit te drukken. In hoofdstuk~\labelcref{chap:representations} vragen we ons af hoe precies deze grafen kunnen worden geformaliseerd en we identificeren een formalisatie die voor de volgende vragen nuttig blijkt.
	\item Hoe kunnen we bepalen of twee \lambdaletrec-termen dezelfde ontvouwing hebben? Hoe kunnen we van een \lambdaletrec-term een equivalente maar zo compact mogelijke variante berekenen? Bestaat er zoiets als een zo compact mogelijke variant? In hoofdstuk~\labelcref{chap:maxsharing} tonen we aan dat er voor iedere klasse van equivalente \lambdaletrec-termen een maximaal compacte vorm bestaat. We ontwikkelen praktische en efficiënte methoden om deze vorm te berekenen en om te bepalen of twee \lambdaletrec-termen dezelfde ontvouwing hebben.
\end{description}

\chapter{Lay Summary}

This summary is intended to give the casual reader an idea what this thesis is about.

In this thesis I study a very specific subject on the intersection of the fields of computer science and philosophy, namely a formal system called the λ-calculus (read `lambda calculus`). The λ-calculus can be placed into these two fields of science as follows:
\begin{itemize}
	\setlength\itemsep{-2ex}
	\newcommand\subF{\m{>}}
	\item Philosophy \subF Formal Logic \subF Rewriting Systems \subF λ-Calculus\\ 
	\item Computer Science \subF Programming Languages and Compiler Construction \subF Functional Programming Languages \subF λ-Calculus
\end{itemize}

In order to convey the meaning of the thesis title, let us first establish what computers and programming languages are all about.\\

A computer is a machine for processing information. It has input channels (like a touchpad or a microphone) to receive data; it processes the data and performs computations; and it has output channels (like a display or a connector to another device) to convey the result of the computation, by making the result visible, or carrying out some mechanical activity. 

The behaviour of computers is determined by their programming. They are \emph{programmable}, meaning their behaviour can be changed by a programmer. But moreover, a computer \emph{requires} programming to function; there is no intelligence inherent to the device itself; it is the programmer that imbues the machine with his ideas through programming.

The programmer's ideas (say, a method to increase the contrast in a photograph) have to be expressed such that it can be understood by the computer. The language that a computer understands is its \emph{machine language} which is usually a very simple language and which is different for each type of computer. It consists of a collection of basic instructions that, when carried out by the computer, each make very small changes to the computer's memory. It is difficult and cumbersome for a programmer to express sophisticated ideas using a machine language.

Today, programmers do not need to program directly with machine languages, but have a vast number of \emph{programming languages} at their disposal. Programming languages are languages designed to make the task of programming more efficient and convenient, by giving programmers useful, more powerful means to encode their ideas. A program called \emph{compiler} then translates the code written in a programming language into machine language, which subsequently can be executed by the machine.
A very simple compiler that uses the most straightforward way to translate a programming language into machine language produces very inefficient machine language, which leads to a slower execution of the program. This is especially the case for more complex programming languages, which provide the programmer with a higher level of convenience. Modern compilers produce more efficent machine language by analysing the original program and then translating it in a more sophisticated way. Such a compiler is called an \emph{optimising} compiler.

This thesis focusses on one specific sub-class of programming languages called \emph{functional programming languages}. They differ from \emph{imperative programming languages}, which form the mainstream of today's programming languages. Code written in an imperative programming language can be regarded as a list of instructions which are executed sequentially, one by one. Code written in a functional programming languages, on the other hand, more resembles a set of mathematical equations, by which the result of the computation is defined. 

While functional programming has for decades led a niche existence mostly limited to the academic world, its merits and its elegance are being ever more recognised, and is steadily gaining ground. Today functional programming languages are employed in commerce and in industry, and modern imperative programming languages incorporate more and more features from functional programming languages.

Functional programming has its roots in the λ-calculus, a formal system developed in the 1930s. It is a \emph{rewriting system}; that means it acts on a specific kind of formal expressions (λ-terms) which can be rewritten in a rewriting step to a different term by a fixed set of rewriting rules. It is a sort of minimalistic programming language; the algorithm and the input are expressed as a λ-term; the computation consists of repeatedly rewriting the term according to the rewriting rules of the λ-calculus until it cannot be rewritten any further; the term one obtains in the end is the result of the computation.

There has been a multitude of extensions of the λ-calculus. One such extension is the inclusion of a language construct called \letrec, which we call the `λ-calculus with \letrec' or in short \lambdaletrec. The \letrec is a syntactic element which allows for function definitions. Function definitions are equations which bind terms to names. A bound term can be referenced by its name and thus used multiple times at different places, thereby introducing `sharing'. In the course of performing a computation in \lambdaletrec the definitions need to be \emph{unfolded}. Unfolding is the process of replacing occurrences of function names by their definition. As function names may occur within their own definition unfolding may go on forever resulting an infinite term. This thesis is all about the \letrec-construct and unfolding; it tackles questions like:
\begin{itemize}
	\item Which are the infinite terms that can be obtained by unfolding of finite \lambdaletrec-terms?
	\item When do two given \lambdaletrec-terms have the same unfolding?
	\item How can we find a maximally compact form of a given \lambdaletrec-term?
	\item What is a suitable graph representation for \lambdaletrec-terms?
\end{itemize}

These results provide compilers with further opportunities for analysis and for a more optimised translation which might speed up the execution of programs written in a functional programming languages. Furthermore they may also help reasoning about programs and thus promote further research about functional programming languages.

\chapter{Acknowledgements}

First and foremost my thanks go to Clemens Grabmayer: Danke für die Zusammenarbeit und viele interessante Diskussionen; das war eine sehr lehrreiche Zeit für mich und hat meinen Horizont sehr erweitert. But of course also many thanks to my promotors Doaitse Swierstra and Vincent van Oostrom: naast heel stimulerende gesprekken en uw vakkunde kon ik vooral het vertrouwen waarderen dat in mij is gesteld om zelfstandig mijn onderzoeksdoelen te kiezen en na te gaan. I am very grateful to my research group, the Centre for Software Technology of Utrecht University for offering me a workplace beyond the duration of my contract. Furthermore I thank my parents for their substantial support and my sister, Nora Rochel for the title page.

I want to thank the reading committee and my opponents for their comments and for a challenging and pleasant defense.

\end{document}

